\xdef\asciiart{
                                                                                  
            ...                         .          ..       ..                    
         xH88"`~ .x8X      .uef^"      @88>  x .d88"  x .d88"    ..               
       :8888   .f"8888Hf :d88E         
      :8888>  X8L  ^""`  `888E          .     '888R    '888R   9888i   .dL        
      X8888  X888h        888E .z8k   .@88u    888R     888R   `Y888k:*888.       
      88888  !88888.      888E~?888L ''888E`   888R     888R     888E  888I       
      88888   
      88888 '> `8888>     888E  888E   888E    888R     888R     888E  888I       
      `8888L 
       `8888  `-*""   /   888E  888E   888&   .888B .  .888B .  x888N><888'       
         "888.      :"   m888N= 888>   R888"  ^*888
           `""***~"`      `Y"   888     ""      "
                               J88"                                   98"         
                               @
                             :"                                    ~`             
                                                                                  
}}\makeatletter
	\numberwithin{equation}{chapter}
	\numberwithin{figure}{chapter}
	\numberwithin{table}{chapter}
	\def\U@DEC#1{\bgroup\def\UTFviii@defined##1{\expandafter#1\string##1+}}
	\def\U@DEF#1:#2+{\U@LET:#2+{\@nameuse{意#2義}}\@namedef{意#2義}}
	\def\U@LET#1:#2+{\egroup\DeclareUnicodeCharacter{\UTFviii@hexnumber
	{\decode@UTFviii#2\relax}}}\U@DEC\U@LET令{\U@DEC\U@LET}令定{\U@DEC\U@DEF}
	\DeclareMathAlphabet\mathsi{T1}\sfdefault\mddefault\sldefault
	\def\bigl@C#1{\bigl#1}					\def\bigr@C#1{\bigr#1}					
	\def\({\bigl@C(}	\def\){\bigr@C)}	令（{\Bigl(}			令）{\Bigr)}			
	\def\|{\mathrel\Vert}	令‖{\mathrel\Big\Vert}	令｜{\mid\nobreak}		
	\def\loll{\bsm{1&0\\1&1}}
	\def\locl{\bsm{1&0\\c&1}}
	\def\Git{G^{-⊤}}
	\def\Eo{E_\mathrm 0}
	\def\Er{E_\mathrm r}
	\def\Ec{\mathsi E_\mathrm c}
	\def\Vc{\mathsi V_\mathrm c}
	\def\GY{G_†Ye†}
	\def\GB{G_†Barg†}
	\def\tiltbox#1#2#3{\mkern-#1mu\hbox{\pdfsave
		\pdfsetmatrix{1 0 .2 1}\rlap{$#2$}\pdfrestore}\mkern#3mu}
	\def\Wt{\tiltbox{2.2}W{21.9}^t}
	\def\Xt{\tiltbox0X{18.8}^t}
	\def\Yt{\tiltbox{1}Y{16.9}^t}
	\def\diff{\mathrm d}
	\DeclareMathOperator\hdis{hdis}
	\DeclareMathOperator\hwt{hwt}
	\DeclareMathOperator\dist{dist}
	\DeclareMathOperator\tr{tr}
	\DeclareMathOperator\GL{GL}
	\DeclareMathOperator\poly{poly}
	\DeclareMathOperator*\argmax{arg\,max}
	\DeclarePairedDelimiter\abs\lvert\rvert
	\def\mat#1{\begin{matrix}#1\end{matrix}}
	\def\bma#1{\begin{bmatrix}#1\end{bmatrix}}
	\def\bsm#1{[\begin{smallmatrix}#1\end{smallmatrix}]}
	\def\cas#1{\begin{cases*}#1\end{cases*}}
	\def\glet{\global\let}
	\let\EA\expandafter
	\let\NE\noexpand
	\let\AB\allowbreak
	\let\PMP\pgfmathparse
	\def\PMR{\pgfmathresult}
	\let\PMS\pgfmathsetmacro
	\let\PMD\pgfmathdeclarefunction
	\def\MD@three#1#2#3#4.{\PMP{0x#1#2#3}}\expandafter\MD@three\pdfmdfivesum{\jobname}.
	\newdimen\tk
	\pgfmathsetlength\tk{(2.2*.4+1.6)*2.096774/2}
	\tikzset{
		every picture/.style={cap=round,join=round},
		every pin/.style={anchor=180+\tikz@label@angle,anchor/.code=},
		dot/.pic={\fill circle(.6pt);},
		pics/y-tick/.style={code={\draw[gray,very thin](0,0)--(-\tk,0)[left]node{#1};}},
		pics/x-tick/.style={code={\draw[gray,very thin](0,0)--(0,-\tk)[below]node{#1};}}
	}
		\PMS\comp{256-#1}%
		\PMS\sin{sin(#1)}\PMS\cos{cos(#1)}%
	\def\cd{\begin{equation*}\catcode`\&=13\cd@aux}
	\newcommand\cd@aux[2][]{\begin{tikzcd}[#1]#2\end{tikzcd}\end{equation*}}
	\pgfplotsset{
		compat/show suggested version=false,compat=1.17, 
	}
	\def\dropcap#1#2 {\lettrine[lraise=.2]#1{#2} }
	\def\[#1\]{\begin{equation*}{#1}\end{equation*}}
	\theoremstyle{definition}	理dfn:Definition~?s			理exa:Example~?s				
	\theoremstyle{remark}		理cla:Claim~?s				理rem:Remark~?s				
	\def\Parse@TL#1:#2?{\label@in@display@optarg[#1]{#1:#2}}
	\def\tagcopy#1{\tag{\eqref{#1}'s copy}}
\title
\author
\subjclass
\begin{document}
\message{\asciiart}

\frontmatter

\noindent
\tikz[remember picture,overlay,line width=.1]{
	\tikzset{shift={(.5\textwidth,1em+\headsep+\headheight+\topmargin+1in)}}
	\large
	\draw[nodes={below,yshift=1em-.5ex,align=center}]
		(0,-2in)node{\noexpand\uppercase\expandafter{\@title}}
		(0,-3.5in)node{BY}
		(0,-4in)node{\noexpand\uppercase\expandafter{\authors}}
		(0,-5.5in)node{
			DISSERTATION\\
			~\\
			Submitted in partial fulfillment of the requirements\\
			for the degree of Doctor of Philosophy in Mathematics\\
			in the Graduate College of the\\
			University of Illinois Urbana-Champaign, 2021
		}
		(0,-7.5in)node{Urbana, Illinois}
		(0,-8in)node[text width=6.5in,align=left]{
			Doctoral Committee:\\
			~\\
			\hskip3.5em	Assistant Professor Partha S. Dey, Chair\\
			\hskip3.5em	Professor Iwan M. Duursma, Director of Research\\
			\hskip3.5em	Professor József Balogh\\
			\hskip3.5em	Professor Marius Junge
		}
}

\thispagestyle{empty}

\vbadness999\hbadness99\overfullrule1em彈\baselineskip/4彈\lineskip/4彈\parskip/2

\DeclareRobustCommand\gobblefive[5]{}\addtocontents{toc}\gobblefive
\chapter*{Abstract}

	The performance of an error correcting code is evaluated by
	its block error probability, code rate, and encoding and decoding complexity.
	The performance of a series of codes is evaluated by,
	as the block lengths approach infinity,
	whether their block error probabilities decay to zero,
	whether their code rates converge to channel capacity,
	and whether their growth in complexities stays under control.
	
	Over any discrete memoryless channel, I build codes such that:
	for one, their block error probabilities and code rates scale like random codes';
	and for two, their encoding and decoding complexities scale like polar codes'.
	Quantitatively, for any constants $\pi,\rho>0$ such that $\pi+2\rho<1$,
	I construct a series of error correcting codes with
	block length $N$ approaching infinity,
	block error probability $\exp(-N^\pi)$,
	code rate $N^{-\rho}$ less than the channel capacity,
	and encoding and decoding complexity $O(N\log N)$ per code block.
	
	Over any discrete memoryless channel, I also build codes such that:
	for one, they achieve channel capacity rapidly;
	and for two, their encoding and decoding complexities
	outperform all known codes over non-BEC channels.
	Quantitatively, for any constants $\tau,\rho>0$ such that $2\rho<1$,
	I construct a series of error correcting codes with
	block length $N$ approaching infinity,
	block error probability $\exp(-(\log N)^\tau)$,
	code rate $N^{-\rho}$ less than the channel capacity,
	and encoding and decoding complexity $O(N\log (\log N))$ per code block.
	
	The two aforementioned results are built upon two pillars---%
	a versatile framework that generates codes on the basis of channel polarization,
	and a calculus--probability machinery that evaluates the performances of codes.
	
	The framework that generates codes and the machinery that evaluates codes
	can be extended to many other scenarios in network information theory.
	To name a few:
	lossless compression with side information,
	lossy compression,
	Slepian--Wolf problem,
	Wyner--Ziv Problem,
	multiple access channel,
	wiretap channel of type I,
	and broadcast channel.
	In each scenario, the adapted notions of block error probability and code rate
	approach their limits at the same paces as specified above.

\addtocontents{toc}\gobblefive
\chapter*{}

	First there is Bo-Le%
	\footnote{
		The honorific name of Sun Yang, who is a horse tamer
		in Spring and Autumn period and renowned as a judge of horses;
		also refers to those who recognize (especially hidden) talent.
	}
	
	~
	
	Then can horses gallop hundreds of miles
	
	~
	
	Horses capable of galloping far are common
	
	~
	
	But Bo-Les are scarce
	
	~
	
	\hbox{}\hfill---Han Yu, \emph{On Horses}

\let\hold\hfil\let\cold\l@chapter\def\l@chapter{\def\hfil{\dotfill\glet\hfil\hold}\cold}
\tableofcontents

\mainmatter

\chapter{Introduction}

	\dropcap
	Seventy-three years ago, Claude E. Shannon founded the theory of information
	with an article titled \emph{A Mathematical Theory of Communication},
	which was later republished under the name
	\emph{The Mathematical Theory of Communication} to reflect its omnipotence.
	
	In the eternal work, Shannon explained how to measure the information content
	of a random variable $X$ and argued that, in the long term,
	the information content carried by $X$ costs $H(X｜Y)+ε$ bits to be remembered,
	given that we have free access to another random variable $Y$.
	Shannon also showed that, if sending $X$ results in the reception of $Y$,
	where $X→Y$ is called a communication channel, then the rate at which
	information can be transmitted is $I(X；Y)-ε$ bits per usage of channel.
	These results are now called [Shannon's] source coding theorem
	and noisy-channel coding theorem, respectively.
	
	The famous article left two loopholes.
	Loophole one:
	Shannon's proof involves the existence of certain mathematical objects
	which, in reality, are next to impossible to find constructively.
	As a consequence, Shannon's protocol is never utilized beyond academic interest.
	Loophole two:
	Whereas Shannon's bound on $ε$ is strong enough to conclude that $ε→0$,
	which evinces that $H(X｜Y)$ and $I(X；Y)$ are the limits we would like to achieve,
	we did not know how rapid $ε$ decays to $0$.
	That is, Shannon identified the first order limit of coding,
	but left the second order limit open.
	
	The current dissertation aims to patch the two said loopholes
	and succeeds in improving over existing patches.
	I will present an $ε→0$ coding scheme whose complexity is $O(N㏒(㏒N))$,
	where $N$ is the block length, whilst the best known result is $O(N㏒N)$.
	This in turn patches the first loophole further,
	and is referred to as the complexity paradigm of coding.
	I will also present an $O(N㏒N)$ coding scheme
	whose $ε$ decays to zero at the pace that is provably optimal,
	while earlier works only handle the binary case.
	This in turn fills the second loophole further,
	and is referred to as the second-moment paradigm of coding.
	I will then present a joint scheme that achieves
	both $O(N㏒(㏒N))$ and the optimal pace of $ε→0$.
	
	My codes are built upon two pillars.
	Pillar one:
	The overall code can be seen as a modification
	of a recently developed code---polar code.
	Depending on how we modify polar coding, we can inherit its
	$O(N㏒N)$ complexity or reduce it further down to $O(N㏒(㏒ N))$.
	Pillar two:
	The polarization kernel can be seen as a flag of the legacy codes---random codes.
	Since random coding is the only way to achieve the second-moment paradigm,
	I incorporate it to boost the performance of
	polar coding to the second-moment paradigm.
	The two pillars support a coding scheme whose
	complexity scales like polar coding but performance scales like random coding.
	
	After mastering the complexity and second-moment paradigms of
	the source coding theorem and the noisy-channel coding theorem,
	this dissertation moves forward to a network coding scenario
	called distributed lossless compression.
	I will then adapt the complexity and second-moment paradigms to these scenario.
	Using similar techniques, the same result generalizes to
	even more coding scenarios such as
	multiple access channels,
	wiretap channels of type I, and
	broadcast channels,
	and is left for future research.

\section{Organization of the Dissertation}

	The remaining sections of the current chapter map one-to-one to
	the remaining chapters of the current dissertation and serve as their summaries.
	
	Among the chapters:
	\Cref{cha:origin} was presented in the preprint \cite{ModerateDeviations18}.
	\Cref{cha:prune} was presented in the preprints \cite{LoglogTime18}
		and \cite{LoglogTime19}, the latter of which was later published in
		IEEE Transactions on Information Theory \cite{LoglogTime21}.
	\Cref{cha:general} was presented in the preprint \cite{LargeDeviations18}.
	\Cref{cha:random} was presented in the preprint \cite{Hypotenuse19}, which was later
		published in IEEE Transactions on Information Theory \cite{Hypotenuse21}.

\section{Original Channel Polarization}

	In \cref{cha:origin}, we will revisit Arıkan's original proposal
	of polar coding that is dedicated to binary-input channels
	and uses $\loll$ as the polarization kernel.
	This chapter serves three purposes:
	One, the construction by $\loll$ is simple yet powerful enough to achieve
	capacity (the first order limit), and is of historical significance.
	Two, I will provide a complete proof to the strongest version
	of the statements available in the literature that unifies
	the techniques spanning across several state-of-the-art works.
	Three, the main statement and its proof are the starting point of at least
	three generalizations, and will be referred to as the prototype every now and then.
	
	In the seminal paper, Arıkan started with a symmetric binary-input
	discrete-output memoryless channel $W(y｜x)$ and synthesized
	its children $WＷ1(y₁y₂｜u₁)$ and $WＷ2(y₁y₂u₁｜u₂)$ via
	\begin{gather*}
		WＷ1(y₁y₂｜u₁)≔∑_{u₂∈𝔽₂}÷12W(y₁｜u₁+u₂)W(y₂｜u₂),	\\
		WＷ2(y₁y₂u₁｜u₂)≔÷12W(y₁｜u₁+u₂)W(y₂｜u₂).
	\end{gather*}
	Treating $•Ｗ1$ and $•Ｗ2$ as transformations applied to channels,
	Arıkan continued synthesizing, in a recursive manner,
	$W$'s grandchildren $(WＷ1)Ｗ1$, $(WＷ1)Ｗ2$, $(WＷ2)Ｗ1$, and $(WＷ2)Ｗ2$,
	followed by $W$'s grand-grandchildren $\((WＷ1)Ｗ1\)Ｗ1$, $\((WＷ1)Ｗ1\)Ｗ2$, etc,
	followed by their descendants ad infinitum.
	
	Arıkan observed that a code can be established by
	selecting a subset of reliable synthetic channels.
	To evaluate the performance of codes established this way,
	we proceed to examine the stochastic process $\{𝘞_n\}$ defined by
	$𝘞₀≔W$ and $𝘞_{n+1}≔(𝘞_n)Ｗ{1† or †2† with equal chance†}$.
	The evolution of the synthetic channels can be controlled by
	$Z(WＷ2)=Z(W)²$ and $Z(W)√{2-Z(W)²}≤Z(WＷ1)≤2Z(W)-Z(W)²$.
	From that I will prove
	\[𝘗｛Z(𝘞_n)<e^{-2^{πn}}｝>𝘗\{Z(𝘞_n)→0\}-2^{-ρn},\label{ine:teaser-Z}\]
	where $(π,ρ)$ is any pair of constants that
	lies in the shaded area in \cref{fig:lol-bdmc}.
	This inequality tells us how reliable $𝘞_n$ can be.
	Thus we learn how the original polar coding performs.

\section{Asymmetric Channels}

	In \cref{cha:dual}, I will bring up a “dual picture” of \cref{cha:origin}.
	The dual picture consists of three elements:
	One, through examining the behavior of $𝘞_n$ when it becomes noisy,
	i.e., when $Z(𝘞_n)≈1$, we have a more complete
	understanding of $\{𝘞_n\}$ as a stochastic process.
	Two, the proof of said behavior is the mirror image of the one
	given in \cref{cha:origin}, reinforcing the duality.
	Three, this result is pivotal to source coding for lossy compressions,
	to noisy-channel coding over asymmetric channels,
	and to a pruning technique that will be covered in upcoming chapters.
	
	While \cref{ine:teaser-Z} addresses the behavior of $𝘞_n$ at the reliable end,
	another parameter $T$ and a stochastic process $\{T(𝘞_n)\}$
	are defined to examine the behavior of $𝘞_n$ at the noisy end.
	It satisfies $T(WＷ1)=T(W)²$ and $T(WＷ2)≤2T(W)-T(W)²$,
	and can be used to show that
	\[𝘗｛T(𝘞_n)<e^{-2^{πn}}｝>𝘗\{T(𝘞_n)→0\}-2^{-ρn},\label{ine:teaser-T}\]
	where $(π,ρ)$ is any pair of constants that
	lies in the \emph{same} shaded area in \cref{fig:lol-bdmc}.
	Note that $T(𝘞_n)→0$ iff $Z(𝘞_n)→1$, so \cref{ine:teaser-T} is the
	mirror image of \cref{ine:teaser-Z}, telling us how noisy $𝘞_n$ can be.
	
	\Cref{ine:teaser-Z} alone implies that polar coding is good for error correction
	over symmetric channels and lossless compression (with or without side information).
	\Cref{ine:teaser-T} alone implies that polar coding is good for lossy compression.
	Together, \cref{ine:teaser-Z,ine:teaser-T} imply that
	(a)	polar coding is good over asymmetric channels, and
	(b)	polar coding can be modified to attain an even lower complexity.
	More on (b) in \cref{cha:prune}.
	
	The proofs of \cref{ine:teaser-Z,ine:teaser-T} involve monitoring
	a “watch list” subset $𝘈_m$ of moderately reliable channels
	for $m=√n,2√n…n-√n$ for some large perfect square $n$.
	When $𝘞_m∈𝘈_m$ is moderately good and $𝘞_{m+√n}$ becomes extraordinary reliable,
	$𝘞_{m+√n}$ is moved from $𝘈_{m+√n}$ to another “trustworthy” subset $𝘌_{m+√n}$.
	On the other hand, when $𝘞_{m+√n}$ becomes noisy, it is temporary removed
	from $𝘈_{m+√n}$, but has some chance to be added back to $𝘈_{m+l√n}$
	(for some $l≥2$) once $𝘞_{m+l√n}$ becomes moderately reliable again.
	The rest of the proof is to quantify moderate and extraordinary
	reliabilities and the corresponding frequencies.

\section{Pruning Channel Tree}

	In \cref{cha:prune}, I will introduce a novel technique called pruning.
	Pruning reduces the complexity of encoder and decoder
	while retaining in part the performance of polar codes.
	And it is reported prior that pruning reduces the complexity by a constant factor.
	In this chapter, I will show that pruned polar codes can
	achieve capacity with encoding and decoding complexity $O(N㏒(㏒N))$,
	transcending the old $O(N㏒N)$ complexity.
	We will see in later chapters that pruning is a special case of
	dynamic kerneling and can be applied to more general polar codes.
	
	To explain pruning, I will establish the trinitarian correspondence among
	the encoder/decoder, the channel tree, and the channel process $\{𝘞_n\}$.
	Pruning the channel tree corresponds to trimming the unnecessary part
	of the encoder and decoder, which reduces complexity.
	Viewed from the a different perspective, pruning the channel tree
	corresponds to declaring a stopping time $𝘴$ that is adapted to
	$\{𝘞_n\}$ so that the stochastic process $\{𝘞_{n∧𝘴}\}$ stabilizes
	whenever channel transformation becomes ineffective.
	
	Now $𝘞_𝘴$ becomes a random variable associated to code performance.
	For all intents and purposes, it suffices to declare
	the stopping time $𝘴$ properly and prove inequalities of the form
	\begin{gather*}
		𝘗｛Z(𝘞_𝘴)<4^{-n}｝>1-H(W)-2^{-ρn},	\\
		𝘗｛T(𝘞_𝘴)<4^{-n}｝>H(W)-2^{-ρn}
	\end{gather*}
	and another inequality of the form
	\[N𝘌[𝘴]≤O(N㏒(㏒N)).\]
	The first two inequalities imply that the code achieves capacity;
	the third inequality confirms the complexity being $O(N㏒(㏒N))$.

\section{General Alphabet and Kernel}

	In \cref{cha:general}, I will investigate thoroughly
	two known ways to generalize polar coding.
	One is allowing channels with arbitrary finite input alphabet.
	This extends polar coding to all channels Shannon had considered in 1948.
	The other is utilizing arbitrary matrices as polarizing kernels.
	Doing so provably improves the performance in the long run,
	and is reportedly improving the performance for moderate block length.
	
	To begin, we will go over four regimes that connect
	probability theory, random coding theory, and polar coding theory:
	\begin{itemize}
		\item	Law of large numbers (LLN) and achieving capacity;
				this regime concerns whether block error probability $Ｐ$
				decays to $0$ while code rate $R$ converges to capacity.
		\item	Large deviation principle (LDP) and error exponent;
				this regime concerns how fast $Ｐ$ decays to $0$ when an $R$ is fixed.
		\item	Central limit theorem (CLT) and scaling exponent; this regime
				concerns how fast $R$ approaches capacity a when $Ｐ$ is fixed.
		\item	Moderate deviation principle (MDP);
				this regime concerns the general trade-off between $Ｐ$ and $R$.
	\end{itemize}
	
	Next, I will go back to prove results regarding polar coding.
	For any matrix $G$ over any finite field $𝔽_q$,
	the LDP data of $G$ include coset distances $D_ZＷj≔\hdis(r_j,R_j)$,
	where $\hdis$ is the Hamming distance, $r_j$ is the $j$th row of $G$,
	and $R_j$ is the subspace spanned by the rows below $r_j$.
	Coset distances are such that
	\[Z(WＷj)≈Z(W)^{D_ZＷj}.\]
	This approximation is used to control small $Z(𝘞_n)$,
	which eventually proves a generalization of \cref{ine:teaser-Z}.
	For the dual picture, there is a parameter $S$ generalizing $T$ and satisfying
	\[S(WＷj)≈S(W)^{D_SＷj},\]
	where $D_SＷj≔\hdis(c_j,C_j)$ is the Hamming distance from $c_j$ the $j$th column
	of $G^{-1}$ to $C_j$ the subspace spanned by the columns to the left of $c_j$.
	This eventually proves a generalization of \cref{ine:teaser-T}.
	
	The CLT data of an $ℓ×ℓ$ matrix $G$ consist of a choice of parameter $H$, a concave
	function $h:[0,1]→[0,1]$ such that $h(0)=h(1)=0$, and a number $ϱ$ such that
	\[÷1{ℓ}∑_{i=1}^ℓh(H(WＷj))≤ℓ^{-ϱ}h(H(W)),\]
	where $H$ could be the conditional entropy 
	or any other handy parameter that maximizes $ϱ$.
	
	The contribution of \cref{cha:general} is a calculus--probability machinery
	that predicts the MDP behavior of polar codes given the LDP and CLT data.
	The prediction is of the form
	\[𝘗｛Z(𝘞_n)<e^{-ℓ^{πn}}｝>1-H(W)-ℓ^{-ρn},\]
	where $(π,ρ)$ lies to the left of the convex envelope of $(0,ϱ)$
	and the convex conjugate of $t↦㏒_ℓ\(ℓ^{-1}∑_{j=1}^ℓ(D_ZＷj)^t\)$.
	This generalizes \cref{ine:teaser-Z}.
	Similarly, the generalization of \cref{ine:teaser-T} reads
	\[𝘗｛S(𝘞_n)<e^{-ℓ^{πn}}｝>H(W)-ℓ^{-ρn},\]
	where $(π,ρ)∈[0,1]²$ to the left of the convex envelope of $(0,ϱ)$
	and the convex conjugate of $t↦㏒_ℓ\(ℓ^{-1}∑_{j=1}^ℓ(D_SＷj)^t\)$.

\section{Random dynamic Kerneling}

	In \cref{cha:random}, two novel ideas are invoked to help
	achieve the optimal MDP behavior with low complexity.
	First, the matrix $G$ that induces polarization is not fixed
	but varying on a channel-by-channel basis.
	Second, since it is difficult to prove that a specific $G$ is good,
	a random variable $𝔾$ is to replace $G$ and I will investigate
	the typical behavior of $𝔾$ as a polarizing kernel.
	
	The MDP behavior of random coding, which is provably optimal, reads
	\[÷{-㏑Ｐ}{N(C-R)²}→÷1{2V},\]
	where $C$ is channel capacity and $V$ is another
	intrinsic parameter called channel dispersion or varentropy.
	Our target behavior is less impressive,
	yet it is asymptotically optimal in the logarithmic scale:
	\[÷{㏑(-㏑Ｐ)}{㏑(N(C-R)²)}≈1.\]
	Or equivalently, for any $π+2ρ<1$, there are codes
	such that $Ｐ<\exp(-N^π)$ and $C-R<N^{-ρ}$.
	
	For the typical LDP behavior of $𝔾$, we need to understand,
	for each $j$, the typical Hamming distance $D_ZＷj$ from
	its $j$th row to the subspace spanned by the rows below.
	This step is essentially the Gilbert--Varshamov bound with
	slight modifications so that it is easier to manipulate in later steps.
	
	For the typical CLT behavior of $𝔾$, I choose the concave function
	$h(x)≔\min(x,1-x)^{ℓ/㏑ℓ}$.
	Now we need to understand the typical behavior of
	\[÷1{ℓ}∑_{i=1}^ℓh(H(WＷj)),\]
	where $WＷj$ is a random variable depending on $𝔾$.
	This boils down to showing that the first few $H(WＷj)$ are close to $1$,
	while the last few $H(WＷj)$ are close to $0$.
	To show that $H(WＷj)≈1$ and to quantify the approximation,
	I reduce this to a reliability analysis of noisy-channel coding.
	To show that $H(WＷj)≈0$ and to quantify the approximation,
	I reduce this to a secrecy analysis of wiretap-channel coding.

\section{Joint Pruning and Kerneling}

	In \cref{cha:joint}, I will combine the techniques
	in \cref{cha:prune,cha:general} to depict a trade-off between
	the complexity---ranging from $O(N㏒N)$ to $O(N㏒(㏒N))$---%
	and the decay of $Ｐ$---ranging from $\exp(-N^π)$ to $\exp(-(㏒N)^τ)$.
	
	The main idea is to apply the stopping time analysis to
	any channel process $\{𝘞_n\}$ whose MDP behavior is known.
	It could be a process generated by $\locl$,
	for which we know that it is guaranteed to have a positive $ϱ$.
	It could be a process generated by a large kernel
	whose $ϱ$ is bounded by some other method.
	It could also be generated by random dynamic kerneling, for which we know $ϱ→1/2$.
	
	The result is that, for any kernel, polar codes have
	the same gap to capacity before and after pruning;
	and depending on how aggressively one wants to prune, the complexity per bit
	is approximately the logarithm of the logarithm of the block error probability.
	
	For example, if the targeted block error probability is $\exp(-N^π)$,
	then the predicted complexity is $O(N㏒N)$.
	This recovers the result of the previous chapter.
	On the other hand, if the targeted block error probability is $\exp(-(㏒N)^τ)$,
	then the predicted complexity is $O(N㏒(㏒N))$.
	This is, by far, the lowest complexity for
	capacity-achieving codes over generic channels.
	Plus the gap to capacity decay to $0$ optimally fast.

\section{Distributed Lossless Compression}

	In \cref{cha:dislession}, I will extend the theorems established
	in the previous chapters to distributed lossless compression problems.
	A distributed lossless compression problem
	is a network coding problem where there are $m$ sources,
	each to be compressed by a compressor that do not talk to each other,
	and a decompressor that attempt to reconstruct all sources.
	I will go over the two-source case as a warm up,
	the three-source case to demonstrate the difficulty,
	and finally the $m$-source case for a general result.
	
	I will explain that, modulo the previous chapters,
	the main challenge is to reduce a multiple-sender problems
	to several one-sender problems.
	The reduction consists of two steps.
	The first step is to demonstrate that a random source $X$ can be “split”
	into two random fragments $X⟨1⟩$ and $X⟨2⟩$ such that there is a bijection
	$X↔(X⟨1⟩,X⟨2⟩)$ and hence they carry the same amount of information.
	The second step is to show that, by interleaving the fragments of sources in a way
	that is related to Gray codes, we can fine-tune the workloads of every sender.
	That helps us achieve every possible distribution of workloads.
	
	A key to the second step is degree theory, an algebraic topology machinery
	that determines the surjectivity of a continuous map.
	The degree theory offers a sufficient condition on whether
	a map is onto the dominant face of a contra-polymatroid.
	Here, it is the rate region of a distributed lossless compression problem
	that is a contra-polymatroid.
	Dually, the capacity region of a multiple access channel
	is a polymatroid and a similar argument applies.
	This fact indicates that, for both distributed lossless compression and
	multiple access channels, splitting coupled with polar coding achieves
	the optimal block error probability and the optimal gap to boundary
	at the cost of $O(N㏒N)$ complexity.
	Or, following the complexity paradigm, one prunes the complexity to $O(N㏒(㏒N))$
	if a slightly higher block error probability is acceptable.

\chapter{Original Channel Polarization}\label{cha:origin}

	\dropcap
	Fifteen years ago, Erdal Arıkan developed a technique,
	called \emph{channel combining and splitting}, to combine two identical channels
	and then split them into two distinct channels \cite{Arikan06}.
	At the cost of having to prepare different codes to deal with
	distinct channels, the two new channels enjoy better metrics.
	More precisely, the average of the cutoff rates rises.
	Arıkan then argued that, by recursively synthesizing
	the children of the children of $\dotso$ of the channel,
	the rise in cutoff rates eventually pushes them towards the channel capacity.
	
	As it turns out, after a sufficient amount of recursion, one does not need
	$2^n$ different coding schemes to deal with the $2^n$ descendants of $W$.
	This is because most synthetic channels are either
	satisfactorily reliable---so we just transmit plain messages through these---%
	or desperately noisy---so we just ignore those.
	The phenomenon is named \emph{channel polarization}
	and the corresponding coding scheme \emph{polar coding}.
	
	Arıkan showed that this original polar coding achieves channel capacity.
	That is, if you follow Arıkan's instruction to construct codes, then $Ｐ→0$ and
	$R→I(W)$ over any symmetric binary-input discrete-output memoryless channels.
	This is done via proving, for some functions $θ(n)$ and $γ(n)$,
	(a)	that
	\[𝘗｛Z(𝘞_n)<θ(n)｝>I(W)-γ(n),\]
	(b)	that $2^nθ(n)$ is an upper bound on $Ｐ$, and
	(c)	that $γ(n)$ is an upper bound on $I(W)-R$.
	
	In this chapter, I will characterize the pace of achieving capacity.
	We will see that
	\[𝘗｛Z(𝘞_n)<e^{-2^{πn}}｝>I(W)-2^{-ρn}\]
	if $(π,ρ)∈[0,1]²$ lies to the left of the convex envelope of
	$(0,1/4.714)$ and $1-($the binary entropy function$)$.
	Prior to my work, the largest region of achievable $(π,ρ)$
	is considerably smaller and reaches only $(0,1/5.714)$ \cite{MHU16}.
	See \cref{fig:lol-bdmc} for plots.
	
	\begin{figure}
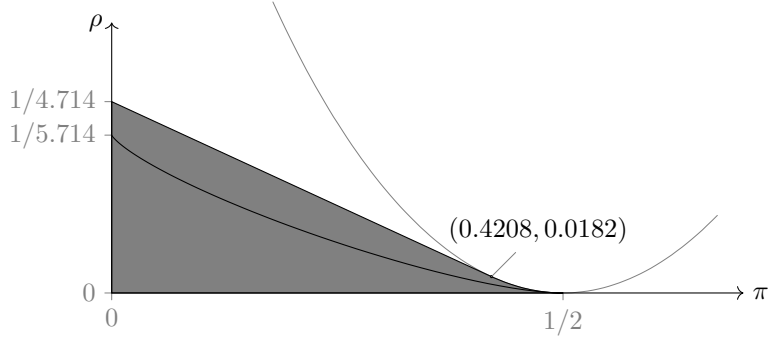

		\tikz[scale=12]{
			\draw[gray,very thin]
				(0,1/4.714)pic{y-tick=$1/4.714$}(0,1/5.714)pic{y-tick=$1/5.714$}
				(0,0)pic{y-tick=$0$}pic{x-tick=$0$}(1/2,0)pic{x-tick=$1/2$}
				plot[domain=25:55,samples=90]({sin(\x)^2},{1-h2o(\x)});
			\draw[fill=gray]
				(0,1/4.714)--plot[domain=40.443:45,samples=30]({sin(\x)^2},{1-h2o(\x)})-|cycle;
			\draw
				plot[domain=0:45,samples=90]({g2o(\x)*sin(\x)^2},{(1-g2o(\x))/4.714});
			\draw[->](0,0)--(0,.3)node[left]{$ρ$};
			\draw[->](0,0)--(.7,0)node[right]{$π$};
			\draw(.4208,.01818)pic{dot}coordinate[pin=45:{$(0.4208,0.0182)$}];
		}
		\caption{
			The achievable region of $(㏒(-㏒₂Ｐ),㏒(C-R))$ is shaded.
			The curve part is $1$ minus the binary entropy function.
			The straight part is the tangent line from $(0,1/4.714)$ to the curve,
			the tangent point being $(0.4208,0,0182)$.
			The lower left curve is the previous result \cite{MHU16},
			which attains $(0,1/5.714)$.
		}\label{fig:lol-bdmc}
	\end{figure}

\section{Problem Setup and Primary Definitions}

	After introducing some definitions,
	this section describes the main problem to be solved in this chapter.
	First goes the definition of the channels we want to attack.
	
	\begin{dfn}
		A \emph{symmetric binary-input discrete-output memoryless channels} (SBDMCs)
		is a Markov chain $W：𝔽₂→𝒴$, where
		\begin{itemize}
			\item	$𝔽₂$ is the finite field or order $2$;
			\item	$𝒴$ is a finite set;
			\item	$W(y｜x)∈[0,1]$, for $x∈𝔽₂$ and $y∈𝒴$, is an array of transition	\\
					probabilities satisfying $∑_{y∈𝒴}W(y｜x)=1$ for both $x∈𝔽₂$; and
			\item	there exists an involution $σ：𝒴→𝒴$
					such that $W(y｜0)=W(σ(y)｜1)$ for all $y∈𝒴$.
		\end{itemize}
	\end{dfn}
	
	Denote by $Q$ the uniform distribution on $𝔽₂$;
	treat this as the input distribution of the channel $W$.
	Denote by $W(x,y)$ the joint probability $Q(x)W(y｜x)$.
	Denote by $W(x｜y)$ the posterior probability;
	note that $W(•｜•)$ assumes two interpretations,
	depending on whether we want to predict $y$ from $x$ or the other way around.
	When it is necessary, $W(y)≔W(y｜0)+W(y｜1)$ denotes the output probability.
	Capital variables $X$ and $Y$, usually with indices,
	denote the input and output of the channel governed by $Q$ and $W$.
	
	The definitions of some channel parameters follow.
	
	\begin{dfn}
		The \emph{conditional entropy} of $W$ is
		\[H(W)≔-∑_{x∈𝔽₂}∑_{y∈𝒴}W(x,y)㏒₂W(x｜y),\]
		which is the amount of noise/equivocation/ambiguity/fuzziness caused by $W$.
	\end{dfn}
	
	\begin{dfn}
		The \emph{mutual information} of $W$ is
		\[I(W)≔H(Q)-H(W)=∑_{x∈𝔽₂}∑_{y∈𝒴}W(x,y)㏒₂÷{W(x｜y)}{Q(x)},\]
		which is also the channel capacity of $W$.
	\end{dfn}
	
	\begin{dfn}\label{dfn:bin-Z}
		The \emph{Bhattacharyya parameter} of $W$ is
		\[Z(W)≔2∑_{y∈𝒴}√{W(0,y)W(1,y)},\]
		which is twice the Bhattacharyya coefficient between
		the joint distributions $W(0,•)$ and $W(1,•)$.
	\end{dfn}
	
	The overall goal is to construct, for some large $N$, an encoder $ℰ：𝔽₂^{RN}→𝔽₂^N$
	and a decoder $𝒟：𝒴^N→𝔽₂^{RN}$ such that the composition
	\cd[every arrow/.append style=mapsto]{
		U₁^{RN}\rar{ℰ}	&	X₁^N\rar{W^N}	&	Y₁^N\rar{𝒟}	&	ˆU₁^{RN}
	}
	is the identity map as frequently as possible,
	and $R$ as close to the channel capacity $I(W)$ as possible.
	
	\begin{dfn}
		Call $N$ the block length.
		Call $R$ the code rate.
		Denote by $Ｐ$, called the block error probability,
		the probability that $ˆU₁^{RN}≠U₁^{RN}$.
	\end{dfn}
	
	To reach the overall goal of constructing good error correcting codes,
	I will introduce the building block of all techniques
	we are to utilize--channel transformation.

\section{Channel Transformation and Tree}

	This section motivates and defines the channel transformation.
	For the precise connection between the transformation and the actual encoder/decoder
	design, please refer to Arıkan's original work \cite{Arikan09}.
	
	Let $G∈𝔽₂^{2×2}$ be the matrix
	\[\bma{1&0\\1&1}.\]
	Let $U₁,U₂∈𝔽₂$ be two uniform random variables.
	Let $X₁²∈𝔽₂$ be the vector
	\[\bma{X₁&X₂}≔\bma{U₁&U₂}\bma{1&0\\1&1},\]
	or $X₁²≔U₁²G$ for short.
	Let $Y₁,Y₂∈𝒴$ be the outputs of two i.i.d.\ copies
	of $W$ given the inputs $X₁$ and $X₂$, respectively.
	Then the combination of the two $W$'s is the channel
	with input $U₁²$ and output $Y₁²$.
	
	To split the combination of the channels, consider a two-step guessing job:
	\begin{itemize}
		\item	Guess $U₁$ given $Y₁²$.
		\item	Guess $U₂$ given $Y₁²$,
				assuming that the guess $ˆU₁$ of $U₁$ is correct.
	\end{itemize}
	Pretend that there is a channel $WＷ1$ with input $U₁$ and output $Y₁²$;
	this channel captures the difficulty of the first step.
	Pretend also that there is a channel $WＷ2$ with input $U₂$ and output $Y₁²U₁$;
	this channel captures the difficulty of the second step.
	The precise definitions follows.
	
	\begin{dfn}
		Define synthetic channels
		\begin{gather*}
			WＷ1(y₁²｜u₁)≔∑_{u₂∈𝔽₂}÷12W(y₁｜u₁+u₂)W(y₂｜u₂),	\\
			WＷ2(y₁²u₁｜u₂)≔÷12W(y₁｜u₁+u₂)W(y₂｜u₂).
		\end{gather*}
	\end{dfn}
	
	Clearly $WＷ1$ and $WＷ2$ are of binary input and discrete output.
	It can be shown that they are symmetric, hence are SBDMCs.
	Therefore, the channel transformation $W↦(WＷ1,WＷ2)$
	maps the set of SBDMCs to the Cartesian square thereof.
	
	Once we accept the idea that the guessing jobs can be modeled as channels,
	we can talk about manipulating the channels
	as if they were actual objects instead of describing, abstractly,
	the change in ways we are guessing the random variables.
	Particularly, we can easily imagine that the channel transformation
	applies recursively and gives birth to descendants
	$WＷ{j₁}$, $(WＷ{j₁})Ｗ{j₂}$, $((WＷ{j₁})Ｗ{j₂})Ｗ{j₃}$, and so on and so forth.
	This family tree of synthetic channels
	rooted at $W$ is called the \emph{channel tree}.
	
	To construct a code, choose a large integer $n$ and synthesize the depth-$n$
	descendants of $W$, which are of the form $\(\dotsb((WＷ{j₁})Ｗ{j₂})\dotsb\)Ｗ{j_n}$.
	Select a subset of those channels, which is equivalent to
	selecting a subset of indices $(j₁,j₂…j_n)∈\{1,2\}^n$.
	Call this subset $𝒥$.
	Then by transmitting messages through
	the synthetic channels in $𝒥$, a code is established.
	This code has block length $N=2^n$, code rate $R=\abs{𝒥}/2^n$,
	and block error probability upper bounded by
	\[Ｐ≤∑_{j₁^n∈𝒥}Z\(\(\dotsb((WＷ{j₁})Ｗ{j₂})\dotsb\)Ｗ{j_n}\).\label{ine:P<sum}\]
	In all papers I have seen, no upper bound on $Ｐ$
	other than \cref{ine:P<sum} was used.
	So we may pretend that the right-hand side of
	\cref{ine:P<sum} is the design block error probability of $𝒥$.
	
	To construct good codes, it suffices to collect
	in $𝒥$ synthetic channels with small $Z$.
	But the more we collect, the higher the sum of $Z$'s.
	This induces a trade-off between $Ｐ$ and $R$,
	which is the subject of the current chapter.
	Let $θ$ be the collecting threshold;
	that is, $𝒥$ collects synthetic channels whose $Z$ falls below $θ$.
	Then $θ$ parametrizes the trade-off in the sense that $Ｐ<Nθ$
	and $R$ is the density of the synthetic channels whose $Z$ falls below $θ$.
	
	In the next section, I will introduce some stochastic processes
	that help us comprehend the trade-off between $R$ and $Ｐ$.

\section{Channel and Parameter Processes}

	We are to define some stochastic processes whose sample space
	is independent of those of the channels and user messages.
	To help distinguish the new source of randomness,
	I typeset the relevant symbols (such as $𝘗,𝘌$) in sans serif font.
	
	\begin{dfn}
		Let $𝘑₁,𝘑₂,\dotsc$ be i.i.d.\ tosses of a fair coin with sides $\{1,2\}$.
		That is,
		\[𝘑_n≔\cas{
			1	&	w.p. $1/2$,	\\
			2	&	w.p. $1/2$.	
		}\]
		Let $𝘞₀,𝘞₁,𝘞₂,\dotsc$, or $\{𝘞_n\}$ in short,
		be a stochastic process of SBDMCs defined as follows:
		\begin{itemize}
			\item	$𝘞₀≔W$; and
			\item	$𝘞_{n+1}≔𝘞_nＷ{𝘑_{n+1}}$.
		\end{itemize}
		This is called the \emph{channel process}.
	\end{dfn}
	
	\begin{dfn}
		Let $\{𝘏_n\}$ be the stochastic process obtained by applying $H$ to $\{𝘞_n\}$.
		That is, $𝘏_n≔H(𝘞_n)$.
		It is called \emph{Arıkan's martingale}.
	\end{dfn}
	
	\begin{dfn}
		Let $\{𝘡_n\}$ be the stochastic process obtained by applying $Z$ to $\{𝘞_n\}$.
		That is, $𝘡_n≔Z(𝘞_n)$.
		It is called \emph{Bhattacharyya's supermartingale}.
	\end{dfn}
	
	The remainder of this section is devoted to explaining that Arıkan's martingale
	is a martingale and Bhattacharyya's supermartingale is a supermartingale,
	as well as other relations among $H$ and $Z$.
	It will show that questions regarding the code performance
	can be passed to questions regarding the processes $\{𝘏_n\}$ and $\{𝘡_n\}$.
	
	\begin{pro}\label{pro:martin}
		Arıkan's martingale $\{𝘏_n\}$ is a martingale.
	\end{pro}
	
	\begin{proof}
		It suffices to check if $H(WＷ1)+H(WＷ2)=2H(W)$.
		Recall the inputs and outputs of $WＷ1$ and $WＷ2$;
		we have
		\begin{align*}
			H(WＷ1)+H(WＷ2)
			&	=H(U₁｜Y₁²)+H(U₂｜U₁Y₁²)=H(U₁²｜Y₁²)	\\
			&	=H(X₁²｜Y₁²)=2H(X｜Y)=2H(W).
		\end{align*}
		That finishes the proof.
	\end{proof}
	
	\begin{pro}\label{pro:superm}
		Bhattacharyya's supermartingale $\{𝘡_n\}$ is a supermartingale.
	\end{pro}
	
	\begin{proof}
		It suffices to check if $Z(WＷ1)+Z(WＷ2)≤2Z(W)$.
		But that is the sum of \cref{ine:2Z-Z^2,ine:Z^2} below.
	\end{proof}
	
	\begin{lem}[Evolution of $Z$]\label{lem:squares}
		The following hold for all SBDMCs $W$:
		\begin{gather*}
			Z(WＷ2)=Z(W)²,				\label{ine:Z^2}\\
			Z(WＷ1)≤2Z(W)-Z(W)²,		\label{ine:2Z-Z^2}\\
			Z(WＷ1)≥Z(W)√{2-Z(W)²}.	\label{ine:2-Z^2}
		\end{gather*}
	\end{lem}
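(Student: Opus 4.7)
The plan is to treat the three claims as separate algebraic identities on the sum $Z(W)=\sum_y\sqrt{p_y q_y}$, where I abbreviate $p_i\coloneqq W(y_i\mid 0)$ and $q_i\coloneqq W(y_i\mid 1)$. The first identity $Z(W^{(2)})=Z(W)^2$ is immediate: $W^{(2)}(y_1^2 u_1\mid 0)\,W^{(2)}(y_1^2 u_1\mid 1)=\tfrac14 W(y_1\mid u_1)W(y_1\mid u_1{+}1)W(y_2\mid 0)W(y_2\mid 1)$, and since $\{u_1,u_1{+}1\}=\{0,1\}$ in $\mathbb F_2$ the first two factors multiply to $W(y_1\mid 0)W(y_1\mid 1)$ independently of $u_1$. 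Summing the square root over $(u_1,y_1,y_2)$ therefore factors as $(\sum_{y_1}\sqrt{p_1 q_1})(\sum_{y_2}\sqrt{p_2 q_2})=Z(W)^2$.

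For the upper bound $Z(W^{(1)})\leq 2Z(W)-Z(W)^2$, the key algebraic identity (checked by direct expansion) is
\[
4\,W^{(1)}(y_1,y_2\mid 0)\,W^{(1)}(y_1,y_2\mid 1) = (p_1{+}q_1)^2 p_2 q_2+(p_2{+}q_2)^2 p_1 q_1-4 p_1 q_1 p_2 q_2.
\]
Setting $a\coloneqq(p_1{+}q_1)\sqrt{p_2 q_2}$, $b\coloneqq(p_2{+}q_2)\sqrt{p_1 q_1}$, $c\coloneqq 2\sqrt{p_1 q_1 p_2 q_2}$, the AM--GM inequality $2\sqrt{p_i q_i}\leq p_i+q_i$ yields $c\leq\min(a,b)$, whence the elementary pointwise bound $\sqrt{a^2+b^2-c^2}\leq a+b-c$ applies (it rearranges to $(a-c)(b-c)\geq 0$). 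Summing over $(y_1,y_2)$ with $\sum_y(p_y+q_y)=2$ and $\sum_y\sqrt{p_y q_y}=Z(W)$ collapses the right-hand side to $4Z(W)-2Z(W)^2$, proving the claim since the left-hand side sums to $2Z(W^{(1)})$.

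The lower bound $Z(W^{(1)})\geq Z(W)\sqrt{2-Z(W)^2}$ is the substantive part and the main obstacle. Applying Minkowski's inequality in $\ell^2$ to the same identity only yields the weaker $Z(W^{(1)})\geq Z(W)\sqrt{1+\mathrm{TV}(W)^2}$, which is tight at a binary symmetric channel (where $\mathrm{TV}^2=1-Z^2$) but strictly loose elsewhere, so a finer tool is needed. Using $Z(W^{(2)})=Z(W)^2$ from the first step, the claim becomes $Z(W^{(1)})^2+Z(W^{(2)})^2\geq 2Z(W)^2$, which in the angle parametrization $Z(W)=\cos\theta$ between the unit vectors $\sqrt{P_0},\sqrt{P_1}\in\ell^2(\mathcal Y)$ further reduces to the angle-doubling estimate $\sin\theta^{(1)}\leq\sin^2\theta$. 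I would exploit the rank-one tensor decompositions
\[
W^{(1)}(y_1,y_2\mid 0)\pm W^{(1)}(y_1,y_2\mid 1)=\tfrac12(p_{y_1}\pm q_{y_1})(p_{y_2}\pm q_{y_2})
\]
to set up a Cauchy--Schwarz estimate whose extremal case is precisely a BSC. Verifying that a BSC minimizes $Z(W^{(1)})$ over all SBDMCs with prescribed $Z(W)$ is the delicate step; once it is in hand, the remaining two-variable inequality reduces to a direct check (BSC gives equality throughout), and the lower bound follows by substitution.
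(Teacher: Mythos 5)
Your handling of the first two relations is correct and complete. The factorization behind $Z(W^{(2)})=Z(W)^2$, the identity $4W^{(1)}(y_1y_2\mid 0)\,W^{(1)}(y_1y_2\mid 1)=a^2+b^2-c^2$ with $a=(p_1{+}q_1)\sqrt{p_2q_2}$, $b=(p_2{+}q_2)\sqrt{p_1q_1}$, $c=2\sqrt{p_1q_1p_2q_2}$, and the pointwise bound $\sqrt{a^2+b^2-c^2}\le a+b-c$ (legitimate because $c\le\min(a,b)$, and equivalent to $(a-c)(b-c)\ge 0$) all check out, and the sums do collapse to $2Z(W^{(1)})\le 4Z(W)-2Z(W)^2$. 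For comparison, the paper itself proves none of this: it cites \cite[Proposition~5]{Arikan09} for the first two relations and \cite[Exercise~4.62]{RU08} (via \cite{MHU16}) for the third, so on those two items your write-up is a genuine self-contained argument in the same elementary spirit as the cited source.

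The third inequality, however, is left with a real gap, and it is exactly the nontrivial part. Your reformulation $Z(W^{(1)})^2+Z(W^{(2)})^2\ge 2Z(W)^2$ and the observation that a BSC attains equality are fine, but the step you defer---that among all SBDMCs with prescribed $Z(W)$ the BSC minimizes $Z(W^{(1)})$---\emph{is} the statement to be proved; it is precisely the extremes-of-information-combining fact the paper outsources to \cite[Exercise~4.62]{RU08}. The sketch via the rank-one decompositions $W^{(1)}(\cdot\mid 0)\pm W^{(1)}(\cdot\mid 1)$ and ``a Cauchy--Schwarz estimate whose extremal case is a BSC'' names no concrete inequality, and your own remark that Minkowski only yields $Z(W)\sqrt{1+T(W)^2}$ shows the bound is not a soft $\ell^2$ estimate, so as written the ``delicate step'' is the whole theorem. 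A workable elementary completion: pair each output $y$ with $\sigma(y)$ to decompose the symmetric $W$ into BSC subchannels whose index is visible to the decoder and independent of the input, so that $Z(W)=\mathsf E[z_d]$ with $z_d$ the subchannel Bhattacharyya parameter; conditioned on the two independent indices, the $u_1$-channel is the check-node combination of two BSCs, for which a direct computation gives $Z=\sqrt{z_1^2+z_2^2-z_1^2z_2^2}$; then verify that $(z_1,z_2)\mapsto\sqrt{z_1^2+z_2^2-z_1^2z_2^2}$ is convex in each variable on $[0,1]^2$ (the second partial derivative in $z_1$ is $(1-z_2^2)z_2^2$ divided by a positive power) and apply Jensen twice to obtain $Z(W^{(1)})\ge\sqrt{2Z(W)^2-Z(W)^4}=Z(W)\sqrt{2-Z(W)^2}$. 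Without an argument of this kind, the lower bound remains unproven.
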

	
	For a proof of the first two inequalities, see \cite[Proposition~5]{Arikan09}.
	Regarding the third inequality, it is used in \cite[inequality~(5)]{MHU16},
	wherein the authors cited \cite[Exercise~4.62]{RU08}.
	The proofs consist of elementary manipulations of summations and square roots.
	
	The next lemma relates $H$ and $Z$.
	Note that any relation automatically applies to $\{𝘏_n\}$ and $\{𝘡_n\}$.
	
	\begin{lem}[$Z$ vs $H$]\label{lem:ZvsH}
		The following hold for all SBDMCs $W$:
		\begin{gather*}
			Z(W)≥H(W),			\\
			Z(W)²≤H(W),			\label{ine:Z^2<H}\\
			1-Z(W)≥(1-H(W))㏑2.	
		\end{gather*}
	\end{lem}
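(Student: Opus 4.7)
The plan is to reduce each inequality to a one-variable bound on the binary entropy function $H_2(p) \coloneqq -p\log_2 p - (1-p)\log_2(1-p)$, using the posterior parametrization $p_y \coloneqq W(0 \mid y)$. A short computation yields
\[
	H(W) = \mathbb{E}[H_2(p_Y)], \qquad Z(W) = 2\,\mathbb{E}\bigl[\sqrt{p_Y(1-p_Y)}\bigr],
\]
where the expectation is taken under the output marginal $\tfrac12[W(y\mid 0)+W(y\mid 1)]$. Hence each of the three inequalities reduces to its pointwise analogue on $[0,1]$.

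For the first two, the pointwise claims are $2\sqrt{p(1-p)} \geqslant H_2(p)$ (which gives $Z \geqslant H$ by linearity of expectation) and $4p(1-p) \leqslant H_2(p)$. The latter, combined with Jensen's inequality applied to $x \mapsto x^2$, yields
\[
	Z(W)^2 = \bigl(2\,\mathbb{E}[\sqrt{p_Y(1-p_Y)}]\bigr)^2 \leqslant 4\,\mathbb{E}[p_Y(1-p_Y)] \leqslant H(W).
\]
Both pointwise bounds are routine: in each case the two sides agree at $p = 1/2$ with matching first derivatives, and a second-derivative check combined with the boundary values $H_2(0) = H_2(1) = 0$ handles the rest.

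The third inequality is the delicate one, because the constant $\ln 2$ is sharp at $p = 1/2$. It suffices to prove $1 - 2\sqrt{p(1-p)} \geqslant (\ln 2)(1 - H_2(p))$ pointwise. After the substitution $q = 2p - 1$ this is equivalent to showing
\[
	f(q) \coloneqq 1 - \sqrt{1 - q^2} - \tfrac12\bigl[(1+q)\ln(1+q) + (1-q)\ln(1-q)\bigr] \geqslant 0
\]
for $q \in [-1, 1]$. Direct differentiation gives $f(0) = f'(0) = 0$ and, after simplification, $f''(q) = (1 - \sqrt{1-q^2})/(1-q^2)^{3/2} \geqslant 0$, so $f$ is convex on $(-1,1)$ with minimum $0$ attained at the origin. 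This is the main technical obstacle: both sides of the pointwise claim vanish to second order at $p = 1/2$, which is precisely what pins down the constant $\ln 2$; the other two inequalities have slack and reduce to textbook entropy bounds.
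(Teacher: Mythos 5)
Your argument is correct, and it takes a genuinely different route from the paper. The paper does not prove the lemma in-house: it points to \cite[Corollary~5]{JA18} and remarks that the last two bounds are specializations of the stronger channel-level inequality $\phi(Z(W))\leq H(W)$ with $\phi(z)=h_2\bigl((1-\sqrt{1-z^2})/2\bigr)$, i.e.\ of the fact that among channels with a given Bhattacharyya parameter the BSC minimizes the conditional entropy. You instead linearize: with the uniform input, $H(W)=\mathbb{E}[h_2(p_Y)]$ and $Z(W)=2\,\mathbb{E}\bigl[\sqrt{p_Y(1-p_Y)}\bigr]$, so the three statements reduce to scalar bounds, namely $h_2(p)\leq 2\sqrt{p(1-p)}$, $4p(1-p)\leq h_2(p)$ (combined with Jensen's inequality in the right direction for the square), and $1-2\sqrt{p(1-p)}\geq(\ln 2)\,(1-h_2(p))$. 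Your treatment of the last one---the only delicate one, where the constant $\ln 2$ is pinned down by the second-order behavior at $p=1/2$---is correct; the computation $f''(q)=(1-\sqrt{1-q^2})/(1-q^2)^{3/2}\geq 0$ checks out, and convexity with $f(0)=f'(0)=0$ finishes it. What your route buys is a short, self-contained proof using nothing beyond one-variable calculus; what the cited route buys is the stronger functional bound $\phi(Z)\leq H$, which is tight on BSCs and reusable elsewhere. One presentational caveat: for the two ``routine'' bounds, matching values and derivatives at $p=1/2$ plus ``a second-derivative check'' is not literally a proof, since the second derivative of each difference changes sign on $(0,1)$; but $4p(1-p)\leq h_2(p)\leq 2\sqrt{p(1-p)}$ are standard textbook inequalities, so this is looseness of exposition rather than a gap in the argument.
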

	
	For proofs, see \cite[Corollary~5]{JA18};
	note that the last two inequalities are specializations
	of $ϕ(Z(W))≤H(W)$ for a smooth function $ϕ(z)≔h₂((1-√{1-z²})/2)$.
	For a visualization of the region where $(H(W),Z(W))$
	could possibly be, see \cref{fig:ZvsH}.
	
	\begin{figure}
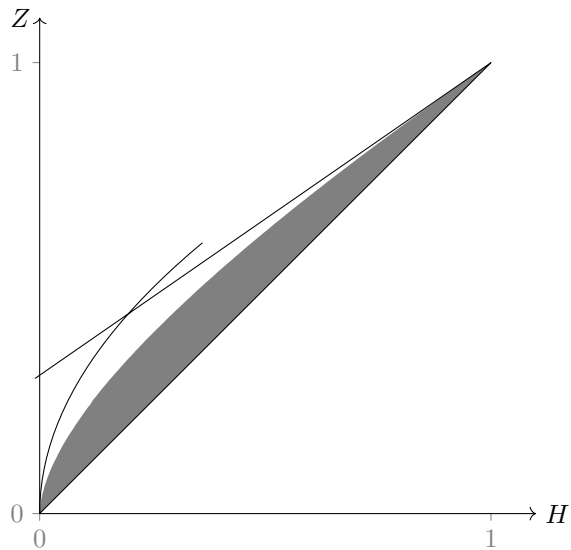

		\tikz[scale=6]{
			\draw
				(0,1)pic{y-tick=$1$}(0,0)pic{y-tick=$0$}
				(0,0)pic{x-tick=$0$}(1,0)pic{x-tick=$1$};
			\fill[gray]
				plot[domain=0:45,samples=180]({h2o(\x)},{sin(2*\x)})--cycle;
			\draw
				plot[domain=.6:0,samples=25](\x^2,\x)--
				plot[domain=1:.3,samples=3]({\x*1.4427-0.4427},\x);
			\draw[->](0,0)--(0,1.1)node[left]{$Z$};
			\draw[->](0,0)--(1.1,0)node[right]{$H$};
		}
		\caption{
			The possible region where $(H(W),Z(W))$ could lie in.
			The dark region is the exact one, whereas the outer boundaries
			are loosened to two pieces so that they are easier to
			describe---they are a parabola and a line of slope $㏑2$.
		}\label{fig:ZvsH}
	\end{figure}
	
	Remark:
	Imagine that we write down the parameter processes in an infinite array
	\[\bma{
		𝘏₀	&	𝘏₁	&	𝘏₂	&	𝘏₃	&	⋯\quad{}	\\
		𝘡₀	&	𝘡₁	&	𝘡₂	&	𝘡₃	&	⋯\quad{}
	}\]
	then \cref{pro:martin,pro:superm,lem:squares} are some horizontal relations,
	and \cref{lem:ZvsH} is some vertical relations.
	These lemmas help us predict where the processes are going.
	For example, if we happen to know $𝘏_n→0$,
	then $𝘡_n→0$ because \cref{ine:Z^2<H} says $𝘡_n≤√{𝘏_n}$.
	I call this the \emph{common-fate property}.
	
	The next lemma justifies why we want to predict the processes---%
	because it helps us evaluate code performance.
	
	\begin{lem}\label{lem:R=PZ}
		Fix an $n$.
		Declare a code by letting $𝒥$ collect
		synthetic channels with $Z$ less than the threshold $θ$.
		Then the code rate $R$ is $𝘗\{𝘡_n<θ\}$.
	\end{lem}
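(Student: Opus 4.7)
The plan is essentially a counting argument that identifies the law of $𝘞_n$ with the uniform distribution on the depth-$n$ descendants of $W$. First I would unfold the recursion defining $\{𝘞_n\}$: since $𝘞₀=W$ and $𝘞_{n+1}=(𝘞_n)Ｗ{𝘑_{n+1}}$, an induction on $n$ shows that
\[𝘞_n=\(\dotsb((WＷ{𝘑₁})Ｗ{𝘑₂})\dotsb\)Ｗ{𝘑_n},\]
so each realization of $(𝘑₁…𝘑_n)$ selects exactly one of the $2^n$ depth-$n$ synthetic channels indexed by $(j₁…j_n)∈\{1,2\}^n$.

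Next I would use the independence and fairness of the coins $𝘑_i$ to conclude that $𝘞_n$ is uniformly distributed over this multiset of $2^n$ channels, each index $(j₁…j_n)$ occurring with probability $2^{-n}$. Therefore
\[𝘗\{𝘡_n<θ\}=𝘗\{Z(𝘞_n)<θ\}=÷1{2^n}\abs｛(j₁…j_n)∈\{1,2\}^n：Z\(\(\dotsb((WＷ{j₁})Ｗ{j₂})\dotsb\)Ｗ{j_n}\)<θ｝.\]
By the stated rule for declaring $𝒥$, the set on the right-hand side is precisely $𝒥$, so the expression equals $\abs{𝒥}/2^n$, which is the definition of $R$.

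Since the argument is purely a distributional identification between a stochastic process and a uniform counting measure on a finite tree, there is no real obstacle—no use of the martingale property of $\{𝘏_n\}$, the supermartingale property of $\{𝘡_n\}$, or the $Z$-vs-$H$ inequalities is needed. The only subtlety worth flagging is to be careful that $𝘡_n$ and the indicator $\mathbf 1\{𝘡_n<θ\}$ are random variables on the sample space of the coin sequence (not on the channel's sample space), which is why the typesetting convention distinguishes $𝘗$ from $P$; once this is clear the identity $𝘗\{𝘡_n<θ\}=R$ is immediate.
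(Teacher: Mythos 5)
Your proposal is correct and is essentially the paper's own argument: both identify the law of $𝘞_n$ with the uniform distribution on the $2^n$ depth-$n$ synthetic channels and then note that the event $\{𝘡_n<θ\}$ is exactly membership in $𝒥$, so its probability is $\abs{𝒥}/2^n=R$. The extra remarks (induction unfolding the recursion, the sans-serif sample space) are fine but add nothing beyond the paper's proof.
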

	
	\begin{proof}
		Recall that $R=\abs{𝒥}/2^n$, where $\abs{𝒥}$ is
		the number of depth-$n$ synthetic channels of the form
		$\(\dotsb((WＷ{j₁})Ｗ{j₂})\dotsb\)Ｗ{j_n}$ in $𝒥$.
		Since $𝘞_n$ assumes each depth-$n$ channel with probability $1/2^n$,
		the code rate $R$ is the probability that $𝘞_n$ is in $𝒥$.
		This quantity, by the definition of $𝒥$, is the probability that $Z(𝘞_n)<θ$;
		and $Z(𝘞_n)$ is just $𝘡_n$.
		This finishes the proof.
	\end{proof}
	Recap:
	When declaring a code by letting $𝒥$ collect synthetic channels with $Z$
	less than $θ$, the block error probability has an upper bound $Ｐ<Nθ$,
	and the code rate has an easy expression $R=𝘗\{𝘡_n<θ\}$.
	In summary, the following formula depicts the trade-off between $Ｐ$ and $R$:
	\[𝘗\{𝘡_n<Ｐ/N\}≈R.\]
	To rephrase it, we are interested in the cdf of $𝘡_n$,
	especially how close it is to the $y$-axis.
	
	Our end-of-chapter goal is to characterize the pairs $(π,ρ)∈[0,1]²$ such that
	\[𝘗｛𝘡_n<e^{-2^{πn}}｝>I(W)-2^{-ρn}.\]
	That immediately implies the existence of codes with $Ｐ$ on the order of
	$\exp(-2^{πn})$ and the gap to capacity $I(W)-R$ on the order of $2^{-ρn}$.
	In the next section, we approach this goal via first showing
	\[𝘗｛𝘡_n<e^{-n^{2/3}}｝>I(W)-2^{-ϱn+o(n)}\label{ine:Z-en23}\]
	for some $ϱ>0$, where $1/ϱ$ is sometimes called the \emph{scaling exponent}.

\section{Scaling Exponent Regime}

	As far as I can tell, the only way to show \cref{ine:Z-en23}
	is through the eigen behavior of $𝘡_n$.
	More precisely, I will first declare a concave function $h：[0,1]→[0,1]$
	and estimate the supremum
	\[2^{-ϱ}≔\sup_{W：†SBDMC†}÷{h(Z(WＷ1))+h(Z(WＷ2))}{2h(Z(W))}.\]
	This is called the \emph{eigen behavior} of $𝘡_n$.
	From that we can infer $𝘌[𝘡_n]≤𝘡₀2^{-ϱ}$,
	and then $𝘗\{\exp(-n^{2/3})≤𝘡_n≤1-\exp(-n^{2/3})\}<2^{-ϱn+o(n)}$,
	followed by $𝘗\{𝘡_n→0\}=I(W)$,
	and finally the \emph{en23 behavior} $𝘗\{𝘡_n<\exp(-n^{2/3})\}>I(W)-2^{-ϱn+o(n)}$.
	
	Let us walk through a toy example before we dive into the general case.
	Consider any binary erasure channel (BEC) $W$.
	Then \cref{ine:2Z-Z^2} assumes equality;
	that is, $Z(WＷ1)=2Z(W)-Z(W)²$.
	Declare an “eigenfunction” $h(z)≔√{z(1-z)}$.
	Then
	\[\sup_{W：†BEC†}÷{h(Z(WＷ1))+h(Z(WＷ2))}{2h(Z(W))}
		=\sup_{0<z<1}÷{h(2z-z²)+h(z²)}{2h(z)}=÷{√3}2.\label{sup:bec}\]
	This means that $√3/2$ is the “eigenvalue” corresponding to $h$,
	hence the name eigen behavior.
	
	We now deduce that $𝘌[h(𝘡_{n+1})]=𝘌[𝘌[h(𝘡_{n+1})｜𝘡_n]]≤𝘌[h(𝘡_n)√3/2]$.
	Applying this iteratively, we arrive at $𝘌[h(𝘡_n)]≤h(𝘡₀)(√3/2)^n$.
	From there we further deduce that, by Markov's inequality,
	\begin{align*}
		\qquad&\kern-2em
		𝘗｛e^{-n^{2/3}}≤𝘡_n≤1-e^{-n^{2/3}}｝=𝘗｛h(𝘡_n)≥h\(e^{-n^{2/3}}\)｝	\\
		&	≤÷{𝘌[h(𝘡_n)]}{h(\exp(-n^{2/3}))}≤÷{h(𝘡₀)(√3/2)^n}{h(\exp(-n^{2/3}))}
			<÷{(√3/2)^n}{\exp(-n^{2/3})}<（÷{√3}2）^{n-o(n)}.
			\label{ine:bec-expel}
	\end{align*}
	What we see here is that $𝘡_n$ refuses to stay around the middle of
	the interval $[0,1]$, and we can quantify how unwilling $𝘡_n$ is.
	
	Next, we run the following analysis nonsense to derive that $𝘗\{𝘡_n→0\}=I(W)$:
	Since $𝘌[h(𝘡_n)]$ decays exponentially fast in $n$
	while $h$ is bounded, $h(𝘡_n)→0$ almost surely.
	This implies that $𝘡_n$ might
	(a)	converge to $0$,
	(b)	converge to $1$, or
	(c)	jump back and forth between $0$ and $1$.
	The last one cannot happen to a supermartingale,
	so there exists a random variable $𝘡_∞$ such that $𝘡_n→𝘡_∞∈\{0,1\}$.
	By \cref{lem:ZvsH}, $\{𝘏_n\}$ obeys the same law---there exists
	a random variable $𝘏_∞$ such that $𝘏_n→𝘏_∞∈\{0,1\}$.
	Now $𝘗\{𝘡_n→1\}=𝘗\{𝘏_n→1\}=𝘗\{𝘏_∞=1\}=𝘌[H_∞]=𝘏₀=H(W)$,
	so the complement probability is $𝘗\{𝘡_n→0\}=I(W)$.
	
	\Cref{ine:bec-expel} and $𝘗\{𝘡_n→0\}=I(W)$ are what we need to derive
	the en23 behavior of $𝘡_n$---the former expels $𝘡_n$ to the ends of the interval
	$[0,1]$, and the latter predicates how much $𝘡_n$ goes to which end.
	In detail, consider the bad event $𝘉_n≔\{𝘡_n≥\exp(-n^{2/3})$ but $𝘡_m→0\}$.
	This event collects the samples where $𝘡_n$
	used to be moderate or bad but turns out to be good.
	By \cref{lem:squares}, there must be some $m≥n$ such that $𝘡_m$
	“visits the middle”, i.e., $\exp(-m^{2/3})≤𝘡_m≤1-\exp(-m^{2/3})$.
	But since the upper bound by \cref{ine:bec-expel} is a geometric series in $m$,
	the probability that some $𝘡_m$ (for some $m≥n$)
	visit the middle is at most $(√3/2)^{n-o(n)}$.
	We conclude that $𝘗\{𝘡_n<\exp(-n^{2/3})\}≥𝘗\{𝘡→0\}-𝘗(𝘉_n)>I(W)-(√3/2)^{n-o(n)}$.
	
	Remark on the last four paragraphs:
	We used an eigen-pair $(h,√3/2)$ to quantify the pace of decay of $𝘌[h(𝘡_n)]$
	and further computed how much $𝘡_n$ visits the middle or close to $0$.
	The general SBDMC version of this argument comes with two modifications.
	First, the eigenvalue $√3/2$ could be improved if we use a better eigenfunction $h$.
	Second, $Z(WＷ1)$ is not exactly $2Z(W)-Z(W)²$
	but can be as low as $Z(W)√{2-Z(W)²}$ (\cref{lem:squares}).
	This has to be taken into consideration
	for the generalization of \cref{sup:bec}.
	
	The remainder of this section deals with the general SBDMC case.
	
	\begin{thm}[SBDMC scaling exponent]\label{thm:bdmc-mu}
		Assume SBDMCs.
		There exists a concave function $h:[0,1]→[0,1]$ such that $h(0)=h(1)=0$ and
		\[\sup_{W：†SBDMC†}÷{h(Z(WＷ1))+h(Z(WＷ2))}{2h(Z(W))}>2^{-1/4.714}.\]
	\end{thm}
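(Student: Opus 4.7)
The plan is to verify this existence claim constructively by exhibiting a single admissible $h$ together with a single SBDMC that already witnesses the inequality; the supremum over the full SBDMC class can only be larger. I will take $h(z) \coloneqq \sqrt{z(1-z)}$, which is concave on $[0,1]$ (its second derivative equals $-\tfrac14 (z(1-z))^{-3/2} < 0$), vanishes at both endpoints, and maps into $[0,\tfrac12] \subseteq [0,1]$, so all three hypotheses on $h$ are satisfied.

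For the test channel I restrict attention to a binary erasure channel, which is a legitimate SBDMC and inside which the inequalities of \cref{lem:squares} degenerate into the exact identities $Z(W^{(2)}) = Z(W)^2$ and $Z(W^{(1)}) = 2Z(W) - Z(W)^2$. Setting $z \coloneqq Z(W) = 1/2$ at the symmetric midpoint gives $Z(W^{(1)}) = 3/4$ and $Z(W^{(2)}) = 1/4$; consequently $h(Z(W^{(1)})) = h(Z(W^{(2)})) = \sqrt{3}/4$ while $2h(z) = 1$. The ratio at this specific $W$ is therefore exactly $\sqrt{3}/2$, from which
\[
\sup_{W:\text{SBDMC}} \frac{h(Z(W^{(1)})) + h(Z(W^{(2)}))}{2h(Z(W))} \geqslant \frac{\sqrt{3}}{2}.
\]

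It remains to compare $\sqrt{3}/2$ with $2^{-1/4.714}$. Taking base-$2$ logarithms, the target inequality $\sqrt{3}/2 > 2^{-1/4.714}$ is equivalent to $\tfrac12 \log_2 3 - 1 > -1/4.714$, i.e.\ $1/4.714 > 1 - \tfrac12 \log_2 3$. Numerically, $1/4.714 \approx 0.2121$ whereas $1 - \tfrac12 \log_2 3 \approx 1 - 0.7925 = 0.2075$, a comfortable gap of about $0.005$. A rigorous check is supplied by any standard rational bound such as $\log_2 3 > 1.584$ (equivalent to $3^{1000} > 2^{1584}$, verifiable by a finite integer computation), which forces $1 - \tfrac12 \log_2 3 < 0.208 < 0.212 < 1/4.714$.

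I do not foresee a genuine conceptual obstacle; the single nontrivial step is the numerical comparison of $\sqrt{3}/2$ with $2^{-1/4.714}$, whose margin is small but unambiguously positive. The content of the argument is packaged entirely into the exact BEC evolution formulas and the convenient closed form of $h(z) = \sqrt{z(1-z)}$, so the theorem reduces to a short verification rather than a structural proof. Any concave $h$ whose value at $1/2$ is sufficiently large relative to its values at $1/4$ and $3/4$ would witness the same conclusion; $\sqrt{z(1-z)}$ is simply a clean and canonical choice.
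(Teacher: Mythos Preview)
Your argument is internally correct for the inequality as literally printed, but that printed inequality is a typo: the supremum is meant to be bounded \emph{above} by $2^{-1/4.714}$, not below. You can see this from the surrounding text (the proof says ``it suffices to find a function $h$ that \emph{minimizes}'' the supremum; Note~one says ``as long as it is less than $1$'') and, decisively, from how the theorem is invoked in \cref{lem:Z-en23}, where it is used to conclude $𝘌[h(𝘡_{n+1})]\le 2^{-\varrho}𝘌[h(𝘡_n)]$. That step requires the ratio to be at most $2^{-\varrho}$ for \emph{every} SBDMC, i.e.\ an upper bound on the supremum.

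Read with the intended direction, your approach collapses. Exhibiting one channel can only lower-bound a supremum, never upper-bound it; and your particular choice $h(z)=\sqrt{z(1-z)}$ is in fact disqualified, since already over BECs the paper computes in \cref{sup:bec} that the supremum equals $\sqrt3/2>2^{-1/4.714}$. The paper's proof does not construct $h$ explicitly: it reduces the SBDMC supremum to the two-parameter optimisation in \cref{sup:sbdmc} via \cref{lem:squares}, and then cites the numerical spline construction of Mondelli--Hassani--Urbanke to assert that some $h$ pushes the supremum below $2^{-1/4.714}$. No simple closed-form $h$ is known to achieve this bound; that is precisely why the argument defers to the literature.
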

	
	\begin{proof}
		Thanks to \cref{lem:squares}, it suffices to find a function $h$ that minimizes
		\[\sup_{0<z<1}\sup_{z√{2-z²}≤z'≤2z-z²}÷{h(z')+h(z²)}{2h(z)}.
			\label{sup:sbdmc}\]
		It is not known yet if there is an analytic form for such $h$, but
		numerical computation in \cite[Theorem~2]{MHU16} suggests that $2^{-1/4.717}$
		is achievable by some spline $h$, and I will stick to this number.
		
		Note one:
		By a compactness argument, \cref{sup:sbdmc} is strictly less than $1$.
		As long as it is less than $1$, the arguments in the remainder
		of this section apply with $2^{-1/4.717}$ replaced by the weaker supremum.
		Therefore, readers should not worry about
		whether $2^{-1/4.714}$ is mathematically sound.
		
		Note two:
		Despite of potential rounding errors, there is another reason
		why I think $2^{-1/4.714}$ is not the final value.
		Recall that we took the supremum over $z√{2-z²}≤z'≤2z-z²$;
		this is a pessimistic estimate and chances are that
		we will have tighter inequalities to bound $Z(WＷ1)$.
	\end{proof}
	
	Now, let us start deriving the en23 behavior of SBDMCs.
	
	\begin{lem}[From eigen to en23]\label{lem:Z-en23}
		Fix $ϱ≔1/4.714$.
		Assume \cref{thm:bdmc-mu}.
		Then
		\[𝘗｛𝘡_n<e^{-n^{2/3}}｝>I(W)-2^{-ϱn+o(n)}.\tagcopy{ine:Z-en23}\]
	\end{lem}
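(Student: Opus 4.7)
The plan is to execute the four-step scheme outlined in the BEC toy example, but now allowing the inequality $Z(W^{(1)}) \geq Z(W)\sqrt{2-Z(W)^2}$ from \cref{lem:squares} to play the role of the BEC equality, which is exactly what \cref{thm:bdmc-mu} was tailored to absorb.

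First I would iterate the eigen inequality. Conditioning on $\mathsf{W}_n$ and using the tower property, $\mathsf{E}[h(\mathsf{Z}_{n+1}) \mid \mathsf{W}_n] = \tfrac12 h(Z(\mathsf{W}_n^{(1)})) + \tfrac12 h(Z(\mathsf{W}_n^{(2)})) \leq 2^{-\varrho} h(\mathsf{Z}_n)$ by \cref{thm:bdmc-mu}, so $\mathsf{E}[h(\mathsf{Z}_n)] \leq h(\mathsf{Z}_0) \cdot 2^{-\varrho n}$. Next, since $h$ is concave with $h(0)=h(1)=0$, it is bounded below on the middle interval $[e^{-n^{2/3}},\,1-e^{-n^{2/3}}]$ by something of order $h(e^{-n^{2/3}})$, which is at least $e^{-n^{2/3}}$ times a constant (using concavity and $h>0$ on the interior). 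Markov's inequality then yields
\[
	\mathsf{P}\{e^{-n^{2/3}} \leq \mathsf{Z}_n \leq 1-e^{-n^{2/3}}\} \leq \frac{\mathsf{E}[h(\mathsf{Z}_n)]}{h(e^{-n^{2/3}})} \leq 2^{-\varrho n + o(n)},
\]
the $o(n)$ swallowing the $n^{2/3}$ loss.

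In parallel I would establish that $\mathsf{P}\{\mathsf{Z}_n \to 0\} = I(W)$ by the standard convergence nonsense: $h(\mathsf{Z}_n) \to 0$ almost surely (since $\mathsf{E}[h(\mathsf{Z}_n)]$ is summable), so $\mathsf{Z}_n$ cannot accumulate in the interior; being a bounded supermartingale it converges a.s.\ to some $\mathsf{Z}_\infty \in \{0,1\}$. The common-fate property from \cref{lem:ZvsH} transfers this dichotomy to $\{\mathsf{H}_n\}$, and the martingale property of \cref{pro:martin} yields $\mathsf{P}\{\mathsf{Z}_\infty = 1\} = \mathsf{E}[\mathsf{H}_\infty] = H(W)$, hence $\mathsf{P}\{\mathsf{Z}_n \to 0\} = 1-H(W) = I(W)$.

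The final step, and the main obstacle, is to glue these two facts into the en23 bound. Consider the bad event $\mathsf{B}_n \coloneqq \{\mathsf{Z}_n \geq e^{-n^{2/3}}\} \cap \{\mathsf{Z}_m \to 0\}$. On $\mathsf{B}_n$, the trajectory starts at $n$ in the middle-or-high region and ends at $0$, so by \cref{lem:squares}---whose step sizes forbid an instantaneous drop from $\geq e^{-m^{2/3}}$ into the tiny regime $<e^{-(m+1)^{2/3}}$ (squaring $x$ near $e^{-m^{2/3}}$ lands in the middle window of the next scale)---there must exist some $m \geq n$ at which $\mathsf{Z}_m$ visits the middle interval $[e^{-m^{2/3}},\,1-e^{-m^{2/3}}]$. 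A union bound over $m \geq n$ of the Markov estimate above gives $\mathsf{P}(\mathsf{B}_n) \leq \sum_{m \geq n} 2^{-\varrho m + o(m)} = 2^{-\varrho n + o(n)}$, and subtracting this from $\mathsf{P}\{\mathsf{Z}_m \to 0\} = I(W)$ yields the claimed bound. The delicate point to verify is precisely the "cannot skip the middle" claim---i.e., that the downward jumps of $\{\mathsf{Z}_n\}$ are small enough relative to the scale $e^{-m^{2/3}}$ to force a visit---which is where the $2/3$ exponent, the $Z^2$ evolution at $(2)$-steps, and the $n^{2/3}$ choice are synchronized.
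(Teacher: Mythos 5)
Your proposal is correct and follows essentially the same route as the paper's own proof: telescope the eigen inequality to get $\mathsf{E}[h(\mathsf{Z}_n)]\le 2^{-\varrho n+o(n)}$, apply Markov on the middle interval, obtain $\mathsf{P}\{\mathsf{Z}_n\to 0\}=I(W)$ from the supermartingale/martingale convergence argument, and union-bound the bad event over forced visits to the middle. One small remark: the forced visit only needs to rule out a downward skip from the top region $\mathsf{Z}_n\ge 1-e^{-n^{2/3}}$ past the middle, which follows from $\mathsf{Z}_{m+1}\ge \mathsf{Z}_m^2$; your parenthetical about squaring at the lower edge $e^{-m^{2/3}}$ is not quite right (indeed $e^{-2m^{2/3}}<e^{-(m+1)^{2/3}}$), but it is also not needed, since a trajectory already at the lower edge is itself in the middle window at that time.
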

	
	\begin{proof}
		The proof was sketched when we walked through the toy example.
		First, \cref{thm:bdmc-mu} yields that $𝘌[h(𝘡_{n+1})]≤𝘌[h(𝘡_n)2^{-ϱ}]$.
		Telescoping, we obtain that $𝘌[h(𝘡_n)]≤𝘡₀2^{-ϱn}<2^{-ϱn+o(n)}$.
		By Markov's inequality, we see
		\[𝘗｛e^{-n^{2/3}}≤𝘡_n≤1-e^{-n^{2/3}}｝
			≤÷{𝘌[h(𝘡_n)]}{h(\exp(-n^{2/3}))}<2^{-ϱn+o(n)}.\label{ine:bdmc-expel}\]
		
		Next, we recall why $𝘗\{𝘡_n→0\}=I(W)$:
		By that $\{h(𝘡_n)\}$ is a bounded supermartingale and decays by a constant
		factor every time $n$ increases, it converges to $0$ almost surely.
		Since $h$ is concave, $h(0)$ and $h(1)$ are
		the only places that evaluate to $0$, which means that
		$𝘡_n$ is getting closer and closer to either $0$ or $1$.
		By \cref{lem:squares}, or that $\{𝘡_n\}$ is a supermartingale,
		it cannot jump from the neighborhood of $0$ to the neighborhood of $1$,
		so each realization of $\{𝘡_n\}$ must choose
		either $0$ or $1$ and converge to it.
		By \cref{lem:ZvsH}, $\{𝘏_n\}$ must converge,
		and is converging to the same limit $\{𝘡_n\}$ does.
		But as $\{𝘏_n\}$ is a bounded martingale,
		we know that $𝘌[\lim_{n→∞}𝘏_n]=\lim_{n→∞}𝘌[𝘏_n]=𝘏₀=H(W)$.
		Hence $H(W)$ is the probability that $\{𝘏_n\}$ and $\{𝘡_n\}$ converge to $1$.
		Its complement is that $\{𝘏_n\}$ and $\{𝘡_n\}$
		converge to $0$ with probability $1-H(W)=I(W)$.
		
		Lastly, I explain why $𝘗\{𝘡_n<\exp(-n^{2/3})\}>𝘗\{𝘡_n→0\}-2^{-ϱn+o(n)}$:
		In general, we would like to believe that if a realization of $\{𝘡_n\}$
		converges to $0$, its prefix (the first few terms) would be rather small.
		But there are exceptions:
		Let $𝘉_n$ be the exceptional event $\{𝘡_m→0$ but $𝘡_n≥\exp(-n^{2/3})\}$.
		We would like to estimate $𝘗(𝘉_n)$.
		To do so, realize that if $𝘡_n≥\exp(-n^{2/3})$, then either
		\begin{itemize}
			\item	$\exp(-n^{2/3})≤𝘡_n≤1-\exp(-n^{2/3})$, or
			\item	$1-\exp(-n^{2/3})≤𝘡_n$ and $𝘡_m$ will visit the closed interval	\\
					$[\exp(-m^{2/3}),\exp(-m^{2/3})]$ for some later $m>n$.
		\end{itemize}
		Either case, $𝘡_n$ or its descendants will step in
		the left-hand side of \cref{ine:bdmc-expel}.
		Sum \cref{ine:bdmc-expel} over $m≥n$ and apply the union bound over $m≥n$;
		we get an upper bound $𝘗(𝘉_n)<∑_{m≥n}2^{-ϱm+o(m)}<2^{-ϱn+o(n)}$.
		Consequently, $𝘗\{𝘡_n<\exp(-n^{2/3})\}≥𝘗\{𝘡_n→0\}-𝘗(𝘉_n)>I(W)-2^{-ϱn+o(n)}$.
		That closes the proof.
	\end{proof}
	
	So far I have proved $𝘗\{𝘡_n<\exp(-n^{2/3})\}>I(W)-2^{-n/4.714+o(n)}$.
	In the next section, I will prove the same inequality
	with $\exp\(-e^{n^{1/3}}\)$ in place of $\exp(-n^{2/3})$.
	After that, I will prove the final goal---$𝘗\{𝘡_n<\exp(-ℓ^{πn})\}>I(W)-2^{-ρn+o(n)}$
	for some pairs $(π,ρ)$ lying in the shaded area in \cref{fig:lol-bdmc}.

\section{Stepping Stone Regime}

	In this section, we will verify the \emph{een13 behavior}
	$𝘗｛𝘡_n<\exp\(-e^{n^{1/3}}\)｝>I(W)-2^{-n/4.714+o(n)}$
	on top of the en23 behavior proved in the last section.
	The idea behind the proof is to keep track of how many $𝘡_m$ are
	moderately small, i.e., $𝘡_m<\exp(-m^{2/3})$ and how many descendants $𝘡_n$
	thereof become even smaller, i.e., $𝘡_n<\exp\(-e^{n^{1/3}}\)$ for some $n>m$.
	
	To understand the idea better, pretend that we have a $𝘡_n$
	that is moderately small---about $\exp(-n^{2/3})$ small.
	Then the punchline here is that squaring ($𝘡_n²$)
	will scale $𝘡_n$ down rapidly, while doubling ($2𝘡_n-𝘡_n²$)
	barely does anything to the order of magnitude of $𝘡_n$.
	So the problem boils down to counting how many times
	a trajectory of $\{𝘡_n\}$ undergoes the squaring branches.
	This number obeys a binomial distribution whose limiting behavior is well-known.
	
	Before the actual proof, let me walk through
	a tentative strategy to demonstrate what could go wrong.
	Let $m<n$ be two large numbers, then \cref{lem:Z-en23} yields
	\[𝘗｛𝘡_m<e^{-m^{2/3}}｝>I(W)-2^{-ϱm+o(m)}.\label{ine:try-em23}\]
	In order to end up with $𝘡_n≈\exp\(-e^{n^{1/3}}\)$ from $𝘡_m≈\exp(-m^{2/3})$, it
	requires, among the remaining $n-m$ Bernoulli trials, $n^{1/3}$ squaring branches.
	By Hoeffding's inequality, it would not meet the requirement with probability
	\[\exp（-Ω（÷{n-m}2-n^{1/3}））.\label{ine:hoeffding}\]
	Now we see the dilemma:
	If $m$ is too small compared to $n$, i.e., $m<n-Ω(n)$,
	the right-hand side of \cref{ine:try-em23} is $I(W)-2^{-ϱn+Ω(n)}$,
	which is too far away from the expected code rate $I(W)-2^{-ϱn+o(n)}$.
	If, otherwise, $m$ is comparable to $n$, i.e., $m=n-o(n)$,
	the right-hand side of \cref{ine:hoeffding} is $\exp(-o(n))$,
	which exceeds the expected gap to capacity $2^{-ϱn+o(n)}$.
	
	The preceding hand-waving argument demonstrates that
	no $m$ can settle the argument down once and for all.
	So the second punchline is to use multiple $m$'s.
	In my case, I choose $m=√n,2√n…n-√n$ for some perfect square $n$
	to apply Hoeffding's inequality $√n$ times.
	There are flexibilities in choosing $m$'s;
	for instance, when $n$ is not a perfect square, using $m=⌈√n⌉,2⌈√n⌉…⌊n/⌈√n⌉⌋⌈√n⌉$
	would not alter the proof up to some little-$o$ terms.
	So let us presume that $n$ is always a perfect square.
	
	That could be the end of the proof if it were not for the falling of the assumption:
	For an $m$, if the first few branches after $𝘡_m$ are doubling it,
	$𝘡_m$ will soon become so large that the $2$ in $2𝘡_m$ is not negligible.
	For this concern, Hoeffding's inequality does not help---%
	it does not control whether the $n^{1/3}$ squaring branches
	take place within the last few branches or spread out evenly.
	To resolve that, we need to keep an eye on the entire trajectory $𝘡_m,𝘡_{m+1}…𝘡_n$
	to make sure that it stays in the range where doubling is negligible.
	
	The actual proof provided below aims to mimic the tentative argument for multiple
	$m$'s at once while resolving the issue that doubling too much could break things.
	I will keep monitoring two conditions---whether a trajectory of $𝘡_n$
	undergoes sufficiently many squaring branches and whether
	that trajectory of $𝘡_n$ stays low enough such that doubling is negligible.
	
	\begin{lem}[From en23 to een13]\label{lem:Z-een13}
		Given \cref{lem:Z-en23}, that is, given
		\[𝘗｛𝘡_n<e^{-n^{2/3}}｝>I(W)-2^{-ϱn+o(n)},\]
		we have
		\[𝘗｛𝘡_n<\exp\(-e^{n^{1/3}}\)｝>I(W)-2^{-ϱn+o(n)}.\label{ine:Z-een13}\]
	\end{lem}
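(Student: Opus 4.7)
The plan is to amplify \cref{lem:Z-en23} by exploiting the first punchline highlighted above: when $𝘡_n$ is already at the moderate scale $e^{-n^{2/3}}$, the squaring branch $𝘞_n↦𝘞_nＷ2$ halves $𝘡$ geometrically in the log-log sense (since $Z(WＷ2) = Z(W)^2$), while the doubling branch $𝘞_n↦𝘞_nＷ1$ only adds a bounded additive constant ($\log 2$) to $\log(1/𝘡_n)$ (via $Z(WＷ1) \leq 2Z(W) - Z(W)^2 \leq 2 Z(W)$). The trajectory from a moderately small $𝘡_m$ at time $m$ to a doubly-exponentially small $𝘡_n$ at time $n$ therefore requires only on the order of $n^{1/3}$ squarings in the $n-m$ intervening flips, provided that $𝘡_j$ stays in the safe small-$𝘡$ regime throughout.

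Partition $[0,n]$ into $\sqrt n$ consecutive windows of length $\sqrt n$ at the checkpoints $m_k = k\sqrt n$ (assuming $n$ is a perfect square; the general case follows by padding). I use these checkpoints in two complementary ways. First, \cref{lem:Z-en23} applied at every late checkpoint $m_k$ with $k \geq \sqrt n - o(\sqrt n)$ yields $𝘡_{m_k} < \exp(-m_k^{2/3})$ with probability at least $I(W) - 2^{-\rho n + o(n)}$, since $m_k = n - o(n)$. Second, on every window $[m_k, m_{k+1}]$ the $\sqrt n$ fair coin flips $𝘑_{m_k+1}, \ldots, 𝘑_{m_{k+1}}$ are mutually independent across windows, and Hoeffding's inequality gives concentration of the squaring count around $\sqrt n / 2$.

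The analysis then combines these two ingredients. Conditional on an anchoring event of the form $E = \{𝘡_{m_{k_0}} < \exp(-m_{k_0}^{2/3})\}$ at a fixed late checkpoint $k_0$, I track the trajectory through the remaining windows, monitoring within each window both (i) a Hoeffding-type squaring-count surplus and (ii) a stability condition such as $𝘡_j < 1/2$ that keeps the branch inequality $Z(WＷ1) \leq 2 Z(W) - Z(W)^2$ in its harmless-doubling regime on the $\log$-scale. A short recursion then shows that a single window with even a constant-fraction surplus of squarings multiplies $\log(1/𝘡_{m_k})$ by a factor exponential in $\sqrt n$, which after one or two windows vastly overshoots the target $e^{n^{1/3}}$ --- while the additive doubling losses from the stability-monitored doublings contribute at most $O(n \log 2)$, negligible against $\log(1/𝘡_{m_{k_0}}) = \Omega(n^{2/3})$.

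The main obstacle, flagged in the author's dilemma preamble, is that a single-window analysis forces a trade-off: a late anchor $m$ gives sharp en23 probability but a short post-anchor window with only $\exp(-\Omega(\sqrt{n-m}))$ Hoeffding concentration, which fails to meet the $2^{-\rho n + o(n)}$ budget. The resolution is the author's ``multiple $m$'s'' trick: by exploiting the stochastic independence of the disjoint windows, the failure event requires \emph{all} of several late windows to simultaneously lack enough squarings, and these failure probabilities multiply rather than union-bound. Distributing the $\sqrt n$ Hoeffding applications along the checkpoints brings the total Hoeffding slack down to $\exp(-\Omega(n))$, which is absorbed into $2^{-\rho n + o(n)}$. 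Together with the en23 anchor and the stability monitoring, this yields the een13 bound claimed in the statement.
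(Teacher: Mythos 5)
You have the right ingredients (checkpoints every $\sqrt n$, a per-window large-deviation bound on the number of squarings, a stability condition, the en23 anchor), and you state the author's dilemma correctly, but your proposed resolution does not escape it. With a \emph{single} anchor at a fixed checkpoint $m_{k_0}$, the two requirements are incompatible: for the event ``every remaining window lacks a squaring surplus'' to have probability $2^{-\varrho n+o(n)}$ you must multiply about $W$ independent per-window failure probabilities, each only $\exp(-\Theta(\sqrt n))$, so you need $W=\Omega(\sqrt n)$ windows \emph{after} the anchor, i.e.\ $m_{k_0}\le n-\Omega(n)$; but then \cref{lem:Z-en23} at the anchor only gives failure probability $2^{-\varrho m_{k_0}+o(m_{k_0})}=2^{-\varrho n+\Omega(n)}$, which blows the budget. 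Conversely, your stated choice $m_{k_0}\ge n-o(n)$ leaves only $o(\sqrt n)$ windows, so the all-windows-fail probability is $\exp(-o(n))\gg 2^{-\varrho n}$. So ``distributing the $\sqrt n$ Hoeffding applications'' behind one fixed anchor cannot simultaneously give $\exp(-\Omega(n))$ window slack and a $2^{-\varrho n+o(n)}$ anchor cost. The paper's mechanism is genuinely different: the anchor is \emph{refreshed at every checkpoint}. Not-yet-successful trajectories are re-captured by $\{𝘡_m<e^{-m^{2/3}}\}$ at each $m=\sqrt n,2\sqrt n,\dots,n-\sqrt n$; success only requires a squaring surplus in the single window immediately following the capture checkpoint (afterwards the trivial bound $𝘡_{l+1}\le 2𝘡_l$ suffices, since $𝘡_{m+\sqrt n}<\exp(-m^{2/3}2^{\sqrt n/20})$ already survives multiplication by $2^n$); and the probability bookkeeping is a recursion on the uncaptured gap, $𝘨_m\le 𝘨_{m-\sqrt n}^+(2^{20}e^{-m^{2/3}}+2^{-\sqrt n/3.6})+2^{-\varrho m+o(m)}$, whose solution is $2^{-\varrho n+o(n)}$ because the per-window failure ratio is used to geometrically suppress the earlier, weaker en23 refresh terms while the budget-level cost is paid only at the last refresh. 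That re-capture/recursion step is the missing idea in your write-up.

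A secondary but real problem is your treatment of the doubling branch. The claim that the additive losses ``$O(n\log 2)$'' are negligible against $\log(1/𝘡_{m_{k_0}})=\Omega(n^{2/3})$ is false ($n\gg n^{2/3}$), and in your scheme the good window may arrive long after the anchor, so a trajectory kept merely below $1/2$ can erode its entire $n^{2/3}$ head start before any amplification happens. The paper sidesteps this by monitoring the much smaller threshold $2^{-20}$ --- with the excursion probability controlled once per checkpoint by Ville's inequality for the supermartingale $\{𝘡_l\}$, at cost $2^{20}e^{-m^{2/3}}$ --- under which $2z-z^2\le 2z\le z^{19/20}$, so a doubling step only multiplies $\log(1/𝘡)$ by $19/20$; then the net exponent over the one window is at least $6^{\sqrt n/20}(19/20)^{\sqrt n}>2^{\sqrt n/20}$ irrespective of how squarings and doublings are interleaved. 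You would need either this multiplicative bookkeeping or an explicit worst-case-ordering argument, plus an explicit maximal-inequality bound for the stability event, before your sketch could be made rigorous.
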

	
	\begin{proof}
		(Select constants.)
		Consider the stochastic process $\{19𝘑_n/20\}$.
		Since $2z-z²≤2z≤z^{19/20}$ whenever $z<2^{-20}$,
		we have $𝘡_{n+1}≤𝘡_n^{19𝘑_{n+1}/20}$ whenever $𝘡_n<2^{-20}$.
		Also notice that, numerically, $𝘌[𝘑_n^{-1}]6^{1/20}≈0.8202<2^{-1/3.6}$.
		
		(Define events.)
		Let $n$ be a perfect square.
		Let $𝘌₀⁰$ be the empty event.
		For every $m=√n,2√n…n-√n$, we define five series of events
		$𝘈_m$, $𝘉_m$, $𝘊_m$, $𝘌_m$, and $𝘌₀^m$ inductively as below:
		Let $𝘈_m$ be $\{𝘡_m<\exp(-m^{2/3})\}、𝘌₀^{m-√n}$.
		Let $𝘉_m$ be a subevent of $𝘈_m$ where $𝘡_l≥2^{-20}$ for some $l≥m$.
		Let $𝘊_m$ a subevent of $𝘈_m$ where
		\[𝘑_{m+1}𝘑_{m+2}\dotsm𝘑_{m+√n}≤6^{√n/20}.\label{ine:mgf-6n20}\]
		Let $𝘌_m$ be $𝘈_m、(𝘉_m∪𝘊_m)$.
		Let $𝘌₀^m$ be $𝘌₀^{m-√n}∪𝘌_m$.
		Let $𝘢_m$, $𝘣_m$, $𝘤_m$, $𝘦_m$, and $𝘦₀^m$
		be the probability measures of the corresponding capital letter events.
		Moreover, let $𝘨_m$ be $I(W)-𝘦₀^m$.
		
		(Bound $𝘣_m/𝘢_m$ from above.)
		Conditioning on $𝘈_m$, we want to estimate
		the probability that $𝘡_l≥2^{-20}$ for some $l≥m$.
		Recall that $\{𝘡_l\}$ is a supermartingale.
		Hence by Ville's inequality \cite[Exercise~4.8.2]{Durrett19},
		$𝘗\{𝘡_l≥2^{-20}$ for some $l≥m｜𝘈_m\}≤2^{20}𝘡_m<2^{20}\exp(-m^{2/3})$.
		This is an upper bound on $𝘣_m/𝘢_m$
		and will be summoned in \cref{ine:engine-een13}.
		
		(Bound $𝘤_m/𝘢_m$ from above.)
		We want to estimate how often \cref{ine:mgf-6n20} happens.
		That is the probability that $(𝘑_{m+1}𝘑_{m+2}\dotsm𝘑_{m+√n})^{-1}≥6^{-√n/20}$.
		This probability cannot exceed
		$𝘌[(𝘑_{m+1}𝘑_{m+2}\dotsm𝘑_{m+√n})^{-1}]6^{√n/20}
			=𝘌[𝘑₁^{-1}]^{√n}6^{√n/20}=(𝘌[𝘑₁^{-1}]6^{1/20})^{√n}<2^{-√n/3.6}$
		by Markov's inequality.
		This is an upper bound on $𝘤_m/𝘢_m$
		and will be summoned in \cref{ine:engine-een13}.
		
		(Bound $(𝘨_{m-√n}-𝘢_m)^+$ from above.)
		By definitions, $𝘨_{m-√n}-𝘢_m=I(W)-(𝘦₀^{m-√n}+𝘢_m)$.
		The definition of $𝘈_m$ forces it to be disjoint from $𝘌₀^{m-√n}$,
		thus $𝘦₀^{m-√n}+𝘢_m$ is the probability measure of $𝘌₀^{m-√n}∪𝘈_m$.
		This union event must contain the event
		$\{𝘡_m<\exp(-m^{2/3})\}$ by how $𝘈_m$ was defined.
		Recall the en23 behavior $𝘗\{𝘡_m<\exp(-m^{2/3})\}>I(W)-ℓ^{-ϱm+o(m)}$.
		Chaining all inequalities together, we deduce $𝘨_{m-√n}-𝘢_m<ℓ^{-ϱm+o(m)}$.
		Let $(𝘨_{m-√n}-𝘢_m)^+$ be $\max(0,𝘨_{m-√n}-𝘢_m)$
		so we can write $(𝘨_{m-√n}-𝘢_m)^+<ℓ^{-ϱm+o(m)}$.
		This upper bound will be summoned in \cref{ine:engine-een13}.
		
		(Bound $𝘦₀^{n-√n}$ from below.)
		We start rewriting $𝘨_m$ with $𝘨_m^+$ being $\max(0,𝘨_m)$:
		\begin{align*}
			𝘨_m
			&	=I(W)-𝘦₀^m=I(W)-(𝘦₀^{m-√n}+𝘦_m)=I(W)-𝘦₀^{m-√n}-𝘦_m	\\
			&	=𝘨_{m-√n}-𝘦_m=𝘨_{m-√n}（1-÷{𝘦_m}{𝘢_m}）+÷{𝘦_m}{𝘢_m}(𝘨_{m-√n}-𝘢_m)	\\
			&	≤𝘨_{m-√n}^+（1-÷{𝘦_m}{𝘢_m}）+÷{𝘦_m}{𝘢_m}(𝘨_{m-√n}-𝘢_m)^+	\\
			&	≤𝘨_{m-√n}^+（1-÷{𝘦_m}{𝘢_m}）+(𝘨_{m-√n}-𝘢_m)^+	\\
			&	≤𝘨_{m-√n}^+（÷{𝘣_m}{𝘢_m}+÷{𝘤_m}{𝘢_m}）+(𝘨_{m-√n}-𝘢_m)^+	\\
			&	<𝘨_{m-√n}^+（2^{20}e^{-m^{2/3}}+ℓ^{-√n/3.6}）+ℓ^{-ϱm+o(m)}.
				\label{ine:engine-een13}
		\end{align*}
		The first four equalities are by the definitions of $𝘨_m$ and $𝘌₀^m$.
		The next equality is simple algebra.
		The next two inequalities are by $0≤𝘦_m/𝘢_m≤1$.
		The next inequality is by the definition of $𝘌_m$.
		The last inequality summons upper bounds derived in the last three paragraphs.
		The last line contains two terms in the big parentheses;
		between them, $2^{-√n/3.6}$ dominates $2^{20}\exp(-m^{2/3})$
		once $m$ is greater than $O(n^{3/4})$.
		Subsequently, we obtain this recurrence relation:
		\[\cas{
			𝘨_{O(n^{3/4})}≤1,	\\
			𝘨_m≤2𝘨_{m-√n}^+ℓ^{-√n/3.6}+ℓ^{-ϱm+o(m)}.
		}\]
		Solve it (cf.\ the master theorem);
		we get that $𝘨_{n-√n}<ℓ^{-ϱn+o(n)}$.
		By the definition of $𝘨_{n-√n}$,
		we immediately get $𝘦₀^{n-√n}>I(W)-ℓ^{-ϱn+o(n)}$.
		
		(Analyze $𝘌₀^{n-√n}$.)
		We want to estimate $𝘏_n$ when $𝘌₀^{n-√n}$ happens.
		To be precise, for each $m=√n,2√n…n-√n$,
		we attempt to bound $𝘡_{m+√n}$ when $𝘌_m$ happens.
		Fix an $m$.
		When $𝘌_m$ happens, its superevent $𝘈_m$ happens,
		so we know that $𝘡_m<\exp(-m^{2/3})$.
		But $𝘉_m$ does not happen, so $𝘡_l<2^{-20}$ for all $l≥m$.
		This implies that $𝘡_{l+1}≤𝘡_l^{19𝘑_{l+1}/20}$ for those $l$.
		Telescope;
		$𝘡_{m+√n}$ is less than or equal to $𝘡_m$ raised to
		the power of $𝘑_{m+1}𝘑_{m+2}\dotsm𝘑_{m+√n}(19/20)^{√n}$.
		But $𝘊_m$ does not happen, so the product is greater than
		$6^{√n/20}(19/20)^{√n}=(6(19/20)^{20})^{√n/20}>2^{√n/20}$.
		Jointly we have $𝘡_{m+√n}≤𝘡_m^{2^{√n/20}}<\exp(-m^{2/3}2^{√n/20})$.
		Recall that $𝘡_{l+1}≤2𝘡_l$ for all $l≥m+√n$.
		Then telescope again;
		$𝘡_n≤2^{n-m-√n}𝘡_{m+√n}<2^n\exp(-m^{2/3}2^{√n/20})<\exp\(-e^{n^{1/3}}\)$
		provided that $n$ is sufficiently large.
		In other words, $𝘌₀^{n-√n}$ implies $𝘡_n<\exp\(-e^{n^{1/3}}\)$.
		
		(Summary.)
		Now we may conclude, for all perfect squares $n$, that
		$𝘗｛𝘡_n<\exp\(-e^{n^{1/3}}\)｝≥𝘗(𝘌₀^{n-√n})=𝘦₀^n>I(W)-ℓ^{-ϱn+o(n)}$.
		For non-squares, round $n$ down to the nearest square
		and rerun the whole argument above.
		We will get $𝘡_n<2^n\exp(-m^{2/3}2^{⌊√n⌋/20})$
		with probability  at least $I(W)-ℓ^{-ϱ⌊√n⌋²+o(n)}$,
		which still leads to $𝘗｛𝘡_n<\exp\(-e^{n^{1/3}}\)｝>I(W)-ℓ^{-ϱn+o(n)}$.
		And hence the proof of the een13 behavior,
		\cref{ine:Z-een13}, is sound for all $n$.
	\end{proof}
	
	This section is parallel to \cite[section~V]{LargeDeviations18},
	to \cite[appendix~C.C]{Hypotenuse19}, and to \cite[section~10.2]{GRY19}.
	Do not confuse this section with the next section.
	The subtlety is explained in \cite[section~III]{LargeDeviations18}.
	
	Now we know $𝘗｛𝘡_n<\exp\(-e^{n^{1/3}}\)｝>I(W)-2^{-ϱn+o(n)}$.
	We are ready to learn what $(π,ρ)$ pairs satisfy
	$𝘗｛𝘡_n<\exp\(-2^{πn}\)｝>I(W)-2^{-ρn+o(n)}$.

\section{Moderate Deviations Regime}

	I will build upon the een13 behavior and utilize a technique
	similar to before to answer the following main question:
	Knowing $𝘗\{𝘡_n<\exp(-2^{0.499n})\}>I(W)-1/999$ \cite{AT09}
	and $𝘗\{𝘡_n<1/999\}>I(W)-2^{-ϱn+o(n)}$ \cite{MHU16},
	can we find a interpolating result between these two results?
	This section, finally, offers an answer by characterizing the region
	of pairs $(π,ρ)$ that satisfy $𝘗｛𝘡_n<\exp\(-2^{πn}\)｝>I(W)-2^{-ρn+o(n)}$.
	
	Recall $ϱ≔1/4.714$.
	Let $h₂(p)≔-p㏒₂p-(1-p)㏒₂(1-p)$ be the binary entropy function.
	Let $𝒪⊆[0,1/2]×[0,ϱ]$ be an open region defined by the following criterion:
	for any $(π,ρ)∈𝒪$, the ray shooting from $(π,ρ)$ toward the opposite direction
	of $(0,ϱ)$ does not intersect the function graph of $1-h₂$.
	See \cref{fig:lol-bdmc};
	this criterion is equivalent to that $(π,ρ)$ lies to the left
	of the convex envelope of $(0,ϱ)$ and $1-h₂$.
	This criterion is also equivalent to
	\[1-h₂（÷{πn}{n-m}）>÷{ρn-ϱm}{n-m}\label{ine:ray-raw}\]
	for all $0<m<n$.
	The last criterion is what will be used in the proof.
	
	\begin{thm}[From een13 to e2pin]\label{thm:Z-e2pin}
		Fix a pair $(π,ρ)∈𝒪$.
		Given the conclusion of \cref{lem:Z-een13}, that is, given
		\[𝘗｛𝘡_n<\exp\(-e^{n^{1/3}}\)｝>I(W)-2^{-ϱn+o(n)},\tagcopy{ine:Z-een13}\]
		then
		\[𝘗｛𝘡_n<e^{-2^{πn}}｝>I(W)-2^{-ρn+o(n)}.\label{ine:Z-e2pin}\]
	\end{thm}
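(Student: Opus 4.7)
My plan is to mirror the architecture of \cref{lem:Z-een13}: use \cref{ine:Z-een13} as a seed at an intermediate time and then push $𝘡$ further down by counting squarings in the tail window.

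Setup: Fix a scale $λ∈(0,1)$ (to be tuned via \cref{ine:ray-raw}) and let $m=⌈λn⌉$.  Applying \cref{ine:Z-een13} at time $m$ gives $𝘗｛𝘡_m<\exp(-e^{m^{1/3}})｝>I(W)-2^{-ϱm+o(n)}$.  Let $K≔\#\{l∈(m,n]:𝘑_l=2\}$ count squarings in the tail.  From \cref{lem:squares}, squaring doubles $-㏑𝘡$ while doubling subtracts $㏑2$; telescoping in the worst-case ordering gives
\[-㏑𝘡_n≥2^K\(-㏑𝘡_m-(n-m)㏑2\)≥(1-o(1))·2^K·e^{m^{1/3}},\]
because $e^{m^{1/3}}$ dwarfs $(n-m)㏑2$ once $m=\poly(㏒n)$.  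Hence $K≥πn$ alone forces $𝘡_n<e^{-2^{πn}}$, and Chernoff on $K∼†Bin†(n-m,1/2)$ gives $𝘗\{K<πn\}≤2^{-(n-m)(1-h_2(πn/(n-m)))+o(n)}$ whenever $πn/(n-m)<1/2$.  A union bound combines the two error sources:
\[𝘗｛𝘡_n<e^{-2^{πn}}｝>I(W)-2^{-ϱm+o(n)}-2^{-(n-m)(1-h_2(πn/(n-m)))+o(n)}.\]

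To collapse both errors to $2^{-ρn+o(n)}$, I would read \cref{ine:ray-raw} as saying that the ray from $(0,ϱ)$ through $(π,ρ)$ stays strictly below $1-h_2$ on its entire continuation---geometrically certifying the existence of a $λ$ for which $ϱλ$ and $(1-λ)(1-h_2(π/(1-λ)))$ both strictly exceed $ρ$.

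The main obstacle I anticipate is precisely this last balancing.  A single-checkpoint bookkeeping as above only achieves the max-min region $\{(π,ρ):1-h_2(πϱ/(ϱ-ρ))≥ρϱ/(ϱ-ρ)\}$, which sits strictly inside $𝒪$.  To reach the full envelope, I expect to escalate to the multi-checkpoint scheme of \cref{lem:Z-een13}: place a grid $m=L,2L…n-L$, wrap each step in analogues of $𝘈_m,𝘉_m,𝘊_m,𝘌_m$, and iterate a recurrence $𝘨_m≤𝘨_{m-L}·(\text{shrinkage})+2^{-ρn+o(n)}$ for $𝘨_m≔I(W)-𝘦_0^m$, solving by the master theorem.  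Each round exploits the tinier $𝘡$ at the previous checkpoint to squeeze a stronger Chernoff exponent, and \cref{ine:ray-raw} should be exactly the input that keeps every per-round error at $2^{-ρn+o(n)}$.
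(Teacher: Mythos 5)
Your single-checkpoint computation is sound (the additive log-domain bookkeeping with the worst-case ordering is valid, and it even avoids the ``stay below $δ$'' event that the paper's multiplicative telescoping needs), and your diagnosis that it only captures the max--min region strictly inside $𝒪$ is correct. The genuine gap is in the escalation step, which is exactly where the content of \cref{thm:Z-e2pin}, as opposed to \cref{lem:Z-een13}, lives: the recurrence you propose, $𝘨_m≤𝘨_{m-L}·(\text{shrinkage})+2^{-ρn+o(n)}$ solved by the master theorem, cannot work in the moderate-deviations regime. In \cref{lem:Z-een13} the per-round conditional failure probability is uniformly tiny ($2^{-√n/3.6}$), so it serves as a shrinkage factor; here the corresponding quantity is the Chernoff failure governed by \cref{ine:ray-raw}, bounded only by $2^{ϱm-ρn}$, which exceeds $1$ at late checkpoints whenever $ρ<ϱ$. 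Indeed the failure there is real, not an artifact of the bound: a trajectory that first drops below the een13 threshold only at $m=n-√n$ has merely $√n≪πn$ steps left and has no chance of accumulating $πn$ squarings. So there is no uniform shrinkage and the master-theorem iteration does not close; asserting that the een13 machinery transfers skips the one new idea this theorem requires.

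That missing idea is a reweighting of the per-checkpoint failures, implemented by different bookkeeping. The paper carves $𝘈_m$ out of $𝘈₀^{m-√n}$ (the union of all \emph{previous} $𝘈$'s, not of the previous good events $𝘌₀^{m-√n}$), introduces the deficit $𝘧_m≔I(W)-𝘢₀^m$, which satisfies $𝘧_m^+<2^{-ϱm+o(m)}$ by the een13 behavior, and runs the recurrence on $𝘨_m-𝘧_m^+$ instead of on $𝘨_m$. The per-round increment is then $𝘧_{m-√n}^+（𝘣_m/𝘢_m+𝘤_m/𝘢_m）$, i.e.\ the (possibly huge) conditional Chernoff failure $2^{ϱm-ρn}$ is charged only against the mass that first becomes small at checkpoint $m$, which is at most $2^{-ϱ(m-√n)+o(m)}$; their product is $2^{-ρn+o(n)}$ (plus a doubly exponentially small Ville term), uniformly in $m$. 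Summing the $√n$ rounds---a plain Cesàro-type summation, not the master theorem---gives $𝘨_{n-√n}<2^{-ρn+o(n)}$ and hence \cref{ine:Z-e2pin}. Without this reweighting, or a substitute for it, your multi-checkpoint plan does not reach all of $𝒪$, so the proposal as written has a genuine gap at precisely that step.
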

	
	\begin{proof}
		(Select constants.)
		Since \cref{ine:ray-raw} holds, there exists a small constant $ε>0$ such that
		\[1-h₂（÷{πn}{n-m}+2ε）>÷{ρn-ϱm}{n-m}\label{ine:ray-gap}\]
		by the compactness argument.
		Fix this $ε$.
		There exists a small constant $δ>0$ such that
		$𝘡_{n+1}≤𝘡_n^{𝘑_{n+1}(1-ε)}$ whenever $𝘡_n<δ$.
		
		(Define events.)
		Let $n$ be a perfect square.
		Let $𝘈₀⁰$ and $𝘌₀⁰$ be the empty event.
		For every $m=√n,2√n…n-√n$, we define six series of events
		$𝘈_m$, $𝘈₀^m$, $𝘉_m$, $𝘊_m$, $𝘌_m$, and $𝘌₀^m$
		inductively as follows:
		Let $𝘈_m$ be $｛𝘡_m<\exp(-e^{m^{1/3}})｝、𝘈₀^{m-√n}$.
		Let $𝘈₀^m$ be $𝘈₀^{m-√n}∪𝘈_m$.
		Let $𝘉_m$ be a subevent of $𝘈_m$ where $𝘡_l≥δ$ for some $l≥m$.
		Let $𝘊_m$ a subevent of $𝘈_m$ where
		\[𝘑_{m+1}𝘑_{m+2}\dotsm𝘑_n≤2^{πn+2ε(n-m)}.\label{ine:mgf-2pin}\]
		Let $𝘌_m$ be $𝘈_m、(𝘉_m∪𝘊_m)$.
		Let $𝘌₀^m$ be $𝘌₀^{m-√n}∪𝘌_m$.
		Let $𝘢_m$, $𝘢₀^m$, $𝘣_m$, $𝘤_m$, $𝘦_m$, and $𝘦₀^m$
		be the probability measures of the corresponding capital letter events.
		Moreover, let $𝘧_m$ be $I(W)-𝘢₀^m$ and let $𝘨_m$ be $I(W)-𝘦₀^m$.
		
		(Bound $𝘣_m/𝘢_m$ from above.)
		Conditioning on $𝘈_m$, we want to estimate
		the probability that $𝘡_l≥δ$ for some $l≥m$.
		Recall that $𝘡_l$ is a supermartingale.
		Hence by Ville's inequality (\cite[Exercise~4.8.2]{Durrett19}),
		$𝘗\{𝘡_l≥δ$ for some $l≥m｜𝘈_m\}≤𝘡_m/δ<\exp\(-e^{m^{1/3}}\)/δ$.
		This is an upper bound on $𝘣_m/𝘢_m$
		and will be summoned in \cref{ine:engine-e2pin}.
		
		(Bound $𝘤_m/𝘢_m$ from above.)
		We want to estimate how often \cref{ine:mgf-2pin} happens.
		This is equivalent to asking how often do $n-m$
		fair coin tosses end up with $πn+2ε(n-m)$ heads.
		By the large deviations theory,
		this probability is less than $2$ to the power of
		\[-(n-m)（1-h₂（÷{πn}{n-m}+2ε））.\]
		By \cref{ine:ray-gap}, this exponent is less than $ϱm-ρn$.
		Thus, the probability is less than $2^{ϱm-ρn}$.
		This is an upper bound on $𝘤_m/𝘢_m$
		and will be summoned in \cref{ine:engine-e2pin}.
		
		(Bound $𝘧_m^+$ from above.)
		The definition of $𝘧_m$ reads $I(W)-𝘢₀^m$.
		Here $𝘢₀^m$ is the probability measure of $𝘈₀^m$,
		and $𝘈₀^m$ is a superevent of $𝘈_m$ by how the former is defined.
		Event $𝘈₀^m$ must contain $｛𝘡_m<\exp\(-e^{m^{1/3}}\)｝$
		by how $𝘈_m$ was defined.
		By the een13 behavior, $𝘗｛𝘡_m<\exp\(-e^{m^{1/3}}\)｝>I(W)-ℓ^{-ϱm+o(m)}$.
		Chaining all inequalities together, we infer that $𝘧_m<ℓ^{-ϱm+o(m)}$.
		Let $𝘧_m^+$ be $\max(0,𝘧_m)$ so
		we can write $𝘧_m^+<ℓ^{-ϱm+o(m)}$.
		This upper bound will be summoned in \cref{ine:engine-e2pin}.
		
		(Bound $𝘦₀^{n-√n}$ from below.)
		We start rewriting $𝘨_m-𝘧_m^+$ with
		$(𝘧_{m-√n}-𝘢_m)^+$ being $\max(0,𝘧_{m-√n}-𝘢_m)$:
		\begin{align*}
			𝘨_m-𝘧_m^+
			&	=I(W)-𝘦₀^m-(I(W)-𝘢₀^m)^+	\\
			&	=I(W)-𝘦₀^{m-√n}-𝘦_m-(I(W)-𝘢₀^{m-√n}-𝘢_m)^+	\\
			&	=𝘨_{m-√n}-𝘦_m-(𝘧_{m-√n}-𝘢_m)^+	\\
			&	≤𝘨_{m-√n}-𝘦_m-÷{𝘦_m}{𝘢_m}(𝘧_{m-√n}-𝘢_m)^+	\\
			&	≤𝘨_{m-√n}-𝘦_m-÷{𝘦_m}{𝘢_m}(𝘧_{m-√n}^+-𝘢_m)	\\
			&	=𝘨_{m-√n}-𝘧_{m-√n}^++𝘧_{m-√n}^+（1-÷{𝘦_m}{𝘢_m}）	\\
			&	≤𝘨_{m-√n}-𝘧_{m-√n}^++𝘧_{m-√n}^+（÷{𝘣_m}{𝘢_m}+÷{𝘤_m}{𝘢_m}）	\\
			&	<𝘨_{m-√n}-𝘧_{m-√n}^++ℓ^{-ϱ(m-√n)+o(m-√n)}
				（\exp\(-e^{m^{1/3}}\)/δ+2^{ϱm-ρn}）.\label{ine:engine-e2pin}
		\end{align*}
		The first three equalities are by the definitions of $𝘨_m$ and $𝘧_m$.
		The next inequality is by $0≤𝘦_m/𝘢_m≤1$.
		The next inequality is by $\max(0,f-a)=\max(a,f)-a≥\max(0,f)-a$.
		The next equality is simple algebra.
		The next inequality is by the definition of $𝘌_m$.
		The last inequality summons upper bounds derived in the last three paragraphs.
		Now the last line contains two terms in the big parentheses;
		between them, $2^{ϱm-ρn}$ dominates $\exp\(-e^{m^{1/3}}\)/δ$ once $n→∞$.
		Subsequently, we obtain this recurrence relation
		\[\cas{
			𝘨₀-𝘧₀^+=0;	\\
			𝘨_m-𝘧_m^+≤𝘨_{m-√n}-𝘧_{m-√n}^++2ℓ^{-ρn+o(n)}.
		}\]
		Solve it (cf.\ the Cesàro summation);
		we get that $𝘨_{n-√n}-𝘧_{n-√n}^+<ℓ^{-ρn+o(n)}$.
		Once again we summon $𝘧_{n-√n}^+<ℓ^{-ϱ(n-√n)+o(n-√n)}<ℓ^{-ϱn+o(n)}$;
		therefore $𝘨_{n-√n}<ℓ^{-ρn+o(n)}$.
		Based on the definition of $𝘨_{n-√n}$
		we immediately get $𝘦₀^{n-√n}>I(W)-ℓ^{-ρn+o(n)}$.
		
		(Analyze $𝘌₀^{n-√n}$.)
		We want to estimate $𝘡_n$ when $𝘌₀^{n-√n}$ happens.
		To be precise, for each $m=√n,2√n…n-√n$,
		we attempt to bound $𝘡_n$ when $𝘌_m$ happens .
		Fix an $m$.
		When $𝘌_m$ happens, its superevent $𝘈_m$ happens,
		so we know that $𝘡_m<\exp\(-e^{m^{1/3}}\)$.
		But $𝘉_m$ does not happen, so $𝘡_l<δ$ for all $l≥m$.
		This implies $𝘡_{l+1}≤𝘡_l^{𝘑_{l+1}(1-ε)}$ for those $l$.
		Telescope;
		$𝘡_n$ is less than or equal to $𝘡_m$ raised to
		the power of $𝘑_{m+1}𝘑_{m+2}\dotsm𝘑_n(1-ε)^{n-m}$.
		But $𝘊_m$ does not happen, so the product
		is greater than $2^{πn+2ε(n-m)}(1-ε)^{n-m}$,
		which is greater than $2^{πn}$ granted that $ε<1/2$.
		Jointly we have
		$𝘡_n≤𝘡_m^{2^{πn}}<\exp\(-e^{m^{1/3}}2^{πn}\)<\exp(-2^{πn})$.
		In other words, $𝘌₀^{n-√n}$ implies $𝘡_n<\exp(-ℓ^{πn})$.
		
		(Summary.)
		Now we may conclude, for all perfect squares $n$, that
		$𝘗\{𝘡_n<\exp(-2^{πn})\}≥𝘗(𝘌₀^{n-√n})=𝘦₀^n>I(W)-2^{-ρn+o(n)}$.
		For non-squares, round $n$ down to the nearest square
		and rerun the whole argument above.
		We will get $𝘡_n<2^n\exp\(-e^{m^{1/3}}2^{π⌊√n⌋²}\)$
		with probability at least $I(W)-ℓ^{-ρ⌊√n⌋²+o(n)}$,
		which still leads to $𝘗\{𝘡_n<\exp(-2^{πn})\}>I(W)-2^{-ρn+o(n)}$.
		And hence the proof of the moderate deviations behavior,
		\cref{ine:Z-e2pin}, is sound for all $n$.
	\end{proof}
	
	This section is parallel
	to \cite[section~V]{ModerateDeviations18}, to \cite[section~VI]{LargeDeviations18},
	to \cite[appendix~C.D]{Hypotenuse19}, and to \cite[section~10.3]{GRY19}.
	Do not confuse this section with the previous.
	The subtlety is explained in \cite[section~III]{LargeDeviations18}.

\section{Chapter Summary}

	In this chapter, we defined SBDMC and set a goal---that we
	want to construct error correcting codes and characterize
	their block error probabilities $Ｐ$ and codes rates $R$.
	We succeed in proving that $𝘡_n$ is such that
	\[𝘗\{𝘡_n<\exp(-2^{πn})\}>I(W)-2^{-ρn+o(n)},\]
	which implies that there are codes with
	$Ｐ<2^n\exp(-2^{πn})$ and $I(W)-R<2^{-ρn+o(n)}$.
	By a topological argument that fluctuates $π$,
	we get $Ｐ<\exp(-2^{πn})$ for $n$ very large.
	By fluctuating $ρ$, similarly, we get $I(W)-R<2^{-ρn}$
	granted that $n$ is astronomically large.
	
	\begin{cor}[Good code for SBDMC]
		Over SBDMCs, polar coding as constructed by Arıkan enjoys, for any $(π,ρ)∈𝒪$,
		block error probability $\exp(-N^π)$ and code rate $I(W)-N^{-ρ}$ for large $N$.
	\end{cor}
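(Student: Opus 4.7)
The plan is to assemble three ingredients already in hand: \cref{thm:Z-e2pin} for the distribution of $𝘡_n$, \cref{lem:R=PZ} for the code rate when we threshold-collect the synthetic channels, and \cref{ine:P<sum} for the block error probability. The translation between the depth parameter $n$ and the block length $N=2^n$ is where all the scaling identifications happen: $2^{πn}=N^π$ and $2^{-ρn}=N^{-ρ}$.

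The openness of $𝒪$ is the key structural feature I would exploit. Starting from $(π,ρ)∈𝒪$, I would pick slightly inflated $π'>π$ and $ρ'>ρ$ such that $(π',ρ')$ still lies in $𝒪$. Applying \cref{thm:Z-e2pin} at the inflated pair yields $𝘗｛𝘡_n<\exp(-2^{π'n})｝>I(W)-2^{-ρ'n+o(n)}$. Now declare the code whose $𝒥$ collects exactly those depth-$n$ synthetic channels with $Z<θ≔\exp(-2^{π'n})$. By \cref{lem:R=PZ} the resulting code rate is $R=𝘗\{𝘡_n<θ\}>I(W)-2^{-ρ'n+o(n)}$, and by \cref{ine:P<sum} the block error probability is bounded by $Ｐ≤Nθ=2^n\exp(-2^{π'n})$.

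It remains to absorb the polynomial prefactor $2^n$ into the exponential and absorb the $o(n)$ into the polynomial $N^{-ρ}$. For the error probability, since $π'>π$, the ratio $2^n\exp(-2^{π'n})/\exp(-2^{πn})=2^n\exp(-2^{π'n}+2^{πn})$ decays to $0$ as $n→∞$; hence for $n$ large one obtains $Ｐ<\exp(-2^{πn})=\exp(-N^π)$. For the gap to capacity, since $ρ'>ρ$ and $o(n)=o(\log N)$, we have $2^{-ρ'n+o(n)}=N^{-ρ'+o(1)}<N^{-ρ}$ eventually, so $I(W)-R<N^{-ρ}$.

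There is no genuine obstacle here: the bulk of the work was done in \cref{thm:Z-e2pin} (and upstream in \cref{lem:Z-en23,lem:Z-een13}). The only mildly delicate point is the two-step inflation trick on $(π,ρ)$, which is legitimate precisely because $𝒪$ is declared open and the ray characterization \cref{ine:ray-raw} is strict; this bookkeeping is what converts the $o(n)$-loss statement at the process level into the clean polynomial/exponential thresholds claimed for the finite-length code.
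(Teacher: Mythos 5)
Your proposal is correct and matches the paper's own argument: the paper likewise feeds \cref{thm:Z-e2pin} into \cref{lem:R=PZ} and \cref{ine:P<sum} to get $Ｐ<2^n\exp(-2^{πn})$ and $I(W)-R<2^{-ρn+o(n)}$, and then removes the $2^n$ prefactor and the $o(n)$ loss by the same "fluctuation" (inflation) of $π$ and $ρ$ inside the open region $𝒪$ that you describe. No gap; your write-up just spells out the topological bookkeeping the paper leaves as a remark.
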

	
	Over BEC, we have a better estimate of $ϱ=1/3.627$ \cite{HAU14}.
	In this case, the region $𝒪$ is with $(0,1/3.627)$ in place of $(0,1/4.714)$.
	See also \cref{fig:lol-bec}.
	
	\begin{cor}[Good code for BEC]
		Over BECs, polar coding as constructed by Arıkan enjoys, for any $(π,ρ)$
		lying to the left of the convex envelope of $(0,1/3.627)$ and $1-h₂$,
		block error probability $\exp(-N^π)$ and code rate $I(W)-N^{-ρ}$ for large $N$.
	\end{cor}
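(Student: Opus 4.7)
The plan is to re-run the ladder of this chapter---eigen behavior, en23, een13, e2pin, and the code-level translation---exactly as for general SBDMCs, but substituting the sharper BEC scaling constant $ϱ=1/3.627$ of \cite{HAU14} for the generic $ϱ=1/4.714$ of \cref{thm:bdmc-mu}. The admissible region $𝒪$ then widens to the one described in the statement.

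The only step that is genuinely channel-specific is the eigen estimate. On a BEC, \cref{ine:2Z-Z^2} holds with equality, and BECs are closed under the transformation $W\mapsto(WＷ1,WＷ2)$; consequently the double supremum in the proof of \cref{thm:bdmc-mu} collapses to the one-variable problem
\[
\sup_{0<z<1} ÷{h(2z-z²)+h(z²)}{2h(z)}.
\]
The toy choice $h(z)=√{z(1-z)}$ already gives the eigenvalue $√3/2$, as computed in \cref{sup:bec}, and Hassani--Alishahi--Urbanke \cite{HAU14} refine this to a concave $h:[0,1]→[0,1]$ with $h(0)=h(1)=0$ for which the supremum is strictly less than $2^{-1/3.627}$.

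Equipped with this sharper eigen pair, \cref{lem:Z-en23,lem:Z-een13} apply verbatim and yield $𝘗\{𝘡_n<\exp(-e^{n^{1/3}})\}>I(W)-2^{-ϱn+o(n)}$ with $ϱ=1/3.627$. In \cref{thm:Z-e2pin}, the constant $ϱ$ enters only through the ray criterion \cref{ine:ray-raw} that carves out $𝒪$; moving the apex from $(0,1/4.714)$ to $(0,1/3.627)$ widens $𝒪$ to exactly the region described in the corollary. The conclusion is then $𝘗\{𝘡_n<\exp(-2^{πn})\}>I(W)-2^{-ρn+o(n)}$ for every such $(π,ρ)$.

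Finally I would translate this bound into a code as in the previous corollary: with $N=2^n$ and threshold $θ≔\exp(-2^{πn})$, \cref{lem:R=PZ} gives $R=𝘗\{𝘡_n<θ\}>I(W)-N^{-ρ}$ after a negligible shrinkage of $ρ$ to absorb the $o(n)$, while \cref{ine:P<sum} gives $Ｐ<Nθ<\exp(-N^π)$ after an analogous shrinkage of $π$. The main obstacle is confined to the BEC eigen estimate in the second paragraph, and \cite{HAU14} already discharges it; every downstream step is a mechanical re-use of the general-SBDMC argument, because those proofs touch the channel only through \cref{lem:squares}---which becomes an identity on BECs---and through the eigenvalue itself.
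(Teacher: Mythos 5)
Your proposal matches the paper's own treatment: the paper simply cites \cite{HAU14} for the sharper BEC value $ϱ=1/3.627$, notes that the region $𝒪$ is then taken with $(0,1/3.627)$ in place of $(0,1/4.714)$, and lets the en23--een13--e2pin chain and the code-level translation (via \cref{lem:R=PZ,ine:P<sum} and the topological fluctuation of $π,ρ$) go through unchanged, exactly as you argue. Your extra observation that the BEC eigen problem collapses to a one-variable supremum because \cref{ine:2Z-Z^2} is an equality and BECs are closed under the transformation is consistent with the paper's toy example \cref{sup:bec}, so this is the same approach, correctly executed.
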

	
	\begin{figure}
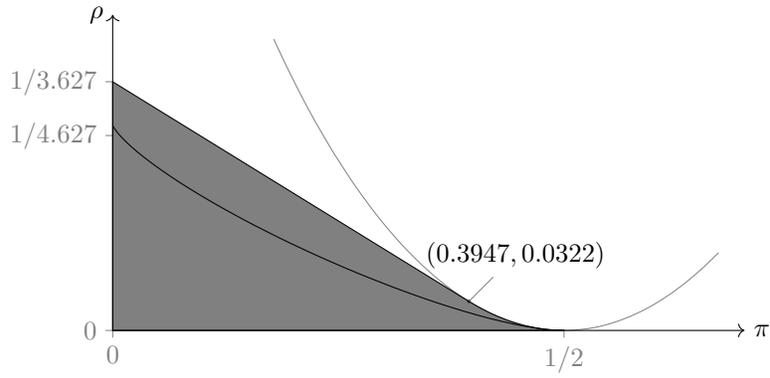

		\tikz[scale=12]{
			\draw[gray,very thin]
				(0,1/3.627)pic{y-tick=$1/3.627$}(0,1/4.627)pic{y-tick=$1/4.627$}
				(0,0)pic{y-tick=$0$}pic{x-tick=$0$}(1/2,0)pic{x-tick=$1/2$}
				plot[domain=25:55,samples=90]({sin(\x)^2},{1-h2o(\x)});
			\draw[fill=gray]
				(0,1/3.627)--plot[domain=38.922:45,samples=30]({sin(\x)^2},{1-h2o(\x)})-|cycle;
			\draw
				plot[domain=0:45,samples=90]({g2o(\x)*sin(\x)^2},{(1-g2o(\x))/3.627});
			\draw[->](0,0)--(0,.35)node[left]{$ρ$};
			\draw[->](0,0)--(.7,0)node[right]{$π$};
			\draw(.3947,.032234)pic{dot}coordinate[pin=45:{$(0.3947,0.0322)$}];
		}
		\caption{
			BEC special case of $(㏒(-㏒₂Ｐ),㏒(C-R))$.
			The tangent line is from $(0,1/3.6227)$ to $(0.4208,0,0182)$.
			The lower left curve is the previous result \cite{MHU16},
			which attains $(0,1/4.627)$.
		}\label{fig:lol-bec}
	\end{figure}
	
	In the next chapter, I will present the dual picture to this chapter.
	The duality stands for three aspects:
	The scenario is in duality,
	the theorem statement is in duality,
	and the proof technique is in duality.

\chapter{Asymmetric Channels}\label{cha:dual}

	\dropcap
	Researchers in the polar coding field, knowing the SBDMC results,
	had looked forward to applying polar coding to more channels or sources.
	As it turns out, polar coding also applies to source coding for lossy compression
	and noisy-channel coding over asymmetric binary-input discrete-output channels.
	But for polarization to work in those scenarios,
	the original analysis is subject to some modifications.
	
	First and foremost, let us review lossy compression.
	In lossy compression, the compressor is presented a random variable $Y$ and
	wants to send some messages to the decompressor so that the latter can generate
	a random variable $X$ that is close enough to $Y$ under a certain distance metric.
	A trade-off emerges as the compressor wants to send as few messages as possible
	while the $X$ generated from those messages should be as close to $Y$ as possible.
	This subject is usually referred to as the \emph{rate--distortion theory}.
	
	Polar coding applies to lossy compression by pretending that $X$ and $Y$
	are the input and output of an abstract channel, called the \emph{test channel}.
	Once there is a channel, it is polarized, a subset $𝒥$ of
	synthetic channels is selected, and synthetic channels are
	handled in two ways depending on whether they are in selected $𝒥$ or not.
	In this particular case, synthetic channels that are reliable correspond to the bits 
	that snapshot the essence of $Y$, and hence is worth messaging to the decompressor.
	On the other hand, synthetic channels that are noisy correspond to
	the randomness that separates $X$ from $Y$, which can be simulated
	by a pseudo random number generator on the decompressor side.
	By examining the pace the test channel polarizes,
	we gain control of the rate--distortion trade-off.
	
	Secondly, let me elaborate on asymmetric channels.
	Asymmetric channels differ from symmetric ones by the fact that
	the uniform input distribution does not necessarily achieve capacity.
	As a result, a code designer needs to spend extra resources on
	shaping the input distribution apart from the usual anti-error routine.
	This shaping component of coding shares common elements with generating $X$
	from the messages the compressor sends as in the lossy compression scenario.
	
	Polar coding applies to asymmetric channels
	by using a specialized decoder as an encoder.
	The new encoder polarizes the input distribution $Q$
	as if $Q$ were a channel with constant output.
	Now the reliable descendants of $Q$ are those who make $Q$ in the shape of $Q$;
	they are inflexible, deducible form the other descendants,
	and unable to carry new information.
	On the other hand, the noisy descendants of $Q$ are the source
	of the randomness of $Q$ and can carry user messages.
	Meanwhile, the actual channel $W$ through which
	we transmit messages is polarized as usual.
	It suffices to select a subset $𝒥$ of indices that correspond to, simultaneously,
	the noisy descendants of $Q$ and the reliable descendants of $W$.
	
	It happens that the performances of lossy compression and coding over
	asymmetric channels are both controlled by a stochastic process $\{T(𝘞_n)\}$.
	By the end of this chapter, I will prove
	\[𝘗｛T(𝘞_n)<e^{-2^{πn}}｝>H(W)-2^{-ρn}\]
	in order to describe the performances of polar coding in those scenarios.

\section{Problem Setup---Lossy Compression}

	Let $𝔽₂$ be the finite field of order $2$.
	Let $𝒴$ be any finite set equipped with a probability measure $W(y)$.
	Let $\dist：𝔽₂×𝒴→[0,1]$ be a bounded distortion function;
	this function quantifies how well an $x∈𝔽₂$ represents/approximates a $y∈𝒴$.
	For instance, we can have $𝒴⊆[0,1]²$ as a set of pairs and $\dist(x,(y₀,y₁))≔y_x$.
	
	The overall goal is to construct, for some large block length $N$, a compressor
	$𝒞：𝒴^N→𝔽₂^{RN}$ and a decompressor $𝒟：𝔽₂^{RN}→𝔽₂^N$ such that the composition
	\cd[every arrow/.append style=mapsto]{
		Y₁^N\rar{𝒞}	&	U₁^{RN}\rar{𝒟}	&	X₁^N
	}
	minimizes $D≔E［∑_{j=1}^N\dist(X_j,Y_j)/N］$, the long-term average of
	the point-wise distortion between $X₁^N$ and $Y₁^N$, for a given code rate $R$.
	Or conversely, we want to minimizes $R$ for a given $D$.
	The standard result follows.
	
	\begin{thm}[Rate--distortion trade-off]
		Allow arbitrarily large $N$.
		Then the infimum of code rates $R$ that are
		achievable within a fixed distortion $Δ$ is
		\[R(Δ)≔\min_{W(x｜y)}I(X；Y),\label{for:minIXY}\]
		where the minimum is taken over all transition arrays $W(x｜y)$ such that, when
		$X$ is governed by $W$, the expected distortion is bounded as $E[\dist(X,Y)]≤Δ$.
	\end{thm}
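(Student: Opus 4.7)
The plan is to split the argument into a classical \emph{converse}, showing that no code attains a rate below $R(\Delta)$, and a polar-coding \emph{achievability}, constructing codes at every rate above $R(\Delta)$.

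For the converse, I would start from any compressor--decompressor pair $(\mathcal C, \mathcal D)$ with expected distortion $D \leq \Delta$. Since $U_1^{RN}$ is the only bridge between $Y_1^N$ and $X_1^N$, the data-processing inequality gives $NR \geq H(U_1^{RN}) \geq I(X_1^N; Y_1^N)$. The $Y_j$ being i.i.d.\ and $I(X;Y)$ being convex in the conditional $W(x|y)$, it follows that $I(X_1^N; Y_1^N) \geq \sum_{j=1}^N I(X_j; Y_j) \geq N R(D)$, after which continuity and monotonicity of $R(\cdot)$ close the direction with $R \geq R(\Delta)$.

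For the achievability, I would fix a minimizer $W^*(x|y)$ of \cref{for:minIXY} and treat $W^*$ as the test channel with input $X$ drawn from the induced marginal $p^*(x)$ and output $Y$ governed by $W$. Applying Arıkan's channel transformation recursively to $W^*$, the synthetic channels split, by polarization of Arıkan's martingale, into a reliable set (those with small Bhattacharyya parameter) of asymptotic density $1 - H(X|Y)$ and a complementary noisy set. The compressor, given $Y_1^N$, computes the polar transform of a candidate $X_1^N$, transmits the bits indexed by the reliable set, and lets a shared pseudorandom source fill in the noisy bits; the decompressor runs the inverse transform to reconstruct $X_1^N$. The transmitted rate coincides with the reliable-set fraction, which, after the shaping correction owed to the asymmetric marginal $p^*(x)$ (cf.\ the asymmetric-channel overview of this chapter), matches $I(W^*) = R(\Delta)$.

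The principal obstacle is proving that the reconstructed $X_1^N$ still honors $E[\dist] \leq \Delta$, since the pseudorandom substitution on the noisy set perturbs the joint law of $(X_1^N, Y_1^N)$ away from the ideal product measure $W^*(x,y)^{\otimes N}$. This is exactly where the $T$-process announced in the chapter overview intervenes: the dual inequality \cref{ine:teaser-T} forces the synthetic channels on the noisy set to be statistically indistinguishable from constant-output channels, so that fair coin tosses are close in total variation to the true $U_i$ conditioned on $(Y_1^N, U_1^{i-1})$, with a per-position error that telescopes via the chain rule for KL divergence into a vanishing global gap. A triangle inequality then transfers $E[\dist(X,Y)] \leq \Delta$ from the ideal ensemble to the actual code, incurring only a vanishing additive penalty that a negligible enlargement of the reliable set absorbs, hence attaining rate $R(\Delta) + o(1)$ at distortion $\Delta + o(1)$ and, by a standard diagonal argument, any rate strictly above $R(\Delta)$.
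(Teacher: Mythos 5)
Your proof is sound, but it is not the paper's route: the dissertation does not prove this theorem at all, it simply defers to the classical treatment ("see standard textbooks, e.g., Blahut"), where achievability is obtained by random covering codes and the converse is the usual single-letterization. Your converse is exactly that classical argument (data processing to get $NR\ge I(X_1^N;Y_1^N)$, single-letterize using independence of the $Y_j$, then convexity/monotonicity of $R(\cdot)$), so no difference there. Where you diverge is achievability: you replace random coding by a polar-coding construction, which is legitimate (it is essentially the Korada--Urbanke lossy-compression scheme) but considerably heavier than needed for a purely existential infimum statement, and it quietly imports machinery that this chapter only develops \emph{after} stating the theorem --- the $T$-process polarization of \cref{ine:teaser-T} and, more importantly, the excess-distortion bound \cref{ine:D<sum}, which the dissertation itself asserts without proof (citing prior work). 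There is no circularity, since none of that machinery uses the rate--distortion theorem, but as a self-contained proof your sketch still owes the telescoping total-variation/KL argument behind \cref{ine:D<sum} in full detail, and the rate bookkeeping for the non-uniform marginal $p^*(x)$ (your "shaping correction") deserves to be made explicit rather than waved at. What your route buys is constructiveness and low complexity --- i.e., the content of the rest of the chapter --- at the cost of redundancy; the textbook route buys a three-line achievability via the covering lemma.
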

	
	For a proof, see standard textbooks, e.g., \cite{Blahut87}.
	See \cref{fig:RvsD} for an example of the function $R(Δ)$.
	
	To apply polar coding, we fix beforehand
	a transition array $W(x｜y)$ that minimizes \cref{for:minIXY}.
	Let $X$ and $Y$ be governed by $W(x｜y)$ and $W(y)$.
	And then pretend that $W$ is a channel as if
	$X$ were the input and $Y$ were the output.
	(Although in reality, it is $Y$ that is given
	to the compressor and the decompressor outputs $X$.)
	This $W$ is called the test channel and
	we can apply the channel transformation to it.
	It remains to specify what to do to the descendants of $W$ and explain
	how the behavior of $\{𝘞_n\}$ relates to the compression metrics $D$ and $R$.
	
	Note that a test channel is not a priori an SBDMC---it satisfies
	all conditions of being an SBDMC but not the symmetry one.
	Plus, we sometimes want to communicate over asymmetric channels.
	So we have to deal with asymmetric channels sooner or later.
	Or now.

\section{Problem Setup---Asymmetric Channel}

	A BDMC (binary-input discrete-output memoryless channel)
	is an SBDMC except that the involution $σ：𝒴→𝒴$ is not mandatory.
	The main aftermath after taking away the symmetry is that the uniform distribution
	is not guaranteed to be optimal in terms of achievable code rates.
	That being the case, there always exists another
	input distribution that achieves the optimal rate.
	
	\begin{thm}[Asymmetric chapacity]
		The channel capacity of a BDMC $W$, the supremum of
		codes rates at which reliable communication can happen, is
		\[\max_{Q(x)}I(X；Y),\]
		where the maximum is taken over all input distributions $Q(x)$ on $𝔽₂$.
	\end{thm}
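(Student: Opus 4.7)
The proof splits into the classical achievability and converse halves, and the novelty relative to the symmetric case in the previous chapter is only that the optimising input distribution $Q^*$ need not be uniform.

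For achievability, the plan is first to note that the simplex of input distributions on $\mathbb{F}_2$ is compact and $Q \mapsto I(X;Y)$ is continuous (in fact concave) in $Q$ for a fixed transition array $W(y\mid x)$, so a maximiser $Q^*$ exists. Fix such a $Q^*$ and any rate $R < I(X;Y)$ computed under $Q^*$. Draw a codebook of $2^{RN}$ codewords i.i.d.\ according to $Q^{*N}$, and decode by searching for the unique codeword jointly typical with the received word. Shannon's standard union-bound computation then gives ensemble-averaged block error probability at most $2^{RN} \cdot 2^{-N(I(X;Y)-\varepsilon)} + o(1)$, which tends to $0$; a deterministic codebook attaining the same rate is extracted in the usual way.

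For the converse, suppose a family of codes with rate $R$ and block error probability $P_\mathrm{e} \to 0$ exists. Let $U$ be uniform on the message set, let $X_1^N$ be its encoding, and let $Y_1^N$ be the output of $W^N$. Fano's inequality yields $H(U \mid Y_1^N) \leqslant 1 + NR\,P_\mathrm{e}$, and combining with $H(U)=NR$ and the data-processing inequality gives
\[
NR \;\leqslant\; I(X_1^N;Y_1^N) + 1 + NR\,P_\mathrm{e}.
\]
Memorylessness and the chain rule give $I(X_1^N;Y_1^N) \leqslant \sum_{j=1}^N I(X_j;Y_j) \leqslant N \max_Q I(X;Y)$, since each marginal of $X_j$ induces a single-letter mutual information bounded by the supremum. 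Dividing by $N$ and sending $N \to \infty$ forces $R \leqslant \max_Q I(X;Y)$.

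The main technical point is the joint-typicality estimate in the achievability half — that a wrong codeword drawn independently from $Q^{*N}$ has probability at most $2^{-N(I(X;Y)-\varepsilon)}$ of looking jointly typical with $Y_1^N$ — and this is a routine consequence of the asymptotic equipartition property applied to the joint law $Q^*(x)W(y\mid x)$. Nothing in either half requires symmetry of $W$, which is precisely why the theorem extends cleanly from SBDMCs to BDMCs and why the remainder of the chapter can focus on how to realise the maximiser $Q^*$ by a polar construction rather than on proving the capacity formula itself.
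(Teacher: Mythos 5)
Your argument is correct, but note that the paper does not prove this theorem at all---it simply cites a standard textbook (Blahut), and your random-coding achievability plus Fano converse is exactly the classical proof that citation points to. So you have essentially reproduced the standard argument the paper defers to, and no discrepancy arises.
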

	
	For a proof, see standard textbooks, e.g., \cite{Blahut87}.
	
	\begin{figure}
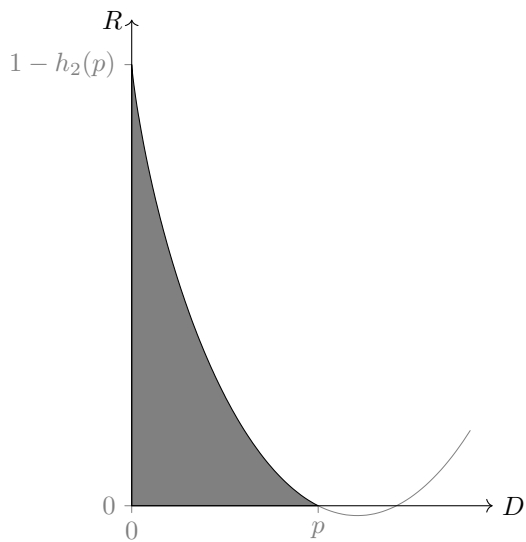

		\tikz[scale=6]{
			\PMS\a{sin(40)^2}
			\PMS\b{{1-h2o(40)}}
			\draw[gray,very thin]
				(0,1)pic{y-tick=$1-h₂(p)$}(0,\b)pic{y-tick=$0$}
				(0,\b)pic{x-tick=$0$}(\a,\b)pic{x-tick=$p$}
				plot[domain=40:60,samples=40]({sin(\x)^2},{1-h2o(\x)});
			\draw[fill=gray]
				plot[domain=0:40,samples=160]({sin(\x)^2},{1-h2o(\x)})-|cycle;
			\draw[->](0,\b)--(0,1.1)node[left]{$R$};
			\draw[->](0,\b)--(.8,\b)node[right]{$D$};
		}
		\caption{
			Assume lossy compression for a binary source of mean $p$
			and Hamming distortion function $\hdis(x,y)≔𝕀\{x≠y\}$.
			The shaded area is where $(R,D)$ is not possible.
			The curve part is $1-h₂$ shifted.
		}\label{fig:RvsD}
	\end{figure}
	
	Fix this $Q$ from now on.
	Then we can apply channel transformation following the same philosophy.
	In detail, we define a vector $U₁²∈𝔽₂²$ by
	\[\bma{U₁&U₂}≔\bma{X₁&X₂}\bma{1&0\\1&1},\label{for:U=X/G}\]
	or $U₁²≔X₁²G^{-1}$ for short.
	Note that this implicitly assigns a non-uniform, non-product
	distribution to $U₁²$ if $Q$ is not uniform to begin with.
	Ignoring that, we proceed to define $WＷ1$ to be an abstract channel with input $U₁$
	and output $Y₁²$, and $WＷ2$ a channel with input $U₂$ and output $Y₁²U₁$.
	
	The question to answer is,
	How do asymmetric channels evolve under channel transformation?
	It turns out that nothing really changes;
	the old theory extends to the new channels seamlessly.
	And it is due to a symmetrization technique.

\section{Channel Symmetrization}

	Given a BDMC $W$, we want to find an SBDMC $˜W$ such that any meaningful property
	concerning the descendants of $˜W$ automatically applies to those of $W$.
	To that end, a strategy is to define an equivalence relation $≅$ such that
	(a)	a BDMC is equivalent to at least one SBDMC,
	(b)	channel parameters such as $H$ and $Z$ are functions in classes,
		meaning that $H(W)=H(˜W)$ if $W≅˜W$, and
	(c)	the channel transformation respects the equivalence relation,
		meaning that $WＷj≅˜WＷj$ if $W≅˜W$.
	If such relation can be found, than almost all questions we want to ask about $W$
	have answers when we, instead, ask an SBDMC in the same class as $W$.
	
	\begin{dfn}
		Two BDMCs, $W$ and $˜W$, are said to be equivalent,
		denoted by $W≅˜W$, if $\{W(0｜Y),W(1｜Y)\}$ and $\{˜W(0｜˜Y),˜W(1｜˜Y)\}$
		obey the same distribution on the power set of $[0,1]$.
	\end{dfn}
	
	Let me briefly remark what $≅$ identifies.
	For one, the labeling on $𝒴$ is not important;
	after all, the decoder only care about the posterior probabilities.
	For two, if two outputs $y,y'$ have the same posterior probabilities, that is,
	$W(x｜y)=W(x｜y')$ for both $x∈𝔽₂$, the decoder might as well identify $y$ and $y'$.
	For three, relabeling the input $𝔽₂$ does not matter;
	the decoder just cares about how biased $\{W(0｜Y),W(1｜Y)\}$ is,
	but not about toward which way it biases.
	
	\begin{lem}[Reduction to symmetry]
		For any BDMC $W$, it is equivalent to at least one SBDMC.
	\end{lem}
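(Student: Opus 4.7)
I would symmetrize $W$ by adjoining to every output an auxiliary uniform bit that is first XORed into the input path. Concretely, set $\tilde{\mathcal Y} := \mathcal Y \times \mathbb F_2$ and
\[\tilde W((y,b) \mid x) := \tfrac12 W(y \mid x+b).\]
Operationally, on input $x$ one draws $b \in \mathbb F_2$ uniformly, passes $x+b$ through $W$ to get $y$, and delivers $(y,b)$ to the receiver; the bit $b$ is free side information. The rest of the proof is then two one-paragraph verifications.

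\textbf{Step 1: $\tilde W$ is an SBDMC.} I would take the involution $\sigma \colon (y,b) \mapsto (y, b+1)$ and check the symmetry axiom directly:
\[\tilde W((y,b) \mid 0) = \tfrac12 W(y \mid b) = \tfrac12 W(y \mid 1+(b+1)) = \tilde W(\sigma(y,b) \mid 1).\]
Finiteness of $\tilde{\mathcal Y}$ and the row-sum property are immediate from the definition, so $\tilde W$ meets every clause of the SBDMC definition.

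\textbf{Step 2: $W \cong \tilde W$.} Under the convention of uniform input on $\mathbb F_2$, a one-line Bayes computation gives
\[\tilde W(x \mid (y,b)) = \frac{W(y \mid x+b)}{W(y \mid 0)+W(y \mid 1)} = W(x+b \mid y),\]
so the unordered pair $\{\tilde W(0 \mid (y,b)), \tilde W(1 \mid (y,b))\}$ equals $\{W(b \mid y), W(b+1 \mid y)\} = \{W(0 \mid y), W(1 \mid y)\}$, independent of $b$. Moreover the first-coordinate marginal of $\tilde Y$ coincides with the marginal of $Y$ under $W$ (both equal $\frac12(W(\cdot \mid 0)+W(\cdot \mid 1))$), so the random sets $\{\tilde W(0 \mid \tilde Y), \tilde W(1 \mid \tilde Y)\}$ and $\{W(0 \mid Y), W(1 \mid Y)\}$ induce the same law on the power set of $[0,1]$, which is exactly $W \cong \tilde W$.

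\textbf{Expected obstacle.} The lemma itself is a one-trick calculation and I do not anticipate real difficulty beyond spotting the correct coupling: once the input is relabeled by a fresh fair bit, symmetry is forced while none of the posterior information is destroyed. The genuine content of the chapter's ``reduction to symmetry'' sits in the companion invariances (b) and (c) stated just before the lemma --- that $H$, $Z$, and the channel transformation $W \mapsto (W^{(1)}, W^{(2)})$ all descend to $\cong$-classes --- and those I would establish separately; they are what license transplanting every result of the previous chapter to the asymmetric and lossy-compression settings.
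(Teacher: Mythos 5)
There is a genuine gap, and it sits exactly where the asymmetric chapter lives: your flag is drawn independently of the \emph{new} channel input, which symmetrizes $W$ \emph{with respect to the uniform prior}. But in this chapter the posterior $W(x\mid y)$ appearing in the definition of $\cong$, and the parameters $H$, $Z$, $T$ that the companion lemmas must preserve, are all computed from the joint $Q(x)W(y\mid x)$ with $Q$ the designated, generally non-uniform, input distribution (the capacity-achieving input, or the test-channel prior in lossy compression) --- that non-uniformity is the whole reason the lemma exists, cf.\ the remark that $I(W)=H(Q)-H(W)\neq 1-H(\tilde W)$ unless $Q$ is uniform. Your Step 2 identity $\tilde W(x\mid(y,b))=W(x+b\mid y)$ is only the uniform-prior posterior $W(y\mid x+b)/(W(y\mid 0)+W(y\mid 1))$, and the marginal of $Y$ under $W$ is $\sum_x Q(x)W(\cdot\mid x)$, not $\tfrac12(W(\cdot\mid 0)+W(\cdot\mid 1))$. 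So for non-uniform $Q$ your $\tilde W$ is equivalent to ``$W$ fed with uniform input,'' a different object: e.g.\ $H(\tilde W)$ equals the conditional entropy of $W$ under the uniform prior, not $H(W)$, which breaks the downstream ``parameters in class'' transfer and the $1-H(W)$ bookkeeping used later in the chapter.

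The repair is to couple the flag to the original input rather than to the new one, which is what the paper does: keep $X\sim Q$, draw $F$ uniform and independent of $X$, and declare the new channel to have input $U:=X-F$ and output $(F,Y)$. Then $U$ is uniform, the transition array is $\tilde W\bigl((f,y)\mid u\bigr)=Q(u+f)\,W(y\mid u+f)$ (the conditional law of $F$ given $U=u$ is $Q(u+f)$, not $\tfrac12$), the involution $(f,y)\mapsto(f+1,y)$ still certifies symmetry since $\tilde W((f+1,y)\mid 1)=Q(f)W(y\mid f)=\tilde W((f,y)\mid 0)$, and the Bayes step now yields $\tilde W(u\mid(f,y))=W(u+f\mid y)$ with the $Q$-posteriors, while $(F,Y)$ has law $\tfrac12\times(\text{law of }Y\text{ under }Q)$; hence the unordered posterior pair has the same distribution as $\{W(0\mid Y),W(1\mid Y)\}$ in the intended sense. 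When $Q$ happens to be uniform your construction coincides with this one, so your calculations are fine in that special case --- the missing idea is precisely that the flag must be biased by $Q$ so that the effective input $U+F$ to $W$ is $Q$-distributed rather than uniform.
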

	
	\begin{proof}
		Let $F∈𝔽₂$ be a “flag” that obeys an independent uniform distribution on $𝔽₂$.
		Let $˜W$  be a channel with input $X-F$ and output $(F,Y)$.
		Intuition:
		When the encoder attempts to input $X$ into a channel,
		it sees the flag $F$ and input $X-F$ instead;
		the decoder also sees the flag $F$ so it would
		simply add that back after all the decoding jobs.
		
		This $˜W$ is a SBDMC because adding $F$ to $X$
		will turn the input into a uniform random variable.
		This $˜W$ is equivalent to $W$ because
		\[\{˜W(0｜fy),˜W(1｜fy)\}=\{˜W(0-f｜fy),˜W(1-f｜fy)\}=\{W(0｜y),W(1｜y)\}.\]
		That is, the random sets $\{˜W(0｜fy),˜W(1｜fy)\}$ and $\{W(0｜y),W(1｜y)\}$
		coincide, and hence obey the same distribution.
	\end{proof}
	
	Before I state (b) that channel parameters are function in classes,
	one more parameter is defined to be utilized in the remainder of this chapter.
	Note that the definitions of $H$ and $Z$ automatically apply to the asymmetric case.
	
	\begin{dfn}\label{dfn:bin-T}
		Define the \emph{total variation norm} of $W$ to be
		\[T(W)≔∑_{y∈𝒴}W(y)∑_{x∈𝔽₂}\abs[\Big]{W(x｜y)-÷12},\]
		which is the total variation distance from $W(•｜y)$ to
		the uniform distribution, weighted by the frequency each $y∈𝒴$ appears.
	\end{dfn}
	
	\begin{lem}[Parameters in class]
		For any two equivalent BDMCs, $W$ and $˜W$,
		\[H(W)=H(˜W),\quad Z(W)=Z(˜W),\quad T(W)=T(˜W).\]
	\end{lem}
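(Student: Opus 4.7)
The plan is to exhibit each of $H(W)$, $Z(W)$, and $T(W)$ as the expectation of a function that depends only on the unordered posterior pair $\{W(0|Y), W(1|Y)\}$, with $Y$ drawn from the output distribution $W(y)$. Once that is done, each parameter becomes a functional of the law of this random subset of $[0,1]$; and by the very definition of $\cong$ the two laws coincide for $W$ and $\tilde W$, which gives all three equalities simultaneously.

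Carrying this out: for $H$, regroup the double sum to extract $W(y)$,
\[
H(W) = -\sum_y W(y)\sum_x W(x|y)\log_2 W(x|y) = E_Y\bigl[-W(0|Y)\log_2 W(0|Y) - W(1|Y)\log_2 W(1|Y)\bigr],
\]
whose integrand is symmetric in the two posteriors and hence a well-defined function of $\{W(0|Y),W(1|Y)\}$. For $Z$, use $W(0,y)W(1,y) = W(y)^2\,W(0|y)W(1|y)$ to factor out $W(y)$, giving
\[
Z(W) = 2\sum_y W(y)\sqrt{W(0|y)W(1|y)} = E_Y\bigl[2\sqrt{W(0|Y)W(1|Y)}\bigr],
\]
again symmetric. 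For $T$, the definition already has the right form,
\[
T(W) = \sum_y W(y)\sum_x\bigl|W(x|y) - \tfrac12\bigr| = E_Y\bigl[|W(0|Y) - \tfrac12| + |W(1|Y) - \tfrac12|\bigr].
\]

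The identities $H(W) = H(\tilde W)$, $Z(W) = Z(\tilde W)$, $T(W) = T(\tilde W)$ then follow at once, because the $W(y)$-weighted expectation over $Y$ is precisely integration against the law of the random set $\{W(0|Y),W(1|Y)\}$, and this law is preserved by $\cong$. There is no genuine obstacle here: the only things to check are the symmetry of each integrand in the two posteriors (immediate) and the observation that weighting by the output distribution is absorbed automatically into the law of the posterior pair, so that no separate matching of the output distributions is required.
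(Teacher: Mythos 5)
Your proof is correct and is essentially the paper's argument: the paper disposes of this lemma with the single line ``by definition,'' and what you have written is exactly that reasoning made explicit, namely that $H$, $Z$, and $T$ are each expectations of symmetric functions of the unordered posterior pair $\{W(0\mid Y),W(1\mid Y)\}$ under the output law, which is precisely the data preserved by $\cong$. Nothing further is needed.
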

	
	\begin{proof}
		By definition.
	\end{proof}
	
	Remark:
	When $W$ is BDMC and $˜W$ is SBDMC, $I(W)=H(Q)-H(W)≠1-H(˜W)=I(˜W)$
	unless $H(Q)=1$ (i.e., $Q$ is uniform).
	What used to be $I(˜W)$ in the last chapter becomes $1-H(W)$ in this chapter,
	In particular, results of the form $𝘗\{𝘡_n<θ(n)\}>I(˜W)-γ(n)$
	are becoming $𝘗\{𝘡_n<θ(n)\}>1-H(W)-γ(n)$ when cited.
	
	\begin{lem}[Transformation in class]
		For any two equivalent BDMCs, $W$ and $˜W$,
		\[WＷ1≅˜WＷ1,\qquad WＷ2≅˜WＷ2.\]
	\end{lem}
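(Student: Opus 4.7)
The plan is to reduce the claim to a direct algebraic verification on posterior pairs, then transport the result via an i.i.d.\ coupling. Set $a(y) \coloneqq W(0|y)$ and $b(y) \coloneqq W(1|y)$; by definition, $W \cong \tilde W$ says exactly that the random unordered pair $\{a(Y), b(Y)\}$ has the same law as $\{\tilde a(\tilde Y), \tilde b(\tilde Y)\}$. A one-line Bayes computation, exploiting that $U_1, U_2$ are uniform in the definition of $WＷj$, gives
\[ WＷ1(0 | y_1^2) = a(y_1)\, a(y_2) + b(y_1)\, b(y_2), \qquad WＷ1(1 | y_1^2) = a(y_1)\, b(y_2) + b(y_1)\, a(y_2), \]
and
\[ WＷ2(u_2 | y_1^2, u_1) \propto W(u_1+u_2 | y_1)\, W(u_2 | y_2). \]
In each case the posterior pair at the synthetic output is an explicit function of the posterior data $(\{a(y_i), b(y_i)\})_{i=1,2}$, together with the extra bit $u_1$ in the $WＷ2$ case. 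Moreover, the synthetic output marginal factorizes as $W(y_1) W(y_2)$ (times $\tfrac12$ for $WＷ2$), and the marginal $W(y)$ itself depends on $W$ only through the law of $\{a(Y), b(Y)\}$.

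I would next verify the two invariances needed to pass from the labeled posteriors to the unordered posterior pair. For $WＷ1$, the substitution $a(y_i) \leftrightarrow b(y_i)$ simply swaps the two displayed values, leaving the unordered pair fixed. For $WＷ2$, this substitution does alter the conditional posteriors pointwise, but its effect is compensated exactly by the flip $u_1 \leftrightarrow u_1+1$ appearing in the output; since $u_1$ is uniformly distributed and forms part of the output of $WＷ2$, the joint law of (output, posterior pair) is therefore invariant. Combined with the output-marginal observation, this shows that the law of the posterior pair of $WＷj$ depends on $W$ only through the law of $\{a(Y), b(Y)\}$.

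To close, couple $(Y_1, Y_2)$ with $(\tilde Y_1, \tilde Y_2)$ so that $\{a(Y_i), b(Y_i)\} = \{\tilde a(\tilde Y_i), \tilde b(\tilde Y_i)\}$ almost surely for $i = 1, 2$, which is possible because both pairs are i.i.d.\ with matching marginal laws. Along this coupling the posterior pair of $WＷj$ at $(Y_1^2)$ or $(Y_1^2, U_1)$ agrees in law with that of $\tilde WＷj$, which is exactly $WＷj \cong \tilde WＷj$. The main obstacle I anticipate is the $WＷ2$ symmetry check: a single flip of a posterior label does change the conditional posterior pair pointwise, and only after recognizing that this is absorbed by the uniform output bit $u_1$ does one recover invariance in law; everything else reduces to bookkeeping of the kind already carried out for $WＷ1$.
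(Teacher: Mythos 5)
Your overall architecture---show that the law of the unordered posterior pair of $W^{(j)}$ is a functional of the law of $\{W(0\mid Y),W(1\mid Y)\}$, and then transport it through an i.i.d.\ coupling---is sound, and since the paper disposes of this lemma with ``by definition,'' your write-up is the natural fleshing-out rather than a different route. Your Bayes formulas for the posteriors of $W^{(1)}$ and $W^{(2)}$ are correct, and they do not actually require uniformity of $U_1,U_2$: in this chapter $U_1^2=X_1^2G^{-1}$ with $X_i\sim Q$ possibly non-uniform, so that parenthetical is wrong, but harmlessly so, because your expressions are already written in terms of the posteriors $W(\cdot\mid y)$.

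The genuine gap is in the step you yourself flag as the crux. You justify the $W^{(2)}$ invariance by saying the label flip is compensated by $u_1\leftrightarrow u_1+1$ ``since $u_1$ is uniformly distributed.'' That premise is false precisely in the setting this lemma exists to serve: for an asymmetric $W$ with input law $Q$ one has $P\{U_1=0\}=Q(0)^2+Q(1)^2\neq 1/2$, and even when $Q$ is uniform your argument would need $U_1$ to be uniform \emph{conditionally on} $Y_1^2$ (so that the output weights of $(y_1^2,0)$ and $(y_1^2,1)$ agree), which fails already for a BSC. The correct mechanism is not equal weights but covariant weights: conditionally on $(y_1,y_2)$, the event $U_1=u_1$ has probability $Z_{u_1}=W(u_1\mid y_1)W(0\mid y_2)+W(u_1{+}1\mid y_1)W(1\mid y_2)$, and flipping the posterior labels at $y_1$ (or at $y_2$) swaps $Z_0\leftrightarrow Z_1$ at the same time as it swaps the two unordered posterior pairs $\pi_0\leftrightarrow\pi_1$; hence the conditional law of the posterior pair, namely the two-point mixture $Z_0\delta_{\pi_0}+Z_1\delta_{\pi_1}$, is invariant, with no uniformity needed. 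With that replacement your coupling closes the proof (note it must be read as matching these conditional mixtures, not merely the $Y$-marginals); as written, the uniformity claim would make the key step collapse exactly for the asymmetric channels the chapter is about.
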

	
	\begin{proof}
		By definition.
	\end{proof}
	
	We therefore conclude that, when it comes to channel transformation and channel
	processes, reasoning about $W$ is logically equivalent to reasoning about $˜W$.
	
	Being able to control the descendants of a BDMC, we will connect
	the performance of coding to channel parameters/processes in the next section.

\section{Problem Reduction to Process}

	Now we can reduce lossy compression to operations on synthetic channels.
	Let $W$ be a test channel of a lossy compression problem.
	Then, for any descendant $𝘞_n$ (including $W$ itself),
	we want a compressor to observe $𝘞_n$'s output and send a message
	to the decompressor so that the latter can generate $𝘞_n$'s input.
	By that $H(𝘞_n)→𝘏_∞∈\{0,1\}$, the descendants of $W$ are either
	extremely reliable or completely noisy, and hence assume easy treatments.
	
	When a realization of $𝘞_n$ is completely noisy,
	it means that its input $𝘟_n$ is almost irrelevant to its output $𝘠_n$.
	In this case, the decompressor does not need to know anything about $𝘠_n$
	and can query a pseudo random number generator to simulate $𝘟_n$.
	When a realization of $𝘞_n$ is (extremely) reliable, on the other hand,
	its input $𝘟_n$ depends (heavily) on its output $𝘠_n$.
	In this case, the compressor is suggested to send $𝘟_n$
	to the decompressor so that the latter can simply output $𝘟_n$.
	
	The last paragraph can actually be translated into
	a lossy compression scheme but I decided to omit the details
	as they can be found in past works, e.g., \cite{KU10o}.
	I claim without a proof that the excess of distortion of this coding scheme
	is bounded by the average of total variation norms of the noisy descendants.
	More symbolically,
	\[D-Δ≤÷1N∑_{j₁^n∈𝒥}T\(\(\dotsb((WＷ{j₁})Ｗ{j₂})\dotsb\)Ｗ{j_n}\).
		\label{ine:D<sum}\]
	It remains to define $𝒥$.
	Motivated by \cref{ine:D<sum}, we simply let $𝒥$ collect depth-$n$
	synthetic channels whose $T$ is less than a threshold $θ$.
	Then, similar to \cref{cha:origin}, we have that $D-Δ<θ$ and $R=1-𝘗\{T(𝘞_n)<θ\}$.
	To rephrase it, the trade-off between $Ｐ$ and $R$ for lossy compression is
	\[𝘗\{T(𝘞_n)<D-Δ\}≈1-R.\]
	Let $\{𝘛_n\}$ be the total variation process defined by $𝘛_n≔T(𝘞_n)$ .
	
	The last paragraph is not the only driving force
	for learning the total variation process $\{𝘛_n\}$;
	channel coding over asymmetric channels enjoys a similar reduction.
	
	Recall that $Q(x)$, presumably non-uniform,
	is the capacity-achieving input distribution w.r.t.\ an asymmetric $W$.
	Pretend that $Q(x,♣)$ models a channel with a constant output $♣$.
	Then we can polarize $Q$ and talk about its descendants.
	Details omitted, the block error probability of polar coding over $W$ is \cite{HY13}
	\[Ｐ≤∑_{j₁^n∈𝒥}Z\(\(\dotsb((WＷ{j₁})Ｗ{j₂})\dotsb\)Ｗ{j_n}\)
		+T\(\(\dotsb((QＷ{j₁})Ｗ{j₂})\dotsb\)Ｗ{j_n}\).\label{ine:P<sumsum}\]
	We know how to bound the sum of $Z$'s, so it remains to bound the sum of $T$'s.
	
	All in all, we now want to show
	\[𝘗｛𝘛_n<e^{-2^{πn}}｝>H(W)-2^{-ρn}\label{ine:T-e2pin}\]
	for $(π,ρ)$ lying in the same region as in \cref{cha:origin}.
	The following section argues that we can reuse the same the proof.

\section{Stochastic Process Nonsense}

	In this section, I will show that since the total variation process
	$\{𝘛_n\}$ satisfies almost all properties satisfied by $\{𝘡_n\}$,
	the majority of the proof of \cref{ine:Z-e2pin} applies to \cref{ine:T-e2pin}.
	
	Let us start with an counterpart to \cref{lem:squares}.
	
	\begin{lem}[Evolution of $T$]
		\cite[Lemma~3]{Muramatsu21}
		The following hold for all SBDMCs $W$:
		\begin{gather*}
			T(WＷ1)=T(W)²,			\label{ine:T^2}\\
			T(WＷ2)≤2T(W)-T(W).	\label{ine:2T-T^2}
		\end{gather*}
	\end{lem}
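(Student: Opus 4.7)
My plan is to parametrize each BDMC by its normalized likelihood gap. Write $\mu(y) := \tfrac{1}{2}(W(y|0) + W(y|1))$ and $L(y) := (W(y|0) - W(y|1))/(W(y|0) + W(y|1)) \in [-1, 1]$, so that $W(y|0) = \mu(y)(1+L(y))$ and $W(y|1) = \mu(y)(1-L(y))$. Unpacking \cref{dfn:bin-T} yields the identity $T(W) = \tfrac{1}{2}\sum_y |W(y|0) - W(y|1)| = \mathbb{E}_{Y \sim \mu}[|L(Y)|]$, which is the only representation of $T$ my proof will use. Notice that neither this representation nor what follows invokes the involution $\sigma$, so the argument works verbatim for asymmetric BDMCs too.

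For the equality $T(W^{(1)}) = T(W)^2$, I expand the definition $W^{(1)}(y_1 y_2|u_1) = \tfrac{1}{2}\sum_{u_2} W(y_1|u_1+u_2) W(y_2|u_2)$ and subtract the two posteriors. The mixed terms cancel, leaving $W^{(1)}(y_1 y_2|0) - W^{(1)}(y_1 y_2|1) = \tfrac{1}{2}(W(y_1|0) - W(y_1|1))(W(y_2|0) - W(y_2|1))$, which factors across $y_1$ and $y_2$. Summing absolute values and multiplying by the outer $\tfrac{1}{2}$ from the definition of $T$ gives $T(W^{(1)}) = \tfrac{1}{2} \cdot \tfrac{1}{2} \cdot (2T(W)) \cdot (2T(W)) = T(W)^2$.

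For the inequality, I substitute $W(y_i|0) = \mu_i(1+L_i)$ and $W(y_i|1) = \mu_i(1-L_i)$ directly into $W^{(2)}(y_1 y_2 u_1|u_2) = \tfrac{1}{2} W(y_1|u_1+u_2) W(y_2|u_2)$. A short calculation yields $|W^{(2)}(y_1 y_2 0|0) - W^{(2)}(y_1 y_2 0|1)| = \mu_1\mu_2|L_1+L_2|$ and $|W^{(2)}(y_1 y_2 1|0) - W^{(2)}(y_1 y_2 1|1)| = \mu_1\mu_2|L_1-L_2|$, so
\[
T(W^{(2)}) = \tfrac{1}{2}\sum_{y_1,y_2} \mu_1\mu_2\bigl(|L_1+L_2| + |L_1-L_2|\bigr) = \sum_{y_1,y_2}\mu_1\mu_2\max(|L_1|, |L_2|) = \mathbb{E}[\max(|L(Y_1)|, |L(Y_2)|)],
\]
where the second equality uses the elementary identity $|a+b| + |a-b| = 2\max(|a|, |b|)$ and the last one reads the sum as an expectation under the product law $\mu \otimes \mu$.

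Finally, since $|L_i| \in [0, 1]$, I apply $\max(a, b) \leq a + b - ab$ (valid for $a, b \in [0, 1]$ because $ab \leq \min(a, b)$, so $\max = a + b - \min \leq a + b - ab$). Combined with the independence of $Y_1$ and $Y_2$ under $\mu \otimes \mu$, this gives $\mathbb{E}[\max(|L_1|, |L_2|)] \leq 2\mathbb{E}[|L|] - \mathbb{E}[|L|]^2 = 2T(W) - T(W)^2$, which is the claimed bound. The only step that demands real care is the computation for $W^{(2)}$: one has to carry out the substitution carefully and recognize the $|a+b| + |a-b| = 2\max(|a|, |b|)$ identity that collapses the sum over $u_1$ into a single maximum; the rest is a direct expansion plus one elementary estimate, and no channel-theoretic machinery beyond the definition of $T$ is needed.
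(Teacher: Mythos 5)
Your proof is correct, and it is worth noting that the paper itself supplies no argument here: it simply cites Lemma~3 of Muramatsu's paper, so your derivation fills in what the dissertation outsources. Your route is a clean, self-contained computation: writing $W(y|0)=\mu(y)(1+L(y))$, $W(y|1)=\mu(y)(1-L(y))$ and checking $T(W)=\mathbb{E}_\mu[|L|]$ is legitimate under the paper's definition of $T$ (with $W(y)$ the true output probability), the cancellation giving $T(W^{(1)})=T(W)^2$ is exactly right, and the identity $|a+b|+|a-b|=2\max(|a|,|b|)$ together with $\max(a,b)\le a+b-ab$ on $[0,1]$ and independence of $Y_1,Y_2$ delivers $T(W^{(2)})\le 2T(W)-T(W)^2$. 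Your observation that the symmetry involution is never used is also apt, since the lemma is deployed in the asymmetric chapter (via the equivalence-class reduction). One point you silently and correctly fixed: the displayed inequality in the statement reads $T(W^{(2)})\le 2T(W)-T(W)$, which is a typo for $2T(W)-T(W)^2$ (as the label and the parallel with the $Z$-lemma indicate); the literal version is in fact false in general, since your own computation shows $T(W^{(2)})=\mathbb{E}[\max(|L_1|,|L_2|)]\ge T(W)$, with the BEC saturating the correct bound $2T-T^2$. So you proved the intended statement, and by an argument that is as elementary as one could hope for.
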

	
	This implies a nice consequence---that there is another supermartingale
	that will dominate the behavior of $𝘞_n$ at the noisy end.
	
	\begin{lem}
		The process of total variation norms, $\{𝘛_n\}$, is a supermartingale.
	\end{lem}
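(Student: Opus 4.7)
The plan is to mirror \cref{pro:superm} almost verbatim. Since $𝘑_{n+1}$ is an independent fair toss on $\{1,2\}$ and $𝘞_{n+1} = 𝘞_nＷ{𝘑_{n+1}}$, the one-step conditional expectation reduces to
\[
	𝘌[𝘛_{n+1} \mid 𝘞_n] = ÷12 T(𝘞_nＷ1) + ÷12 T(𝘞_nＷ2),
\]
so it suffices to verify $T(WＷ1) + T(WＷ2) \leq 2T(W)$ for every BDMC $W$. That inequality is simply the sum of the two bounds in the preceding Evolution of $T$ lemma—namely $T(WＷ1) = T(W)²$ and $T(WＷ2) \leq 2T(W) - T(W)²$ (reading the displayed right-hand side with the missing square restored, as the label and the analogy with \cref{lem:squares} make unambiguous)—which telescopes to exactly $2T(W)$.

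The only residual point of care is that the Evolution lemma is stated for SBDMCs whereas the channel process in this chapter may pass through genuinely asymmetric BDMCs. The symmetrization results earlier in this chapter—\emph{Parameters in class} together with \emph{Transformation in class}—upgrade the inequality from SBDMCs to all BDMCs at once, so nothing is lost. Measurability and integrability are automatic: $𝘛_n = T(𝘞_n)$ is a deterministic function of $(𝘑_1,\ldots,𝘑_n)$ and is uniformly bounded thanks to \cref{dfn:bin-T}. I do not foresee any obstacle; indeed the supermartingale is only barely one, with all the slack inherited from \cref{ine:2T-T^2}.
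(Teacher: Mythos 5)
Your proof is correct and follows exactly the route the paper intends: the paper treats the lemma as an immediate consequence of the Evolution of $T$ lemma, mirroring \cref{pro:superm}, so summing $T(WＷ1)=T(W)²$ and $T(WＷ2)≤2T(W)-T(W)²$ to get $T(WＷ1)+T(WＷ2)≤2T(W)$ is precisely the intended argument (and you correctly read the missing square in \cref{ine:2T-T^2} as a typo; even as printed the bound $T(W)²+T(W)≤2T(W)$ still holds since $T∈[0,1]$). Your extra remarks on symmetrization and boundedness are consistent with the chapter's framework and add nothing problematic.
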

	
	Note that we do not have a counterpart to \cref{ine:2-Z^2}.
	This will affect how we deal with the en23 behavior of $\{𝘛_n\}$.
	The counterpart of \cref{lem:ZvsH} follows.
	
	\begin{lem}[$T$ vs $H$]\label{lem:TvsH}
		\cite[Lemma~4]{Muramatsu21}
		The following holds for all SBDMCs $W$:
		\[1-T(W)≤H(W)≤h₂（÷{1-T(W)}2）\]
		where $h₂$ is the binary entropy function.
	\end{lem}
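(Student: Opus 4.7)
The approach is to reparameterize both $H(W)$ and $T(W)$ through the single random variable $q(Y) \coloneqq \min(W(0 \mid Y), W(1 \mid Y)) \in [0, 1/2]$, the Bayes error of guessing the input given $Y$. Since $W(0\mid y) + W(1\mid y) = 1$, the two terms of the inner sum in the definition of $T(W)$ are equal and each equals $1/2 - q(y)$, which yields $T(W) = 1 - 2\,\mathbb{E}_Y[q(Y)]$. Since $h_2$ is symmetric about $1/2$, the summand $h_2(W(0\mid y))$ defining $H(W)$ equals $h_2(q(y))$, so $H(W) = \mathbb{E}_Y[h_2(q(Y))]$. Writing $p \coloneqq \mathbb{E}_Y[q(Y)] = (1 - T(W))/2$, the lemma becomes the two-sided bound $2p \leq \mathbb{E}_Y[h_2(q(Y))] \leq h_2(p)$.

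The right inequality will follow from Jensen applied to the concave function $h_2$: since $q(Y) \in [0, 1/2]$ almost surely and $h_2$ is concave on $[0, 1]$, we get $\mathbb{E}_Y[h_2(q(Y))] \leq h_2(\mathbb{E}_Y[q(Y)]) = h_2(p)$. The left inequality I would reduce to the pointwise estimate $h_2(q) \geq 2q$ on $[0, 1/2]$; integrating against the law of $q(Y)$ gives $\mathbb{E}_Y[h_2(q(Y))] \geq 2p$. For the pointwise bound, set $\varphi(q) \coloneqq h_2(q) - 2q$; then $\varphi(0) = \varphi(1/2) = 0$, and $\varphi$ is concave on $[0, 1/2]$ as a sum of a concave and a linear function, so $\varphi \geq 0$ on $[0, 1/2]$ because a concave function vanishing at both endpoints of an interval is nonnegative throughout.

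I do not anticipate any real obstacle here. The whole argument collapses to a handful of lines once the reparameterization through $q(Y)$ is in place; the only step requiring any care is verifying $q(Y) \in [0, 1/2]$, which is immediate from its definition as a minimum of two quantities that sum to $1$.
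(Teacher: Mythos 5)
Your proof is correct. Note, however, that the paper does not actually prove this lemma---it is cited from \cite[Lemma~4]{Muramatsu21}---so there is no in-paper argument to compare against; your write-up is a valid self-contained derivation. The reparameterization $q(Y)\coloneqq\min(W(0\mid Y),W(1\mid Y))$ correctly gives $T(W)=1-2E[q(Y)]$ and $H(W)=E[h_2(q(Y))]$, the upper bound is exactly Jensen applied to the concave $h_2$, and the lower bound follows from the chord inequality $h_2(q)\ge 2q$ on $[0,1/2]$, which your concavity-plus-endpoints argument establishes. A small bonus worth noticing: your argument never uses the symmetry of $W$, so the inequality holds for any BDMC, which is in fact how the chapter wants to use it after the symmetrization reduction.
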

	
	See \cref{fig:TvsH} for a visualization.
	
	\begin{figure}
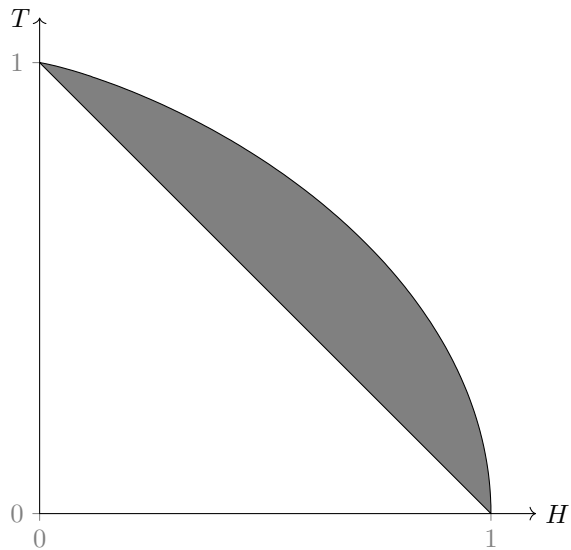

		\tikz[scale=6]{
			\draw
				(0,1)pic{y-tick=$1$}(0,0)pic{y-tick=$0$}
				(0,0)pic{x-tick=$0$}(1,0)pic{x-tick=$1$};
			\draw[fill=gray]
				plot[domain=0:45,samples=180]({h2o(\x)},{cos(2*\x)})--cycle;
			\draw[->](0,0)--(0,1.1)node[left]{$T$};
			\draw[->](0,0)--(1.1,0)node[right]{$H$};
		}
		\caption{
			The possible region where $(T(W),Z(W))$ could lie in.
			The curve part is half of $h₂$ after rotating and rescaling.
		}\label{fig:TvsH}
	\end{figure}
	
	The counterpart to \cref{lem:Z-en23} follows.
	Note that we skipped \cref{thm:bdmc-mu}, but the following proof
	shows that we need not duplicate \cref{thm:bdmc-mu}.
	
	\begin{lem}[From eigen to en23]\label{lem:T-en23}
		Fix $ϱ≔1/4.714$.
		Assume \cref{thm:bdmc-mu}.
		Then
		\[𝘗｛𝘛_n<e^{-n^{2/3}}｝>H(W)-2^{-ϱn+o(n)}.\]
	\end{lem}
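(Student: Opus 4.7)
The plan is to mirror the proof of \cref{lem:Z-en23}, replacing the supermartingale $\{𝘡_n\}$ by $\{𝘛_n\}$ and adjusting for two key differences. First, the limit $𝘛_n→0$ corresponds to the noisy end (where $𝘏_n→1$) rather than the reliable end, so the mass at the small end should be $H(W)$ instead of $I(W)$. Second, the evolution inequalities for $T$ lack a lower-bound counterpart to \cref{ine:2-Z^2}; only the upper bound $T(WＷ2)≤2T(W)-T(W)²$ together with the exact identity $T(WＷ1)=T(W)²$ is available.

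First I would establish an eigen bound of the form
\[\sup_W ÷{h(T(WＷ1))+h(T(WＷ2))}{2h(T(W))}≤2^{-ϱ}\]
for the same concave spline $h：[0,1]→[0,1]$ with $h(0)=h(1)=0$ provided by \cref{thm:bdmc-mu}. Since the roles of the squaring and doubling branches are swapped relative to $Z$, the relevant supremum reduces to minimizing $\sup_{0<t<1}\sup_{0≤t'≤2t-t²}(h(t²)+h(t'))/(2h(t))$, with $t'$ ranging over the larger set $[0,2t-t²]$ instead of $[t√{2-t²},2t-t²]$. Iterating this eigen bound yields $𝘌[h(𝘛_n)]≤h(𝘛₀)2^{-ϱn}$, and Markov's inequality then furnishes the expulsion estimate
\[𝘗｛e^{-n^{2/3}}≤𝘛_n≤1-e^{-n^{2/3}}｝<2^{-ϱn+o(n)}.\]

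Next, $𝘛_n→𝘛_∞∈\{0,1\}$ almost surely: the exponential decay forces $h(𝘛_n)→0$, concavity of $h$ pins the limit to the zeros $\{0,1\}$, and the supermartingale property forbids jumps between $0$ and $1$. The mass at $0$ equals $H(W)$ by \cref{lem:TvsH}, whose squeeze $1-T≤H≤h₂((1-T)/2)$ forces $\{𝘛_n→0\}$ to coincide with $\{𝘏_n→1\}$, the latter having probability $𝘏₀=H(W)$ by Arıkan's martingale. A union bound over $m≥n$ of the bad event $\{𝘛_m→0$ but $𝘛_n≥e^{-n^{2/3}}\}$, argued exactly as in \cref{lem:Z-en23} (either $𝘛_n$ itself already lies in the middle interval, or some later $𝘛_m$ must visit the corresponding middle interval on its way back down to $0$), contributes at most $\sum_{m≥n}2^{-ϱm+o(m)}<2^{-ϱn+o(n)}$, yielding the claim.

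The main obstacle is the eigen-bound step: without a lower-bound analog of \cref{ine:2-Z^2}, the admissible region for the doubling side of the supremum is strictly larger than in \cref{thm:bdmc-mu}, so a priori the same spline could yield a larger eigenvalue. The hypothesis \emph{Assume \cref{thm:bdmc-mu}} suggests that the spline used there---peaked near $1/2$ so that its supremum over $[t√{2-t²},2t-t²]$ is already attained at an interior point---continues to attain its maximum at the same point over the enlarged interval $[0,2t-t²]$. Even if this specific spline falls short, a compactness argument guarantees that the resulting supremum is strictly less than $1$, and the rest of the proof adapts verbatim with a weaker constant substituted for $ϱ$.
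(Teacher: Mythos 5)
Your overall architecture (expulsion estimate, limit identification via the martingale, union bound over the bad event) matches the paper, but the first step—the eigen bound for $T$—has a genuine gap, and it is exactly the step the paper is structured to avoid. The constant $ϱ=1/4.714$ in \cref{thm:bdmc-mu} comes from a numerically optimized spline $h$ evaluated over the \emph{restricted} range $z\sqrt{2-z^2}\le z'\le 2z-z^2$, and that restriction is supplied by the lower bound \cref{ine:2-Z^2}, which has no $T$-counterpart (the paper flags this explicitly right after stating the evolution of $T$). Your relaxed supremum over $t'\in[0,2t-t^2]$ is taken over a strictly larger set, so it is a priori larger than $2^{-1/4.714}$, and since $h$ is only known numerically you cannot assert that its maximizer stays interior to the old interval. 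Your fallback—compactness gives \emph{some} eigenvalue $<1$—only proves the lemma with a weaker, unspecified constant, whereas the statement fixes $ϱ=1/4.714$; so as written the proposal does not establish the claimed bound.

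The paper's proof never re-derives an eigen inequality for $T$ at all. It uses the quantified common-fate property (the two-sided Hölder relations packaged in \cref{lem:ZvsH,lem:TvsH}) to dominate the middle event for $𝘛_n$ by a middle event for $𝘡_n$ with a slightly widened threshold, $𝘗\{e^{-n^{2/3}}\le 𝘛_n\le 1-e^{-n^{2/3}}\}<𝘗\{e^{-n^{3/4}}\le 𝘡_n\le 1-e^{-n^{3/4}}\}$, the change of exponent from $n^{2/3}$ to $n^{3/4}$ absorbing the Hölder constants for large $n$; then Markov applied to $𝘌[h(𝘡_n)]\le 2^{-ϱ n+o(n)}$ from \cref{thm:bdmc-mu} gives the expulsion bound with the \emph{same} $ϱ$. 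That is why the hypothesis ``Assume \cref{thm:bdmc-mu}'' suffices and why no $T$-analogue of \cref{ine:2-Z^2} is needed. If you replace your eigen-bound step by this translation step, the remainder of your argument—$𝘗\{𝘛_n\to 0\}=𝘗\{𝘏_n\to 1\}=H(W)$ and the supermartingale no-jumping union bound $\sum_{m\ge n}2^{-ϱ m+o(m)}$—goes through essentially as you wrote it and coincides with the paper's proof.
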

	
	\begin{proof}
		By \cref{lem:ZvsH,lem:TvsH}, $T(W)→0$ iff $H(W)→1$ iff $Z(W)→1$ on the noisy end
		and $T(W)→1$ iff $H(W)→0$ iff $Z(W)→0$ on the reliable end.
		More strongly, each “iff\kern.1em” can be stated as
		two Hölder conditions in two directions.
		That is to say, there exist constants $c,d>0$ such that
		$T(W)≤c(1-H(W))^d$ and $1-H(W)≤cT(W)^d$, as well as
		$1-H(W)≤c(1-Z(W))^d$ and $1-Z(W)≤c(1-H(W))^d$,
		as well as the other four inequalities on the reliable end.
		
		This common-fate property implies that, if $n$ is sufficiently large,
		\begin{align*}
			\qquad&\kern-2em
			𝘗｛e^{-n^{2/3}}≤𝘛_n≤1-e^{-n^{2/3}}｝<𝘗｛e^{-n^{3/4}}≤𝘡_n≤1-e^{-n^{3/4}}｝	\\
			&	≤÷{𝘌[h(𝘡_n)]}{h(\exp(-n^{3/4}))}
				≤÷{h(𝘡₀)2^{-ϱ}}{\exp(-n^{3/4})}<2^{-ϱn-o(n)}.
		\end{align*}
		This is the counterpart to \cref{ine:bdmc-expel}.
		
		It remains to show that $𝘗\{𝘛_n→0\}=H(W)$ and that
		the bad event $𝘉_n≔\{𝘛_n→0$ but $𝘛_n≥\exp(-n^{2/3})\}$
		is exponentially rare, i.e., $2^{-ϱ(n)+o(n)}$-rare.
		The former is again by the common-fate property
		$𝘗\{𝘛_n→0\}=𝘗\{𝘏_n→1\}=𝘌[𝘏_∞]=𝘌[𝘏₀]=H(W)$.
		The latter is by that $\{𝘛_n\}$, a supermartingale, cannot jump back and forth
		between the neighborhood of $0$ and the neighborhood of $1$,
		so $𝘗(𝘉_n)$ is bounded from above by $∑_{m≥n}2^{-ϱm+o(m)}<2^{-ϱn+o(n)}$.
		
		We can finally conclude that
		\[𝘗｛𝘛_n<e^{-n^{2/3}}｝>𝘗\{𝘛_n→0\}-𝘗(𝘉_n)>H(W)-2^{-ϱn+o(n)}.\]
		This calls the end of the proof.
	\end{proof}
	
	The last lemma is the most technical one in this chapter.
	It uses the common-fate property to show that not only do
	$\{𝘡_n\}$, $\{𝘏_n\}$, and $\{𝘛_n\}$ control each other's limit,
	but they also control each other's pace of convergence.
	In particular, we can also show that $𝘗\{𝘏_n<\exp(-n^{2/3})\}>I(W)-2^{-ϱn+o(n)}$ and
	that $𝘗\{1-𝘏_n<\exp(-n^{2/3})\}>H(W)-2^{-ϱn+o(n)}$ although it is not useful here.
	
	The een13 behavior follows.
	
	\begin{lem}[From en23 to een13]\label{lem:T-een13}
		Given \cref{lem:T-en23}, that is, given
		\[𝘗｛𝘛_n<e^{-n^{2/3}}｝>H(W)-2^{-ϱn+o(n)},\]
		we have
		\[𝘗｛𝘛_n<\exp\(-e^{n^{1/3}}\)｝>H(W)-2^{-ϱn+o(n)}.\]
	\end{lem}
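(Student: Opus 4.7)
The plan is to mirror the proof of \cref{lem:Z-een13} with the two coin-flip branches swapped. The parallel is structural: \cref{ine:Z^2,ine:2Z-Z^2} say that branch~$2$ squares $Z$ and branch~$1$ at worst doubles $Z$, whereas \cref{ine:T^2,ine:2T-T^2} say that branch~$1$ squares $T$ and branch~$2$ at worst doubles $T$. Since $𝘑_n$ is uniform on $\{1,2\}$, so is $3 - 𝘑_n$, and every MGF computation about squaring events for $𝘡_n$ translates verbatim to an identical MGF computation about squaring events for $𝘛_n$.

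First, I would record the concrete branch bound: whenever $𝘛_n < 2^{-20}$ we have $2𝘛_n - 𝘛_n² \leq 2𝘛_n \leq 𝘛_n^{19/20}$, so $𝘛_{n+1} \leq 𝘛_n^{19(3 - 𝘑_{n+1})/20}$. The relabeled coin $3 - 𝘑_n$ still satisfies $𝘌[(3 - 𝘑_n)^{-1}] \cdot 6^{1/20} < 2^{-1/3.6}$ by the same numerical fact, and it is the object that will feed Markov's inequality.

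Next, I would set up the same watch-list events at scales $m = √n, 2√n, \dotsc, n - √n$ for a perfect square $n$, with $𝘡$ replaced by $𝘛$ throughout: let $𝘈_m \coloneqq \{𝘛_m < \exp(-m^{2/3})\} \setminus 𝘌_0^{m - √n}$, let $𝘉_m$ be the subevent of $𝘈_m$ on which $𝘛_l \geq 2^{-20}$ for some $l \geq m$, let $𝘊_m$ be the subevent of $𝘈_m$ on which $(3 - 𝘑_{m+1})(3 - 𝘑_{m+2}) \dotsm (3 - 𝘑_{m + √n}) \leq 6^{√n/20}$, and let $𝘌_m \coloneqq 𝘈_m \setminus (𝘉_m \cup 𝘊_m)$. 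The three ratio bounds then carry over via the same three tools: Ville's inequality applied to the supermartingale $\{𝘛_l\}$ yields $𝘣_m / 𝘢_m \leq 2^{20} \exp(-m^{2/3})$; Markov's inequality on the MGF of the relabeled coin yields $𝘤_m / 𝘢_m < 2^{-√n / 3.6}$; and \cref{lem:T-en23} yields $(𝘨_{m - √n} - 𝘢_m)^+ < 2^{-ϱ m + o(m)}$. These feed into the same recurrence, whose master-theorem-style solution gives $𝘦_0^{n - √n} > H(W) - 2^{-ϱ n + o(n)}$; a telescoping inside $𝘌_0^{n - √n}$ that uses $𝘛_{l+1} \leq 𝘛_l^{19(3 - 𝘑_{l+1})/20}$ together with the failure of $𝘉_m$ and $𝘊_m$ then delivers $𝘛_n \leq 𝘛_m^{2^{√n/20}} < \exp(-e^{n^{1/3}})$ for $n$ large, exactly as in the original proof.

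The only potential obstacle is that \cref{lem:squares} gave a two-sided pinch on $Z(W^{(1)})$, whereas the evolution of $T$ in \cref{ine:T^2,ine:2T-T^2} has only the one-sided upper bound on $T(W^{(2)})$. However, a careful reading of the proof of \cref{lem:Z-een13} shows that the lower bound $Z(W^{(1)}) \geq Z(W)√{2 - Z(W)²}$ is never invoked there---it was needed only in the eigen/en23 step, which has already been absorbed into \cref{lem:T-en23} via the common-fate property. Hence the asymmetry between squaring and doubling in the evolution of $T$ is already sufficient for the entire argument, and the non-square $n$ case is handled by the same rounding device as in \cref{cha:origin}.
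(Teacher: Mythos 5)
Your proposal is exactly the paper's argument: the paper's own proof of \cref{lem:T-een13} simply declares that the proof of \cref{lem:Z-een13} carries over verbatim because $𝘛_n$, like $𝘡_n$, is squared or doubled with equal chance, which is precisely your relabeled-coin ($3-𝘑_n$) rerun of the watch-list/Ville/Markov machinery. Your closing observation that the lower bound \cref{ine:2-Z^2} is never used in \cref{lem:Z-een13} (it only matters for the eigen/en23 step, already absorbed into \cref{lem:T-en23}) matches the paper's remark on the missing counterpart for $T$, so the proposal is correct and takes essentially the same route.
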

	
	\begin{proof}
		The proof is just a copy of that of \cref{lem:Z-een13}.
		All we did there was by that $𝘡_n$ is squared or doubled with equal probability.
		Since $𝘛_n$ is squared or doubled with equal chance, the conclusion follows.
	\end{proof}
	
	The ultimate behavior follows.
	
	\begin{thm}[From een13 to e2pin]\label{thm:T-e2pin}
		Fix a pair $(π,ρ)∈𝒪$.
		Given the conclusion of \cref{lem:T-een13}, that is, given
		\[𝘗｛𝘛_n<\exp\(-e^{n^{1/3}}\)｝>H(W)-2^{-ϱn+o(n)},\]
		then
		\[𝘗｛𝘛_n<e^{-2^{πn}}｝>H(W)-2^{-ρn+o(n)}.\]
	\end{thm}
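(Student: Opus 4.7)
The plan is to mirror the proof of \cref{thm:Z-e2pin} line by line, swapping the roles of the two branches of the channel transformation, because for $T$ it is the first branch that squares and the second that (at most) doubles. Since the two branches are selected with equal probability, the distributional content of the argument survives this swap verbatim; the previous lemma's proof (for the een13 behavior of $𝘛_n$) has already tacitly used this observation.

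Concretely, I would begin by choosing the same constants as before: given $(π,ρ)∈𝒪$, pick $ε>0$ small enough that \cref{ine:ray-gap} still holds, and then pick $δ>0$ small enough that $T(WＷ2)≤2T(W)-T(W)²≤T(W)^{1-ε}$ whenever $T(W)<δ$. Combined with $T(WＷ1)=T(W)²$, this yields $𝘛_{n+1}≤𝘛_n^{(3-𝘑_{n+1})(1-ε)}$ when $𝘛_n<δ$, the $T$-analog of $𝘡_{n+1}≤𝘡_n^{𝘑_{n+1}(1-ε)}$. The substitution $𝘑_{n+1}↦3-𝘑_{n+1}$ is merely a relabeling of the coin; its uniform distribution on $\{1,2\}$ is unchanged.

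Next, I would reproduce the ``define events'' block with $𝘡$ replaced by $𝘛$ and $I(W)$ replaced by $H(W)$: set $𝘈_m≔｛𝘛_m<\exp(-e^{m^{1/3}})｝、𝘈₀^{m-√n}$, let $𝘉_m⊆𝘈_m$ record that $𝘛_l≥δ$ for some $l≥m$, and let $𝘊_m⊆𝘈_m$ record that $(3-𝘑_{m+1})\dotsm(3-𝘑_n)≤2^{πn+2ε(n-m)}$. The three tail bounds then go through identically: Ville's inequality applied to the supermartingale $\{𝘛_n\}$ dominates $𝘣_m/𝘢_m$ by $𝘛_m/δ<\exp(-e^{m^{1/3}})/δ$; the large-deviations tail bound for binomial counts dominates $𝘤_m/𝘢_m$ by $2^{ϱm-ρn}$ via \cref{ine:ray-gap}; and the een13 hypothesis (\cref{lem:T-een13}) supplies $𝘧_m^+<2^{-ϱm+o(m)}$. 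Plugging these three ingredients into the recurrence of \cref{ine:engine-e2pin} and solving it (cf.\ Cesàro summation) yields $𝘨_{n-√n}<2^{-ρn+o(n)}$, hence $𝘦₀^{n-√n}>H(W)-2^{-ρn+o(n)}$.

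Finally, the ``analyze $𝘌₀^{n-√n}$'' step telescopes: when $𝘌_m$ occurs, $𝘛_m<\exp(-e^{m^{1/3}})$, $𝘛_l<δ$ for all $l≥m$, and the exponent $(3-𝘑_{m+1})\dotsm(3-𝘑_n)(1-ε)^{n-m}$ exceeds $2^{πn}$ once $ε<1/2$, so $𝘛_n≤𝘛_m^{2^{πn}}<\exp(-2^{πn})$. Rounding non-square $n$ down to the nearest perfect square handles the general case. I do not anticipate a genuine obstacle: the only new feature relative to the $Z$-proof is the branch swap, and because $\{𝘑_n\}$ is i.i.d.\ uniform on $\{1,2\}$ that swap is invisible to every probabilistic estimate. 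The work is essentially bookkeeping.
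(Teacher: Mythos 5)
Your proposal is correct and is exactly what the paper intends: the paper's own proof of this theorem consists of the single observation that the argument of \cref{thm:Z-e2pin} carries over because $\{𝘛_n\}$ is likewise squared or doubled with equal chance, which is precisely the branch-swap bookkeeping you carry out. The substitution $𝘑_{n+1}↦3-𝘑_{n+1}$ together with Ville's inequality for the supermartingale $\{𝘛_n\}$ and the een13 input from \cref{lem:T-een13} is all that is needed, so your write-up is a faithful (and more detailed) rendering of the paper's argument.
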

	
	\begin{proof}
		The proof follows the same type of argument as in \cref{thm:Z-e2pin}.
		All we need is based on the fact/axiom that
		the process is squared or doubled with equal chance.
	\end{proof}

\section{Chapter Wrap up}

	In this chapter, we first see two coding problems---source coding for
	lossy compression and noisy channel coding over asymmetric channels.
	I then argued that polar coding applies to those scenarios
	with an extension to asymmetric (test) channels.
	To deal with asymmetric channels, we prove a series of lemmas whose takeaway is that
	we only have to consider the SBDMC $˜W$ that lies in the same equivalence class.
	To characterize the performance of polar coding,
	it remains to understand the behavior of $\{𝘛_n\}$.
	And then we move on to proving the behavior of $\{𝘛_n\}$
	using the same techniques we used to handle $\{𝘡_n\}$.
	
	Now that we have proved
	$𝘗\{𝘡_n<\exp(-2^{πn})\}>I(W)-2^{-ρn+o(n)}$ (from the previous chapter)
	and $𝘗\{𝘛_n<\exp(-2^{πn})\}>H(W)-2^{-ρn+o(n)}$ (in this chapter),
	plug them into \cref{ine:D<sum,ine:P<sumsum}.
	For lossy compression, my result implies that lossy compression via polar coding
	enjoys code rate $I(W)+2^{-ρn+o(n)}$ and distortion $Δ+\exp(-2^{πn})$.
	Eliminate the $o(n)$ term with a topological argument.
	
	\begin{cor}[good code for lossy compression]
		For any lossy compression problem whose test channel $W$ is a BDMC,
		using polar coding yields excess of distortion $D-Δ<\exp(-N^π)$
		and gap to capacity $R-I(W)<N^{-ρ}$ for any $(π,ρ)∈𝒪$ and big $N$.
	\end{cor}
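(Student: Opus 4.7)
The plan is to combine the dual-picture theorem \cref{thm:T-e2pin}---$𝘗\{T(𝘞_n) < \exp(-2^{πn})\} > H(W) - 2^{-ρn+o(n)}$ for $(π,ρ)∈𝒪$---with the polar lossy-compression reduction sketched earlier in the chapter, handling possible asymmetry of the test channel via the symmetrization equivalence $≅$. First, given a BDMC test channel $W$, the reduction-to-symmetry lemma produces an SBDMC $˜W ≅ W$ sharing all class-invariant parameters ($H$, $Z$, $T$) and whose channel transformation matches that of $W$; the parameter-process machinery, \cref{thm:T-e2pin} included, therefore transfers verbatim from $˜W$ to $W$.

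With this reduction in hand, I would instantiate the polar lossy-compression code. Choose $n$ large, set $N=2^n$, and let $𝒥$ collect the depth-$n$ synthetic channels whose $T$-parameter falls below $θ ≔ \exp(-2^{πn})$. Every summand of \cref{ine:D<sum} is then at most $θ$, so $D - Δ < θ = \exp(-N^π)$, which settles the distortion claim. The rate identity $R = 1 - 𝘗\{T(𝘞_n) < θ\}$ derived earlier in the chapter, combined with \cref{thm:T-e2pin}, yields $R < 1 - H(W) + 2^{-ρn+o(n)} = I(W) + N^{-ρ+o(1)}$ in the SBDMC case. The topological trick from the summary of \cref{cha:origin} then absorbs the $o(1)$: pick any $ρ' > ρ$ with $(π,ρ') ∈ 𝒪$ and rerun the argument for $(π,ρ')$ to conclude $R - I(W) < N^{-ρ}$ once $N$ is large enough.

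The step I expect to demand the most care is the rate analysis when $W$ is genuinely asymmetric, so that the induced input marginal $Q$ is non-uniform and $I(W) ≠ 1 - H(W)$. Here one must additionally polarize $Q$ (viewed as a channel with constant output) and restrict $𝒥$ to indices that are simultaneously ``noisy for $Q$'' and ``reliable for $W$''; the rate is then read off as a joint count over the two polarized processes. The $T$-process for $Q$ obeys the same evolution inequalities as $\{𝘛_n\}$, so \cref{thm:T-e2pin} applies to it in parallel; but verifying that the joint reliability/noisiness frequencies interlock to collapse the bound to $I(W) + o(1)$ rather than the coarser $1 - H(W) + o(1)$ is the technical heart of the asymmetric construction, mirroring the $Z$/$T$ pairing used for asymmetric-channel coding in \cref{ine:P<sumsum}.
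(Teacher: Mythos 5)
Your proposal follows essentially the same route as the paper: the paper likewise fixes the threshold $θ=\exp(-2^{πn})$, plugs \cref{thm:T-e2pin} into \cref{ine:D<sum} to get $D-Δ<\exp(-N^π)$, reads the rate off $R=1-𝘗\{𝘛_n<θ\}$, and removes the $o(n)$ by the same topological fluctuation of $(π,ρ)$ within $𝒪$. The asymmetric-input accounting you flag as the remaining technical heart is precisely what the paper also leaves implicit for this corollary—its explicit treatment appears only in the neighboring BDMC channel-coding corollary via the four $Z$/$T$ inequalities for $\{𝘞_n\}$ and $\{𝘘_n\}$ plus inclusion–exclusion and the monotonicity/disjointness facts—so your sketch of polarizing $𝘘_n$ alongside $𝘞_n$ and counting the jointly reliable-for-$W$, noisy-for-$Q$ indices is the intended completion rather than a deviation.
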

	
	The asymmetric channel case is more involved.
	We actually have, and need, four inequalities
	\begin{gather*}
		𝘗\{Z(𝘞_n)<θ\}>1-H(W)+γ,	\\
		𝘗\{T(𝘞_n)<θ\}>H(W)+γ,		\\
		𝘗\{Z(𝘘_n)<θ\}>1-H(Q)+γ,	\\
		𝘗\{T(𝘘_n)<θ\}>H(Q)+γ,		
	\end{gather*}
	and basic facts
	\begin{gather*}
		\{Z(𝘞_n)<θ\}∩\{T(𝘞_n)<θ\}=0,	\\
		\{Z(𝘘_n)<θ\}∩\{T(𝘘_n)<θ\}=0,	\\
		\{Z(𝘘_n)<θ\}∩\{T(𝘞_n)<θ\}=0.
	\end{gather*}
	Here, $\{𝘘_n\}$ is the channel process grown from $Q$, and
	$θ$ (threshold) and $γ$ (gap) are the shorthands of the complicated functions.
	The basic facts are consequences of the common-fate property
	and the monotonicity property $H(W)=H(X｜Y)≤H(X)=H(X｜♣)=H(Q)$.
	
	Applying the inclusion--exclusion principle, we conclude that
	$𝘗\{T(𝘘_n)<θ$ and $Z(𝘞_n)<θ\}>H(Q)-H(W)-6γ=I(W)-6γ$.
	This implies the existence of codes with $Ｐ<\exp(-2^{πn})$ and $R>I(W)-2^{-ρn}$.
	
	\begin{cor}[Good code for BDMC]
		For any BDMC, polar coding yields block error probability $Ｐ<\exp(-N^π)$
		and gap to capacity $I(W)-R<N^{-ρ}$ for any $(π,ρ)∈𝒪$ and big $N$.
	\end{cor}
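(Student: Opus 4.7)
The plan is to reduce the BDMC corollary to the two moderate-deviation bounds proved in this chapter for $\{𝘡_n\}$ and $\{𝘛_n\}$, and then combine them via inclusion--exclusion to extract a single index set $𝒥$ that is simultaneously good for transmission through $W$ and for shaping the input toward the capacity-achieving distribution $Q$.

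First I would fix a capacity-achieving $Q$ for $W$, package $Q$ itself as a degenerate channel $Q(x,♣)$ with a constant output, and grow two channel processes $\{𝘞_n\}$ and $\{𝘘_n\}$ from $W$ and $Q$ respectively, driven by the \emph{same} Bernoulli sequence $\{𝘑_n\}$. By the ``Parameters in class'' and ``Transformation in class'' lemmas I may pass to symmetrized equivalents $˜W≅W$ and $˜Q≅Q$ without disturbing $Z$, $T$, or the tree structure; this licenses the use of \cref{thm:Z-e2pin,thm:T-e2pin} verbatim in the asymmetric setting.

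Next, for any $(π,ρ)∈𝒪$, I would set $θ≔\exp(-2^{πn})$ and $γ≔2^{-ρn+o(n)}$ and harvest the four lower bounds $𝘗\{Z(𝘞_n)<θ\}>1-H(W)-γ$, $𝘗\{T(𝘞_n)<θ\}>H(W)-γ$, $𝘗\{Z(𝘘_n)<θ\}>1-H(Q)-γ$, $𝘗\{T(𝘘_n)<θ\}>H(Q)-γ$ by invoking those two theorems on each of $˜W$ and $˜Q$. The wrap-up paragraph already lists the three disjointness facts, so a standard inclusion--exclusion argument then yields
\[𝘗\{T(𝘘_n)<θ\text{ and }Z(𝘞_n)<θ\}>I(W)-O(γ).\]
Declaring $𝒥$ to be the set of depth-$n$ indices where this joint event occurs produces a code of rate $R>I(W)-O(γ)$, while the bound \cref{ine:P<sumsum} controls the error probability by $Ｐ≤2Nθ$. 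A final topological wiggle that shrinks $(π,ρ)$ slightly within the open region $𝒪$ absorbs the $o(n)$ term and the factor of $2$, producing the stated $Ｐ<\exp(-N^π)$ and $I(W)-R<N^{-ρ}$ for large $N$.

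The step I expect to be the main obstacle is justifying the third disjointness fact, $\{Z(𝘘_n)<θ\}∩\{T(𝘞_n)<θ\}=∅$. This requires a sample-path monotonicity along the coupled trees, namely $H(𝘞_n)≤H(𝘘_n)$ for every realization of $\{𝘑_n\}$, reflecting that $𝘞_n$ always conditions on at least as much information as $𝘘_n$ does at the corresponding node. Combined with the common-fate inequalities of \cref{lem:ZvsH,lem:TvsH}, this monotonicity forces $Z(𝘘_n)$ small to imply $H(𝘞_n)$ small to imply $T(𝘞_n)$ close to $1$, so the two events cannot coincide. Once this node-wise monotonicity is in hand, the rest of the argument is routine inclusion--exclusion bookkeeping.
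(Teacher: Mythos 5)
Your proposal matches the paper's own argument: the wrap-up of the asymmetric-channels chapter proves this corollary by exactly the same route, namely the four moderate-deviation bounds for $Z$ and $T$ on the two coupled processes $\{𝘞_n\}$ and $\{𝘘_n\}$, the three disjointness facts (the third justified, as you do, by the common-fate property together with the node-wise monotonicity $H(W)=H(X｜Y)≤H(X)=H(Q)$), inclusion--exclusion to get $𝘗\{T(𝘘_n)<θ†and†Z(𝘞_n)<θ\}>I(W)-O(γ)$, and \cref{ine:P<sumsum} plus a slight shrinking of $(π,ρ)$ inside the open region $𝒪$ to absorb the minor terms. No gaps; it is essentially the paper's proof with the monotonicity step spelled out in more detail.
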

	
	The next chapter contains yet another application of \cref{thm:Z-e2pin,thm:T-e2pin}.
	In brief, the idea is to prune the channel process
	when $𝘞_n$ becomes too reliable or too noisy.
	And it is only after learning how reliable/noisy
	$𝘞_n$ is that we know where to prune.

\section{A Side Note on Lossless Compression}

	Lossless compression with side information at the decoder side can be solved
	by polar coding as well but assumes a different format \cite{Arikan10,CK10}.
	In this scenario, the random variable to be compressed is denoted by $X$,
	and will be treated as the input of an abstract channel $W$;
	the side information accessible by the decoder
	is denoted by $Y$ and treated as the output of $W$.
	
	$W$ is then polarized.
	For if a descendant of $W$ is reliable, it means that
	its input is mostly determined by its output, and hence needs no extra action.
	If, otherwise, a descendant of $W$ is noisy, its input is
	largely independent from its output, and we should record this input.
	The code rate is thus the density of the noisy descendants.
	There will be a block error if the decoder fails to
	recover the input of a descendant that was not recorded;
	the block error probability is thus the sum of
	the $Z$'s of the reliable descendants.
	
	This come down to the following corollary.
	
	\begin{cor}[Good code for lossless compression]
		For any binary source $X$ to be compressed losslessly and side information $Y$,
		polar coding yields block error probability $Ｐ<\exp(-N^π)$
		and gap to entropy $R-H(X｜Y)<N^{-ρ}$, for any $(π,ρ)∈𝒪$ and big $N$.
	\end{cor}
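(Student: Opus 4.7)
My plan is to reduce this to the channel-polarization theory developed in the preceding two chapters. Treat the source $X$ as the input and the side information $Y$ as the output of an abstract BDMC $W$ with input distribution $Q(x) := P_X(x)$; then $H(W) = H(X｜Y)$ and the goal becomes making $R$ close to $H(W)$. Since $Q$ need not be uniform, $W$ is a priori asymmetric, so I first symmetrize it via the flag construction of \cref{cha:dual} to obtain an equivalent SBDMC $˜W$ whose parameters and channel transformations match those of $W$. Invoking \cref{thm:Z-e2pin} on $˜W$ and translating back through the equivalence gives, for every $(π,ρ)∈𝒪$,
\[
𝘗\{𝘡_n < \exp(-2^{πn})\} > 1 - H(W) - 2^{-ρn + o(n)}.
\]

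Next I describe the scheme. Set $N := 2^n$ and threshold $θ := \exp(-2^{πn})$, and let $𝒥$ collect the depth-$n$ descendants of $W$ whose $Z$-parameter is at least $θ$; these are the ``noisy'' ones. The compressor applies the inverse polar transform to $X_1^N$ to obtain $U_1^N$, then transmits precisely those coordinates of $U_1^N$ whose indices lie in $𝒥$. The decoder runs successive cancellation across all $N$ indices: for an index in $𝒥$ it reads the coordinate verbatim from the transmission; for an index outside $𝒥$ it estimates the coordinate by maximum likelihood using $Y_1^N$ together with previously decoded bits. Applying the forward polar transform then recovers $X_1^N$.

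The two bounds follow directly from the displayed inequality. The code rate is $R = \abs{𝒥}/2^n = 1 - 𝘗\{𝘡_n < θ\} < H(W) + 2^{-ρn + o(n)} = H(X｜Y) + 2^{-ρn + o(n)}$. A block error occurs only when the successive-cancellation stage mis-estimates some coordinate outside $𝒥$, and a union bound parallel to \cref{ine:P<sum} gives $Ｐ \le \sum_{j_1^n \notin 𝒥} Z(\dotsb) \le N θ = 2^n \exp(-2^{πn}) < \exp(-N^π)$ for large $n$. A final topological fluctuation of $(π,ρ)$---identical to the one used in the summary of \cref{cha:origin}---absorbs the $o(n)$ term into a slight contraction of the exponents.

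The only conceptual wrinkle, and the closest thing to an obstacle, is that the roles of reliable and noisy descendants are inverted relative to \cref{cha:origin}: the noisy descendants are now the ones transmitted (contributing to the rate), while the block error comes from the reliable descendants (which are the ones estimated). This inversion is why the target quantity is $H(X｜Y)$ rather than $I(W)$. No new probabilistic estimate is needed---\cref{thm:Z-e2pin} supplies exactly the right inequality, and the rest is bookkeeping inherited from \cref{cha:origin,cha:dual}.
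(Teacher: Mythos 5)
Your proposal is correct and follows essentially the same route as the paper: treat $X$ as the input and $Y$ as the output of an abstract (generally asymmetric) channel $W$, use the symmetrization/equivalence of \cref{cha:dual} so that \cref{thm:Z-e2pin} yields $𝘗\{𝘡_n<\exp(-2^{πn})\}>1-H(W)-2^{-ρn+o(n)}$, record the noisy descendants (rate $=1-𝘗\{𝘡_n<θ\}<H(X｜Y)+2^{-ρn+o(n)}$), bound the error by the sum of $Z$'s of the unrecorded reliable descendants ($≤Nθ$), and absorb the $o(n)$ by the same fluctuation argument. The paper's own justification is exactly this reduction, stated more briefly in its side note on lossless compression.
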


\chapter{Pruning Channel Tree}\label{cha:prune}

	\dropcap
	Complexity of encoder and decoder influences,
	sometimes dominates, practicality of a code.
	Throughout the history of coding---Hamming, Reed--Muller, turbo, LDPC, polar,
	et seq---real world codes are always easy to implement in the first place
	regardless of achieving capacity or not (usually not).
	This is why, apart from $N$, $P$, and $R$,
	we should care about the complexities of codes.
	
	Polar code, an outlier in said list, is the first code
	to achieve capacity in combination with a very low complexity.
	Considering that the decoder should at least read in
	all $N$ symbols in a code block, which costs $O(N)$ resources,
	polar coding's complexity of $O(N㏒N)$ is impressive, if not surprising.
	
	In this chapter, a modification is made to polar coding.
	It aims to eliminate the inefficient components
	in the encoder/decoder to reduce the complexity even further.
	By doing so carefully, the new complexity is $O(N㏒(㏒N))$,
	the block error probability decays quasi-polynomially fast to $0$,
	and the gap to capacity decays as fast as before.
	For comparison, existing works like \cite{AK11,EMFLK17,MHCG20}
	did not break the $O(N㏒N)$ barrier, and some latest work \cite{HMFVCG20}
	achieves $N㏒(㏒N)$ \emph{latency} only in the fully-parallel mode.
	
	Being called pruning, this technique is inspired by a trinitarian correspondence
	among the encoder/decoder, the channel tree, and the channel process.
	The argument start from the channel process side:
	Since $𝘏_n→𝘏_∞∈\{0,1\}$, the increments $𝘏_{n+1}-𝘏_n$ converge to $0$.
	By the correspondence, that means that on the channel tree side,
	branches that are deep enough are purposeless---%
	applying channel transformation barely changes anything.
	Now turn our focus to the encoder/decoder side;
	the said observation implies that some components in the encoder/decoder
	are consuming resources without helping the code become better,
	and we should have removed them.
	
	The goal of the current chapter is to make precise the last paragraph.

\section{Encoder/Decoder vs Tree vs Process}

	The design of the encoder and decoder of polar coding is best described by figures.
	There is a device called EU (encoding unit)
	and another device called DU (decoding unit);
	their actual implementations are not of interest here.
	But the devices are such that, when you wrap two copies of $W$ with one EU--DU pair,
	like \cref{fig:transform} does, the top pair of pins (A and B) behaves like $WＷ1$,
	and the bottom pair of pins (C and D) behaves like $WＷ2$.
	We say that \cref{fig:transform} corresponds to a tree with three vertices:
	$W$ the root and $WＷ1$ and $WＷ2$ its children.
	
	\tikzset{
		E/.pic={
			\pgftransformscale{1/6}\def~{coordinate}
			\draw[very thick](-6,3)~(=o0)--(6,3)~(=i0)(-6,-3)~(=o1)--(6,-3)~(=i1);
			\draw[fill=white](-5,-5)rectangle(5,5)(0,0)node{EU};
		},
		D/.pic={
			\pgftransformscale{1/6}\def~{coordinate}
			\draw[very thick](-6,3)~(=i0)--(6,3)~(=o0)(-6,-3)~(=i1)--(6,-3)~(=o1);
			\draw[fill=white](-5,-5)rectangle(5,5)(0,0)node{DU};
		},
		C/.pic={
			\pgftransformscale{1/6}\def\c{coordinate}
			\draw[very thick](-4,0)\c(=e)--(4,0)\c(=d);
			\draw[fill=white](-3,-2)rectangle(3,2)(0,0)node{CH};
		}
	}
	\begin{figure}
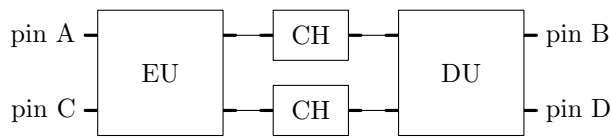

		\tikz[scale=1/6]{
			\draw
									(0,12)pic(c0){C}
				(-12,9)pic(e){E}						(12,9)pic(d){D}
									(0,6)pic(c1){C}
			;
			\draw
				\foreach\f in{e,d}{
					\foreach\j in{0,1}{
						(c\j=\f)--(\f=i\j)
					}
				}
				(e=o0)node[left]{pin A}(d=o0)node[right]{pin B}
				(e=o1)node[left]{pin C}(d=o1)node[right]{pin D}
			;
		}
		\caption{
			The design of encoder and decoder---level $1$.
		}\label{fig:transform}
	\end{figure}
	
	To construct the grandchildren of $W$, wrap more
	EU--DU pairs around two copies of \cref{fig:transform}.
	For instance, in \cref{fig:transform(1)}, the two copies of pin A
	are connected to another EU, the two copies of pin B to another DU.
	Recall that pin A--pin B behaves like the input--output of $WＷ1$;
	so wrapping one more layer of EU--DU transforms it further into
	$(WＷ1)Ｗ1$ (from pin E to pin F) and $(WＷ1)Ｗ2$ (from pin G to pin H).
	Note that the two copies of pin C and pin D are naked (not connecting to anything);
	this represents the fact that there are two copies of $WＷ2$
	that are not (yet) transformed into $(WＷ2)Ｗ1$ and $(WＷ2)Ｗ2$.
	Now \cref{fig:transform(1)} corresponds to a channel tree with five vertices:
	$W$ the root, $WＷ1$ and $WＷ2$ its children, plus
	$(WＷ1)Ｗ1$ and $(WＷ1)Ｗ2$ the children of the elder sibling.
	
	\begin{figure}
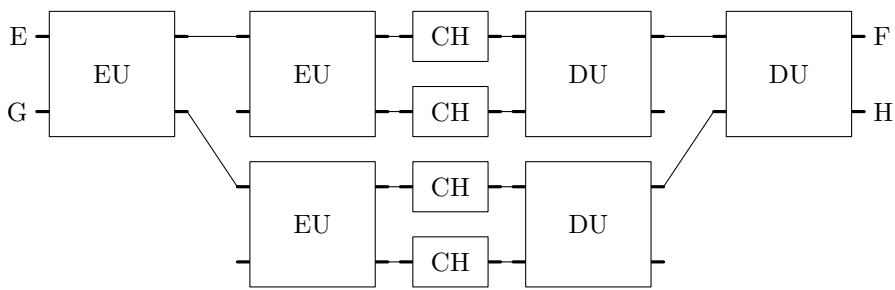

		\tikz[scale=1/6]{
			\draw
											(0,24)pic(c00){C}
			(-27,21)pic(2e0){E}(-11,21)pic(1e0){E}	(11,21)pic(1d0){D}(27,21)pic(2d0){D}
											(0,18)pic(c01){C}
											
											(0,12)pic(c10){C}
								(-11,9)pic(1e1){E}		(11,9)pic(1d1){D}
											(0,6)pic(c11){C}
			;
			\draw
				\foreach\f in{e,d}{
					\foreach\k in{0,1}{
						\foreach\j in{0,1}{
							(c\k\j=\f)--(1\f\k=i\j)
						}
						(1\f\k=o0)--(2\f0=i\k)
					}
				}
				(2e0=o0)node[left]{E}(2d0=o0)node[right]{F}
				(2e0=o1)node[left]{G}(2d0=o1)node[right]{H}
			;
		}
		\caption{
			The design of encoder and decoder---%
			transforming $WＷ1$ further at level $2$.
		}\label{fig:transform(1)}
	\end{figure}
	
	Duplicate \cref{fig:transform(1)}.
	This time, wrap around the two copies of pin G and H
	as shown in \cref{fig:transform(1)(2)}.
	Then we are effectively transforming $(WＷ1)Ｗ2$
	into $((WＷ1)Ｗ2)Ｗ1$ and $((WＷ1)Ｗ2)Ｗ2$.
	Now \cref{fig:transform(1)(2)} corresponds to a channel tree with seven vertices:
	$W$, $WＷ1$, $WＷ2$, $(WＷ1)Ｗ1$, and $(WＷ1)Ｗ2$ and its children.
	
	\begin{figure}
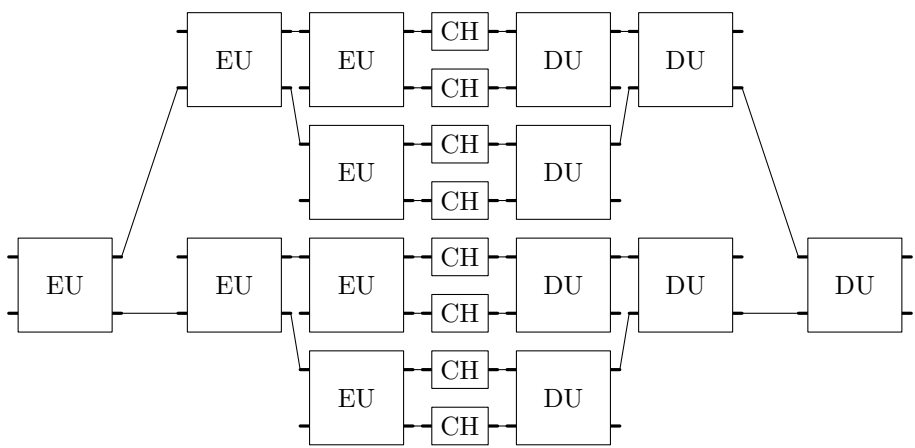

		\tikz[scale=1/6,scale=3/4,every pic/.style={scale=3/4}]{
			\def\pi{pic}
			\draw
														(0,48)π(c000){C}
						(-24,45)π(2e00){E}(-11,45)π(1e00){E}	(11,45)π(1d00){D}(24,45)π(2d00){D}
														(0,42)π(c001){C}
														
														(0,36)π(c010){C}
											(-11,33)π(1e01){E}	(11,33)π(1d01){D}
														(0,30)π(c011){C}
														
														(0,24)π(c100){C}
		(-42,21)π(3e00){E}(-24,21)π(2e10){E}(-11,21)π(1e10){E}	(11,21)π(1d10){D}(24,21)π(2d10){D}(42,21)π(3d00){D}
														(0,18)π(c101){C}
														
														(0,12)π(c110){C}
											(-11,9)π(1e11){E}	(11,9)π(1d11){D}
														(0,6)π(c111){C}
			;
			\draw
				\foreach\f in{e,d}{
					\foreach\l in{0,1}{
						\foreach\k in{0,1}{
							\foreach\j in{0,1}{
								(c\l\k\j=\f)--(1\f\l\k=i\j)
							}
							(1\f\l\k=o0)--(2\f\l0=i\k)
						}
						(2\f\l0=o1)--(3\f00=i\l)
					}
				}
			;
		}
		\caption{
			The design of encoder and decoder---%
			transforming $(WＷ1)Ｗ2$ further at level $3$.
		}\label{fig:transform(1)(2)}
	\end{figure}
	
	Lesson:
	Each pair of pins corresponds to a (synthetic) channel;
	one channel may have multiple copies that each corresponds to a pair.
	For any synthetic channel, we wrap another layer of EU--DU pair
	around the corresponding pins if we want to transform it further.
	We leave the pins corresponding to a channel naked
	if we do not want to transform it anymore.
	
	Rephrase in terms of channel process:
	The channel process $\{𝘞_n\}$ is defined prior to
	whether we want to transform channels or not.
	But we can choose “not to look at it”.
	More rigorously, we look at the endless sequence $𝘑₁,𝘑₂,\dotsc$
	and determine the least $n$ such that $(𝘑₁,𝘑₂…𝘑_n)$
	points to a channel we do not want to transform anymore.
	Let $𝘴$ be that $n$ in this case.
	That is, $𝘴$ is a random variable depending on $\{𝘑_n\}$;
	moreover, whether or not $𝘴=n$ is determined by the first $n$ terms of $\{𝘑_n\}$.
	This makes $𝘴$ a stopping time adapted to $\{𝘑_n\}$.
	Consequently, $\{𝘞_{n∧𝘴}\}$ is a stopped process that evolves like $\{𝘞_n\}$
	at the beginning but halts at $𝘞_𝘴$, a channel we are satisfied with.
	
	It remains to do the following three things:
	(a)	Show how $𝘴$ is related to the complexity,
	(b)	show how $𝘞_𝘴$ is related to the code metrics $Ｐ$ and $R$, and
	(c)	define a good stopping time $𝘴$ to optimize (a) and (b).

\section{Stopping Time vs Complexity}

	Fix an SBDMC $W$.
	Fix an $n$;
	we are to construct a low-complexity code of block length $N=2^n$.
	Pretend that the complexity is the number of EU and DU devices used.
	Doing so is backed up by the fact that, in reality,
	EU and DU cost bounded amount of arithmetic and memory.
	
	Let $𝘴$ be any stopping time adapted to $\{𝘑_n\}$, and assume $𝘴≤n$.
	Having $𝘴≤n$ is to make sure that we do not transform any depth-$n$ channel further.
	We have the following lemma concerning $𝘴$ and the complexity.
	
	\begin{lem}[Complexity in terms of $𝘴$]\label{lem:prune-C}
		Use $𝘴$ to generate a code.
		Then the encoding and decoding complexity is
		$O(N𝘌[𝘴])$ per code block, or $O(𝘌[𝘴])$ per channel usage.
	\end{lem}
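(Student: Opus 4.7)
My plan is to enumerate the EU--DU pairs in the encoder/decoder circuit, bucket them by the internal nodes of the pruned channel tree, and recognize the resulting sum as proportional to $N𝘌[𝘴]$. To begin, I would observe that the stopping time $𝘴$ cuts out a finite subtree of the full binary channel tree: its internal nodes are exactly those synthetic channels that are transformed further, its leaves are exactly those channels at which the construction halts, and every leaf has depth at most $n$ because $𝘴≤n$ by assumption.

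Next I would count the EU--DU pairs contributed by each internal node $v$ at depth $d(v)$. A short induction on the construction of the circuit, patterned after \cref{fig:transform,fig:transform(1),fig:transform(1)(2)}, shows that the synthetic channel $v$ appears as $2^{n-d(v)}$ physical copies inside the circuit; transforming it further wraps one EU--DU pair around each disjoint pair of copies, which uses $2^{n-d(v)-1}$ EU--DU pairs. Summing over internal nodes, the total number of devices is
\[∑_{v\text{ internal}}2\cdot2^{n-d(v)-1}=N∑_{v\text{ internal}}2^{-d(v)}.\]

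To identify the remaining sum with $𝘌[𝘴]$, I interpret $2^{-d(v)}$ as the probability that the random root-to-leaf path determined by $𝘑_1,𝘑_2,\dotsc$ passes through $v$. By linearity of expectation, $∑_v2^{-d(v)}$ equals the expected number of internal nodes visited by that path, which is exactly the depth of the leaf reached, namely $𝘴$ itself. Since each EU and each DU performs a bounded amount of arithmetic using bounded memory, the total encoding/decoding complexity per code block is $O(N𝘌[𝘴])$, and amortizing over the $N$ channel uses gives $O(𝘌[𝘴])$ per channel use. The main obstacle is the copy-count claim $2^{n-d(v)}$: verifying it rigorously requires the induction mentioned above, tracking how each added outer EU--DU layer doubles depths and halves the copy count of every already-formed synthetic channel; everything else is bookkeeping.
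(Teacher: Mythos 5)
Your proposal is correct and follows essentially the same route as the paper: count the EU--DU pairs wrapped around each further-transformed synthetic channel, use the copy count $2^{n-m}$ (so $2^{n-m-1}$ pairs per transformed node), and identify the resulting sum with $\tfrac N2𝘌[𝘴]$ — your "expected number of internal nodes on the random root-to-leaf path" is exactly the paper's $∑_{m}𝘗\{𝘴>m\}=𝘌[𝘴]$ tail-sum computation, and both arguments rest on the same (stated, unproved) assumption that each EU/DU costs $O(1)$.
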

	
	\begin{proof}
		Let us begin with $N=2^n$ copies of the channel $W$.
		If $𝘴>0$, then there are $2^{n-1}$ EU--DU pairs that
		wrap around $W$'s to synthesize $2^{n-1}$ copies of $WＷ1$ and $WＷ2$.
		
		Next, let us consider the $𝘑₁=1$ case.
		If $𝘴>1$ in this case, it means that we want to transform $WＷ1$.
		We need $2^{n-2}$ EU--DU pairs to wrap around the $2^{n-1}$ copies of $WＷ1$.
		Similarly, if $𝘴>1$ in the $𝘑₁=2$ case, then we are transforming
		$2^{n-1}$ copies of $WＷ2$ using another $2^{n-2}$ EU--DU pairs.
		
		Next, consider the depth-$2$ branching of $(𝘑₁,𝘑₂)$.
		For each of the four possible realizations of the pair $(𝘑₁,𝘑₂)=(j₁,j₂)$,
		if $𝘴>2$, then we are to transform the $2^{n-2}$ copies
		of $(WＷ{j₁})Ｗ{j₂}$ with $2^{n-3}$ EU--DU pairs.
		
		Finally, consider the general case at arbitrary depth:
		If $𝘴>m$ for a certain tuple $(𝘑₁,𝘑₂…𝘑_m)=(j₁,j₂…j_m)$,
		then we will transform $2^{n-m}$ copies of
		$𝘞_m=\(\dotsb((WＷ{j₁})Ｗ{j₂})\dotsb\)Ｗ{j_m}$
		further with $2^{n-m-1}$ EU--DU pairs.
		The total number of EU--DU pairs is thus
		\[∑_{m=0}^n∑_{j₁^m}2^{n-m-1}·𝘐\{𝘑₁^m=j₁^m† and †𝘴>m\}
			=∑_{m=0}^n2^{n-1}𝘗\{𝘴>m\}=÷N2𝘌[𝘴].\]
		Here $𝘐$ is the indicator of an event.
		
		Recall that I claimed without a proof that
		EU and DU devices cost constant resources.
		Thus, the complexity of a code is proportional
		to the number of such devices used.
		So a code generated by $𝘴$ has complexity
		$O(N𝘌[𝘴])$ per code block or $O(𝘌[𝘴])$ per channel usage.
		The end.
	\end{proof}
	
	The last lemma connects a stopping time $𝘴$ to the cost of its code.
	Intuitively speaking, it attributes to the fact that
	preparing a synthetic channel at depth $m$ costs $m$ layers of EU--DU.
	Hence preparing $𝘞_𝘴$ costs $𝘴$ layers,
	and preparing all instances of $𝘞_𝘴$ costs $N𝘌[𝘴]$.
	
	In the next section, we will see how $𝘞_𝘴$ connects to the code performance
	in a very similar way, i.e., $Ｐ≤N𝘌[𝘡_𝘴·𝘐\{𝘡_𝘴<θ\}]$ and $R=𝘗\{𝘡_𝘴<θ\}$,
	due to very similar reasons.

\section{Stopped Process vs Performance}

	Thanks to the stopping time $𝘴$, the channel tree is pruned,
	or “harvested”, before it reaches depth $n$.
	This implies that the leafs $𝘞_𝘴$ are not as polarized as before.
	We therefore are liable to rebound the block error probability and code rate.
	
	Let $𝒥$ be some set of indices $(𝘑₁,𝘑₂…𝘑_𝘴)$ that point to
	the synthetic channels we are to use to send plain messages.
	Lemmas follow.
	
	\begin{lem}[$R$ in terms of $𝒥$]\label{lem:prune-R}
		The code rate
		\[R=𝘗\{(𝘑₁,𝘑₂…𝘑_𝘴)∈𝒥\}\]
		is the probability that the prefix $𝘑₁^𝘴$ is selected in $𝒥$.
	\end{lem}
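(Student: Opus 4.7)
The plan is to generalize the proof of \cref{lem:R=PZ} from a fixed depth $n$ to the random depth $𝘴$. As there, the overall strategy is to compute the ratio of message-bearing input positions to the $N=2^n$ total channel uses in the block, and then to identify that ratio as a probability under the fair coin process $\{𝘑_n\}$.

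First I would establish a bookkeeping correspondence between the pruned channel tree and the input bits of a code block. By the construction in the previous section, a leaf $𝘞_𝘴$ whose prefix is a specific sequence $(j_1,\dotsc,j_m)\in\{1,2\}^m$ (with $𝘴=m$ on that branch) has $2^{n-m}$ identical copies in the overall encoder/decoder layout, because we stopped stacking EU--DU pairs at depth $m$ and the remaining $n-m$ levels just duplicate the leaf. If such a leaf is selected into $𝒥$, it contributes $2^{n-m}$ input positions that carry plain message bits; if not, it contributes $2^{n-m}$ frozen positions. Summing over the pruned-tree leaves,
\[
R=÷1N∑_{(j_1,\dotsc,j_m)∈𝒥}2^{n-m}=∑_{(j_1,\dotsc,j_m)∈𝒥}2^{-m}.
\]

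Next I would match each summand $2^{-m}$ with a probability in the $𝘑$-process. Because $𝘴$ is a stopping time adapted to $\{𝘑_n\}$, the event $\{𝘴=m\}$ is determined by $𝘑_1^m$; in particular, for any leaf prefix $(j_1,\dotsc,j_m)$ of the pruned tree, the event $\{𝘑_1^𝘴=(j_1,\dotsc,j_m)\}$ coincides with $\{𝘑_1^m=(j_1,\dotsc,j_m)\}$, which has probability exactly $2^{-m}$ by i.i.d.\ fair coin tosses. Summing over the disjoint events indexed by $𝒥$,
\[
𝘗\{(𝘑_1,\dotsc,𝘑_𝘴)∈𝒥\}=∑_{(j_1,\dotsc,j_m)∈𝒥}2^{-m},
\]
which matches the expression for $R$ above.

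The only subtlety worth watching is that $𝒥$ should be understood as consisting of \emph{valid} leaf prefixes of the pruned tree---sequences on which $𝘴$ actually stops---so that $\{𝘴=m\}$ is automatic once $𝘑_1^m$ is fixed to an element of $𝒥$. That is really the only obstacle; everything else is a direct translation of the fixed-depth argument via the trinitarian correspondence set up in the previous section.
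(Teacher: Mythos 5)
Your proposal is correct and follows essentially the same route as the paper: both arguments count the $2^{n-𝘴}$ copies of each stopped leaf against the $2^n$ total positions (the paper phrases this as naked pins each carrying measure $2^{-n}$), identify each selected leaf's contribution $2^{-m}$ with the probability $𝘗\{𝘑₁^m=(j₁,\dotsc,j_m)\}$, and sum over $𝒥$. Your explicit remark that $\{𝘑₁^𝘴=(j₁,\dotsc,j_m)\}$ coincides with $\{𝘑₁^m=(j₁,\dotsc,j_m)\}$ because $𝘴$ is a stopping time is left implicit in the paper but is the same underlying fact.
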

	
	\begin{proof}
		I claim without a proof that the code rate is the density
		of the naked pins that correspond to channels in $𝒥$.
		
		Assuming that, we see that each $𝘞_𝘴$
		corresponds to $2^{n-𝘴}$ pairs of naked pins.
		That is to say, each $𝘞_𝘴$ assumes $2^{n-𝘴}$ copies in the EU--DU circuit.
		Since there are always $2^n$ pairs of naked pins
		(adding ED/DU does not alter the number of naked pins),
		a pair of naked pins possesses probability measure $2^{-n}$.
		Thus each $𝘞_𝘴$ possesses probability measure $2^{-𝘴}$,
		the same probability measure possessed by $(𝘑₁,𝘑₂…𝘑_𝘴)$.
		This finishes the proof.
	\end{proof}

	\begin{lem}[$Ｐ$ in terms of $𝒥$]\label{lem:prune-P}
		The block error probability
		\[Ｐ≤N𝘌[𝘡_𝘴·𝘐\{(𝘑₁,𝘑₂…𝘑_𝘴)∈𝒥\}]\]
		is bounded by the sum of $𝘡_𝘴$ that are
		selected in $𝒥$, weighted by multiplicity.
	\end{lem}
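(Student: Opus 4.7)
The plan is to imitate the derivation of \cref{ine:P<sum} from \cref{cha:origin}, but applied to the pruned channel tree, and then to convert the resulting sum over tree branches into the desired expectation via the same probability-measure identification used in the proof of \cref{lem:prune-R}.

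First I would apply the classical union bound: if we independently transmit plain messages through every naked pin whose corresponding synthetic channel $𝘞_𝘴$ has index $(𝘑₁,𝘑₂…𝘑_𝘴)∈𝒥$, then the block error probability is at most the sum of the Bhattacharyya parameters over all such naked pins. This is the analogue of \cref{ine:P<sum}, and I would cite it without reproving it since the inner workings of successive-cancellation decoding are unchanged when the tree happens to be pruned.

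Next I would count multiplicities. As observed in the proof of \cref{lem:prune-R}, a stopped channel $𝘞_𝘴$ reached at depth $𝘴$ corresponds to exactly $2^{n-𝘴}$ pairs of naked pins in the EU--DU circuit. Hence the sum from the previous step reorganizes as
\[Ｐ≤∑_{(j₁…j_𝘴)}2^{n-𝘴}·Z_𝘴·𝘐\{(j₁…j_𝘴)∈𝒥\},\]
where the outer sum ranges over the leafs of the pruned tree. Finally, since each branch $(j₁…j_𝘴)$ carries probability measure $2^{-𝘴}$ under the coin tosses $\{𝘑_n\}$—again as established in \cref{lem:prune-R}—I can rewrite the weighted sum as an expectation: multiplying and dividing by $2^𝘴$ gives
\[Ｐ≤2^n·∑_{(j₁…j_𝘴)}2^{-𝘴}Z_𝘴·𝘐\{(j₁…j_𝘴)∈𝒥\}=N𝘌[𝘡_𝘴·𝘐\{(𝘑₁…𝘑_𝘴)∈𝒥\}],\]
which is the claimed inequality.

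The only potentially delicate step is the multiplicity bookkeeping, where one must be sure that (i) pruning does not alter the validity of the union bound inherited from \cref{ine:P<sum}, and (ii) the identification of tree branches with probability measure $2^{-𝘴}$ is legitimate when $𝘴$ is a random stopping time rather than a fixed depth. Both points, however, are already implicit in the proof of \cref{lem:prune-R} and in the remark that $𝘴$ is adapted to $\{𝘑_n\}$, so no new technical machinery is needed.
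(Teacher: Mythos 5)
Your proposal matches the paper's own proof: both take the union bound (sum of $Z$'s over selected synthetic channels, multiplicity included) as a claim inherited from the unpruned setting, then use the fact that each stopped channel $𝘞_𝘴$ occupies $2^{n-𝘴}$ pairs of naked pins while its branch carries probability measure $2^{-𝘴}$ to rewrite the weighted sum as $N𝘌[𝘡_𝘴·𝘐\{𝘑₁^𝘴∈𝒥\}]$. No substantive difference in approach or bookkeeping.
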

	
	\begin{proof}
		I claim without a proof that the block error probability
		is bounded from above by the sum of $Z$'s of
		the synthetic channels selected in $𝒥$, multiplicity included.
		
		Assuming that, and knowing that each $𝘞_𝘴$ assumes $2^{n-𝘴}$ copies
		and possesses probability measure $2^{-𝘴}$, we infer that
		each $𝘞_𝘴$ contributes $N𝘗\{𝘑₁^𝘴=j₁^𝘴\}·Z(𝘞_𝘴)$ to the upper bound
		if it is selected, otherwise it would have contributed $0$.
		Each $𝘞_𝘴$ contributing $N𝘗\{𝘑₁^𝘴=j₁^𝘴∈𝒥\}·Z(𝘞_𝘴)$,
		their sum is clearly $N𝘌[𝘡_𝘴·𝘐\{𝘑₁^𝘴∈𝒥\}]$.
		This is the upper bound we want to prove.
	\end{proof}

	It is time to declare an $𝘴$ and compute the induced $R$, $Ｐ$, and complexity.
	The basic strategy, like what we did in \cref{cha:origin}, is to set a threshold $θ$
	and hope that $𝘞_𝘴$ is “at least $θ$-good”, or $𝘡_𝘴≤θ$ to be rigorous.
	In contrast to \cref{cha:origin}, we can now control $𝘴$,
	so it is actually more efficient if we incorporate $θ$ into the declaration of $𝘴$.

\section{Actual Code Construction}

	Let $θ$ be $4^{-n}$.
	Define the stopping time $𝘴$ as below
	\[𝘴≔n∧\min\{m:𝘛_m<θ† or †𝘡_m<θ\}.\label{for:stop}\]
	That is, we look for the first $m$ such that either $𝘛_m$ or $𝘡_m$ are small enough;
	but if we cannot find such $m$ before reaching depth $n$, let $𝘴$ default to $n$.
	Let $𝒥$ be the indices with $𝘡_𝘴<θ$.
	That is, if $𝘴$ is set to be $m$ because $𝘡_m<θ$,
	then the corresponding $𝘑₁^𝘴$ are included in $𝒥$.
	If $𝘴$ is set to be $m$ because $𝘛_m<θ$ or because we reach $m=n$,
	then the corresponding $𝘑₁^𝘴$ are not included in $𝒥$.
	
	The block error probability incurred by this choice of $𝘴$ and $𝒥$
	is bounded by the weighted sum of $Z$'s that are all smaller than $θ≔4^{-n}$.
	Hence $Ｐ<Nθ=1/N$ by \cref{lem:prune-P}.
	
	The complexity and code rate are more complicated.
	See the next two theorems.
	
	\begin{thm}\label{thm:actual-C}
		Given that $𝘴$ is defined as in the first paragraph of this section,
		an upper bound on the complexity is
		\[O(N𝘌[𝘴])=O(N㏒(㏒N)).\]
	\end{thm}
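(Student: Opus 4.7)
The plan is to control $𝘌[𝘴]$ via the tail-sum identity $𝘌[𝘴]=∑_{m=0}^{n-1}𝘗\{𝘴>m\}$ and to split the sum at a threshold $m₀$ of order $㏒n$, beyond which the moderate-deviations estimates of \cref{thm:Z-e2pin,thm:T-e2pin} already guarantee that the stopping criterion is met with overwhelming probability. Fix any $(π,ρ)∈𝒪$ with $π>0$; then both $𝘗\{𝘡_m<e^{-2^{πm}}\}>I(W)-2^{-ρm+o(m)}$ and $𝘗\{𝘛_m<e^{-2^{πm}}\}>H(W)-2^{-ρm+o(m)}$ hold. I would then choose $m₀≔⌈(㏒₂n)/π⌉+O(1)$ so that $e^{-2^{πm}}≤θ=4^{-n}$ for every $m≥m₀$; for such $m$, whenever $𝘡_m<e^{-2^{πm}}$ or $𝘛_m<e^{-2^{πm}}$, one of the two parameters is below $θ$ and so $𝘴≤m$.

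Next I would invoke the common-fate property (\cref{lem:ZvsH,lem:TvsH}) to verify that, for $m≥m₀$, the two events $\{𝘡_m<e^{-2^{πm}}\}$ and $\{𝘛_m<e^{-2^{πm}}\}$ are in fact disjoint: if $𝘡_m$ is that tiny then $𝘏_m≤𝘡_m$ is tiny too, which forces $𝘛_m≥1-𝘏_m$ close to $1$ and therefore well above $e^{-2^{πm}}$. Combined with $I(W)+H(W)=1$, additivity of probability delivers $𝘗\{𝘴>m\}≤2·2^{-ρm+o(m)}=2^{-ρm+o(m)}$ for every $m≥m₀$. Finally I would split the tail:
\[𝘌[𝘴]≤∑_{m<m₀}1+∑_{m=m₀}^{n-1}2^{-ρm+o(m)}≤m₀+O(1)=O(㏒n)=O(㏒(㏒N)),\]
and multiplying by $N$ closes the argument.

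The main obstacle I anticipate is the bookkeeping of the $o(m)$ slack terms when combining the two moderate-deviations bounds through $I(W)+H(W)=1$: one must verify that the residual merges into a single clean $2^{-ρm+o(m)}$ rather than degrading into something weaker. A secondary technicality is to upgrade the asymptotic disjointness of the $𝘡_m$- and $𝘛_m$-events into an honest disjointness that is valid uniformly in $m≥m₀$, which should just require enlarging $m₀$ by a universal constant depending on $π$. Neither difficulty looks conceptually serious once \cref{thm:Z-e2pin,thm:T-e2pin} are in hand.
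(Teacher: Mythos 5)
Your proposal is correct and follows essentially the same route as the paper: both write $𝘌[𝘴]$ as a tail sum, split $m$ at the point where the proved threshold $\exp(-2^{πm})$ crosses $θ=4^{-n}$ (giving $O(㏒n)$ "small" indices bounded trivially by $1$), and for larger $m$ use the two moderate-deviations bounds on $𝘡_m$ and $𝘛_m$ together with their mutual exclusivity to get a summable $2^{-ρm+o(m)}$ tail. Your explicit disjointness check via $𝘏_m≤𝘡_m$ and $𝘛_m≥1-𝘏_m$ is exactly the "exclusively" step the paper leaves implicit, and it holds outright (not just asymptotically), so the technicalities you flag are harmless.
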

	
	\begin{proof}
		By how $𝘴$ is defined, we have
		\[𝘌[𝘴]=∑_{m=0}^{n-1}𝘗\{𝘴>m\}=∑_{m=0}^{n-1}𝘗\{𝘛_m≥4^{-n}† and †𝘡_m≥4^{-n}\}.
			\label{ine:fubini}\]
		It remains to understand what is going on in $\{𝘛_m≥4^{-n}$ and $𝘡_m≥4^{-n}\}$.
		For that, recall the last two chapters:
		\begin{gather*}
			𝘗\{𝘡_m<\exp(-2^{m/40})\}>I(W)+2^{-m/5+o(m)},	\\
			𝘗\{𝘛_m<\exp(-2^{m/40})\}>H(W)+2^{-m/5+o(m)}.	
		\end{gather*}
		Here I choose $(π,ρ)=(1/40,1/5)∈𝒪$.
		In words, I showed that either $𝘡_m$ or $𝘛_m$,
		exclusively, becomes small with high probability.
		Hence it is unlikely that $𝘛_m≥4^{-n}$ and $𝘡_m≥4^{-n}$ at once
		unless $m$ is small, in which case the proved threshold
		$\exp(-2^{m/40})$ is less than the demanded $4^{-n}$.
		
		We now classify $m$ into two classes:
		Those such that $\exp(-2^{m/40})≥4^{-n}$ are called small $m$.
		Those such that $\exp(-2^{m/40})<4^{-n}$ are called large $m$.
		For small $m$, we have nothing but $𝘗\{$both $𝘛_m,𝘡_m≥4^{-n}\}≤1$.
		For large $m$,
		\[𝘗\{†both †𝘛_m,𝘡_m≥4^{-n}\}
			≤𝘗\{†both †𝘛_m,𝘡_m≥\exp(-2^{m/40})\}<2^{-m/5+o(m)}.\]
		Continue bounding \cref{ine:fubini}:
		\begin{align*}
			\qquad&\kern-2em
			∑_{m=0}^{n-1}𝘗\{†both †𝘛_m,𝘡_m≥4^{-n}\}	\\
			&	=∑_{†small †m}𝘗\{†both †𝘛_m,𝘡_m≥4^{-n}\}
				+∑_{†large †m}𝘗\{†both †𝘛_m,𝘡_m≥4^{-n}\}	\\
			&	=∑_{†small †m}1+∑_{†large †m}2^{-m/5+o(m)}=\#\{†small †m†'s†\}+O(1).
		\end{align*}
		The number of small $m$'s is the root of the euqation $\exp(-2^{m/40})=4^{-n}$.
		The root is $m=O(㏒n)$.
		Hence the complexity $𝘌[𝘴]<O(㏒n)=O(㏒(㏒N))$, as desired.
	\end{proof}
	
	\begin{thm}\label{thm:actual-R}
		Given that $𝘴$ and $𝒥$ are defined as in the first paragraph of this section,
		a lower bound on the code rate $R$ is
		\[R≥I(W)-2^{-n/5+o(n)}.\]
	\end{thm}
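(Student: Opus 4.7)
The plan is first to invoke \cref{lem:prune-R} to identify $R=𝘗\{𝘡_𝘴<4^{-n}\}$, and then to sandwich this probability between the moderate-deviations lower bound from \cref{thm:Z-e2pin} and a tiny error accounting for premature pruning. I would apply \cref{thm:Z-e2pin} with $(π,ρ)=(1/40,1/5)∈𝒪$ to get the benchmark $𝘗\{𝘡_n<\exp(-2^{n/40})\}>I(W)-2^{-n/5+o(n)}$, and note that $\exp(-2^{n/40})<4^{-n}$ for all large $n$.

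Next I would lower-bound $R$ by the probability of the clean event $\{𝘡_n<\exp(-2^{n/40})\}∩\{𝘛_m≥4^{-n}†for all †m≤n\}$. On this event the $𝘛$-branch of the stopping rule \cref{for:stop} never fires, so $𝘴$ is either the first $m$ with $𝘡_m<4^{-n}$ or (by default) $n$; either way $𝘡_𝘴<4^{-n}$, so the sample lies in $𝒥$. Therefore $R≥𝘗\{𝘡_n<\exp(-2^{n/40})\}-𝘗(B)$, where $B≔\{𝘡_n<\exp(-2^{n/40})\}∩\{∃m≤n:𝘛_m<4^{-n}\}$ is the rebound event that the pruning gets wrong.

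The main obstacle is bounding $𝘗(B)$ by $o(2^{-n/5})$, which I would handle via the common-fate property and the supermartingale nature of $\{𝘛_l\}$. On $\{𝘡_n<\exp(-2^{n/40})\}$, \cref{lem:ZvsH,lem:TvsH} force $𝘛_n≥1-H(𝘞_n)≥1-Z(𝘞_n)≥1/2$ for large $n$, so $B⊆⋃_{m=0}^{n}\{𝘛_m<4^{-n}\}∩\{𝘛_n≥1/2\}$. For each fixed $m$, conditional Markov combined with the supermartingale estimate $𝘌[𝘛_n｜𝘛_m]≤𝘛_m$ yields $𝘗\{𝘛_n≥1/2｜𝘛_m\}≤2𝘛_m$, whence $𝘗\{𝘛_m<4^{-n},𝘛_n≥1/2\}≤2·4^{-n}$. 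A union bound over $m$ gives $𝘗(B)≤2(n+1)·4^{-n}$, which is dwarfed by $2^{-n/5+o(n)}$. Combining the estimates yields $R≥I(W)-2^{-n/5+o(n)}$.
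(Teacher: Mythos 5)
Your proposal is correct, and it reaches the bound by a decomposition that differs from the paper's. The paper anchors on the limit event $\{Z\text{-process}\to 0\}$, whose probability is exactly $I(W)$, and then charges the whole loss to the bad event "$Z$ at the stopping time is still $\geq 4^{-n}$ although the trajectory tends to $0$"; bounding that event requires a case split on \emph{why} the stopping rule \cref{for:stop} fired, and the default-to-depth-$n$ case is controlled by the joint tail bound at depth $n-1$, which needs \emph{both} the $Z$-side and the $T$-side moderate-deviations inequalities (the same trichotomy also quantifies how often the stop is due to the $T$-branch or to hitting $n$, which the paper reuses). You instead anchor directly on the finite-depth bound of \cref{thm:Z-e2pin} with $(\pi,\rho)=(1/40,1/5)$, so the $2^{-n/5+o(n)}$ loss is paid up front, and your bad event is only "the $T$-branch fired prematurely although $Z$ at depth $n$ is tiny"; via \cref{lem:ZvsH,lem:TvsH} this forces the $T$-value at depth $n$ to be at least $1/2$, and the supermartingale property plus conditional Markov and a union bound give the much smaller estimate $2(n+1)4^{-n}$. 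The net effect is a leaner argument for this particular theorem---it never invokes the $T$-side analogue \cref{thm:T-e2pin}, only the supermartingale property of the $T$-process---whereas the paper's version buys the full accounting of all three stopping reasons, mirroring the complexity analysis of \cref{thm:actual-C}. Both proofs are sound; your event-inclusion step (on the clean event the $Z$-branch must fire by depth $n$ because $\exp(-2^{n/40})<4^{-n}$, so the stopped index lands in the selected set) is exactly the point that makes the switch of baseline legitimate.
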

	
	\begin{proof}
		Just to clarify the whole picture, let me claim the following trichotomy:
		\begin{itemize}
			\item	The frequency that $𝘴$ is set to $m$
					due to $𝘛_m<θ$ is $H(W)-2^{-n/5+o(n)}$.
			\item	The frequency that $𝘴$ is set to $m$
					due to reaching $n$ is $2^{-n/5+o(n)}$.
			\item	The frequency that $𝘴$ is set to $m$
					due to $𝘡_m<θ$ is $I(W)-2^{-n/5+o(n)}$.
		\end{itemize}
		$𝒥$ collects those $𝘑₁^𝘴$ when $𝘡_m<θ$ is the case.
		So the theorem statement is implied by the last bullet point.
		I will show the last bullet point;
		the proof thereof applies to the first bullet point,
		so I am effectively showing all three bullet points at once.
		
		In order to show the last bullet point,
		it suffices to understand the bad event $𝘉≔\{𝘡_𝘴≥θ$ but $𝘡_m→0\}$.
		Once we know how to bound $𝘗(𝘉)$, we will conclude that,
		with \cref{lem:prune-R}, the code rate is $R=𝘗\{𝘡_𝘴<θ\}≥𝘗\{𝘡_m→0\}-𝘗(𝘉)$,
		which will be $I(W)-2^{-n/5+o(n)}$.
		To bound $𝘗(𝘉)$, two cases will be discussed---$𝘡_𝘴≥θ$ because
		$𝘛_m<θ$ happens first, and $𝘡_𝘴≥θ$ because we reach $m=n$.
		
		Here goes the rigorous bound on $𝘗(𝘉)$:
		If $𝘡_𝘴≥θ$, then $𝘴$ is not set to the current value because $𝘡_m<θ$;
		it must be the case that the other criterion $𝘛_𝘴<θ$ holds,
		or that we reach $𝘴=n$.
		For the former case, $𝘗\{𝘛_𝘴<θ$ but $𝘡_m→0\}=𝘗\{𝘛_𝘴<θ$ but $𝘛_m→1\}≤θ=2^{-n}$
		by the common-fate property and $\{𝘛_n\}$ being a martingale.
		For the latter case,
		$𝘗\{𝘴=n$ and $𝘡_m→0\}<𝘗\{𝘴=n\}=𝘗\{$both $𝘛_m,𝘡_m≥θ$ for all $m<n\}
			<𝘗\{$both $𝘛_{n-1},𝘡_{n-1}≥θ\}=2^{-(n-1)/5+o(n-1)}$.
		Sum the upper bounds of the two cases;
		it is $2^{-n/5+o(n)}$.
		
		Now that we get $𝘗(𝘉)<2^{-n/5+o(n)}$, deduce the code rate
		$R=𝘗\{𝘡_𝘴<θ\}=𝘗\{𝘡_m→0\}-𝘗(𝘉)=I(W)-2^{-n/5+o(n)}$;
		that is what we want to prove.
	\end{proof}
	
	Recap:
	So far we had seen that, when a code is defined by my choice of $𝘴$ and $𝒥$,
	(a)	the block error probability is $1/N$,
	(b)	the complexity is $O(N㏒(㏒N))$ per code block
		or $O(㏒(㏒N))$ per channel usage, and
	(c)	the code rate is $I(W)-2^{-n/5+o(n)}$.
	The minor term $o(n)$ within $R$ can be eliminated by a topological argument.
	We summarize the chapter with the following corollary.
	
	\begin{cor}[Log-log code for SBDMC]
		Over any SBDMC, there exist capacity-achieving codes with encoding and decoding
		complexity $O(N㏒(㏒N))$ per code block or $O(㏒(㏒N))$ per channel usage.
		Moreover, the said codes have gaps to capacity $I(W)-R<N^{-1/5}$.
	\end{cor}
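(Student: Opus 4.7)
The plan is to bolt together the three results established in this section. Taking the stopping time $\mathsi{s}$ from \cref{for:stop} and the selection rule $\mathcal{J} \coloneqq \{\mathsi{J}_1^{\mathsi{s}} : \mathsi{Z}_{\mathsi{s}} < \theta\}$ with $\theta \coloneqq 4^{-n}$, I would read off each of the three claims of the corollary in turn: \cref{lem:prune-P} together with the defining property $\mathsi{Z}_{\mathsi{s}} < \theta$ on $\mathcal{J}$ bounds the block error probability by $N\theta = 1/N$, which decays to $0$, so the resulting codes are capacity-achieving; \cref{thm:actual-C} bounds the encoding and decoding complexity by $O(N\mathsi{E}[\mathsi{s}]) = O(N\log(\log N))$ per code block, equivalently $O(\log(\log N))$ per channel usage; and \cref{thm:actual-R} gives the rate bound $I(W) - R < 2^{-n/5 + o(n)}$.

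The only thing left is to upgrade $2^{-n/5 + o(n)}$ to the clean $N^{-1/5}$ claimed in the statement. The plan is to rerun \cref{thm:actual-R} with $(\pi, \rho) = (1/40, 1/5)$ replaced by $(1/40, 1/5 + \delta)$ for a small $\delta > 0$. Since $(1/40, 1/5)$ sits in the interior of $\mathcal{O}$ (visible from \cref{fig:lol-bdmc}), the perturbed pair is still admissible and the proof of \cref{thm:actual-R} goes through verbatim, yielding $I(W) - R < 2^{-(1/5+\delta)n + o(n)}$. For all $n$ large enough that the $o(n)$ correction is below $\delta n$, this upper bound is dominated by $2^{-n/5} = N^{-1/5}$. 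This is the same topological-fluctuation device already flagged at the end of \cref{cha:origin}.

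There is no real obstacle: all the substantive work has been done in \cref{thm:actual-C,thm:actual-R}, and what remains is only a bookkeeping verification that the perturbation $\rho = 1/5 \mapsto 1/5 + \delta$ stays inside $\mathcal{O}$ and that the substitution $N = 2^n$ correctly converts the exponential-in-$n$ bound into the polynomial-in-$N$ bound. If anything feels delicate, it is making sure the same $n$-threshold can be chosen uniformly so that the complexity, block error probability, and rate bounds all hold simultaneously, but since each of the three bounds is a simple tail statement depending only on $n$ being sufficiently large, taking the maximum of the three thresholds suffices.
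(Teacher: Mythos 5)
Your proposal is correct and mirrors the paper's own argument: the corollary is obtained by combining \cref{lem:prune-P} (error $\le N\theta = 1/N$), \cref{thm:actual-C} (complexity $O(N\log(\log N))$), and \cref{thm:actual-R} (rate $I(W)-2^{-n/5+o(n)}$), with the $o(n)$ term removed by exactly the perturbation-of-$\rho$ ("topological") argument you describe, since $(1/40,1/5)$ lies in the interior of $\mathcal{O}$. Nothing essential differs from the paper's route.
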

	
	In one sentence, \emph{log-logarithmic complexity achieves capacity}.

\section{A Note on Better Pace}

	The only place we used the pair $(1/40,1/5)∈𝒪$ explicitly is when
	we were solving $\exp(-2^{m/40})=4^{-n}$ for $m$ and obtain that $m=O(㏒n)$.
	We could, instead, choose another pair such as $(1/150,1/4.8)∈𝒪$.
	Then, we still get to keep $m=O(㏒n)$
	but the gap to capacity is $I(W)-R<N^{-1/4.8}$.
	By a topological argument, any $ρ<1/4.714$ works.
	This is why I claimed that the pruned polar code
	has the same gap to capacity as the original version.
	
	Concerning the equation $\exp(-2^{m/40})=4^{-n}$,
	we can also replace $4^{-n}$ with $\exp(-n^τ)$ for arbitrarily large $τ$.
	The root is then $m=O(τ㏒n)$, which is $O(㏒n)$ if $τ$ is fixed.
	That means the block error probability can be as low as $\exp(-n^τ)$
	while keeping the log-logarithmic complexity.
	This asymptote lies below $1/\poly(N)$ and belongs to $\exp(-1\poly(㏒N))$.
	This is why I claimed that the pruned polar code
	has quasi-polynomial block error probability.

\section{A Note on Asymmetric Case}

	For if the concerned channel $W$ is an asymmetric BDMC, we will need to apply
	a similar argument to $\{𝘘_n\}$, the channel process grown from
	the non-uniform input distribution $Q$ treated as a channel with constant output.
	
	In the asymmetric case, the stopping time will be defined as
	\[𝘴≔n∧\min\{m:\min(Z(𝘞_m),T(𝘞_m))<4^{-n}† and †\min(Z(𝘘_m),T(𝘘_m))<4^{-n}\}.\]
	To put in another way, we are looking for the minimal $m$
	that satisfies any of the following:
	\begin{itemize}
		\item	Both $Z(𝘞_m),Z(𝘘_m)$ are small.
		\item	Both $T(𝘞_m),T(𝘘_m)$ are small.
		\item	Or, both $Z(𝘞_m),T(𝘘_m)$ are small.
		\item	(Both $T(𝘞_m),Z(𝘘_m)$ being small is not possible.)
	\end{itemize}
	And if no $m$ meets the requirement before depth $n$,
	the stopping time defaults to $n$.
	Furthermore, $𝒥$ will be the indices $𝘑₁^𝘴$ with small $Z(𝘞_m)$ and small $T(𝘘_𝘴)$:
	that is, the indices where $𝘴$ is set to $m$ due to the third bullet point.
	
	The block error probability is upper bounded similarly to \cref{ine:P<sumsum}:
	\[Ｐ≤N𝔼[Z(W_𝘴)𝘐(𝒥)]+N𝔼[T(𝘘_𝘴)𝘐(𝒥)]≤2N·4^{-n}=2^{-n+1}.\]
	The complexity is upper bounded similarly to the symmetric case
	\begin{align*}
		𝘌[𝘴]
		&	=∑_{m=0}^{n-1}𝘗\{𝘴>m\}	\\
		&	=∑_{m=0}^{n-1}𝘗\{\max(Z(𝘞_m),Z(𝘞_m))≥4^{-n}
			† or †\max(Z(𝘘_m),Z(𝘘_m))≥4^{-n}\}	\\
		&	≤∑_{m=0}^{n-1}𝘗\{\max(Z(𝘞_m),Z(𝘞_m))≥4^{-n}\}
			+𝘗\{\max(Z(𝘘_m),Z(𝘘_m))≥4^{-n}\}	\\
		&	=O(㏒n)+O(1)+O(㏒n)+O(1)=O(㏒n).
	\end{align*}
	The code rate is lower bounded similarly to the symmetric case
	\begin{align*}
		R
		&	=𝘗\{𝘑₁^𝘴∈𝒥\}=𝘗\{†both †Z(𝘞_𝘴),T(𝘘_𝘴)<4^{-n}\}	\\
		&	≥𝘗\{Z(𝘞_m),T(𝘘_m)→0\}-𝘗\{Z(𝘞_m)→0† but †Z(𝘞_𝘴)≥4^{-n}\}	\\*
		&\kern12em	-𝘗\{T(𝘘_m)→0† but †T(𝘘_𝘴)≥4^{-n}\}	\\
		&	≥𝘗\{Z(𝘞_m),T(𝘘_m)→0\}-𝘗\{Z(𝘞_m)→0† but †T(𝘞_𝘴)≤4^{-n}\}	\\*
		&\kern12em	-𝘗\{T(𝘘_m)→0† but †Z(𝘘_𝘴)≤4^{-n}\}-𝘗\{𝘴=n\}	\\
		&	≥I(W)-4^{-n}-4^{-n}-2^{-ρn+o(n)}=I(W)-2^{-ρn+o(n)}.
	\end{align*}
	Hence the following corollary.
	
	\begin{cor}[Log-log code for BDMC]
		Over any BDMC, there exist capacity-achieving codes with encoding and decoding
		complexity $O(N㏒(㏒N))$ per code block or $O(㏒(㏒N))$ per channel usage.
		Moreover, the said codes have gaps to capacity $I(W)-R<N^{-1/5}$.
	\end{cor}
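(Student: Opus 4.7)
The plan is to package the asymmetric pruning analysis laid out in the preceding section into a single statement. I would track two synchronized processes driven by the same fair coin flips $\{𝘑_n\}$: the channel process $\{𝘞_n\}$ grown from $W$, and the distribution process $\{𝘘_n\}$ grown from a capacity-achieving input distribution $Q$ treated as a channel with constant output. The stopping time to use is $𝘴 ≔ n ∧ \min\{m : \min(Z(𝘞_m), T(𝘞_m)) < 4^{-n} \text{ and } \min(Z(𝘘_m), T(𝘘_m)) < 4^{-n}\}$, and the index set $𝒥$ collects those prefixes $𝘑₁^𝘴$ for which $Z(𝘞_𝘴) < 4^{-n}$ and $T(𝘘_𝘴) < 4^{-n}$ hold simultaneously---exactly the combination enabling both reliable transmission through $W$ and faithful shaping of the input to $Q$.

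For block error probability I would import \cref{ine:P<sumsum} specialized to the pruned tree, obtaining $Ｐ ≤ N · 𝔼[Z(𝘞_𝘴) 𝘐(𝒥)] + N · 𝔼[T(𝘘_𝘴) 𝘐(𝒥)] ≤ 2N · 4^{-n}$. For complexity I would invoke \cref{lem:prune-C} to reduce to estimating $𝘌[𝘴] = ∑_{m=0}^{n-1} 𝘗\{𝘴 > m\}$, and control each summand via the moderate-deviation bounds of \cref{thm:Z-e2pin,thm:T-e2pin} applied to both $\{𝘞_n\}$ and $\{𝘘_n\}$ at the pair $(π, ρ) = (1/40, 1/5) ∈ 𝒪$. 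Splitting $m$ into small values (those with $\exp(-2^{m/40}) ≥ 4^{-n}$, of which there are $O(㏒ n)$, each contributing at most $1$) and large values (where the MDP bound yields a geometric tail summing to $O(1)$) gives $𝘌[𝘴] = O(㏒ n) = O(㏒(㏒ N))$, hence total complexity $O(N㏒(㏒ N))$.

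The main obstacle is the code rate lower bound, which requires coupling the two processes correctly. By \cref{lem:prune-R}, the rate equals $𝘗\{Z(𝘞_𝘴) < 4^{-n} \text{ and } T(𝘘_𝘴) < 4^{-n}\}$. I would lower-bound this starting from $𝘗\{Z(𝘞_m), T(𝘘_m) → 0\}$, subtracting the two bad events $\{Z(𝘞_m) → 0 \text{ but } T(𝘞_𝘴) ≤ 4^{-n}\}$ and $\{T(𝘘_m) → 0 \text{ but } Z(𝘘_𝘴) ≤ 4^{-n}\}$ together with $\{𝘴 = n\}$. The common-fate property of \cref{lem:ZvsH,lem:TvsH} controls the first two at rate $O(4^{-n})$, while the MDP bound controls the last at rate $2^{-ρn + o(n)}$. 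The monotonicity $H(W) ≤ H(Q)$, a consequence of $Y$ being a noisy observation of $X$, ensures $𝘗\{Z(𝘞_m), T(𝘘_m) → 0\} ≥ I(W)$, assembling to $R ≥ I(W) - 2^{-ρn + o(n)}$. A topological perturbation of $ρ$ slightly below $1/5$ absorbs the $o(n)$, yielding $I(W) - R < N^{-1/5}$ as advertised.
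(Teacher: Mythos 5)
Your proposal follows the paper's own argument essentially line for line: the same stopping time and index set $𝒥$, the same error bound via \cref{ine:P<sumsum}, the same small-$m$/large-$m$ split of $𝘌[𝘴]$ using the $(π,ρ)=(1/40,1/5)$ MDP bounds applied to both $\{𝘞_n\}$ and $\{𝘘_n\}$, and the same rate lower bound obtained by subtracting the two common-fate bad events plus $\{𝘴=n\}$ and invoking $H(W)≤H(Q)$. It is correct and takes the same approach as the paper.
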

	
	The same statement can be made to lossless and lossy compression.
	
	\begin{thm}[Log-log code for lossless compression]
		If $X$ is a binary source and $Y$ is the side information,
		then lossless compression can be done with encoding and decoding complexity
		$O(N㏒(㏒N))$ per code block or $O(㏒(㏒N))$ per source observation,
		block error probability $Ｐ<1/N$, and gap to entropy $H(X｜Y)-R<N^{-1/5}$.
	\end{thm}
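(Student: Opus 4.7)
The plan is to cast the problem into the polar-coding framework via the reduction sketched in the side note on lossless compression in \cref{cha:dual}, and then apply the pruning machinery of this chapter essentially verbatim. View the pair $(X,Y)$ as an abstract binary-input memoryless channel $W\colon X\to Y$ with input marginal $Q$ equal to the source distribution. Under the definitions used throughout, $H(W)=H(X｜Y)$, so closing the gap to conditional entropy is identical to closing the gap to the channel parameter $H(W)$. Because the compressor observes $X$ directly, there is no input-shaping to perform and it suffices to track only the process $\{𝘞_n\}$; any asymmetry of $W$ is absorbed by the symmetrization arguments of \cref{cha:dual}.

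I would then reuse the stopping time of \cref{for:stop} verbatim,
\[𝘴\coloneqq n\wedge\min\{m:Z(𝘞_m)<4^{-n}\text{ or }T(𝘞_m)<4^{-n}\},\]
but invert the choice of $𝒥$: here $𝒥$ collects the prefixes $𝘑₁^𝘴$ for which $T(𝘞_𝘴)<4^{-n}$, i.e., the \emph{noisy} descendants whose bit the compressor must record, while the \emph{reliable} descendants (those with $Z(𝘞_𝘴)<4^{-n}$) are reconstructed by the decoder from $Y$ and previously decoded bits via successive cancellation, exactly as in \cref{cha:origin}. With this assignment, the complexity bound of \cref{thm:actual-C} carries over unchanged to give $𝘌[𝘴]=O(\log\log N)$ and overall complexity $O(N\log\log N)$; the block error probability is bounded by the weighted sum of $Z(𝘞_𝘴)$ over prefixes outside $𝒥$, each at most $4^{-n}$ by construction, so $Ｐ<1/N$; and the rate $R=𝘗\{T(𝘞_𝘴)<4^{-n}\}$ is controlled by the dual trichotomy in the proof of \cref{thm:actual-R}, yielding $R\geqslant H(W)-2^{-n/5+o(n)}=H(X｜Y)-N^{-1/5+o(1)}$ with the $o(1)$ absorbed by the usual topological fluctuation of $\rho$.

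The main place where I would slow down and check details is precisely this swap of roles: in \cref{cha:prune} the set $𝒥$ collects reliable channels and the target is $I(W)=1-H(W)$, whereas here $𝒥$ collects noisy channels and the target is $H(W)$, so the trichotomy of \cref{thm:actual-R} must be mirrored and the common-fate property of \cref{cha:dual}---which guarantees that at depth $𝘴$ the channel has committed to exactly one of the two polarized ends---must be invoked to rule out prefixes that are neither clearly reliable nor clearly noisy. Beyond this careful bookkeeping, every probabilistic and combinatorial input is already in hand, and no new stochastic analysis is required.
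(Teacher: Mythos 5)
Your reduction of the problem to the abstract channel $W\colon X\to Y$ with $H(W)=H(X\mid Y)$, the reuse of the stopping time \cref{for:stop}, and the role swap for $\mathcal J$ are exactly the intended route, but there is a genuine gap in your error bound: you never handle the prefixes on which the stopping time defaults to depth $n$, i.e.\ trajectories with $Z(\mathsi{W}_m)\geq 4^{-n}$ and $T(\mathsi{W}_m)\geq 4^{-n}$ for every $m<n$. The common-fate property cannot ``rule out'' such prefixes; the trichotomy in the proof of \cref{thm:actual-R} only bounds their probability, and only by $\mathsi{P}\{\mathsi{s}=n\}\leq 2^{-n/5+o(n)}$. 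Under your choice $\mathcal J=\{T(\mathsi{W}_{\mathsi{s}})<4^{-n}\}$ these prefixes are not recorded, so the decoder must recover their inputs by successive cancellation, and they enter the union bound on the block error probability with Bhattacharyya parameters that are \emph{not} below $4^{-n}$ (nothing forces them below order one). So your claim that every prefix outside $\mathcal J$ contributes at most $4^{-n}$ fails on the default branch, and the best available estimate of the extra contribution is $N\cdot\mathsi{P}\{\mathsi{s}=n\}\leq N\cdot 2^{-n/5+o(n)}$, which dwarfs $1/N$; the decoder has no mechanism at all for learning those bits, so the bound $1/N$ on the block error probability is not established.

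The fix is the exact dual of freezing in the channel-coding case: also place in $\mathcal J$ (i.e.\ record) every prefix on which $\mathsi{s}=n$ by default, so that the only unrecorded prefixes are those with $Z(\mathsi{W}_{\mathsi{s}})<4^{-n}$. Recording is harmless for the error probability and costs only $\mathsi{P}\{\mathsi{s}=n\}\leq 2^{-n/5+o(n)}$ in rate, which is absorbed into the $N^{-1/5}$ gap exactly as in \cref{thm:actual-R}. With this one correction the rest of your argument is sound: the complexity estimate of \cref{thm:actual-C} carries over verbatim because the stopping time is unchanged, the unrecorded channels give block error probability at most $N\cdot 4^{-n}=1/N$, and the mirrored trichotomy yields $R=H(X\mid Y)\pm 2^{-n/5+o(n)}$, i.e.\ the stated gap to entropy.
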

	
	\begin{cor}[Log-log code for lossy compression]
		If $Y∈𝒴$ is a random source and $\dist：𝔽₂×𝒴→[0,1]$ is the distortion function,
		then lossy compression can be performed with encoding and decoding complexity
		$O(N㏒(㏒N))$ per code block or $O(㏒(㏒N))$ per source observation.
		excess of distortion $D-Δ<1/N²$, and gap to capacity $R-I(W)<N^{-1/5}$.
	\end{cor}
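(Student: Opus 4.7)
The plan is to reduce the lossy-compression log-log corollary to the pruning analysis already performed in this chapter and the test-channel framework of \cref{cha:dual}. Fix a test channel $W(x｜y)$ that attains the rate--distortion function at distortion $Δ$, i.e., that minimizes \cref{for:minIXY}; since $W$ is only a BDMC in general, the first step is to pass to its symmetric equivalent $˜W$ via the symmetrization reduction of \cref{cha:dual}, so that every statement about $\{Z(𝘞_n)\}$ and $\{T(𝘞_n)\}$ applies to $W$ verbatim.

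Next I would declare, in direct analogy with \cref{for:stop},
\[𝘴 ≔ n ∧ \min\{m : T(𝘞_m)<4^{-n}† or †Z(𝘞_m)<4^{-n}\},\]
and let $𝒥$ collect those $𝘑₁^𝘴$ for which $𝘴$ halts because $Z(𝘞_𝘴)<4^{-n}$. The corresponding reliable leaves carry the essence-of-$Y$ bits that the compressor transmits, while the leaves stopped due to $T(𝘞_𝘴)<4^{-n}$---plus the vanishing contribution from $𝘴=n$---represent randomness that the decompressor simulates, exactly as in the unpruned scheme of \cref{cha:dual}.

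With this setup, each of the three promised bounds mirrors an already-proved statement of this chapter. For complexity, \cref{lem:prune-C} gives $O(N𝘌[𝘴])$, and a verbatim repetition of \cref{thm:actual-C}---using \cref{thm:Z-e2pin,thm:T-e2pin} with $(π,ρ)=(1/40,1/5)∈𝒪$ to force both $Z(𝘞_m)$ and $T(𝘞_m)$ below $4^{-n}$ once $m=Ω(㏒n)$---yields $𝘌[𝘴]=O(㏒n)=O(㏒(㏒N))$. For excess distortion, the analogue of \cref{lem:prune-P} applied to \cref{ine:D<sum} gives $D-Δ≤𝘌[T(𝘞_𝘴)·𝘐\{𝘑₁^𝘴∉𝒥\}]≤4^{-n}=1/N²$. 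For the rate, $R$ equals the measure of the reliable-halt branch, and I would run the common-fate template of \cref{thm:actual-R} on the bad event that $𝘴$ halts at $Z(𝘞_𝘴)<4^{-n}$ although $Z(𝘞_m)→1$ (in which case $T$ would have fired first): this event has mass $2^{-n/5+o(n)}$, whence $R-I(W)<2^{-n/5+o(n)}$ and then $R-I(W)<N^{-1/5}$ after the usual topological argument absorbing the $o(n)$ term.

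The step I expect to be most delicate is orienting the rate inequality in the \emph{opposite} direction from \cref{thm:actual-R}: for compression we need $R$ to lie above $I(W)$ by at most $N^{-1/5}$, not below capacity by that margin. The two-sided stopping rule earns its keep precisely here---without halting on $T<4^{-n}$ as well as on $Z<4^{-n}$, trajectories that eventually polarize to the noisy end could be miscategorized as reliable and inflate $R$ well past $I(W)$. Controlling this miscounting via the common-fate property between $Z$ and $T$, rather than by the one-sided $Z$-analysis that sufficed for SBDMC channel coding, is the only place where genuinely new work, as opposed to a transcription of the existing machinery, is required.
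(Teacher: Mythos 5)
Your construction follows the paper's own route (the two-sided stopping rule plus a transplant of \cref{thm:actual-C,thm:actual-R}), but the distortion bookkeeping contains a genuine error. You assign the defaulted leaves, those with $\mathsi{s}=n$ because neither parameter ever fell below the threshold, to the simulated pile (the complement of $\mathcal{J}$), and then claim $D-\Delta\le \mathsi{E}\bigl[T(\mathsi{W}_{\mathsi{s}})\cdot\mathsi{I}\{\mathsi{J}_1^{\mathsi{s}}\notin\mathcal{J}\}\bigr]\le 4^{-n}$. The last inequality is false: on the defaulted event the stopping rule guarantees precisely that $T(\mathsi{W}_m)\ge 4^{-n}$ for all $m<n$, so $T(\mathsi{W}_{\mathsi{s}})$ need not be small there, and that event has probability about $2^{-n/5+o(n)}$ (it is exactly the quantity $\mathsi{P}\{\mathsi{s}=n\}$ bounded in the proof of \cref{thm:actual-R}). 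With your allocation the excess distortion is therefore only of order $N^{-1/5+o(1)}$, not the promised $1/N^{2}$. The correct dual of the paper's channel-coding convention (where $\mathcal{J}$ contains only the $Z$-fired leaves and the defaulted leaves are frozen, costing rate rather than error) is to let the decompressor simulate only the $T$-fired leaves and to have the compressor transmit both the $Z$-fired and the defaulted leaves; then every simulated leaf has $T<4^{-n}$, so the analogue of \cref{ine:D<sum} gives $D-\Delta<1/N^{2}$, while the stray $2^{-n/5+o(n)}$ mass is absorbed into the rate, where it is harmless next to the $N^{-1/5}$ gap.

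A second caveat: your rate claim leans on symmetrization and the $W$-process alone, but after symmetrization the non-simulated mass tends to $1-H(W)=I(\tilde W)$, which equals the rate--distortion limit $I(W)=H(Q)-H(W)$ only when the test channel's input marginal $Q$ is uniform. For a general test channel you must also run the process grown from $Q$ (exactly the two-process stopping rule of this section) so that indices that are already deterministic under $Q$'s polarization are not paid for; otherwise $R$ converges to $1-H(W)$, a constant distance above $I(W)$. The corollary is meant to be read with that two-process machinery, not with the single symmetrized $W$-process.
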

	
	The discussion in the previous section applies.
	That is to say, the gap to capacity can be made as low as $N^{-1/4.8}$,
	the block error probability (or the excess of distortion)
	as low as $\exp(-n^τ)$, while retaining the same log-logarithmic complexity.

\section{Prospective Pruning}

	The pruning technique generalizes to arbitrary input (non-binary)
	and arbitrary matrix (no longer $\loll$).
	The next two chapters establish the theory of $\{𝘞_n\}$ for the general scenario.
	After that we can prune, yielding $𝘞_𝘴$, etc.

\chapter{General Alphabet and Kernel}\label{cha:general}

	\dropcap
	Binary channel models, be it BEC, BSC, BDMC, or BI-AWGN, are favored
	in theory for their simplicity and for real world communications.
	But the real world is not always binary;
	we might want to compress losslessly a non-binary source (yes, no, or unanswered);
	or we might want to approximate colors with a $256$-color palette.
	This is one of the two generalizations I want to make
	in this chapter---to enable all input alphabets of finite sizes \cite{STA09}.
	
	Enabling more input alphabets creates new challenges.
	The root of all challenges is that, to synthesize $WＷ1$ and $WＷ2$,
	we need to talk about the random variables $U₁²≔X₁²G^{-1}$ (c.f.\ \cref{for:U=X/G}).
	This does not a priori make sense as the addition structure
	on the input alphabet $𝒳$ does not uniquely exist.
	Plus, even if we equip $𝒳$ with some group structure (abelian or not),
	there are cases where the descendants $𝘞_n$ are not polarized.
	This chapter addresses this challenge---how to equip $𝒳$ with
	a proper algebraic structure to facilitate polarization,
	even if it means adding dummy symbols into $𝒳$.
	
	Specking of the root of the definition $U₁²≔X₁²G^{-1}$,
	it is not hard to imagine that, if we substitute $G$ with a general $ℓ×ℓ$ matrix,
	the overall performance of polar coding may differ \cite{KSU10}.
	To tell if it becomes better, worse, or not working at all,
	we need a machinery that predicts the performance of each matrix.
	This is the second generalization I want to make
	in this chapter---to judge all matrices as the polarizing kernel.
	
	The judgement of a matrix will be
	very similar to those in \cref{cha:origin,cha:dual}.
	We will need to prove (or assume) an eigen behavior.
	From that we can derive the en23 behavior (named after $\exp(-n^{2/3})$),
	the een13 behavior (named after $\exp\(-e^{n^{1/3}}\)$),
	and finally the elpin behavior (named after $\exp(-ℓ^{πn})$).
	Along the way, I will explain how these behaviors relate to
	LLN, LDP, CLT, and MDP in probability theory.
	
	This is a long chapter.
	Here goes the actual plan for this chapter.

\section{Chapter Organization}

	I will define the most general channel (\cref{sec:setup}).
	Following that is a reduction of input alphabet
	to prime size or prime-power size (\cref{sec:six}).
	Even more definitions are then made, including channel parameters
	$Ｐ$, $Z$, $Ｚ$, $T$, $S$, and $Ｓ$ (\cref{sec:param}) and how to use
	an invertible $ℓ×ℓ$ matrix to synthesize $WＷ1,WＷ2…WＷ{ℓ}$ (\cref{sec:matrix}).
	Inserted here is a briff review of LLN, LDP, CLT, and MDP in probability theory
	and how they relate to coding theory (\cref{sec:regime}).
	I will then clarify that only a subset of matrices are
	qualified to polarize channels (\cref{sec:ergodic}).
	After that is a compactness argument that shows
	$ϱ>0$ for those qualified matrices (\cref{sec:en23}).
	Finally, the region of $(π,ρ)$ where $𝘗\{𝘡_n<\exp(-ℓ^{πn})\}>1-H(W)+ℓ^{-ρn}$
	will be pictured in two steps (\cref{sec:een13,sec:elpin}).

\section{General Problem Setup}\label{sec:setup}

	The following channel is what was considered by Shannon in the eternal paper,
	followed by other big scholars who generalized Shannon.
	It is of historical interest to prove theorems over the same class of channels.
	
	\begin{dfn}
		A \emph{discrete memoryless channel} (DMC) is a Markov chain $W：𝒳→𝒴$ such that
		\begin{itemize}
			\item	the input alphabet $𝒳$ is a finite set,
			\item	the output alphabet $𝒴$ is a finite set, and
			\item	$W(y｜x)$ is the array of transition probabilities satisfying	\\
					$∑_{y∈𝒴}W(y｜x)=1$ for all $x∈𝒳$.
		\end{itemize}
	\end{dfn}
	
	Denote by $Q(x)$ an input distribution, by $W(x,y)$ the joint distribution,
	by $W(x｜y)$ the posterior distribution, and by $W(y)$ the output distribution.
	
	\begin{dfn}
		The \emph{conditional entropy} of $W$ is
		\[H(W)≔-∑_{x∈𝒳}∑_{y∈𝒴}W(x,y)㏒_{\abs{𝒳}}W(x｜y),\]
		which is the amount of noise/equivocation/ambiguity/fuzziness caused by $W$.
	\end{dfn}
	
	\begin{dfn}
		The \emph{mutual information} of $W$ is
		\[I(W)≔H(Q)-H(W)=∑_{x∈𝒳}∑_{y∈𝒴}W(x,y)㏒_{\abs{𝒳}}÷{W(x｜y)}{Q(x)},\]
		Fix a $Q$ that maximizes $I(W)$.
		Call this $Q$ a \emph{capacity-achieving input distribution}.
		Call this maximal $I(W)$ the \emph{channel capacity} of $W$.
	\end{dfn}
	
	Let $𝒰$ be a finite set called the user alphabet.
	Messages to be transmitted over $W$ will be
	pre-encoded into $𝒰$ and distributed uniformly.
	The overall goal is to construct, for some large $N$, an encoder $ℰ：𝒰^{RN}→𝒳^N$
	and a decoder $𝒟：𝒴^N→𝒰^{RN}$ such that the composition
	\cd[every arrow/.append style=mapsto]{
		U₁^{RN}\rar{ℰ}	&	X₁^N\rar{W^N}	&	Y₁^N\rar{𝒟}	&	ˆU₁^{RN}
	}
	is the identity map as frequently as possible,
	and $R$ as close to the channel capacity $I(W)$ as possible.
	
	\begin{dfn}
		Call $N$ the block length.
		Call $R$ the code rate.
		Denote by $Ｐ$, called the block error probability,
		the probability that $ˆU₁^{RN}≠U₁^{RN}$.
	\end{dfn}
	
	In the next section, I will argue that we can reduce the problem of noisy-channel
	coding over arbitrary DMCs to $𝒰$ and $𝒳$ having finite field structure.

\section{Pay Asymmetry to Buy Non-Field}\label{sec:six}

	Reducing the size of user alphabet is easier and
	will be handled before the reduction of input alphabet:
	Consider integer factorization $|𝒰|=p₁^{k₁}p₂^{k₂}\dotsm p_ω^{k_ω}$.
	We can then choose a bijection
	\[𝒰≅𝔽_{p₁^{k₁}}×𝔽_{p₂^{k₂}}×\dotsb×𝔽_{p_ω^{k_ω}}.\]
	To transmit a symbol $u∈𝒰$, it suffices to transmit its projections
	onto each of the constituent fields $𝔽_{p₁^{k₁}}…𝔽_{p_ω^{k_ω}}$.
	It remains to show, for each finite filed $𝔽_{p^k}$,
	how to design encoder/decoder for the user alphabet $𝒰=𝔽_{p^k}$.
	
	Moreover, each finite field admits a vector space structure
	\[𝔽_{p^k}≅𝔽_p^k.\]
	Therefore, when necessary, each $u∈𝔽_{p^k}$ is considered as a vector
	in $𝔽_p^k$, with each of its coordinates treated as a standalone symbol in $𝔽_p$.
	That is to say, when necessary, we can even assume $𝒰=𝔽_p$ is of prime order.
	
	Reducing the size of user alphabet takes effort and will be handled in the sequel:
	Recall that $\abs{𝒰}$ is either a prime or a prime power.
	Let $q$ be any power of $\abs{𝒰}$ greater than or equal to $\abs{𝒳}$.
	Consider an embedding $𝒳⊆𝔽_q$;
	call an $x∈𝒳$ an actual symbol;
	call a $v∈𝔽_q、𝒳$ a virtual symbol.
	Define a preprocessor channel $♮：𝔽_q→𝒳$ that sends an actual symbol to itself,
	and a virtual symbol to a fixed actual symbol $x₁∈𝒳$.
	In terms of transition probabilities:
	\[♮(y｜x)=\cas{
		1	&	if $y=x∈𝒳$,	\\
		1	&	if $y=x₁$ and $x∉𝒳$,	\\
		0	&	otherwise.
	}\]
	That is to say, all virtual symbols are semantic copies of $x₁$;
	they make no difference to the decoder whatsoever.
	
	Now build the following degraded channel
	\cd{
		𝔽_q\rar{♮}	&	𝒳\rar{W}	&	𝒴.
	}
	By the data processing inequality, the channel capacity of $W∘♮$
	is at most the channel capacity of $W$.
	Hence using $𝔽_q$ as the input alphabet does no harm at best,
	and handicaps ourselves at worst.
	The former is the case:
	If we take any capacity-achieving input distribution $Q$ of $W$ and
	apply it directly to $W∘♮$, then we can transmit information at the same rate.
	In conclusion, $W$ and $W∘♮$ share the same channel capacity.
	It suffices to consider $W∘♮$---channels with prime-power sized
	input alphabet---for the remainder of this chapter.
	
	Remark:
	$♮∘W$ is not a symmetric channel;
	in particular, the uniform distribution does not achieve the best transmission rate.
	(Because that means $x₁$ is input more frequently than it should have been.)
	As a consequence, this technique of allowing dummy symbols, albeit trivial,
	does not apply until \cite{HY13} taught us how to deal with asymmetric channels.
	Hence the section title \emph{pay asymmetry to buy non-field}.
	
	We hereby assume that $𝒳=𝔽_q$ is a finite field.
	If necessary, $p$ is the characteristic of $𝔽_q$
	and $𝔽_p$ denotes the ground field of $𝔽_q$.
	I am going to define more channel parameters in the next section;
	some of them require the finite field structure on $𝒳$
	in a way that is not required/obvious when $q=2$.

\section{Parameters and Hölder Tolls}\label{sec:param}

	Let $W$ be any DMC with input alphabet $𝒳=𝔽_q$ for some prime power $q$.
	To emphasize that the input alphabet was reduced to $q$,
	we call $W$ a \emph{$q$-ary channel}.
	The parameters below generalize $H$, $I$, and $Z$ defined in \cref{cha:origin}
	and $T$ defined in \cref{cha:dual}.
	
	\begin{dfn}
		The \emph{bit error probability} of a $q$-ary channel $W$ is
		\[Ｐ(W)≔∑_{y∈𝒴}W(y)（1-\max_{x∈𝒳}W(x｜y)）,\]
		which is how often the maximum a posteriori estimator
		$y↦\argmax_{x∈𝒳}W(x｜y)$ makes a mistake.
	\end{dfn}
	
	\begin{dfn}
		The \emph{Bhattacharyya parameter} of a $q$-ary channel $W$ is
		\[Z(W)≔÷1{q-1}∑_{\substack{x,x'∈𝔽_q\\x≠x'}}
			∑_{y∈𝒴}√{W(x,y)W(x',y)}.\]
		In addition, define
		\[Ｚ(W)≔\max_{0≠d∈𝔽_q}∑_{x∈𝔽_q}∑_{y∈𝒴}√{W(x,y)W(x+d,y)}.\]
		The parameter below is not used directly in this work,
		but plays a central role in a theorem I cite.
		I include it for completeness:
		\[Z_d(W)≔∑_{x∈𝔽_q}∑_{y∈𝒴}√{W(x,y)W(x+d,y)}.\]
	\end{dfn}
	
	Remarks:
	$Z$ is the average of $Z_d$ over $d∈𝔽_q^×$ and $Ｚ$ is the maximum thereof.
	The rescaling is such that $0≤Z≤Ｚ≤(q-1)Z≤q-1$ and that $0≤Z_d≤1$.
	When $q=2$, i.e., for the binary case, $Z=Ｚ=Z₁=$
	the $Z$-parameter defined in \cref{dfn:bin-Z}.
	Notice how the summation over distinct $x,x'∈𝔽₂$
	turns into twice the case of $(x,x')=(0,1)$.
	
	\begin{dfn}
		The \emph{total variation norm} of a $q$-ary channel $W$ is
		\[T(W)≔∑_{y∈𝒴}W(y)∑_{x∈𝒳}\abs[\Big]{W(x｜y)-÷1q}.\]
		which is the total variation distance from $W(•｜y)$ to
		the uniform distribution, weighted by the frequency each $y∈𝒴$ appears.
	\end{dfn}
	
	\begin{dfn}
		The \emph{Fourier coefficients} of a $q$-ary channel $W$ is
		\[M(w｜y)≔∑_{z∈𝔽_q}W(z｜y)χ(wz),\]
		where $χ：𝔽_q→ℂ$ is an additive character defined as $χ(x)≔\exp(2πi\tr(x)/p)$,
		and $\tr：𝔽_q→𝔽_p$ is the field trace onto the ground field.
	\end{dfn}
	
	\begin{dfn}
		The \emph{Fourier $ℓ¹$ norm} of a $q$-ary channel $W$ are
		\[S(W)≔÷1{q-1}∑_{0≠w∈𝔽_q}∑_{y∈𝒴}W(y)·\abs[\Big]{M(w｜y)}\]
		In addition, define
		\[Ｓ(W)≔\max_{0≠w∈𝔽_q}∑_{y∈𝒴}W(y)·\abs[\Big]{M(w｜y)}.\]
	\end{dfn}
	
	Remarks:
	The rescaling is such that $0≤S≤Ｓ≤(q-1)S≤q-1$.
	When $q=2$, the parameters collapse---$S=Ｓ=T=1-2Ｐ=$
	the $T$-parameter defined in \cref{dfn:bin-T}.
	
	I borrow some lemmas to relate these parameters.
	
	\begin{lem}[$P$ vs $Z$]\label{lem:PvsZ}
		\cite[Lemma~22 with $k=1$]{MT14}
		For any $q$-ary channel $W$,
		\[÷{q-1}{q²}（√{1+(q-1)Z(W)}-√{1-Z(W)}）²≤Ｐ(W)≤÷{q-1}2Z(W).\]
	\end{lem}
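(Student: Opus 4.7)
The plan is to handle the two inequalities separately: the upper bound is a standard pairwise Bhattacharyya-style bound, while the lower bound uses a pointwise Cauchy--Schwarz estimate followed by Jensen's inequality on a convex function.

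For the upper bound $Ｐ(W) \leqslant \frac{q-1}{2} Z(W)$, I would let $x^*(y) \coloneqq \argmax_x W(x｜y)$ denote the MAP decision at output $y$. Since $W(x｜y) \leqslant W(x^*(y)｜y)$ for every $x$, the elementary inequality $W(x｜y) \leqslant \sqrt{W(x｜y)\,W(x^*(y)｜y)}$ holds whenever $x \neq x^*(y)$, so after multiplying by $W(y)$ and summing I would obtain $Ｐ(W) \leqslant \sum_y \sum_{x \neq x^*(y)} \sqrt{W(x,y)\,W(x^*(y),y)}$. For fixed $y$, the $2(q-1)$ ordered pairs $(x, x')$ that involve $x^*(y)$ contribute exactly twice this inner sum inside $\sum_{x \neq x'} \sqrt{W(x,y)\,W(x',y)}$, so absorbing the remaining nonnegative terms costs only a factor of two, and summing over $y$ recovers $\frac{q-1}{2} Z(W)$ by the definition of $Z$.

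For the lower bound, my plan is to argue pointwise in $y$ first. Fix $y$ and abbreviate $p_x \coloneqq W(x｜y)$, $e(y) \coloneqq 1 - \max_x p_x$, and $z(y) \coloneqq \sum_{x \neq x'} \sqrt{p_x p_{x'}}$, so that $\sum_y W(y) e(y) = Ｐ(W)$ and $\sum_y W(y) z(y) = (q-1) Z(W)$. The key identity is $\sqrt{1 + z(y)} = \sum_x \sqrt{p_x}$; peeling off the dominant term $\sqrt{1-e(y)}$ and applying Cauchy--Schwarz to the remaining $q-1$ terms yields $\sqrt{1+z(y)} \leqslant \sqrt{1-e(y)} + \sqrt{(q-1)e(y)}$. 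Substituting $\sqrt{1-e(y)} = \cos\theta$ and $\sqrt{(q-1)e(y)} = \sqrt{q-1}\sin\theta$ rewrites the right-hand side as $\sqrt{q}\sin(\theta + \phi_0)$ with $\sin\phi_0 = 1/\sqrt{q}$, and isolating the smallest admissible $\theta$ gives the pointwise lower bound $e(y) \geqslant \phi(z(y))$ with $\phi(z) \coloneqq \frac{q-1}{q^2}(\sqrt{1+z} - \sqrt{1 - z/(q-1)})^2$.

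The last step is to pass from the pointwise bound to the global one by Jensen, which requires $\phi$ to be convex on $[0, q-1]$. Expanding the square rewrites $\phi(z)$ as an affine function of $z$ minus $\frac{2(q-1)}{q^2}\sqrt{(1+z)(1-z/(q-1))}$; the radicand is a downward-opening parabola in $z$ and $\sqrt{\cdot}$ is concave increasing, so the composition is concave, its negation is convex, and $\phi$ inherits convexity. Jensen then delivers $Ｐ(W) = \sum_y W(y) e(y) \geqslant \sum_y W(y) \phi(z(y)) \geqslant \phi(\sum_y W(y) z(y)) = \phi((q-1)Z(W))$, which is the claimed lower bound. I expect the main obstacle to be the inversion step that extracts the explicit $\phi$: without the right trigonometric parameterization it degenerates into a quadratic in $e(y)$ whose correct branch must be tracked with care, and the downstream convexity check works cleanly only on the expanded form---a brute-force second-derivative attack on the squared form would be considerably more painful.
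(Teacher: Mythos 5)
The paper never proves this lemma---it is imported verbatim from \cite[Lemma~22 with $k=1$]{MT14} and used as a black box---so there is no in-text argument to compare against; the only question is whether your self-contained derivation is sound, and it is. The upper bound is the standard pairwise-MAP estimate: for $x\neq x^*(y)$ you have $W(x\mid y)\le\sqrt{W(x\mid y)\,W(x^*(y)\mid y)}$, multiplying by $W(y)$ turns this into joint probabilities, and the $2(q-1)$ ordered pairs through $x^*(y)$ contribute twice that inner sum to $(q-1)Z(W)$, giving $P_{\mathrm e}(W)\le\tfrac{q-1}{2}Z(W)$. For the lower bound, the pointwise chain is correct: $\sqrt{1+z(y)}=\sum_x\sqrt{W(x\mid y)}\le\sqrt{1-e(y)}+\sqrt{(q-1)e(y)}$ by Cauchy--Schwarz, and the inversion is legitimate because any admissible $\theta$ satisfies $\theta\ge\theta_{\min}$ (the case $\theta+\phi_0>\pi/2$ still has $\theta>\pi/2-\phi_0\ge\theta_{\min}$) while $\sin^2$ is increasing on $[0,\pi/2]$, so $e(y)\ge\phi(z(y))$ with exactly the stated $\phi$; I checked the trigonometric algebra and it reproduces $\phi(z)=\tfrac{q-1}{q^2}\bigl(\sqrt{1+z}-\sqrt{1-z/(q-1)}\bigr)^2$. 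The convexity argument (affine part plus the negative of the square root of a concave nonnegative quadratic, hence convex on $[0,q-1]$) is right, Jensen is applied in the correct direction, and $\sum_y W(y)z(y)=(q-1)Z(W)$ matches the paper's normalization of $Z$ (ordered pairs divided by $q-1$), so the endpoint is precisely the stated bound.

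Two things your route buys that are worth keeping in mind: it nowhere uses symmetry or uniformity of the input distribution---only that the posteriors $W(\cdot\mid y)$ sum to one---so it covers the paper's $q$-ary channels equipped with an arbitrary (e.g.\ capacity-achieving, possibly non-uniform) input $Q$, which is exactly the generality in which the lemma is invoked; and it makes the chapter self-contained where the paper leans on \cite{MT14}. Your closing worry about the inversion step is the right place to be careful, but as argued above the branch issue is benign once $e(y)$ is parameterized on $[0,\pi/2]$.
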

	
	\begin{lem}[$P$ vs $T$]\label{lem:PvsT}
		\cite[Lemma~23 with $k=q-1$]{MT14}
		For any $q$-ary channel $W$,
		\[÷{q-1}q-Ｐ(W)≤÷{T(W)}2≤÷{q-1}q-÷1q（(q-1)qＰ(W)-(q-1)(q-2)）.\]
	\end{lem}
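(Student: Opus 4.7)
The plan is to reduce both inequalities to pointwise statements indexed by $y∈𝒴$ and then to integrate against $W(y)$. Fix $y∈𝒴$, set $P(x)≔W(x｜y)$, and let $m≔\max_{x∈𝔽_q}P(x)$; the per-$y$ contributions are $1-m$ to $Ｐ(W)$ and $÷12∑_{x∈𝔽_q}\abs{P(x)-1/q}$ to $T(W)/2$. The starting point is the symmetric decomposition
\[÷12∑_{x∈𝔽_q}\abs[\Big]{P(x)-÷1q}=∑_{x:\,P(x)>1/q}（P(x)-÷1q）,\]
which follows from $∑_x(P(x)-1/q)=0$, since the positive and negative deviations of $P$ from the uniform value $1/q$ must balance.

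For the lower bound, I will observe that $m-1/q$ is itself one of the summands on the right-hand side, because the maximizer $x^*$ automatically satisfies $P(x^*)=m≥1/q$ (the maximum of a probability distribution on a $q$-set dominates the mean). Hence $m-1/q≤÷12∑_x\abs{P(x)-1/q}$ pointwise. Averaging against $W(y)$ gives $(1-Ｐ(W))-1/q≤T(W)/2$, which is exactly the stated lower bound $(q-1)/q-Ｐ(W)≤T(W)/2$.

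For the upper bound, the plan is to use two elementary facts about the set $\{x:P(x)>1/q\}$: (i) its cardinality is at most $q-1$, because $∑_xP(x)=1$ forces at least one index to satisfy $P(x)≤1/q$; and (ii) every summand $P(x)-1/q$ inside the decomposition is at most $m-1/q$. Multiplying these two bounds gives $÷12∑_x\abs{P(x)-1/q}≤(q-1)(m-1/q)$ pointwise. Averaging against $W(y)$ yields $T(W)/2≤(q-1)\((1-Ｐ(W))-1/q\)=(q-1)²/q-(q-1)Ｐ(W)$, which matches the stated right-hand side after collecting terms.

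I do not foresee a serious obstacle: beyond these two pointwise observations, the only remaining step is the purely cosmetic algebraic reshuffling showing that $(q-1)²/q-(q-1)Ｐ(W)$ coincides with the displayed expression $÷{q-1}{q}-÷1q\((q-1)qＰ(W)-(q-1)(q-2)\)$; distributing $-1/q$ and regrouping $(q-1)+(q-1)(q-2)=(q-1)²$ resolves this in one line, and the $q=2$ case (which must recover $T=1-2Ｐ$ as in \cref{cha:dual}) provides a quick sanity check. Both inequalities are individually sharp---the lower one is tight whenever the conditional distribution concentrates all positive deviation on a single symbol, and the upper one is tight when the positive deviation is spread equally across $q-1$ symbols---so no further refinement is available.
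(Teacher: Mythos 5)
Your argument is correct, but note that the dissertation never proves this lemma at all---it is imported verbatim as a citation to \cite{MT14} (their Lemma~23, specialized to $k=q-1$), where it arises as one member of a family of list-type bounds relating the error probability of guessing among the $k$ most likely symbols to partial sums of posteriors. Your route is a direct, self-contained replacement: the identity $\tfrac12\sum_x\lvert P(x)-1/q\rvert=\sum_{x:P(x)>1/q}(P(x)-1/q)$, the observation that the maximizer alone already contributes $m-1/q$ (lower bound), and the observation that the positive-deviation set has at most $q-1$ elements, each deviating by at most $m-1/q$ (upper bound), followed by averaging over $y$ with weights $W(y)$. The algebra checks out: $(q-1)/q-\tfrac1q\bigl((q-1)qP_{\mathrm e}-(q-1)(q-2)\bigr)=(q-1)^2/q-(q-1)P_{\mathrm e}$, which is exactly what your pointwise bound integrates to, and at $q=2$ both sides collapse to $T=1-2P_{\mathrm e}$, matching the remark in the text. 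What your approach buys is independence from the machinery of \cite{MT14} at the cost of nothing---the proof is three lines of pointwise reasoning plus linearity of the average---so it would serve perfectly well as an inline proof if one wished to make the dissertation self-contained on this point; what the citation buys is the more general statement for arbitrary $k$, which the dissertation does not need.
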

	
	\begin{lem}[$P$ vs $S$]\label{lem:PvsS}
		\cite[Lemma~26 with $k=q-1$]{MT14}
		For any $q$-ary channel $W$,
		\[1-÷q{q-1}Ｐ(W)≤S(W)≤(q-1)q（÷{q-1}q-Ｐ(W)）√{1-÷q{q-1}÷{q-2}{q-1}}.\]
	\end{lem}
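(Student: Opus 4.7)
The plan is to reduce both inequalities to a per-output estimate and then integrate against $W(y)$. Write $p_z \coloneqq W(z\mid y)$, $e_y \coloneqq 1 - \max_z p_z$, and $\sigma_y \coloneqq \frac{1}{q-1}\sum_{w\neq 0}\abs{M(w\mid y)}$, so that $Ｐ(W)=\sum_y W(y)\, e_y$ and $S(W)=\sum_y W(y)\,\sigma_y$. The task becomes bounding $\sigma_y$ from both sides in terms of $e_y$, with all $y$-averaging deferred to the end.

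For the lower bound I would pick a maximizer $x^{*}=\argmax_z p_z$ and observe
\[
\abs{M(w\mid y)} \;\geq\; \operatorname{Re}\!\bigl(M(w\mid y)\,\overline{\chi(wx^{*})}\bigr) \;=\; \sum_z p_z\cos\!\bigl(\tfrac{2\pi}{p}\tr(w(z-x^{*}))\bigr).
\]
Summing over $w\neq 0$, exchanging the order of summation, and applying the orthogonality $\sum_{w\in\mathbb{F}_q}\chi(wt)=q\,\delta_{t,0}$ collapses the inner sum to $q\,\delta_{z,x^{*}}-1$, yielding $(q-1)\sigma_y \geq q\,p_{x^{*}}-1 = (q-1)-q\,e_y$. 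Averaging against $W(y)$ is affine in $e_y$ and delivers $S(W)\geq 1-\tfrac{q}{q-1}Ｐ(W)$ at once, with no need for Jensen.

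For the upper bound I would use Parseval with the same orthogonality identity to get $\sum_{w\neq 0}\abs{M(w\mid y)}^{2}=q\sum_z p_z^{2}-1$, then invoke Cauchy--Schwarz across the $q-1$ nonzero frequencies together with the elementary $\sum_z p_z^{2}\leq\max_z p_z=1-e_y$. The resulting per-$y$ bound $\sigma_y\leq\sqrt{1-q\,e_y/(q-1)}$ passes through Jensen on the concave square root to a global estimate of the correct shape. To recover the printed form I would then note the algebraic identity $\sqrt{1-q(q-2)/(q-1)^{2}}=1/(q-1)$, which converts the stated right-hand side into the linear envelope $(q-1)-q\,Ｐ(W)$, a looser but affine-in-$Ｐ$ expression that emerges directly from the parameter-$k$ machinery of \cite{MT14} specialized at $k=q-1$.

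The main obstacle I foresee is reconciling the concave Cauchy--Schwarz estimate with the affine shape printed in the statement. Closing the gap requires either trading the bound $\sum p_z^{2}\leq 1-e_y$ for the tighter extremal identity $\sum p_z^{2}=(1-e_y)^{2}+e_y^{2}/(q-1)$ attained at the almost-uniform posterior, or carrying an extra factor of $q-1$ through the Cauchy--Schwarz step so that the $1/(q-1)$ in $\sqrt{1-q(q-2)/(q-1)^{2}}$ emerges with the right sign. Either route must specialize cleanly to $S=T=1-2Ｐ$ when $q=2$, where the characteristic square-root factor equals $1$; this $q=2$ sanity check is the main constraint that guides the algebra.
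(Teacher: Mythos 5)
Your lower bound is correct and self-contained: $\lvert M(w\mid y)\rvert\ge\operatorname{Re}\bigl(M(w\mid y)\overline{\chi(wx^{*})}\bigr)$ plus the orthogonality relation collapses to $(q-1)\sigma_y\ge q\max_zW(z\mid y)-1$, which is affine in the per-output error, so averaging needs no convexity argument and gives $S(W)\ge1-\tfrac{q}{q-1}P_{\mathrm e}(W)$. Note that the paper never proves this lemma---it is imported verbatim from \cite[Lemma~26]{MT14}---so there is no in-paper argument to match; your proof of the lower half is the only one on the table and it stands.

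The upper bound is not established, and the obstacle you flag is a real gap, not a bookkeeping issue. Parseval, Cauchy--Schwarz, $\sum_zp_z^2\le\max_zp_z$ and Jensen yield only $S(W)\le\sqrt{1-\tfrac{q}{q-1}P_{\mathrm e}(W)}$, whereas the stated right-hand side simplifies (the square-root factor is $1/(q-1)$) to the affine bound $(q-1)-qP_{\mathrm e}(W)=(q-1)\bigl(1-\tfrac{q}{q-1}P_{\mathrm e}(W)\bigr)$; since $\sqrt v\ge(q-1)v$ whenever $v\le1/(q-1)^2$, your bound is weaker than the statement by an unbounded factor as $P_{\mathrm e}(W)\to(q-1)/q$, which is exactly the noisy regime in which the paper uses the lemma (proof of \cref{pro:ex-toll}). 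Your first proposed repair cannot work: given $\max_zp_z=1-e_y$, the expression $(1-e_y)^2+e_y^2/(q-1)$ is the \emph{minimum} of $\sum_zp_z^2$, not an upper bound (take $p=(1/2,1/2,0)$, $q=3$: $\sum_zp_z^2=1/2>3/8$), so it cannot be inserted into the Cauchy--Schwarz chain, and its extremizer is the almost-uniform posterior rather than the relevant one. The repair that does close the gap keeps your Parseval/Cauchy--Schwarz skeleton but replaces $\sum_zp_z^2\le\max_zp_z$ by the sharp estimate $q\sum_zp_z^2-1\le(q-1)\bigl(q\max_zp_z-1\bigr)^2$, which holds for every probability vector on $q$ symbols (the left side is convex, so maximize over the polytope $\{p\colon p_z\le\max\}$ at its vertices; equality occurs when $q-1$ coordinates are tied at the maximum). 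With it, Cauchy--Schwarz gives the per-output affine bound $\tfrac1{q-1}\sum_{w\ne0}\lvert M(w\mid y)\rvert\le q\max_zW(z\mid y)-1$, no Jensen step is needed, and averaging over $y$ yields exactly $(q-1)-qP_{\mathrm e}(W)$, which also passes your $q=2$ sanity check $S=1-2P_{\mathrm e}$.
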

	
	\begin{lem}[$P$ vs $H$]\label{lem:PvsH}
		\cite[Theorem~1]{FM94}
		For any $q$-ary channel $W$,
		\begin{gather*}
			h₂(Ｐ(W))+Ｐ(W)㏒₂(q-1)≥H(W)㏒₂q≥2Ｐ(W),	\\
			H(W)㏒₂q≥(q-1)q㏒₂÷q{q-1}（Ｐ(W)-÷{q-2}{q-1}）+㏒₂(q-1).
		\end{gather*}
		Here, $h₂$ is the binary entropy function.
		The upper bound is Fano's inequality.
		The first lower bound is useful when $H(W)$ and $Ｐ(W)$ are small;
		the second lower bound is useful when $H(W)$ and $Ｐ(W)$ are close to $1$.
	\end{lem}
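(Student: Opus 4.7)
The plan is to prove the three inequalities separately by classical information-theoretic arguments, following the pattern of Feder--Merhav \cite{FM94}. For Fano's inequality (the upper bound on $H(W)㏒₂q$), I would introduce an error indicator $E≔𝕀\{ˆX(Y)≠X\}$ with $ˆX(y)≔\argmax_{x∈𝔽_q}W(x｜y)$ the MAP estimator, so that $ℙ(E=1)=Ｐ(W)$. The chain rule gives $H(X｜Y)≤H(X,E｜Y)=H(E｜Y)+H(X｜E,Y)$; bounding $H(E｜Y)≤H(E)=h₂(Ｐ(W))$, $H(X｜E=0,Y)=0$, and $H(X｜E=1,Y)≤㏒₂(q-1)$ (only $q-1$ candidates remain once the MAP guess is ruled out) yields the upper bound on $H(W)㏒₂q=H(X｜Y)$ in bits.

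For the two lower bounds, both reduce via averaging over $Y$ to pointwise inequalities about posterior distributions. Writing $p_y≔1-\max_xW(x｜y)$ and $H_y≔H(W(•｜y))$ in bits, one has $Ｐ(W)=∑_yW(y)p_y$ and $H(W)㏒₂q=∑_yW(y)H_y$. Since the right-hand sides of the two claimed lower bounds are affine in $Ｐ(W)$, linearity of expectation reduces the task to showing, for every probability vector $P$ on $𝔽_q$ with $p≔1-\max_xP(x)$, the two scalar inequalities $H(P)≥2p$ and $H(P)≥(q-1)q㏒₂(q/(q-1))(p-(q-2)/(q-1))+㏒₂(q-1)$.

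The core of the argument is then the analysis of $H^\star(p)≔\min\{H(P):\max_xP(x)=1-p\}$. For $p≤1/2$ the extremal $P$ is the two-atom spike with masses $1-p$ and $p$, giving $H^\star(p)=h₂(p)$, and the first inequality reduces to $h₂(p)≥2p$ on $[0,1/2]$, which is immediate from concavity of $h₂$ and its boundary values $h₂(0)=0$ and $h₂(1/2)=1$. For $p$ close to $(q-1)/q$ the mode constraint forbids the two-atom spike and forces the minimizer to spread across many atoms, so that at the breakpoints $p=(k-1)/k$ the minimizer is uniform on $k$ atoms with $H^\star((k-1)/k)=㏒₂k$; the second claim is then the statement that the secant line through the maximum-entropy point $((q-1)/q,㏒₂q)$ with slope $(q-1)q㏒₂(q/(q-1))$ lies below $H^\star$ on the relevant range, which one checks by evaluating at the adjacent breakpoint and invoking concavity of $H^\star$ on each piecewise-smooth segment. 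The main obstacle is precisely this one-variable extremal analysis---pinning down the entropy-minimizing $P$ for each mode probability and certifying that the two affine bounds lie below $H^\star$ globally---after which the remaining verification is elementary calculus.
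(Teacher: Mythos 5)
Your proposal is essentially a reconstruction of the cited source: the paper itself gives no proof of this lemma but imports it verbatim from Feder--Merhav \cite{FM94}, and your route (Fano via an error indicator, then a pointwise reduction over posteriors and an analysis of the minimum-entropy-versus-error curve) is exactly the argument behind that theorem, so in that sense you are on the paper's path rather than a new one. The reduction is sound because both lower bounds are affine in $Ｐ(W)$, so averaging the pointwise inequalities over $Y$ is legitimate. One step to tighten: the two affine bounds must minorize the minimum-entropy function $H^\star(p)$ on the whole range $p\in[0,(q-1)/q]$, not only on the segment where they are tight. Concavity of $H^\star$ on the segment $[(k-1)/k,k/(k+1)]$ together with the breakpoint values $H^\star((k-1)/k)=\log_2 k$ shows each secant lies below $H^\star$ \emph{on its own segment}, but for the last secant (and for the line $2p$ when $p>1/2$) you also need it to stay below $H^\star$ on all the other segments; the standard fix is to observe that the chord slopes $k(k+1)\log_2\bigl((k+1)/k\bigr)$ increase in $k$, so the piecewise-linear envelope is convex and equals the pointwise maximum of its affine pieces, making every piece a global minorant. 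With that observation (or, equivalently, checking each affine bound at all breakpoints $((k-1)/k,\log_2 k)$), your sketch closes correctly.
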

	
	The phenomenon described by the cited lemmas
	is distilled to form the following definition.
	It is the quantified version of the common-fate property.
	
	\begin{dfn}
		Fix a $q$.
		Let $A$ and $B$ be two channel parameters.
		Say $A$ and $B$ are \emph{bi-Hölder at $(a,b)$}
		if there exist constants $c,d>0$ such that, for all $q$-ary channels,
		$\abs{A(W)-a}≤c·\abs{B(W)-b}^d$ and $\abs{B(W)-b}≤c·\abs{A(W)-a}^d$.
	\end{dfn}
	
	Bi-Hölder-ness is clearly an equivalence relation.
	In particular, we will make use of its transitivity property---%
	if $A$ and $B$ are bi-Hölder at $(a,b)$ and $B$ and $C$ are bi-Hölder at $(b,c)$,
	then $A$ and $C$ are bi-Hölder at $(a,c)$.
	In this case, we say $A$, $B$, and $C$ are bi-Hölder at $(a,b,c)$.
	This notion generalizes to arbitrarily many parameters.
	What \cref{lem:PvsZ,lem:PvsT,lem:PvsS,lem:PvsH} tell us
	is where the parameters are bi-Hölder at.
	
	\begin{pro}[Implicit bi-Hölder tolls]\label{pro:im-toll}
		Fix a prime power $q$.
		Then channel parameters $H$, $Ｐ$, $Z$, and $Ｚ$ are bi-Hölder at $(0,0,0,0)$.
		Channel parameters $H$, $Ｐ$, $T$, $S$, and $Ｓ$
		are bi-Hölder at $(1,1-1/q,0,0,0)$.
	\end{pro}
	
	\begin{proof}
		As for the first statement:
		$Z,Ｚ$ are bi-Hölder at $(0,0)$ since $Z≤Ｚ≤(q-1)Z$.
		\Cref{lem:PvsZ} implies that $Ｐ$ and $Z$ are bi-Hölder at $(0,0)$.
		\Cref{lem:PvsH} (with the first lower bound)
		implies that $Ｐ$ and $H$ are bi-Hölder at $(0,0)$.
		Now apply the transitivity to conclude the first statement.
		
		As for the second statement:
		$S$ and $Ｓ$ are bi-Hölder at $(0,0)$ since $S≤Ｓ≤(q-1)S$.
		\Cref{lem:PvsS} implies that $Ｐ$ and $S$ are bi-Hölder at $(1-1/q,0)$.
		\Cref{lem:PvsT} implies that $Ｐ$ and $T$ are bi-Hölder at $(1-1/q,0)$.
		\Cref{lem:PvsH} (with the second lower bound)
		implies that $Ｐ$ and $H$ are bi-Hölder at $(1-1/q,1)$.
		Now apply the transitivity to conclude the second statement.
	\end{proof}
	
	In short, when a channel is reliable, $H,Ｐ,Z,Ｚ$ are small and $T,S,Ｓ$ are large.
	When a channel is noisy, $H,Ｐ,Z,Ｚ$ are large and $T,S,Ｓ$ are small.
	See \cref{fig:holder} for an example of the possible region.
	
	\begin{figure}
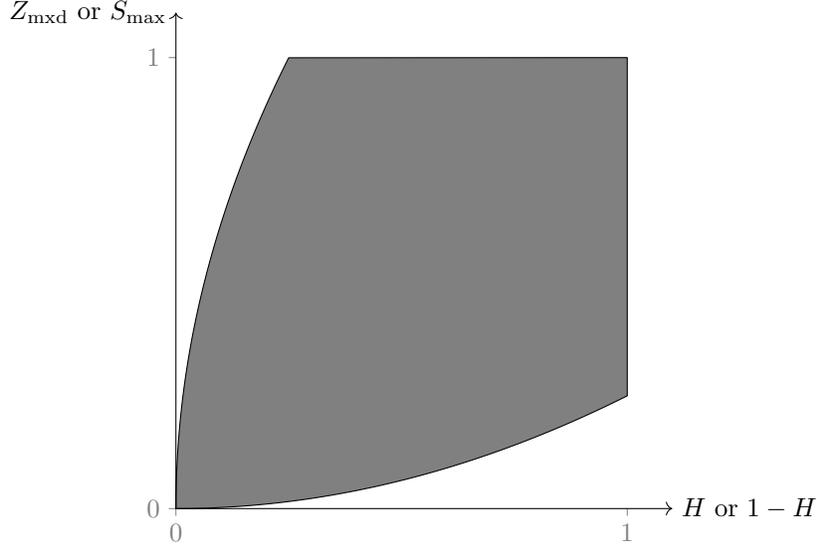

		\tikz[scale=6]{
			\draw
				(0,1)pic{y-tick=$1$}(0,0)pic{y-tick=$0$}
				(0,0)pic{x-tick=$0$}(1,0)pic{x-tick=$1$};
			\draw[fill=gray]
				plot[domain=0:1,samples=180](\x/2*\x/2,\x)--(1,1)--
				plot[domain=1:0,samples=180](\x,\x/2*\x/2)--cycle;
			\draw[->](0,0)--(0,1.1)node[left]{$Ｚ$ or $Ｓ$};
			\draw[->](0,0)--(1.1,0)node[right]{$H$ or $1-H$};
		}
		\caption{
			A figurative example of bi-Hölder relation.
			This plot assumes $c=2$ and $d=1/2$.
			But $c$ is in general greater, which makes the dark region larger.
		}\label{fig:holder}
	\end{figure}
	
	Sometimes, explicit bounds are desired as it saves some epsilon--delta notations.
	The bounds do not have to be tight,
	but they should be as easy to manipulate as possible.
	The next proposition serves exactly that purpose.
	
	\begin{pro}[Explicit Hölder tolls]\label{pro:ex-toll}
		For all $q$-ary channels $W$, it holds that
		\begin{gather*}
			Ｚ(W)≤q³√{H(W)},		\label{ine:Z<H}\\
			H(W)≤q³√{Ｚ(W)},		\label{ine:H<Z}\\
			Ｓ(W)≤q³√{1-H(W)},	\label{ine:S<1-H}\\
			1-H(W)≤q³√{Ｓ(W)}.	\label{ine:1-H<S}
		\end{gather*}
	\end{pro}
	
	\begin{proof}
		The proof is nothing but working out
		\cref{lem:PvsZ,lem:PvsT,lem:PvsS,lem:PvsH} very carefully.
		By “working out”, I mean to Taylor expand every inequality at a proper point
		and keep track of how the constants $c,d$ accumulate as transitivity applies.
		
		In the upcoming arguments, $H$, $Ｐ$, $Z$, $Ｚ$, $S$, and $Ｓ$ mean
		$H(W)$, $Ｐ(W)$, $Z(W)$, $Ｚ(W)$, $S(W)$, and $Ｓ(W)$, respectively.
		Also $q'$ means $q-1$, and $q''$ means $q-2$.
		Furthermore, $\lg$ means the base-$2$ logarithm;
		this is handy when we jump back and forth between nats, bits, and $q$-bits.
		
		First we show \cref{ine:Z<H}.
		Start from $Ｚ$:
		By definition $Ｚ≤q'Z$.
		Move on to $Z$:
		By \cref{lem:PvsZ}, $q'q^{-2}(√{1+q'Z}-√{1-Z})²≤Ｐ$,
		hence $√{1+q'Z}-√{1-Z}≤q√{Ｐ/q'}$.
		Multiplying both sides by the conjugate yields
		$(1+q'Z)-(1-Z)≤q√{Ｐ/q'}(√{1+q'Z}+√{1-Z})$.
		The left-hand side is $qZ$;
		in the right-hand side $√{1+q'z}+√{1-z}$ assumes
		the maximum $q/√{q'}$ at $z=q''/q'$ by taking derivative.
		So $Z≤√{Ｐ/q'}(q/√{q'})=q√{Ｐ}/q'$.
		Move on to $Ｐ$:
		By \cref{lem:PvsH} (the first lower bound),
		$2Ｐ≤H\lg q$ or equivalently $Ｐ≤H㏒₄q$.
		Now we chain the inequalities $Ｚ≤q'Z≤q√{Ｐ}≤q√{H㏒₄q}$.
		This completes \cref{ine:Z<H} as $q√{㏒₄q}<q³$.
		
		Second we show \cref{ine:H<Z}.
		Start from $H$:
		By \cref{lem:PvsH} (the upper bound, Fano's inequality),
		$H\lg q≤h₂(Ｐ)+Ｐ\lg q'$.
		By \cref{fig:rootep}, $h₂(Ｐ)+Ｐ\lg q'≤√{eＰ}+Ｐ\lg q'=√{Ｐ}(√e+√{Ｐ}\lg q')$.
		What is inside parentheses is less than $√e+√{q'/q}\lg q'$.
		Hence $H≤√{Ｐ}(√e+√{q'/q}\lg q')/\lg q$.
		Focus on the scalar---$(√e+√{q'/q}\lg q')/\lg q$
		has maximum $√e$ at $q=2$ (remember that $q≥2$).
		So $H≤√{eＰ}$.
		Move on to $Ｐ$:
		By \cref{lem:PvsZ}, $Ｐ≤q'Z/2$.
		Move on to $Z$:
		By definition $Z≤Ｚ$.
		Now we chain the inequalities $H≤√{eＰ}≤√{eq'Z/2}≤√{eq'Ｚ/2}$.
		This completes \cref{ine:H<Z} as $√{eq'/2}<q³$.
		
		Third we show \cref{ine:S<1-H}.
		Start from $Ｓ$:
		By definition $Ｓ≤q'S$.
		Move on to $S$:
		By \cref{lem:PvsS}, $S≤q'q(q'/q-Ｐ)√{1-÷q{q'}÷{q''}{q'}}$.
		The square root simplifies to $√{1/(q')²}=1/q'$ as $qq''=(q')²-1$.
		So $S≤q'-qＰ$.
		Move on to $q'-qＰ$:
		By \cref{lem:PvsH} (the upper bound, Fano's inequality),
		$H\lg q≤h₂(Ｐ)+Ｐ\lg q'$.
		We claim that $h₂(Ｐ)+Ｐ\lg q'≤\lg q-2(q'/q-Ｐ)²/㏑2$.
		To prove the claim, Taylor expand both sides at $Ｐ=q'/q$.
		Verify that both evaluate to $\lg q$ at $Ｐ=q'/q$;
		verify that both have derivative $0$ at $Ｐ=q'/q$; and
		verify that the acceleration of the left-hand side, $-1/(Ｐ(1-Ｐ)㏑2)$,
		is more negative than the acceleration of the right-hand side, $-4/㏑2$.
		By Taylor's theorem, mean value theorem, or the Euler method,
		the function with greater acceleration is greater;
		hence the claim.
		See also \cite[Fig.~1]{FM94};
		the $Φ$-curve seems parabolic at the upper right corner.
		Now we have $H\lg q≤\lg q-2(q'/q-Ｐ)²/㏑2$, which is equivalent to
		$2(q'/q-Ｐ)²/㏑q≤1-H$ and to $q'-qＰ≤q√{(1-H)㏑(q)/2}$.
		Now we chain the inequalities $Ｓ≤q'S≤q'(q'-qＰ)≤q'q√{(1-H)\ln(q)/2}$.
		This completes \cref{ine:S<1-H} as $q'q√{\ln(q)/2}<q³$.
		
		Fourth we show \cref{ine:1-H<S}.
		Start from $1-H$:
		By \cref{lem:PvsH} (the second lower bound),
		$H\lg q≥q'q\lg(q/q')(Ｐ-q''/q')+\lg q'$.
		The right-hand side is $\lg q-q'\lg(q/q')(q'-qＰ)$
		by matching the (rational) coefficients of
		$Ｐ\lg q$, $Ｐ\lg q'$, $\lg q$, and $\lg q'$, respectively.
		As $H\lg q≥\lg q-q'\lg(q/q')(q'-qＰ)$ we bound
		$\lg(q/q')=\lg(1+1/q')≤1/q'$ by the tangent line at $1/q'=0$.
		So $H\lg q≥\lg q-(q'-qＰ)$ and hence $1-H≤(q'-qＰ)/\lg q$.
		Move on to $q'-qＰ$:
		By \cref{lem:PvsH}, $1-qＰ/q'≤S$ so $q'-qＰ≤q'S$.
		Move on to $S$:
		By definition $S≤Ｓ$.
		Now we chain the inequalities
		$1-H≤(q'-qＰ)/\lg q≤q'S/\lg q≤q'Ｓ/\lg q$.
		This completes \cref{ine:1-H<S} as $q'/\lg q<q³$.
	\end{proof}
	
	\begin{figure}
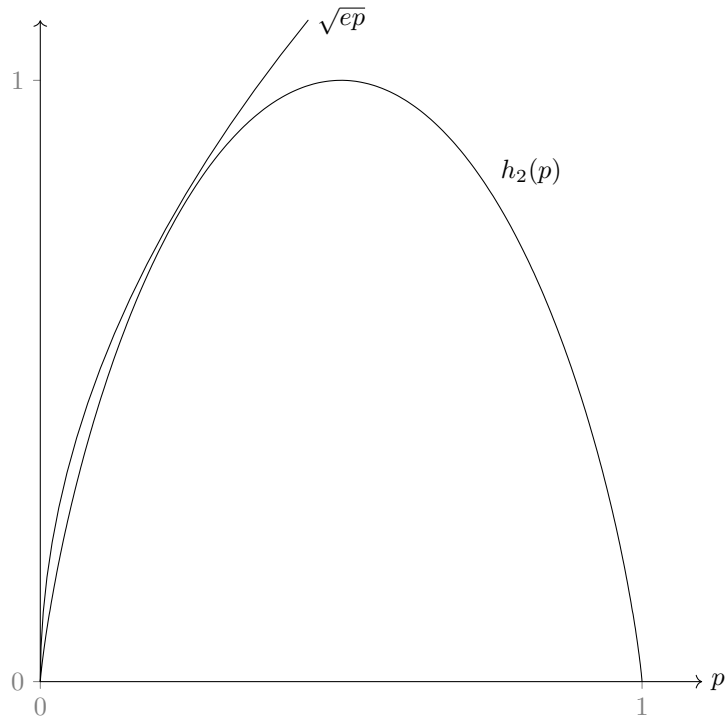

		\tikz[scale=8]{
			\draw
				(0,1)pic{y-tick=$1$}(0,0)pic{y-tick=$0$}
				(0,0)pic{x-tick=$0$}(1,0)pic{x-tick=$1$};
			\draw
				plot[domain=0:90,samples=360]({sin(\x)^2},{h2o(\x)})
				({sin(60)^2},{h2o(60)})node[above right]{$h₂(p)$}
				plot[domain=0:1.1,samples=20](\x^2/e,\x)node[right]{$√{ep}$};
			\draw[->](0,0)--(0,1.1);
			\draw[->](0,0)--(1.1,0)node[right]{$p$};
		}
		\caption{
			The binary entropy function and an upper bound of $√{ep}$.
		}\label{fig:rootep}
	\end{figure}
	
	Hölder-ness is a “toll” because, while \cref{pro:im-toll,pro:ex-toll}
	are the bridges that connect channel parameters,
	it feels like we pay fees, or taxes, to translate bounds on channels;
	and the fee/tax is charged regardless we go one way or another.
	For example, say we start with $H(𝘞_n)<\exp(-n^{2/3})$,
	then $Ｚ(𝘞_n)<q³\exp(-n^{2/3}/2)$,
	which contains annoying minor terms that harden the proof.
	Once we succeed in proving $Ｚ(𝘞_n)<\exp\(-e^{n^{1/3}}\)$, we pay the price again
	when translating it back to $H(𝘞_n)<q³\exp\(-e^{n^{1/3}}/2\)$.
	
	The next section synthesizes channels on the basis of general matrices,
	and shows how synthetic channels relate to parameters defined here.

\section{Kernel and Fundamental Theorems}\label{sec:matrix}

	Fix a prime power $q$.
	Let $ℓ≥2$.
	Let $G∈𝔽_q^{ℓ×ℓ}$ be an invertible $ℓ×ℓ$ matrix over $𝔽_q$.
	This matrix is to replace $\loll$ that was used to polarize channels
	in \cref{cha:origin,cha:dual,cha:prune}, and is called a \emph{kernel}.
	See cref{fig:nine} for an illustration of the EU--DU pairs that corresponds to
	a $3×3$ kernel and how to wrap them around (synthetic) channels.
	
	\tikzset{
		E/.pic={
			\pgftransformscale{1/6}\def~{coordinate}
			\draw[very thick](-6,6)~(=o0)--(6,6)~(=i0)
				(-6,0)~(=o1)--(6,0)~(=i1)(-6,-6)~(=o2)--(6,-6)~(=i2);
			\draw[fill=white](-5,-8)rectangle(5,8)(0,0)node[align=center]{$3×3$\\EU};
		},
		D/.pic={
			\pgftransformscale{1/6}\def~{coordinate}
			\draw[very thick](-6,6)~(=i0)--(6,6)~(=o0)
				(-6,0)~(=i1)--(6,0)~(=o1)(-6,-6)~(=i2)--(6,-6)~(=o2);
			\draw[fill=white](-5,-8)rectangle(5,8)(0,0)node[align=center]{$3×3$\\DU};
		}
	}
	\begin{figure}
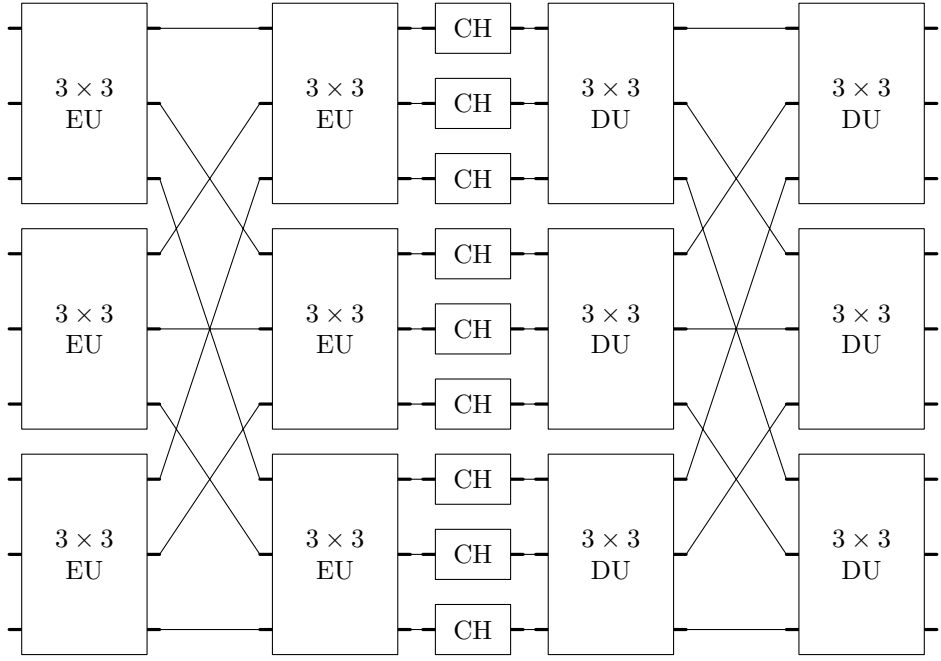

		\tikz[scale=1/6]{
			\def\pi{pic}
			\draw
											  (0,54)π(c00){C}
			(-31,48)π(2e0){E}(-11,48)π(1e0){E}(0,48)π(c01){C}(11,48)π(1d0){D}(31,48)π(2d0){D}
											  (0,42)π(c02){C}
											  (0,36)π(c10){C}
			(-31,30)π(2e1){E}(-11,30)π(1e1){E}(0,30)π(c11){C}(11,30)π(1d1){D}(31,30)π(2d1){D}
											  (0,24)π(c12){C}
											  (0,18)π(c20){C}
			(-31,12)π(2e2){E}(-11,12)π(1e2){E}(0,12)π(c21){C}(11,12)π(1d2){D}(31,12)π(2d2){D}
											  (0,6)π(c22){C}
			;
			\draw
				\foreach\f in{e,d}{
					\foreach\k in{0,1,2}{
						\foreach\j in{0,1,2}{
							(c\k\j=\f)--(1\f\k=i\j)
							(1\f\k=o\j)--(2\f\j=i\k)
						}
					}
				}
			;
		}
		\caption{
			The design of encoder and decoder when $ℓ=3$.
		}\label{fig:nine}
	\end{figure}
	
	Let $X₁^ℓ$ be $ℓ$ i.i.d.\ random variables that follow $Q$,
	the capacity-achieving input distribution of $W$.
	Define $U₁^ℓ≔X₁^ℓG^{-1}$.
	Let $Y₁^ℓ$ be the outputs of $ℓ$ i.i.d.\ copies of $W$ given the inputs $X₁^ℓ$.
	That is to say, each $(X_j,Y_j)$ is governed by $Q(x)$ and $W(y｜x)$.
	Now consider the following guessing job:
	\begin{itemize}
		\item	Guess $U₁$ given $Y₁^ℓ$.
		\item	Guess $U₂$ given $Y₁^ℓ$,
				assuming that the guess $ˆU₁$ of $U₁$ is correct.
		\item	Guess $U₃$ given $Y₁^ℓ$,
				assuming that the guesses $ˆU₁²$ of $U₁²$ are correct.
				\[⋮\]
		\item	Guess $U_ℓ$ given $Y₁^ℓ$,
				assuming that the guesses $ˆU₁^{ℓ-1}$ of $U₁^{ℓ-1}$ are correct.
	\end{itemize}
	For each $j=1,2…ℓ$, pretend that $WＷj$ is a channel
	whose input is $U_j$ and output is $Y₁^ℓU₁^{j-1}$.
	This channel captures the difficulty of the $j$th guessing job.
	The precise definition follows.
	
	\begin{dfn}
		For each $j=1,2…ℓ$, define a synthetic channel
		\[WＷj(y₁^ℓu₁^{j-1}｜u_j)≔÷{P\{Y₁^ℓU₁^{j-1}=y₁^ℓu₁^{j-1}\}}{P\{U_j=u_j\}}
			=÷{∑\limits_{u_{j+1}^ℓ∈𝔽_q^ℓ}W(u₁^ℓG,y_k)}
				{∑\limits_{v₁^{j-1}u_{j+1}^ℓ∈𝔽_q^{ℓ-1}}Q^ℓ(v₁^{j-1}u_j^ℓG)},\]
		where $Q^ℓ$ is the product measure and $W^ℓ$ is the product channel, i.e.,
		\[Q^ℓ(x₁^ℓ)=∏_{j=1}^ℓQ(x_j),\qquad W^ℓ(x₁^ℓ,y₁^ℓ)≔∏_{j=1}^ℓW(x_j,y_j),\]
		and $v₁^{j-1}u_j^ℓG$ is vector concatenation
		before vector-matrix multiplication.
	\end{dfn}
	
	There are three results that control $WＷj$ via
	the parameters $H,Ｚ,Ｓ$ defined in the last section.
	These results play central roles in the theory of polar coding
	and is bestowed the title of fundamental theorems.
	
	\begin{thm}[Fundamental theorem of polar coding---$H$ version (FTPC$H$)]
		\label{thm:ftpcH}
		For any $q$-ary channel $W$ and any invertible matrix kernel $G∈𝔽_q^{ℓ×ℓ}$,
		\[∑_{j=1}^ℓH(WＷj)=ℓH(W).\]
	\end{thm}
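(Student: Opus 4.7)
The plan is to imitate the argument for \cref{pro:martin}, replacing the two-step chain rule by an $\ell$-step one. Recall from the setup that $W^{(j)}$ is the synthetic channel with input $U_j$ and output $Y_1^\ell U_1^{j-1}$, where $U_1^\ell \coloneqq X_1^\ell G^{-1}$ and $(X_j,Y_j)$ are $\ell$ i.i.d.\ copies governed by $Q$ and $W$. Under this interpretation $H(W^{(j)}) = H(U_j \mid Y_1^\ell\, U_1^{j-1})$, with the logarithm taken to base $q$ so that the normalization of $H$ agrees on both sides.

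First I would sum the above identity over $j=1,\dotsc,\ell$ and invoke the chain rule for conditional entropy:
\[
\sum_{j=1}^{\ell} H(W^{(j)}) = \sum_{j=1}^{\ell} H\(U_j \bigm| U_1^{j-1}, Y_1^\ell\) = H\(U_1^\ell \bigm| Y_1^\ell\).
\]
Next I would use the invertibility of the kernel: since $G \in \GL_\ell(\mathbb{F}_q)$, the map $X_1^\ell \mapsto U_1^\ell = X_1^\ell G^{-1}$ is a bijection on $\mathbb{F}_q^\ell$, and this bijection is measurable in $X_1^\ell$ alone. Hence conditioning on $Y_1^\ell$ and knowing $U_1^\ell$ is equivalent to knowing $X_1^\ell$, giving $H(U_1^\ell \mid Y_1^\ell) = H(X_1^\ell \mid Y_1^\ell)$.

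Finally, because the pairs $(X_j,Y_j)$ are i.i.d., the joint conditional entropy decomposes additively: $H(X_1^\ell \mid Y_1^\ell) = \sum_{j=1}^\ell H(X_j \mid Y_j) = \ell\, H(X \mid Y) = \ell\, H(W)$. Chaining the three equalities yields the claim.

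The argument is essentially routine and I do not foresee a real obstacle; the only subtlety worth flagging is the base of the logarithm. The definition of $H$ in the preceding section normalizes by $\log_{|\mathcal X|}$, so on the left-hand side each $H(W^{(j)})$ uses $\log_q$ (since $W^{(j)}$ has input alphabet $\mathbb{F}_q$) and on the right-hand side $H(W)$ also uses $\log_q$; no hidden factor of $\log \ell$ sneaks in when the chain rule is applied. The fact that $Q$ may be non-uniform (since $W$ is not assumed symmetric) likewise causes no issue: chain rule and bijective invariance of conditional entropy hold for arbitrary joint distributions, and the i.i.d.\ product structure is preserved because $X_1^\ell$ is chosen i.i.d.\ from $Q$ regardless of what $Q$ is.
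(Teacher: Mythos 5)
Your proof is correct and follows the same route as the paper: interpret $H(W^{(j)})$ as $H(U_j\mid Y_1^\ell U_1^{j-1})$, apply the chain rule, use invertibility of $G$ to replace $U_1^\ell$ by $X_1^\ell$, and conclude by the i.i.d.\ structure. The remarks on the logarithm base and on non-uniform $Q$ are sound but not needed beyond what the paper's argument already implies.
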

	
	\begin{proof}
		We derive that
		\[∑_{j=1}^ℓH(WＷj)=∑_{j=1}^ℓH(U_j｜Y₁^ℓU₁^{j-1})
			=H(U₁^ℓ｜Y₁^ℓ)=H(X₁^ℓ｜Y₁^ℓ)=ℓH(X｜Y).\]
		The first equality is by definition.
		The next equality is by the chain rule of conditional entropy.
		The next equality is by $G$ being invertible.
		The last equality is because $(X₁^ℓ,Y₁^ℓ)$ are i.i.d.\ copies of $(X,Y)$.
	\end{proof}
	
	Fundamental theorem stated and proved,
	the other two fundamental theorems require more details about $G$ to be stated.
	
	Let $0₁^{j-1}1_ju_{j+1}^ℓ∈𝔽_q^ℓ$ be a vector of
	$0$ repeated $j-1$ times followed by a $1$ and $ℓ-j$ arbitrary symbols.
	A \emph{coset code} is a subset of codewords of the form
	$\{0₁^{j-1}1_ju_{j+1}^ℓG:u_{j+1}^ℓ∈𝔽_q^{ℓ-j}\}⊆𝔽_q^ℓ$.
	The coset codes have weight distributions just like every other code does.
	Let $\hwt(x₁^ℓ)$ be the Hamming weight of $x₁^ℓ$.
	The weight enumerator of the $j$th coset code is
	this one-variable polynomial over the integers
	\[f_ZＷj(z)≔∑_{u_{j+1}^ℓ}z^{\hwt(0₁^{j-1}1_ju_{j+1}^ℓG)}∈ℤ[z].\]
	Note that this coincides with the distance enumerator
	from the $j$th row of $G$ to the span of the rows beneath.
	
	\begin{thm}[Fundamental theorem of polar coding---$Z$ end (FTPC$Z$)]
		\label{thm:ftpcZ}
		For any $q$-ary channel $W$ and any invertible matrix kernel $G∈𝔽_q^{ℓ×ℓ}$,
		\[Ｚ(WＷj)≤f_ZＷj(Ｚ(W)).\]
	\end{thm}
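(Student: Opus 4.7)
The plan is to bound $Z_d(WＷj)$ for each fixed $d\neq 0$ and then take the maximum over $d$. First I would unfold the joint distribution
\[WＷj(u_j,y₁^ℓu₁^{j-1})=∑_{u_{j+1}^ℓ}W^ℓ(u₁^ℓG,y₁^ℓ),\]
so that $Z_d(WＷj)=∑_{u₁^j,y₁^ℓ}√{AB}$, where $A$ and $B$ are the above sums with $u_j$ and $u_j+d$ substituted in the $j$th slot, respectively. Expanding $AB$ as a double sum over $u_{j+1}^ℓ$ and $v_{j+1}^ℓ$ and substituting $w_{j+1}^ℓ=v_{j+1}^ℓ-u_{j+1}^ℓ$, the radicand rewrites as $∑_w∑_{u_{j+1}^ℓ}W^ℓ(u₁^ℓG,y₁^ℓ)W^ℓ(u₁^ℓG+c(w),y₁^ℓ)$, where $c(w)≔(0₁^{j-1},d,w_{j+1}^ℓ)G$ is, up to multiplication by $d$, a generic codeword of the $j$th coset code.

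Second, I would apply the subadditivity $√{∑_kx_k}≤∑_k√{x_k}$ twice---first to pull the sum over $w$ out of the radical, then to pull the sum over $u_{j+1}^ℓ$ out---and reindex using the bijection $x₁^ℓ=u₁^ℓG$ on $𝔽_q^ℓ$. As $u₁^j$ and $u_{j+1}^ℓ$ sweep their ranges jointly, $x₁^ℓ$ runs through $𝔽_q^ℓ$ exactly once, so the estimate becomes
\[Z_d(WＷj)≤∑_{w_{j+1}^ℓ}∑_{x₁^ℓ,y₁^ℓ}√{W^ℓ(x₁^ℓ,y₁^ℓ)W^ℓ(x₁^ℓ+c(w),y₁^ℓ)}.\]
The product structure $W^ℓ(x,y)=∏_kW(x_k,y_k)$ then factors the inner double sum coordinate-wise into $∏_{k=1}^ℓZ_{c(w)_k}(W)$.

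Third, I would bound each factor: $Z_0(W)=1$ by normalization, whereas $Z_{c(w)_k}(W)≤Ｚ(W)$ for $c(w)_k≠0$ by the very definition of $Ｚ$ as a maximum over nonzero shifts. Hence the product is at most $Ｚ(W)^{\hwt(c(w))}$. Reindexing the outer sum via $u_{j+1}^ℓ=d^{-1}w_{j+1}^ℓ$ (valid since $d≠0$) and observing that scaling by a unit preserves Hamming weight, the $w$-sum collapses to $∑_{u_{j+1}^ℓ}Ｚ(W)^{\hwt((0₁^{j-1},1,u_{j+1}^ℓ)G)}=f_ZＷj(Ｚ(W))$. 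Since this bound is independent of $d$, maximizing over $d≠0$ finishes the proof.

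The delicate step is the double application of subadditivity: it is loose pointwise---each summand is replaced by its square root, ballooning the coefficients---but tight at the level of exponents, which is precisely what packages the right-hand side into the weight enumerator of the coset code. The reindexing to $x₁^ℓ∈𝔽_q^ℓ$ exhausts $𝔽_q^ℓ$ without double-counting because $G$ is invertible, so no separate combinatorial argument about the coset partition is needed.
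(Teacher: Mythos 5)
Your proposal is correct and follows essentially the same route as the paper's proof: expand $WＷj$ as a sum over the free coordinates, use subadditivity of the square root (your two sequential applications are the paper's single application to the double sum), reindex via the bijection $x₁^ℓ=u₁^ℓG$, factor coordinate-wise by memorylessness with $Z_0(W)=1$ on unshifted coordinates and $Z_{c(w)_k}(W)\le Ｚ(W)$ on the rest, and collapse the $w$-sum to $f_ZＷj(Ｚ(W))$ using that scaling by the unit $d$ preserves Hamming weight before maximizing over $d\neq0$. The only differences are cosmetic (substituting $w=v-u$ inside rather than outside the radical, and naming $Z_0(W)=1$ instead of splitting indices into $K$ and its complement).
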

	
	\begin{proof}
		By the definition of the synthetic channel $WＷj$ and
		that of the Bhattacharyya parameter, $Ｚ(WＷj)$ is
		\[\max_{0≠d_j∈𝔽_q}∑_{u_j∈𝔽_q}∑_{u₁^{j-1}y₁^ℓ∈𝔽_q^j×𝒴^ℓ}
			√{WＷj(u_j,y₁^ℓu₁^{j-1})WＷj(u_j+d_j,y₁^ℓu₁^{j-1})}.\]
		By the nature of $\max_{0≠d_j∈𝔽_q}$, it suffices to show that
		the double sum within is at most $f_ZＷj(Ｚ(W))$ for arbitrary nonzero $d_j$.
		
		In the upcoming argument, vector concatenation takes precedence
		over vector-matrix multiplication and vector addition.
		Fix a $d_j∈𝔽_q^×$, we argue that
		\begin{align*}
			\qquad&\kern-2em
			∑_{u_j∈𝔽_q}∑_{y₁^ℓu₁^{j-1}∈𝒴^ℓ×𝔽_q^j}
				√{WＷj(u_j,y₁^ℓu₁^{j-1})WＷj(u_j+d_j,y₁^ℓu₁^{j-1})}	\\
			&	=∑_{u₁^jy₁^ℓ}√{WＷj(u_j,y₁^ℓu₁^{j-1})WＷj(u_j+d_j,y₁^ℓu₁^{j-1})}	\\
			&	=∑_{u₁^jy₁^ℓ}√{\vphantom{∑}\smash{
					∑_{u_{j+1}^ℓ∈𝔽_q^{ℓ-j}}W^ℓ(u₁^ju_{j+1}^ℓG,y₁^ℓ)
					∑_{v_{j+1}^ℓ∈𝔽_q^{ℓ-j}}W^ℓ(u₁^{j-1}(u_j+d_j)v_{j+1}^ℓG,y₁^ℓ)
				}}	\\
			&	≤∑_{u₁^jy₁^ℓ}∑_{u_{j+1}^ℓ}∑_{v_{j+1}^ℓ}√{W^ℓ(u₁^ju_{j+1}^ℓG,y₁^ℓ)
					W^ℓ(u₁^{j-1}(u_j+d_j)v_{j+1}^ℓG,y₁^ℓ)}	\\
			&	=∑_{y₁^ℓ}∑_{u₁^ℓ}∑_{d_{j+1}^ℓ∈𝔽_q^{ℓ-j}}
				√{W^ℓ(u₁^ℓG,y₁^ℓ)W^ℓ(u₁^{j-1}(u_j^ℓ+d_j^ℓ)G,y₁^ℓ)}	\\
			&	=∑_{y₁^ℓ}∑_{x₁^ℓ∈𝔽_q^ℓ}∑_{d_{j+1}^ℓ}
				√{W^ℓ(x₁^ℓ,y₁^ℓ)W^ℓ(x₁^ℓ+0₁^{j-1}d_j^ℓG,y₁^ℓ)}	\\
			&	=∑_{d_{j+1}^ℓ}∑_{y₁^ℓ}∑_{x₁^ℓ}
				√{W^ℓ(x₁^ℓ,y₁^ℓ)W^ℓ(x₁^ℓ+e₁^ℓ,y₁^ℓ)}	\\
			&	=∑_{d_{j+1}^ℓ}∑_{y₁^ℓ}∑_{x₁^ℓ}
				∏_{k∈[ℓ]}√{W(x_k,y_k)W(x_k+e_k,y_k)}	\\
			&	=∑_{d_{j+1}^ℓ}∑_{y₁^ℓ}∑_{x₁^ℓ}∏_{k∈K}
				√{W(x_k,y_k)W(x_k+e_k,y_k)}∏_{k∉K}W(x_k,y_k)	\\
			&	=∑_{d_{j+1}^ℓ}∏_{k∈K}（∑_{x_ky_k}
				√{W(x_k,y_k)W(x_k+e_k,y_k)}）∏_{k∉K}（∑_{x_ky_k}W(x_k,y_k)）	\\
			&	=∑_{d_{j+1}^ℓ}∏_{k∈K}（∑_{x_ky_k}√{W(x_k,y_k)W(x_k+e_k,y_k)}）	\\
			&	≤∑_{d_{j+1}^ℓ}∏_{k∈K}\max_{0≠e_k∈𝔽_q}
				（∑_{x_ky_k}√{W(x_k,y_k)W(x_k+e_k,y_k)}）	\\
			&	=∑_{d_{j+1}^ℓ}∏_{k∈k}Ｚ(W)=∑_{d_{j+1}^ℓ}Ｚ(W)^{\abs K}
				=∑_{d_{j+1}^ℓ}Ｚ(W)^{\hwt(0₁^{j-1}d_jd_{j+1}^ℓG)}	\\
			&	=∑_{d_{j+1}^ℓ}Ｚ(W)^{\hwt(0₁^{j-1}1_jd_{j+1}^ℓG)}=f_ZＷj(Ｚ(W)).
		\end{align*}
		The first equality abbreviates the summation.
		The next equality expands $WＷj$ by the very definition,
		where $u_{j+1}^ℓ$ and $v_{j+1}^ℓ$ are free variables in $𝔽_q$.
		The next inequality is by sub-additivity of square root.
		In the next equality we define $d_{j+1}^ℓ≔v_{j+1}^ℓ-u_{j+1}^ℓ$;
		so summing over $v_{j+1}^ℓ$ is equivalent to summing over $d_{j+1}^ℓ$.
		In the next equality we define $x₁^ℓ≔u₁^ℓG$;
		so summing over $u₁^ℓ$ is equivalent to summing over $x₁^ℓ$
		as $G$ is invertible.
		In the next equality we substitute $e₁^ℓ≔0₁^{j-1}d_j^ℓG$
		and reorder the summation.
		The next equality expands the product of the memoryless channels.
		The next equality classifies the indices into two classes---%
		$k∈K$ are those such that $e_k≠0$ and $k∉K$ are those such that $e_k=0$.
		The next equality is the distributive law $ax+ay+bx+by=(a+b)(x+y)$.
		The next equality uses the fact that $W(x,y)$ sum to $1$.
		In the next inequality we replace $e_k$ by a nonzero element
		that maximizes the sum in the parentheses.
		In the next equality we realize that
		the maximum is the Bhattacharyya parameter (surprisingly).
		The second last equality uses the fact that
		multiplying a vector by a scalar preserves its Hamming weight.
		And quod erat demonstrandum.
	\end{proof}
	
	The last fundamental theorem lives in the dual picture.
	Let $u₁^{j-1}1_j0_{j+1}^ℓ∈𝔽_q^ℓ$ be a vector of
	$j-1$ arbitrary symbols followed by a $1$ and $0$ repeated $ℓ-j$ times.
	Let $\Git$ be the inverse transpose of $G$.
	The $j$th dual coset code of $G$ is a subset of codewords of the form
	$\{u₁^{j-1}1_j0_{j+1}^ℓ\Git:u₁^{j-1}∈𝔽_q^{j-1}\}⊆𝔽_q^ℓ$.
	The weight enumerator of the $j$th dual coset code is
	defined to be this one-variable polynomial over the integers
	\[f_SＷj(s)≔∑_{u₁^{j-1}}s^{\hwt(u₁^{j-1}1_j0_{j+1}^ℓ\Git)}∈ℤ[s].\]
	We can now state the dual of the second fundamental theorem.
	
	\begin{thm}[Fundamental theorem of polar coding---$S$-end (FTPC$S$)]
		\label{thm:ftpcS}
		For any $q$-ary channel $W$ and any invertible matrix kernel $G∈𝔽_q^{ℓ×ℓ}$,
		\[Ｓ(WＷj)≤f_ZＷj(Ｓ(W)).\]
	\end{thm}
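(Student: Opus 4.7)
The plan is to mirror the proof of \cref{thm:ftpcZ} in the Fourier picture: the character $χ$, the dual coset code, and $\Git$ take over the roles played by Bhattacharyya square roots, the coset code, and $G$. I would first unfold $Ｓ(WＷj)$ by fixing an arbitrary nonzero $w_j∈𝔽_q$ and observing that
\[W^{(j)}(y₁^ℓu₁^{j-1})·M^{(j)}(w_j｜y₁^ℓu₁^{j-1})=∑_{u_j^ℓ}W^ℓ(u₁^ℓG,y₁^ℓ)χ(w_ju_j),\]
which is immediate upon inserting the marginalization $W^{(j)}(u_j,y₁^ℓu₁^{j-1})=∑_{u_{j+1}^ℓ}W^ℓ(u₁^ℓG,y₁^ℓ)$ into the Fourier coefficient. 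Writing the right-hand side as $g_{y₁^ℓ}(u₁^{j-1})$, the goal reduces to upper-bounding $∑_{y₁^ℓ,u₁^{j-1}}|g_{y₁^ℓ}(u₁^{j-1})|$ by $f_{SＷj}(Ｓ(W))$.

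Next I would invoke the elementary $ℓ¹$-Fourier inequality $∑_u|g(u)|≤∑_w|\hat g(w)|$, which falls out of Fourier inversion on $𝔽_q^{j-1}$ and the triangle inequality (under the convention $\hat g(w)=∑_ug(u)χ(w·u^\top)$). Computing the transform merely augments the exponent from $χ(w_ju_j)$ to $χ（∑_{i=1}^jw_iu_i）$, and the change of variables $x₁^ℓ=u₁^ℓG$ rewrites it as $χ(v·x^\top)$ with $v≔(w₁^{j-1},w_j,0_{j+1}^ℓ)\Git$. Since $W^ℓ$ factors as $∏_kW(x_k,y_k)$, the transform itself factors as $∏_kW(y_k)M(v_k｜y_k)$; after summing over $y₁^ℓ$ and distributing the product, each factor equals $1$ when $v_k=0$ and is at most $Ｓ(W)$ when $v_k≠0$, so the whole product is bounded by $Ｓ(W)^{\hwt(v)}$.

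Finally, because $w_j≠0$, rescaling $w_i=w_ju_i$ turns the sum over $w₁^{j-1}$ into a sum over $u₁^{j-1}$ with $v$ replaced by $w_j(u₁^{j-1},1_j,0_{j+1}^ℓ)\Git$; the scalar $w_j$ leaves the Hamming weight unchanged, so the resulting sum is literally $f_{SＷj}(Ｓ(W))$ by the definition of the dual coset weight enumerator, and taking the maximum over $w_j$ concludes the proof. The step I expect to demand the most care is the bookkeeping in the change of variables, where the inverse-transpose of $G$ must emerge in precisely the orientation dictated by the offset $0₁^{j-1}1_j0_{j+1}^ℓ\Git$; once that alignment is pinned down, every other step---triangle inequalities, coordinate-wise factorization, and assembly of the weight enumerator---is structurally parallel to the $Z$-end argument.
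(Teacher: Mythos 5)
Your proposal is correct and is essentially the paper's own argument run in the forward rather than the inverse Fourier direction: the opening identity for $WＷj(y₁^ℓu₁^{j-1})MＷj(w_j｜y₁^ℓu₁^{j-1})$, the change of variables producing $(w₁^{j-1}w_j0_{j+1}^ℓ)\Git$, the coordinate-wise factorization with unit mass at zero frequencies and $Ｓ(W)$ at nonzero ones, and the final rescaling by $w_j≠0$ all coincide with the paper's steps, while your $ℓ¹$-Fourier inequality is precisely the paper's move of writing $MＷj$ as an inverse transform over $u₁^{j-1}$ and then discarding the phases by the triangle inequality (which cancels the $q^{1-j}$ against the sum over $u₁^{j-1}$). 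Note that what you prove is $Ｓ(WＷj)≤f_SＷj(Ｓ(W))$, which is also what the paper's proof establishes; the $f_ZＷj$ in the printed theorem statement is a typo for $f_SＷj$.
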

	
	\begin{proof}
		(Reminder: 
		For those who attempt to skip the proof, the proof spans about $2.2$ pages.)
		
		Recall the character $χ(x)≔\exp(2πi\tr(x)/p)$.
		We need these properties of $χ$:
		(xa)	$χ(0)=1$;
		(xb)	$\abs{χ(x)}=1$ for all $x∈𝔽_q$;
		(xc)	$χ(x)χ(z)=χ(x+z)$ for all $x,z∈𝔽_q$; and
		(xd)	$∑_{x∈𝔽_q}χ(x)=0$.
		See also \cite[Definition~24]{MT14} or a dedicated book \cite{Terras99}.
		To prove the theorem,
		we first verify that the Fourier coefficients recover the origin:
		Let $M(w,y)≔W(y)M(w｜y)=∑_{z∈𝔽_q}W(z,y)χ(wz)$, then
		\begin{align*}
			\qquad&\kern-2em
			∑_{w∈𝔽_q}M(w,y)χ(-xw) = ∑_{w∈𝔽_q}∑_{z∈𝔽_q}W(z,y)χ(wz)χ(-xw)	\\
			&	=∑_{z∈𝔽_q}W(z,y)∑_{w∈𝔽_q}χ(w(z-x))=∑_{z∈𝔽_q}W(z,y)q𝕀\{z-x=0\}=qW(x,y).
		\end{align*}
		The first equality expands $M(w,y)$ by the definition.
		The next equality uses that $χ$ is an additive character (xc),
		and reorders the summation.
		The next equality uses $∑_{w∈𝔽_q}χ(w)=0$ (xd) and $∑_{w∈𝔽_q}χ(0)=q$ (xa);
		and $𝕀$ is the indicator function.
		
		Knowing $W(x_j,y_j)=q^{-1}∑_{w_j∈𝔽_q}M(w_j,y_j)χ(-x_jw_j)$,
		we proceed to
		\begin{align*}
			&
			WＷj(u_j,y₁^ℓu₁^{j-1})
				=∑_{u_{j+1}^ℓ}W^ℓ(u₁^ℓG,y₁^ℓ)=∑_{u_{j+1}^ℓ∈𝔽_q^{ℓ-j}}W^ℓ(x₁^ℓ,y₁^ℓ)
				=∑_{u_{j+1}^ℓ}∏_{k∈[ℓ]}W(x_k,y_k)	\\
			&	=∑_{u_{j+1}^ℓ}∏_{k∈[ℓ]}（÷1q∑_{w_k∈𝔽_q}M(w_k,y_k)χ(-x_kw_k)）
				=÷1{q^ℓ}∑_{u_{j+1}^ℓ}∑_{w₁^ℓ}∏_{k∈[ℓ]}M(w_k,y_k)χ(-x_kw_k)	\\
			&	=÷1{q^ℓ}∑_{u_{j+1}^ℓ}∑_{w₁^ℓ}χ(-x₁^ℓ(w₁^ℓ)^⊤)∏_{k∈[ℓ]}M(w_k,y_k)
				=÷1{q^ℓ}∑_{u_{j+1}^ℓ}∑_{w₁^ℓ}χ(-x₁^ℓ(w₁^ℓ)^⊤)M^ℓ(w₁^ℓ,y₁^ℓ)	\\
			&	=÷1{q^ℓ}∑_{u_{j+1}^ℓ}∑_{w₁^ℓ}χ(-u₁^ℓG(w₁^ℓ)^⊤)M^ℓ(w₁^ℓ,y₁^ℓ)
				=÷1{q^ℓ}∑_{u_{j+1}^ℓ}∑_{w₁^ℓ}χ(-u₁^ℓ(w₁^ℓG^⊤)^⊤)M^ℓ(w₁^ℓ,y₁^ℓ)	\\
			&	=÷1{q^ℓ}∑_{u_{j+1}^ℓ}∑_{v₁^ℓ}χ(-u₁^ℓ(v₁^ℓ)^⊤)
				M^ℓ(v₁^ℓ\Git,y₁^ℓ)	\\
			&	=÷1{q^ℓ}∑_{v₁^ℓ}χ(-u₁^j(v₁^j)^⊤)
				M^ℓ(v₁^ℓ\Git,y₁^ℓ)∑_{u_{j+1}^ℓ}χ(-u_{j+1}^ℓ(v_{j+1}^ℓ)^⊤)	\\
			&	=÷1{q^ℓ}∑_{v₁^ℓ}χ(-u₁^j(v₁^j)^⊤)
				M^ℓ(v₁^ℓ\Git,y₁^ℓ)q^{ℓ-j}𝕀\{v_{j+1}^ℓ=0\}	\\
			&	=÷1{q^j}∑_{v₁^j}χ(-u₁^j(v₁^j)^⊤)M^ℓ(v₁^j0_{j+1}^ℓ\Git,y₁^ℓ).
		\end{align*}
		The first equality expands the definition of $WＷj$.
		In the next equality, we substitute $x₁^ℓ≔u₁^ℓG$.
		The next equality expands the definition of $W^ℓ$ down to $W$.
		The next two equalities Fourier expand $W$ and reorder the operators.
		The next equality merges all $χ(-x_kw_k)$
		into one inner product by additivity (xc).
		In the next equality we define $M^ℓ(w₁^ℓ,y₁^ℓ)$
		to be the product of all $M(w_k,y_k)$.
		The next two equalities use
		$x₁^ℓ(w₁^ℓ)^⊤=u₁^ℓG(w₁^ℓ)^⊤=u₁^ℓ(w₁^ℓG^⊤)^⊤$.
		In the next equality we define $v₁^ℓ≔w₁^ℓG^⊤$;
		so summing over $w₁^ℓ$ is equivalent to summing over $v₁^ℓ$.
		The last three equalities sum over $u_{j+1}^ℓ$ to force $v_{j+1}^ℓ=0$.
		
		Having that
		$WＷj(u_j,y₁^ℓu₁^{j-1})=q^{-j}∑_{v₁^j}
			χ(-u₁^j(v₁^j)^⊤)M^ℓ(v₁^j0_{j+1}^ℓ\Git,y₁^ℓ)$
		in mind, we move on to
		\begin{align*}
			\qquad&\kern-2em
			MＷj(ω_j,u₁^{j-1}y₁^ℓ)≔∑_{z_j∈𝔽_q}WＷj(z_j,y₁^ℓu₁^{j-1})χ(ω_jz_j)	\\
			&	=∑_{z_j∈𝔽_q}÷1{q^j}∑_{v₁^j}χ(-u₁^{j-1}z_j(v₁^j)^⊤)
				M^ℓ(v₁^j0_{j+1}^ℓ\Git,y₁^ℓ)χ(ω_jz_j)	\\
			&	=÷1{q^j}∑_{v₁^j}χ(-u₁^{j-1}(v₁^{j-1})^{⊤})
				M^ℓ(v₁^j0_{j+1}^ℓ\Git,y₁^ℓ)∑_{z_j∈𝔽_q}χ(z_j(ω_j-v_j))	\\
			&	=÷1{q^j}∑_{v₁^j}χ(-u₁^{j-1}(v₁^{j-1})^⊤)
				M^ℓ(v₁^j0_{j+1}^ℓ\Git,y₁^ℓ)q𝕀\{ω_j=v_j\}	\\
			&	=÷q{q^j}∑_{v₁^{j-1}}χ(-u₁^{j-1}(v₁^{j-1})^⊤)
				M^ℓ(v₁^{j-1}ω_j0_{j+1}^ℓ\Git,y₁^ℓ).
		\end{align*}
		In the first line we let $MＷj$ be the Fourier coefficients of $WＷj$.
		The next equality plugs in what we have in mind about $WＷj$.
		The next three equalities sum over $z_j$ to force $v_j=ω_j$.
		
		With the fact that $MＷj(ω_j,u₁^{j-1}y₁^ℓ)$ is equal to
		$q^{1-j}∑_{v₁^{j-1}}χ(-u₁^{j-1}(v₁^{j-1})^⊤)\*
			M^ℓ(v₁^{j-1}ω_j0_{j+1}^ℓ\Git,y₁^ℓ)$,
		we obtain that with arbitrary $0≠ω_j∈𝔽_q$,
		\begin{align*}
			\qquad&\kern-2em
			∑_{u₁^{j-1}y₁^ℓ∈𝔽^{j-1}×𝒴^ℓ}\abs{MＷj(ω_j,u₁^{j-1}y₁^ℓ)}	\label{for:sum-Mj}\\
			&	=∑_{u₁^{j-1}y₁^ℓ}\abs[\Big]{÷q{q^j}∑_{v₁^{j-1}}
				χ(-u₁^{j-1}(v₁^{j-1})^⊤)M^ℓ(v₁^{j-1}ω_j0_{j+1}^ℓ\Git,y₁^ℓ)}	\\
			&	≤∑_{u₁^{j-1}y₁^ℓ}÷q{q^j}∑_{v₁^{j-1}}
				\abs{M^ℓ(v₁^{j-1}ω_j0_{j+1}^ℓ\Git,y₁^ℓ)}	\\
			&	=∑_{y₁^ℓ}∑_{v₁^{j-1}}\abs{M^ℓ(v₁^{j-1}ω_j0_{j+1}^ℓ\Git,y₁^ℓ)}
				=∑_{y₁^ℓ}∑_{v₁^{j-1}}∏_{k∈[ℓ]}\abs{M(w_k,y_k)}	\\
			&	=∑_{y₁^ℓ}∑_{v₁^{j-1}}∏_{k∈K}\abs{M(w_k,y_k)}∏_{k∉K}\abs{M(w_k,y_k)}	\\
			&	=∑_{v₁^{j-1}}∏_{k∈K}（∑_{y_k}\abs{M(w_k,y_k)}）
				∏_{k∉K}（∑_{y_k}\abs{M(w_k,y_k)}）	\\
			&	=∑_{v₁^{j-1}}∏_{k∈K}（∑_{y_k}\abs{M(w_k,y_k)}）
				≤∑_{v₁^{j-1}}∏_{k∈K}Ｓ(W)	\\
			&	=∑_{v₁^{j-1}}Ｓ(W)^{\abs K}
				=∑_{v₁^{j-1}}Ｓ(W)^{\hwt(v₁^{j-1}ω_j0_{j+1}^ℓ\Git)}	\\
			&	=∑_{v₁^{j-1}}Ｓ(W)^{\hwt(v₁^{j-1}1_j0_{j+1}^ℓ\Git)}=f_SＷj(Ｓ(W)).
		\end{align*}
		The first equality expands the Fourier coefficients.
		The next inequality is triangle inequality plus (xb).
		The next equality cancels the summation over $u₁^{j-1}$ with $q^{1-j}$.
		In the next equality we substitute $w₁^ℓ≔v₁^{j-1}ω_j0_{j+1}^ℓ\Git$;
		slightly different from the free $w₁^ℓ$ before,
		they are now restricted to a proper subspace.
		The next equality classifies the indices into two classes---%
		$j∈K$ are those such that $w_j≠0$ and $k∉K$ are such that $w_k=0$.
		The next two equalities reorder the operators and
		simplify $∑_{y_k}\abs{M(0,y_k)}=∑_{y_k}W(y_k)=1$.
		The next inequality replaces $w_j$ by
		one that maximizes $∑_{y_k}\abs{M(w_j,y_k)}$.
		The rest is trivial.
		
		\Cref{thm:ftpcS} claims that $Ｓ(WＷj)≤f_SＷj(Ｓ(W))$.
		Since $Ｓ(WＷj)$ is merely the maximum of \cref{for:sum-Mj}
		over $0≠ω_j∈𝔽_q$, we arrive at $Ｓ(WＷj)≤f_SＷj(Ｓ(W))$.
		And quod erat demonstrandum.
	\end{proof}
	
	Sometimes, a simpler bound is enough for the analysis---instead of the exact
	weight enumerator, we use $(†\#codewords†)z^†minimum distance†$ as an upper bound.
	The number of codewords is easy to predict (it is some power of $2$).
	So one only needs to record the minimum distances.
	
	\begin{dfn}
		For any $G∈𝔽_q^{ℓ×ℓ}$, define \emph{coset distance}
		\[D_ZＷj≔\hdis(r_j,R_j),\]
		where $\hdis$ is the Hamming distance, $r_j$ is the $j$th row of $G$,
		and $R_j$ is the subspace spanned by the rows beneath $r_j$.
	\end{dfn}
	
	Then FTPC$Z$ reads $Ｚ(WＷj)≤q^{ℓ-j}Ｚ(W)^{D_ZＷj}$.
	The dual picture obeys the same logic.
	
	\begin{dfn}
		For any $G∈𝔽_q^{ℓ×ℓ}$, define \emph{dual coset distance}
		\[D_SＷj≔\hdis(c_j,C_j),\]
		where $\hdis$ is the Hamming distance, $c_j$ is the $j$th column of $G^{-1}$,
		and $C_j$ is the subspace spanned by the columns before $c_j$.
	\end{dfn}
	
	Then FTPC$S$ reads $Ｓ(WＷj)≤q^{j-1}Ｓ(W)^{D_ZＷj}$.
	
	The fundamental theorems introduced here are the “local relations” that
	control how one iteration of the channel transformation manipulates $W$.
	But in the end, we want to talk about all of its descendants at once,
	hence the introduction of the stochastic processes.
	
	\begin{dfn}
		Let $𝘑₁,𝘑₂,\dotsc$ be i.i.d.\ uniform
		random variables taking values in $\{1,2…ℓ\}$.
		Fix a $q$-ary channel $W$ and an invertible matrix $G∈𝔽_q^{ℓ×ℓ}$.
		Let $𝘞₉,𝘞₁,𝘞₂,\dotsc$, or $\{𝘞_n\}$ in short,
		be a stochastic process of DMCs defined as follows:
		\begin{itemize}
			\item	$𝘞₀≔W$; and
			\item	$𝘞_{n+1}≔𝘞_nＷ{𝘑_{n+1}}$.
		\end{itemize}
		This is called the \emph{channel process}.
	\end{dfn}
	
	Imagine the following infinite array:
	\[\bma{
		Ｓ(𝘞₀)	&	Ｓ(𝘞₁)	&	Ｓ(𝘞₂)	&	Ｓ(𝘞₃)	&	⋯\quad{}	\\
		T(𝘞₀)	&	T(𝘞₁)	&	T(𝘞₂)	&	T(𝘞₃)	&	⋯\quad{}	\\
		H(𝘞₀)	&	H(𝘞₁)	&	H(𝘞₂)	&	H(𝘞₃)	&	⋯\quad{}	\\
		Ｐ(𝘞₀)	&	Ｐ(𝘞₁)	&	Ｐ(𝘞₂)	&	Ｐ(𝘞₃)	&	⋯\quad{}	\\
		Ｚ(𝘞₀)	&	Ｚ(𝘞₁)	&	Ｚ(𝘞₂)	&	Ｚ(𝘞₃)	&	⋯\quad{}	
	}\]
	The Hölder tolls, \cref{lem:PvsZ,lem:PvsT,lem:PvsS,lem:PvsH,%
	pro:im-toll,pro:ex-toll}, are vertical relations.
	The fundamental theorems, \cref{thm:ftpcH,thm:ftpcZ,thm:ftpcS},
	are horizontal relations.
	The top two rows are related to the distortion of
	lossy compression and shaping the input distributions for asymmetric channels.
	The bottom two rows are related to the block error probability of
	lossless compression and the anti-error part of noisy-channel coding.
	The middle row controls the code rate.
	In particular, \cref{thm:ftpcH} implies
	the following generalization of \cref{pro:martin}.
	
	\begin{pro}
		$\{H(𝘞_n)\}$ is a martingale.
	\end{pro}
	
	\Cref{pro:im-toll,pro:ex-toll,thm:ftpcH,thm:ftpcZ,thm:ftpcS}
	are all we need to control the behavior of $\{𝘞_n\}$.
	But before we make use of these tools to examine the performance of polar coding,
	let us walk through some terminologies to see the big picture and what to expect.

\section{Probability Theory Regimes}\label{sec:regime}

	There is an analogy between coding theory and probability theory
	that connects the results from both sides and the proofs thereof.
	This analogy constituents the picture of the expected performance of coding.
	This section is a brief introduction to that and is inspired by \cite{AW14}.
	
	Consider i.i.d.\ copies of some bounded random variable
	$X₁,X₂,\dotsc,X_N$ and their average $¯X_N$.
	We want to understand the distribution of $¯X_N$,
	i.e., we want to understand $P\{¯X_N≤x\}$ for various $x$.
	The key is to identify the following
	\begin{itemize}
		\item	the mean $μ≔E[X₁]$ with channel capacity $C$,
		\item	the number of copies $N$ with the block length $N$,
		\item	the cutoff $x$ with the code rate $R$
		\item	the cumulative probability $P\{¯X_N≤x\}$
				with the block error probability $Ｐ$, and
		\item	the variance $σ²$ with another channel parameter called $V$.
	\end{itemize}
	
	For example, when Shannon said there exist error correcting codes
	with code rate $R<C$ close to channel capacity and block error probability $Ｐ→0$,
	this translates into when the cutoff $x<μ$ is close to
	the mean, the cumulative probability $P\{¯X_N≤x\}$ converges to $0$ as $N→∞$.
	The latter is the law of large numbers (LLN).
	That establishes the first analogue.
	
	Later, Gallager said that the block error probability
	scales like $\exp(-\Er(R)N)$ for a fixed $R<C$.
	It translates into that the cumulative probability scales like
	$P\{¯X_N≤x\}≈\exp(-L(x)N)$ for a fixed cutoff $z<μ$ \cite{Gallager68}.
	The former is called the error exponent regime;
	the latter is called the \emph{large deviation principle} (LDP).
	Moreover, Gallager's error exponent function $\Er$
	is analogous to Cramér's rate function $L(x)$.
	(To avoid abusing the word rate, I will referred to this as the Cramér function.)
	The former is the convex conjugate of Gallager's $\Eo$ function;
	the latter is the convex conjugate of
	the cumulant generating function $t↦㏑E[\exp(tX₁)]$.
	That establishes the second analogue.
	
	On a parallel track, Strassen said that the code rate
	scales like $C+Φ^{-1}(Ｐ)\*√{V/N}$, where $Φ$ is the cdf of
	the standard normal distribution, and $V$ is another intrinsic parameter
	called channel dispersion or varentropy \cite{Strassen62}.
	It translates into that the cutoff scales like $x≈μ+Φ^{-1}(p)√{σ²/N}$
	in order to attain a certain cumulative probability $p$.
	The former is called the finite block length regime
	(although, in fact, both this and the error exponent regime has finite $N$);
	the latter is called the \emph{central limit theorem} (CLT).
	That establishes the third analogue.
	Moreover $V$ is the analog of variance in coding theory.
	
	It is clear that LDP and CLT generalize LLN in ways that
	fix a variable and inspect the asymptote of the other variable.
	A third regime varies both.
	By \cite{AW14,PV10},
	\[÷{-㏑Ｐ}{N(C-R)²}⟶÷1{2V}.\]
	And it specializes to $Ｐ≈\exp(-Ω(N))$ for a fixed $R<C$
	and $R≈C-O(1)/√N$ for a fixed $Ｐ$.
	Similarly, in probability theory, there is
	\[÷{㏑P\{¯X-μ>γ(N)x\}}{Nγ(N)²}⟶L(x),\]
	where $γ(N)$ is some appropriate re-scalars,
	and $L(x)=1/σ²x²$ is another Cramér function .
	This is called the \emph{moderate deviation principle} (MDP).
	That establishes the fourth analogue.

\begin{table}
	\caption{
		An analogy among probability theory,
		random coding theory, and polar coding theory.
		All $δ>0$ can be made arbitrarily close to $0$.
	}\label{tab:analog}
	\pgfplotstableread{
		par	{Random variables}				{Random codes}			{Polar codes}		
		LLN	¯X→μ							(Ｐ,R)→(0,C)				(Ｐ,R)→(0,C)		
		LDP	ℙ\{Ｘ>x\}≈e^{-NL(x)}				Ｐ≈e^{-\Er(R)N}			Ｐ≈e^{-N^{1-δ}}		
		CLT	Ｘ∼𝒩(0,÷{σ}{√N})				Ｒ≈÷{Q^{-1}(Ｐ)}{√{VN}}	Ｒ≈N^{-1/2+δ}		
		MDP	\ƒ{P\{Ｘ>γ(N)x\}}{γ(N)²}≈NL(x)	\ƒ{Ｐ}{(Ｒ)²}≈÷N{2V}		\ƒ{Ｐ}{(Ｒ)²}≈N^{1-δ}
	}\tableTrinity
	\def\arraystretch{1.44}
	\def\ƒ#1{÷{-㏑#1}}
	\def÷#1#2{\frac{#1}{#2\rule[-1ex]{0pt}{0ex}}}
	\pgfplotstabletypeset[
		columns/par/.style={column name=,string type},
		every head row/.style={before row=\toprule,after row=\midrule},
		every last row/.style={after row=\bottomrule},
		assign cell content/.code={
			\pgfkeyssetvalue{/pgfplots/table/@cell content}{$#1$}}
	]\tableTrinity
\end{table}

	For their achievability bounds, the aforementioned results use
	random coding whose complexity is out of control.
	Beside random coding, polar coding is the only low-complexity code
	that is strong enough to approach the LDP, CLT, and MDP regimes.
	Some history is briefed below.
	See also \cref{tab:analog} for a comparison among
	probability theory and random and polar coding.
	
	Arıkan's original works on channel polarization \cite{Arikan09} established
	the foundation of polar coding, placing polar coding in the LLN paradigm.
	Arıkan--Telatar \cite{AT09} characterized the LDP behavior of polar coding,
	showing that $Ｐ$ scales like $\exp(-√N)$ when an $R<I$ is fixed.
	Later, Korada--Şaşoğlu--Urbanke \cite{KSU10} generalized polar codes
	from Arıkan's kernel $\loll$ to any invertible $ℓ×ℓ$ matrix $G$ over $𝔽₂$,
	granted that $ℓ≥2$ and $G$ is not column-equivalent to a lower triangular matrix.
	And then they showed that the LDP behavior is $Ｐ≈\exp(-N^{\Ec(G)})$
	where $\Ec(G)$ is a constant depending on the kernel matrix $G$.
	In fact, $\Ec(G)$ is the expectation of $-㏒_ℓ𝘋₁$.
	The notation $\Ec(G)$ is meant to resemble Gallager's error exponent $\Er(R)$;
	but be aware of that the former is inserted at $\exp(-N^†this†)$ place
	while the latter is inserted at $\exp(-†this†N)$ place.
	The LDP behavior of polar codes is then refined in \cite{HMTU13}.
	Therein, $Ｐ$ is approximated by $\exp(-ℓ^𝔈)$ where
	$𝔈≔\Ec(G)n-√{\Vc(G)n}Q^{-1}(R/I)+o(√n)$ is a more accurate exponent,
	$ℓ$ is the matrix dimension, $n$ is the depth of the code,
	and $\Vc(G)$ is another constant depending on $G$.
	The notation $\Vc(G)$ is meant to resemble the channel dispersion $V$.
	Appearing to be a CLT behavior, this result lies in
	the corner of the LDP paradigm that touches the MDP paradigm.
	Finally, Mori--Tanaka \cite{MT14} generalized everything above
	from binary input to channels with prime-power input.
	Over arbitrary input alphabets, \cite{STA09,Sasoglu11} 
	showed the equivalence of \cite{Arikan09,AT09}.
	Over binary but asymmetric channels, \cite{SRDR12,HY13}
	showed the counterpart of \cite{Arikan09,AT09}
	with the channel capacity in place of the symmetric capacity $I$.
	No further result on the LDP side,
	e.g.\ over non-binary asymmetric channels, is known.
	The technique in \cite{HY13} (\cref{ine:P<sumsum}) and \cref{sec:six} fill the gap.
	
	The CLT behavior of polar codes turns out to be difficult to characterize.
	It was Korada--Montanari--Telatar--Urbanke \cite{KMTU10} who came up with the idea
	that approximating an eigenfunction tightly bounds the eigenvalue $ℓ^{-ϱ}$.
	Here $ϱ>0$ will become a number such that
	$R$ scales like $I-N^{-ϱ}$ with a fixed $Ｐ$.
	It is called the scaling exponent because it controls
	the scaling of $N$ as a function of the gap to capacity $I-R$.
	(Although the LDP regime can also be rephrased as the scaling of $N$
	as a function of $㏑Ｐ$, it was named error exponent regime beforehand.
	Thus the name scaling exponent [regime] is dedicated to the CLT regime).
	They had $0.2669≤ϱ≤0.2841$ over binary erasure channels (BECs).
	The upper bound was brought down to $3.553ϱ≤1$ over
	binary-input discrete-output memoryless channels (BDMCs) \cite{GHU12}.
	Hassini--Alishahi--Urbanke \cite{HAU14} moved down the upper bound
	to $3.627ϱ≤1$ over BECs and $3.579ϱ≤1$ over BDMCs.
	They also proved a lower bound $1≤6ϱ$ over BDMCs.
	The latter is suboptimal and \cite{GB14,MHU16}
	improved the bound to $1≤5.702ϱ$ and to $1≤4.714ϱ$.
	Additive white Gaussian noise channles (AWGNCs) have continuous
	output alphabet, but \cite{FT17} show that they have $1≤4.714ϱ$ too.
	Over BECs particularly, \cite{FV14,YFV19} examined a series of larger kernels;
	the current record is a $64×64$ kernel believed to have $1≤2.9ϱ$.
	Near the end of the road to $2ϱ<1$, \cite{PU16} showed that by allowing $q→∞$,
	Reed--Solomon kernels achieve $2ϱ<1$ over $q$-ary channels.
	This does not really prove that polar codes achieve $2ϱ<1$
	over any specific channel, but gave hopes.
	Fazeli--Hassani--Mondelli--Vardy \cite{FHMV17,FHMV18,FHMV20}, eventually,
	showed that large random kernels achieve $2ϱ<1$ over BECs, breaking the barrier.
	Guruswami--Riazanov--Ye \cite{GRY19,GRY20} extended their result
	to all BDMCs utilizing the dynamic kernel technique.
	Our paper \cite{Hypotenuse21} fills the gap.
	
	Between LDP and CLT is polar coding's MDP behavior.
	First, Guruswami--Xia \cite{GX13,GX15} showed that there exists $ρ>0$ such that
	$Ｐ$ scales like $\exp(-N^{0.49})$ while $R$ scales like $I-N^{-ρ}$ over BDMCs.
	This raised a question about what are the possible pairs $(π,ρ)$
	such that $(Ｐ,R)$ scales like $(\exp(-N^π),I-N^{-ρ})$.
	Mondelli--Hassani--Urbanke \cite{MHU16} answered this,
	partially, in the same paper they bounded $ρ$.
	They showed that under a certain curve connecting $(0,1/5.714)$ and
	$(1/2,0)$ all $(π,ρ)$ are achievable over BDMCs.
	For BECs the upper left corner is $(0,1/4.627)$.
	A straightforward generalization to AWGNCs was also given in \cite{FT17}.
	We in \cite{ModerateDeviations18,LargeDeviations18} improved their result,
	suggesting that via a combinatorial trick the upper left corner
	of the curve is $(0,ϱ)$ for any $ϱ$ that is valid in the CLT regime.
	The same trick also implicated that over BECs all $(π,ρ)$
	such that $π+2ρ<1$ are achievable, which is mainly owing to
	\cite{FHMV17}'s result that $2ρ<1$ over BECs is achievable.
	Meanwhile, \cite{BGNRS18} made the first step to investigate
	the general kernel matrices over general prime-ary channels.
	They showed that it is possible to achieve $ρ>0$ with $Ｐ≈N^{-Ω(1)}$.
	This is, strictly speaking, ``only'' a CLT behavior as
	the desired block error probability in the MDP world is $\exp(-N^π)$.
	Later, Błasiok--Guruswami--Sudan \cite{BGS18} were able to show that
	for all $π<\Ec(G)$ there exists $ρ>0$ such that $(π,ρ)$ is achievable.
	This makes it a direct generalization of \cite{GX13}
	to all polarizing kernel matrices $G$ over all prime-ary channels.
	The preprint \cite{GRY20} contains a section that pushes the conference
	abstract \cite{GRY19} to positive $π$ while maintaining $ρ≈1/2$.
	Over the general DMCs, our \cite{Hypotenuse21} fills the gap.
	
	See \cref{fig:triangle} for a comprehensive plot of all these results.
	
	\begin{figure}
		\def\cp#1:#2:{coordinate[pin=#1:{#2}](X)}
		\tikz[scale=8,nodes={black,align=center}]{
			\draw[->](0,0)--(1.1,0)node[right]{$π$};
			\draw[->](0,0)--(0,.55)node[above]{$ρ$};
			\draw[gray,very thin]
				plot[domain=30:55,samples=75]({sin(\x)^2},{1-h2o(\x)})
			;
			\draw
				(.45,.55)edge[dashed](.4,.6)--(1,0)
				(.7,.3)\cp30:conjectured boundary\\for $V=0$:
				(0,.5)pic{dot}\cp0:\cite{FHMV18,GRY20}:
				(.005,.49)pic{dot}\cp210:\cite{GRY20}:
				(0,.5)--(1,0)(.4,.3)\cp45:\cite{Hypotenuse21}:
				(0,1/2.9)pic{dot}\cp210:\cite{YFV19}:
				(0,1/3.627)pic{dot}\cp30:\cite{HAU14}:
				(0,1/3.627)--node[pos=.4](X){}(.3947,.03223)
				(X)\cp30:\cite{ModerateDeviations18}:
				plot[domain=38.922:45,samples=30]({sin(\x)^2},{1-h2o(\x)})
				plot[domain=0:45,samples=90]({g2o(\x)*sin(\x)^2},{(1-g2o(\x))/3.627})
				({g2o(31)*sin(31)^2},{(1-g2o(31))/3.627})\cp240:\cite{MHU16}:
				(1,0)pic{dot}\cp30:\cite{KSU10,MT14}:
				(.49,.0001)pic{dot}\cp90:\cite{GX13}:
				(0,.1)pic{dot}\cp150:\cite{BGNRS18}:
				(0.98,.005)pic{dot}\cp210:\cite{BGS18}:
				(.5,0)pic{dot}\cp270:\cite{AT09,Sasoglu11,HY13}:;
			;
		}
		\caption{
			Recent works on polar coding arranged on a $ρ$--$π$ plot.
			Note that results utilizing different kernels
			over various channels are mixed.
			The higher $ρ,π$, the better performance.
		}\label{fig:triangle}
	\end{figure}
	
	In \Cref{cha:origin}, I presented the main contribution of
	\cite{ModerateDeviations18} which, to be more precise,
	is an interpolating result $(Ｐ,R)≈(\exp(-N^π),C-N^{-ρ})$ for pairs $(π,ρ)$
	lying in the region $𝒪$ that touches $(0,1/4.714)$ and $(1/2,0)$.
	This is the result I want to generalize in this chapter.
	Id est, I want to characterize the region of $π,ρ$
	for any invertible matrix $G$ over any finite field $𝔽_q$.
	And this region will be determined by the best $ϱ>0$ one can find (or believe in)
	plus the coset distance profile $D_ZＷj$ and $D_SＷj$.
	
	I will do this step-by-step.
	First, I will show that most kernels enjoy a (very weak) CLT behavior.
	More precisely, kernels that satisfy a certain ergodic precondition
	will enjoy an eigen behavior with positive $ϱ$.
	After that, we either stick to the weak but provable $ϱ$ or assume
	a higher $ϱ$ based on experiments, simulations, and/or heuristics.
	And then we go through an ergodic--eigen--en23--een13--elpin that resembles
	the eigen--en23--een13--elpin chain in \cref{cha:origin,cha:dual}.

\section{An Ergodicity Precondition}\label{sec:ergodic}

	\def\nabla{\tikz\draw(2em/3,0)|-(0,2em/3)--cycle;}
	
	Before we board the long proof train eigen--en23--een13--elpin,
	I want to recall the classification of matrix kernels into two groups.
	The bad group consists of matrices that do nothing to the channels, and hence get
	no chance to polarize channels, let alone enjoying any LDP, CLT, or MDP behavior.
	The good group consists of matrices that can polarize channels in a calculus regard.
	And then in the next section I will show that all of them
	enjoy some LDP, CLT, and MDP behaviors.
	This strengthens the dichotomy even further---a matrix is either not altering
	the channels at all, or polarizing the channels exponentially fast.
	
	To motivate the classification, recall that channels $WＷj$
	are synthesized based on the matrix-multiplication $X₁^ℓ=U₁^ℓG$.
	The kernel $G∈𝔽_q^{ℓ×ℓ}$ is said to be ergodic if it mixes/blends the content $U₁^ℓ$
	such that there are nontrivial relations between each $U_j$ and all of $Y₁^ℓ$.
	The following two counterexamples demonstrate the necessity of this condition.
	
	\begin{exa}
		Consider any prime power $q$ and any $ℓ≥2$.
		If $G∈𝔽_q^{ℓ×ℓ}$ is an upper triangular matrix with $1$'s on the diagonal,
		then $X₁^ℓ=U₁^ℓG$ is such that $X_j-U_j$ is a linear combination of $U₁^{j-1}$.
		Therefore, when we want to guess $U_j$ given $Y₁^ℓU₁^{j-1}$,
		it suffices to guess $X_j$ based on $Y_j$ and then
		subtract the correction term $X_j-U_j=U₁^ℓG=U₁^{j-1}0_j^ℓG$.
		That is to say, we are facing essentially
		the same guessing job as if we were facing $W$;
		the channel transformation does no benefit to us.
		In this case, $H(𝘞_n)$ stays where $H(W)$ is
		and does not polarize to $\{0,1\}$.
		Lesson:
		$G$ must not be upper triangular.
		In fact, it can be shown that if two matrices $G$ and $˜G$
		differ by some upper triangular row-operations, i.e., $˜G=∇G$,
		then they share the same polarizing ability.
	\end{exa}
	
	\begin{exa}
		Consider $q=4$ and $G=\loll$.
		Let $W$ be the independent product of BEC$(1/3)$ and BEC$(2/3)$.
		That is, $W$ takes a pair $(x',x'')∈𝔽₂²$ as an input and then outputs
		\[\cas{
			(x',x'')	&	w.p.\ $2/9$,	\\
			(?,x'')		&	w.p.\ $1/9$,	\\
			(x',?)		&	w.p.\ $4/9$,	\\
			(?,?)		&	w.p.\ $2/9$.	
		}\]
		Let $c∈𝔽₄、𝔽₂$ be a non-binary element,
		and identify each $(x',x'')∈𝔽₂²$ with $x≔x'+cx''∈𝔽₄$, then
		$W$ behaves like a channel with input alphabet $𝔽₄$, and
		\[\bma{x₁'&x₂'}+c\bma{x₁''&x₂''}=\bma{x₁&x₂}
			=\bma{u₁&u₂}G=\bma{u₁'&u₂'}G+c\bma{u₁''&u₂''}G.\]
		In plain English, $G$ multiplies the prime component
		and the double-prime component separately.
		Now we attempt to use $G$ to polarize $W$.
		Doing that is equivalent to polarizing BEC$(1/3)$ and BEC$(2/3)$ separately.
		Then, with probability $1/3$, the entropy process $\{H(𝘞_n)\}$ converges to
		$1/2$ because the prime component converges to the entirely noisy channel
		but the double-prime component converges to the completely reliable channel.
		Lesson: $G$ needs to bring interaction to
		the vector space substructure within a prime-power finite field.
	\end{exa}
	
	The two examples motivate the following definition and theorem
	for classifying and prejudging matrices.
	
	\begin{dfn}
		For any invertible matrix $G∈𝔽_q^{ℓ×ℓ}$,
		the \emph{lowered form} of $G$ is the lower triangular matrix $˜G$
		with $1$'s on the diagonal and such that $˜GG^{-1}$ is upper triangular.
		A matrix $G$ is said to be \emph{ergodic} if
		the off-diagonal entries of $˜G$ generate $𝔽_q$ as an $𝔽_p$-algebra;
		or $𝔽_q=𝔽_p[˜G]$ for short.
	\end{dfn}
	
	\begin{thm}[Ergodic kernel polarizes]\cite[Theorem 14]{MT14}\label{thm:ergodic}
		Let $W$ be a $q$-ary channel.
		The matrix kernel $G∈𝔽_q^{ℓ×ℓ}$ is ergodic iff
		$\{H(𝘞_n)\}$ converges to $0$ or $1$ almost surely.
	\end{thm}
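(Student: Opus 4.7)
The plan is to exploit the martingale property established in \cref{thm:ftpcH}: since $\{H(W_n)\}$ is a bounded martingale on $[0,1]$, it converges almost surely to a limit $H_\infty$, so the only question is whether $H_\infty \in \{0,1\}$ almost surely. I will handle the two directions of the iff separately.

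For the easy direction (non-ergodic $G$ admits some $W$ failing to polarize), I construct an explicit counterexample in the spirit of the $\mathbb{F}_4$ example given earlier in the section. If the off-diagonal entries of $\tilde G$ generate only a proper $\mathbb{F}_p$-subalgebra $\mathbb{F}_{p^k} \subsetneq \mathbb{F}_q$, fix an $\mathbb{F}_{p^k}$-basis of $\mathbb{F}_q$ of cardinality $d \geq 2$ and identify $\mathbb{F}_q \cong \mathbb{F}_{p^k}^d$; then up to upper-triangular row operations (which do not affect polarizing power) the map $U_1^\ell \mapsto U_1^\ell G$ splits as $d$ independent $\mathbb{F}_{p^k}$-linear maps acting coordinatewise. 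Take $W$ to be a product of $d$ independent $\mathbb{F}_{p^k}$-ary channels with distinct, strictly intermediate entropies; then $\{H(W_n)\}$ decomposes as a weighted sum of $d$ independent channel processes whose simultaneous limits cannot lie in $\{0,1\}$ with full probability, so $H_\infty \notin \{0,1\}$ on a set of positive measure.

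For the hard direction (ergodic $G$ polarizes every $W$), start from the $L^2$-bound $\sum_n \mathbb{E}[(H(W_{n+1}) - H(W_n))^2] \leq 1$ enjoyed by every bounded martingale. Combined with $H(W_{n+1}) = H(W_n^{(J_{n+1})})$ and $J_{n+1}$ uniform on $\{1,\ldots,\ell\}$ independent of the past, the Doob decomposition yields $\sum_n \sum_j (H(W_n^{(j)}) - H(W_n))^2 < \infty$ almost surely; hence along almost every sample path, $H(W_n^{(j)}) - H(W_n) \to 0$ simultaneously for every $j$. By compactness of the space of equivalence classes of $q$-ary channels (with output alphabet truncated via an approximation argument), we may pass to a subsequence so that $W_n$ converges to some limit channel $W_\star$ satisfying the fixed-point condition $H(W_\star^{(j)}) = H(W_\star)$ for all $j = 1, \ldots, \ell$.

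The main obstacle is the last step: converting this equi-entropy fixed-point condition, under ergodicity, into $H(W_\star) \in \{0,1\}$. I would rewrite each equality as $H(U_j \mid Y_1^\ell U_1^{j-1}) = H(X \mid Y)$ and compare it against the chain-rule identity $\sum_j H(U_j \mid Y_1^\ell U_1^{j-1}) = H(U_1^\ell \mid Y_1^\ell) = \ell H(X \mid Y)$; equality in each summand forces the $U_j$ to be conditionally independent given $Y_1^\ell$ and to each carry the same marginal law as $X \mid Y$. Because the $U_j$ are $\mathbb{F}_q$-linear combinations of the $X_k$ prescribed by the rows of $G^{-1}$, this translates into an invariance of the conditional law of $X \mid Y$ under multiplication by every off-diagonal entry of $\tilde G$, and hence by the entire $\mathbb{F}_p$-algebra those entries generate. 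Ergodicity identifies this algebra with all of $\mathbb{F}_q$, so the law of $X \mid Y$ is invariant under the full multiplicative action of $\mathbb{F}_q^\times$; the only such laws on $\mathbb{F}_q$ are the uniform distribution (giving $H(W_\star) = 1$) and the point masses (giving $H(W_\star) = 0$), closing the proof.
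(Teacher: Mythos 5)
Your overall architecture for the hard direction matches the paper's: bounded-martingale convergence plus square-summable increments gives $H(W_n^{(j)})-H(W_n)\to 0$ for every $j$ almost surely, and a compactness/continuity argument (the same Prokhorov-type argument the paper uses in \cref{lem:eigen-H}) reduces the theorem to classifying channels with $H(W^{(j)})=H(W)$ for all $j$, which is exactly \cref{lem:moving}. The gap is in how you classify these fixed points. First, equality of the summands in the chain rule does not force the $U_j$ to be conditionally independent given $Y_1^\ell$: the identity $\sum_j H(U_j\mid Y_1^\ell U_1^{j-1})=\ell H(W)$ holds for every invertible $G$, so distributing it equally among the summands carries no independence information, and $H(U_j\mid Y_1^\ell)$ can strictly exceed $H(W)$ (for $\ell=2$ this is just $H(W^{(1)})\geq H(W)$). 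The paper instead reorders the chain rule and exploits the lower-triangular structure of $\tilde G$ (guessing $U_k$ given later $U$'s is no harder than guessing $X_k$ from $Y_k$ up to a correction term) to squeeze out the pointwise equalities $H(W^{[c]})=H(W)$ for each nonzero entry $c$ of $\tilde G$. Second, the passage from entropy stagnation to ``invariance of the posterior law under multiplication by $c$'' is precisely the analytic core that cannot be waved through: the proof needs the quantitative inequality $H(W)-H(W^{[c]})\geq-\ln\bigl(1-\tfrac1q\sum_{d\neq 0}Z_{cd}(W)^2(1-Z_d(W))\bigr)$ cited from [MT14, Appendix~A], and its conclusion is not an invariance of laws but the dichotomy ``for each $d$, either $Z_{cd}(W)=0$ or $Z_d(W)=1$,'' which must then be propagated through the multiplicative orbits of the entries of $\tilde G$ and, via [MT14, Lemma~21], through addition. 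Third, even granting invariance, your final classification is wrong: a point mass at a nonzero symbol is not invariant under multiplication, while every mixture of the point mass at $0$ with the uniform law on $\mathbb{F}_q^\times$ is invariant under all of $\mathbb{F}_q^\times$ and has intermediate entropy, so multiplicative invariance alone does not yield $H(W_\star)\in\{0,1\}$; it is the common-fate dichotomy (all $Z_d$ equal $0$ or all equal $1$) that delivers extremality.

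Two smaller points on your converse direction: the $d$ component processes are driven by the same index sequence $J_n$, hence are coupled rather than independent (exactly as in the paper's $\mathbb{F}_4$ example), although the positive-probability failure of polarization can still be extracted from that coupling; and your construction presupposes $d\geq 2$, so the degenerate non-ergodic case in which $\tilde G$ has no nonzero off-diagonal entries (i.e.\ $G$ is column-equivalent to an upper-triangular matrix and the transformation does nothing) must be treated separately, as in the paper's first example.
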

	
	\begin{proof}[Sketch of the proof in \cite{MT14}]
		The stated theorem is strong and general, handling all edge cases.
		The majority of its proof involves reducing some general situation
		(e.g., $ℓ≥2$) to a special case (e.g., $ℓ=2$) to ease notational burdens.
		I pick out what I think is the key part of the proof.
		
		First, we know that a bounded martingale will almost always converge,
		which implies $H(𝘞_n)-H(𝘞_{n+1})→0$.
		So all we need to show is that when $H(𝘞_n)$ starts slowing down,
		i.e., when $\abs{H(𝘞_n)-H(𝘞_{n+1})}$ is small,
		$𝘞_n$ will be either very reliable or very noisy.
		To put in another way, our goal is to prove
		$H(𝘞_n)$ only slows down when it is reaching $0$ or $1$.
		Then we can conclude that the limit of $H(𝘞_n)$ is either $0$ or $1$.
		
		To show that $𝘞_n$ is not mediocre when $\abs{H(𝘞_n)-H(𝘞_{n+1})}$ is small,
		the subscript Bhattacharyya parameter $Z_d(W)$ is introduced to inspect
		the relation between $W(x｜y)$ and $W(x+d｜y)$, for any $d∈𝔽_q^×$.
		Now follow the recipe below to show $𝘞_n$ is extreme:
		\begin{itemize}
			\item	Show that if $\abs{H(W)-H(WＷj)}$ is small, then
					$\abs{H(W)-H(W^{[c]})}$ is small for some
					simpler channel transformation $•^{[c]}$,
					where $c$ is any nonzero entry of $˜G$.
			\item	Show that if $\abs{H(W)-H(W^{[c]})}$ is small, then
					$Z_{cd}(W)(1-Z_d(W))$ is small for all $d∈𝔽_q^×$.
			\item	Show that if $Z_d(W)$ is small, then
					$Z_{cd}(W)$ is small for any nonzero entry $c$ of $˜G$.
					Or, if $Z_d(W)$ is close to $1$, then
					$Z_{cd}(W)$ is close to $1$ for any nonzero entry $c$ of $˜G$.
					This step is the scalar-multiplication part
					of $𝔽_p[˜G]$ as an $𝔽_p$-algebra.
			\item	Show that if both $Z_c(W)$ and $Z_d(W)$ are small, then
					$Z_{c+d}(W)$ is small for any entries $c$ and $d$ of $˜G$.
					Similarly, if both $Z_c(W)$ and $Z_d(W)$ are close to $1$, then
					$Z_{c+d}(W)$ is close to $1$ for any $c,d∈𝔽_q^×$.
					This step is the addition part of $𝔽_p[˜G]$ as an $𝔽_p$-algebra.
			\item	Show that if $\{Z_d(W):d∈𝔽_q^×\}$ are all small,
					then $W$ is very reliable.
					Otherwise, if $\{Z_d(W):d∈𝔽_q^×\}$ are all close to $1$,
					then $W$ is very noisy.
		\end{itemize}
		
		The key to the first $•$ is to simplify the given premise
		$\abs{H(W)-H(WＷj)}<ε$ concerning $˜G$ to a condition
		$\abs{H(W)-H(W^{[c]})}<ℓε$ concerning some $2×2$ matrix, where
		\[W^{[c]}(y₁y₂u₁｜u₂)≔÷12W(y₁｜u₁+cu₂)W(y₂｜u₂),\]
		for any nonzero entry $c$ of $˜G$.
		Note that $W^{[c]}$ represents the guessing job
		of $U₂$ given $Y₁Y₂U₁$ and the matrix kernel $\locl$.
		To simplify the premise, let $(j,i)$ points to a non-zero entry $c$ of $˜G$.
		By how $U_i,U_j,X_i,X_j$ are related by $c$, one can prove
		\[0≤H(W)-H(W^{[c]})≤(ℓ-j+1)H(W)-∑_{k=j}^ℓH(WＷk).\]
		In layman's terms, the extra amount of information $W^{[c]}$ can steal from $W$
		is at most the amount that was stolen by $WＷj,WＷ{j+1}…WＷ{ℓ}$ from $W^{ℓ-j+1}$.
		Therefore, the difference $\abs{H(𝘞_n)-H(𝘞_n^{[c]})}$ converges to $0$.
		
		The key to the second and third $•$ is to show
		$f(Z_{cd}(W)(1-Z_d(W)))≤H(W)-H(W^{[c]})$ for all $d∈𝔽_q^×$,
		where $f$ is some monotonically increasing function that passes $(0,0)$.
		Once this is done, we have that
		either $Z_{cd}(𝘞_n)$ is small or $Z_d(𝘞_n)$ is close to $1$.
		If small $Z_{cd}(𝘞_n)$ is the case,
		then we obtain small $Z_{c²d}(𝘞_n)$ when plugging in $d=cd$;
		we further obtain small $Z_{c³d}(𝘞_n)$ when plugging in $d=c²d$;
		and so on.
		A similar argument applies if we choose close-to-$1$ $Z_d(𝘞_n)$---%
		we will obtain close-to-$1$ $Z_{d/c}(𝘞_n)$, $Z_{d/c²}(𝘞_n)$, etc.
		
		The fourth $•$ is by an independent inequality.
		And it implies that “being small” and “being close to $1$”
		are properties that can propagate among $Z_d(𝘞_n)$ for distinct $d$'s.
		The first four $•$'s together imply that either
		all of $\{Z_d(W):d∈𝔽_q^×\}$ are small or all of them are close to $1$.
		The last $•$ is just some Hölder tolls that connect the fact that
		all $Z_d(W)$ are small with the fact that $H(W)$ is small, and vice versa.
		And the Hölder tolls imply that $𝘞_n$ is extreme.
		This finishes the sketch of the proof.
	\end{proof}
	
	Remark on the theorem:
	Knowing that $\{H(𝘞_n)\}$ converges to $0$ or $1$ says
	nothing about the pace of convergence.
	In particular, we do not even know if at least one of $H(WＷj)$
	is not equal to $H(W)$---it could be that $H(𝘞_n)$ stays unchanged
	for many $n$'s and then moves a little bit before another long relaxing.
	For instance, $1/⌊㏑n⌋$ eventually converges to $0$ but it moves occasionally.
	
	Our next goal in this section is to extract, from the proof of \cref{thm:ergodic},
	a lemma that some $H(WＷj)$ is different from $H(W)$.
	This lemma will evince that the pace of convergence
	is exponential in $n$ in the next section.
	
	\begin{lem}[Ergodic kernel perturbs]\label{lem:moving}
		Let $W$ be a $q$-ary channel.
		Let $G∈𝔽_q^{ℓ×ℓ}$ be an ergodic kernel.
		Then $H(WＷj)≠H(W)$ for some $1≤j≤ℓ$ unless $H(W)∈\{0,1\}$.
	\end{lem}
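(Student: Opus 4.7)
The plan is to argue by contraposition: assume $H(W\boxminus j)=H(W)$ for every $1\le j\le\ell$, and show that this forces $H(W)\in\{0,1\}$. In other words, I will run the proof of \cref{thm:ergodic} in its exact (non-epsilon) form, tracing what ``the martingale does not move at all this step'' implies about $W$ itself rather than about the long-term limit of $\{H(𝘞_n)\}$.

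First, I would recycle the single-step comparison inequality sketched in the proof of \cref{thm:ergodic}:
\[
0 \;\le\; H(W)-H(W^{[c]}) \;\le\; (\ell-j+1)H(W)-\sum_{k=j}^{\ell}H(W\boxminus k),
\]
where $c$ is any nonzero entry of $\tilde G$ sitting in column $i$, row $j$, and $W^{[c]}$ is the $2\times2$ reduction associated with that single entry. Under the working hypothesis $H(W\boxminus k)=H(W)$ for all $k$, the right-hand side collapses to $0$, so exact equality $H(W)=H(W^{[c]})$ holds for \emph{every} nonzero off-diagonal entry $c$ of $\tilde G$. Next, I would invoke the inequality $f\bigl(Z_{cd}(W)(1-Z_d(W))\bigr)\le H(W)-H(W^{[c]})$ used in the same sketch (with $f$ monotone, $f(0)=0$). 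The right-hand side is $0$, so
\[
Z_{cd}(W)\bigl(1-Z_d(W)\bigr)=0\qquad\text{for every }d\in\mathbb F_q^\times\text{ and every nonzero entry }c\text{ of }\tilde G.
\]

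Define $\mathcal Z_0\coloneqq\{d\in\mathbb F_q^\times:Z_d(W)=0\}$ and $\mathcal Z_1\coloneqq\{d\in\mathbb F_q^\times:Z_d(W)=1\}$. The dichotomy above says: for each entry $c$ and each $d$, either $cd\in\mathcal Z_0$ or $d\in\mathcal Z_1$. Equivalently, $\mathbb F_q^\times=\mathcal Z_1\cup c^{-1}\mathcal Z_0$ for each such $c$. I would then propagate these memberships using exactly the two bullets from the proof of \cref{thm:ergodic} in their exact form: (scalar) if $d\in\mathcal Z_0$ then $cd\in\mathcal Z_0$, and if $d\in\mathcal Z_1$ then $d/c\in\mathcal Z_1$; (additive) if $c,d\in\mathcal Z_0$ then $c+d\in\mathcal Z_0$, and likewise for $\mathcal Z_1$. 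Because $G$ is ergodic, the nonzero entries of $\tilde G$ generate $\mathbb F_q$ as an $\mathbb F_p$-algebra, so these two closure properties imply that the closure of $\mathcal Z_0$ (resp.\ $\mathcal Z_1$) under the generated operations is an $\mathbb F_p[\tilde G]$-stable subset of $\mathbb F_q$; any nonempty stable subset is all of $\mathbb F_q^\times$. Hence either $\mathcal Z_0=\mathbb F_q^\times$ or $\mathcal Z_1=\mathbb F_q^\times$.

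Finally, applying the last bullet of the same sketch---the Hölder tolls relating $\{Z_d(W):d\in\mathbb F_q^\times\}$ to $H(W)$---I would conclude that $\mathcal Z_0=\mathbb F_q^\times$ implies $H(W)=0$ while $\mathcal Z_1=\mathbb F_q^\times$ implies $H(W)=1$, giving the required $H(W)\in\{0,1\}$. The main obstacle is step three, namely making sure that all the ``small implies small'' inferences borrowed from \cref{thm:ergodic} go through as genuine ``zero implies zero'' (or ``one implies one'') exact statements; this needs checking that the auxiliary inequalities (such as $f(Z_{cd}(W)(1-Z_d(W)))\le H(W)-H(W^{[c]})$ and the additive Bhattacharyya bound) are of the form ``LHS $= 0$ iff some explicit equality holds,'' so that no compactness/limit argument has to be run---otherwise one has to work with a sequence $\{W_n\}$ collapsing to $W$ and appeal to continuity of $H$ and $Z_d$ on the compact space of $q$-ary channels up to output identification.
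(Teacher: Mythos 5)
Your proposal is correct and follows essentially the same route as the paper's proof: contrapose, squeeze the chain-rule bound to get $H(W)=H(W^{[c]})$ for every nonzero off-diagonal entry $c$ of $\tilde G$, use the cited inequality to force $Z_{cd}(W)(1-Z_d(W))=0$ exactly, propagate through multiplicative orbits and sums via ergodicity, and finish with the parameter relations between $\{Z_d\}$ and $H$. The only differences are cosmetic: the paper first performs the symmetrization/lowered-form reduction and re-derives the comparison inequality via reordering the guessing order, and its cited form $H(W)-H(W^{[c]})\ge-\ln\bigl(1-\tfrac1q\sum_dZ_{cd}(W)^2(1-Z_d(W))\bigr)\ge0$ already resolves the exactness concern you flag at the end.
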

	
	\begin{proof}
		First and foremost, assume that $W$ is a symmetric channel with uniform input
		and that $G=˜G$ is a lower triangular matrix.
		The former is due to a symmetrization technique that identifies
		the conditional entropy of a $q$-ary channel $W$ and its symmetric sibling $˜W$.
		The latter is by that upper triangular row-operations do not alter
		the synthetic channels up to some equivalence relation.
		See \cite{MT14} for more details about these two reductions.
		
		Now we assume the opposite of the conclusion,
		that $H(WＷj)=H(W)$ for all $1≤j≤ℓ$.
		Then, for any $1≤j≤ℓ$,
		\begin{align*}
			(ℓ-j+1)H(W)
			&	=∑_{k=j}^ℓH(WＷk)=∑_{k=j}^ℓH(U_k｜Y₁^ℓU₁^{k-1})
				=H(U_j^ℓ｜Y₁^ℓU₁^{j-1})	\\
			&	=∑_{k=j}^ℓH(U_k｜Y₁^ℓU₁^{j-1}U_{k+1}^ℓ)≤∑_{k=j}^ℓH(W)=(ℓ-j+1)H(W).
		\end{align*}
		The equality that starts the second line
		changes the order we guess $U_k$---the new order is $k=1,2…j-1,ℓ,ℓ-1…j$.
		The inequality in the second line is by that $˜G$ is lower triangular,
		and hence guessing $U_k$ is no harder than guessing $X_k$ from $Y_k$
		follow by the subtraction of the correction term $X_k-U_k=U_{k+1}^ℓ˜G$.
		Now the inequality squeezes, so all
		$H(U_k｜Y₁^ℓU₁^{j-1}U_{k+1}^ℓ)$ are equal to $H(W)$.
		
		Plugging in $k=j$ yields, particularly, $H(U_j｜Y₁^ℓU₁^{j-1}U_{j+1}^ℓ)=H(W)$.
		That is, the last guessing job, which is supposedly the easiest,
		turns out to be as difficult as the others.
		Fix any pair $i<j$ such that the $(j,i)$th entry of $˜G$ is $c≠0$.
		Then
		\[H(W)=H(U_j｜Y₁^ℓU₁^{j-1}U_{j+1}^ℓ)≤H(U_j｜Y_iY_jU₁^{j-1}U_{j+1}^ℓ)≤H(W).\]
		The first inequality is by monotonicity.
		The second inequality is by that guessing $U_j$ is no harder than
		guessing $X_j$ up to the subtraction of some correction term.
		Now the inequalities squeeze and $H(U_j｜Y_iY_jU₁^{j-1}U_{j+1}^ℓ)=H(W)$.
		
		Now look closer at the “channel” $U_j｜Y_iY_jU₁^{j-1}U_{j+1}^ℓ$.
		We get a second chance to learn more about $U_j$---since $X_i$ is a linear
		combination of $U_i^ℓ$, wherein the coefficient of $U_j$ is $c$, the output
		$Y_i$ corresponding to the input $X_i$ also carries information about $U_j$.
		The way this information is carried is the same as
		the outputs corresponding to the inputs $U_i$ and $cU_i$.
		This comes down to the second synthetic channel
		of the $2×2$ kernel $\locl$, i.e.
		\[W^{[c]}(y₁y₂u₁｜u₂)≔÷12W(y₁｜u₁+cu₂)W(y₂｜u₂).\]
		Hence $H(U_j｜Y_iY_jU₁^{j-1}U_{j+1}^ℓ)≤H(W^{[c]})≤H(W)$.
		The inequalities squeeze again, $H(W^{[c]})=H(W)$.
		
		Now we cite a difficult inequality in \cite[Appendix~A]{MT14}:
		\[H(W)-H(W^{[c]})≥-㏑（1-÷1q∑_{d∈𝔽^×}Z_{cd}(W)²(1-Z_d(W))）≥0.\]
		Squeeze one more time;
		we end up with $Z_{cd}²(1-Z_d)=0$, for all $d∈𝔽_q^×$.
		(We omit the argument “$(W)$” from now on.)
		If it is the case that $Z_{cd}=0$, then the next factor $1-Z_{cd}$
		around the corner is $1$, which forces $Z_{c²d}$ to be $0$.
		The latter in turns forces $Z_{c³d}$ to be $0$.
		This argument propagates throughout the multiplicative orbit $⟨c⟩d⊆𝔽_q^×$.
		Similarly, if it is the case that $Z_d=1$,
		then $Z_{d/c},Z_{d/c²}…Z_d$ are, forcably, all $1$.
		In summary, all $Z_d$ in the orbit $⟨c⟩d$ share a common fate.
		In fact, since $c$ can be any nonzero off-diagonal entry of $˜G$,
		we deduce that all $Z_d$ in the big orbit $⟨˜G⟩d$ share a common fate.
		
		The last piece of the jigsaw puzzle is to show that all $Z_d$,
		no matter which orbit they lie in, share the same common fate.
		We hereby cite \cite[Lemma~21]{MT14}:
		If $Z_d=Z_e=1$ and $d+e≠0$, then $Z_{d+e}=1$.
		Therefore, if $Z_d=1$ for some $d∈𝔽_p$ in the ground field,
		then $Z_d=1$ for all $d∈𝔽_p$.
		By that off-diagonal entries of $˜G$ generate $𝔽_q$ as an $𝔽_p$-algebra and
		that $Z_d$'s lying in the same $˜G$-orbit share the common fate,
		we conclude that $Z_d=1$ for all $d∈𝔽_q$.
		If, otherwise, no $d$ in the ground field has $Z_d=1$,
		then all $d$ have $Z_d=0$.
		
		Now we have that all $d∈𝔽_q$ share the common fate---either
		all $Z_d(W)$ are $1$ or all $Z_d(W)$ are $0$.
		So $Z(W)$ is either $1$ or $0$.
		Thus $H(W)$ is either $1$ or $0$.
		This completes the proof.
	\end{proof}
	
	We just see that either $G$ does nothing to the channel $W$
	or $G$ will synthesize a $WＷj$ that has a distinct conditional entropy.
	In other words, $H(𝘞₁)$ is not a constant but a true random variable.
	In the next section, I will leverage the fact that $H(𝘞₁)$ is not constant,
	regardless how small its variance is, to show that
	$H(𝘞₀),H(𝘞₁),H(𝘞₂),\dotsc$ polarizes in an exponential pace.

\section{Eigen Behavior}\label{sec:eigen}

	What constitutes this section is a compactness argument that aims to show that
	every ergodic kernel has a positive $ϱ$ as in the eigenvalue formula
	\[ℓ^{-ϱ}≔\sup_{W：q†-ary†}÷{h(H(WＷ1))+h(H(WＷ2))+\dotsb+h(H(WＷ{ℓ}))}{ℓh(H(W))},
		\label{sup:dmc}\]
	where $h$ is an easy-to-handle eigenfunciton.
	From that we can conclude $𝘌[h(𝘞_n)]<ℓ^{-ϱn}$
	and then move on to the en23 behavior.
	
	We can almost see how \cref{sup:dmc} can be estimated:
	Take a (strictly) concave $h$.
	By the last section,
	$H(WＷj)$ are not equal to each other whilst their average is $H(W)$.
	By Jensen's inequality, $𝘌[h(H(𝘞₁))]≤h(H(𝘞₀))$, and the equality cannot hold.
	This means that the fraction within the supremum is strictly less than $1$.
	Should the supremum be less than $1$, a positive $ϱ$ exists.
	Now we see why \cref{sup:dmc} is nontrivial:
	The supremum of some less-than-$1$ numbers can be $1$,
	especially when the domain is not compact.
	
	The remainder of this section finds a compact subset of $q$-ary channels,
	on which the supremum is strictly less than $1$,
	and handles the supremum over the complement set separately.
	
	\begin{lem}[Eigen for mediocre channels]\label{lem:eigen-H}
		Fix a $G∈𝔽_q^{ℓ×ℓ}$.
		Fix an $h(z)≔√{\min(z,1-z)}$.
		For any $δ>0$,
		\[\sup_{δ≤H(W)≤1-δ}÷{h(H(WＷ1))+h(H(WＷ2))+\dotsb+h(H(WＷ{ℓ}))}{ℓh(H(W))}<1,
			\label{sup:mediocre}\]
		where the supremum is taken over all $q$-ary $(Q,W)$-pairs
		whose conditional entropy lies within $[δ,1-δ]$.
	\end{lem}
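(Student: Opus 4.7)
\medskip

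The plan is to combine a strict Jensen-type inequality (driven by \cref{lem:moving}) with a compactness argument on the space of $q$-ary channels modulo equivalence. First, I would note that FTPC$H$ (\cref{thm:ftpcH}) gives the identity $\sum_{j=1}^{\ell} H(WＷj) = \ell H(W)$, and that $h(z)\coloneqq\sqrt{\min(z,1-z)}$ is strictly concave on $[0,1]$ (concave on each half by strict concavity of $\sqrt{\cdot}$, and strictly concave across $z=1/2$ by direct computation). Hence by Jensen, whenever the $H(WＷj)$ are not all equal, one obtains the strict pointwise inequality
\[
\frac{h(H(WＷ1))+\dotsb+h(H(WＷ\ell))}{\ell h(H(W))}<1.
\]
By \cref{lem:moving} (ergodic kernels perturb), the hypothesis $H(W)\in[\delta,1-\delta]\subset(0,1)$ forces at least one $H(WＷj)\ne H(W)$, so the fraction is strictly less than $1$ for every admissible $W$.

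Next I would upgrade this pointwise strictness to a uniform one by compactness. The equivalence relation that identifies channels sharing the same posterior-distribution law (cf.\ \cref{cha:dual}) quotients the class of $q$-ary channels to the space of probability measures on the simplex $\Delta^{q-1}$ supported at points whose barycenter is $Q$. Under the weak-$*$ topology this quotient is compact, the subset $\{\delta\le H\le 1-\delta\}$ is closed hence compact, and all the maps $W\mapsto h(H(WＷj))$ and $W\mapsto h(H(W))$ descend to continuous functions (since $WＷj$ is built from finite sums and products of the transition kernel, and both $H$ and $h$ are continuous). The denominator $h(H(W))$ is bounded below by $h(\delta)>0$ on the admissible set, so the ratio is itself continuous.

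A continuous function on a compact set attains its supremum, say at some $W^{*}$ with $H(W^{*})\in[\delta,1-\delta]$. Applying the pointwise strict inequality at $W^{*}$ yields that the attained value is strictly less than $1$, establishing \cref{sup:mediocre}.

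The main obstacle is the compactness step: $q$-ary channels with unbounded output alphabet do not form a compact set on the nose, so one must pass to the equivalence classes and verify that the channel-combining operation $W\mapsto WＷj$, which is defined through the algebraic identity $X_1^\ell=U_1^\ell G$, truly descends to a continuous map on the quotient. Establishing this continuity and, tacitly, checking that $h$ being non-smooth at $1/2$ does not spoil strict Jensen when some $H(WＷj)$ happen to equal $1/2$, are the technical points to handle with care; everything else is a direct application of the ergodic-perturbation lemma and the extreme value theorem.
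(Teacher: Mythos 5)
Your proposal is correct and is essentially the paper's own argument: the paper likewise combines the strict-Jensen observation (FTPC$H$ plus \cref{lem:moving} making the fraction pointwise less than $1$) with a compactness argument that represents channels by their posterior distributions in $𝒫(Δ^𝒳)$, invokes Prokhorov's theorem and continuity of $H$ and $W\mapsto WＷj$ under weak convergence, and then rules out a supremum equal to $1$. The only cosmetic difference is that you phrase the final step as attaining the supremum at some $W^{*}$ (extreme value theorem) while the paper argues by extracting a convergent subsequence from a hypothetical maximizing sequence and deriving a contradiction at its limit $W_∞$; both versions carry the same tacit point that \cref{lem:moving} must apply to the (possibly continuous-output) limit channel.
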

	
	\begin{proof}
		Approach one:
		All delta--epsilon arguments in \cref{lem:moving}
		(which is inspired by \cref{thm:ergodic}) can be made explicit.
		That will give an upper bound on \cref{sup:mediocre}.
		
		Approach two:
		It suffices to show that the space of $q$-ary channels
		with mediocre conditional entropy is sequentially compact,
		plus $•Ｗj$ and $H$ are continuous w.r.t.\ the same topology.
		Once this is done, any sequence $W₁,W₂,\dotsc$ whose corresponding fractions
		converge to $1$ must converge to the corresponding fraction of some $W_∞$,
		which is strictly less than $1$ and leads to a contradiction.
		
		Let the simplex $Δ^𝒳$ be the closed subset of $[0,1]^𝒳$
		constrained by that the sum of coordinates is $1$.
		This is the set of all probability distributions on $𝒳$.
		Let $𝒫(Δ^𝒳)$ be the set of probability distributions on $Δ^𝒳$.
		A pair $(Q,W)$ of a DMC $W$ with an input distribution $Q$
		corresponds to a distribution on $Δ^𝒳$, i.e. an element of $𝒫(Δ^𝒳)$,
		through the posterior probabilities seen by the decoder.
		In details, whenever the decoder sees $Y=y$,
		it looks up the symbol $y$ in the table of
		posterior probabilities and learns a tuple
		\[(W(x₁｜y),W(x₂｜y)…W(x_q｜y)),\]
		where $x₁,x₂…x_q$ enumerate the symbols of $𝒳$.
		This tuple is an element of $Δ^𝒳$.
		Now that the channel output $Y$ is a random variable,
		the tuple of posterior probabilities
		\[(W(x₁｜Y),W(x₂｜Y)…W(x_q｜Y))\]
		is itself random.
		This tuple is a $Δ^𝒳$-valued random variable
		and obeys some distribution in $𝒫(Δ^𝒳)$.
		This distribution of a random tuple is
		the representative of $(Q,W)$ in $𝒫(Δ^𝒳)$.
		
		Here comes the measure theory nonsense:
		Since $𝒳$ is finite, $[0,1]^𝒳$ and $Δ^𝒳$
		are compact w.r.t.\ the Euclidean topology.
		So $𝒫(Δ^𝒳)$, the set of all distributions on $Δ^𝒳$, is tight.
		By Prokhorov's theorem, $𝒫(Δ^𝒳)$ is sequentially compact
		w.r.t.\ the topology of weak convergence.
		Now notice that $H(W)$ is just the expectation/integral
		\[E「∑_{x∈𝒳}-W(x｜Y)㏒_qW(x｜Y)」=∫∑_{x∈𝒳}-W(x｜y)㏒_qW(x｜y)\,\diff y.\]
		Note that the “integratee” $∑_{x∈𝒳}-W(x｜y)㏒_qW(x｜y)≤q$
		is bounded and continuous in the tuple.
		So $H$ can be extended to a continuous map from $𝒫(Δ^𝒳)$ to $[0,1]$.
		By extension I mean that $H$ is now defined over all $q$-ary input channels
		regardless of whether the output alphabet is discrete or continuous.
		Similarly, all $H(WＷj)$ can be written as
		more complicated expectations/integrals of $W(x｜Y)$.
		They are all continuous w.r.t.\ the topology of weak convergence.
		
		Finally, this is what happens if there exists a sequence of channels
		$W₁,W₂…$ (notice the font) whose corresponding fractions
		\[÷{h(H(•Ｗ1))+h(H(•Ｗ2))+\dotsb+h(H(•Ｗ{ℓ}))}{ℓh(H(•))}\]
		converge to $1$:
		Map these channels to $𝒫(Δ^𝒳)$.
		A sequence in $𝒫(Δ^𝒳)$ must contain a convergent subsequence.
		Let $W_∞$ be the limit of any such subsequence.
		Then, as $W_∞$ also satisfies \cref{lem:moving},
		\[÷{h(H(W_∞Ｗ1))+h(H(W_∞Ｗ2))+\dotsb+h(H(W_∞Ｗ{ℓ}))}{ℓh(H(W_∞))}<1.\]
		(In particular, the denominator is $≥h(δ)=√δ$.)
		A contradiction.
		This means the supremum is strictly less than $1$ and is what I calimed.
		
		This proof is partially inspired by \cite{Nasser18c,Nasser18t}.
	\end{proof}
	
	The case of mediocre channels is done.
	It remains to upper bound the supremum for
	channels in the neighborhoods of $H=0$ and $H=1$.
	First goes the neighborhood of $H=0$.
	
	\begin{lem}[Eigen for reliable channels]\label{lem:eigen-Z}
		Fix a $G∈𝔽_q^{ℓ×ℓ}$ with coset distance $D_ZＷj≥5$ for some $j$,
		that is, the Hamming distance from some row to
		the subspace spanned by the rows below is $5$ or farther.
		Fix an $h(z)≔√{\min(z,1-z)}$.
		Then there exists an $δ$ such that
		\[\sup_{0<H(W)<δ}÷{h(H(WＷ1))+h(H(WＷ2))+\dotsb+h(H(WＷ{ℓ}))}{ℓh(H(W))}<1,
			\label{sup:good}\]
		where the supremum is taken over all $q$-ary $(Q,W)$-pairs
		whose conditional entropy lies within the interval $(0,δ)$.
	\end{lem}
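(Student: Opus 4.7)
The plan is to leverage that, near $H(W)=0$, the Cauchy--Schwarz inequality $\sum_j\sqrt{a_j}\le\ell$ (under $\sum_j a_j=\ell$) is tight only when all $a_j$'s are equal, while the hypothesis $D_ZＷ{j^\star}\ge 5$ guarantees that one coordinate $a_{j^\star}$ is forced to be strictly smaller than the average, so strict inequality persists with a quantitative gap.

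First I would choose $\delta<1/(2\ell)$ so that the martingale identity $\sum_jH(WＷj)=\ell H(W)$ of \cref{thm:ftpcH} forces every $H(WＷj)\le\ell H(W)<1/2$. On this window $h$ acts simply as $\sqrt{\,\cdot\,}$, and writing $a_j\coloneqq H(WＷj)/H(W)$ the object of study becomes $\ell^{-1}\sum_{j=1}^\ell\sqrt{a_j}$ subject to $\sum_j a_j=\ell$ and $a_j\ge0$.

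Second, I would chase the distinguished index $j^\star$ through \cref{thm:ftpcZ} in the coarse form $Ｚ(WＷj)\le q^{\ell-j}Ｚ(W)^{D_ZＷj}$, and then through the explicit Hölder tolls $Ｚ\le q^3\sqrt H$ and $H\le q^3\sqrt{Ｚ}$ of \cref{pro:ex-toll}. Chaining them yields an estimate of the shape
\[
	H(WＷ{j^\star})\le C(q,\ell)\,H(W)^{D_ZＷ{j^\star}/4}\le C(q,\ell)\,H(W)^{5/4},
\]
so $a_{j^\star}\le C(q,\ell)\,H(W)^{1/4}\to 0$ as $H(W)\to 0$. The numerical threshold $5$ in the hypothesis is precisely what is needed to make the compound Hölder exponent $D_ZＷ{j^\star}/4$ strictly greater than~$1$, and hence to force $a_{j^\star}\to 0$ rather than merely staying bounded.

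Finally, to extract a uniform gap below $1$, I would apply Cauchy--Schwarz only to the $\ell-1$ indices $j\neq j^\star$, getting $\sum_{j\neq j^\star}\sqrt{a_j}\le\sqrt{(\ell-1)(\ell-a_{j^\star})}$, and therefore
\[
	÷1\ell\sum_{j=1}^\ell\sqrt{a_j}\le÷{\sqrt{a_{j^\star}}+\sqrt{(\ell-1)(\ell-a_{j^\star})}}\ell,
\]
whose right-hand side tends to $\sqrt{(\ell-1)/\ell}<1$ as $a_{j^\star}\downarrow 0$. Shrinking $\delta$ so that $C(q,\ell)\,\delta^{1/4}$ is small enough brings this quantity uniformly below, say, $\tfrac12\bigl(1+\sqrt{(\ell-1)/\ell}\bigr)<1$, which proves the claim. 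The only obstacle I anticipate is bookkeeping to ensure the exponent after the two Hölder conversions strictly exceeds~$1$; no compactness or topological argument is needed here, in pleasing contrast to \cref{lem:eigen-H}.
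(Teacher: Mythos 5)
Your proposal is correct and follows essentially the same route as the paper's proof: isolate the distinguished index, chain the explicit Hölder tolls of \cref{pro:ex-toll} through FTPC$Z$ to get $H(WＷ{j^\star})\lesssim H(W)^{5/4}$ (which is exactly why the threshold $5$ appears), bound the remaining $\ell-1$ terms by Jensen/Cauchy--Schwarz, and let the ratio tend to $\sqrt{(\ell-1)/\ell}<1$ as $\delta\to0$. The only cosmetic differences are the normalization $a_j=H(WＷj)/H(W)$, Cauchy--Schwarz in place of Jensen, and treating a general $j^\star$ via the coarse bound $Ｚ(WＷj)\le q^{\ell-j}Ｚ(W)^{D_ZＷj}$ instead of the paper's without-loss-of-generality reduction to the last row.
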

	
	\begin{proof}
		Let us assume that it is the last row of $G$ that has Hamming weight $5$.
		For if $\hdis(r_j,R_j)≥5$ is satisfied with other index $j<ℓ$,
		the following proof works with minor modifications.
		
		Temporarily let $δ$ be $H(W)$ and assume that this is really small.
		This paragraph bounds $ε≕H(WＷ{ℓ})$ from above:
		Pay the explicit Hölder toll (\cref{pro:ex-toll}),
		we obtain $Ｚ(W)<q³√{H(W)}=q³√δ$.
		Apply FTPC$Z$ (\cref{thm:ftpcZ}), we see that
		$Ｚ(WＷ{ℓ})≤f_ZＷ{ℓ}(Ｚ(W))=Ｚ(W)⁵<q^{15}δ^{2.5}$.
		Pay the Hölder toll for the return-trip (\cref{pro:ex-toll}),
		we arrive at $ε=H(WＷ{ℓ})<q³√{Ｚ(WＷ{ℓ})}<q^{10.5}δ^{1.25}$,
		or $ε<q^{10.5}δ^{1.25}$ for short.
		This is the only $H(WＷj)$ I know how to estimate.
		
		Look at the fraction in \cref{sup:good}.
		The last term in the numerator is $h(ε)≤√ε<√{q^{10.5}δ^{1.25}}$.
		We do not know about the other terms in the numerator;
		but at least we can apply Jenson's inequality
		(i.e., we assume they are all equal to minimize the extent of polarization)
		\begin{align*}
			\qquad&\kern-2em
			h(H(WＷ1))+h(H(WＷ2))+\dotsb+h(H(WＷ{ℓ-1}))	\\
			&	≤(ℓ-1)h（÷{H(WＷ1)+H(WＷ2)+\dotsb+H(WＷ{ℓ-1})}{ℓ-1}）\\
			&	=(ℓ-1)h（÷{ℓδ-ε}{ℓ-1}）≤(ℓ-1)√{÷{ℓδ-ε}{ℓ-1}}=√{ℓ-1}√{ℓδ-ε}<√{(ℓ-1)ℓδ}.
		\end{align*}
		The first equality is by FTPC$H$ (\cref{thm:ftpcH}).
		
		Now the fraction in \cref{sup:good} has
		its numerator simplified down to two terms:
		\[†\cref{sup:good}†≤÷{√{(ℓ-1)ℓδ}+√ε}{ℓ√δ}
			=√{÷{ℓ-1}{ℓ}}+÷1{ℓ}√{÷{ε}{δ}}\label{ine:1small}\]
		If we let $δ→0$, then $ε/δ≤q^{10.5}δ^{1.25}/δ=q^{10.5}δ^{0.25}→0$.
		So there is a positive $δ$ such that
		the right-hand side of \cref{ine:1small} is strictly less than $1$.
		This completes the proof.
	\end{proof}
	
	Next goes the neighborhood of $H=1$.
	But it is just the dual of the previous lemma.
	
	\begin{lem}[Eigen for noisy channels]\label{lem:eigen-S}
		Fix a $G∈𝔽_q^{ℓ×ℓ}$ with dual coset distance $D_SＷj≥5$ for some $j$,
		that is, the Hamming distance from some column of $G^{-1}$ to
		the subspace spanned by the column to the left is $5$ or farther.
		Fix $h(z)≔√{\min(z,1-z)}$.
		Then there exists an $δ$ such that
		\[\sup_{1-δ<H(W)<1}÷{h(H(WＷ1))+h(H(WＷ2))+\dotsb+h(H(WＷ{ℓ}))}{ℓh(H(W))}<1,\]
		where the supremum is taken over all $q$-ary $(Q,W)$-pairs
		whose conditional entropy lies within the interval $(1-δ,1)$.
	\end{lem}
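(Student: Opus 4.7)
The proof will be the mirror image of \cref{lem:eigen-Z}, exchanging the reliable end for the noisy end and trading the $Ｚ$-parameter for the $Ｓ$-parameter. The plan is to exploit FTPC$S$ (\cref{thm:ftpcS}) in place of FTPC$Z$ and to pay the Hölder tolls \cref{ine:S<1-H,ine:1-H<S} rather than \cref{ine:Z<H,ine:H<Z}. As before, I would reduce notational clutter by assuming without loss of generality that the column witnessing $D_SＷj≥5$ is some fixed $j$; the general case follows with cosmetic changes.

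Let $δ≔1-H(W)$ and set $ε≔1-H(WＷj)$. First I would pay the Hölder toll \cref{ine:S<1-H} to get $Ｓ(W)<q³√δ$. Then FTPC$S$ combined with the weight-enumerator estimate $f_SＷj(s)≤q^{j-1}s^{D_SＷj}$ yields $Ｓ(WＷj)≤q^{ℓ-1}Ｓ(W)⁵<q^{ℓ+14}δ^{5/2}$. Paying the return toll \cref{ine:1-H<S} gives $ε<q^{ℓ+20}δ^{5/4}$, so $ε/δ→0$ as $δ→0$. This is the analogue of the $\exp$onentially shrinking $H(WＷ{ℓ})$ from the previous lemma.

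Next I need to control the other $ℓ-1$ terms of the numerator. Since FTPC$H$ (\cref{thm:ftpcH}) gives $∑_k(1-H(WＷk))=ℓδ$, once $δ$ is small enough every $H(WＷk)$ exceeds $1/2$, and hence $h(H(WＷk))=√{1-H(WＷk)}$. Jensen's inequality applied to $√{\,\cdot\,}$ then yields
\[∑_{k≠j}h(H(WＷk))≤(ℓ-1)√{÷{ℓδ-ε}{ℓ-1}}=√{(ℓ-1)(ℓδ-ε)}<√{(ℓ-1)ℓδ}.\]
Combining this with $h(H(WＷj))=√ε$ and the denominator $ℓh(H(W))=ℓ√δ$ gives
\[÷{∑_{k=1}^ℓh(H(WＷk))}{ℓh(H(W))}≤√{÷{ℓ-1}{ℓ}}+÷{1}{ℓ}√{÷{ε}{δ}}.\]
The first term is a constant strictly less than $1$, and the second tends to $0$ with $δ$; picking $δ$ small enough makes the whole expression strictly less than $1$, which proves the claim.

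The only ``work'' is to verify that the minimum in $h$ is indeed taken by $1-H$ rather than by $H$ itself; this is automatic once $δ<1/(2ℓ)$, since then each $H(WＷk)≥1-ℓδ>1/2$. There is no genuine obstacle: the ergodic lemma \cref{lem:moving} is not needed in this regime because the distance hypothesis $D_SＷj≥5$ already provides a concrete quantitative gap via FTPC$S$, exactly as $D_ZＷj≥5$ did through FTPC$Z$ in \cref{lem:eigen-Z}.
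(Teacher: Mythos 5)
Your proposal is correct and is exactly the argument the paper intends: its own proof of \cref{lem:eigen-S} just says it is the dual of \cref{lem:eigen-Z} and does not repeat it, and your dualization (FTPC$S$ with $f_SＷj(s)≤q^{j-1}s^{D_SＷj}$, the Hölder tolls \cref{ine:S<1-H,ine:1-H<S}, FTPC$H$ plus Jensen for the remaining terms, and letting $δ≔1-H(W)→0$) fills in those details faithfully.
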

	
	\begin{proof}
		The proof is merely the dual of the proof of \cref{lem:eigen-Z}.
		We shall not repeat.
	\end{proof}
	
	\Cref{lem:eigen-H,lem:eigen-Z,lem:eigen-S} together imply that,
	if $G$ and $G^{-1}$ have large Hamming distances among their rows and columns,
	respectively, then \cref{sup:dmc} is strictly less than $1$.
	However, this does not imply anything about matrices with shorter distances,
	especially when $G$ is $4×4$ or smaller.
	Thankfully, Kronecker product (tensor product) leverages the Hamming distances.
	
	\begin{lem}[Thress steps as one big step]\label{lem:cube}
		Fix any ergodic matrix $K∈𝔽_q^{ℓ×ℓ}$.
		Its cubic Kronecker power $G≔K^{⊗3}$
		(a)	is ergodic,
		(b)	has some $D_ZＷj≥8$, and
		(c)	has some $D_SＷj≥8$.
	\end{lem}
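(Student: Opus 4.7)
The plan is to route all three claims through a single observation: Kronecker product commutes with every structure in sight. First I would verify that the lowered form is Kronecker-multiplicative, $\widetilde{A\otimes B}=\tilde A\otimes\tilde B$. This drops out of the uniqueness of $LU$ factorization together with the mixed-product identity $(L_A\otimes L_B)(U_A\otimes U_B)=(L_AU_A)\otimes(L_BU_B)$, since Kronecker products of unit lower-triangular (resp.\ upper-triangular) matrices remain of the same type under the lexicographic ordering of double indices. Part (a) then falls out immediately: any off-diagonal entry $a=(\tilde K)_{ij}$ with $i>j$ reappears as the $((i,k),(j,k))$-entry of $\tilde K\otimes\tilde K$ for every index $k$, because the diagonal of $\tilde K$ consists of $1$'s. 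Hence the $\mathbb F_p$-subalgebra generated by the off-diagonal entries of $\widetilde{K\otimes K}$ already contains $\mathbb F_p[\tilde K]=\mathbb F_q$, and one more iteration yields ergodicity of $K^{\otimes 3}$.

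For (b), I would invoke the multiplicativity of coset distances under Kronecker products: with the rows of $K^{\otimes 2}$ indexed lexicographically by pairs $(i_1,i_2)$, the $(i_1,i_2)$-th coset code is the tensor product of the $i_1$-th and $i_2$-th coset codes of $K$, and the minimum distance of a tensor product of linear codes equals the product of the factor minimum distances; iterating once more gives $D^Z_{(i_1,i_2,i_3)}(K^{\otimes 3})\ge D^Z_{i_1}(K)\,D^Z_{i_2}(K)\,D^Z_{i_3}(K)$. So it suffices to exhibit one $j$ with $D^Z_j(K)\ge 2$. If instead every partial distance of $K$ equalled $1$, lower-row additions would reduce $K$ to a monomial matrix, each synthetic channel $W^{(j)}$ would then be merely a relabelling of $W$, and $\{H(𝘞_n)\}$ would stay constant, contradicting \cref{thm:ergodic}. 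Cubing such a $j$ yields $D^Z_{(j,j,j)}(K^{\otimes 3})\ge 2^3=8$.

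Part (c) is the exact column-side mirror: since $(K^{\otimes 3})^{-1}=(K^{-1})^{\otimes 3}$, the identical argument applied to the columns of $K^{-1}$, together with the dual form of the ergodicity-forces-distance-$\ge 2$ step, furnishes some $D^S_j(K^{\otimes 3})\ge 8$. The main obstacle I anticipate is the bookkeeping inside the multiplicativity statement: one must align the lexicographic row (resp.\ column) indexing of $K^{\otimes 3}$ with the natural indexing of tensor-product (dual) coset codes, and verify that the minimum distance of a tensor product really is the product of the factor minimum distances in this filtered sense. Once that is nailed down, the rest of the argument is substitution.
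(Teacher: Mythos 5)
Your proposal is correct and takes essentially the same route as the paper: (a) through the Kronecker structure of the lowered form (with your explicit $\widetilde{A\otimes B}=\tilde A\otimes\tilde B$ step making precise what the paper leaves implicit), (b) through ``ergodicity forces some coset distance $\ge 2$'' combined with multiplicativity of coset distances under Kronecker products, and (c) by running the same argument on $(K^{\otimes 3})^{-1}=(K^{-1})^{\otimes 3}$. The one caution is precisely the step you flag as bookkeeping: the span of the rows below index $(i_1,i_2)$ in $K^{\otimes 2}$ is $r_{i_1}\otimes R_{i_2}+R_{i_1}\otimes\mathbb{F}_q^{\ell}$, strictly larger than any tensor product of the two lower-row spans, so the textbook ``minimum distance of a tensor-product code'' identity is not literally the statement needed; the paper isolates and proves exactly this filtered version as \cref{lem:tensor}, after which your bound $D_Z^{((i_1,i_2,i_3))}\ge D_Z^{(i_1)}D_Z^{(i_2)}D_Z^{(i_3)}$ and the choice $(j,j,j)$ with $D_Z^{(j)}\ge 2$ go through as you describe.
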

	
	\begin{proof}
		For (a):
		It suffices to consider the Kronecker power of the lowered form $˜K$.
		Since the diagonal of $˜K$ is all $1$, any Kronecker power $˜K^{⊗n}$
		keeps a copy of the original $˜K$ around the diagonal.
		Use the original copies to generate $𝔽_q$ as an $𝔽_p$-algebra.
		
		For (b):
		The proof is the combination of two simple facts:
		\begin{itemize}
			\item	An ergodic matrix must have some coset distance $≥2$.
			\item	Let $(u,U)$ be a vector--subspace pair of some ambient vector space
					$𝕌$ and $(v,V)$ a vector--subspace pair of another ambient
					vector space $𝕍$, then $\hdis(u⊗v,u⊗V+U⊗𝕍)≥\hdis(u,U)\*\hdis(v,V)$.
		\end{itemize}
		The first bullet point is again a consequence of \cref{thm:ergodic}.
		To elaborate, note that $K$ and and its lowered form $˜K$
		share the same coset distance profile.
		Then note that $˜K$ has at least one off-diagonal entry that is nonzero.
		Let $j$ be the last row with nonzero off-diagonal entry, then $D_ZＷj≥2$.
		Hence $K$ has some coset distance $≥2$.
		The second bullet point is worth a standalone lemma
		and will be proved as \cref{lem:tensor}.
		The two bullet points imply that
		the $(jℓ²+jℓ+j)$th coset distance of $G≔K^{⊗3}$ is at least $8$.
		
		For (c):
		It is the dual of (b).
		The proof ends here.
	\end{proof}
	
	\begin{lem}\label{lem:tensor}
		Let $(u,U)$ be a vector--subspace pair of some ambient vector space $𝕌$
		and $(v,V)$ a vector--subspace pair of another ambient vector space $𝕍$,
		then it holds that $\hdis(u⊗v,u⊗V+U⊗𝕍)≥\hdis(u,U)\hdis(v,V)$.
	\end{lem}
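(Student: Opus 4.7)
The plan is to view every element of $\mathbb{U}\otimes\mathbb{V}$ as an $|\mathbb{U}|\times|\mathbb{V}|$ matrix, so that a pure tensor $a\otimes b$ becomes the rank-one matrix with $(i,j)$-entry $a_ib_j$ and the Hamming weight of a tensor equals the number of nonzero entries of its matrix representation. Under this identification, $u\otimes V$ is the set of rank-one matrices $u(v')^{\top}$ with $v'\in V$, while the whole subspace $U\otimes\mathbb{V}$ is exactly the set of matrices whose every column lies in $U$ (since it is spanned by pure tensors $u''\otimes v''$, each of which has columns parallel to $u''\in U$).

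Given an arbitrary element of $u\otimes V+U\otimes\mathbb{V}$, I would write it as $u(v')^{\top}+w$ with $v'\in V$ and $w$ a matrix whose columns $w_{1},w_{2}\dotsc$ all lie in $U$. The residual to bound is then the matrix $u\tilde v^{\top}-w$ with $\tilde v\coloneqq v-v'$. Because $v'$ ranges over $V$, the definition of coset distance immediately gives $\hwt(\tilde v)\ge\hdis(v,V)$.

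The heart of the argument is a column-by-column count. The $j$-th column of $u\tilde v^{\top}-w$ equals $\tilde v_{j}u-w_{j}$. For each index $j$ with $\tilde v_{j}\ne 0$, this column is the nonzero scalar $\tilde v_{j}$ times $u-\tilde v_{j}^{-1}w_{j}$; since $\tilde v_{j}^{-1}w_{j}$ still lies in $U$, that difference has Hamming weight at least $\hdis(u,U)$, and multiplication by a nonzero scalar preserves Hamming weight. So every column indexed by the support of $\tilde v$ contributes at least $\hdis(u,U)$ nonzero entries, and there are at least $\hdis(v,V)$ such columns; summing gives the claimed lower bound $\hdis(u,U)\hdis(v,V)$.

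I do not foresee a serious obstacle; the only nuance is that an arbitrary element of $U\otimes\mathbb{V}$ need not be a single pure tensor, so the argument cannot proceed by picking a single $u''\otimes v''$ and must instead use the columns-in-$U$ characterization of the subspace. The edge cases $v\in V$ and $u\in U$ make the inequality trivial and need no special handling.
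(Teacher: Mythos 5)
Your proof is correct and takes essentially the same route as the paper: both view tensors as matrices, absorb the $u\otimes V$ part into a coset representative of weight at least $\hdis(v,V)$, and then count at least $\hdis(u,U)$ mismatches in each column indexed by that representative's support, using the fact that every column of a matrix in $U\otimes\mathbb{V}$ lies in $U$. The only cosmetic difference is that you spell out the scalar normalization $\tilde v_j^{-1}w_j\in U$ explicitly, where the paper simply compares "a nonzero multiple of $u$" against "a column vector in $U$".
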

	
	\begin{proof}
		View $u$, $v$ and $u⊗v$ as a
		column vector, a row vector, and a rank-$1$ matrix, respectively.
		Now we want to compute the Hamming distance
		from the matrix $u⊗v$ to a subset of matrices $u⊗V+U⊗𝕍$.
		This is equal to the Hamming distance from $u⊗(v+V)$ to $U⊗𝕍$.
		Let $v'∈v+V$ be any row vector (whose Hamming weight is at least $\hdis(v,V)$).
		It suffices to prove $\hdis(u⊗v',U⊗𝕍)≥\hdis(u,U)\hwt(v')$.
		
		Without loss of generality, assume that the first coordinate of $v'$ is nonzero.
		Then the first column of $u⊗v'$ is a nonzero multiple of $u$,
		whilst the first column of any matrix in $U⊗𝕍$ is a column vector in $U$.
		The number of mismatching entries in the first column is thus $≥\hdis(u,U)$.
		Repeat this argument for all columns where $v'$ is nonzero,
		then the total number of mismatching entries is
		$≥\hdis(u,U)\hwt(v')≥\hdis(u,U)\hdis(v,V)$.
		This is what we want to show.
	\end{proof}
	
	Remark:
	There is no practical reason to use a large matrix
	with poor coset distance profile to polarize channels.
	What \cref{lem:cube} is good for is to prove that
	easy-to-implement $2×2$ matrices such as
	\[\bma{
		1	&	0	\\
		c	&	1	
	}\]
	polarize channels, given that $c$ generates $𝔽_q$ over $𝔽_p$.
	
	Recap of this section:
	Up to this point, we have seen that
	an ergodic kernel $G$ will make $H(𝘞₁)$ non-constant,
	that the eigenvalue is $<1$ for mediocre channels,
	that the eigenvalue is $<1$ for reliable channels given $D_ZＷj≥5$,
	that the eigenvalue is $<1$ for noisy channels given $D_SＷj≥5$,
	and that the third Kronecker power $K^{⊗3}$ has sufficient coset distances.
	These result in $K^{⊗3}$ having eigenvalue (i.e., \cref{sup:dmc}) $<1$.
	The next theorem summarizes the eigen behavior of any ergodic kernel.
	
	\begin{thm}[From ergodic to eigen]\label{thm:dmc-eigen}
		If $K∈𝔽_q^{ℓ×ℓ}$ is an ergodic matrix, then $G≔K^{⊗3}$ is a kernel such that
		\[\sup_{0<H(W)<1}÷{h(H(WＷ1))+h(H(WＷ2))+\dotsb+h(H(WＷ{ℓ³}))}{ℓ³h(H(W))}<1,\]
		where the supremum is taken over all $q$-ary channels
		that are not completely noisy or entirely reliable.
	\end{thm}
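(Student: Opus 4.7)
The plan is to cover the open interval $0 < H(W) < 1$ by three overlapping regions---the reliable neighbourhood, the noisy neighbourhood, and the mediocre middle---bound the fraction on each region separately, and then take the maximum of the three suprema. Since we only need the final combined supremum to be strictly less than $1$, and the maximum of finitely many quantities each strictly less than $1$ is itself strictly less than $1$, this three-region decomposition suffices.

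The first step is to invoke \cref{lem:cube} on $K$ to obtain that $G = K^{\otimes 3}$ is ergodic and satisfies both $D_ZＷj \geq 8 \geq 5$ for some $j$ and $D_SＷj \geq 8 \geq 5$ for some $j$. These are exactly the hypotheses of \cref{lem:eigen-Z} and \cref{lem:eigen-S}, respectively. Applying \cref{lem:eigen-Z} to $G$ yields a constant $\delta_{\mathrm r} > 0$ for which the fraction in the theorem is strictly less than $1$ on $0 < H(W) < \delta_{\mathrm r}$, and applying \cref{lem:eigen-S} to $G$ yields a constant $\delta_{\mathrm n} > 0$ for which the fraction is strictly less than $1$ on $1 - \delta_{\mathrm n} < H(W) < 1$.

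Setting $\delta \coloneqq \min(\delta_{\mathrm r}, \delta_{\mathrm n})$ and invoking \cref{lem:eigen-H} with this $\delta$ (noting that $G$ is ergodic, so \cref{lem:moving} applies to the mediocre channels considered), we obtain a third bound strictly less than $1$ on $\delta \leq H(W) \leq 1 - \delta$. The union of these three regions is $(0, 1)$, so the full supremum in the theorem is bounded by the maximum of the three sub-suprema, which is strictly less than $1$.

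The only subtlety is that \cref{lem:eigen-H,lem:eigen-Z,lem:eigen-S} are stated for kernels in their own right, and we must verify that $G = K^{\otimes 3}$ inherits the necessary hypotheses---ergodicity for \cref{lem:eigen-H}, and the distance conditions for \cref{lem:eigen-Z,lem:eigen-S}---which is precisely the content of \cref{lem:cube}. I do not expect a genuine obstacle here; the work has already been isolated into those three lemmas, and the theorem is essentially a bookkeeping assembly of them. The one point worth guarding against is ensuring that the neighbourhoods really cover the full interval with no gap---choosing $\delta$ as the minimum above handles that automatically.
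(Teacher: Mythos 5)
Your proposal is correct and matches the paper's own argument: the paper presents this theorem as a summary assembly of \cref{lem:eigen-H,lem:eigen-Z,lem:eigen-S}, with \cref{lem:cube} supplying ergodicity and the coset-distance hypotheses ($\geq 8 \geq 5$) for $K^{\otimes 3}$, exactly the three-region covering and bookkeeping you describe. No gaps.
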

	
	The next section will take advantage of the eigen theorem to show that
	every ergodic kernel enjoys an en23 behavior with positive $ϱ$.

\section{En23 Behavior}\label{sec:en23}

	Fix a $q$-ary $W$ and an ergodic $G∈𝔽_q^{ℓ×ℓ}$.
	The latter could be a large matrix with sufficient coset distances
	or the cubic power of any ergodic matrix.
	Let the entropy process $\{𝘏_n\}$ be defined by $𝘏_n≔H(𝘞_n)$,
	where $\{𝘞_n\}$ is the channel process grown from $W$ via $G$.
	Let the Bhattacharyya process $\{𝘡_n\}$ be defined by $𝘡_n≔Ｚ(𝘞_n)$,
	We want to understand the asymptotic behavior of $\{𝘏_n\}$ and $\{𝘡_n\}$.
	
	To get the asymptotic behavior, \cref{lem:Z-en23} provides
	a template to translate an eigen behavior into an en23 behavior.
	The only difference is that there, the eigen behavior was given in terms of $Z$,
	and here, the eigen behavior is given in terms of $H$.
	This turns out to be ineffective to the proof;
	in fact, an eigen behavior can be given in terms of
	any parameter that is bi-Hölder to $H$ and $Z$.
	Beyond this, there is nothing new to comment on.
	Let us go straight toward the lemma.
	
	\begin{lem}[From eigen to en23]\label{lem:dmc-en23}
		If a kernel $G∈𝔽_q^{ℓ×ℓ}$ and a concave function $h$
		are such that $h(0)=h(1)=0$ and
		\[\sup_{0<H(W)<1}÷{h(H(WＷ1))+h(H(WＷ2))+\dotsb+h(H(WＷ{ℓ}))}{ℓh(H(W))}
			=ℓ^{-ϱn},\]
		then
		\[𝘗｛𝘡_n<e^{-n^{2/3}}｝>1-H(W)-ℓ^{-ϱn+o(n)}.\]
	\end{lem}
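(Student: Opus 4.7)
The plan is to mirror the template of \cref{lem:Z-en23}, with one twist: the eigen hypothesis is phrased on $H$ rather than $Z$, so I will run the whole analysis on the entropy process $\{\mathsf{H}_n\}$ and only translate back to $\{\mathsf{Z}_n\}$ at the end via the Hölder tolls of \cref{pro:im-toll,pro:ex-toll}. Concretely, the eigen bound, combined with the tower property of conditional expectation, telescopes to $\mathsf{E}[h(\mathsf{H}_n)] \leq h(H(W)) \ell^{-\varrho n}$. Markov's inequality then yields
\[\mathsf{P}\{\exp(-n^{2/3}) \leq \mathsf{H}_n \leq 1 - \exp(-n^{2/3})\} \leq \frac{h(H(W)) \ell^{-\varrho n}}{h(\exp(-n^{2/3}))} < \ell^{-\varrho n + o(n)},\]
since any reasonable concave $h$ with $h(0)=h(1)=0$ (e.g.\ $h(z)=\sqrt{\min(z,1-z)}$ as used in \cref{sec:eigen}) tends to zero at most polynomially at the endpoints. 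This is the analogue of \cref{ine:bdmc-expel}.

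Next I would identify $\mathsf{P}\{\mathsf{H}_n \to 0\} = 1 - H(W)$ through the now-standard dichotomy argument: geometric decay of $\mathsf{E}[h(\mathsf{H}_n)]$ forces $h(\mathsf{H}_n) \to 0$ almost surely, and since $h$ vanishes only at $\{0,1\}$, the process $\{\mathsf{H}_n\}$ accumulates only there. By \cref{thm:ftpcH}, $\{\mathsf{H}_n\}$ is a bounded martingale, so it converges almost surely to some $\mathsf{H}_\infty \in \{0,1\}$, and dominated convergence pins down $\mathsf{P}\{\mathsf{H}_\infty = 1\} = \mathsf{E}[\mathsf{H}_\infty] = H(W)$. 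Finally, the Hölder bridge of \cref{pro:ex-toll} gives $\mathsf{Z}_n \leq q^3 \sqrt{\mathsf{H}_n}$, so $\mathsf{H}_n \leq \exp(-4 n^{2/3})/q^6$ implies $\mathsf{Z}_n \leq \exp(-n^{2/3})$, and polynomial factors in $q$ and rescalings of the $n^{2/3}$ exponent are harmless because they are absorbed into the $o(n)$ term in $\ell^{-\varrho n + o(n)}$.

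The main obstacle, I expect, is the bad-event bound $\mathsf{P}(B_n)$ where $B_n \coloneqq \{\mathsf{H}_m \to 0 \text{ but } \mathsf{H}_n \geq \exp(-n^{2/3})\}$. Unlike the Bhattacharyya supermartingale in \cref{lem:Z-en23}, which cannot jump from a neighborhood of $0$ to a neighborhood of $1$, a martingale can oscillate, so a naive "must cross the mediocre window" argument is not quite free. The remedy is a union bound: either $\mathsf{H}_n$ already sits in the mediocre window (directly controlled by the Markov step), or $\mathsf{H}_n$ is close to $1$ and any trajectory eventually converging to $0$ must visit the mediocre window of threshold $\exp(-m^{2/3})$ at some later time $m \geq n$. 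Summing $\sum_{m \geq n} \ell^{-\varrho m + o(m)} < \ell^{-\varrho n + o(n)}$ closes the estimate. Chaining everything then gives $\mathsf{P}\{\mathsf{Z}_n < \exp(-n^{2/3})\} \geq \mathsf{P}\{\mathsf{H}_n \to 0\} - \mathsf{P}(B_n) > 1 - H(W) - \ell^{-\varrho n + o(n)}$, as claimed.
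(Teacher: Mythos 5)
Your proposal is correct and follows essentially the same route as the paper's proof: telescope the eigen bound on $h(𝘏_n)$, expel $𝘏_n$ from the mediocre window via Markov's inequality, identify $𝘗\{𝘏_n→0\}=1-H(W)$ by the bounded-martingale argument, control the bad event by a union bound over later visits to the window, and only then pay the Hölder toll to pass from $𝘏_n$ to $𝘡_n$. The only cosmetic difference is bookkeeping of the threshold (the paper runs the entropy process at $\exp(-n^{3/4})$ so the Hölder conversion lands at $\exp(-n^{2/3})$, while you rescale the $n^{2/3}$ threshold by constants), which is harmless for exactly the reason you give.
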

	
	\begin{proof}
		Telescope
		$𝘌[h(𝘏_n)]≤𝘌[𝘌[h(𝘏_n)｜𝘑₁𝘑₂\dotsm𝘑_{n-1}]]≤𝘌[h(𝘏_{n-1})]ℓ^{-ϱ}≤ℓ^{-ϱn}$.
		So $𝘏_n$ refuses to stay around the middle
		\begin{align*}
			\qquad&\kern-2em
			𝘗｛e^{-n^{3/4}}≤𝘏_n≤1-e^{-n^{3/4}}｝=𝘗｛h(𝘏_n)≥h\(e^{-n^{3/4}}\)｝	\\
			&	≤÷{𝘌[h(𝘏_n)]}{h(\exp(-n^{3/4}))}≤÷{h(𝘏₀)ℓ^{-ϱn}}{h(\exp(-n^{3/4}))}
				<÷{ℓ^{-ϱn}}{\exp(-n^{3/4})}<ℓ^{-ϱn+o(n)}.
		\end{align*}
		Next, recall that $𝘏_n→𝘏_∞∈\{0,1\}$ and $𝘗\{𝘏_n→0\}=1-𝘏_0=1-H(W)$.
		So
		\begin{align*}
			\qquad&\kern-2em
			𝘗｛𝘏_n<e^{-n^{3/4}}｝≥𝘗\{𝘏_n→0\}-𝘗｛𝘏_m→0† but †𝘏_n≥e^{-n^{3/4}}｝	\\
			&	=1-H(W)-𝘗｛𝘏_m† will visit †［e^{-m^{3/4}},1-e^{-m^{3/4}}］†
				for some †m≥n｝	\\
			&	>1-H(W)-∑_{m=n}^∞ℓ^{-ϱm+o(m)}=1-H(W)-ℓ^{-ϱn+o(n)}.
		\end{align*}
		Now pay the Hölder toll to translate the inequality into one about $𝘡_n$.
		That is, $𝘗\{𝘡_n<\exp(-n^{2/3})\}>1-H(W)-ℓ^{ϱn-o(n)}$.
		That is exactly the inequality we want.
	\end{proof}
	
	Before the section ends, I want to comment on
	possible sources of the eigen behavior.
	As one can see in \cref{cha:origin},
	the eigen behavior for BDMC and $\loll$ was given in terms of $Z$
	because it is easier to bound $Z$ back then (cf.\ \cref{lem:squares}).
	Here, the eigen behavior is given in terms of $H$ because it forms a martingale,
	and Jensen's inequality indicates that the eigenvalue is $≤1$.
	In the next chapter, readers will see $H$ again for the same reason.
	In general, eigen behavior can be stated in any parameter such that
	$\textsf{YourParameter}_n≤\exp(-n^{3/4})$ implies $𝘡_n≤\exp(-n^{2/3})$;
	because then the proof given above works.
	
	In the next section, we will go one step further to the een13 behavior
	that will be built on top of this section's en23 behavior,
	which was built on top of last section's ergodic behavior.
	The proof in the next section assumes the same format as that in \cref{lem:Z-een13}.

\section{Een13 Behavior}\label{sec:een13}

	The derivation of the een13 behavior from the en23 behavior uses
	the fact that $Ｚ(WＷj)$ is about $Ｚ(W)$ to the power of ${D_ZＷj}$.
	By that an ergodic kernel has nontrivial coset distance profile,
	at least one $D_ZＷj$ will square the $Ｚ(W)$.
	Now it is a matter of Hoeffding's inequality to control
	the frequency that a trajectory of $\{𝘡_n\}$ undergoes squaring.
	
	To facilitate the reasoning about coset distances,
	define the distance process $\{𝘋_n\}$ via $𝘋_n≔D_ZＷ{𝘑_n}$.
	Then, by FTPC$Z$,
	$𝘡_{n+1}≤q^{ℓ-𝘑_{n+1}}𝘡_n^{𝘋_{n+1}}≤q^ℓ𝘡_n^{𝘋_{n+1}}≈𝘡_n^{𝘋_{n+1}}$.
	And now we can telescope and talk about the product $𝘋₁𝘋₂\dotsm𝘋_n$.
	For instance, the product is $1$ with probability $𝘗\{𝘋₁=1\}^n$.
	This is the probability that $𝘡₀$--$𝘡_n$ do not undergo any squaring
	or higher powering, and we have no control on such $𝘡_n$
	(except the trivial one $𝘡_n≤q^{(ℓ-1)^n}𝘡₀$).
	Hence, I hope that $𝘗\{𝘋₁=1\}^n$ is dominated by
	the desired gap to capacity $ℓ^{-ϱn}$.
	Equivalently, I hope that $𝘗\{𝘋₁=1\}≤ℓ^{-ϱ}$.
	
	\begin{con}[Good $D$ implies good $ϱ$]
		For any kernel $G∈𝔽_q^{ℓ×ℓ}$,
		\[𝘗\{𝘋₁=1\}<ℓ^{-ϱ}.\]
	\end{con}
	
	Even if $𝘗\{𝘋₁=1\}≥ℓ^{-ϱ}$,
	we can still re-choose a lower $ϱ>0$ to satisfy $𝘗\{𝘋₁=1\}<ℓ^{-ϱ}$.
	This is one of the two new issues, in contrast to \cref{cha:origin},
	that we need to take care of in this section.
	The other new issue is that $\{𝘡_n\}$ is no longer a supermartingale.
	
	\begin{lem}[Artificial supermartingale]\label{lem:artifact}
		For any $ε>0$, there exist a smaller $ε>0$ and a small $δ>0$
		such that $\{𝘡_n^ε∧δ\}$ is a supermartingale.
		Here $𝘡_n^ε∧δ$ is a shorthand for $\min(𝘡_n^ε,δ)$.
	\end{lem}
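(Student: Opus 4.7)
The plan is to use \cref{thm:ftpcZ} to control $𝘌[Ｚ(WＷj)^ε]$, to exploit ergodicity of $G$ so that the resulting exponents tilt in our favour, and to let the clip $\wedge\,δ$ confine attention to the regime where the lowest-order terms of the coset weight enumerators dominate.

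First I would split on $𝘡_n^ε\wedge δ$. If $𝘡_n^ε\ge δ$, the clipped process equals $δ$ and $𝘌[𝘡_{n+1}^ε\wedge δ\mid 𝘞_n]\le δ$ is automatic because $𝘡_{n+1}^ε\wedge δ\le δ$ pointwise. In the substantive case $𝘡_n^ε<δ$, using $𝘡_{n+1}^ε\wedge δ\le 𝘡_{n+1}^ε$ and setting $z≔Ｚ(W)$, the supermartingale inequality reduces to
\[\frac{1}{ℓ}\sum_{j=1}^{ℓ}Ｚ(WＷj)^ε \le z^ε
\]
for every $q$-ary $W$ with $z<δ^{1/ε}$.

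Next I would combine \cref{thm:ftpcZ} with the crude coefficient bound $f_{ZＷj}(z)\le q^{ℓ-j}z^{D_{ZＷj}}$, valid on $[0,1]$ because each of the $q^{ℓ-j}$ codewords in the $j$th coset has weight at least $D_{ZＷj}$. This gives
\[\frac{1}{ℓ}\sum_{j=1}^{ℓ}Ｚ(WＷj)^ε \le \frac{z^ε}{ℓ}\sum_{j=1}^{ℓ}q^{ε(ℓ-j)}z^{ε(D_{ZＷj}-1)}.
\]
Partition the indices into $S_1≔\{j:D_{ZＷj}=1\}$ and $S_2≔\{j:D_{ZＷj}\ge 2\}$. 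Ergodicity of $G$ forces $|S_2|\ge 1$: if $(j,i)$ with $j>i$ is any position where the lowered form of $G$ has a nonzero off-diagonal entry, then row $r_j$ carries two nonzero entries within coordinates $1,\dotsc,j$, whereas every element of $R_j$ is supported on coordinates $>j$, forcing $D_{ZＷj}\ge 2$. Hence $|S_1|\le ℓ-1$. For $j\in S_2$ we have $z^{ε(D_{ZＷj}-1)}\le z^ε<δ$; for $j\in S_1$ that factor equals $1$. Bounding $q^{ε(ℓ-j)}\le q^{ε(ℓ-1)}$, the inner sum is at most $q^{ε(ℓ-1)}\bigl(|S_1|+δ\,|S_2|\bigr)$.

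Finally I would fix $δ>0$ small enough that $|S_1|+δ\,|S_2|<ℓ$ (possible because $|S_1|<ℓ$), and then shrink $ε>0$ until $q^{ε(ℓ-1)}\bigl(|S_1|+δ\,|S_2|\bigr)\le ℓ$. The displayed inequality then becomes $\frac{1}{ℓ}\sum_jＺ(WＷj)^ε\le z^ε$, and the supermartingale property follows. The main obstacle—and the reason the $\wedge\,δ$ truncation is indispensable—is that the bound $f_{ZＷj}(z)\le q^{ℓ-j}z^{D_{ZＷj}}$ is only informative near $z=0$; as $z\to 1$ the high-order terms of $f_{ZＷj}$ swamp any power law, so no choice of $ε$ alone can make $\{𝘡_n^ε\}$ itself a global supermartingale. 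Truncating at $δ$ restricts the nontrivial analysis to precisely the neighbourhood of $0$ in which the low-order monomial dominates.
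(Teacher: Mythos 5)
Your proposal is correct and follows essentially the same route as the paper's proof: reduce to the one-step bound, let the clip at $\delta$ make the case $Z_n^{\varepsilon}\ge\delta$ trivial, apply FTPC$Z$ in the crude form $q^{\ell-j}z^{D_Z^{(j)}}$, split the indices according to $D_Z^{(j)}=1$ versus $D_Z^{(j)}\ge 2$, absorb the second group using $z^{\varepsilon}<\delta$, and then tune $\varepsilon$ and $\delta$ using the fact that not every coset distance equals $1$ (the paper imports this as the standing bound $P\{D_1=1\}<\ell^{-\varrho}<1$, whereas you rederive it from ergodicity). The one spot to repair is that rederivation: for an \emph{arbitrary} row $j$ of the lowered form carrying a nonzero off-diagonal entry, the rows beneath are only lower triangular, not diagonal, so the span $R_j$ need not be supported on coordinates $>j$ and $D_Z^{(j)}$ can in fact be $1$ (e.g.\ rows $(c,1,0)$ and $(c,1,1)$ differ in a single coordinate); taking $j$ to be the \emph{last} row with a nonzero off-diagonal entry makes the rows beneath standard basis vectors and the claim holds, exactly as in the first bullet of the proof of \cref{lem:cube}. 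With that one-line fix (and the tacit fact that a kernel and its lowered form share the same coset distance profile), your choice of constants delivers the supermartingale property just as the paper's does.
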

	
	\begin{proof}
		It suffices to pick $ε$ and $δ$ such that $𝘌[𝘡₁^ε∧δ]≤𝘡₀^ε∧δ$ for any $W$.
		If $𝘡₀^ε≥δ$, there is nothing to prove;
		thus we may assume that $𝘡₀^ε<δ$.
		Start from $𝘗\{𝘋₁=1\}<ℓ^{-ϱ}<1$.
		Then either $q^{ℓε}𝘗\{𝘋₁=1\}<1$ or
		we can reselect a smaller $ε>0$ to make it true.
		We then choose a small $δ>0$ such that $q^{ℓε}\*𝘗\{𝘋₁=1\}+q^{ℓε}δ𝘗\{𝘋₁≥2\}≤1$.
		Now upper bound the conditional expectation by handling the two cases:
		\begin{align*}
			𝘌[𝘡₁^ε∧δ]
			&	≤(q^ℓ𝘡₀)^ε𝘗\{𝘋₁=1\}+(q^ℓ𝘡₀²)^ε𝘗\{𝘋₁≥2\}	\\
			&	=𝘡₀^ε(q^{ℓε}𝘗\{𝘋₁=1\}+q^{ℓε}𝘡₀^ε𝘗\{𝘋₁≥2\})≤𝘡₀^ε·1=𝘡₀^ε∧δ.
		\end{align*}
		This finishes the proof of $𝘌[𝘡₁^ε∧δ]≤𝘡₀$.
		By the tree structure of the process,
		$𝘌[𝘡_{n+1}｜𝘑₁𝘑₂\dotsm𝘑_n]$ versus $𝘡_n$ is just $𝘌[𝘡₁]$ vs $𝘡₀$ with $W←𝘞_n$.
		This completes the proof of the whole lemma.
	\end{proof}
	
	Remark:
	The choice of $δ$ in the lemma implies that $q^{ℓε}δ$ is less than
	$(1-q^{ℓε}𝘗\{𝘋₁=1\})/𝘗\{𝘋₁≥2\}<1$, or equivalently $q^{ℓε}≤δ^{-1}$.
	Thus $𝘡_{n+1}≤q^ℓ𝘡_n^{𝘋_{n+1}}≤δ^{-1/ε}𝘡_n^{𝘋_{n+1}}≤𝘡_n^{𝘋_{n+1}-ε}$
	whenever $𝘡_n^ε<δ$.
	In other words, when we look at $𝘡_n^ε$ in the “safe zone” $[0,δ]$,
	not only is $𝘡_n^ε$ a supermartingale, but FTPC$Z$ also takes a simpler form
	$𝘡_{n+1}≤𝘡_n^{𝘋_{n+1}-ε}≤𝘡_n^{𝘋_{n+1}(1-ε)}$ that facilitates telescoping.
	
	The main statement of the een13 behavior follows.
	
	\begin{lem}[From en23 to een13]\label{lem:dmc-een13}
		Given $𝘗\{𝘋₁=1\}<ℓ^{-ϱ}$ and \cref{lem:dmc-en23}, that is, given
		\[𝘗｛𝘡_n<e^{-n^{2/3}}｝>1-H(W)-ℓ^{-ϱn+o(n)},\]
		we have
		\[𝘗｛𝘡_n<\exp\(-e^{n^{1/3}}\)｝>1-H(W)-ℓ^{-ϱn+o(n)}.\label{ine:dmc-een13}\]
	\end{lem}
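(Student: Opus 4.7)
The plan is to mimic the argument of \cref{lem:Z-een13} with two modifications tailored to the general-kernel setting: replace the binary variable $𝘑_n∈\{1,2\}$ by the distance variable $𝘋_n∈\{1,2…ℓ\}$, and replace the supermartingale $\{𝘡_n\}$ (no longer available in general) by the artificial supermartingale $\{𝘡_n^ε∧δ\}$ supplied by \cref{lem:artifact}. FTPC$Z$ gives $𝘡_{n+1}≤q^ℓ𝘡_n^{𝘋_{n+1}}$, so branches with $𝘋_{n+1}≥2$ play the role of ``squaring'' while branches with $𝘋_{n+1}=1$ play the role of ``doubling''; the hypothesis $𝘗\{𝘋_1=1\}<ℓ^{-ϱ}$ is precisely what guarantees that squaring-type branches occur frequently enough to drive $𝘡_n$ from $e^{-n^{2/3}}$ down to $\exp\(-e^{n^{1/3}}\)$.

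Assuming $n$ is a perfect square, I would copy the checkpoint scheme of \cref{lem:Z-een13}: at each $m=√n,2√n…n-√n$, let $𝘈_m≔\{𝘡_m<e^{-m^{2/3}}\}、𝘌₀^{m-√n}$ be the watch list, let $𝘉_m⊆𝘈_m$ be the ``escape from the safe zone'' event $\{𝘡_l^ε≥δ† for some †l≥m\}$, and let $𝘊_m⊆𝘈_m$ be the ``too few amplifications'' event $\{𝘋_{m+1}𝘋_{m+2}\dotsm 𝘋_{m+√n}≤θ^{√n}\}$ for a constant $θ>1$ to be chosen. Ville's inequality for $\{𝘡_l^ε∧δ\}$ gives $𝘣_m/𝘢_m≤δ^{-1}e^{-εm^{2/3}}$, which is doubly exponentially small in $m$. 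Markov's inequality on the i.i.d.\ product of reciprocals gives $𝘤_m/𝘢_m≤\bigl(θ·𝘌[𝘋_1^{-1}]\bigr)^{√n}$; because $𝘗\{𝘋_1=1\}<1$ forces $𝘌[𝘋_1^{-1}]<1$, one can pick $θ>1$ with $θ·𝘌[𝘋_1^{-1}]<1$, so this factor decays like $2^{-Ω(√n)}$. With $𝘨_m≔1-H(W)-𝘦₀^m$ and the en23 hypothesis recast as $(𝘨_{m-√n}-𝘢_m)^+<ℓ^{-ϱm+o(m)}$, the telescope of \cref{ine:engine-een13} carries over verbatim to yield the recurrence $𝘨_m≤𝘨_{m-√n}^+·2^{-Ω(√n)}+ℓ^{-ϱm+o(m)}$, whose master-theorem solution is $𝘨_{n-√n}<ℓ^{-ϱn+o(n)}$.

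On the good event $𝘌_m≔𝘈_m、(𝘉_m∪𝘊_m)$, the failure of $𝘉_m$ implies $𝘡_l^ε<δ$ for every $l≥m$, which by the remark after \cref{lem:artifact} upgrades FTPC$Z$ to $𝘡_{l+1}≤𝘡_l^{𝘋_{l+1}(1-ε')}$ for a small $ε'>0$ with $ε'<1-1/θ$; telescoping and using the failure of $𝘊_m$ makes the exponent multiply by at least $(θ(1-ε'))^{√n}=2^{Ω(√n)}$, so $𝘡_{m+√n}<\exp\(-m^{2/3}·2^{Ω(√n)}\)$, and the crude bound $𝘡_{l+1}≤q^ℓ𝘡_l$ over the remaining $n-m-√n$ steps costs only $q^{ℓn}$, which the doubly exponential gain swallows. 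This gives $𝘡_n<\exp\(-e^{n^{1/3}}\)$ on $𝘌₀^{n-√n}$, and the non-square case is handled by rounding down to $⌊√n⌋²$ as in \cref{lem:Z-een13}. The main obstacle I anticipate is calibrating the four small parameters $ε,δ,θ,ε'$ so that $\{𝘡_n^ε∧δ\}$ is a genuine supermartingale, $θ·𝘌[𝘋_1^{-1}]<1$, and $θ(1-ε')>1$ hold simultaneously while the $o(n)$ terms from Ville and Markov are truly absorbed into $ℓ^{-ϱn+o(n)}$; the two-sided squeeze between $θ$ and $1/(1-ε')$ is available precisely because $𝘗\{𝘋_1=1\}<ℓ^{-ϱ}$ leaves strictly positive slack in $𝘌[𝘋_1^{-1}]<1$.
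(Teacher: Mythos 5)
Your overall architecture (watch-list events, Ville's inequality through the artificial supermartingale of \cref{lem:artifact}, the telescoped recurrence, the non-square workaround) is exactly the paper's, but there is one genuine gap: your bound on $𝘤_m/𝘢_m$. You apply Markov to the \emph{first} negative moment and get $𝘤_m/𝘢_m≤\(θ\,𝘌[𝘋_1^{-1}]\)^{√n}$ with a constant base $c≔θ\,𝘌[𝘋_1^{-1}]<1$. But $𝘌[𝘋_1^{-1}]≥\tfrac12𝘗\{𝘋_1≥2\}$ can be a fixed constant (say $1/4$) no matter how small $ℓ^{-ϱ}$ is, so $c$ is in general only a constant in $(ℓ^{-ϱ},1)$, not below $ℓ^{-ϱ}$. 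Feeding $𝘨_m≤𝘨_{m-√n}^+c^{√n}+ℓ^{-ϱm+o(m)}$ into the recurrence, the homogeneous part contributes $c^{\,n-o(n)}$ after the $≈√n$ checkpoints, and whenever $c>ℓ^{-ϱ}$ (e.g.\ $ℓ$ large and $ϱ$ near $1/2$, the very regime \cref{cha:random} cares about) this term dominates $ℓ^{-ϱn}$, so the solution is only $c^{\,n-o(n)}$ and the claimed gap $ℓ^{-ϱn+o(n)}$ in \cref{ine:dmc-een13} does not follow. Your remark that ``$𝘗\{𝘋_1=1\}<1$ forces $𝘌[𝘋_1^{-1}]<1$'' throws away the quantitative content of the hypothesis.

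The fix is where the hypothesis $𝘗\{𝘋₁=1\}<ℓ^{-ϱ}$ is actually spent: since $𝘌[𝘋₁^{-λ}]→𝘗\{𝘋₁=1\}$ as $λ→∞$, choose $λ$ large enough that $𝘌[𝘋₁^{-λ}]<ℓ^{-ϱ}$, then a small $ε>0$ with $𝘌[𝘋₁^{-λ}]8^{ελ}<ℓ^{-ϱ}$, and define $𝘊_m$ by the threshold $8^{ε√n}$. Markov applied to $(𝘋_{m+1}\dotsm𝘋_{m+√n})^{-λ}$ then gives $𝘤_m/𝘢_m<ℓ^{-ϱ√n}$, which is exactly the per-checkpoint decay the recurrence needs to output $ℓ^{-ϱn+o(n)}$; your ``analyze $𝘌₀^{n-√n}$'' step goes through unchanged with this threshold because $8^{ε√n}(1-ε)^{√n}>2^{ε√n}$ still grows doubly-exponentially against $q^{ℓn}$. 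With that single substitution (large-$λ$ moment plus the $8^{ε√n}$ cutoff in place of your constant $θ$ and $λ=1$) your argument coincides with the paper's proof.
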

	
	\begin{proof}
		(Select constants.)
		Since $𝘗\{𝘋₁=1\}<ℓ^{-ϱ}$,
		there exists a large $λ>1$ such that $𝘌[𝘋₁^{-λ}]<ℓ^{-ϱ}$.
		Pick a small $ε>0$ such that $𝘌[𝘋₁^{-λ}]8^{ελ}<ℓ^{-ϱ}$.
		Invoke \cref{lem:artifact};
		pick a smaller $ε>0$ and a small $δ>0$
		such that $\{𝘡_n^ε∧δ\}$ is a supermartingale.
		According to the proof of \cref{lem:artifact} (and the remark underneath),
		this $δ$ is such that $𝘡_{n+1}≤𝘡^{𝘋_{n+1}-ε}$ whenever $Z_n^ε<δ$.
		
		(Define events.)
		Consider only perfect square $n$.
		(If it is not the case, follow the workaround in \cref{lem:Z-en23}.)
		Let $𝘌₀⁰$ be the empty event.
		For every $m=√n,2√n…n-√n$, we define five series of events
		$𝘈_m$, $𝘉_m$, $𝘊_m$, $𝘌_m$, and $𝘌₀^m$ inductively as below:
		Let $𝘈_m$ be $\{𝘡_m<\exp(-m^{2/3})\}、𝘌₀^{m-√n}$.
		Let $𝘉_m$ be a subevent of $𝘈_m$ where $𝘡_l^ε≥δ$ for some $l≥m$.
		Let $𝘊_m$ a subevent of $𝘈_m$ where
		\[𝘋_{m+1}𝘋_{m+2}\dotsm𝘋_{m+√n}≤8^{ε√n}.\label{ine:mgf-8en}\]
		Let $𝘌_m$ be $𝘈_m、(𝘉_m∪𝘊_m)$.
		Let $𝘌₀^m$ be $𝘌₀^{m-√n}∪𝘌_m$.
		Let $𝘢_m$, $𝘣_m$, $𝘤_m$, $𝘦_m$, and $𝘦₀^m$
		be the probability measures of the corresponding capital letter events.
		Moreover, let $𝘨_m$ be $1-H(W)-𝘦₀^m$.
		
		(Bound $𝘣_m/𝘢_m$ from above.)
		Conditioning on $𝘈_m$, I want to estimate
		the probability that $𝘡_l^ε≥δ$ for some $l≥m$.
		Recall that $\{𝘡_l^ε∧δ\}$ is made a supermartingale.
		By Ville's inequality,
		$𝘗\{𝘡_l^ε≥δ$ for some $l≥m｜𝘈_m\}≤𝘡_m^ε/δ<\exp(-m^{2/3}ε)/δ$.
		This is an upper bound on $𝘣_m/𝘢_m$
		and will be summoned in \cref{ine:core-een13}.
		
		(Bound $𝘤_m/𝘢_m$ from above.)
		I am to estimate how frequently \cref{ine:mgf-8en} happens.
		It is the probability of  $(𝘋_{m+1}𝘋_{m+2}\dotsm𝘋_{m+√n})^{-λ}≥8^{-ελ√n}$.
		This probability must not exceed
		$𝘌[(𝘋_{m+1}𝘋_{m+2}\dotsm𝘋_{m+√n})^{-λ}]\*8^{ελ√n}
			=𝘌[𝘋₁^{-λ}]^{√n}8^{ελ√n}=(𝘌[𝘋₁^{-λ}]8^{ελ})^{√n}<ℓ^{-ϱ√n}$
		by Markov's inequality.
		This is an upper bound on $𝘤_m/𝘢_m$
		and will be summoned in \cref{ine:core-een13}.
		
		(Bound $(𝘨_{m-√n}-𝘢_m)^+$ from above.)
		By definitions, $𝘨_{m-√n}-𝘢_m=1-H(W)-(𝘦₀^{m-√n}+𝘢_m)$.
		The definition of $𝘈_m$ forces it to be disjoint from $𝘌₀^{m-√n}$,
		thus $𝘦₀^{m-√n}+𝘢_m$ is the probability measure of $𝘌₀^{m-√n}∪𝘈_m$.
		This union event must contain the event
		$\{𝘡_m<\exp(-m^{2/3})\}$ by how $𝘈_m$ was defined.
		Recall the en23 behavior $𝘗\{𝘡_m<\exp(-m^{2/3})\}>1-H(W)-ℓ^{-ϱm+o(m)}$.
		Chaining all inequalities together, we deduce $𝘨_{m-√n}-𝘢_m<ℓ^{-ϱm+o(m)}$.
		Let $(𝘨_{m-√n}-𝘢_m)^+$ be $\max(0,𝘨_{m-√n}-𝘢_m)$
		so we can write $(𝘨_{m-√n}-𝘢_m)^+<ℓ^{-ϱm+o(m)}$.
		This upper bound will be summoned in \cref{ine:core-een13}.
		
		(Bound $𝘦₀^{n-√n}$ from below.)
		We start rewriting $𝘨_m$ with $𝘨_m^+$ being $\max(0,𝘨_m)$:
		\begin{align*}
			𝘨_m
			&	=1-H(W)-𝘦₀^m=1-H(W)-(𝘦₀^{m-√n}+𝘦_m)	\\
			&	=1-H(W)-𝘦₀^{m-√n}-𝘦_m=𝘨_{m-√n}-𝘦_m	\\
			&	=𝘨_{m-√n}（1-÷{𝘦_m}{𝘢_m}）+÷{𝘦_m}{𝘢_m}(𝘨_{m-√n}-𝘢_m)	\\
			&	≤𝘨_{m-√n}^+（1-÷{𝘦_m}{𝘢_m}）+÷{𝘦_m}{𝘢_m}(𝘨_{m-√n}-𝘢_m)^+	\\
			&	≤𝘨_{m-√n}^+（1-÷{𝘦_m}{𝘢_m}）+(𝘨_{m-√n}-𝘢_m)^+	\\
			&	≤𝘨_{m-√n}^+（÷{𝘣_m}{𝘢_m}+÷{𝘤_m}{𝘢_m}）+(𝘨_{m-√n}-𝘢_m)^+	\\
			&	<𝘨_{m-√n}^+（e^{-m^{2/3}ε}/δ+ℓ^{-ϱ√n}）+ℓ^{-ϱm+o(m)}.
				\label{ine:core-een13}
		\end{align*}
		The first four equalities are by the definitions of $𝘨_m$ and $𝘌₀^m$.
		The next equality is by elementary algebra.
		The next two inequalities are by $0≤𝘦_m/𝘢_m≤1$.
		The next inequality is by the definition of $𝘌_m$.
		The last inequality summons upper bounds derived in the last three paragraphs.
		The last line contains two terms in the big parentheses;
		between them, $ℓ^{-ϱ√n}$ dominates $\exp(-m^{2/3}ε)/δ$
		once $m$ is greater than $O(n^{3/4})$.
		Subsequently, we obtain this recurrence relation:
		\[\cas{
			𝘨_{O(n^{3/4})}≤1,	\\
			𝘨_m≤2𝘨_{m-√n}^+ℓ^{-ϱ√n}+ℓ^{-ϱm+o(m)}.
		}\]
		Solve it (cf.\ the master theorem);
		we get that $𝘨_{n-√n}<ℓ^{-ϱn+o(n)}$.
		By the definition of $𝘨_{n-√n}$,
		we immediately get $𝘦₀^{n-√n}>1-H(W)-ℓ^{-ϱn+o(n)}$.
		
		(Analyze $𝘌₀^{n-√n}$.)
		We want to estimate $𝘏_n$ when $𝘌₀^{n-√n}$ happens.
		To be precise, for each $m=√n,2√n…n-√n$,
		we attempt to bound $𝘡_{m+√n}$ when $𝘌_m$ happens.
		Fix an $m$.
		When $𝘌_m$ happens, its superevent $𝘈_m$ happens,
		so we know that $𝘡_m<\exp(-m^{2/3})$.
		But $𝘉_m$ does not happen, so $𝘡_l^ε<δ$ for all $l≥m$.
		This implies that $𝘡_{l+1}≤𝘡_l^{𝘋_{l+1}(1-ε)}$ for those $l$.
		Telescope;
		$𝘡_{m+√n}$ is less than or equal to $𝘡_m$ raised to
		the power of $𝘋_{m+1}𝘋_{m+2}\dotsm𝘋_{m+√n}(1-ε)^{√n}$.
		But $𝘊_m$ does not happen, so the product is greater than
		$8^{ε√n}(1-ε)^{√n}=(8√[ε]{1-ε})^{ε√n}$.
		The latter is greater than $2^{ε√n}$ granted that $ε<1/2$.
		Jointly we have $𝘡_{m+√n}≤𝘡_m^{2^{ε√n}}<\exp(-m^{2/3}2^{ε√n})$.
		Recall that $𝘡_{l+1}≤q^ℓ𝘡_l$ for all $l≥m+√n$.
		Then telescope again;
		$𝘡_n≤q^{ℓ(n-m-√n)}𝘡_{m+√n}<q^{ℓn}\exp(-m^{2/3}2^{ε√n})<\exp\(-e^{n^{1/3}}\)$
		provided that $n$ is sufficiently large.
		In other words, $𝘌₀^{n-√n}$ implies $𝘡_n<\exp\(-e^{n^{1/3}}\)$.
		
		(Summary.)
		Now we may conclude
		$𝘗｛𝘡_n<\exp\(-e^{n^{1/3}}\)｝≥𝘗(𝘌₀^{n-√n})=𝘦₀^n>1-H(W)-ℓ^{-ϱn+o(n)}$.
		And hence the proof of the een13 behavior is sound.
	\end{proof}
	
	In the next section, I will show the elpin behavior of $\{𝘡_n\}$.
	That will imply the MDP behavior of polar coding and
	the latter will specialize to the LDP and CLT behaviors of polar coding.
	In contract to this section, where we used
	$𝘗\{𝘋_n=1\}$ and $𝘗\{𝘋_n≥2\}$ to determine the behavior of $\{𝘡_n\}$,
	we will have to use the “full power” of $\{𝘋_n\}$ in what follows.

\section{Elpin Behavior}\label{sec:elpin}

	As the een13 behavior lowers $𝘡_m$ to the order of $\exp\(-e^{m^{1/3}}\)$,
	we can afford telescoping $𝘡_{l+1}≤𝘡_l^{𝘋_{l+1}-ε}$
	for more $l$'s without having to worry about $𝘡_l^ε≥δ$.
	So the trade-off between the block error probability ($≈∑𝘡_n$) and
	the code rate ($≈𝘗\{𝘡_n<⋯\}$) amounts to the behavior of the product
	$𝘋_{m+1}𝘋_{m+2}\dotsm𝘋_n$ or, as probability theorists may prefer,
	the behavior of the i.i.d.\ sum $㏒_ℓ𝘋_{m+1}+㏒_ℓ𝘋_{m+2}+\dotsb+㏒_ℓ𝘋_n$.
	
	As \cref{sec:regime} suggests,
	the limiting behavior of i.i.d.\ sums is well known.
	In particular, I will borrow the LDP result that concerns
	the limiting behavior of $𝘗\{¯X_n<x\}$ when $x$ is fixed and $n$ increases.
	First goes some definitions.
	
	\begin{dfn}
		The \emph{cumulant generating function} of $㏒_ℓ𝘋₁$ is defined to be
		the logarithm of its moment generating function or, more precisely,
		\[𝘒(t)≔㏒_ℓ(𝘌[𝘋₁^{t}])=÷1{㏑ℓ}㏑∑_{j=1}^ℓ÷{𝘋₁^{t}}{ℓ}.\]
	\end{dfn}
	
	\begin{dfn}
		The \emph{Cramér function} of $㏒_ℓ𝘋₁$ is defined to be
		the convex conjugate of the cumulant generating function or, more precisely,
		\[𝘓(s)≔\sup_{t≤0}st-𝘒(t).\]
	\end{dfn}
	
	$𝘓(s)$ controls the limiting law of
	the sum of $㏒_ℓ𝘋_l$ or the product of $𝘋_l$.
	An example (which we have seen) is that, when $𝘋_l∈\{1,2\}$,
	its logarithm is just a coin toss, and $1-h₂$ controls the limiting law.
	
	\begin{thm}[Cramér's theorem]\label{thm:cramer}
		For any $n$,
		\[𝘗\{𝘋₁𝘋₂\dotsm𝘋_n≤ℓ^{sn}\}≤ℓ^{-n𝘓(s)}.\]
		Furthermore, $𝘓(s)$ is the greatest value satisfying that, i.e.,
		\[\lim_{n→∞}÷1n㏒_ℓ𝘗\{𝘋₁𝘋₂\dotsm𝘋_n<ℓ^{sn}\}=-𝘓(s).\]
	\end{thm}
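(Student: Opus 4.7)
The plan is to split Cramér's theorem into an upper bound, which is the one-line Chernoff estimate, and a matching lower bound, which requires an exponential-tilting argument.

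For the upper bound, I would apply Markov's inequality to the random variable $(\mathsf D_1\mathsf D_2\dotsm\mathsf D_n)^{t}$ for arbitrary $t\leq 0$. Because $t\leq 0$, the event $\{\mathsf D_1\dotsm\mathsf D_n\leq\ell^{sn}\}$ equals $\{(\mathsf D_1\dotsm\mathsf D_n)^{t}\geq\ell^{snt}\}$, so Markov gives
\[
𝘗\{\mathsf D_1\dotsm\mathsf D_n\leq\ell^{sn}\}
\leq \ell^{-snt}\,𝘌[\mathsf D_1^{t}]^{n}
= \ell^{-n(st-𝘒(t))}
\]
by independence and the very definition of $𝘒$. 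Taking the infimum over $t\leq 0$ converts the exponent into $-𝘓(s)$ and produces the stated bound. This direction uses nothing more than independence and does not even require finiteness of the alphabet beyond the fact that each $𝘒(t)$ is finite for all $t$, which is automatic here because $\mathsf D_1$ takes values in the finite set $\{D_ZＷ1,\dots,D_ZＷ{\ell}\}\subset\mathbb Z_{>0}$.

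For the matching lower bound I would carry out the standard change-of-measure (Cramér tilting) argument. Because $𝘒$ is a finite sum of exponentials in $t$, it is smooth and strictly convex on all of $\mathbb R$. For each $s$ in the interior of the support of $㏒_\ell\mathsf D_1$ one can solve $𝘒'(t^\star)=s$ for some $t^\star\leq 0$, and the supremum in the definition of $𝘓(s)$ is attained at that $t^\star$. Introduce the tilted law $\tilde\mathsf P$ under which $\mathsf D_l$ are still i.i.d.\ but with probability mass $\tilde p_j\propto \ell^{t^\star\,㏒_\ell D_ZＷj}$. Under $\tilde\mathsf P$ the expectation of $㏒_\ell\mathsf D_1$ equals exactly $s$ by the choice of $t^\star$. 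Writing
\[
𝘗\{\mathsf D_1\dotsm\mathsf D_n\leq\ell^{sn}\}
=\tilde𝘌\Bigl[\prod_{l=1}^{n}\Bigl(\tfrac{\mathsf D_l^{\,0}}{\mathsf D_l^{\,t^\star}}\ell^{𝘒(t^\star)}\Bigr)\mathsf 1\{\mathsf D_1\dotsm\mathsf D_n\leq\ell^{sn}\}\Bigr],
\]
I would restrict the indicator to the smaller event $\{\ell^{sn-\delta n}\leq\mathsf D_1\dotsm\mathsf D_n\leq\ell^{sn}\}$ on which the Radon–Nikodym factor is bounded below by $\ell^{-n(st^\star-𝘒(t^\star))-O(\delta n)}$, and then use the weak law of large numbers under $\tilde\mathsf P$ to show this restricted event has $\tilde\mathsf P$-probability tending to $1$. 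Sending $\delta\downarrow0$ yields $\liminf n^{-1}㏒_\ell𝘗\{\mathsf D_1\dotsm\mathsf D_n\leq\ell^{sn}\}\geq -𝘓(s)$, matching the upper bound.

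The only step that requires genuine care is the lower-bound tilting, specifically handling the endpoints. For $s$ below the minimum of $㏒_\ell D_ZＷj$ both sides are $-\infty$ vacuously; for $s\geq 𝘌[㏒_\ell\mathsf D_1]$ one gets $𝘓(s)=0$ by taking $t=0$ and the lower bound is trivial by LLN; only the interior of $[\min_j㏒_\ell D_ZＷj,\ 𝘌[㏒_\ell\mathsf D_1]]$ needs the tilting machinery, and there the minimizer $t^\star$ exists and is finite thanks to smoothness and strict convexity of $𝘒$. The main conceptual obstacle, then, is just making the tilting step rigorous; every other piece is either Markov's inequality or the finite-alphabet LLN. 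I would also remark that, since $\mathsf D_1$ is bounded, stronger Cramér–Chernoff statements (including uniform-in-$s$ control) are available and will be reused in the next section when converting the MDP-style events to actual bounds on $𝘡_n$.
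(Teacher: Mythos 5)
Your upper bound is exactly the paper's argument: Markov's inequality applied to $(𝘋_1\dotsm 𝘋_n)^t$ with $t\le 0$, using independence and the definition of $𝘒$, then optimizing over $t$ to get the exponent $-𝘓(s)$. The matching lower bound via exponential tilting, which you sketch carefully, is precisely the part the paper declares irrelevant to its later use and omits with a citation to a standard text, so your proposal is correct and essentially coincides with (indeed slightly exceeds) the paper's proof.
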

	
	\begin{proof}
		What I actually need in the sequel is the upper bound on $𝘗$.
		The fact that $𝘓(s)$ gives the best upper bound is irrelevant to
		the validity of the theorems in this chapter,
		but indicates that the theorems are somewhat optimal.
		The proof of the latter is thus omitted;
		see standard textbooks instead (e.g., \cite{DZ10}).
		
		To prove the upper bound, recall that $𝘓(s)$ is a supremum.
		This supremum, by some compactness argument, turns out to be a maximum.
		Let $t$ be the argument that maximizes $st-𝘒(t)$.
		Then $𝘋₁𝘋₂\dotsm𝘋_n≤ℓ^{sn}$ is equivalent to $(𝘋₁𝘋₂\dotsm𝘋_n)^t≥ℓ^{stn}$.
		Apply Markov's inequality, the probability that
		$𝘋₁^t𝘋₂^t\dotsm𝘋_n^t≥ℓ^{stn}$ is at most
		$𝘌[𝘋₁^t𝘋₂^t\dotsm𝘋_n^t]ℓ^{-stn}=(𝘌[𝘋₁^t]ℓ^{-st})^n=ℓ^{(𝘒(t)-st)n}=ℓ^{-n𝘓(s)}$
	\end{proof}
	
	Now we define the achievable region in terms of $𝘓(s)$;
	it ought to be the set of $(π,ρ)$ pairs such that
	$𝘗\{𝘡_n<\exp(-ℓ^{πn})\}>1-H(W)+ℓ^{-ρn}$.
	Consider $ϖ$ being the mean of $㏒_ℓ𝘋₁$.
	We know that, most of the times, i.i.d.\ averages of $㏒_ℓ𝘋_l$
	concentrate around the mean so we should not expect that $π>ϖ$.
	Similarly, we do not expect that $ρ>ϱ$.
	The remaining characterization of $(π,ρ)$ is in terms of $𝘓(s)$.
	
	\begin{dfn}
		Let $𝒪⊆[0,ϖ]×[0,ϱ]$ be an open region
		defined by the following criterion:
		for any $(π,ρ)∈𝒪$, the ray shooting from $(π,ρ)$ toward the opposite direction
		of $(0,ϱ)$ does not intersect the function graph of $ρ=𝘓(π)$.
	\end{dfn}
	
	Note that the criterion is equivalent to, geometrically,
	that $(π,ρ)$ lies to the left of the convex envelope of $(0,ϱ)$ and $𝘓(s)$.
	So is it equivalent to
	\[𝘓（÷{πn}{n-m}）>÷{ρn-ϱm}{n-m}\label{ine:Lray-raw}\]
	for all $0<n<m$.
	The main theorem of this chapter can now be stated and proved.
	
	\begin{thm}[From een13 to elpin]\label{thm:dmc-elpin}
		Fix a pair $(π,ρ)∈𝒪$.
		Given \cref{lem:dmc-een13}, that is, given
		\[𝘗｛𝘛_n<\exp\(-e^{n^{1/3}}\)｝>H(W)-ℓ^{-ϱn+o(n)},\]
		we have
		\[𝘗｛𝘛_n<e^{-ℓ^{πn}}｝>H(W)-ℓ^{-ρn+o(n)}.\]
	\end{thm}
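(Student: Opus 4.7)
The plan is to reproduce the architecture of \cref{thm:Z-e2pin}, swapping the coin-toss process $\{𝘑_n\}$ for the coset-distance process $\{𝘋_n\}$ and Hoeffding's inequality for Cramér's theorem (\cref{thm:cramer}). Since $(π,ρ)∈𝒪$, the defining inequality \cref{ine:Lray-raw} is strict, and a compactness argument upgrades it to a uniform margin
\[
	𝘓\(÷{πn}{n-m}+2ε\)>÷{ρn-ϱm}{n-m}
\]
for some small $ε>0$ and all $0<m<n$. I would then invoke \cref{lem:artifact} to further shrink $ε$ and pick a $δ>0$ so that $\{𝘡_n^ε∧δ\}$ is a supermartingale and the safe-zone simplification $𝘡_{l+1}≤𝘡_l^{𝘋_{l+1}(1-ε)}$ holds whenever $𝘡_l^ε<δ$.

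For perfect squares $n$ and $m=√n,2√n…n-√n$, I would copy the six-family bookkeeping of \cref{thm:Z-e2pin}: set $𝘈_m≔\{𝘡_m<\exp(-e^{m^{1/3}})\}、𝘈₀^{m-√n}$ and $𝘈₀^m≔𝘈₀^{m-√n}∪𝘈_m$; let $𝘉_m⊆𝘈_m$ be the excursion event $\{𝘡_l^ε≥δ$ for some $l≥m\}$; and let
\[
	𝘊_m≔\{𝘋_{m+1}𝘋_{m+2}\dotsm 𝘋_n≤ℓ^{πn+2ε(n-m)}\}∩𝘈_m.
\]
Finally $𝘌_m≔𝘈_m、(𝘉_m∪𝘊_m)$ and $𝘌₀^m≔𝘌₀^{m-√n}∪𝘌_m$; denote their measures by the lowercase letters $𝘢_m,𝘢₀^m,𝘣_m,𝘤_m,𝘦_m,𝘦₀^m$, and set $𝘧_m≔1-H(W)-𝘢₀^m$, $𝘨_m≔1-H(W)-𝘦₀^m$.

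The three key estimates are direct analogues. Ville's inequality applied to the artificial supermartingale $\{𝘡_l^ε∧δ\}$ gives $𝘣_m/𝘢_m≤\exp(-e^{m^{1/3}}ε)/δ$. Cramér's theorem applied to the i.i.d.\ sequence $㏒_ℓ𝘋_l$ with cutoff $s=πn/(n-m)+2ε$ yields
\[
	÷{𝘤_m}{𝘢_m}≤ℓ^{-(n-m)𝘓(πn/(n-m)+2ε)}<ℓ^{ϱm-ρn}
\]
by the choice of $ε$. The hypothesis \cref{lem:dmc-een13} delivers $𝘧_m^+<ℓ^{-ϱm+o(m)}$. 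Plugging these into the same Cesàro-style recurrence used in \cref{thm:Z-e2pin} telescopes to $𝘨_{n-√n}<ℓ^{-ρn+o(n)}$; and on the event $𝘌₀^{n-√n}$, telescoping $𝘡_n≤𝘡_m^{𝘋_{m+1}\dotsm 𝘋_n(1-ε)^{n-m}}$ inside the safe zone together with the failure of $𝘊_m$ forces $𝘡_n<\exp(-ℓ^{πn})$, as desired.

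The main obstacle I expect is the simultaneous calibration of $ε$: the inflated ray-gap must survive, the artificial supermartingale from \cref{lem:artifact} must exist for that $ε$, and the telescoping exponent $𝘋_{m+1}\dotsm 𝘋_n(1-ε)^{n-m}>ℓ^{πn}$ must remain valid on $𝘊_m^c$. Strict interiority of $(π,ρ)$ in $𝒪$ handles the first constraint; compatibility with \cref{lem:artifact} is automatic since that lemma only asks $ε$ to be small; the third needs only $ε<1/2$. Non-square $n$ are handled by rounding down to the nearest square exactly as in the proof of \cref{thm:Z-e2pin}. A minor bookkeeping subtlety is that Cramér's theorem gives the sharp LDP exponent only in the limit; here I would invoke only the non-asymptotic Chernoff-bound half of \cref{thm:cramer}, which is enough for the upper bound on $𝘤_m/𝘢_m$ and keeps the $o(n)$ terms under control.
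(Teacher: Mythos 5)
Your proposal reproduces the paper's proof essentially step for step: the same compactness upgrade of \cref{ine:Lray-raw} to a margin $2ε$, the same appeal to \cref{lem:artifact} for the artificial supermartingale and the safe-zone bound $𝘡_{l+1}≤𝘡_l^{𝘋_{l+1}(1-ε)}$, the same six-event bookkeeping with Ville's inequality for $𝘣_m/𝘢_m$, the Chernoff half of \cref{thm:cramer} for $𝘤_m/𝘢_m$, the een13 hypothesis for $𝘧_m^+$, the same Cesàro-style recurrence, the same telescoping analysis of $𝘌₀^{n-√n}$ (with $ε<1/2$), and the same rounding-down treatment of non-square $n$. Like the paper's own proof, you carry out the argument for $𝘡_n$ and $1-H(W)$ (the theorem statement's $𝘛_n$ and $H(W)$ notwithstanding), so this is correct and essentially identical to the paper's route.
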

	
	\begin{proof}
		(Select constants.)
		Since \cref{ine:Lray-raw} holds, there exists a small constant $ε>0$ such that
		\[𝘓（÷{πn}{n-m}+2ε）>÷{ρn-ϱm}{n-m}\label{ine:Lray-gap}\]
		by the compactness argument.
		Pass this $ε$ to \cref{lem:artifact}:
		There exists a smaller $ε>0$ and a small $δ>0$ such that $𝘡_n^ε∧δ$ is
		a supermartingale and $𝘡_{n+1}≤𝘡_n^{𝘋_{n+1}(1-ε)}$ whenever $𝘡_n<δ$.
		
		(Define events.)
		Let $n$ be a perfect square.
		(If it is not the case, see how \cref{thm:Z-e2pin} bypasses.)
		Let $𝘈₀⁰$ and $𝘌₀⁰$ be the empty event.
		For every $m=√n,2√n…n-√n$, we define six series of events
		$𝘈_m$, $𝘈₀^m$, $𝘉_m$, $𝘊_m$, $𝘌_m$, and $𝘌₀^m$
		inductively as follows:
		Let $𝘈_m$ be $｛𝘡_m<\exp(-e^{m^{1/3}})｝、𝘈₀^{m-√n}$.
		Let $𝘈₀^m$ be $𝘈₀^{m-√n}∪𝘈_m$.
		Let $𝘉_m$ be a subevent of $𝘈_m$ where $𝘡_l^ε≥δ$ for some $l≥m$.
		Let $𝘊_m$ a subevent of $𝘈_m$ where
		\[𝘋_{m+1}𝘋_{m+2}\dotsm𝘋_n≤ℓ^{πn+2ε(n-m)}.\label{ine:mgf-lpin}\]
		Let $𝘌_m$ be $𝘈_m、(𝘉_m∪𝘊_m)$.
		Let $𝘌₀^m$ be $𝘌₀^{m-√n}∪𝘌_m$.
		Let $𝘢_m$, $𝘢₀^m$, $𝘣_m$, $𝘤_m$, $𝘦_m$, and $𝘦₀^m$
		be the probability measures of the corresponding capital letter events.
		Moreover, let $𝘧_m$ be $1-H(W)-𝘢₀^m$ and let $𝘨_m$ be $1-H(W)-𝘦₀^m$.
		
		(Bound $𝘣_m/𝘢_m$ from above.)
		Conditioning on $𝘈_m$, we want to estimate
		the probability that $𝘡_l≥δ$ for some $l≥m$.
		Recall that $𝘡_l$ is a supermartingale.
		Hence by Ville's inequality,
		$𝘗\{𝘡_l^ε≥δ$ for some $l≥m｜𝘈_m\}≤𝘡_m^ε/δ<\exp\(-e^{m^{1/3}}ε\)/δ$.
		This is an upper bound on $𝘣_m/𝘢_m$
		and will be summoned in \cref{ine:core-elpin}.
		
		(Bound $𝘤_m/𝘢_m$ from above.)
		We want to estimate how often \cref{ine:mgf-lpin} happens.
		This is equivalent to asking how often do $n-m$
		fair coin tosses end up with $πn+2ε(n-m)$ heads.
		By \cref{thm:cramer}, this probability is less than $ℓ$ to the power of
		\[-(n-m)𝘓（÷{πn}{n-m}+2ε）.\]
		By \cref{ine:Lray-gap}, this exponent is less than $ϱm-ρn$.
		Thus, the probability is less than $ℓ^{ϱm-ρn}$.
		This is an upper bound on $𝘤_m/𝘢_m$
		and will be summoned in \cref{ine:core-elpin}.
		
		(Bound $𝘧_m^+$ from above.)
		The definition of $𝘧_m$ reads $1-H(W)-𝘢₀^m$.
		Here $𝘢₀^m$ is the probability measure of $𝘈₀^m$,
		and $𝘈₀^m$ is a superevent of $𝘈_m$ by how the former is defined.
		Event $𝘈₀^m$ must contain $｛𝘡_m<\exp\(-e^{m^{1/3}}\)｝$
		by how $𝘈_m$ was defined.
		By the een13 behavior, $𝘗｛𝘡_m<\exp\(-e^{m^{1/3}}\)｝>1-H(W)-ℓ^{-ϱm+o(m)}$.
		Chaining all inequalities together, we infer that $𝘧_m<ℓ^{-ϱm+o(m)}$.
		Let $𝘧_m^+$ be $\max(0,𝘧_m)$ so
		we can write $𝘧_m^+<ℓ^{-ϱm+o(m)}$.
		This upper bound will be summoned in \cref{ine:core-elpin}.
		
		(Bound $𝘦₀^{n-√n}$ from below.)
		We start rewriting $𝘨_m-𝘧_m^+$ with
		$(𝘧_{m-√n}-𝘢_m)^+$ being $\max(0,𝘧_{m-√n}-𝘢_m)$:
		\begin{align*}
			𝘨_m-𝘧_m^+
			&	=1-H(W)-𝘦₀^m-(1-H(W)-𝘢₀^m)^+	\\
			&	=1-H(W)-𝘦₀^{m-√n}-𝘦_m-(1-H(W)-𝘢₀^{m-√n}-𝘢_m)^+	\\
			&	=𝘨_{m-√n}-𝘦_m-(𝘧_{m-√n}-𝘢_m)^+	\\
			&	≤𝘨_{m-√n}-𝘦_m-÷{𝘦_m}{𝘢_m}(𝘧_{m-√n}-𝘢_m)^+	\\
			&	≤𝘨_{m-√n}-𝘦_m-÷{𝘦_m}{𝘢_m}(𝘧_{m-√n}^+-𝘢_m)	\\
			&	=𝘨_{m-√n}-𝘧_{m-√n}^++𝘧_{m-√n}^+（1-÷{𝘦_m}{𝘢_m}）	\\
			&	≤𝘨_{m-√n}-𝘧_{m-√n}^++𝘧_{m-√n}^+（÷{𝘣_m}{𝘢_m}+÷{𝘤_m}{𝘢_m}）	\\
			&	<𝘨_{m-√n}-𝘧_{m-√n}^++ℓ^{-ϱ(m-√n)+o(m-√n)}
				（\exp\(-e^{m^{1/3}}ε\)/δ+ℓ^{ϱm-ρn}）.\label{ine:core-elpin}
		\end{align*}
		The first three equalities are by the definitions of $𝘨_m$ and $𝘧_m$.
		The next inequality is by $0≤𝘦_m/𝘢_m≤1$.
		The next inequality is by $\max(0,f-a)=\max(a,f)-a≥\max(0,f)-a$.
		The next equality is by elementary algebra.
		The next inequality is by the definition of $𝘌_m$.
		The last inequality summons upper bounds derived in the last three paragraphs.
		Now the last line contains two terms in the big parentheses;
		between them, $ℓ^{ϱm-ρn}$ dominates $\exp\(-e^{m^{1/3}}ε\)/δ$ once $n→∞$.
		Subsequently, we obtain this recurrence relation
		\[\cas{
			𝘨₀-𝘧₀^+=0;	\\
			𝘨_m-𝘧_m^+≤𝘨_{m-√n}-𝘧_{m-√n}^++2ℓ^{-ρn+o(n)}.
		}\]
		Solve it (cf.\ the Cesàro summation);
		we get that $𝘨_{n-√n}-𝘧_{n-√n}^+<ℓ^{-ρn+o(n)}$.
		Once again we summon $𝘧_{n-√n}^+<ℓ^{-ϱ(n-√n)+o(n-√n)}<ℓ^{-ϱn+o(n)}$;
		therefore $𝘨_{n-√n}<ℓ^{-ρn+o(n)}$.
		Based on the definition of $𝘨_{n-√n}$
		we immediately get $𝘦₀^{n-√n}>1-H(W)-ℓ^{-ρn+o(n)}$.
		
		(Analyze $𝘌₀^{n-√n}$.)
		We want to estimate $𝘡_n$ when $𝘌₀^{n-√n}$ happens.
		To be precise, for each $m=√n,2√n…n-√n$,
		we attempt to bound $𝘡_n$ when $𝘌_m$ happens .
		Fix an $m$.
		When $𝘌_m$ happens, its superevent $𝘈_m$ happens,
		so we know that $𝘡_m<\exp\(-e^{m^{1/3}}\)$.
		But $𝘉_m$ does not happen, so $𝘡_l<δ$ for all $l≥m$.
		This implies $𝘡_{l+1}≤𝘡_l^{𝘋_{l+1}(1-ε)}$ for those $l$.
		Telescope;
		$𝘡_n$ is less than or equal to $𝘡_m$ raised to
		the power of $𝘋_{m+1}𝘋_{m+2}\dotsm𝘋_n(1-ε)^{n-m}$.
		But $𝘊_m$ does not happen, so the product
		is greater than $ℓ^{πn+2ε(n-m)}(1-ε)^{n-m}$,
		which is greater than $ℓ^{πn}$ granted that $ε<1/2$.
		Jointly we have
		$𝘡_n≤𝘡_m^{ℓ^{πn}}<\exp\(-e^{m^{1/3}}ℓ^{πn}\)<\exp(-ℓ^{πn})$.
		In other words, $𝘌₀^{n-√n}$ implies $𝘡_n<\exp(-ℓ^{πn})$.
		
		(Summary.)
		Now we may conclude
		$𝘗\{𝘡_n<\exp(-ℓ^{πn})\}≥𝘗(𝘌₀^{n-√n})=𝘦₀^n>1-H(W)-ℓ^{-ρn+o(n)}$.
		And hence the proof of the elpin behavior is sound.
	\end{proof}

\section{Consequences, Dual Picture Included}

	That $𝘡_n$ can be proved low implies that we have good codes over
	symmetric channels of prime-power input alphabet.
	To code over the most general DMCs, we need to talk about $\{𝘚_n\}$---it is
	the process of $Ｓ(𝘞_n)$ that controls the behavior of $𝘞_n$ when it becomes noisy.
	
	$Ｓ$ (and $\{𝘚_n\}$) really is the dual counterpart of $Ｚ$ (and $\{𝘡_n\}$).
	The Hölder tolls we pay to translate between $H$ and $Ｚ$
	is the same amount as the tolls we pay to translate between $1-H$ and $Ｓ$.
	Plus, FTPC$Z$ (\cref{thm:ftpcZ}) and FTPC$S$ (\cref{thm:ftpcS}) state
	almost the same phenomenon in terms of $Ｚ$ and $Ｓ$, respectively.
	By the process nonsense that was once used to show \cref{thm:T-e2pin}
	out of \cref{thm:Z-e2pin}, \cref{thm:dmc-elpin} assumes an $𝘚$-version.
	
	\begin{thm}[From eigen to elpin]
		Fix a pair $(π,ρ)$ lying to the left of the convex envelope of
		$(0,ϱ)$ and the Cramér function of $㏒_ℓD_SＷ{𝘑₁}$.
		Given the premise of \cref{lem:dmc-en23}, that is, given
		\[\sup_{0<H(W)<1}÷{h(H(WＷ1))+h(H(WＷ2))+\dotsb+h(H(WＷ{ℓ}))}{ℓh(H(W))}
			=ℓ^{-ϱn},\]
		then
		\[𝘗｛𝘚_n<e^{-ℓ^{πn}}｝>H(W)-ℓ^{-ρn+o(n)}.\]
	\end{thm}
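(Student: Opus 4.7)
The plan is to transcribe the proof of \cref{thm:dmc-elpin} wholesale, swapping $\{𝘡_n\}$ for $\{𝘚_n\}$, FTPC$Z$ for FTPC$S$, and the coset distance $D_ZＷj$ for its dual $D_SＷj$. The duality between the reliable end ($H→0$, $Ｚ→0$) and the noisy end ($H→1$, $Ｓ→0$) is already built into the Hölder tolls of \cref{pro:ex-toll}, and the eigen hypothesis is phrased in terms of $H$, which governs both ends simultaneously via the martingale property.

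First I would run the eigen-to-en23 step of \cref{lem:dmc-en23} at the noisy end rather than at the reliable end. Telescoping the eigen inequality gives $𝘌[h(𝘏_n)]≤ℓ^{-ϱn}$, and Markov's inequality then pushes $𝘏_n$ away from the middle of $[0,1]$. Combining with $𝘗\{𝘏_n→1\}=H(W)$ and a union bound over future mid-interval visits yields $𝘗\{1-𝘏_n<\exp(-n^{3/4})\}>H(W)-ℓ^{-ϱn+o(n)}$. Paying the Hölder toll $Ｓ(W)≤q³√{1-H(W)}$ of \cref{pro:ex-toll} produces the en23 statement $𝘗\{𝘚_n<\exp(-n^{2/3})\}>H(W)-ℓ^{-ϱn+o(n)}$.

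Next I would lift en23 to een13 by copying the proof of \cref{lem:dmc-een13}. Set $𝘋_n≔D_SＷ{𝘑_n}$. FTPC$S$ reads $𝘚_{n+1}≤q^{ℓ-1}𝘚_n^{𝘋_{n+1}}$, structurally identical to the $Ｚ$ recursion. Ergodicity of $G$ (or of $K$ for $G=K^{⊗3}$, via the dual of \cref{lem:cube}) forces some $D_SＷj≥2$, so $𝘗\{𝘋_1=1\}<1$ and \cref{lem:artifact} manufactures an artificial supermartingale $\{𝘚_n^ε∧δ\}$ along with the on-safe-zone bound $𝘚_{n+1}≤𝘚_n^{𝘋_{n+1}(1-ε)}$. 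The watch-list events and the master-theorem recurrence of \cref{lem:dmc-een13} then transport verbatim, yielding $𝘗\{𝘚_n<\exp(-e^{n^{1/3}})\}>H(W)-ℓ^{-ϱn+o(n)}$.

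Finally, the een13-to-elpin step of \cref{thm:dmc-elpin} ports directly: \cref{thm:cramer} applied to $㏒_ℓ𝘋_1=㏒_ℓ D_SＷ{𝘑_1}$ supplies the Cramér function that cuts out $𝒪$, and the bookkeeping with $𝘈_m,𝘈₀^m,𝘉_m,𝘊_m,𝘌_m,𝘌₀^m$ together with the recurrence controlling $𝘨_m-𝘧_m^+$ imports word for word. The one step that genuinely needs checking—and which I expect to be the main obstacle—is the artificial-supermartingale construction for $Ｓ$: since FTPC$S$ only one-sidedly bounds $𝘚_{n+1}$, I must rerun the Jensen/convexity argument of \cref{lem:artifact} with $Ｓ$ in place of $Ｚ$, i.e.\ choose $ε,δ>0$ such that $q^{(ℓ-1)ε}𝘗\{𝘋_1=1\}+q^{(ℓ-1)ε}δ𝘗\{𝘋_1≥2\}≤1$. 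This is feasible whenever $𝘗\{𝘋_1=1\}<1$, and $ϱ$ can be shrunk to a smaller positive value if necessary to also maintain $𝘗\{𝘋_1=1\}<ℓ^{-ϱ}$ in the Hoeffding-type step inside een13.
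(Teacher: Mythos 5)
Your proposal is correct and takes essentially the same route as the paper, which justifies this $S$-version precisely by transporting the $Z$-side chain (en23, een13, elpin) through the duality between $(H,Ｚ,D_ZＷj,$ FTPC$Z)$ and $(1-H,Ｓ,D_SＷj,$ FTPC$S)$ — the wholesale substitution you describe. Your flagged step, rerunning \cref{lem:artifact} for $𝘚_n$, is indeed the only place needing a check, and it goes through exactly as you say because the original construction only ever uses the one-sided FTPC upper bound together with $𝘗\{𝘋_1=1\}<1$.
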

	
	This theorem together with \cref{thm:dmc-elpin} implies good codes over all DMCs.
	This is because the block error probability is bounded from above
	by the sum of small $Ｐ(𝘞_n)$ and small $T(𝘞_n)$,
	and they are further bounded by $𝘡_n$ and $𝘚_n$.
	
	\begin{cor}[Good code for DMC]
		Fix a $q$-ary DMC.
		Fix an ergodic $G∈𝔽_q^{ℓ×ℓ}$ with a positive $ϱ>0$.
		Fix a pair $(π,ρ)$ lying to the left of
		(a)	the convex envelope of $(0,ϱ)$ and the Cramér function of $㏒_ℓD_ZＷ{𝘑₁}$ and
		(b)	the convex envelope of $(0,ϱ)$ and the Cramér function of $㏒_ℓD_SＷ{𝘑₁}$.
		Then polar coding with kernel $G$ enjoys block error probability $\exp(-ℓ^{πn})$
		and code rate $N^{-ρ}$ less than the channel capacity.
	\end{cor}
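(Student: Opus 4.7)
The plan is to combine two applications of \cref{thm:dmc-elpin}---one to the channel process $\{𝘞_n\}$ grown from $W$, which controls the reliable end via $Ｚ(𝘞_n)$, and one to the channel process $\{𝘘_n\}$ grown from the capacity-achieving input distribution $Q$ treated as a constant-output channel, which controls the noisy end via $Ｓ(𝘘_n)$. Set the threshold $θ ≔ \exp(-ℓ^{πn})$ and let $𝒥$ collect those depth-$n$ index tuples where both $Ｚ(𝘞_n)<θ$ and $T(𝘘_n)<θ$ hold. By \cref{ine:P<sumsum}, the block error probability restricted to $𝒥$ is at most $N(θ+θ)=2N\exp(-ℓ^{πn})$, which is $\exp(-ℓ^{πn})$ up to a subexponential factor.

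Next I would assemble four probability lower bounds, in the pattern of the BDMC wrap-up in \cref{cha:dual}. Hypothesis (a) together with \cref{thm:dmc-elpin} applied to $W$ yields $𝘗\{Ｚ(𝘞_n)<θ\} > 1-H(W)-ℓ^{-ρn+o(n)}$, and hypothesis (b) together with the $Ｓ$-dual of \cref{thm:dmc-elpin} applied to $W$ yields $𝘗\{Ｓ(𝘞_n)<θ\} > H(W)-ℓ^{-ρn+o(n)}$. The same two theorems applied to the degenerate channel with input $Q$ and constant output give the analogous bounds with $H(Q)$ in place of $H(W)$. The $Ｓ$-events translate to $T$-events through the Hölder toll \cref{pro:ex-toll}, which rescales $θ$ only by a polynomial factor that is absorbed into the $o(n)$ term.

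Three disjointness facts then follow: a single channel cannot simultaneously have $Ｚ<θ$ and $T<θ$ for large $n$ (applied to both $𝘞_n$ and $𝘘_n$, by the common-fate property), and $𝘘_n$ cannot be strictly more reliable than $𝘞_n$ because $H(W)=H(X｜Y) ≤ H(X)=H(Q)$, so $\{Ｚ(𝘘_n)<θ\} ∩ \{T(𝘞_n)<θ\} = \emptyset$ asymptotically. Inclusion--exclusion combines the four lower bounds with the three disjointness facts to yield $𝘗\{Ｚ(𝘞_n)<θ† and †T(𝘘_n)<θ\} > H(Q)-H(W)-O(ℓ^{-ρn+o(n)}) = I(W)-ℓ^{-ρn+o(n)}$. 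This is exactly the density of $𝒥$, hence the code rate. A topological fluctuation of $(π,ρ)$ inside the intersection of regions (a) and (b) absorbs the $o(n)$ error term and delivers the claimed exponents.

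The main obstacle is making the three disjointness facts quantitative with slack of the correct scale: the Hölder tolls of \cref{pro:im-toll,pro:ex-toll} are only polynomial in the parameters, so the forbidden intersections are empty only after strengthening $θ$ by a polynomial factor; but this strengthening is subexponential in $n$ and is absorbed into the $o(n)$ term, so the claimed $(π,ρ)$ exponents are preserved.
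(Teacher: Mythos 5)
Your proposal is correct and follows essentially the same route the paper intends: it instantiates \cref{thm:dmc-elpin} and its $Ｓ$-dual for both the channel process $\{𝘞_n\}$ and the input-distribution process $\{𝘘_n\}$, bounds the block error via \cref{ine:P<sumsum} with the $Ｚ$/$Ｓ$-to-$Ｐ$/$T$ Hölder tolls, and recovers the rate $I(W)-ℓ^{-ρn+o(n)}$ by exactly the four-inequalities-plus-three-disjointness inclusion--exclusion blueprint the paper set up in the asymmetric-channel wrap-up of \cref{cha:dual}, finishing with the usual topological fluctuation of $(π,ρ)$. No gaps worth flagging; the polynomial-in-$q$ and subexponential factors you absorb into $o(n)$ are handled the same way in the paper.
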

	
	The same can be stated for lossless and lossy compression.
	For lossless compression, if the source alphabet $𝒳$
	is not a prime power, add dummy symbols till it is.
	Then the block error probability is the sum of small $Ｐ(𝘞_n)$.
	For lossy compression, if the palette $𝒳$ in the distortion function
	$\dist：𝒳×𝒴→[0,1]$ is not prime power, add dummy symbols and
	penalize dummy symbols with distortion $1$ (i.e., $\dist(x,y)=1$ if $x$ is dummy).
	Then the excess of distortion is the average of small $T(𝘞_n)$.
	
	\begin{cor}[Good code for lossless compression]
		Fix a $q$-ary source $X$ and side information $Y$.
		Fix an ergodic $G∈𝔽_q^{ℓ×ℓ}$ (whose $ϱ$ is guaranteed to be positive).
		Fix a pair $(π,ρ)$ lying to the left of
		the convex envelope of $(0,ϱ)$ and the Cramér function of $㏒_ℓD_ZＷ{𝘑₁}$.
		Then polar coding with kernel $G$ enjoys block error probability $\exp(-ℓ^{πn})$
		and code rate $N^{-ρ}$ plus the conditional entropy.
	\end{cor}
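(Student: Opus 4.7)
The plan is to reduce lossless compression with side information to the noisy-channel setting already controlled by \cref{thm:dmc-elpin}. I would view $X$ as the input and $Y$ as the output of an abstract DMC $W$ defined by $W(y\mid x)=\mathbb P\{Y=y\mid X=x\}$; its conditional entropy equals $H(X\mid Y)$ in $q$-ary units. When the source alphabet is not a prime power, I would enlarge it to $\mathbb F_q$ by adjoining dummy symbols of probability zero (the trick from \cref{sec:six}); the dummy symbols carry no mass, so $H(X\mid Y)$ and the polarization analysis are unchanged.

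Next, I would grow the channel process $\{\mathsf W_n\}$ from $W$ via the kernel $G$, and let $\mathcal J$ be the set of index paths $(j_1,\dotsc,j_n)$ for which $Z(\mathsf W_n)<\exp(-\ell^{\pi n})$. The encoder observes $X_1^N$, forms $U_1^N$ through the channel tree, and records and transmits only those $U_j$ whose path lies \emph{outside} $\mathcal J$; these are the noisy coordinates that the decoder cannot predict from $Y$ alone. The decoder, equipped with $Y_1^N$ and the recorded coordinates, runs successive cancellation on the reliable paths in $\mathcal J$ to recover the remaining $U$-values, and then inverts to $X_1^N$. By \cref{thm:dmc-elpin}, $\mathsf P\{Z(\mathsf W_n)<\exp(-\ell^{\pi n})\}>1-H(W)-\ell^{-\rho n+o(n)}$, so the density of recorded coordinates---which is the code rate---is $H(X\mid Y)+\ell^{-\rho n+o(n)}$. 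The block error probability is at most the union bound $N\,\mathsf E[Z(\mathsf W_n)\cdot\mathsf I\{\mathsf W_n\in\mathcal J\}]<N\exp(-\ell^{\pi n})$, and the factor $N=\ell^n$ together with the $o(n)$ terms are absorbed by slightly shrinking $(\pi,\rho)$ inside the open region, using the topological fluctuation argument employed in the earlier corollaries.

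The only real obstacle is verifying that the bound $P_e\leq\sum_{\mathcal J}Z(\mathsf W_n)$ applies when the roles of input and output are swapped relative to the noisy-channel interpretation---that is, when the ``message'' to be decoded is the source $X$ given the side information $Y$ and the recorded noisy coordinates. This was already observed in the side note at the end of \cref{cha:dual}: successive cancellation goes through unchanged because the decoder now possesses both $Y$ and the recorded noisy $U$-coordinates, and each reliable synthetic channel fails with probability at most its $Z$. Once this reduction is in hand, everything else is bookkeeping.
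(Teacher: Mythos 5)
Your proposal is correct and follows essentially the same route the paper takes: treat $X→Y$ as an abstract channel (with the dummy-symbol reduction to $𝔽_q$), invoke the elpin behavior of $𝘡_n$ from \cref{thm:dmc-elpin} to get rate $H(X｜Y)+ℓ^{-ρn+o(n)}$, bound the block error by the union bound $N\exp(-ℓ^{πn})$ over the reliable paths as in the side note of \cref{cha:dual}, and absorb the factor $N$ and the $o(n)$ terms by perturbing $(π,ρ)$ inside the open region. Nothing further is needed.
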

	
	\begin{cor}[Good code for lossy compression]
		Fix a random source $Y∈𝒴$ and a distortion function $\dist：𝔽_q×𝒴→[0,1]$.
		Fix an ergodic $G∈𝔽_q^{ℓ×ℓ}$ (whose $ϱ$ is guaranteed to be positive).
		Fix a pair $(π,ρ)$ lying to the left of the convex envelope
		of $(0,ϱ)$ and the Cramér function of $㏒_ℓD_SＷ{𝘑₁}$.
		Then polar coding with kernel $G$ enjoys excess of distortion $\exp(-ℓ^{πn})$
		and code rate $N^{-ρ}$ plus the test channel capacity.
	\end{cor}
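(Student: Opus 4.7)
The plan is to replicate \cref{cha:dual}'s reduction of lossy compression to polar coding, upgraded to the $q$-ary alphabet machinery of this chapter, and then close the argument by invoking the $Ｓ$-version of \cref{thm:dmc-elpin} stated just above.

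First I would handle the alphabet-size issue. If the palette of $\dist$ is not already a finite field of prime-power order, embed it into some $𝔽_q$ by adjoining dummy symbols and extending $\dist$ by declaring $\dist(v,y) = 1$ whenever $v$ is virtual; since placing mass on virtual symbols strictly worsens the expected distortion, no rate--distortion-optimal test channel uses them, and $R(Δ)$ is preserved. Fix such a test channel $W$ on $𝔽_q × 𝒴$ and regard it as a $q$-ary channel to which $G$ can be applied.

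Next I would construct the code. Set $θ ≔ \exp(-ℓ^{πn})$ and let $𝒥$ collect the prefixes $j₁^n$ for which $Ｓ(𝘞_n) < θ$; these point to nearly noisy synthetic channels, whose input the decompressor simulates with shared pseudo-randomness, while indices outside $𝒥$ correspond to nearly deterministic channels whose input the compressor records and transmits. The $q$-ary generalization of \cref{ine:D<sum} bounds the expected excess of distortion $D - Δ$ by an average of $T(𝘞_n)$ over $𝒥$. Since \cref{pro:im-toll} makes $T$ and $Ｓ$ bi-H\"older at $0$, the condition $Ｓ(𝘞_n) < θ$ forces $T(𝘞_n) < O(θ^{1/d})$ for a fixed $d>0$, so the average is at most $\exp(-ℓ^{πn})$ once we absorb constants into a marginal downward adjustment of $π$. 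The code rate equals the density of prefixes outside $𝒥$, namely $1 - 𝘗\{Ｓ(𝘞_n) < θ\}$; the $Ｓ$-version of \cref{thm:dmc-elpin} upper-bounds this by $1 - H(W) + ℓ^{-ρn + o(n)} = I(W) + N^{-ρ + o(1)}$, and the $o(1)$ is swallowed into $ρ$ by the same fluctuation argument used at the end of \cref{cha:origin}.

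The main obstacle I anticipate is bookkeeping rather than conceptual: we must verify that the H\"older-toll factor $q^3$ from \cref{pro:ex-toll}, the $q^{j-1}$ prefactor from FTPC$S$, and the overhead of the dummy-symbol embedding all get swallowed into the $\exp(-ℓ^{πn})$ and $N^{-ρ}$ decays without cost. Modulo those checks the corollary is a direct consequence of the $Ｓ$-elpin theorem.
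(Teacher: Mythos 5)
Your proposal matches the paper's intended argument: the paper proves no more than what you wrote, namely reduce a non-prime-power palette by adjoining dummy symbols penalized with distortion $1$, bound the excess of distortion by the average of the small $T(𝘞_n)$ over the selected noisy indices (controlled from $𝘚_n$ via the Hölder tolls of \cref{pro:im-toll,pro:ex-toll}), and read off the rate from the $Ｓ$-version of \cref{thm:dmc-elpin}, absorbing constants and the $o(n)$ by slightly fluctuating $π$ and $ρ$. The only nitpick is directional (bi-Hölder gives $T<cθ^{d}$ rather than $θ^{1/d}$, equally harmless), and your identification of $1-H(W)$ with the test channel capacity is exactly the same simplification the paper itself makes here.
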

	
	There are questions to be answered.
	One, how can we predict the best $ϱ$ for a specific $G$?
	Two, what is the best $ϱ$ among all $G$?
	Can we achieve the optimal exponent $ϱ=1/2$?
	Three, Can we reduce the complexity?
	The first question is open.
	The next chapter answers the second question.
	The chapter after answers the third question.

\chapter{Random dynamic Kerneling}\label{cha:random}

	\dropcap
	Significant is the fact that random coding achieves capacity.
	Researchers have been using random coding
	to prove the first achievability bounds for various coding scenarios.
	This evinces that random coding is the very reason why capacity,
	or capacity region when there are more than two users, is what it is.
	This is the first moment of coding theory.
	In addition, random coding achieves capacity at an unbeatable pace---the
	block error probability decays exponentially fast in the block length,
	and the block length grows inverse-quadratically in the gap to capacity.
	This sets up a holy grail that is left to code designers to chase after,
	which is the second-moment paradigm of coding theory.
	
	The question is simple, Can polar coding touch the second-moment paradigm,
	i.e., achieve capacity at a pace that is comparable to random coding?
	
	This chapter gives an affirmative answer.
	But it requires one to rethink the polarizing kernel from the bottom up.
	As mentioned in  \cref{cha:general}, a general kernel $G$
	can be implemented as an EU--DU pair, and some copies of EU--DU pairs
	will wrap around the parent channels to synthesize the child channels.
	In doing so, not all channels need to be wrapped.
	As demonstrated in \cref{cha:prune}, A channel (a pair of pins)
	is left naked if an algorithm decides that the channel is
	sufficiently polarized and not worth more EU--DU pairs.
	The contribution made in \cref{cha:prune} is that this reduces complexity.
	Stretching this idea a bit, we see that a channel (a pair of pins) can also be
	wrapped by an EU--DU pair that corresponds to a different kernel \cite{YB15}.
	Once we accept the idea that $G$ can vary on a channel-by-channel basis,
	the problem becomes how to find the best $G$ that fits a given channel.
	
	So the simple question becomes, Can we find a good kernel for each and every channel
	to help polar coding touch the edge of the second-moment paradigm,
	i.e., to achieve capacity at a nearly-optimal pace?
	
	Now we rethink what exactly is needed here.
	Do we need an algorithm to generate, for each and every channel, a kernel 
	plus a certificate that this particular kernel is good at polarizing?
	Or, only do we need to prove that, for each and every channel,
	there exists at least one kernel that polarizes decently.
	The latter is considerably easier than the former because we do not
	specify \emph{which} kernel is good---only that it exists.
	And we know how this could be done: random coding.
	Use a random matrix $𝔾$ to polarize a channel and show that,
	on average, the channel is polarized to a satisfactory extent.
	Then we are done;
	the rest is the pigeonhole principle.
	
	I call selecting a kernel for each channel \emph{dynamic kerneling},
	This chapter in its entirety is a quantization of random dynamic kerneling.

\section{The Holy Grail}

	Fix a DMC $W$ and a capacity-achieving input distribution $Q$.
	The following optimality bound is borrowed.
	
	\begin{thm}[Optimal codes]\label{pro:optimal}
		\cite[Theorem~2]{AW14}, \cite[Theorem~6]{PV10}.
		Fix constants $π,ρ>0$ such that $π+2ρ>1$.
		Assume $V>0$, that is, assume
		\[V≔\operatorname{Var}「㏑÷{W(X｜Y)}{Q(X)}」>0.\]
		(Remark:
		This is an easy assumption to meet;
		it mostly excludes trivial channels such as those with $H=0,1$.)
		Then no block code assumes $Ｐ<\exp(-ℓ^{πn})$ and $R>C-N^{-ρ}$
		except for sufficiently small $N$.
	\end{thm}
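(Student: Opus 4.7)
The plan is to combine a strong converse bound with a moderate deviations estimate for the $N$-letter information density. First, I would invoke the Polyanskiy--Poor--Verdú meta-converse: for any length-$N$ code with $M$ codewords, any input distribution $P_{X^N}$ induced by the code, and any $\gamma > 0$,
\[Ｐ \geq \Pr\{i_N(X^N;Y^N) \leq \log M + \gamma\} - e^{-\gamma},\]
where $i_N(x^N;y^N) \coloneqq \sum_{k=1}^N \log \frac{W(y_k|x_k)}{Q_Y(y_k)}$ is the $N$-letter information density, computed using the capacity-achieving output distribution $Q_Y$ in the normalization. The crucial feature is that the reference measure is fixed independently of the code.

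Second, I would apply a moderate deviations principle to $i_N$. Each summand $i(X_k;Y_k)$ is bounded (because $\mathcal{X},\mathcal{Y}$ are finite), has mean $C$, and, by the $V>0$ assumption, has variance exactly $V$ under the capacity-achieving joint law. A Cramér-tilt / Bahadur--Ranga Rao argument then yields, for any sequence $t_N$ with $\sqrt{N} \ll t_N \ll N$,
\[\Pr\left\{\sum_{k=1}^N i(X_k;Y_k) \leq NC - t_N\right\} = \exp\left(-\frac{t_N^2}{2NV}(1+o(1))\right).\]

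Third, I would set parameters to extract the contradiction. Taking $R = C - N^{-\rho}$ gives $\log M \approx NC - N^{1-\rho}$, and choosing $\gamma \coloneqq N^{1-\rho}/2$ together with $t_N \coloneqq N^{1-\rho}/2$ turns the meta-converse into
\[Ｐ \geq \exp\left(-\frac{N^{1-2\rho}}{8V}(1+o(1))\right) - \exp(-N^{1-\rho}/2).\]
Since $1-\rho > 1-2\rho$, the subtracted term is negligible, so $Ｐ \geq \exp(-\Theta(N^{1-2\rho}))$. If one additionally had $Ｐ < \exp(-N^{\pi})$, this would force $N^{\pi} \leq \Theta(N^{1-2\rho})$, hence $\pi \leq 1-2\rho$, contradicting the hypothesis $\pi + 2\rho > 1$ for all sufficiently large $N$.

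The main obstacle is making the moderate-deviations estimate uniform enough to cover arbitrary, code-dependent input distributions. Pinning down the reference \emph{output} distribution $Q_Y$ (rather than an input) in the meta-converse is what makes the variance in the exponent equal exactly the varentropy $V$ of the statement; the boundedness of $i(X;Y)$ on finite alphabets then makes the Cramér-tilt bound uniform in the code. The remaining technicalities---replacing the code's input distribution by the capacity-achieving one and absorbing sub-polynomial correction factors---are handled precisely in \cite{AW14, PV10}, and I would follow their route essentially verbatim.
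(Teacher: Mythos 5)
You are not deviating from the paper at all here: the dissertation does not prove this theorem but imports it directly from \cite{AW14} and \cite{PV10}, and your sketch (meta-converse against the fixed capacity-achieving output distribution, then a moderate-deviations/tilting estimate for the information density, with the code-dependent-input uniformity deferred to those references) is precisely the route those sources take. One small slip: the Verd\'u--Han/meta-converse bound reads $P_{\mathrm e}\geq\Pr\{i_N(X^N;Y^N)\leq\log M-\gamma\}-e^{-\gamma}$, not $\log M+\gamma$ (the latter is false, e.g.\ for a noiseless channel with large $\gamma$); with the sign corrected your parameter choice still yields $P_{\mathrm e}\geq\exp(-\Theta(N^{1-2\rho}))$, so the contradiction with $\pi+2\rho>1$ goes through unchanged.
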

	
	Fix exponents $π,ρ>0$ such that $π+2ρ<1$.
	I am to construct a series of codes with error probability $\exp(-N^π)$
	and code rate $N^{-ρ}$ less than the channel capacity.
	The construction, based on polar coding,
	will enjoy encoding and decoding complexity $O(N㏒N)$.

\section{Why Dynamic Kernels}

	Readers may prefer numerical evidence for why dynamic kerneling is superior.
	Since there is essentially one matrix of $2×2$ size,
	I will take $3×3$ for demonstration.
	Over $𝔽₂$, there are essentially two matrices of $3×3$ size:
	\[\GY≔\bma{1& & \\0&1& \\1&1&1}\qquad\GB≔\bma{1& & \\1&1& \\1&0&1}\]
	$\GY$ has coset distances $\{1,1,3\}$ and dual coset distances $\{1,2,2\}$.
	And $\GB$ is its dual:
	$\GB$ has coset distances $\{1,2,2\}$ and dual coset distance $\{1,1,3\}$.
	Both $\GY$ and $\GB$ have $ϱ=1/4.938$ over BECs.
	Their regions of realizable $(π,ρ)$-pairs over BECs are plotted in \cref{fig:rigid}.
	
	\begin{figure}
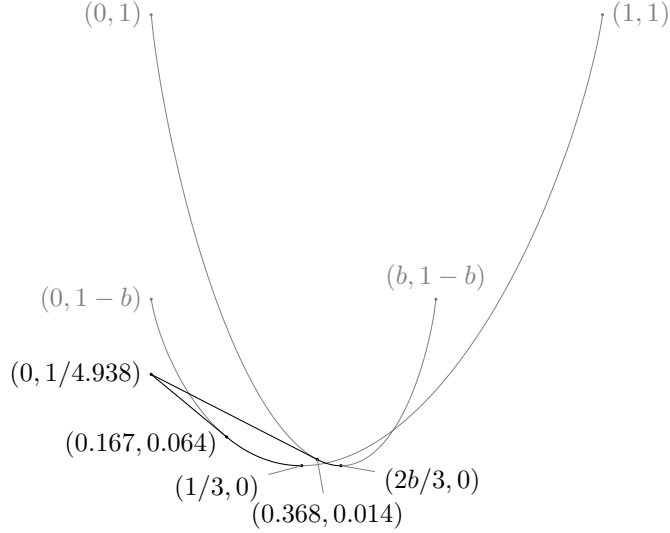

		\tikz[scale=6]{
			\draw[gray,very thin]
				(0,.3691)pic{dot}node[left]{$(0,1-b)$}--
				plot[domain=.00001:.1672,samples=34](\x,{h3(\x)})--(.1672,.06365)
				(1/3,0)--plot[domain=1/3:1,samples=133](\x,{h3(\x)})--
				(1,1)pic{dot}node[right]{$(1,1)$}
				(0,1)pic{dot}node[left]{$(0,1)$}--
				plot[domain=.00001:.3682,samples=75](\x,{h4(\x)})--(.3682,.01364)
				(.4206,0)--plot[domain=.4206:.6309,samples=43](\x,{h4(\x)})--
				(.6309,.3691)pic{dot}node[above]{$(b,1-b)$};
			\draw
				(0,1/4.938)pic{dot}node[left]{$(0,1/4.938)$}--
				(.1672,.06365)pic{dot}node[anchor=5]{$(0.167,0.064)$}--
				plot[domain=.1672:1/3,samples=133](\x,{h3(\x)})--
				(1/3,0)pic{dot}coordinate[pin=-165:{$(1/3,0)$}];
			\draw
				(0,1/4.938)--(.3682,.01364)pic{dot}coordinate[pin=-80:{$(0.368,0.014)$}]--
				plot[domain=.3682:.4206,samples=15](\x,{h4(\x)})--
				(.4206,0)pic{dot}coordinate[pin=-10:{$(2b/3,0)$}];
		}
		\caption{
			The two regions and Cramér functions of $\GY$ over BECs.
			Here $b=㏒₃2$.
			The curve passing $(0,1-b)$--$(0.167,0.064)$--$(1/3,0)$--$(1,1)$
			is the Cramér function of the uniform distribution on $\{0,0,1\}$
			and also the KL divergence of $p+(1-p)$ w.r.t.\ $1/3+2/3$.
			The curve passing $(0,1/4.938)$--$(0,1-b)$--$(0.167,0.064)$--$(1/3,0)$
			is the boundary of realizable $(π,ρ)$-pairs in the estimate of $𝘡_n$.
			The curve passing $(0,1)$--$(0.368,0.014)$--$(2b/3,0)$--$(b,1-b)$
			is the Cramér function of the uniform distribution on $\{b,b,0\}$
			and also the KL divergence of $p/b+(1-p/b)$ w.r.t.\ $2/3+1/3$.
			The curve passing $(0,1/4.938)$--$(0.368,0.014)$--$(2b/3,0)$--$(1,0)$
			is the boundary of realizable $(π,ρ)$-pairs in the estimate of $𝘚_n$.
			For $\GB$ over BECs, swap the two curves.
		}\label{fig:rigid}
	\end{figure}
	
	Now consider this variant of polar coding:
	When $H(W)≤1/2$, wrap the channel with $\GB$;
	when $H(W)>1/2$, wrap the channel with $\GY$.
	Let $𝒢$ denote an amoebic kernel that becomes $\GB$ when it sees a reliable channel
	and becomes $\GY$ when it sees a noisy one.
	Then two things change.
	Firstly, the estimate of $ϱ$ rises from $1/4.938$ to $1/4.183$.
	This is because $\GB$ is better at polarizing reliable channels
	and $\GY$ is better at polarizing noisy ones.
	This is the division of labor.
	Secondly, the coset distances are now $\{1,2,2\}$ on both reliable and noisy ends.
	Comparing to $\{1,1,3\}$, which reward you more ($3$)
	with a lower probability ($1/3$), distances $\{1,2,2\}$
	reward you less ($2$) but with a higher probability ($2/3$).
	And the random variable that rewards you less, but steadily, is preferred.
	Now $𝒢$'s region of $(π,ρ)$-pairs becomes \cref{fig:dynamic}.
	
	\begin{figure}
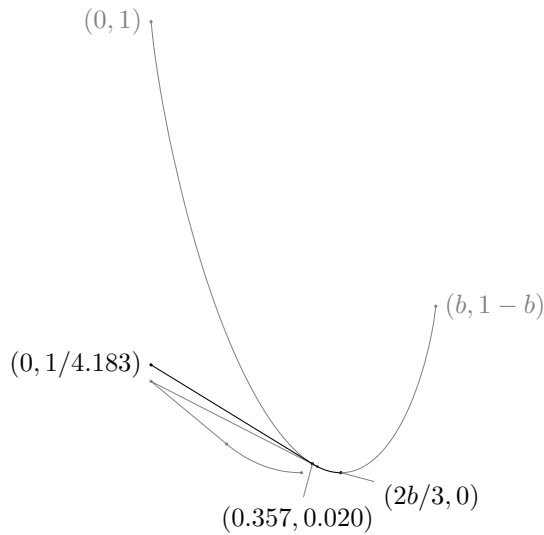

		\tikz[scale=6]{
			\draw[gray,very thin]
				(0,1)pic{dot}node[left]{$(0,1)$}--
				plot[domain=.00001:.3575,samples=72](\x,{h4(\x)})--(.3575,.01970)
				(.4206,0)--plot[domain=.4206:.6309,samples=43](\x,{h4(\x)})--
				(.6309,.3691)pic{dot}node[right]{$(b,1-b)$};
			\draw[gray,very thin]
				(0,1/4.938)pic{dot}--(.1672,.06365)pic{dot}--
				plot[domain=.1672:1/3,samples=133](\x,{h3(\x)})--(1/3,0)pic{dot}
				(0,1/4.938)--(.3682,.01364)pic{dot};
			\draw
				(0,1/4.183)pic{dot}node[left]{$(0,1/4.183)$}--
				(.3575,.01970)pic{dot}coordinate[pin=-105:{$(0.357,0.020)$}]
				plot[domain=.357:.4206,samples=15](\x,{h4(\x)})--
				(.4206,0)pic{dot}coordinate[pin=-15:{$(2b/3,0)$}];
			\draw[opacity=0]
				(1,1)node[right]{$(1,1)$};
		}
		\caption{
			The region and Cramér function of $𝒢$, the compound strategy, over BECs.
			The old regions and Cramér functions are in gray for the ease of comparison.
		}\label{fig:dynamic}
	\end{figure}
	
	For if we allow larger matrices, then it is not hard to imagine that
	the space of DMCs is partitioned into several territories, each is
	“ruled by” the kernel that is best at polarizing the channels living within.
	These kernels will be collectively called $𝒢$,
	and then we can talk about the $(π,ρ)$-pairs of $𝒢$.

\section{Why Random Kernels}

	Before the construction begins, let me elaborate on
	why random matrices are preferred.
	As the last chapter paved the way to
	the determination of the MDP region $𝒪$ of a kernel,
	we know $𝒢$ should meet the following three requirements:
	\begin{itemize}
		\item	LDP requirement:
				The coset distances of $𝒢$ should be large.
				Namely, the matrix that $𝒢$ becomes when it sees reliable channels
				should have large coset distances.
				In particular, the geometric mean of $D_ZＷj$ should be $ℓ-o(ℓ)$.
		\item	Dual LDP requirement:
				The geometric mean of $㏒_ℓD_SＷj$ should be $ℓ-o(ℓ)$.
				Namely, the matrix that $𝒢$ becomes when it sees noisy channels
				should have large coset distances.
		\item	CLT requirement:
				The performance of $𝒢$ measured by
				the eigenvalue $ℓ^{-ϱ}$ should be good.
				Namely, $ϱ=1/2-o(1)$ as $ℓ→∞$.
	\end{itemize}
	
	It is not easy to meet any of the requirements, let alone all of them at once.
	As a comparison, classical coding theory usually concerns
	how to construct a rectangular matrix (a code) with high Hamming weights.
	Less frequently it concerns finding nested matrices
	(aka.\ a flag of codes) with high Hamming weights.
	In this viewpoint, the LDP requirement is equivalent to finding a maximal chain
	of nested matrices (a full flag of codes) with high Hamming weight.
	The CLT requirement, on the other hand, does not seem to have classical counterpart.
	
	Polarizing $W$ is just side of the story.
	On the other side, we have to polarize $Q$, the input distribution
	seen as a channel, for asymmetric channels and lossy compression.
	Thus we have four requirements to meet---LDP, dual LDP, $W$'s CLT, and $Q$'s CLT.
	
	Random coding is the perfect remedy for this because it measures
	the density of the “good objects” that meet various conditions.
	Take the LDP requirement as an example.
	If $𝔾$ is a random matrix that is drawn uniformly,
	then any rectangular sub-matrix is uniformly random.
	By classical coding theory, we know how to compute
	the Hamming weights of a random rectangular matrix.
	To be more precise, we know how to compute the probability that the minimum distance
	of a random linear code is at least a certain threshold.
	We just have to repeat the argument for all rectangular sub-matrices,
	and then apply the union bound.
	
	Random matrices have more uses than that.
	If we draw $𝔾$ from the uniform ensemble of invertible matrices,
	then its Hamming distances are still under control.
	(In fact, excluding singular matrices avoids low Hamming distances
	in a helpful manner, so we are actually doing a favor
	for the random coding argument.)
	Now that the inverse $𝔾^{-1}$ follows the uniform distribution on
	invertible matrices, bounding the dual coset distances of $𝔾$ is as easy as
	bounding the primary coset distances of $𝔾$---just multiply the union bound by $2$.
	
	Below goes the actual plan to attack the problem:
	In \cref{sec:LDP}, I will show that a random matrix $𝔾$ has sufficient distances,
	and meet the LDP requirement (and its dual).
	In \cref{sec:CLT}, I will attempt to bound the eigenvalue of $𝔾$;
	this bound then comes down to four bounds:
	\begin{itemize}
		\item	a bound derived from the LDP requirement via Hölder tolls,
		\item	a bound derived from the dual LDP requirement via Hölder tolls,
		\item	a bound for noisy-channel coding via random linear codes, and
		\item	a bound for wiretap-channel coding via random linear codes.
	\end{itemize}
	\Cref{sec:prepare} prepares some materials for the last two $•$.
	In \cref{sec:gallager}, the noisy-channel coding bound will be proved.
	In \cref{sec:hayashi}, the wiretap-channel coding bound will be proved.
	
	Notice that the random matrix $𝔾$ is typeset in the blackboard bold font.
	So will the related notations (such as $ℙ,𝔼$)
	be typeset in the blackboard bold font.

\section{LDP Data of Random Matrix}\label{sec:LDP}

	This section is dedicated to showing that a random matrix $𝔾∈𝔽_q^{ℓ×ℓ}$
	drawn uniformly from the general linear group $\GL(ℓ,q)$
	has good distances $D_ZＷj$ anf $D_SＷj$.
	But before that, What is “good”?
	Let us consider $q=2$ as an example.
	The following hand-waving argument is borrowed from \cite{BF02}.
	
	To understand the typical behavior of $D_ZＷj$ when $𝔾$ varies,
	it is easier if we compute the average of $f_ZＷj(z)$ over $𝔾$,
	where $f_ZＷj(z)$ is the weight enumerator of the coset code
	$\{0₁^{j-1}1_ju_{j+1}^ℓ𝔾:u_{j+1}^ℓ∈𝔽_q^{ℓ-j}\}⊆𝔽_q^ℓ$.
	Since $𝔾$ is invertible and distributed uniformly,
	$0₁^{j-1}1_ju_{j+1}^ℓ𝔾$ is just a random nonzero vector drawn from $𝔽_q^ℓ$.
	So every message $u_{j+1}^ℓ$ contributed $((1+z)^ℓ-1)/(2^ℓ-1)$
	to the average of $f_ZＷj$, where $-1$ is to exclude the all-zero vector.
	Overall, the average of $f_ZＷj(z)$ over $𝔾$ is $((1+z)^ℓ-1)/2^j$.
	Forget about the $-1$;
	the $z^k$ coefficient of $(1+z)^ℓ/2^j$ is $\binom{ℓ}k/2^j$.
	By the large deviations theory, $[z^k]f_ZＷj=\binom{ℓ}k/2^j≈2^{ℓh₂(k/ℓ)-j}$.
	Roughly speaking, the coefficient would (very likely) be $0$
	for a realization of $𝔾$ if the average is less than $1$ (by a lot).
	So we conclude a rule of thumb:
	$D_ZＷj>k$ iff $[z^k]f_ZＷj<1$ iff $h₂(k/ℓ)<j/ℓ$.
	See \cref{fig:coeffi} for an illustration.
	
	\begin{figure}
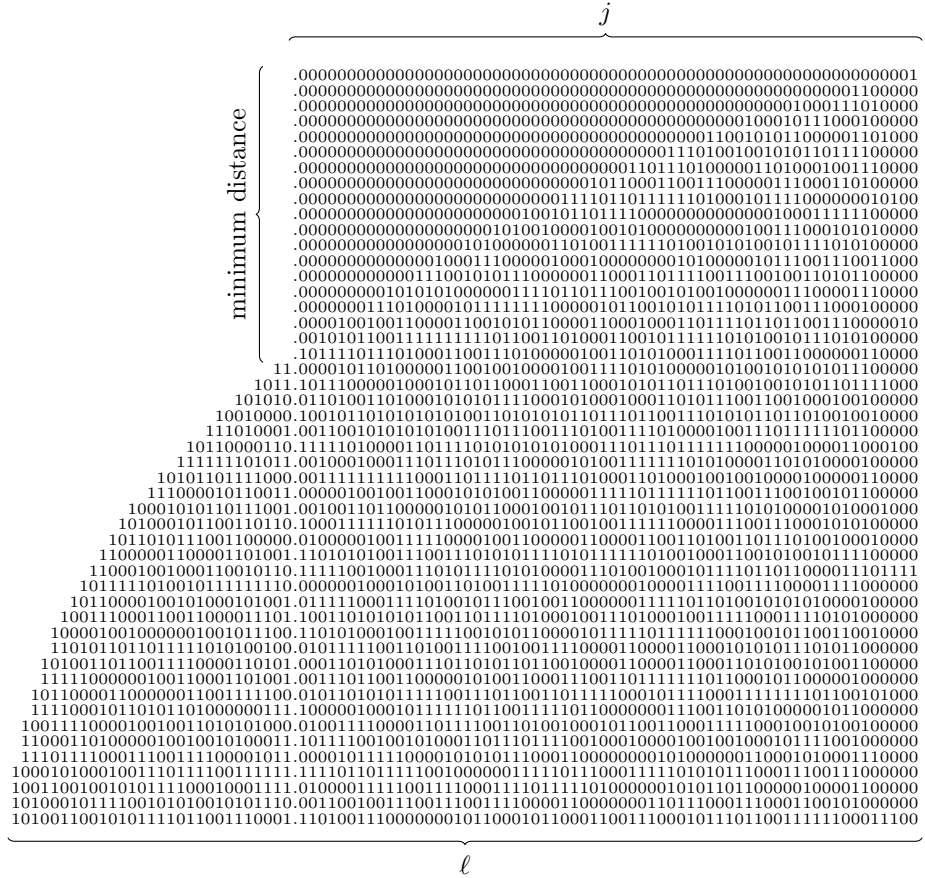

		\tikz{
			\matrix(h2analog)[matrix of math nodes,nodes={anchor=east,inner sep=1,font=\tiny}]{
		                             .0000000000000000000000000000000000000000000000000000000000000001 \\
		                             .0000000000000000000000000000000000000000000000000000000001100000 \\
		                             .0000000000000000000000000000000000000000000000000001000111010000 \\
		                             .0000000000000000000000000000000000000000000000100010111000100000 \\
		                             .0000000000000000000000000000000000000000001100101011000001101000 \\
		                             .0000000000000000000000000000000000000011101001001010110111100000 \\
		                             .0000000000000000000000000000000000110111010000011010001001110000 \\
		                             .0000000000000000000000000000001011000110011100000111000110100000 \\
		                             .0000000000000000000000000001111011011111101000101111000000010100 \\
		                             .0000000000000000000000010010110111100000000000000100011111100000 \\
		                             .0000000000000000000010100100001001010000000000100111000101010000 \\
		                             .0000000000000000010100000011010011111101001010100101111010100000 \\
		                             .0000000000000010001110000010001000000001010000010111001110011000 \\
		                             .0000000000001110010101110000001100011011110011100100110101100000 \\
		                             .0000000001010101000000111101101110010010100100000011100001110000 \\
		                             .0000000111010000101111111100000101100101011110101100111000100000 \\
		                             .0000100100110000110010101100001100010001101111011011001110000010 \\
		                             .0010101100111111111101100110100011001011111101010010111010100000 \\
		                             .1011110111010001100111010000010011010100011110110011000000110000 \\
		                           11.0000101101000001100100100001001111010100000101001010101011100000 \\
		                         1011.1011100000100010110110001100110001010110111010010010101101111000 \\
		                       101010.0110100110100010101011110001010001000110101110011001000100100000 \\
		                     10010000.1001011010101010100110101010110111011001110101011011010010010000 \\
		                    111010001.0011001010101010011101110011101001111010000100111011111101100000 \\
		                  10110000110.1111101000011011110101010101000111011101111111000001000011000100 \\
		                 111111101011.0010001000111011101011100000101001111111010100001101010000100000 \\
		               10101101111000.0011111111110001101111011011101000110100010010010000100000110000 \\
		              111000010110011.0000010010011000101010011000001111101111110110011100100101100000 \\
		            10001010110111001.0010011011000001010110001001011101101010011111010100001010001000 \\
		           101000101100110110.1000111111010111000001001011001001111110000111001110001010100000 \\
		          1011010111001100000.0100000100111110000100110000011000011001101001101110100100010000 \\
		         11000001100001101001.1101010100111001110101011110101111110100100011001010010111100000 \\
		        110001001000110010110.1111100100011101011110101000011101001000101111011011000011101111 \\
		       1011111010010111111110.0000001000101001101001111101000000010000111100111100001111000000 \\
		      10110000100101000101001.0111110001111010010111001001100000011111011010010101010000100000 \\
		     100111000110011000011101.1001101010101100110111101000100111010001001111100011110101000000 \\
		    1000010010000001001011100.1101010001001111100101011000010111110111111000100101100110010000 \\
		    1101011011011111010100100.0101111100110100111100100111100001100001100010101011101011000000 \\
		   10100110110011110000110101.0001101010001110110101101100100001100001100011010100101001100000 \\
		   11111000000100110001101001.0011101100110000010100110001110011011111110110001011000001000000 \\
		  101100001100000011001111100.0101101010111110011101100110111110001011110001111111101100101000 \\
		  111100010110101101000000111.1000001000101111110110011111011000000011100110101000001011000000 \\
		 1001111000010010011010101000.0100111100001101111001101001000101100110001111100010010100100000 \\
		 1100011010000010010010100011.1011110010010100011011101111001000100001001001000101111001000000 \\
		 1110111100011100111100001011.0000101111100001010101110001100000000101000000110001010001110000 \\
		10001010001001110111100111111.1111011011111001000000111110111000111110101011100011100111000000 \\
		10011001001010111100010001111.0100001111100111100011110111110100000010101101100000100001100000 \\
		10100010111100101010010101110.0011001001110011100111100001100000001101110001110001100101000000 \\
		10100110010101111011001110001.1101001110000000101100010110001100111000101110110011111100011100 \\
			};
			\draw[decoration=brace,decorate,transform canvas={xshift=-1em}]
				(h2analog-19-1.south west)--node[rotate=90,yshift=1em]{minimum distance}
				(h2analog-1-1.north west);
			\draw[decoration=brace,decorate]
				([yshift=1em]h2analog-1-1.north west)--node[yshift=1em]{$j$}
				([yshift=1em]h2analog-1-1.north east);
			\draw[decoration=brace,decorate]
				([yshift=-1em]h2analog-48-1.south east)--node[yshift=-1em]{$ℓ$}
				([yshift=-1em]h2analog-48-1.south west);
		}
		\caption{
			The binary representations of $\binom{96}k/2^{64}$
			for $k=0$ (top) to $k=48$ (bottom).
			Note how the leading digits depict the rotation of $h₂$.
			(Zoom out if it is not clear).
		}\label{fig:coeffi}
	\end{figure}
	
	Let us not forget that we are to bound the distances with high probability.
	Therefore, there is not a clear cut at $h₂(k/ℓ)<j/ℓ$;
	we have to leave a gap between $h₂(k/ℓ)$ and $j/ℓ$
	to facilitate Markov's inequality.
	Here, I will use the upper bound $√{ep}≥h₂(p)$ shown in \cref{fig:rootep},
	and will leave a gap by loosening the bound to $√{3p}≥h₂(p)$.
	The formal statement and proof is below.
	
	\begin{thm}[Typical LDP behavior]\label{thm:typical-LDP}
		Let $z≔Ｚ(W)$.
		Fix an $ℓ≥30$.
		Draw a $𝔾∈\GL(ℓ,q)$ uniformly at random.
		Fix a $j∈[ℓ]$.
		Then, with probability less than $3q^{-√ℓ/13}$, the following fails:
		\[f_ZＷj(z)≤ℓ(1+q'z)^{ℓ-1}(q'z)^{⌈j²/3ℓ⌉},\label{ine:enum-Z}\]
		where $q'≔q-1$.
		In particular, we have $D_ZＷj≥⌈j²/3ℓ⌉$.
	\end{thm}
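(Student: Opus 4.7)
The plan is a random-coding argument.  The key observation is that $𝔾$ being uniform over $\GL(ℓ,q)$ makes each image $0₁^{j-1}1_ju_{j+1}^ℓ𝔾$ marginally uniform over the $q^ℓ-1$ nonzero vectors of $𝔽_q^ℓ$, since the input vector is never zero (its $j$th coordinate is $1$).  Letting $w_k$ denote the number of weight-$k$ codewords in the $j$th coset, linearity of expectation yields
\[
𝔼[w_k]=÷{q^{ℓ-j}\binom{ℓ}{k}(q-1)^k}{q^ℓ-1}≤q^{1-j}\binom{ℓ}{k}(q-1)^k.
\]

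First I will prove $D_ZＷj≥k₀≔⌈j²/(3ℓ)⌉$ with probability at least $1-q^{-√ℓ/13}$.  Summing the Markov bound over $k∈\{1,2,\dotsc,k₀-1\}$ and invoking the volume-of-Hamming-ball estimate $∑_{k=0}^{r}\binom{ℓ}{k}(q-1)^k≤q^{ℓH_q(r/ℓ)}$ (valid when $r≤(q-1)ℓ/q$, where $H_q$ is the $q$-ary entropy), the failure probability collapses to $q^{1-j+ℓH_q((k₀-1)/ℓ)}$.  Substituting the Hölder envelope $H_q(p)≤p㏒_q(q-1)+√{3p}$, which descends from $h₂(p)≤√{ep}≤√{3p}$ (\cref{fig:rootep}), and $p=(k₀-1)/ℓ≤j²/(3ℓ²)$, the exponent simplifies to at most $j-√ℓ/13$ after absorbing lower-order $O(㏒ℓ)$ correction terms into the $ℓ≥30$ hypothesis.

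Second, I will upgrade the minimum-distance bound to the full weight-enumerator bound.  Noting the binomial identity $ℓ(1+q'z)^{ℓ-1}(q'z)^{k₀}=ℓ∑_{k≥k₀}\binom{ℓ-1}{k-k₀}(q-1)^kz^k$, it suffices to verify the coefficient-wise bound $w_k≤ℓ\binom{ℓ-1}{k-k₀}(q-1)^k$ for every $k≥k₀$.  Applying Markov to each $w_k$ at this threshold gives
\[
ℙ\{w_k>ℓ\binom{ℓ-1}{k-k₀}(q-1)^k\}≤÷{q^{1-j}\binom{ℓ}{k}}{ℓ\binom{ℓ-1}{k-k₀}},
\]
and a union bound over $k∈\{k₀,\dotsc,ℓ\}$, using the geometric decay of $\binom{ℓ}{k}/\binom{ℓ-1}{k-k₀}$ away from the critical point $k=k₀$ and the same entropy envelope as in the first step, adds another $2q^{-√ℓ/13}$ to the failure probability.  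Summing against $z^k$ on the good event recovers the stated weight-enumerator inequality, and the union of the two failure events matches the $3q^{-√ℓ/13}$ of the theorem.

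The main obstacle will be calibrating the Hölder envelope tightly enough to extract precisely $√ℓ/13$ in the exponent while absorbing all lower-order $O(㏒ℓ)$ correction terms into the $ℓ≥30$ hypothesis---this is what forces the specific threshold $30$.  A more subtle obstacle is the uniformity of the coefficient-wise bound in the second step: the dominant $k=k₀$ term has $\binom{ℓ}{k₀}$ exponentially large in $j$, and naive Markov may not suffice; a second-moment or Chernoff-type sharpening, together with the fact that $𝔼[w_{k₀}]$ is $O(1)$ by the choice of $k₀$, may be required to reach the same $√ℓ/13$ exponent uniformly over all $j∈[ℓ]$.
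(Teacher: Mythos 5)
Your plan is the same in spirit as the paper's (first\hyp{}moment bounds on the coset weight enumerator plus Markov, split into a minimum\hyp{}distance stage and an enumerator stage), but two of your calibrations fail as written. In your first step, the envelope $H_q(p)\le p\log_q(q-1)+\sqrt{3p}$ destroys the entire margin the theorem lives on: with $p=(k_0-1)/\ell\le j^2/3\ell^2$ you get $\ell\sqrt{3p}\le j$ with \emph{no} slack, and the leftover term $\ell p\log_q(q-1)\le (j^2/3\ell)\log_q(q-1)$ is positive, so your exponent $1-j+\ell H_q(p)$ is bounded only by $1+(j^2/3\ell)\log_q(q-1)>0$ and the derived failure bound exceeds one. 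The proof only closes because $1-\sqrt{e/3}\approx 0.048$ and $0.048\sqrt{3}\approx 0.083$ beats $1/13\approx 0.077$ by about $0.006\sqrt{\ell}$; so you must keep $\sqrt{ep}$ (not loosen it to $\sqrt{3p}$), and you must eliminate the $p\log_q(q-1)$ term entirely. The paper does the latter by comparing the $q$-ary tail to the binary one, $H_q(p)\le h_2(p)$ (equivalently the divergence comparison in \cref{fig:kldiv}: $q=2$ is the worst case), so that the $(q-1)^k$ multiplicities are exactly cancelled by the $q^{-\ell}$ probability. There is no room to ``absorb $O(\log\ell)$ corrections into $\ell\ge 30$''; at $\ell=30$ the whole budget is a few hundredths of $\sqrt{\ell}\,\log q$.

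Your second step also diverges from the paper and runs into two concrete problems. For $j\le\sqrt{3\ell}$ (so $k_0=1$) Markov at $k=k_0$ only gives failure probability about $q^{1-j}$, useless for small $j$; this range must be handled by the deterministic counting observation that $\ell(1+(q-1)z)^{\ell-1}(q-1)z$ coefficient\hyp{}wise dominates the enumerator of \emph{all} nonzero vectors (exactly how the paper dispatches $j\le\sqrt{3\ell}$), and you never invoke it. For $j>\sqrt{3\ell}$ your worry about $k=k_0$ is misplaced---with the correct envelope $\mathbb{E}[w_{k_0}]\lesssim q^{-0.048j}\,\ell\,(q-1)^{k_0}$, so first\hyp{}moment Markov against the threshold $\ell(q-1)^{k_0}$ suffices and no second moment is needed---but the union over $k$ (plus the $+1$ from the ceiling and the prefactor $q$) costs factors that the $\approx 0.006\sqrt{\ell}$ slack cannot pay at $\ell\ge 30$ for general $q$: already the $k=k_0$ term alone is about $q^{1-0.048j}/2$, which exceeds $2q^{-\sqrt{\ell}/13}$ for moderate $\ell$ and $q\ge 5$. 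The paper sidesteps the union altogether: it conditions on surviving the minimum\hyp{}distance stage and applies Markov \emph{once} to the whole enumerator $f_Z^{(j)}(z)$ at the given $z$, using the prefactor $\ell$ in the target inequality as the Markov headroom; that restructuring (or a weaker constant than $13$, or a larger threshold on $\ell$) is what your step two needs to reach the stated $3q^{-\sqrt{\ell}/13}$.
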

	
	\begin{proof}
		Divide $j$ into two cases: $1≤j≤√{3ℓ}$ and $√{3ℓ}<j≤ℓ$.
		For $j=1,2…√{3ℓ}$, the exponent $⌈j²/3ℓ⌉$ is nothing but $1$,
		thus the inequality to be proved reads
		$f_ZＷj(z)≤ℓ{(1+q'z)^{ℓ-1}}q'z$, coefficient-wisely.
		The right-hand side overcounts all nonzero codewords by
		choosing a position ($ℓ$), assigning a nonzero symbol ($q'z$),
		and arbitrarily filling in the rest of $ℓ-1$ blanks ($(1+q'z)^{ℓ-1}$).
		On the left-hand side, $f_ZＷj(z)$ enumerates only codewords of the form
		$0₁^{j-1}1_ju_{j+1}^ℓ𝔾$, which are all nonzero as $𝔾$ is invertible.
		Hence \cref{ine:enum-Z} holds for $j≤√{3ℓ}$ and nonnegative $z$
		regardless of what kernel $𝔾$ is in effect.
		
		For $j=⌊√{3ℓ}⌋+1,⌊√{3ℓ}⌋+2…ℓ$, let $d≔j²/3ℓ$.
		To make \cref{ine:enum-Z} hold, we execute a two-phase procedure
		to avoid all codewords of weight less than $d$
		and to eliminate kernels with poor overall score.
		In further detail, we will reject a kernel $𝔾$ if there exists $u_{j+1}^ℓ$
		such that $\hwt(0₁^{j-1}1_ju_{j+1}^ℓ𝔾)<d$ and call it phase~I\@.
		Afterwards, among surviving kernels with only \emph{heavy}
		(high weight) codewords, we will reject a kernel if its overall score
		$f_ZＷj(z)$ is too low and call it phase~II\@.
		The failing probability $3q^{-√ℓ/13}$ is the price we pay for rejecting.
		Up to this point, two things remain to be analyzed:
		how much probability we pay for rejecting \emph{light}
		(low weight) codewords in phase~I (answer: $q^{-√ℓ/13}$),
		and what is the Markov cutoff that honors \cref{ine:enum-Z}
		in phase~II (answer: $2q^{-√ℓ/13}$).
		
		Phase~I analysis is as follows:
		Fix $u_{j+1}^ℓ$ and vary $𝔾∈\GL(ℓ,q)$;
		the codeword $𝕏₁^ℓ≔0₁^{j-1}1_ju_{j+1}^ℓ𝔾$ is
		a nonzero vector distributed uniformly on $𝔽_q^ℓ、\{0₁^ℓ\}$.
		This distribution is almost identical to the uniform distribution on $𝔽_q^ℓ$.
		Assume $𝕏₁^ℓ$ follows the latter;
		this makes $𝕏₁^ℓ$ lighter,
		which is compatible with the direction of the inequalities we want.
		Then the probability that $𝕏₁^ℓ$ has weight less than $d$ is the probability
		that $ℓ$ Bernoulli trials---each $𝕏_j$ is ``zero'' with probability $1/q$ and
		``nonzero'' with probability $q'/q$---result in less than $d$ ``nonzero''s.
		By the large deviations theory \cite[Exercise~2.2.23(b)]{DZ10},
		$\hwt(𝕏₁^ℓ)<d$ holds with probability less than
		\[\exp（-ℓ𝔻（÷d{ℓ}‖÷12））=2^{-ℓ(1-h₂(d/ℓ))}\]
		for the $q=2$ case, where $𝔻$ is the Kullback--Leibler divergence.
		For general $q$, similarly, $\hwt(𝕏₁^ℓ)<d$ holds with probability less than
		\[\exp（-ℓ𝔻（÷d{ℓ}‖1-÷1q））≤q^{-ℓ(1-h₂(d/ℓ))}.\]
		It is less than $q^{-ℓ(1-h₂(d/ℓ))}$ by the comparison made in \cref{fig:kldiv}
		(meaning that $q=2$ is the most difficult case).
		Now we obtain $h₂(d/ℓ)<√{ed/ℓ}=√{ej²/3ℓ²}=(√{e/3})j/ℓ<0.952j/ℓ$.
		Therefore, the single-word rejecting probability is less than
		$q^{-ℓ(1-h₂(d/ℓ))}<q^{-ℓ+0.952j}$.
		Take into account that there are $q^{ℓ-j}$ codewords, one for each $u_{j+1}^ℓ$.
		The union bound yields
		$q^{ℓ-j}q^{-ℓ+0.952j}=q^{-0.048j}<q^{-0.048√{3ℓ}}<q^{-√ℓ/13}$.
		Therefore, the total rejecting probability is $q^{-√ℓ/13}$.
		Phase~I ends here.
		
		Phase~II analysis is as follows:
		After we reject some $𝔾$ in phase~I, some codewords will disappear;
		particularly, this includes all light codewords.
		Therefore,
		the expectation of $f_ZＷj(z)$ is bounded by the weight enumerator
		of all heavy codewords rescaled by the number of codewords.
		In detail, start from
		\begin{align*}
			𝔼[f_ZＷj(z)｜𝔾† survives phase I†]
			&	=𝔼[f_ZＷj(z)·𝕀\{𝔾† survives†\}]/ℙ\{𝔾† survives†\}	\\
			&	≤𝔼[f_ZＷj(z)·𝕀\{𝔾† survives†\}]/(1-q^{-√ℓ/13}).\label{ine:phase}
		\end{align*}
		$𝕀$ is the indicator function.
		In the denominator, $1-q^{-√ℓ/13}>1/4$ as $ℓ≥30$.
		Put that aside and redefine $d≔⌈j²/3ℓ⌉$.
		The expected value part is bounded from above by
		\def\bn#1#2{\hbox{\large$\tbinom{#1}{#2}$}}
		\begin{align*}
			\qquad&\kern-2em
			𝔼[f_ZＷj(z)·𝕀\{𝔾† survives†\}]
				=𝔼「∑_{u_{j+1}^ℓ}z^{\hwt(0₁^{j-1}1_ju_{j+1}^ℓ𝔾)}·𝕀\{𝔾† survives†\}」	\\
			&	≤𝔼「∑_{u_{j+1}^ℓ}z^{\hwt(0₁^{j-1}1_ju_{j+1}^ℓ𝔾)}·
				𝕀\{\hwt(0₁^{j-1}1_ju_{j+1}^ℓ𝔾)≥d\}」	\\
			&	=∑_{u_{j+1}^ℓ}𝔼[z^{\hwt(0₁^{j-1}1_ju_{j+1}^ℓ𝔾)}·
				𝕀\{\hwt(0₁^{j-1}1_ju_{j+1}^ℓ𝔾)≥d\}]	\\
			&	≤q^{ℓ-j}𝔼[z^{\hwt(𝕏₁^ℓ)}·𝕀\{\hwt(𝕏₁^ℓ)≥d\}]
				=q^{ℓ-j}q^{-ℓ}∑_{x₁^ℓ}z^{\hwt(x₁^ℓ)}·𝕀\{\hwt(x₁^ℓ)≥d\}	\\
			&	=q^{-j}∑_{w≥d}\bn{ℓ}w(q'z)^w
				≤q^{-j}∑_{w≥d}\bn{ℓ}d\bn{ℓ-d}{w-d}(q'z)^w	\\
			&	=q^{-j}\bn{ℓ}d∑_{w≥d}\bn{ℓ-d}{w-d}(q'z)^{w-d}(q'z)^d
				=q^{-j}\bn{ℓ}d(1+q'z)^{ℓ-d}(q'z)^d	\\
			\shortintertext{(overestimate the scalar $q^{-j}\bn{ℓ}d$)}
			&≤	(q^{-√ℓ/13}ℓ/2)(1+q'z)^{ℓ-d}(q'z)^d.
		\end{align*}
		The first equality expands the definition.
		The next inequality replaces $𝔾$ surviving phase~I by a weaker condition.
		The next equality swaps $𝔼$ and $∑$.
		The next inequality replaces
		the ensemble of $0₁^{j-1}1_ju_{j+1}^ℓ𝔾$ by a uniform $𝕏₁^ℓ∈𝔽_q^ℓ$.
		The next equality expands the definition of the expectation over $𝕏₁^ℓ$.
		The next equality counts codewords.
		The next inequality selects $w$ positions
		by first selecting $d$ and then selecting $w-d$.
		The next two equalities factor and apply the binomial theorem.
		The rest is by a series of inequalities that overestimate the scalar:
		$q^{-j}\binom{ℓ}d=q^{-j}\binom{ℓ}{⌈j²/3ℓ⌉}<q^{-j}\binom{ℓ}{j²/3ℓ}ℓ/2
			<q^{-j}2^{ℓh₂(j²/3ℓ²)}ℓ/2≤q^{-j+ℓh₂(j²/3ℓ²)}ℓ/2$.
		Similar to the end of phase~I, the exponent part is
		$-j+ℓh₂(j²/3ℓ²)<-j+ℓ√{ej²/3ℓ²}=-j+j√{e/3}<-0.048j<-0.048√{3ℓ}<-√ℓ/13$.
		Hence the scalar part is less than $q^{-√ℓ/13}ℓ/2$.
		Put $1-q^{-√ℓ/13}>1/4$ back to the denominator as in \cref{ine:phase};
		$𝔼[f_ZＷj(z)｜𝔾† survives phase I†]$ has an upper bound of
		\[2q^{-√ℓ/13}ℓ(1+q'z)^{ℓ-d}(q'z)^d.\]
		By Markov's inequality,
		\cref{ine:enum-Z} holds with probability $1-2q^{-√ℓ/13}$,
		i.e., the rejecting probability is $2q^{-√ℓ/13}$.
		Phase~II ends here.
		
		The sum of the two rejecting probabilities is $3q^{-√ℓ/13}$ as claimed
		in the theorem statement, hence settles the proof of \cref{thm:typical-LDP}.
	\end{proof}
	
	\begin{figure}
		\tikz[scale=8]{
			\draw
				(0,1)pic{y-tick=$1$}(0,0)pic{y-tick=$0$}
				(0,0)pic{x-tick=$0$}(1,0)pic{x-tick=$1$};
			\draw
				plot[domain=0:90,samples=360]({sin(\x)^2},{h2o(\x)})
				({sin(80)^2},{h2o(80)})node[above right]{$q=2$}
				plot[domain=0:1.1,samples=20](\x^2/e,\x)node[right]{$√{ep}$};
			\foreach\q in{3,4,5,6,7}{
				\PMS\lnqq{ln(\q-1)}\PMS\lnq{ln(\q)}\PMS\invq{1/log2(\q)}
				\PMD{hqo}1{%
					\PMS\sin{sin(##1)}\PMS\cos{cos(##1)}%
					\ifdim\sin pt=0pt%
						\PMP{0}%
					\else\ifdim\cos pt=0pt%
						\PMP{1-\lnqq/\lnq}%
					\else%
						\PMP{(\sin*(ln(\sin)*2+\lnq-\lnqq)*\sin
							+ \cos*(ln(\cos)*2+\lnq)*\cos)/\lnq}
					\fi\fi
				}
				\draw plot[domain=0:90,samples=360]({sin(\x)^2},{1-hqo(\x)});
			}
			\foreach\q in{3,4,5,7}{
				\PMS\lnqq{ln(\q-1)}\PMS\lnq{ln(\q)}\PMS\invq{1/log2(\q)}
				\draw(1,\lnqq/\lnq)node[right,inner sep=1,scale=\invq]{$q=\q$};
			}
			\draw[->](0,0)--(0,1.1);
			\draw[->](0,0)--(1.1,0)node[right]{$p$};
		}
		\caption{
			One minus the Kullback--Leibler divergences $1-𝔻(p\|1-1/q)/\ln(q)$
			for $q=2,3,5,7$ and an upper bound of $√{ep}$.
		}\label{fig:kldiv}
	\end{figure}
	
	The bound I just proved,
	\[f_ZＷj(z)≤ℓ(1+q'z)^{ℓ-1}(q'z)^{⌈j²/3ℓ⌉},\tagcopy{ine:enum-Z}\]
	implies its dual siblings because $𝔾^{-1}$ is uniform on $\GL(ℓ,q)$.
	
	\begin{cor}
		Let $s≔Ｓ(W)$.
		Then, with probability $1-3q^{-√ℓ/13}$, the following holds for each $j∈[ℓ]$:
		\[f_SＷ{ℓ-j+1}(s)≤ℓ(1+q'z)^{ℓ-1}(q's)^{⌈j²/3ℓ⌉}.\label{ine:enum-S}\]
	\end{cor}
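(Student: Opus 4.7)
The plan is to reduce the dual statement to the primal \cref{thm:typical-LDP} via the bijection $G↦G^{-⊤}$ on $\GL(ℓ,q)$. Composed with a fixed permutation of coordinates, this bijection sends a uniformly random $𝔾$ to another uniformly random $𝔾'∈\GL(ℓ,q)$ whose primal coset weight enumerators coincide, index-by-index, with the dual coset weight enumerators of the original $𝔾$ at shifted indices. The bound in \cref{thm:typical-LDP} applied to $𝔾'$ then transfers verbatim.

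Concretely, let $R∈\GL(ℓ,q)$ be the permutation matrix that reverses coordinate order, so $(vR)_k = v_{ℓ+1-k}$ for every row vector $v∈𝔽_q^ℓ$. Set $𝔾'≔R\,𝔾^{-⊤}$. Because $G↦G^{-⊤}$ and left multiplication by the fixed invertible $R$ are each bijections on $\GL(ℓ,q)$, the matrix $𝔾'$ is also uniformly distributed on $\GL(ℓ,q)$. Writing $k≔ℓ-j+1$, the reversal carries the indexing vector $0₁^{j-1}1_ju_{j+1}^ℓ$ appearing inside $f_ZＷj$ for kernel $𝔾'$ to $(u_ℓ,\dotsc,u_{j+1},1,0,\dotsc,0)$, which places the $1$ in coordinate $k$ and the free entries to its left---exactly the indexing pattern that defines $f_SＷk$ for kernel $𝔾$. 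Unpacking $vG' = (vR)\,𝔾^{-⊤}$ gives the coefficient-wise identity between $f_ZＷj$ for $𝔾'$ and $f_SＷ{ℓ-j+1}$ for $𝔾$, valid as polynomials for every $j∈[ℓ]$.

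Now apply \cref{thm:typical-LDP} to the uniformly random kernel $𝔾'$, evaluated at $z=s$. For each fixed $j∈[ℓ]$, this yields the primal bound $f_ZＷj(s)≤ℓ(1+q's)^{ℓ-1}(q's)^{⌈j²/3ℓ⌉}$ with probability at least $1-3q^{-√ℓ/13}$. Translating through the identification from the previous paragraph produces the stated corollary (the $(1+q'z)$ on the displayed right-hand side of the corollary appears to be a typo for $(1+q's)$).

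I do not anticipate any real obstacle; the whole argument is a bookkeeping exercise that exploits how the reversal permutation interchanges ``span of rows below'' (the object defining $D_ZＷj$) with ``span of columns to the left of $G^{-1}$'' (the object defining $D_SＷj$). The only step demanding care is verifying that the weight-enumerator identity holds coefficient by coefficient, not merely at the leading term; once that is in hand, \cref{thm:typical-LDP} supplies the remainder with no modification, and both phases of its proof---the union bound in Phase~I and the Markov truncation in Phase~II---transfer with the same prefactor of $3$.
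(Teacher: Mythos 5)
Your proof is correct and is essentially the paper's argument made explicit: the paper disposes of this corollary in one line by noting that $𝔾^{-1}$ (hence $𝔾^{-⊤}$) is again uniform on $\GL(ℓ,q)$, so the dual coset enumerators inherit the bound of the primal theorem, which is exactly the duality you formalize via the map $G↦R\,G^{-⊤}$ and the index flip $j↦ℓ-j+1$. Your observation that the displayed $(1+q'z)$ should read $(1+q's)$ is also right.
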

	
	Together, we have that with failing probability at most $6ℓq^{-√ℓ/13}$,
	both \cref{ine:enum-Z,ine:enum-S} hold for all $j∈[ℓ]$.
	In particular, we have that $D_ZＷj≥⌈j²/3ℓ⌉$ and $D_ZＷ{ℓ-j+1}≥⌈j²/3ℓ⌉$.
	
	Does the preceding result allow the possibility that $π→1$?
	That is to say, Does \cref{ine:enum-Z} imply that
	the average of $㏒_ℓD_ZＷj$ is $1-o(1)$ as $ℓ→∞$?
	The answer is
	\[÷1{ℓ}∑_{j=1}^ℓ㏒_ℓD_ZＷj≥÷1{ℓ}∑_{j=√{3ℓ}}^ℓ㏒_ℓ÷{j²}{3ℓ}
		≈÷2{ℓ㏑ℓ}∫_{√{3ℓ}}^ℓ㏑j\,\diff j-㏒_ℓ3ℓ≈1-÷{3.1}{㏑ℓ}-÷{√3}{√ℓ}.\]
	Now readers can see why I insist on loosening $h₂$ to
	the square root---I do not want to integrate $㏑(h₂^{-1}(p))$.
	
	The next section estimates the eigenvalue, $ℓ^{-ϱ}$, of random kernels.

\section{CLT Data of Random Matrix}\label{sec:CLT}

	This section is devoted to proving the eigen behavior of random kernels.
	To that end, we first need a concave eigenfunction $h：[0,1]→[0,1]$.
	Let $α≔㏑(㏑ℓ)/㏑ℓ$ and define
	\[h(z)≕\min(z,1-z)^α.\]
	From our experience in \cref{lem:eigen-Z}, the eigenvalue for reliable channels
	is a supremum taken over an open condition $0<H(W)<δ$, and hence
	requires explicit bounds coming from FTPCs and Hölder tolls.
	The next lemma does that and shows that $h$ has a good eigenvalue if $H(W)<ℓ^{-2}$.
	
	\begin{lem}[Typical CLT---reliable case]\label{lem:typical-HZ}
		Let $ℓ≥\max(3^q,e⁴,q⁵)$.
		Assume that $0<H(W)<ℓ^{-2}$ and that
		$𝔾∈𝔽_q^{ℓ×ℓ}$ satisfies \cref{ine:enum-Z}, then
		\[÷{h(H(WＷ1))+h(H(WＷ2))+\dotsb+h(H(WＷ{ℓ}))}{ℓh(H(W))}<2ℓ^{-1/2+3α}.
			\label{ine:good}\]
	\end{lem}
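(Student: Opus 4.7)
My plan is to partition the indices $j\in\{1,\dotsc,\ell\}$ by the coset weight $\beta_j:=\lceil j^2/(3\ell)\rceil$, treating ``small'' indices ($\beta_j\leq 3$) differently from ``large'' ones ($\beta_j\geq 4$). For the small indices the global conservation law FTPC$H$ (\cref{thm:ftpcH}) together with concavity of $x\mapsto x^\alpha$ will suffice; for the large indices I will use the per-index bound FTPC$Z$ (\cref{thm:ftpcZ}) combined with the hypothesis \cref{ine:enum-Z} and the explicit Hölder toll $H\leq q^3\sqrt{Ｚ}$ of \cref{pro:ex-toll}.

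Two preliminary simplifications will help. FTPC$H$ gives $\sum_j H(WＷj)=\ell H(W)<\ell^{-1}$, so every $H(WＷj)<1/2$, which forces $h(H(WＷj))=H(WＷj)^\alpha$ and $h(H(W))=H(W)^\alpha$. Setting $z:=Ｚ(W)$ and writing $q':=q-1$, the Hölder toll gives $z<q^3\sqrt{H(W)}<q^3/\ell$; combined with $\ell\geq q^5$ this makes $(1+q'z)^{\ell-1}\leq e^{q^4}$, so \cref{ine:enum-Z} reduces to $Ｚ(WＷj)\leq\ell e^{q^4}(q'z)^{\beta_j}$.

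For small $\beta_j$'s (at most $3\sqrt\ell$ indices), concavity of $x\mapsto x^\alpha$ and FTPC$H$ give $\sum_{\beta_j\leq 3} H(WＷj)^\alpha\leq(3\sqrt\ell)^{1-\alpha}(\ell H(W))^\alpha\leq 3\,\ell^{(1+\alpha)/2}H(W)^\alpha$, contributing at most $3\ell^{-(1-\alpha)/2}\leq\ell^{-1/2+3\alpha}$ to the ratio after dividing by $\ell H(W)^\alpha$. For large $\beta_j$'s, chaining $H\leq q^3\sqrt{Ｚ}$ with the reduced \cref{ine:enum-Z} and $z\leq q^3\sqrt{H(W)}$ produces $H(WＷj)\leq q^3 e^{q^4/2}\sqrt\ell\,(q^4\sqrt{H(W)})^{\beta_j/2}$. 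Raising to $\alpha$ and extracting $H(W)^\alpha$ by writing $H(W)^{\alpha\beta_j/4}=H(W)^\alpha\cdot H(W)^{\alpha(\beta_j-4)/4}$ and using $H(W)<\ell^{-2}$ to bound the excess by $\ell^{-\alpha(\beta_j-4)/2}$ yields $H(WＷj)^\alpha\leq C(q)^\alpha\,\ell^{5\alpha/2}(q^2/\sqrt\ell)^{\alpha\beta_j}H(W)^\alpha$. The factor $(q^2/\sqrt\ell)^{\alpha\beta_j}\leq q^{-\alpha\beta_j/2}$ (by $\ell\geq q^5$) decays geometrically, and summing via $\#\{j:\beta_j=k\}\leq\sqrt{3k\ell}$ together with the Gamma-integral bound $\sum_{k\geq 4}\sqrt k\,q^{-\alpha k/2}=O(\alpha^{-3/2})=O((\ln\ell/\ln\ln\ell)^{3/2})$ gives the large-$\beta_j$ part also as $\ell^{-1/2+3\alpha}$. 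Adding the two contributions yields the claimed bound $2\ell^{-1/2+3\alpha}$.

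The main obstacle is the algebraic extraction of the correct $H(W)^\alpha$ factor in the large-$\beta_j$ part: FTPC$Z$ natively produces $H(W)^{\alpha\beta_j/4}$, which matches $H(W)^\alpha$ only at $\beta_j=4$, so the excess must be traded for $\ell$-powers via the assumption $H(W)<\ell^{-2}$. The resulting $\ell^{O(\alpha)}$ slack factors---$\ell^{\alpha/2}$ from the Hölder toll, $\ell^{2\alpha}$ from the $H(W)$-to-$\ell$ trade, and $\ell^{3\alpha/2}$ from the Gamma tail---must be carefully tracked so as to fit within the advertised $3\alpha$ in the final exponent, and it is this bookkeeping, together with the need for the geometric series in $k$ to converge, that forces the hypotheses $\ell\geq\max(3^q,e^4,q^5)$.
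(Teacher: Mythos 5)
Your decomposition (concavity plus FTPC$H$ for small indices, FTPC$Z$ plus \cref{ine:enum-Z} plus the Hölder toll for large indices) is the same skeleton as the paper's proof, but your cutoff is at constant coset distance $\beta_j\ge 4$, and with that choice the bookkeeping does not close. Two concrete failures. First, even granting your own estimates, the final inequality for the large-index part is false for large $\ell$: you weaken the per-unit decay $\ell^{-\alpha\beta_j/2}=(\ln\ell)^{-\beta_j/2}$ to $q^{-\alpha\beta_j/2}$, whose ratio per step is $e^{-\alpha\ln q/2}\approx 1$, so your tail sum is genuinely of order $\alpha^{-3/2}=(\ln\ell/\ln\ln\ell)^{3/2}$; after dividing by $\ell h(H(W))$ your large-index contribution is of order $\alpha^{-3/2}\,\ell^{-1/2+5\alpha/2}$, and making this $\le\ell^{-1/2+3\alpha}$ would require $\alpha^{-3/2}\lesssim\ell^{\alpha/2}=\sqrt{\ln\ell}$, which fails since $(\ln\ell/\ln\ln\ell)^{3/2}\gg(\ln\ell)^{1/2}$. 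Second, and more structurally, the factor $C(q)^\alpha=(q^3e^{q^4/2})^\alpha$ coming from $(1+q'z)^{\ell-1}\le e^{q^4}$ cannot be absorbed when the surplus decay at the cutoff is zero: at $\beta_j=4$ all of $H(W)^{\alpha\beta_j/4}$ is spent producing the mandatory $H(W)^\alpha$, there are $\Theta(\sqrt\ell)$ such indices, and the only slack left is $\ell^{5\alpha/2}=(\ln\ell)^{5/2}$. The hypotheses allow $\ell=3^q$ with $q$ large (e.g.\ $q=16$, $\ell=3^{16}$), and then $e^{q^4\alpha/2}=e^{\Theta(q^3\ln\ln\ell)}$ is astronomically larger than $(\ln\ell)^{5/2}$, so the $\beta_j=4$ terms alone already violate your claimed bound. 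The first defect is a repairable slip (keep the $(\ln\ell)^{-\beta_j/2}$ decay), but the second is not repairable at a constant cutoff.

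The paper sidesteps both issues by placing the cutoff at $k\approx\ell^{1/2+5\alpha/2}$, i.e.\ at coset distance about $\ln(\ell)^5/3$ rather than $4$. The concavity/FTPC$H$ argument tolerates that many indices and still contributes only $\ell^{1/2+3\alpha}h(H(W))$, while every surviving index carries the factor $(q^8H(W))^{\ln(\ell)^5\alpha/12}$, which is super-polynomially small in $\ell$ and crushes $e^{q^4\alpha/2}$, $q^{O(\alpha)}$, and even the crude replacement of the sum by $\ell\cdot\max_j$; this is exactly what the long exponent computation in the paper verifies under $\ell\ge\max(3^q,e^4,q^5)$. To salvage your route you would need to raise your threshold from $\beta_j\ge4$ to $\beta_j\gtrsim\mathrm{polylog}(\ell)$ (anything $\ge 4+q^4/\ln\ell$, so e.g.\ $(\ln\ell)^3$ suffices) and keep the genuine $(\ln\ell)^{-1/2}$-per-step decay in the tail sum — at which point you have essentially reconstructed the paper's argument.
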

	
	\begin{proof}
		In the proof of \cref{lem:eigen-Z}, we classify $h(H(WＷj))$
		into two cases---those with coset distance treated as $1$
		and those with nontrivial ($≥5$) coset distance.
		Now that we have a spectrum of coset distances ($⌈j²/3ℓ⌉$),
		there will be as many cases as there are coset distances.
		
		Consider first $j≤k≔⌊ℓ^{1/2+5α/2}⌋$.
		This is the case when $𝘏₁$ goes up but how far
		it can go is limited by FTPC$H$ (\cref{thm:ftpcH}).
		More precisely,
		\begin{align*}
			\qquad&\kern-2em
			h(H(WＷ1))+h(H(WＷ2))+\dotsb+h(H(WＷk))≤kh（÷{H(WＷ1)+\dotsb+H(WＷk)}k）	\\
			&	≤kh（÷{ℓH(W)}k）≤kℓ^αh(H(W))k^{-α}≤ℓ^α(ℓ^{1/2+5α/2})^{1-α}h(H(W))	\\
			&	≤ℓ^{α+1/2+5α/2-α/2-5α²/2}h(H(W))≤ℓ^{1/2+3α}h(H(W)).
		\end{align*}
		That is to say, the terms $h(H(WＷj))$ with $j≤k≔⌊ℓ^{1/2+5α/2}⌋$ contribute
		$ℓ^{1/2+3α}$ to the right-hand side of \cref{ine:good}.
		We now have another $ℓ^{1/2+3α}$ to spare for larger $j$.
		
		Consider next $j≥k≔⌈ℓ^{1/2+5α/2}⌉$.
		This is the case where we can invoke FTPC$Z$ (\cref{thm:ftpcZ})
		to show that $Ｚ(WＷj)$ is significantly smaller than $Ｚ(W)$.
		So we just sandwich FTPC$Z$ by Hölder tolls to translate it into control on $H$.
		In greater detail, with $z≔Ｚ(W)$,
		\begin{align*}
			\qquad&\kern-2em
			h(H(WＷk))+H(H(WＷ{k+1}))+\dotsb+h(H(WＷ{ℓ}))≤ℓ\max_{j≥k}H(WＷj)^α	\\
			&	≤ℓ\max_{j≥k}q^{3α}Ｚ(WＷj)^{α/2}
				≤ℓq^{3α}\max_{j≥k}f_ZＷj(z)^{α/2}\\
			&	≤ℓq^{3α}\max_{j≥k}\(ℓ(1+(q-1)z)^{ℓ-1}((q-1)z)^{⌈j²/3ℓ⌉}\)^{α/2}	\\
			&	≤ℓq^{3α}\max_{j≥k}ℓ^{α/2}e^{qzℓα/2}(qz)^{j²α/6ℓ}
				≤ℓq^{3α}ℓ^{α/2}e^{qzℓα/2}(qz)^{k²α/6ℓ}	\\
			&	≤ℓq^{3α}ℓ^{α/2}e^{qzℓα/2}(qz)^{㏑(ℓ)⁵α/6}
				≤ℓq^{3α}ℓ^{α/2}e^{q⁴α/2}(qz)^{㏑(ℓ)⁵α/6}	\\
			&	≤ℓq^{3α}ℓ^{α/2}e^{q⁴α/2}(q⁸H(W))^{㏑(ℓ)⁵α/12}.	\label{for:raise}
		\end{align*}
		Here, the first inequality uses the largest $H(WＷj)$ to bound the rest.
		The next inequality pays the Hölder toll.
		The next inequality applies FTPC$Z$.
		The next inequality applies the assumption that \cref{ine:enum-Z} holds.
		The next inequality simplifies $q-1$ and $ℓ-1$ and the ceiling function.
		The next inequality knows that the maximum happens at $j=k$.
		The next inequality uses $k²/ℓ=ℓ^{1-5α}/ℓ=ℓ^{5㏑(㏑ℓ)/㏑ℓ}=e^{5㏑(㏑ℓ)}=㏑(ℓ)⁵$.
		The next inequality pays the Hölder toll for the return trip
		to simplify $z≤q³√{H(W)}≤q³√{H(W)}≤q³/ℓ$ in the exponent.
		The next inequality pays the Hölder toll for the other $z$ in the base.
		
		We are half way to the goal.
		To show that \cref{for:raise}${}≤ℓ^{1/2+3α}h(H(W))$,
		raise them to the power of $12/α$ and take the quotient:
		\begin{align*}
			\qquad&\kern-2em
			\(ℓq^{3α}ℓ^{α/2}e^{q⁴α/2}(q⁸H(W))^{㏑(ℓ)⁵α/12}\)^{12/α}
				\div\(ℓ^{1/2+3α}h(H(W))\)^{12/α}	\\
			&	=ℓ^{12/α}q^{36}ℓ⁶e^{6q⁴}(q⁸H(W))^{㏑(ℓ)⁵}ℓ^{-6/α-36}H(W)^{-12}	\\
			&	=e^{6q⁴}ℓ^{6㏑ℓ-30}H(W)^{㏑(ℓ)⁵-12}q^{8㏑(ℓ)⁵+36}
				<e^{6q⁴}ℓ^{6㏑ℓ-30}ℓ^{-2㏑(ℓ)⁵+24}q^{8㏑(ℓ)⁵+36}	\\
			&	<e^{6q⁴}ℓ^{6㏑ℓ-2㏑(ℓ)⁵}q^{8㏑(ℓ)⁵+36}
				=e^{6q⁴}ℓ^{6㏑ℓ-0.4㏑(ℓ)⁵}ℓ^{-1.6㏑(ℓ)⁵}q^{8㏑(ℓ)⁵+36}	\\
			&	≤e^{6q⁴}ℓ^{6㏑ℓ-0.4㏑(ℓ)⁵}q^{-8㏑(ℓ)⁵}q^{8㏑(ℓ)⁵+36}
				=e^{6q⁴}ℓ^{6㏑ℓ-0.4㏑(ℓ)⁵}q^{36}	\\
			&	=e^{6q⁴}ℓ^{-0.3㏑(ℓ)⁵}ℓ^{6㏑ℓ-0.1㏑(ℓ)⁵}q^{36}
				<e^{6q⁴}e^{-0.3㏑(41)^2(q㏑3)⁴}ℓ^{6㏑ℓ-0.1㏑(ℓ)⁵}q^{36}	\\
			&	<e^{6q⁴}e^{-6.02q⁴}ℓ^{6㏑ℓ-0.1㏑(ℓ)⁵}q^{36}
				<ℓ^{6㏑ℓ-0.1㏑(ℓ)⁵}q^{36}	\\
			&	=ℓ^{6㏑ℓ-㏑(ℓ)⁵/15}ℓ^{-㏑(ℓ)⁵/30}q^{36}
				=ℓ^{6㏑ℓ-㏑(ℓ)⁵/15}q^{-5㏑(19)⁵/30}q^{36}	\\
			&	<ℓ^{6㏑ℓ-㏑(ℓ)⁵/15}q^{-36.8}q^{36}
				<ℓ^{6㏑ℓ-㏑(ℓ)⁵/15}≤ℓ^{6㏑ℓ-㏑(22)⁴㏑(ℓ)/15}	\\
			&	<ℓ^{6㏑ℓ-6.08}<ℓ⁰≤1.
		\end{align*}
		The inequality involving $1.6$ uses $ℓ≥q⁵$.
		The inequality involving $0.3$ uses $ℓ≥\max(41,3^q)$.
		The inequality involving $/30$ uses $ℓ≥\max(19,q⁵)$.
		The inequality involving $/15$ uses $ℓ≥22$.
		I have just showed that \cref{for:raise}$/ℓ^{-1/2+3α}h(𝘏_0)$
		is less than $1$, with and hence without the power of $12/α$.
		
		To summarize, we saw that $h(H(WＷk))+H(H(WＷ{k+1}))+\dotsb+h(H(WＷ{ℓ}))≤{}
		$\cref{for:raise}${}≤ℓ^{-1/2+3α}h(𝘏_0)$.
		Hence the summands $h(H(WＷj))$ with $j≥k≔⌈ℓ^{1/2+5α/2}⌉$ contribute
		$ℓ^{1/2+3α}$ to the right-hand side of \cref{ine:good}.
		Both small $j$ case and large $j$ case together
		contribute $2ℓ^{1/2+3α}$, as desired.
		This is the end of \cref{lem:typical-HZ}.
	\end{proof}
	
	The following is automatic by duality given \cref{lem:typical-HZ}.
	
	\begin{lem}[Typical CLT---noisy case]\label{lem:typical-HS}
		Assume $1-ℓ^{-2}<H(W)<1$ and that $𝔾∈𝔽_q^{ℓ×ℓ}$ satisfies \cref{ine:enum-S}.
		Then
		\[÷{ℓh(H(WＷ1))+ℓh(H(WＷ2))+\dotsb+ℓh(H(WＷ{ℓ}))}{ℓh(H(W))}<2ℓ^{-1/2+3α}.\]
	\end{lem}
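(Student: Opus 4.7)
The plan is to mirror the proof of \cref{lem:typical-HZ} by invoking the duality between the reliable end and the noisy end of channel polarization. The key structural observation is that the eigenfunction $h(z)=\min(z,1-z)^α$ is invariant under $z↦1-z$, so $h(H(WＷj))=(1-H(WＷj))^α$ whenever $H(WＷj)≥1/2$. Under the hypothesis $H(W)>1-ℓ^{-2}$, \cref{thm:ftpcH} gives the martingale identity $∑_j(1-H(WＷj))=ℓ(1-H(W))<ℓ^{-1}$, which in particular forces every individual $1-H(WＷj)<ℓ^{-1}<1/2$; the $(1-H)^α$ branch therefore applies to every summand uniformly, and the dual identity replaces the use of FTPC$H$ in the reliable proof without additional loss.

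With that reduction in place, I would carry out the same two-range split as in \cref{lem:typical-HZ}, but with the index range reversed so that the strong tail of the distance profile in \cref{ine:enum-S} lines up with the indices on which FTPC$S$ will be applied. Concretely, setting $k≔⌊ℓ^{1/2+5α/2}⌋$, I would handle the block $j≥ℓ-k+1$ by Jensen's inequality applied to the concave map $x↦x^α$ together with the FTPC$H$ deficit relation, giving a contribution of at most $ℓ^{1/2+3α}h(H(W))$; and handle the block $j≤ℓ-k$ by paying the Hölder toll \cref{ine:1-H<S} to pass from $1-H$ to $Ｓ$, applying \cref{thm:ftpcS} in the form $Ｓ(WＷj)≤f_SＷj(Ｓ(W))$, invoking the random-matrix bound \cref{ine:enum-S} (which supplies the factor $(q's)^{⌈(ℓ-j+1)²/3ℓ⌉}$), and then paying the return Hölder toll \cref{ine:S<1-H} to translate back into a bound on $1-H(WＷj)$.

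Because the explicit Hölder constants $q³$ and the exponents $1/2$ appearing in \cref{pro:ex-toll} are identical in the $(H,Ｚ)$ and $(1-H,Ｓ)$ directions, and because \cref{ine:enum-S} has exactly the same polynomial shape as \cref{ine:enum-Z} after the relabeling $j↦ℓ-j+1$, the line-by-line arithmetic of \cref{lem:typical-HZ} transports verbatim. Each intermediate numerical inequality, including the chain that drove \cref{for:raise} down below $1$, survives the substitution, and the two blocks combine to give exactly $2ℓ^{-1/2+3α}h(H(W))$, which is the desired bound.

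The only step that deserves a moment of care is the boundary book-keeping after reindexing: one must verify that at $j=ℓ-k$ the reindexed exponent $⌈(ℓ-j+1)²/3ℓ⌉$ coincides with $⌈k²/3ℓ⌉=⌈ℓ^{5α}/3⌉$, the quantity that powered the key estimate in the reliable case. Once this identification is made, no new inequality is needed, and no genuine obstacle remains; the duality between the $Ｚ$-end and $Ｓ$-end has been set up precisely so that this noisy-case statement is a formal mirror of the reliable-case one.
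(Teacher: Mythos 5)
Your proposal is correct and is exactly what the paper intends: the paper's entire proof of this lemma is the remark that it is ``automatic by duality'' from \cref{lem:typical-HZ}, and your argument carries out that duality explicitly ($H\mapsto 1-H$, \cref{thm:ftpcZ} replaced by \cref{thm:ftpcS}, \cref{ine:enum-Z} replaced by \cref{ine:enum-S}, with the Jensen/FTPC split applied to the reversed index range). One cosmetic nit: at $j=\ell-k$ the reindexed exponent is $\lceil(k+1)^2/3\ell\rceil$, which is at least $\lceil k^2/3\ell\rceil$ rather than equal to it, so the estimate only improves.
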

	
	We are left with the case when $ℓ^{-2}<H(W)<1-ℓ^{-2}$.
	In this zone, neither of FTPC$Z$ and FTPC$S$ can help because the Hölder tolls
	do not yield any meaningful estimate on $Ｚ(W)$ or $Ｓ(W)$ to begin with.
	The contribution I make here is to reduce the estimate of $H(WＷj)$, with large $j$,
	to noisy-channel coding and to reduce the large $j$ cases to wiretap-channel coding.
	
	\begin{lem}[Typical CLT---mediocre case]\label{lem:typical-HH}
		Assume $ℓ^{-2}≤H(W)≤1-ℓ^{-2}$ and that
		$𝔾∈\GL(ℓ,q)$ is drawn uniformly at random.
		Then, with exceptional probability $2ℓ^{-㏑(ℓ)/20}$,
		\[÷{h(H(WＷ1))+h(H(WＷ2))+\dotsb+h(H(WＷ{ℓ}))}{ℓh(H(W))}<4ℓ^{-1/2+3α}.\]
	\end{lem}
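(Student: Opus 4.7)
The plan is to split $\{1,\ldots,\ell\}$ into three bands around the typical polarization threshold $j^{\star}\approx H(W)\ell$, handle the two extreme bands by reducing to random-coding estimates to be proved in \cref{sec:gallager,sec:hayashi}, and handle a thin middle band by the trivial bound $h\le 1$. Set $\Delta\coloneqq\lceil \ell^{1/2+\alpha}\rceil$ and $j_\pm\coloneqq j^{\star}\pm\Delta$. The key observation is that $h\le 1$ everywhere, while $h(H(W))\ge \ell^{-2\alpha}$ since $\ell^{-2}\le H(W)\le 1-\ell^{-2}$, so even the trivial treatment of the middle band contributes a ratio $\approx\ell^{-1/2+3\alpha}$ --- the very size we are targeting.

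For the reliable band $j>j_+$, the synthetic channel $W^{(j)}$ is the problem of decoding $U_j$ from $(Y_1^\ell,U_1^{j-1})$; conditioned on $U_1^{j-1}$, this becomes the decoding of a random linear code given by the last $\ell-j+1$ rows of $\mathbb{G}$, operating at a gap $\Delta/\ell=\ell^{-1/2+\alpha}$ below the capacity of $W$. A Gallager-type bound for the $\GL(\ell,q)$ ensemble (\cref{sec:gallager}) will give $P_{\mathrm e}(W^{(j)})\le\exp(-c\,\ell^{2\alpha})$ with exceptional probability at most $\ell^{-\ln(\ell)/20-1}$ per index. A Hölder toll (\cref{pro:ex-toll}) then translates this into $h(H(W^{(j)}))\le\exp\bigl(-c'\alpha\,\ell^{2\alpha}\bigr)=\exp\bigl(-c'(\ln\ell)(\ln\ln\ell)\bigr)$, which is super-polynomially small in $\ell$.

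For the noisy band $j<j_-$, one invokes the dual wiretap picture: the secrecy of $U_j$ from $(Y_1^\ell,U_1^{j-1})$ is controlled by a random linear code formed from the first $j-1$ columns of $\mathbb{G}^{-1}$ operating at a gap $\Delta/\ell$ below the equivocation capacity. Since $\mathbb{G}^{-1}$ is also uniform on $\GL(\ell,q)$, a Hayashi-type bound (\cref{sec:hayashi}) gives $T(W^{(j)})\le\exp(-c\,\ell^{2\alpha})$ with the same per-index exceptional probability, and a second Hölder toll yields $h(H(W^{(j)}))\le\exp\bigl(-c'(\ln\ell)(\ln\ln\ell)\bigr)$. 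For the mediocre band $j_-\le j\le j_+$ of size at most $2\Delta+1$, the trivial bound contributes at most $3\ell^{1/2+\alpha}$.

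Summing the three bands,
\[
\frac{\sum_{j=1}^\ell h(H(W^{(j)}))}{\ell\, h(H(W))}
\le\frac{3\ell^{1/2+\alpha}+\ell\cdot\exp\bigl(-c'(\ln\ell)(\ln\ln\ell)\bigr)}{\ell^{1-2\alpha}}
\le 4\ell^{-1/2+3\alpha}
\]
for $\ell$ large enough, and a union bound over the two coding-theoretic failure events produces the stated $2\ell^{-\ln(\ell)/20}$ exceptional probability. The main obstacle is not the bookkeeping above but the coding-theoretic content packaged into \cref{sec:gallager,sec:hayashi}: one must establish a Gallager-type error exponent and a Hayashi-type secrecy exponent for the $\GL(\ell,q)$ ensemble with rate gap as small as $\ell^{-1/2+\alpha}$ and single-index exceptional probability as small as $\ell^{-\ln(\ell)/20-1}$. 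This requires careful second-moment-style analyses of random linear codes and verifying that restricting the ensemble from all $\ell\times\ell$ matrices to invertible ones only helps.
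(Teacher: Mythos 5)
Your proposal is correct and follows essentially the same route as the paper: split the indices into a thin middle band of width about $\ell^{1/2+\alpha}$ around $H(W)\ell$ handled trivially (using $h(H(W))\ge\ell^{-2\alpha}$), and reduce the upper and lower bands to a Gallager-type noisy-channel bound and a Hayashi-type wiretap-secrecy bound for the uniform $\GL(\ell,q)$ ensemble, which is exactly the content deferred to \cref{sec:gallager,sec:hayashi}. The only cosmetic difference is that you bound each index separately and union-bound over the band, whereas the paper first collapses each whole band into a single block quantity ($H(U_{j+1}^\ell\mid Y_1^\ell U_1^j)$ via Jensen plus the chain rule, resp.\ $I(U_1^j;Y_1^\ell)$), so it needs only one Markov inequality per band and obtains the stated exceptional probability $2\ell^{-\ln(\ell)/20}$ directly.
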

	
	\begin{proof}
		The desired inequality is the sum of the following three inequalities:
		\begin{gather*}
			∑_{j=⌈H(W)ℓ+ℓ^{1/2+α}⌉+1}^ℓh(H(WＷj))<ℓ^{1/2+α},	\label{ine:bob}\\
			∑_{j=⌊H(W)ℓ-ℓ^{1/2+α}⌋+1}^{⌈H(W)ℓ+ℓ^{1/2+α}⌉}h(H(WＷj))<2ℓ^{1/2+α},	\\
			∑_{j=1}^{⌊H(W)ℓ-ℓ^{1/2+α}⌋}h(H(WＷj))<ℓ^{1/2+α}.	\label{ine:eve}
		\end{gather*}
		Here $ℓ^{-2}≤H(W)≤1-ℓ^{-2}$ is used to rewrite the denominator $h(H(W))≥ℓ^{2α}$.
		The middle one is trivial because $h≤2^{-α}$.
		\Cref{ine:bob} will be proved in \cref{sec:gallager}.
		\Cref{ine:eve} will be proved in \cref{sec:hayashi}.
	\end{proof}
	
	Let me comment on the heuristic behind those inequalities:
	To show that $h(H(WＷj))$ is in general small, we first
	have to identify at which end each $H(WＷj)$ will be.
	We learned from the $\loll$ case that
	smaller indices usually imply noisier synthetic channels.
	So we believe that for $j≪ℓH(W)$, the conditional entropy $H(WＷj)$ is high,
	and for $j≫ℓH(W)$, the conditional entropy $H(WＷj)$ is low.
	We also believe that there is a ambiguous zone $j≈H(WＷj)$
	where the conditional entropy can be anywhere.
	From the CLT regime of random coding, we believe that the width
	of the ambiguous zone should be on the order of $√ℓ$, hence the partition.
	
	Once we have \cref{lem:typical-HZ,lem:typical-HS} (proved above) and
	\cref{lem:typical-HH} (part of whose proof is in the next two sections
	and no later), we can conclude the eigen behavior of a random kernel $𝔾$.
	
	\begin{thm}[Typical CLT behavior]\label{thm:typical-CLT}
		Assume $W$ is a $q$-ary channel.
		Assume $ℓ≥\max(3^q,e⁴,q⁵)$.
		Assume $𝔾∈\GL(ℓ,q)$ is drawn uniformly at random.
		Then, with failing probability at most $2ℓ^{-㏑(ℓ)/20}+6ℓq^{-√ℓ/13}$,
		\[÷{h(H(WＷ1))+h(H(WＷ2))+\dotsb+h(H(WＷ{ℓ}))}{ℓh(H(W))}<4ℓ^{-1/2+3α}.\]
	\end{thm}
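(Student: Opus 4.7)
The plan is a three-way case split on $H(W)$ matching the partition already isolated by the three preceding lemmas, followed by a union bound over the exceptional events. On the reliable slice $0<H(W)<\ell^{-2}$, \cref{lem:typical-HZ} delivers the stronger bound $2\ell^{-1/2+3\alpha}<4\ell^{-1/2+3\alpha}$ deterministically, provided \cref{ine:enum-Z} holds for every $j\in[\ell]$. On the noisy slice $1-\ell^{-2}<H(W)<1$, \cref{lem:typical-HS} yields the analogous bound under \cref{ine:enum-S}. On the mediocre slice $\ell^{-2}\le H(W)\le1-\ell^{-2}$, \cref{lem:typical-HH} gives exactly $4\ell^{-1/2+3\alpha}$ at the cost of its own exceptional probability $2\ell^{-\ln(\ell)/20}$. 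Since these three slices partition $[0,1]$, on the intersection of the good events the conclusion $4\ell^{-1/2+3\alpha}$ holds regardless of which slice $W$ sits in; the worst of the three bounds is the one reported in the theorem.

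The LDP prerequisites are collected as follows. \cref{thm:typical-LDP} bounds the failure probability of \cref{ine:enum-Z} for a fixed $j$ by $3q^{-\sqrt\ell/13}$; its dual corollary does the same for \cref{ine:enum-S}. Since the single random draw of $\mathbb{G}\in\GL(\ell,q)$ feeds both \cref{lem:typical-HZ} and \cref{lem:typical-HS}, a single union bound across the $2\ell$ indices---$\ell$ on the primal side and $\ell$ on the dual side---gives the aggregate contribution $6\ell q^{-\sqrt\ell/13}$ to the overall failure probability. Adding the mediocre-case contribution $2\ell^{-\ln(\ell)/20}$ produces exactly the claimed bound $2\ell^{-\ln(\ell)/20}+6\ell q^{-\sqrt\ell/13}$.

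The main obstacle in carrying out this plan is that \cref{lem:typical-HH} is itself not yet fully proved at this point in the exposition: it has been reduced to the two still-outstanding inequalities \cref{ine:bob,ine:eve}, which the author defers to \cref{sec:gallager,sec:hayashi} and plans to obtain by reductions to Gallager-style random coding over a noisy channel and Hayashi-style random linear coding over a wiretap channel, respectively. For the purpose of assembling the present theorem I would treat \cref{lem:typical-HH} as a black box; once its contents are supplied, the remaining work is just the partition-plus-union-bound bookkeeping above. The final $\max$ over the three slices is $4\ell^{-1/2+3\alpha}$, and the proof is complete.
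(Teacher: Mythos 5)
Your proposal is correct and matches the paper's own (implicit) argument: the paper assembles \cref{thm:typical-CLT} exactly by this three-way split on $H(W)$ via \cref{lem:typical-HZ,lem:typical-HS,lem:typical-HH}, with the $6ℓq^{-√ℓ/13}$ term coming from the union bound over the $2ℓ$ coset-distance events of \cref{thm:typical-LDP} and its dual, and the $2ℓ^{-㏑(ℓ)/20}$ term from the mediocre case whose two halves are deferred to the Gallager and Hayashi sections. Treating \cref{lem:typical-HH} as a black box at this point is precisely what the paper does.
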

	
	This is equivalent to saying that $4ℓ^{-1/2+3α}$ is the eigenvalue, or $ϱ≈1/2-4α$.
	This exponent approaches $1/2$ as $ℓ→∞$.
	Hence we know, at least, that there is a chance to achieve $ρ+2π→1$.
	How exactly \cref{thm:typical-LDP,thm:typical-CLT} imply that all $ρ+2π<1$
	are possible will be explained in \cref{sec:signify}.

\section{Symmetrization and Universal Bound} \label{sec:prepare}

	Before the actual proof of \cref{ine:bob,ine:eve},
	There are two more tools to be developed.
	First is a reduction borrowed from \cite{MT14} that says, instead of considering
	general $q$-ary channel $W$, it suffices to consider a symmetric one $˜W$.
	A convenient consequence is that
	the uniform input distribution will achieve capacity.
	This helps simplify the inequalities further.
	
	\begin{lem}[Symmetrization]
		For any $q$-ary channel, there is a symmetric $q$-ary channel $˜W$ such that
		$H(˜W)=H(W)$ and $H(˜WＷj)=H(WＷj)$ for all $j∈[ℓ]$.
	\end{lem}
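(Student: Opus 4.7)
The plan is to symmetrize $W$ by introducing a uniform auxiliary shift that is revealed to the decoder. I would take $˜W$ to have input alphabet $𝔽_q$, output alphabet $𝔽_q×𝒴$, and transition probabilities
\[˜W(r,y｜˜x)≔Q(˜x+r)W(y｜˜x+r),\]
where $Q$ is the input distribution with respect to which $H(W)$ is defined. Operationally this models: draw $˜X$ uniformly on $𝔽_q$ independently of a $Q$-distributed $X$, then observe the pair $(R,Y)$ with $R≔X-˜X$ and $Y$ the output of $W$ on input $X$.

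First I would verify that $˜W$ is symmetric under the translation action $σ_d(r,y)≔(r-d,y)$ of $𝔽_q$ on the output alphabet; this is immediate from the identity $Q(˜x+d+(r-d))W(y｜˜x+d+(r-d))=Q(˜x+r)W(y｜˜x+r)$. Under the uniform-$˜X$ joint law, routine marginal calculations give: $X=˜X+R$ follows $Q$; the pair $(˜X,R)$ has density $(1/q)Q(˜x+r)$, which makes $R$ uniform on $𝔽_q$ and independent of $X$, hence of $Y$; and $R$ remains uniform conditional on $Y$ since $ℙ\{R=r｜Y=y\}=∑_x(1/q)ℙ\{X=x｜Y=y\}=1/q$.

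Second, I would derive $H(˜W)=H(W)$ from the chain
\[H(˜W)=H(˜X｜R,Y)=H(X｜R,Y)=H(X｜Y)=H(W),\]
where the middle equality is the deterministic bijection $˜X=X-R$ at fixed $R$, and the next one uses $I(X;R｜Y)=H(R｜Y)-H(R)=0$. For the synthetic channels I would take $ℓ$ i.i.d.\ copies of the experiment and apply linearity of $v↦vG^{-1}$ to get $˜U₁^ℓ=U₁^ℓ-R₁^ℓG^{-1}$. At a fixed realization of $R₁^ℓ$ the correspondence $˜U↔U$ is a deterministic shift, so
\[H(˜WＷj)=H(˜U_j｜R₁^ℓ,Y₁^ℓ,˜U₁^{j-1})=H(U_j｜R₁^ℓ,Y₁^ℓ,U₁^{j-1}),\]
after which the independence of $R₁^ℓ$ from $(U₁^ℓ,Y₁^ℓ)$ lets me drop $R₁^ℓ$ to obtain $H(˜WＷj)=H(WＷj)$.

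The main obstacle I anticipate is bookkeeping the joint independence structure of $(˜X₁^ℓ,R₁^ℓ,X₁^ℓ,Y₁^ℓ,U₁^ℓ)$ cleanly; each identity is elementary, but the notation is where mistakes tend to hide. A minor point to flag is that the construction uses the same $Q$ appearing in the definition of $H(W)$: if the lemma is to be read with $Q$ the capacity-achieving distribution throughout the chapter, nothing else is needed, and if $Q$ is allowed to vary, the same construction adapts verbatim.
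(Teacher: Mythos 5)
Your proposal is correct and is essentially the paper's approach: the uniform-dither construction $˜W(r,y｜˜x)=Q(˜x+r)W(y｜˜x+r)$ is exactly the ``flag'' symmetrization used in the asymmetric-channel chapter and in the cited treatment of Mori--Tanaka, which the paper's proof simply defers to. Your direct verification that the dither $R₁^ℓ$ is independent of $(U₁^ℓ,Y₁^ℓ)$ and can be conditioned away, giving $H(˜W)=H(W)$ and $H(˜WＷj)=H(WＷj)$, fills in soundly the details the paper leaves to the citation.
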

	
	\begin{proof}
		The strategy here is close to what we did in \cref{cha:dual}---one can establish
		an equivalence relation $W≅˜W$ on channels and show that channel parameters
		and channel transformations respect the equivalence relation.
		
		Please see \cite[Definition~6 and Lemmas 7 and~8]{MT14}
		plus the arguments in between for the formal treatment.
	\end{proof}
	
	The second tool, built on top of the first one, is an inequality
	concerning Gallager's $\Eo$ function.
	Let us start from the definition.
	
	\begin{dfn}
		Define \emph{Gallager's E-null function}
		\[\Eo(t)≔-㏑∑_{y∈𝒴}（∑_{x∈𝒳}Q(x)W(y｜x)^÷1{1+t}）^{1+t}\]
		and its complement
		\[¯\Eo(t)≔㏑∑_{y∈𝒴}（∑_{x∈𝒳}W(x,y)^÷1{1+t}）^{1+t}.\]
	\end{dfn}
	
	$¯\Eo$ is said to be the complement of $\Eo$ as
	\begin{align*}
		¯\Eo(t)
		&	=㏑∑_{y∈𝒴}（∑_{x∈𝒳}(q^{-1}W(y｜x))^÷1{1+t}）^{1+t}	\\
		&	=t㏑q+㏑∑_{y∈𝒴}（∑_{x∈𝒳}q^{-1}W(y｜x)^÷1{1+t}）^{1+t}=t㏑q-\Eo(t).
	\end{align*}
	Or $\Eo(t)+¯\Eo(t)=t㏑q$ in short.
	That $Q$ is uniform is used, otherwise
	a non-constant cannot penetrate the summations.
	The E-null function and its complement are
	deeply connected to the following family of measures.
	
	\begin{dfn}
		For any $t∈[-2/5,1]$, define the \emph{$t$-tilted probability mass function}
		$\Wt：𝒳×𝒴→[0,1]$ as in \cite[Definition~1]{CS07}:
		\[\Wt(x,y)≔÷{\(∑_{ξ∈𝒳}W(ξ,y)^÷1{1+t}\)^{1+t}}
			{∑_{η∈𝒴}\(∑_{ξ∈𝒳}W(ξ,η)^÷1{1+t}\)^{1+t}}
			×÷{W(x,y)^÷1{1+t}}{∑_{ξ∈𝒳}W(ξ,y)^÷1{1+t}}\]
	\end{dfn}
	
	When $t=0$, the tilted $\Wt(x,y)$ falls back to its italic origin $W(x,y)$.
	These measures can be interpreted as follows:
	$\Wt$ behaves like a channel with a dedicated input distribution.
	The first fraction in the definition
	specifies the output distribution $\Wt(y)$.
	The second fraction specifies the posterior distribution
	$\Wt(x｜y)$ when $y$ is known.
	As $\Wt$ is not an actual channel, it is not meaningful to
	alter the input distribution and ask for the corresponding output.
	Like the symmetrization technique, all that matters is that
	we can compute the conditional entropies as if they were real channels.
	Quantities we are interested in are listed below.
	
	\begin{dfn}
		Let $H_e$ be the base-$e$ entropy.
		Let $H_e(\Wt)$ be $H_e(\Xt｜\Yt)$ where
		$(\Xt,\Yt)$ is a random tuple that obeys $\Wt$.
		Let $H_e(\Xt↾y)$ be the entropy of
		the posterior distribution of $\Xt$ given $\Yt=y$;
		to be specific, $H_e(\Xt↾y)=∑_{x∈X}\Wt(x｜y)㏑\Wt(x｜y)$.
	\end{dfn}
	
	Then $¯\Eo$ and $\Wt$ are connected in the following manner---$\Wt$
	is a family of channels that “lives along the path” $¯\Eo(t)$.
	
	\begin{lem}[Second derivative]\label{lem:dtdt}
		\cite[Formula (13) and~(19)]{CS07}
		For $t∈[0,1]$,
		\def\dt{÷\diff{\diff t}}
		\def\dtdt{÷{\diff²}{\diff t²}}
		\[\dt¯\Eo(t)=¯\Eo'(t)=H_e(\Wt)\]
		and
		\begin{align*}
			\dtdt¯\Eo(t)
			&	=¯\Eo''(t)=\dt H_e(\Wt)
				=÷1{1+t}∑_{y∈𝒴}\Wt(y)∑_{x∈𝒳}\Wt(x｜y)㏑(\Wt(x｜y))^2	\\*
			&\kern10em +÷t{1+t}∑_{y∈𝒴}\Wt(y)H_e(\Xt↾y)^2-H_e(\Wt)^2.
				\label{equ:holomorph}
		\end{align*}
	\end{lem}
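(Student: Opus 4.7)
The plan is to differentiate $\bar\Eo(t)$ directly, using the tilted measure $\Wt$ as the natural intermediate object. Introduce shorthands $g(y,t) \coloneqq \sum_{x} W(x,y)^{1/(1+t)}$ and $f(y,t) \coloneqq g(y,t)^{1+t}$, so that $\bar\Eo(t) = \ln Z(t)$ with $Z(t) \coloneqq \sum_y f(y,t)$; by inspection $\Wt(y) = f(y,t)/Z(t)$ and $\Wt(x\mid y) = W(x,y)^{1/(1+t)}/g(y,t)$. The whole computation hinges on the algebraic identity obtained by inverting the posterior formula,
\[
\ln W(x,y) \;=\; (1+t)\bigl[\ln \Wt(x\mid y) + \ln g(y,t)\bigr],
\]
which lets every $\ln W(x,y)$ that arises during differentiation be re-expressed in terms of $\Wt$.

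For the first derivative I would compute $\partial_t \ln f(y,t) = \ln g(y,t) + (1+t)\,\partial_t g/g$. The chain rule gives $\partial_t g(y,t) = -(1+t)^{-2}\sum_x W(x,y)^{1/(1+t)}\ln W(x,y)$, and substituting the identity above converts the inner sum into the posterior entropy $H_e(\Xt \upharpoonright y) = -\sum_x \Wt(x\mid y)\ln \Wt(x\mid y)$. The stray $\ln g(y,t)$ pieces cancel exactly, leaving $\partial_t f(y,t) = f(y,t)\, H_e(\Xt \upharpoonright y)$. Dividing by $Z(t)$ then yields $\bar\Eo'(t) = \sum_y \Wt(y)H_e(\Xt \upharpoonright y) = H_e(\Wt)$, matching the first claim.

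For the second derivative I would differentiate $\bar\Eo'(t) = \sum_y \Wt(y)H_e(\Xt \upharpoonright y)$ once more. There are three sources of $t$-dependence: the outer weight $\Wt(y)$, the inner posterior $\Wt(x\mid y)$, and the explicit exponent $1/(1+t)$ that threads through both. By the first-derivative calculation applied pointwise, $\partial_t \ln \Wt(y) = H_e(\Xt \upharpoonright y) - H_e(\Wt)$, so differentiating the outer factor contributes $\sum_y \Wt(y) H_e(\Xt \upharpoonright y)^2 - H_e(\Wt)^2$, a variance-like quantity. Differentiating the inner entropy and again using the inversion identity to dispose of $\ln W(x,y)$ produces the square-of-log term $\sum_y \Wt(y)\sum_x \Wt(x\mid y)(\ln \Wt(x\mid y))^2$. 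Tracking the chain-rule prefactors carefully assigns weight $1/(1+t)$ to the square-of-log piece and weight $t/(1+t)$ to the variance-like piece, with $-H_e(\Wt)^2$ absorbed into the latter, giving the stated formula.

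The hard part will be the second-derivative bookkeeping: ensuring that the $\ln g(y,t)$ remainders cancel across the three contributions and that the two prefactors $1/(1+t)$ and $t/(1+t)$ emerge with correct signs. A clean sanity check is $t=0$, where $\Wt$ collapses to $W$ with uniform input, the $t/(1+t)$ term vanishes, and the formula reduces to the standard identity that $\bar\Eo''(0)$ equals the variance of $-\ln W(X\mid Y)$, i.e.\ the (base-$e$) varentropy; any sign or coefficient error in the general case will be exposed there.
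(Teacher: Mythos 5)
Your derivation is correct, and it is worth noting that the paper does not prove this lemma at all: it is imported verbatim from the cited reference (Formulas (13) and (19) of \cite{CS07}), so you are supplying an omitted proof rather than deviating from one. The route you sketch is the natural one, and the bookkeeping you flag as the hard part does close. Writing $\bar\Eo(t)=\ln\sum_y g(y,t)^{1+t}$ with $g(y,t)=\sum_x W(x,y)^{1/(1+t)}$, your inversion identity gives $\partial_t\ln f(y,t)=H_e(\Xt\upharpoonright y)$, hence $\bar\Eo'(t)=H_e(\Wt)$ and $\partial_t\ln\Wt(y)=H_e(\Xt\upharpoonright y)-H_e(\Wt)$; the crux for the second derivative is the pointwise identity $\partial_t\ln\Wt(x\mid y)=-\tfrac1{1+t}\bigl(\ln\Wt(x\mid y)+H_e(\Xt\upharpoonright y)\bigr)$, which yields $\partial_t H_e(\Xt\upharpoonright y)=\tfrac1{1+t}\bigl(\sum_x\Wt(x\mid y)(\ln\Wt(x\mid y))^2-H_e(\Xt\upharpoonright y)^2\bigr)$, and adding the outer-weight contribution $\sum_y\Wt(y)H_e(\Xt\upharpoonright y)^2-H_e(\Wt)^2$ produces exactly the $\tfrac1{1+t}$ and $\tfrac t{1+t}$ weights of the statement; your $t=0$ varentropy check is also right. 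Two small remarks: since the alphabets are finite you only need the convention $0\ln0=0$ (or restrict to $W(x,y)>0$) for the logarithms to be defined, and your computation is valid verbatim for all $t>-1$, which actually renders the paper's subsequent appeal to holomorphic continuation to reach $t\in[-2/5,0)$ unnecessary --- a modest gain of the self-contained derivation over the citation.
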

	
	Since $¯\Eo''(t)$ and every other term in \cref{equ:holomorph}
	is holomorphic in $t$, the equation holds in any region that assumes no poles.
	In particular, $-2/5≤t≤1$ is such a region.
	In that region, the next lemma helps bounding the terms in $¯\Eo''(t)$.
	
	\begin{lem}[Second moment]\label{lem:moment}
		If $w₁,w₂,…w_q$ are positive numbers that totals to $1$, then
		\[∑_iw_i㏑(w_i)^2≤\begin{Bmatrix}
			㏑(q)^2	&	†for †q≥3 \\
			0.563		&	†for †q=2
		\end{Bmatrix}≤1.2㏑(q)^2.\]
	\end{lem}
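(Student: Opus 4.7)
The plan is to treat $F(w_1,\ldots,w_q) := \sum_i f(w_i)$ with $f(w) := w(\ln w)^2$ (continuously extended by $f(0) = 0$) as a constrained optimization on the simplex $\sum_i w_i = 1$, $w_i \geq 0$, and to show that for $q \geq 3$ the maximum equals $(\ln q)^2$, attained at the uniform distribution, while for $q = 2$ the maximum is attained at an asymmetric interior critical point whose value can be computed in closed form.

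I will first compute $f''(w) = 2(1+\ln w)/w$, noting that $f$ is concave on $(0,1/e)$ and convex on $(1/e,1)$. A Lagrange-multiplier analysis then gives the first-order condition $f'(w_i) = (\ln w_i)^2 + 2\ln w_i = \lambda$, a quadratic in $\ln w_i$; hence at any interior critical point the coordinates $w_i$ take at most two distinct values $a, b$ with $\ln a + \ln b = -2$, i.e., $ab = e^{-2}$. Writing $k$ of the $w_i$'s as $a$ and $q-k$ as $b$, with $ka + (q-k)b = 1$, and eliminating $a = e^{-2}/b$ produces $(q-k)b^2 - b + ke^{-2} = 0$, which admits real solutions iff $k(q-k) \leq e^2/4 \approx 1.847$. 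For $q \geq 3$ and $1 \leq k \leq q-1$ we have $k(q-k) \geq q-1 \geq 2$, ruling out all non-uniform interior critical points. The sole interior candidate is therefore the uniform $w_i = 1/q$, at which $f''(1/q) = 2q(1-\ln q) < 0$ confirms a local maximum with $F = (\ln q)^2$; an induction on $q$ handles the boundary of the simplex, since setting some $w_i = 0$ reduces to the $(q-1)$-ary problem, whose maximum is at most $\max((\ln(q-1))^2, 0.563)$, both below $(\ln q)^2$ for $q \geq 3$.

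For $q = 2$, the Lagrange condition $w(1-w) = e^{-2}$ yields critical points $w = (1 \pm \sqrt{1-4e^{-2}})/2$. Writing $\ln w = d-1$ and $\ln(1-w) = -d-1$ at the non-uniform point, so that $\sinh d = \sqrt{e^2 - 4}/2$ and $2w - 1 = 2\sinh(d)/e$, a direct expansion gives $F = d^2 + 1 - 4d\sinh(d)/e$, which evaluates numerically to approximately $0.5629$. The uniform $w = 1/2$ is instead a local \emph{minimum} ($f''(1/2) = 4(1-\ln 2) > 0$) with value $(\ln 2)^2 \approx 0.480$, and the boundary $w \in \{0,1\}$ gives $F = 0$; hence the $q=2$ maximum is the non-uniform critical value, bounded above by $0.563$. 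The unified inequality $F \leq 1.2(\ln q)^2$ then follows because $(\ln q)^2 \leq 1.2(\ln q)^2$ trivially and $0.563 \leq 1.2(\ln 2)^2 \approx 0.577$.

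The main obstacle lies in the $q=2$ case: unlike for $q \geq 3$, the uniform is a local minimum rather than a maximum, so the argument cannot be closed by a symmetry or concavity-only argument; one must locate the asymmetric critical point explicitly and verify, via the closed form $\sinh d = \sqrt{e^2-4}/2$, that $F$ there is just below $0.563$. The margin between the $q=2$ bound and $1.2(\ln 2)^2 \approx 0.577$ is only about $0.014$, so the numerical estimate needs to be carried out carefully.
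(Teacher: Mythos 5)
Your proposal is correct, and in fact it supplies something the dissertation does not: the paper states this lemma bare (it is invoked immediately afterwards to get the universal quadratic bound on $\Eo''$, with no proof or explicit citation), so there is no in-paper argument to compare against. Your Lagrange analysis is sound: the stationarity condition $(\ln w_i)^2+2\ln w_i=\lambda$ forces at most two distinct coordinate values $a,b$ with $ab=e^{-2}$, the elimination $(q-k)b^2-b+ke^{-2}=0$ has no real root once $k(q-k)>e^2/4\approx 1.85$, which kills all non-uniform interior critical points for $q\ge 3$, and the boundary is handled by induction since $\max\((\ln(q-1))^2,0.563\)<(\ln q)^2$ for $q\ge3$; compactness then pins the maximum at the uniform point with value $(\ln q)^2$. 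For $q=2$ your closed form checks out: $\cosh d=e/2$, $\sinh d=\sqrt{e^2-4}/2$, and $F=d^2+1-4d\sinh(d)/e\approx 0.5629$, which is indeed below $0.563$ and below $1.2(\ln 2)^2\approx 0.5765$, so the stated chain of inequalities holds (the margin is about $0.014$, as you note, and your identification of $w=1/2$ as a local minimum for $q=2$ is what makes the asymmetric critical point unavoidable). The only stylistic remark is that the second-derivative check at the uniform point is not needed for $q\ge3$—the argument "only interior critical point, boundary strictly smaller, maximum exists" already closes the case—but it does no harm.
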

	
	\Cref{lem:dtdt} (with the holomorphic continuation to $t=-2/5$) and
	\cref{lem:moment} jointly imply the following universal quadratic bound.
	
	\begin{lem}[Universal quadratic bound]\label{lem:quadratic}
		\cite[Theorem~2]{CS07}
		Let $W$ be a $q$-ary channel.
		Assume the uniform input distribution.
		Then Gallager's E-null function satisfies
		\begin{gather*}
			\Eo(0)=0,	\\
			\Eo'(0)=I(W)㏑q,	\\
			\Eo''(t)≥-2㏑(q)^2
		\end{gather*}
		for all $t∈[-2/5,1]$.
		In particular, it satisfies
		\[\Eo(t)≥I(W)t㏑q-t^2㏑(q)^2.\]
	\end{lem}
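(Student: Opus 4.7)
The plan is to reduce everything to the complementary function $\bar\Eo(t) = t\ln q - \Eo(t)$, for which \cref{lem:dtdt} already supplies the first two derivatives in a closed form suited to \cref{lem:moment}. I would prove the four assertions in the order they are stated.

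First, $\Eo(0) = 0$ is a one-line calculation: at $t=0$, the inner sum collapses to $\sum_x Q(x)W(y|x) = W(y)$, and the outer sum gives $\sum_y W(y) = 1$, whose logarithm is $0$. Next, I would compute $\Eo'(0) = \ln q - \bar\Eo'(0)$ by differentiating the identity $\Eo(t) + \bar\Eo(t) = t\ln q$ and applying \cref{lem:dtdt}, which gives $\bar\Eo'(0) = H_e(\Wt|_{t=0}) = H_e(W)$. Since the uniform input has $H_e(Q) = \ln q$, this matches $I(W)\ln q = \ln q - H_e(W)$.

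The main work is the second-derivative bound. Since $\Eo''(t) = -\bar\Eo''(t)$, it suffices to prove $\bar\Eo''(t) \leq 2\ln(q)^2$ on $[-2/5, 1]$. Using the explicit formula of \cref{lem:dtdt}, I would bound each of the three summands separately. The first summand is a convex combination (in $y$) of $\sum_x\Wt(x|y)\ln(\Wt(x|y))^2$, which \cref{lem:moment} controls by $1.2\ln(q)^2$; the multiplier $1/(1+t)$ attains its maximum $5/3$ at $t = -2/5$, yielding at most $(5/3)(1.2\ln(q)^2) = 2\ln(q)^2$. The second summand involves $H_e(\Xt{\upharpoonright}y)^2 \leq \ln(q)^2$ with coefficient $t/(1+t)$; the third is $-H_e(\Wt)^2 \leq 0$. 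For $t \in [-2/5, 0]$ the second and third terms are non-positive, so the first term alone suffices. For $t \in [0, 1]$, the multiplier $1/(1+t)$ is at most $1$ and $t/(1+t)$ is at most $1/2$, giving a total of at most $1.2\ln(q)^2 + 0.5\ln(q)^2 = 1.7\ln(q)^2 \leq 2\ln(q)^2$. In both cases the bound holds.

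Finally, the quadratic inequality follows by integrating twice from $0$: the lower bound on $\Eo''$ together with the values of $\Eo(0)$ and $\Eo'(0)$ gives $\Eo(t) \geq I(W)t\ln q - t^2\ln(q)^2$ for $t \in [0,1]$ directly, and the same bound on $t \in [-2/5,0]$ by integrating from $0$ downward (the sign flips twice). The main obstacle I foresee is purely bookkeeping: one has to verify that the coefficients $1/(1+t)$ and $t/(1+t)$ never conspire with the signs of the three summands to break the $2\ln(q)^2$ bound, and this is where the asymmetry of the interval $[-2/5, 1]$ and the $1.2$ in \cref{lem:moment} do their work. Once this is settled, the holomorphic continuation of \cref{equ:holomorph} down to $t = -2/5$ is exactly what legitimizes evaluating the formula at negative $t$ despite the fact that $\Wt$ was originally defined for $t \geq 0$.
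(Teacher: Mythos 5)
Your proposal is correct and follows essentially the same route as the paper: bound $\bar\Eo''(t)$ via \cref{lem:dtdt} and \cref{lem:moment} with the same treatment of the $t/(1+t)$ term (the paper writes it as $\max(0,t)/(1+t)$), then recover the quadratic bound from $\Eo(0)$, $\Eo'(0)$, and the second-derivative bound. The only cosmetic difference is that you integrate twice from $0$ whereas the paper invokes Taylor's theorem with a Lagrange remainder $\Eo''(\tau)t^2/2$; these are equivalent, and your explicit verification of $\Eo(0)=0$ and $\Eo'(0)=I(W)\ln q$ simply spells out what the paper leaves implicit.
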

	
	\begin{proof}
		With \cref{lem:moment},
		$∑_{x∈𝒳}\Wt(x｜y)(㏑\Wt(x｜y))^2≤1.2㏑(q)^2$ can be stated.
		Now \cref{equ:holomorph} becomes
		\begin{align*}
			¯\Eo''(t)
			&	≤÷1{1+t}∑_{y∈𝒴}\Wt(y)·1.2㏑(q)^2
				+÷{\max(0,t)}{1+t}∑_{y∈𝒴}\Wt(y)㏑(q)^2	\\
			&	≤÷1{1+t}·1.2㏑(q)^2+÷{\max(0,t)}{1+t}㏑(q)^2≤2㏑(q)^2
		\end{align*}
		for all $t∈[-2/5,1]$.
		
		Since $\Eo(t)$ is a linear function $t㏑q$ minus $¯\Eo(t)$,
		their first derivatives sum to $㏑q$ while
		their second derivatives are opposite.
		That means that $\Eo(t)=\Eo(0)+\Eo(0)'t+\Eo''(τ)t²/2$, for some $τ∈[-2/5,1]$,
		and therefore $≥0+I(W)t㏑q-t²㏑(q)²$.
	\end{proof}
	
	After seeing that it suffices to consider symmetric channels and that $E(t)$ has
	a universal quadratic bound, we are now ready to prove \cref{ine:bob,ine:eve}.

\section{Noisy-Channel Random Coding}\label{sec:gallager}

	This section and the next take advantage of the universal bound developed
	three lines ago and continues proving \cref{thm:typical-CLT}.
	This section deals with
	\[∑_{i=⌈H(W)ℓ+ℓ^{1/2+α}⌉+1}^ℓh(H(WＷi))<ℓ^{-1/2+α}\tagcopy{ine:bob}\]
	by passing it to an estimate that captures the performance of noisy-channel coding.
	
	\begin{proof}[Proof of \cref{ine:bob}]
		Owing to $h$'s concavity, the left-hand side
		of \cref{ine:bob} is first simplified into
		\[∑_{i=j+1}^ℓh(H(WＷk))≤(ℓ-j)h（÷1{ℓ-j}∑_{i=j+1}^ℓH(WＷi)）,
			\label{ine:bob-chain}\]
		where $j≔⌈H(W)ℓ+ℓ^{1/2+α}⌉$ for short.
		It suffices to prove that the right-hand side is less than $ℓ^{-1/2+α}$.
		But what lies inside the $h$ on the right-hand side is a sum of $H(WＷi)$,
		which is equal to, by the chain rule, $H(U_{j+1}^ℓ｜U₁^jY₁^ℓ)$.
		In order to prove \cref{ine:bob-chain}, I will then show
		\[(ℓ-j)h（÷1{ℓ-j}H(U_{j+1}^ℓ｜Y₁^ℓU₁^j)）<ℓ^{-1/2+α}.\label{ine:bob-block}\]
		
		But what is $H(U_{j+1}^ℓ｜Y₁^ℓU₁^j)$?
		It measures the equivocation at Bob's end when $U₁^j$ is known to Bob.
		In other words, we may as well pretend that
		\begin{itemize}
			\item	there is a random rectangular full-rank matrix $𝔾'$ with
					$ℓ$ columns and only $k≔ℓ-j=⌊I(W)ℓ-ℓ^{1/2+α}⌋$ rows,
			\item	Alice computes and sends $X₁^ℓ≔U_{j+1}^ℓ𝔾'$ to Bob, and
			\item	Bob attempts to decode $ˆU_{j+1}^ℓ$ upon receiving $Y₁^ℓ$ using
					the maximum a posteriori decoder.
		\end{itemize}
		The equivocation is thus, by Fano's inequality,
		bounded in terms of the probability that Bob fails to decode $U_{j+1}^ℓ$:
		\begin{align*}
			H(U_{j+1}^ℓ｜Y₁^ℓU₁^j)
			&	≤-Ｐ㏑_qＰ-(1-Ｐ)㏑_q(1-Ｐ)+Ｐ㏑_q(q^k-1)	\\
			&	≤-Ｐ㏑_qＰ+÷{Ｐ}{㏑q}+Ｐ=Ｐ·（÷{1-㏑Ｐ}{㏑q}+k）.	\label{ine:Fano}
		\end{align*}
		Here $Ｐ$ is the probability that Bob fails to decode, $ˆU_{j+1}^ℓ≠U_{j+1}^ℓ$.
		
		In what follows is how to compute Bob's block error probability.
		The generator matrix $𝔾'$ used by Alice is selected uniformly
		from the ensemble of full-rank $k$-by-$ℓ$ matrices.
		The difference of every pair of codewords
		distributes uniformly on $𝔽_q^ℓ、\{0₁^ℓ\}$.
		Over symmetric channels, the difference alone determines the difficulty
		of decoding because $W^ℓ(y₁^ℓ｜ξ₁^ℓ+x₁^ℓ)=W^ℓ(σ₁^ℓ(y₁^ℓ)｜x₁^ℓ)$ for
		some component-wise involution $σ₁^ℓ$ on $𝒴^ℓ$ depending on $ξ₁^ℓ$.
		Therefore, Gallager's bound applies.
		To elaborate, let $t∈[0,1]$.
		Then Bob's average error probability satisfies
		\cite[Inequalities (5.6.2) to~(5.6.14)]{Gallager68}
		\begin{align*}
			\qquad&\kern-2em
			𝔼Ｐ=𝔼EP\{†Bob fails to decode †U_{j+1}^ℓ† given †𝔾',Y₁^ℓ\} \\
			&=	𝔼∑_{u₁^k}÷1{q^k}∑_{y₁^ℓ}W^ℓ(y₁^ℓ｜u₁^k𝔾')
				𝕀\{†Bob has †ˆU_{j+1}^ℓ≠u₁^k† given †𝔾',y₁^ℓ｜U_{j+1}^ℓ=u₁^k\}	\\
			&=	𝔼∑_{y₁^ℓ}W^ℓ(y₁^ℓ｜0₁^ℓ)
				𝕀\{†Bob has †ˆU_{j+1}^ℓ≠0₁^k† given †𝔾',y₁^ℓ｜U_{j+1}^ℓ=0₁^k\}	\\
			&≤	𝔼∑_{y₁^ℓ}W^ℓ(y₁^ℓ｜0₁^ℓ)\biggl(∑_{v₁^k≠0₁^k}
				𝕀\{†Bob prefers †v₁^k† over †0₁^k† given †𝔾',y₁^ℓ\}\biggr)^t \\
			&≤	𝔼∑_{y₁^ℓ}W^ℓ(y₁^ℓ｜0₁^ℓ)（∑_{v₁^k≠0₁^k}
				÷{W^ℓ(y₁^ℓ｜v₁^k𝔾')^÷1{1+t}}{W^ℓ(y₁^ℓ｜0₁^ℓ)^÷1{1+t}}）^t \\
			&=	𝔼∑_{y₁^ℓ}W^ℓ(y₁^ℓ｜0₁^ℓ)^÷1{1+t}
				（∑_{v₁^k≠0₁^k}W^ℓ(y₁^ℓ｜v₁^k𝔾')^÷1{1+t}）^t \\
			&≤	∑_{y₁^ℓ}W^ℓ(y₁^ℓ｜0₁^ℓ)^÷1{1+t}
				（𝔼∑_{v₁^k≠0₁^k}W^ℓ(y₁^ℓ｜v₁^k𝔾')^÷1{1+t}）^t \\
			&=	∑_{y₁^ℓ}W^ℓ(y₁^ℓ｜0₁^ℓ)^÷1{1+t}（∑_{x₁^ℓ≠0₁^ℓ}
				÷{q^k-1}{q^ℓ-1}W^ℓ(y₁^ℓ｜x₁^ℓ)^÷1{1+t}）^t \\
			&≤	q^{kt}∑_{y₁^ℓ}W^ℓ(y₁^ℓ｜0₁^ℓ)^÷1{1+t}
				（∑_{x₁^ℓ≠0₁^ℓ}÷1{q^ℓ}W^ℓ(y₁^ℓ｜x₁^ℓ)^÷1{1+t}）^t \\
			&≤	q^{kt}∑_{y₁^ℓ}W^ℓ(y₁^ℓ｜0₁^ℓ)^÷1{1+t}
				（∑_{x₁^ℓ}÷1{q^ℓ}W^ℓ(y₁^ℓ｜x₁^ℓ)^÷1{1+t}）^t \\
			&=	q^{kt}∑_{y₁^ℓ}（∑_{x₁^ℓ}÷1{q^ℓ}W^ℓ(y₁^ℓ｜x₁^ℓ)^÷1{1+t}）
				（∑_{x₁^ℓ}÷1{q^ℓ}W^ℓ(y₁^ℓ｜x₁^ℓ)^÷1{1+t}）^{t} \\
			&=	q^{kt}∑_{y₁^ℓ}（∑_{x₁^ℓ}÷1{q^ℓ}W^ℓ(y₁^ℓ｜x₁^ℓ)^÷1{1+t}）^{1+t} \\
			&=	q^{kt}∑_{y₁^ℓ}（∑_{x₁^ℓ}Q^ℓ(x₁^ℓ)W^ℓ(y₁^ℓ｜x₁^ℓ)^÷1{1+t}）^{1+t} \\
			&=	\exp(kt㏑q-(\emph{the E-null function of }W^ℓ)(t)) \\
			&=	\exp(kt㏑q-ℓ\Eo(t)).
		\end{align*}
		
		In summary, the average block error probability
		$𝔼Ｐ=𝔼EP\{$Bob fails to decode $U_{j+1}^ℓ$ given $𝔾'\}$
		is no more than $\exp(kt㏑q-ℓ\Eo(t))$ whenever $0≤t≤1$.
		Recall the universal quadratic bound developed in \cref{lem:quadratic}:
		$\Eo(t)≥I(W)t㏑q-t^2㏑(q)^2$.
		We obtain that the exponent is
		\begin{align*}
			kt㏑q-ℓ\Eo(t)
			&	≤(I(W)ℓ-ℓ^{1/2+α})t㏑q-ℓ\Eo(t) \\
			&	≤(I(W)ℓ-ℓ^{1/2+α})t㏑q-ℓ(I(W)t㏑q-t^2㏑(q)^2) \\
			&	=(ℓt㏑q-ℓ^{1/2+α})t㏑q \\
			\shortintertext{(redeem the inequality at $t=ℓ^{-1/2+α}/2㏑q$)}
			&↦	(ℓℓ^{-1/2+α}/2-ℓ^{1/2+α})ℓ^{-1/2+α}/2 \\
			&=	-ℓ^{2α}/4 = -ℓ^{2㏑(㏑ℓ)/㏑ℓ}/4 = -㏑(ℓ)^2/4.
		\end{align*}
		So far, the average error probability $𝔼Ｐ$ is shown to be
		less than $\exp(-㏑(ℓ)²/4)=ℓ^{-㏑(ℓ)/4}$.
		
		Run Markov's inequality with cutoff $ℓ^{-㏑(ℓ)/20}$.
		To put it another way, we sample a random full-rank matrix $𝔾'∈𝔽_q^{k×ℓ}$
		and reject it if $P\{$Bob fails to decode $U_{j+1}^ℓ$ given $𝔾'\}≥ℓ^{-㏑(ℓ)/5}$.
		Then the rejecting probability is $ℓ^{-㏑(ℓ)/20}$ because $1/20+1/5=1/4$.
		An upper bound on Bob's error probability being $Ｐ<ℓ^{-㏑(ℓ)/5}$,
		an upper bound on Bob's equivocation is
		\[H(U_{j+1}^ℓ｜Y₁^ℓU₁^j)≤ℓ^{-㏑(ℓ)/5}（÷{1-㏑ℓ^{-㏑(ℓ)/5}}{㏑q}+k）
			=ℓ^{-㏑(ℓ)/5}（÷{1+㏑(ℓ)^2/5}{㏑q}+k）\]
		by \cref{ine:Fano}.
		Plugging the right-hand side into $kh(†this place†/k)$,
		we derive that the left-hand side of \cref{ine:bob-block} is less than
		\begin{align*}
			\qquad&\kern-2em
			kh（÷{ℓ^{-㏑(ℓ)/5}}k（÷{1+㏑(ℓ)^2/5}{㏑q}+k））
				=k·（ℓ^{-㏑(ℓ)/5}（÷{1+㏑(ℓ)^2/5}{k㏑q}+1））^α	\\
			&	=ℓ^{-α㏑(ℓ)/5}k·（÷{1+㏑(ℓ)^2/5}{k㏑q}+1）^α
				<ℓ^{-α㏑(ℓ)/5}ℓ·（÷{1+㏑(ℓ)^2/5}{ℓ㏑q}+1）^α	\\
			&	<ℓ^{-α㏑(ℓ)/5}·ℓ·2^α = 2^αℓ㏑(ℓ)^{-㏑(ℓ)/5}.
		\end{align*}
		The first inequality uses that the left-hand side increases
		monotonically in $k$ and $k$ is $ℓ-j=⌊I(W)ℓ-ℓ^{1/2+α}⌋<ℓ$.
		The second inequality uses the assumption $ℓ≥2$.
		The quantity at the end of the chain of inequalities
		decays to $0$ as $ℓ→∞$, so eventually it becomes less than $ℓ^{1/2+α}$,
		the right-hand side of \cref{ine:bob-block}.
		This proves that \cref{ine:bob,ine:bob-chain} hold with
		failing probability $ℓ^{-㏑(ℓ)/20}$ as soon as $ℓ$ is large enough. 
		
		The lower bound on $ℓ$ in the statement of
		\cref{lem:typical-HH} is large enough ($ℓ>20$).
		Hence \cref{ine:bob}, the first half of \cref{lem:typical-HH}, is settled.
	\end{proof}
	
	That random kernels make $h(H(WＷj))$ small for large $j≫ℓH(W)$ is the first half;
	the next section settles the second half of \cref{lem:typical-HH},
	making $h(H(WＷj))$ small for small $j≪ℓH(W)$.

\section{Wiretap-Channel Random Coding}\label{sec:hayashi}

	This subsection contains the very last ingredient
	of the proof of \cref{lem:typical-HH} and \cref{thm:typical-CLT}.
	We dealt with \cref{ine:bob} in the last subsection.
	We now deal with
	\[∑_{i=1}^{⌊H(W)ℓ-ℓ^{1/2+α}⌋}h(H(WＷi))<ℓ^{1/2+α}.\tagcopy{ine:eve}\]
	by passing it to an estimate that captures
	the performance of wiretap-channel coding.
	
	\begin{proof}[Proof of \cref{ine:eve}]
		Similar to how we motivated \cref{ine:bob-block},
		we hereby apply Jensen's inequality and the chain rule of
		conditional entropy to simplify \cref{ine:eve}.
		The left-hand side becomes $jh(H(U_1^j｜Y_1^ℓ)/j)$
		where $j≔⌊H(W)ℓ-ℓ^{1/2+α}⌋$ for short.
		(This is not the same $j$ as in the last subsection.)
		The input being uniform, the argument of $h$ is
		$H(U_1^j｜Y_1^ℓ)/j=1-I(U_1^j；Y_1^ℓ)/j$, which can be replaced by
		$I(U_1^j｜Y_1^ℓ)/j$ thanks to the symmetry $h(1-z)=h(z)$.
		We will show
		\[jh（÷1jI(U_1^j；Y_1^ℓ)）<ℓ^{1/2+α}.\label{ine:eve-block}\]
		
		But what is $I(U_1^j；Y_1^ℓ)$?
		It is the amount of information Eve learns from
		wiretapping $Y_1^ℓ$ if Eve knows that $U_{j+1}^ℓ$ are junk.
		In other words, we may pretend that
		\begin{itemize}
			\item	Alice transmits $X_1^ℓ≔U_1^jV_{j+1}^ℓ𝔾$, wherein $U_1^j$ are
					the confidential bits and $V_{j+1}^ℓ$ are the obfuscating bits,
			\item	Bob receives $X_1^ℓ$ in full, and
			\item	Eve learns $Y_1^ℓ$.
		\end{itemize}
		This context falls back to (a special case of) the traditional setup
		of wiretap channels \cite{Wyner75} where various bounds are studied,
		some in terms of Gallager's E-null function.
		
		Here are some preliminaries
		to control the information leaked to Eve.
		We follow the blueprint of how Hayashi derived
		the secrecy exponent in \cite[Inequality~(21)]{Hayashi06}.
		Consider the communication protocol depicted in \cref{fig:alice}:
		Karl fixes a kernel $𝔾∈\GL(ℓ,q)$ and everyone knows $𝔾$.
		Alice chooses the confidential message $U_1^ℓ$.
		Vincent chooses the obfuscating bits $V_{j+1}^ℓ$.
		Charlie generates $Y_1^ℓ$ by
		plugging $X_1^ℓ≔U_1^jV_{j+1}^ℓ𝔾$ into a simulator of $W^ℓ$.
		Eve learns $Y_1^ℓ$ and is interested in knowing $U_1^j$ alone.
		So the channel on topic is the composition of Vincent and Charlie.
		Notation:
		Running out of symbols, we all use $ℙ$ with proper subscripts
		to indicate the corresponding probability measures.
		That said, indices in the subscript will be omitted.
		As Eve is interested in the relation between $U_1^j$ and $Y_1^ℓ$,
		let $Y_1^ℓ↾Gu_1^j$ be the r.v.\ that follows
		the posterior distribution of $Y_1^ℓ$ given $𝔾=G$ and $U_1^j=u_1^j$.
		More formally, $ℙ_{Y↾Gu}(y_1^ℓ)
			=ℙ_{Y|𝔾U}(y_1^ℓ｜G,u_1^ℓ)=ℙ_{𝔾  UY}(G,u_1^j,y_1^ℓ)/ℙ_{𝔾U}(G,u_1^j)$.
		We could have defined $Y_1^ℓ↾G$ to be the
		posterior distribution of $Y_1^ℓ$ given $𝔾=G$;
		but it is simply the same distribution as $Y_1^ℓ$ since $U_1^jV_{j+1}^ℓG$
		traverses all inputs uniformly regardless of the choice of $G$.
		That is, $ℙ_{Y|𝔾}(y_1^ℓ｜G)=ℙ_Y(y_1^ℓ)$ for all $y_1^ℓ∈𝒴^ℓ$.
		
		\begin{figure}
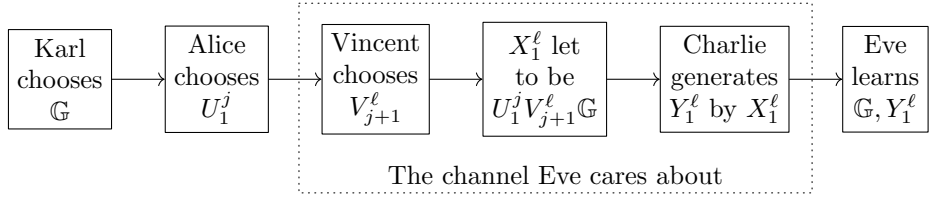

			\tikz{
				\draw[nodes={right,draw,align=center}]
					(.7,0)coordinate(X)
					            node(K){Karl\\chooses\\$𝔾$}
					(K.east)+(X)node(A){Alice\\chooses\\$U_1^j$}
					(A.east)+(X)node(V){Vincent\\chooses\\$V_{j+1}^ℓ$}
					(V.east)+(X)node(N)
								{$X_1^ℓ$ let\\to be\\$U_1^jV_{j+1}^ℓ𝔾$}
					(N.east)+(X)node(C){Charlie\\generates\\$Y_1^ℓ$ by $X_1^ℓ$}
					(C.east)+(X)node(E){Eve\\learns\\$𝔾,Y_1^ℓ$}
				;
				\graph[use existing nodes]{
					K -> A -> V -> N -> C -> E
				};
				\draw[dotted]
					(V.west|-C.north)+(-.3,.3)-|($(C.south east)+(.3,-.8)$)-|
					node[pos=.25,above]{The channel Eve cares about}cycle;
			}
			\caption{
				A finer setup for Hayashi's secrecy exponent.
				Charlie generates $Y_1^ℓ$ such that
				$X_1^ℓ≔U_1^jV_{j+1}^ℓ𝔾$ and $Y_1^ℓ$ follow $W^ℓ$.
				Despite of the seemingly sequential structure,
				Karl, Alice, and Vincent work independently.
			}\label{fig:alice}
		\end{figure}
		
		Fix $G$ as an instance of $𝔾$.
		Let $I_e$ be the base-$e$ mutual information.
		The channel Eve cares about leaks information of this amount:
		\begin{align*}
			\qquad&\kern-2em
			I_e(U_1^j；Y_1^ℓ｜G)=∑_{u_1^jy_1^ℓ}ℙ_{UY|𝔾}(u_1^j,y_1^ℓ｜G)
				㏑÷{ℙ_{Y|𝔾U}(y_1^ℓ｜G,u_1^j)}{ℙ_{Y|𝔾}(y_1^ℓ｜G)}	\\
			&	=∑_{u_1^j}ℙ_U(u_1^j)∑_{y_1^ℓ}ℙ_{Y|𝔾U}(y_1^ℓ｜G,u_1^j)
				㏑÷{ℙ_{Y|𝔾U}(y_1^ℓ｜G,u_1^j)}{ℙ_{Y|𝔾}(y_1^ℓ｜G)}	\\
			&	=∑_{u_1^j}ℙ_U(u_1^j)∑_{y_1^ℓ}ℙ_{Y↾Gu}(y_1^ℓ)
				㏑÷{ℙ_{Y↾Gu}(y_1^ℓ)}{ℙ_Y(y_1^ℓ)}
				=∑_{u_1^j}ℙ_U(u_1^j)𝔻(Y_1^ℓ↾Gu_1^j\|Y_1).	\label{for:fix-kld}
		\end{align*}
		$𝔻(Y_1^ℓ↾Gu_1^j\|Y_1^ℓ)$ is the Kullback--Leibler divergence
		from the posterior distribution of $Y_1^ℓ$ given $G,u_1^j$
		to the coarsest distribution $Y_1^ℓ$.
		We are to take expectation over $𝔾$ to find the average information leak
		since we are interested in Markov's inequality.
		\Cref{for:fix-kld} gives rise to
		\begin{align*}
			𝔼I_e(U_1^j；Y_1^ℓ｜𝔾)
			&	=∑_Gℙ_𝔾(G)I_e(U_1^j；Y_1^ℓ｜G)	\\
			&	=∑_Gℙ_𝔾(G)∑_{u_1^j}ℙ_U(u_1^j)𝔻(Y_1^ℓ↾Gu_1^j\|Y_1^ℓ).
				\label{for:exp-kld}
		\end{align*}
		We now discover that there are redundancies
		in traversing all $G$ and $u_1^ℓ$:
		After all, $X_1^j$ is $u_1^jV_{j+1}^ℓG=u_1^j0_{j+1}^ℓG+0_1^jV_{j+1}^ℓG$,
		which is a fixed linear combination of the first $j$ rows
		plus a random vector from the span of the bottom $ℓ-j$ rows.
		When $V_1^ℓ$ varies, the track of $X_1^ℓ$ forms an affine subspace of
		$𝔽_q^ℓ$, a \emph{coset code} as in the context of the fundamental theorems.
		So what matters is the distribution of this coset code.
		
		In the aforementioned manner, we replace the uniform ensemble of $(𝔾,U_1^j)$
		by the uniform ensemble of $𝕂$, a rank-$(ℓ-j)$ affine subspace of $𝔽_q^ℓ$,
		where $j≔⌊H(W)ℓ-ℓ^{1/2+α}⌋$.
		Karl and Alice together choose $𝕂$ uniformly.
		Vincent chooses $X_1^ℓ∈𝕂$ uniformly.
		Charlie generates $Y_1^ℓ$ by entering $X_1^ℓ$ into a simulator of $W^ℓ$.
		See \cref{fig:karl} for the depiction of the new scheme.
		Hence \cref{for:exp-kld} induces
		\[𝔼I_e(U_1^j；Y_1^ℓ｜𝔾)=∑_Kℙ_𝕂(K)𝔻(Y_1^ℓ↾K\|Y_1^ℓ)\]
		where $Y_1^ℓ↾K$ is the a posteriori distribution of $Y_1^ℓ$ given $𝕂=K$.
		Suddenly, the quantity $𝔼I_e(U_1^j；Y_1^ℓ｜𝔾)$ we are interested in
		turns into the mutual information $I_e(𝕂；Y_1^ℓ)$ between $𝕂$ and $Y_1^ℓ$
		as $𝕂$ replaces the role of $U_1^j$ in \cref{for:fix-kld}.
		Recall that in \cref{lem:quadratic} the mutual information
		is the derivative of Gallager's E-null function.
		We exploit this.
		Define the double-stroke E-null function for $(𝕂,Y_1^ℓ)$ as follows
		\[𝔼_0(t)≔-㏑∑_{y_1^ℓ}（∑_Kℙ_𝕂(K)ℙ_{Y|𝕂}(y_1^ℓ｜K)^÷1{1+t}）^{1+t}.\]
		Then $𝔼_0'(0)=I_e(𝕂；Y_1^ℓ)=𝔼I_e(U_1^j；Y_1^ℓ｜𝔾)$.
		Owing to the concavity of the E-null function,
		$𝔼_0'(0)≤𝔼_0(t)/t$ whenever $-2/5≤t<0$.
		Recap:
		To bound the average leaked information $𝔼I_e(U_1^j；Y_1^ℓ｜𝔾)$
		it suffices to bound $I_e(𝕂；Y_1^ℓ)$, which is then morphed
		to bound $𝔼_0'(0)$ from above and to bound $𝔼_0(t)$ from below.
		
		\begin{figure}
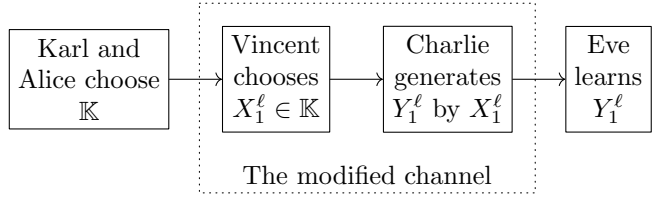

			\tikz{
				\draw[nodes={right,draw,align=center}]
					(.7,0)coordinate(X)
					            node(K){Karl and \\Alice choose\\$𝕂$}
					(K.east)+(X)node(V){Vincent\\chooses\\$X_1^ℓ∈𝕂$}
					(V.east)+(X)node(C){Charlie\\generates\\$Y_1^ℓ$ by $X_1^ℓ$}
					(C.east)+(X)node(E){Eve\\learns\\$Y_1^ℓ$};
				\graph[use existing nodes]{
					K -> V -> C -> E
				};
				\draw[dotted]
					(V.west|-C.north)+(-.3,.3)-|($(C.south east)+(.3,-.8)$)-|
					node[pos=.25,above]{The modified channel}cycle;
			}
			\caption{
				A simplified setup for Hayashi's secrecy exponent.
				Charlie generates $Y_1^ℓ$ such that $X_1^ℓ$ and $Y_1^ℓ$ follow $W^ℓ$.
			}\label{fig:karl}
		\end{figure}
		
		The double-stroke E-null function is bounded as below.
		Assume $-2/5≤t<0$.
		Let $s$ be $-t/(1+t)$; so $0<s≤2/3$ and $(1+s)(1+t)=1$.
		For any fixed $K$ and fixed $x_1^ℓ∈K$, the base of the $(1+t)$-th root
		in the definition of the double-stroke E-null function is
		\begin{align*}
			ℙ_{Y|𝕂}(y_1^ℓ｜K) 
			&	=∑_{ξ_1^ℓ∈K}ℙ_{X|𝕂}(ξ_1^ℓ｜K)ℙ_{Y|X}(y_1^ℓ｜ξ_1^ℓ)	\\
			&	=∑_{ξ_1^ℓ∈K}q^jℙ_X(ξ_1^ℓ)ℙ_{Y|X}(y_1^ℓ｜ξ_1^ℓ)
				=∑_{ξ_1^ℓ∈K}q^jℙ_{XY}(ξ_1^ℓ,y_1^ℓ)	\\
			&	=q^j（ℙ_{XY}(x_1^ℓ,y_1^ℓ)+∑_{x_1^ℓ≠ξ_1^ℓ∈K}ℙ_{XY}(ξ_1^ℓ,y_1^ℓ)）	\\
			&	=q^j（ℙ_{XY}(x_1^ℓ,y_1^ℓ)+ℙ_{XY}(K＼x_1^ℓ,y_1^ℓ)）.
		\end{align*}
		Here $ℙ_{XY}(K＼x_1^ℓ,y_1^ℓ)$ is a temporary shorthand for the summation
		of $ℙ_{XY}(ξ_1^ℓ,y_1^ℓ)$ over $ξ_1^ℓ∈K$ that excludes $x_1^ℓ$.
		Raise $ℙ_{Y|𝕂}(y_1^ℓ｜K)$ to the power of $s$;
		it becomes $q^{js}(ℙ_{XY}(x_1^ℓ,y_1^ℓ)+ℙ_{XY}(K＼x_1^ℓ,y_1^ℓ))^s
			≤q^{js}(ℙ_{XY}(x_1^ℓ,y_1^ℓ)^s+ℙ_{XY}(K＼x_1^ℓ,y_1^ℓ)^s)$ by sub-additivity.
		Put that aside; raise $ℙ_{Y|𝕂}(y_1^ℓ｜K)$ to the power of $1+s=1/(1+t)$:
		\begin{align*}
			\qquad&\kern-2em
			ℙ_{Y|𝕂}(y_1^ℓ｜K)^{1+s} = ℙ_{Y|𝕂}(y_1^ℓ｜K)ℙ_{Y|𝕂}(y_1^ℓ｜K)^s
				=∑_{x_1^ℓ∈K}q^jℙ_{XY}(x_1^ℓ,y_1^ℓ)ℙ_{Y|𝕂}(y_1^ℓ｜K)^s	\\
			&	≤∑_{x_1^ℓ∈K}q^jℙ_{XY}(x_1^ℓ,y_1^ℓ)
				q^{js}（ℙ_{XY}(x_1^ℓ,y_1^ℓ)^s+ℙ_{XY}(K＼x_1^ℓ,y_1^ℓ)^s）	\\
			&	=q^{j+js}（∑_{x_1^ℓ∈K}ℙ_{XY}(x_1^ℓ,y_1^ℓ)^{1+s}
				+∑_{x_1^ℓ∈K}ℙ_{XY}(x_1^ℓ,y_1^ℓ)ℙ_{XY}(K＼x_1^ℓ,y_1^ℓ)^s）.
		\end{align*}
		The inequality rewrites the $s$th power of $ℙ_{Y|𝕂}(y_1^ℓ｜K)$.
		Then the inner sum of the E-null function morphs as follows
		\begin{align*}
			∑_Kℙ_𝕂(K)ℙ_{Y|𝕂}(y_1^ℓ｜K)^{1+s}
			&	≤∑_Kℙ_𝕂(K)q^{j+js}（∑_{x_1^ℓ∈K}ℙ_{XY}(x_1^ℓ,y_1^ℓ)^{1+s}	\\*
			&\kern8em	+∑_{x_1^ℓ∈K}ℙ_{XY}(x_1^ℓ,y_1^ℓ)ℙ_{XY}(K＼x_1^ℓ,y_1^ℓ)^s）	\\
			&	=q^{j+js}∑_Kℙ_𝕂(K)∑_{x_1^ℓ∈K}
				ℙ_{XY}(x_1^ℓ,y_1^ℓ)^{1+s}	\tag{diagonal arc}\\
			&	+q^{j+js}∑_Kℙ_𝕂(K)∑_{x_1^ℓ∈K}
				ℙ_{XY}(x_1^ℓ,y_1^ℓ)ℙ_{XY}(K＼x_1^ℓ,y_1^ℓ)^s.	\tag{off-diagonal arc}
		\end{align*}
		The inequality rewrites the $(s+1)$th power of $ℙ_{Y|𝕂}(y_1^ℓ｜K)$.
		Divide and conquer---the inner sum of
		the double-stroke E-null function is split into two arcs as labeled.
		
		The diagonal arc is exactly
		\begin{align*}
			\qquad&\kern-2em
			q^{j+js}∑_Kℙ_𝕂(K)∑_{x_1^ℓ∈K}ℙ_{XY}(x_1^ℓ,y_1^ℓ)^{1+s}
				=q^{j+js}÷1{q^j}∑_{x_1^ℓ∈𝔽_q^ℓ}ℙ_{XY}(x_1^ℓ,y_1^ℓ)^{1+s}	\\
			&	=q^{js}\!∑_{x_1^ℓ∈𝔽_q^ℓ}ℙ_X(x_1^ℓ)^{1+s}ℙ_{Y|X}(y_1^ℓ｜x_1^ℓ)^{1+s}
				=q^{js-ℓs}\!∑_{x_1^ℓ∈𝔽_q^ℓ}ℙ_X(x_1^ℓ)ℙ_{Y|X}(y_1^ℓ｜x_1^ℓ)^{1+s}.
		\end{align*}
		The off-diagonal arc is
		\begin{align*}
			&	q^{j+js}∑_Kℙ_𝕂(K)∑_{x_1^ℓ∈K}
				ℙ_{XY}(x_1^ℓ,y_1^ℓ)ℙ_{XY}(K＼x_1^ℓ,y_1^ℓ)^s	\\
			&\quad	=q^{j+js}∑_{x_1^ℓ∈𝔽_q^ℓ}ℙ_{XY}(x_1^ℓ,y_1^ℓ)
				∑_{K∋x_1^ℓ}ℙ_𝕂(K)ℙ_{XY}(K＼x_1^ℓ,y_1^ℓ)^s.
		\end{align*}
		The inner sum is loosened to
		\begin{align*}
			\qquad&\kern-2em
			∑_{K∋x_1^ℓ}ℙ_𝕂(K)ℙ_{XY}(K＼x_1^ℓ,y_1^ℓ)^s
				=÷1{q^j}∑_{K∋x_1^ℓ}ℙ_{𝕂|X}(K｜x_1^ℓ)ℙ_{XY}(K＼x_1^ℓ,y_1^ℓ)^s	\\
			&	≤÷1{q^j}（∑_{K∋x_1^ℓ}ℙ_{𝕂|X}(K｜x_1^ℓ)ℙ_{XY}(K＼x_1^ℓ,y_1^ℓ)）^s	\\
			&	=÷1{q^j}（∑_{K∋x_1^ℓ}ℙ_{𝕂|X}(K｜x_1^ℓ)
				∑_{x_1^ℓ≠ξ_1^ℓ∈K}ℙ_{XY}(ξ_1^ℓ,y_1^ℓ)）^s	\\
			&	=÷1{q^j}（÷{q^{ℓ-j}-1}{q^ℓ-1}∑_{x_1^ℓ≠ξ_1^ℓ∈𝔽_q^ℓ}ℙ_{XY}(ξ_1^ℓ,y_1^ℓ)）^s
				≤÷1{q^{j+js}}（∑_{x_1^ℓ≠ξ_1^ℓ∈𝔽_q^ℓ}ℙ_{XY}(ξ_1^ℓ,y_1^ℓ)）^s
		\end{align*}
		The last equality counts the multiplicity of $ξ_1^ℓ$. 
		So the off-diagonal arc is loosened to
		\begin{align*}
			\qquad&\kern-2em
			q^{j+js}∑_{x_1^ℓ∈𝔽_q^ℓ}ℙ_{XY}(x_1^ℓ,y_1^ℓ)
				∑_{K∋x_1^ℓ}ℙ_𝕂(K)ℙ_{XY}(K＼x_1^ℓ,y_1^ℓ)^s	\\
			&	≤∑_{x_1^ℓ∈𝔽_q^ℓ}ℙ_{XY}(x_1^ℓ,y_1^ℓ)
				（∑_{x_1^ℓ≠ξ_1^ℓ∈𝔽_q^ℓ}ℙ_{XY}(ξ_1^ℓ,y_1^ℓ)）^s	\\
			&	≤∑_{x_1^ℓ∈𝔽_q^ℓ}ℙ_{XY}(x_1^ℓ,y_1^ℓ)
				（∑_{ξ_1^ℓ∈𝔽_q^ℓ}ℙ_{XY}(ξ_1^ℓ,y_1^ℓ)）^s	\\
			&	=∑_{x_1^ℓ∈𝔽_q^ℓ}ℙ_{XY}(x_1^ℓ,y_1^ℓ)ℙ_Y(y_1^ℓ)^s
				=ℙ_Y(y_1^ℓ)ℙ_Y(y_1^ℓ)^s = ℙ_Y(y_1^ℓ)^{1+s}.
		\end{align*}
		
		Both the diagonal and off-diagonal arcs being conquered,
		merge them and raise to the $(1+t)$-th power.
		The summand for any fixed $y_1^ℓ$ in
		the definition of the double-stroke E-null function is
		\begin{align*}
			\quad&\kern-2em
			（∑_Kℙ_𝕂(K)ℙ_{Y|𝕂}(y_1^ℓ｜K)^÷1{1+t}）^{1+t}
				=(†off-diagonal†+†diagonal†)^{1+t}	\\
			&	≤†off-diagonal†^{1+t}+†diagonal†^{1+t}
				≤（ℙ_Y(y_1^ℓ)^{1+s}）^{1+t}+†diagonal†^{1+t}	\\
			&	=ℙ_Y(y_1^ℓ)+†diagonal†^{1+t}
				≤ℙ_Y(y_1^ℓ)+（q^{js-ℓs}∑_{x_1^ℓ∈𝔽_q^ℓ}ℙ_X(x_1^ℓ)
				ℙ_{Y|X}(y_1^ℓ｜x_1^ℓ)^{1+s}）^{1+t}	\\
			&	=ℙ_Y(y_1^ℓ)+q^{ℓt-jt}（∑_{x_1^ℓ∈𝔽_q^ℓ}ℙ_X(x_1^ℓ)
				ℙ_{Y|X}(y_1^ℓ｜x_1^ℓ)^{1+s}）^{1+t}
		\end{align*}
		The first equality divides.
		The next inequality applies the sub-additivity
		of $(1+t)$th power (note that $t<0$).
		We can finally bound the double-stroke E-null function per se:
		\begin{align*}
			\exp(-𝔼_0(t))
			&	=∑_{y_1^ℓ}（∑_Kℙ_𝕂(K)ℙ_{Y|𝕂}(y_1^ℓ｜K)^÷1{1+t}）^{1+t}	\\
			&	≤∑_{y_1^ℓ}ℙ_Y(y_1^ℓ)+q^{ℓt-jt}（∑_{x_1^ℓ∈𝔽_q^ℓ}
				ℙ_X(x_1^ℓ)ℙ_{Y|X}(y_1^ℓ｜x_1^ℓ)^{1+s}）^{1+t}	\\
			&	=1+q^{ℓt-jt}∑_{y_1^ℓ}（∑_{x_1^ℓ∈𝔽_q^ℓ}
				ℙ_X(x_1^ℓ)ℙ_{Y|X}(y_1^ℓ｜x_1^ℓ)^{1+s}）^{1+t}	\\
			&	=1+q^{ℓt-jt}\exp(-(\emph{the E-null function of }W^ℓ)(t))	\\
			&	=1+q^{ℓt-jt}\exp(-ℓ\Eo(t)).
		\end{align*}
		
		All efforts we spent on bounding $I_e(U_1^j；Y_1^ℓ)$ are for three creeds:
		First, it demonstrates that Gallager's bounds via E-null functions
		(which behaves like cumulant generating functions)
		is a powerful tool that can be useful to the dual case.
		Second, it fits the paradigm that solving the primary (noisy channel)
		and the dual (wiretap channel) problems as a whole
		is easier than solving the primary problem alone.
		Third, the universal quadratic bound can be used
		to further bound the E-null function.
		
		We infer that
		\begin{align*}
			𝔼I_e(U_1^j；Y_1^ℓ｜𝔾)
			&	=I_e(𝕂；Y_1^ℓ)=𝔼_0'(0)≤÷1t𝔼_0(t) = ÷1{-t}㏑（\exp(-𝔼_0(t))）	\\
			&	≤÷1{-t}㏑（1+q^{ℓt-jt}\exp(-ℓ\Eo(t))）
				<÷1{-t}q^{ℓt-jt}\exp(-ℓ\Eo(t))	\\
			&	=\exp(-㏑(-t)+(ℓ-j)t㏑q-ℓ\Eo(t)).
		\end{align*}
		Recall the universal quadratic bound $\Eo(t)≥I(W)t㏑q-t^2㏑(q)^2$
		as stated in \cref{lem:quadratic} and used in the previous subsection.
		But this time $-2/5≤t<0$.
		We obtain that the exponent is
		\begin{align*}
			\qquad&\kern-2em
			-㏑(-t)+(ℓ-j)t㏑q-ℓ\Eo(t)	\\
			&	=-㏑(-t)+(ℓ-H(W)ℓ+ℓ^{1/2+α})t㏑q-ℓ\Eo(t)	\\
			&	=-㏑(-t)+(I(W)ℓ+ℓ^{1/2+α})t㏑q-ℓ\Eo(t)	\\
			&	≤-㏑(-t)+(I(W)ℓ+ℓ^{÷12+α})t㏑q-ℓ(I(W)t㏑q-t^2㏑(q)^2)	\\
			&	=-㏑(-t)+(ℓt㏑q+ℓ^{1/2+α})t㏑q	\\
			\shortintertext{(redeem the inequality at $t=-ℓ^{-1/2+α}/2㏑q$)}
			&	↦-㏑（÷{ℓ^{-1/2+α}}{2㏑q}）
				-（-÷{ℓℓ^{-1/2+α}}2+ℓ^{1/2+α}）÷{ℓ^{-1/2+α}}2	\\
			&	=÷{㏑ℓ}2-α㏑ℓ+㏑2+㏑(㏑q)-÷{ℓ^{2α}}4	\\
			&	=÷{㏑ℓ}2-㏑(㏑ℓ)+㏑2+㏑(㏑q)-÷{ℓ^{2㏑(㏑ℓ)/㏑ℓ}}4	\\
			&	<÷{㏑ℓ}2+㏑(㏑q)-÷{㏑(ℓ)²}4.
		\end{align*}
		The first inequality uses $ℓ-j=ℓ-H(W)ℓ+ℓ^{1/2+α}$.
		The last inequality uses the assumption $ℓ>e^2$.
		With the last line we conclude that $𝔼I_e(U_1^j；Y_1^ℓ｜𝔾)
			<\exp(㏑(ℓ)/2+㏑㏑q-㏑(ℓ)^2/4)=ℓ^{1/2-㏑(ℓ)/4}㏑q$.
		Switch back to the base-$q$ mutual information
		$𝔼I(U_1^j；Y_1^ℓ｜𝔾)<ℓ^{1/2-㏑(ℓ)/4}$.
		
		We now reject kernels $𝔾$ such that $I(U_1^j；Y_1^ℓ｜𝔾)≥ℓ^{1/2-㏑(ℓ)/5}$.
		By Markov's inequality, the opposite direction ($<$) holds
		with probability $1-ℓ^{-㏑(ℓ)/20}$ because $1/5+1/20=1/4$.
		Plug this upper bound into $h$.
		The left-hand side of \cref{ine:eve-block} is less than
		\begin{align*}
			jh（÷1jℓ^{1/2-㏑(ℓ)/5}）
			&	=jj^{-α}ℓ^{α/2-α㏑(ℓ)/5}<ℓ^{1-α}ℓ^{α/2-α㏑(ℓ)/5}	\\
			&	=ℓ^{1-α/2-α㏑(ℓ)/5}=ℓ㏑(ℓ)^{-1/2-㏑(ℓ)/5}.
		\end{align*}
		The inequality uses that the left-hand side
		increases monotonically in $j$ and $j≔H(W)ℓ-ℓ^{1/2+α}<ℓ$.
		The quantity at the end of the inequalities decays to $0$
		as $ℓ→∞$, so eventually it becomes less than $ℓ^{1/2+α}$,
		the right-hand side of \cref{ine:eve-block}.
		This proves that \cref{ine:eve} holds with
		failing probability $ℓ^{-㏑(ℓ)/20}$ as soon as $ℓ$ is large enough. 
		
		The lower bound on $ℓ$ in the statement of \cref{lem:typical-HH} is large
		enough, hence \cref{ine:eve}, the second half of \cref{lem:typical-HH} settled.
		That means the proof of the whole \cref{thm:typical-CLT} is complete.
	\end{proof}
	
	We just finished the last piece of the proof of \cref{thm:typical-CLT},
	which states that random kernels possesses good $ϱ$ with high probability.
	The next section combines this fact with the coset distance profile $⌈j²/ℓ⌉$
	to conclude that low-complexity nearly-optimal polar codes exist when $π+2ρ→1$.

\section{Chapter Conclusion}\label{sec:signify}

	\Cref{thm:typical-LDP} shows that, with high probability,
	a random kernel enjoys coset distance profile $D_ZＷj≥⌈j²/ℓ⌉$.
	And its dual $D_SＷ{ℓ-j+1}≥⌈j²/ℓ⌉$ is immediate.
	\Cref{thm:typical-CLT} shows that, with high probability, a random kernel enjoys
	eigenvalue $ℓ^{-1/2+4α}$, which means $ρ=1/2-4α$, where $α≔㏑ℓ/㏑(㏑ℓ)$.
	Now, Does any pair $(π,ρ)$ lying under the line $π+2ρ=1$
	lie to the left of the convex envelope of
	$(0,1/2-4α)$ and the Cramér function of $㏒_ℓ⌈𝘑₁²/ℓ⌉$ for some large $ℓ$?
	
	Let us not go all the way down to a derivation
	of the Cramér function of $㏒_ℓ⌈𝘑₁²/ℓ⌉$;
	a one-sided bound suffices.
	Consider the moment generating function evaluated at slope $-1/2$
	(that is the slope of $π+2ρ=1$):
	\begin{align*}
		\qquad&\kern-2em
		𝘌[𝘋_1^{-1/2}]
			=÷1{ℓ}∑_{j=1}^ℓ（\Bigl\lceil÷{j²}{3ℓ}\Bigr\rceil）^{-1/2}
			<÷1{ℓ}∑_{j=1}^{⌊√{3ℓ}⌋}1^{-1/2}
			+÷1{ℓ}∑_{j=⌊√{3ℓ}⌋+1}^ℓ（÷{j²}{3ℓ}）^{-1/2}	\\
		&	<÷1{ℓ}√{3ℓ}+÷1{ℓ}√{3ℓ}∫_{√{3ℓ}}^ℓ÷{\diff j}j
			=√{3ℓ}+÷{√3}{√ℓ}㏑j\Bigr\rvert_{√{3ℓ}}^ℓ<√{3ℓ}+√{3ℓ}㏑ℓ	\\
		&	=ℓ^{-1/2}+2ℓ^{-1/2+α}<4ℓ^{-1/2+α}<ℓ^{-1/2+2α}.
	\end{align*}
	So the cumulant generating function is bounded as
	\[𝘒(-1/2)=㏒_ℓ𝘌[𝘋_1^{-1/2}]<-1/2+2α.\]
	The Cramér function as a supremum is then bounded by
	\[𝘓(s)≥s·(-1/2)-𝘒(-1/2)≥-s/2+1/2-2α.\]
	
	The convex envelope of $(0,1/2-4α)$ and the segment $ρ=-π/2+1/2-2α$
	(note that only the part with $ρ≥0$ counts)
	is a straight line connecting $(0,1/2-4α)$ and $(1-4α,0)$.
	As $ℓ$ goes to infinity, $α$ goes to $0$, hence the convex envelope
	reveals the right triangle $(0,1/2)$--$(0,0)$--$(1,0)$.
	
	By \cref{cha:general}, any $(π,ρ)$ in this right triangle
	is realizable by some polar codes, with complexity $O(N㏒N)$.
	Let me state the full theorem here.
	
	\begin{thm}[Hypotenuse]
		Fix a $q$-ary channel $W$.
		Fix exponents $π+2ρ<1$.
		Then there exists a large $ℓ$ and an amoebic kernel
		(a strategy to assign kernels to synthetic channels) $𝒢$ such that
		\begin{gather*}
			𝘗\{𝘡_n<\exp(-ℓ^{πn})\}>1-H(W)-ℓ^{-ρn},	\\
			𝘗\{𝘚_n<\exp(-ℓ^{πn})\}>H(W)-ℓ^{-ρn}		
		\end{gather*}
		for large $n$.
	\end{thm}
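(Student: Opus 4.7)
The plan is to combine the typical LDP data of \cref{thm:typical-LDP} with the typical CLT data of \cref{thm:typical-CLT} via the probabilistic method, then feed the resulting amoebic kernel into the eigen--en23--een13--elpin machinery of \cref{cha:general}. First I would fix $ℓ$ large enough that, for every $q$-ary channel $W'$, a uniformly random $𝔾∈\GL(ℓ,q)$ simultaneously satisfies (a) the primary coset distance bounds $D_ZＷj≥⌈j²/3ℓ⌉$ and the dual bounds $D_SＷ{ℓ-j+1}≥⌈j²/3ℓ⌉$ for every $j∈[ℓ]$ from \cref{thm:typical-LDP} and its dual, and (b) the eigenvalue bound $ℓ^{-ϱ}≤4ℓ^{-1/2+3α}$ from \cref{thm:typical-CLT}, where $α≔㏑(㏑ℓ)/㏑ℓ$. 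The union bound gives a total failure probability of at most $2ℓ^{-㏑(ℓ)/20}+6ℓq^{-√ℓ/13}$, strictly less than $1$ for sufficiently large $ℓ$. By the probabilistic method, for each $W'$ there exists at least one $G_{W'}∈\GL(ℓ,q)$ meeting both (a) and (b); the amoebic kernel $𝒢$ is then defined channel by channel by $𝒢(W')≔G_{W'}$.

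Next I would feed the channel process $\{𝘞_n\}$ grown from $W$ via $𝒢$ into \cref{thm:dmc-elpin}. The eigen behavior with $ϱ≥1/2-4α$ holds uniformly in the current channel, and the distance process obeys $𝘋_n≥⌈𝘑_n²/3ℓ⌉$ deterministically regardless of which kernel is in effect at step $n$. Splitting the sum $𝘌[𝘋_1^{-1/2}]$ at $j=⌈√{3ℓ}⌉$ and bounding the tail by an integral, exactly as in \cref{sec:signify}, yields $𝘌[𝘋_1^{-1/2}]<ℓ^{-1/2+2α}$, so $𝘒(-1/2)<-1/2+2α$ and the Cramér function obeys $𝘓(s)≥-s/2+1/2-2α$ by convex duality. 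Consequently the region $𝒪$ contains the open triangle with vertices $(0,0)$, $(1-4α,0)$, and $(0,1/2-4α)$, which absorbs any preassigned pair $(π,ρ)$ with $π+2ρ<1$ once $ℓ$ is taken large enough. Applying \cref{thm:dmc-elpin} produces the $𝘡_n$ inequality, and the parallel $S$-version at the end of \cref{cha:general}, together with the matching Cramér bound on the dual distance process (which enjoys the same $⌈j²/3ℓ⌉$ lower bound by symmetry in $𝔾↔𝔾^{-1}$), produces the $𝘚_n$ inequality.

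The main obstacle is verifying that the eigen--en23--een13--elpin chain of \cref{cha:general} survives the transition from a fixed kernel to the dynamic $𝒢$. Every single-step inequality---FTPC$H$, FTPC$Z$, the artificial supermartingale of \cref{lem:artifact}, Ville's inequality, and the telescoping $𝘡_{n+1}≤𝘡_n^{𝘋_{n+1}(1-ε)}$---remains valid under any individual kernel $G_{W'}$, but the constants $ε,δ$ of \cref{lem:artifact} together with the concentration constants that shape the events $𝘈_m,𝘉_m,𝘊_m$ must be chosen uniformly across all kernels in the image of $𝒢$. Since both (a) and (b) hold with identical quantitative bounds for every $W'$, such uniform constants do exist; auditing the entire chain in detail to confirm this uniformity---and to confirm that the distance bound $𝘋_n≥⌈𝘑_n²/3ℓ⌉$ really decouples the kernel-selection mechanism from the Cramér analysis---is where the bulk of the careful work lies.
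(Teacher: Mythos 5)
Your proposal is correct and follows essentially the same route as the paper: per-channel probabilistic selection of a kernel satisfying both the typical LDP distance profile and the typical CLT eigen bound, the Cramér estimate $𝘒(-1/2)<-1/2+2α$ giving the triangle $π+2ρ<1$ in the limit $ℓ→∞$, and then the eigen--en23--een13--elpin machinery of \cref{cha:general} applied to $\{𝘡_n\}$ and dually to $\{𝘚_n\}$. The uniformity-of-constants issue you flag for the dynamic kernel $𝒢$ is handled no more explicitly in the paper itself, which likewise relies on the fact that the distance and eigenvalue bounds hold with identical quantitative form for every kernel in the image of $𝒢$.
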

	
	Recall that at the beginning of \cref{cha:general}, I have demonstrated
	how to reduce arbitrary input alphabet to prime-power input alphabet.
	Hence the theorem applies to all DMCs, and we have reached
	the holy grail at the beginning of the chapter.
	
	\begin{cor}[Random codes' durability, polar codes' simplicity]\label{cor:holy}
		Over any discrete memoryless channel, for any constants $π,ρ>0$ such that
		$\pi+2\rho<1$, there is a series of error correcting codes with
		block length $N$ approaching infinity,
		block error probability $\exp(-N^π)$,
		code rate $N^{-ρ}$ less than the channel capacity,
		and encoding and decoding complexity $O(N㏒N)$ per code block.
	\end{cor}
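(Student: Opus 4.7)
The plan is to prove the Hypotenuse theorem on a fixed kernel size $\ell$ (chosen large enough that $\ell^{-1/2+O(\ln\ln\ell/\ln\ell)}<\ell^{-\rho}$ and the Cram\'er-slack eats $\pi+2\rho$), from which the corollary follows routinely: (i)~reduce arbitrary input alphabet to prime-power input via the dummy-symbol trick of Section~\ref{sec:six}; (ii)~combine the $Z$-process and $S$-process controls via inclusion--exclusion to handle asymmetric block error probability (as in Chapter~\ref{cha:dual}); (iii)~observe that a fixed $\ell\times\ell$ EU/DU per node gives the standard recursive encoder/decoder of complexity $O(N\log N)$ since $\ell$ does not depend on $N$. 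So the real work is to produce a single dynamic kerneling strategy $\mathcal{G}$ whose LDP data (coset distance profile) and CLT data (eigenvalue) are simultaneously good.

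First, I would use random dynamic kerneling: at each channel encountered on the tree, draw $\mathbb{G}\in\mathrm{GL}(\ell,q)$ uniformly and aim to show that with high probability (a)~$D_Z^{(j)}\geq\lceil j^2/3\ell\rceil$ and dually $D_S^{(\ell-j+1)}\geq\lceil j^2/3\ell\rceil$, and (b)~the eigenvalue of the test function $h(z)\coloneqq\min(z,1-z)^{\alpha}$, with $\alpha\coloneqq\ln(\ln\ell)/\ln\ell$, is at most $4\ell^{-1/2+3\alpha}$. Pillar~(a) is a sharpened Gilbert--Varshamov argument on weight enumerators: for fixed $j$, the codeword $0^{j-1}1 u_{j+1}^\ell\mathbb{G}$ is almost uniform on $\mathbb{F}_q^\ell$, so a large-deviations bound plus union over $u_{j+1}^\ell$ and a Markov step close~(a), with the slack $h_2(p)\leq\sqrt{ep}$ loosened to $\sqrt{3p}$ absorbing the union cost.

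For pillar~(b), I would split on $H(W)$. The reliable regime $H(W)<\ell^{-2}$ and its dual $H(W)>1-\ell^{-2}$ follow essentially for free: pillar~(a) combined with FTPC$Z$/FTPC$S$ and the explicit H\"older tolls of Proposition~\ref{pro:ex-toll} forces the eigenvalue ratio below $2\ell^{-1/2+3\alpha}$. The mediocre regime $\ell^{-2}\leq H(W)\leq 1-\ell^{-2}$ is the main obstacle: here I would partition $[\ell]$ into three $j$-bands separated by $H(W)\ell\pm\ell^{1/2+\alpha}$, bound the middle band trivially by its width, reduce the upper band (via Jensen and the entropy chain rule) to the \emph{average block error probability of a random linear code over $W$ at rate slightly below $I(W)$}, controlled by Gallager's $E_0$ function, and reduce the lower band (via the same Jensen plus the symmetry $h(z)=h(1-z)$) to the \emph{average information leakage of a random linear coset wiretap code}, controlled by a Hayashi-style dual $E_0$ bound. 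Both reductions cash in the universal quadratic bound $E_0(t)\geq I(W)t\ln q - t^2\ln(q)^2$ (Lemma~\ref{lem:quadratic}) at the sweet spot $|t|=\ell^{-1/2+\alpha}/(2\ln q)$, yielding $\ell^{-\Omega(\ln\ell)}$ error/leakage. The hard part is the wiretap reduction: one must rewrite $\mathbb{E}\,I(U_1^j;Y_1^\ell\mid\mathbb{G})$ as the KL-divergence of a random coset code against the reference output distribution, then bound a custom double-stroke $E_0$ by diagonal/off-diagonal splitting before invoking the quadratic bound.

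Finally, with both pillars established, a union bound plus the pigeonhole principle deterministically picks $\mathbb{G}$ for each channel. I feed the resulting LDP data and $\varrho\geq 1/2-4\alpha$ into Theorem~\ref{thm:dmc-elpin} of Chapter~\ref{cha:general}: a one-sided estimate $K(-1/2)=\log_\ell\mathbb{E}[D_1^{-1/2}]<-1/2+2\alpha$ (computed by splitting the sum at $\sqrt{3\ell}$ and integrating $\ln j$) shows the Cram\'er function dominates the line $\rho=-\pi/2+1/2-2\alpha$. The achievable region is then the convex envelope of $(0,1/2-4\alpha)$ and this line; as $\ell\to\infty$ this sweeps out the entire open triangle $\pi+2\rho<1$, giving both process bounds, and steps~(i)--(iii) convert this into the claimed code series over any DMC.
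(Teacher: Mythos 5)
Your proposal reproduces the paper's own route essentially step for step: the typical LDP bound $D_Z^{(j)}\geq\lceil j^2/3\ell\rceil$ via the Gilbert--Varshamov/Markov argument, the eigenvalue bound for $h(z)=\min(z,1-z)^{\alpha}$ split into reliable/noisy/mediocre regimes with the mediocre band handled by Gallager-type noisy-channel coding and a Hayashi-type wiretap (double-stroke $E_0$) bound cashed in through the universal quadratic bound, and finally the $𝘒(-1/2)$ estimate feeding the general-kernel elpin machinery so that the region sweeps out $\pi+2\rho<1$ as $\ell\to\infty$, with the alphabet reduction, asymmetric-channel bookkeeping, and $O(N\log N)$ complexity exactly as in the paper. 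Correct, and no substantive deviation from the paper's proof to report.
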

	
	This is the second-moment paradigm code that was promised in the abstract.
	\Cref{tab:refarray} compares this corollary to past works.
	The same can be stated for lossless compression and lossy compression.

\begin{table}
	\caption{
		The references to the various performances of polar codes over various channels.
		The starred behaviors are weak---only
		requiring $π$ or $ρ$ or both to be positive.
	}\label{tab:refarray}
	\pgfplotstableread{
		{ }		BEC			SBDMC		$p$-ary	$q$-ary	finite		BDMC	afinite	
		LLN		Arikan09	Arikan09	STA09	STA09	STA09		SRDR12	w		
		LDP★	AT09		AT09		STA09	MT10	Sasoglu11	HY13	w		
		CLT★	KMTU10		HAU14		BGNRS18	w		w			w		w		
		MDP★	GX15		GX15		BGS18	w		w			w		w		
		LDP		KSU10		KSU10		w		w		w			w		w		
		CLT		FHMV18		GRY20		w		w		w			w		w		
		MDP		w			w			w		w		w			w		w		
	}\tableRefarray
	\def\arraystretch{1.44}
	\def\citeform#1{#1}
	\advance\tabcolsep-1pt
	\smaller
	\def\bigstar{$^⋆$}
	\def\assigncontent#1{\pgfkeyssetvalue{/pgfplots/table/@cell content}{#1}}
	\def\decodecolorcontent#1#2\relax{
		\if\pgfplotstablecol0	\assigncontent{#1#2}
		\else\if#1w				\assigncontent{Cor \ref{cor:holy}}
		\else					\assigncontent{\cite{#1#2}}
		\fi\fi
	}
	\pgfplotstabletypeset[
		columns/afinite/.style={column name=finite},
		every head row/.style={
			before row=\toprule&\multicolumn5c{Symmetric}&\multicolumn2c{Asymmetric}\\,
			after row=\midrule},
		every last row/.style={after row=\bottomrule},
		assign cell content/.code={\decodecolorcontent#1\relax}
	]\tableRefarray
\end{table}

	\begin{cor}[Very good code for lossless compression]
		For any lossless compression problem, for any constants $π,ρ>0$ such that
		$\pi+2\rho<1$, there is a series of source codes with
		block length $N$ approaching infinity,
		block error probability $\exp(-N^π)$,
		code rate $N^{-ρ}$ plus the conditional entropy,
		and encoding and decoding complexity $O(N㏒N)$ per code block.
	\end{cor}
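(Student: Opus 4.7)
The plan is to reduce this corollary to the preceding channel-coding Corollary~\ref{cor:holy} through the dictionary developed at the end of Chapter~\ref{cha:dual} (the side note on lossless compression). For a source $X$ with side information $Y$, interpret the conditional law $P(y\mid x)$ as an abstract DMC $W$ whose input is the source and whose output is the side information, then polarize $W$ using the random dynamic kernel supplied by Chapter~\ref{cha:random}. In the compression reading of polar coding, reliable synthetic descendants correspond to bits whose value is almost determined by $Y$ and the previously recovered bits, so the compressor may omit them; noisy descendants are precisely the bits that must be recorded and transmitted. Consequently the code rate equals the density of noisy descendants, and the block error probability is bounded by the sum of Bhattacharyya parameters among the omitted descendants.

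The steps in order are as follows. First, fix $(\pi,\rho)$ with $\pi+2\rho<1$ and pick $\pi'>\pi$ and $\rho'>\rho$ still satisfying $\pi'+2\rho'<1$, reserving slack to absorb polynomial prefactors later. Second, apply \cref{cor:holy} to $W$ at exponents $(\pi',\rho')$ to obtain a large $\ell$ and a random dynamic kernel $\mathcal{G}$ such that, writing $N=\ell^n$, for all sufficiently large $n$,
\[ \mathsf{P}\bigl\{Z(\mathsf{W}_n) < e^{-\ell^{\pi' n}}\bigr\} > 1 - H(W) - \ell^{-\rho' n}. \]
Third, let $\mathcal{J}$ collect the synthetic descendants that \emph{fail} this threshold; these are the indices the compressor transmits. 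The code rate is then $R \le H(W) + \ell^{-\rho' n} = H(X\mid Y) + N^{-\rho'}$, while the standard successive-cancellation union bound $P_{\mathrm{e}} \le \sum_{\mathcal{J}^c} Z(\mathsf{W}_n)$ yields $P_{\mathrm{e}} \le N \exp(-N^{\pi'})$. Fourth, absorb the polynomial prefactor: for $N$ large, $N\exp(-N^{\pi'}) < \exp(-N^{\pi})$ and $N^{-\rho'} < N^{-\rho}$, giving the claimed parameters.

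The main obstacle, which is more a verification than a genuine difficulty, is confirming that the random-kernel construction of Chapter~\ref{cha:random}---stated for noisy-channel coding over DMCs---transfers to the lossless-compression interpretation without loss. This reduces to two ingredients: the lower bound on $\mathsf{P}\{Z(\mathsf{W}_n)\text{ small}\}$ provided by \cref{cor:holy}, and the additive block-error estimate $P_{\mathrm{e}}\le\sum Z$ already established in Chapter~\ref{cha:origin} and reused in the compression setting at the end of Chapter~\ref{cha:dual}. Both are available, so no new analysis is required. The complexity $O(N\log N)$ is inherited directly: since the kernels have bounded size $\ell$, each of the $n$ levels of successive cancellation costs $O(N)$ arithmetic operations, giving $O(nN) = O(N\log N)$ overall.
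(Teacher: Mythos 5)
Your proposal is correct and follows essentially the same route as the paper: the paper treats this corollary as an immediate consequence of the Hypotenuse theorem (via Corollary~\ref{cor:holy}), using the source-as-channel-input reading from the lossless-compression side note, with rate given by the density of noisy descendants, block error bounded by the sum of $Z$'s of the omitted descendants, and the polynomial prefactor and $o(n)$ slack absorbed by slightly perturbing $(\pi,\rho)$.
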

	
	\begin{cor}[Very good code for lossy compression]
		For any lossy compression problem, for any constants $π,ρ>0$ such that
		$\pi+2\rho<1$, there is a series of source codes with
		block length $N$ approaching infinity,
		block error probability $\exp(-N^π)$,
		code rate $N^{-ρ}$ plus the test channel capacity,
		and encoding and decoding complexity $O(N㏒N)$ per code block.
	\end{cor}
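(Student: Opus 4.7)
The plan is to combine the lossy-compression framework from \cref{cha:dual} with the Hypotenuse theorem from \cref{cha:random} applied to the test channel. First, fix a conditional distribution $W(x\mid y)$ that minimizes $R(\Delta) = \min I(X;Y)$ subject to $E[\dist(X,Y)] \leq \Delta$; this $W$ serves as the test channel. If the distortion palette is not of prime-power size, enlarge it with dummy symbols penalized by distortion $1$, following the reduction in \cref{sec:six}, so that without loss of generality the input alphabet of $W$ is $\mathbb{F}_q$.

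Second, given $\pi,\rho>0$ with $\pi+2\rho<1$, invoke the Hypotenuse theorem on this test channel: there exist a sufficiently large $\ell$ and an amoebic random dynamic kernel $\mathcal{G}\in\GL(\ell,q)$ such that
\[
    \mathsf{P}\{S(\mathsf{W}_n) < \exp(-\ell^{\pi n})\} > H(W) - \ell^{-\rho n},
\]
where $\{\mathsf{W}_n\}$ is the channel process grown from $W$ through $\mathcal{G}$. Pay the explicit Hölder toll from \cref{pro:ex-toll} relating $S$ to $1-H$ (and hence to $T$, via the bi-Hölder relations of \cref{sec:param}) to transfer this to the analogous statement for $T$; the constants lost are absorbed by an arbitrarily small perturbation of $\pi$.

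Third, following \cref{cha:dual}, declare the code by letting $\mathcal{J}$ collect the depth-$n$ reliable descendants --- those whose $T$-value is not small --- whose bits the compressor transmits to the decompressor. The noisy descendants, where $T(\mathsf{W}_n) < \exp(-\ell^{\pi n})$, are simulated on the decompressor side by a shared pseudorandom number generator. The bound $D - \Delta \leq \frac{1}{N}\sum T$ over the noisy descendants from \cref{cha:dual} yields excess distortion at most $\exp(-N^\pi)$, while in the symmetric case the code rate equals
\[
    R = 1 - \mathsf{P}\{T(\mathsf{W}_n) < \exp(-\ell^{\pi n})\} < I(W) + N^{-\rho}.
\]
Since $\mathcal{G}$ has fixed size $\ell \times \ell$ and the tree depth is $n = \log_\ell N$, the standard polar encoder/decoder runs in $O(N\log N)$ time per code block.

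The main obstacle is the treatment of an asymmetric test channel, which is the generic case in lossy compression when the optimizing input distribution $Q$ is non-uniform. As in the asymmetric-channel epilogue of \cref{cha:dual}, one must simultaneously polarize $Q$ viewed as a channel with constant output so that the reconstructed $X_1^N$ has the correct marginal distribution. This calls for a joint inclusion--exclusion estimate of the form
\[
    \mathsf{P}\{T(\mathsf{W}_n) < \theta \text{ and } Z(\mathsf{Q}_n) < \theta\} > I(W) - 6\gamma,
\]
obtained by applying the Hypotenuse theorem to both $\{\mathsf{W}_n\}$ and $\{\mathsf{Q}_n\}$ and invoking the common-fate property to discard the impossible cross-type events. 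Provided the two applications use the same exponents $(\pi,\rho)$, the desired trade-off $\pi+2\rho<1$ is preserved in the combined estimate.
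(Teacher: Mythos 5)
Your overall route is the paper's: reduce the palette to a prime-power alphabet by adding dummy symbols penalized with distortion $1$, apply the Hypotenuse theorem to the test channel to control the $\mathsf S$-process, pay the Hölder tolls of \cref{pro:ex-toll} (or \cref{lem:PvsT,lem:PvsS}) to convert this into a statement about $T$, and feed it into the lossy-compression reduction of \cref{cha:dual} (simulate the small-$T$ descendants, transmit the rest), with $O(N\log N)$ complexity coming from the fixed kernel size. Up to the harmless relabelling of $\mathcal J$ (in the paper $\mathcal J$ is the simulated small-$T$ set, not the transmitted one), that part matches the intended proof.

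The concrete flaw is in your asymmetric-case display. The event $\bigl\{T(\mathsf W_n)<\theta \text{ and } Z(\mathsf Q_n)<\theta\bigr\}$ is empty: $Z(\mathsf Q_n)<\theta$ forces $H(\mathsf Q_n)\approx 0$, while $T(\mathsf W_n)<\theta$ forces $H(\mathsf W_n)\approx 1$, and $H(\mathsf W_n)\le H(\mathsf Q_n)$; this is precisely one of the ``basic facts'' listed in the wrap-up of \cref{cha:dual}. So the claimed lower bound $I(W)-6\gamma$ on its probability is false as soon as $I(W)>6\gamma$. What the non-uniform test channel actually requires are the two separate bounds $\mathsf P\{T(\mathsf W_n)<\theta\}>H(W)-\gamma$ and $\mathsf P\{Z(\mathsf Q_n)<\theta\}>1-H(Q)-\gamma$ (Hypotenuse applied to $W$ and to $Q$ viewed as a constant-output channel), combined with the disjointness just cited: the simulated set and the shaping-deterministic set are then disjoint and together cover all but an $I(W)+2\gamma$ fraction of indices, so the transmitted rate is at most $I(W)+N^{-\rho}$, while the excess distortion and the shaping error are bounded by the per-index thresholds as in \cref{ine:D<sum,ine:P<sumsum}. (If you do want an intersection estimate in the style of the asymmetric channel-coding argument, the correct pairing is $Z(\mathsf W_n)$ with $T(\mathsf Q_n)$, not the pair you wrote; note also that for compression the rate bound needs the untransmitted sets to be large, not the transmitted set, so the two one-sided bounds plus disjointness are what is really used.) With that correction your argument goes through and coincides with the paper's.
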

	
	The next chapter prunes.

\chapter{Joint Pruning and Kerneling}\label{cha:joint}

	\dropcap
	Combination of two techniques, when executed properly,
	gives results that inherit advantages from the two individual techniques.
	In this chapter, I want to combine pruning from \cref{cha:prune}
	and random dynamic kerneling from \cref{cha:random}, and will verify that
	the chimera codes have the optimal gap to capacity and log-logarithmic complexity.
	
	Let me elaborate.
	First, we do not expect the resulting codes to have elpin error,
	that is, block error probability $\exp(-ℓ^{πn})$.
	This is because we set the threshold $θ≔N^{-2}$
	and prune the channel tree whenever $𝘡_m$ reaches $θ$.
	Should we wait for $𝘡_m$ to become as small as $θ≔\exp(-ℓ^{πn})$,
	pruning will not take place at $O(㏒n)$ depth,
	and there will be little to no savings on EU--DU pairs.
	In conclusion, there is a conflict between elpin error and log-logarithmic
	complexity, and I will give up the optimal decay of error to favor low complexity.
	
	That being said, theer is no obvious conflict between
	complexity and code rate, and it is very likely that we can retain
	both from \cref{cha:prune} and \cref{cha:random}, respectively.
	Hence this constituents the goal for this section---Construct error correcting codes
	with gap to capacity close to $N^{-1/2}$, encoding and decoding complexity
	$O(N㏒(㏒N))$ per block, and block error probability as small as possible.

\section{Toolbox Checklist}

	From \cref{cha:random}, it is possible to construct a channel process $\{𝘞_m\}$
	such that, for any $π+2ρ<1$ and large $m$ (depending on $π,ρ$),
	\begin{gather*}
		𝘗｛𝘡_m<e^{-ℓ^{πm}}｝>1-H(W)-ℓ^{-ρm},	\label{ine:Z-elpin}\\
		𝘗｛𝘚_m<e^{-ℓ^{πm}}｝>H(W)-ℓ^{-ρm}.		\label{ine:S-elpin}
	\end{gather*}
	From \cref{cha:general}, we know that every ergodic kernel
	assumes a positive $ϱ$, so every kernel assumes at least
	a pair $π,ρ>0$ such that \cref{ine:Z-elpin,ine:S-elpin} hold.
	
	From \cref{cha:prune}, we can set a threshold $θ$, which serves two purposes:
	One, we prune the channel tree at the point where $𝘡_m$ or $𝘚_m$ falls below $θ$.
	This in turn defines the stopping time
	\[𝘴≔n∧\min\{m:\min(Ｚ(𝘞_m),Ｓ(𝘞_m))<θ\}.\]
	Two, we collect in $𝒥$ indices that point to synthetic channels $𝘞_m$ whose
	$Ｚ$-parameter reaches $θ$, which is the cause why $𝘴$ is set to this depth $m$.
	
	For the asymmetric case, $θ$ defines the stopping time
	\[𝘴≔n∧\min\left\{m:\mat{
		\min(Ｚ(𝘞_m),Ｓ(𝘞_m))<θ† and†	\\
		\min(Ｚ(𝘘_m),Ｓ(𝘘_m))<θ
	}\right\}.\]
	$𝒥$ will then collect indices pointing
	to $𝘞_m$ whose $Ｚ$-value reaches $θ$ and to $𝘘_m$ whose $Ｓ$-value reaches $θ$.
	To put it differently, $𝒥$ is the event where both $Ｚ(𝘞_𝘴),Ｓ(𝘘_𝘴)<θ$.
	
	$θ$ and $𝘴$ determine a code. 
	We have three lemmas that generalize
	\cref{lem:prune-C,lem:prune-R,lem:prune-P} to arbitrary matrix kernels.
	Their proofs are straightforward (perhaps tautological) and will be sketchy.
	
	\begin{lem}[Complexity in terms of $𝘴$]
		The encoding and decoding complexity is
		$O(𝘌[𝘴])$ per channel usage, or $O(N𝘌[𝘴])$ per code block.
	\end{lem}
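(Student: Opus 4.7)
The plan is to repeat the EU--DU counting of \cref{lem:prune-C} verbatim, with the branching factor promoted from $2$ to $\ell$. First I would recall the trinitarian correspondence established in \cref{cha:prune}: one application of the kernel transformation at an internal node of the channel tree corresponds to exactly one EU--DU block that wraps $\ell$ parallel copies of its parent channel and produces $\ell$ child channels, whereas pruned (naked) nodes spend nothing. For fixed $\ell$ and $q$, a single EU--DU block costs a bounded amount of arithmetic and memory; call this the \emph{per-block constant}. It depends on $\ell$, $q$, and on the finite family from which the dynamic kernel $\mathcal{G}$ draws its matrices, but not on $N$ or on the depth $m$.

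Second, I would count. At depth $m$, for each realization $(𝘑_1,\ldots,𝘑_m)=j_1^m$ with $𝘴>m$, we spend $\ell^{n-m-1}$ EU--DU blocks to transform the $\ell^{n-m}$ copies of $𝘞_m=\bigl(\dotsb((WＷ{j_1})Ｗ{j_2})\dotsb\bigr)Ｗ{j_m}$ one level further. Summing with indicators and swapping the order via Fubini,
\[
\sum_{m=0}^{n-1}\sum_{j_1^m\in[\ell]^m}\ell^{n-m-1}\,𝘐\{𝘑_1^m=j_1^m\text{ and }𝘴>m\}
=\sum_{m=0}^{n-1}\ell^{n-1}\,𝘗\{𝘴>m\}=\frac{N}{\ell}\,𝘌[𝘴].
\]
Multiplying by the per-block constant yields the claimed $O(N\,𝘌[𝘴])$ per code block, and dividing by $N$ gives $O(𝘌[𝘴])$ per channel usage.

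The only delicate point, and the one I expect to be the main obstacle, is keeping the per-block constant truly $O(1)$ under dynamic kerneling. Because $\GL(\ell,q)$ is a finite set once $\ell$ and $q$ are fixed, the supremum over candidate kernels of ``cost of one EU--DU block'' is itself a finite number, so the issue dissolves. In the asymmetric setup, where one simultaneously operates on the $W$-tree and the $Q$-tree using the same kernel at each node, the counting is identical up to doubling the per-block constant. The scenario in which $\ell$ grows with $N$, as used in \cref{cha:random} to push $\varrho\to 1/2$, would force a $\poly(\ell)$ factor into the per-block constant that must be tracked explicitly; but this complication does not arise here because the chapter first fixes the kernel family and only then prunes.
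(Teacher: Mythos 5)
Your proposal is correct and takes essentially the same route as the paper: the paper likewise reduces the claim to "complexity is proportional to the number of EU--DU devices (equivalently, transformations) used," invokes \cref{lem:prune-C} for the counting pattern, and concludes that a trajectory $𝘞_0,\dotsc,𝘞_𝘴$ undergoes $𝘴$ transformations, so the total is $N𝘌[𝘴]$. Your depth-by-depth Fubini count giving $(N/\ell)\,𝘌[𝘴]$ blocks, and your remark that the per-block cost stays $O(1)$ because $\ell$ and $q$ are fixed, are just a more explicit rendering of that same argument.
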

	
	\begin{proof}
		Similar to \cref{lem:prune-C}, I claim without a proof that
		the encoding and decoding complexity is proportional to  the number of
		EU--DU devices in the circuit and to the number of synthetic channels
		(multiplicity included) that undergo channel transformation.
		
		Since a trajectory $𝘞₀,𝘞₁…𝘞_𝘴$ undergoes the transformation $𝘴$ times,
		the average number of transformations is $𝘌[𝘴]$,
		and hence the total number of transformations is $N𝘌[𝘴]$.
	\end{proof}
	
	\begin{lem}[$R$ in terms of $𝒥$]
		The code rate is $𝘗\{𝘑₁^𝘴∈𝒥\}$, or $𝘗(𝒥)$ for short.
	\end{lem}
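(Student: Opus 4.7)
The plan is to mirror the proof of Lemma \ref{lem:prune-R}, generalizing from the fixed $\loll$ kernel to an arbitrary, possibly dynamic, kerneling regime. The first step is to re-invoke the same unproven black-box claim used there: that the code rate equals the density, among the $N$ total naked pins of the encoder/decoder circuit, of those naked pins whose corresponding synthetic channel belongs to $𝒥$. This claim is a structural property of the construction and is insensitive to the choice of kernel sizes or to whether the kernels vary from node to node.

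Next I would set up the counting correspondence on a per-trajectory basis. Fix a realized trajectory $𝘑_1^𝘴 = j_1^𝘴$ and let $\ell_m$ denote the dimension of the kernel that transforms $𝘞_{m-1}$ into its children along this trajectory; in the dynamic regime, $\ell_m$ is a function of $𝘞_{m-1}$, hence of $𝘑_1^{m-1}$. Because each EU/DU device of size $\ell$ turns one pair of pins into $\ell$ pairs of child pins, the grand total of naked pins remains $N$ no matter where we prune, and the multiplicity of $𝘞_𝘴$ in the circuit along this trajectory is $N/(\ell_1 \ell_2 \cdots \ell_𝘴)$. On the stochastic side, each $𝘑_m$ is uniform on $\{1,\ldots,\ell_m\}$ given the history, so $𝘗\{𝘑_1^𝘴 = j_1^𝘴\} = 1/(\ell_1 \ell_2 \cdots \ell_𝘴)$. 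Thus, along each trajectory, multiplicity divided by $N$ equals the prefix probability exactly.

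Summing this path-by-path identity over trajectories in $𝒥$, the density of useful naked pins is precisely $𝘗\{𝘑_1^𝘴 \in 𝒥\} = 𝘗(𝒥)$, which by the invoked claim is $R$. The only conceivable subtlety is the random branching factor in the dynamic case, but since both the multiplicity count and the prefix probability depend on $\ell_1,\ldots,\ell_𝘴$ in exactly the same multiplicative way along each trajectory, no new obstacle arises and the argument from Chapter \ref{cha:prune} goes through essentially verbatim.
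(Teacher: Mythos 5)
Your proposal is correct and follows essentially the same route as the paper: invoke the unproven structural claim that the rate equals the density of naked pins selected in $𝒥$, then match the multiplicity of each realization of $𝘞_𝘴$ in the circuit with the probability of the corresponding prefix $𝘑_1^𝘴$. If anything, your per-trajectory bookkeeping with the branching factors $\ell_1\cdots\ell_𝘴$ is slightly more careful in the dynamic-kernel setting than the paper's own proof, which still writes the binary-case quantities $N/2^{𝘴}$ and $2^{-𝘴}$.
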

	
	\begin{proof}
		Similar to \cref{lem:prune-R}, I claim without a proof that
		the code rate is the density of naked pins that are selected in $𝒥$.
		
		Every pair of naked pins possesses probability measure $1/N$
		because there are always $N$ pairs of naked pins.
		Every synthetic channel $𝘞_𝘴$ assumes $N/2^𝘴$ copies in the circuit,
		hence possesses probability measure $2^{-𝘴}$.
		All indices $𝘑₁^𝘴$ in $𝒥$ possesses probability measure $2^{-m}$,
		which coincides with the measure of pins.
		Thus the density of selected pins is $𝘗\{𝘑₁^𝘴∈𝒥\}$.
	\end{proof}
	
	\begin{lem}[$Ｐ$ in terms of $𝒥$]
		The block error probability is
		$N𝘌[Ｐ(𝘞_𝘴)·𝘐\{𝘑₁^𝘴∈𝒥\}]≤qNθ$ for the symmetric case.
		For the asymmetric case, the block error probability is
		$N𝘌[Ｐ(𝘞_𝘴)·𝘐\{𝘑₁^𝘴∈𝒥\}]≤qNθ+N𝘌[T(𝘞_𝘴)·𝘐\{𝘑₁^𝘴∈𝒥\}]$,
		witch is at most $3qNθ$ in total.
	\end{lem}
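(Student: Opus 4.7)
The plan is to combine a combinatorial accounting argument in the style of \cref{lem:prune-P} with the Hölder tolls established in \cref{lem:PvsZ,lem:PvsT}. For the symmetric case, the identity $Ｐ=N𝘌[Ｐ(𝘞_𝘴)·𝘐\{𝘑₁^𝘴∈𝒥\}]$ is essentially tautological: each synthetic channel $𝘞_𝘴$ occurs in exactly $N/2^𝘴$ copies in the encoder/decoder circuit while its index prefix $𝘑₁^𝘴$ carries probability measure $2^{-𝘴}$, so the per-channel error $Ｐ(𝘞_𝘴)$ weighted by both factors accumulates to $N·Ｐ(𝘞_𝘴)·2^{-𝘴}$, and summing over selected prefixes yields the claimed expectation exactly as in the proof of \cref{lem:prune-P}.

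To upgrade this tautology to the ceiling $qNθ$, I would invoke \cref{lem:PvsZ}, which asserts $Ｐ(W)≤(q-1)Z(W)/2$. Since the selection rule defining $𝒥$ forces $Ｚ(𝘞_𝘴)<θ$ on the event $\{𝘑₁^𝘴∈𝒥\}$, and $Z≤Ｚ$ by the normalization set in \cref{sec:param}, the integrand is pointwise less than $(q-1)θ/2<qθ/2$, and integrating recovers the bound with room to spare.

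For the asymmetric case, the block error probability decomposes into an anti-error contribution and an input-shaping contribution following the template of \cref{ine:P<sumsum}: the former is controlled by $Ｚ(𝘞_𝘴)$ through \cref{lem:PvsZ}, while the latter is controlled by $T(𝘘_𝘴)$ through \cref{lem:PvsT} (reading the ``$T(𝘞_𝘴)$'' in the statement as a typo for $T(𝘘_𝘴)$, since it is the input-distribution process that governs the shaping overhead). Applying the same combinatorial accounting to each summand separately reproduces the two-term expectation. On $\{𝘑₁^𝘴∈𝒥\}$, both $Ｚ(𝘞_𝘴)<θ$ and $Ｓ(𝘘_𝘴)<θ$ hold by the asymmetric definition of $𝒥$; the former contributes at most $qNθ/2$, and the latter, after a bounded Hölder toll from $Ｓ<θ$ to $T<O(θ)$ via \cref{pro:im-toll}, contributes at most $qNθ$. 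Summing leaves the grand total comfortably under $3qNθ$.

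The main obstacle I anticipate is justifying the decomposition \cref{ine:P<sumsum} for the pruned encoder rather than the fully developed one: that inequality was cited only in passing and rests on the Honda--Yamamoto construction referenced in \cref{cha:dual}. I would need to verify that the additive splitting into $Ｚ$- and $T$-contributions is preserved when the channel tree is harvested at the stopping time $𝘴$, together with the correct $1/2^𝘴$ weighting for each leaf. The symmetric case, by contrast, amounts to little more than a bookkeeping restatement of \cref{lem:prune-P} combined with one explicit Hölder toll.
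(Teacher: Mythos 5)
Your proposal follows the paper's own route almost step for step: the same unproved accounting claim (the pruned analogue of \cref{lem:prune-P}, with each leaf $𝘞_𝘴$ weighted by its $2^{-𝘴}$ measure and $N/2^𝘴$ multiplicity), the bound $Ｐ(𝘞_𝘴)≤(q-1)Z(𝘞_𝘴)/2≤qＺ(𝘞_𝘴)/2<qθ/2$ from \cref{lem:PvsZ} for the anti-error term, and the reading of ``$T(𝘞_𝘴)$'' in the statement as $T(𝘘_𝘴)$, which is exactly what the paper's proof bounds. The one place you diverge is also the one place your argument is too weak for the stated constant: for the shaping term you pass from $Ｓ(𝘘_𝘴)<θ$ to ``$T<O(θ)$'' via \cref{pro:im-toll}, but that proposition is only the implicit bi-H\"older statement $T≤c\,Ｓ^d$ with unspecified $c,d$, so it cannot deliver a bound linear in $θ$, let alone the explicit $2qθ$ per item needed to total $3qNθ$. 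The paper instead chains the two explicit lemmas you already cite: \cref{lem:PvsT} rewrites its upper bound as $T(𝘘_𝘴)≤\frac{2(q-1)^2}{q}\bigl(1-\frac{q}{q-1}Ｐ(𝘘_𝘴)\bigr)$, and \cref{lem:PvsS} bounds the parenthesis by $S(𝘘_𝘴)$, giving $T(𝘘_𝘴)≤2qS(𝘘_𝘴)≤2qＳ(𝘘_𝘴)<2qθ$ and hence the $2qNθ$ plus $qNθ$ total. Replace the appeal to \cref{pro:im-toll} with this explicit chain and your proof matches the paper's; everything else, including your caveat that the decomposition behind \cref{ine:P<sumsum} is only claimed (the paper likewise asserts it without proof), is in order.
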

	
	\begin{proof}
		Similar to \cref{lem:prune-P}, I claim without a proof that,
		for the symmetric case, the block error probability is the sum of
		the $Ｐ$-values of the $𝘞_𝘴$ in $𝒥$ (multiplicity included).
		And for the symmetric case, the block error probability is the said sum
		plus the total of the $T$-values of the $𝘘_𝘴$ in $𝒥$ (multiplicity included).
		
		For the sum of $Ｐ(𝘞_𝘴)$, we learn from \cref{lem:PvsZ}
		that $Ｐ(𝘞_𝘴)≤qZ(𝘞_𝘴)/2≤qＺ(𝘞_𝘴)/2≤qθ/2$.
		Hence a sum of at most $N$ items, each at most $qθ/2$, is at most $qNθ$.
		
		For the sum of $T(𝘘_𝘴)$, we learn form \cref{lem:PvsT} that
		\begin{align*}
			T(𝘘_𝘴)
			&	≤÷{2(q-1)}q-÷2q（(q-1)qＰ(W)-(q-1)(q-2)）	\\
			&	=÷{2(q-1)}q（1-\(qＰ(𝘘_𝘴)-(q-2)\)）=÷{2(q-1)²}q（1-÷q{q-1}Ｐ(𝘘_𝘴)）
			\shortintertext{and then from \cref{lem:PvsS} that}
			&	≤÷{2(q-1)²}qS(𝘘_n)≤2qS(𝘘_n)≤2qＳ(𝘘_n)≤2qθ.
		\end{align*}
		Hence a sum of at most $N$ items, each at most $2qθ$, is at most $2qNθ$.
		The contributions of $Ｐ(𝘞_𝘴)$ and $T(𝘘_n)$ amount to $3qNθ$;
		that finishes the proof.
	\end{proof}
	
	Now I can state and prove the main theorem in this chapter.
	My contribution is twofold.
	First contribution:
	If you insist on using a certain matrix $G$ to construct polar codes,
	then either $G$ is not ergodic (in which case there is no polarization at all),
	or you can construct log-logarithm codes with some positive $ρ$.
	Second contribution:
	If you allow dynamic kerneling with very large $ℓ$, then you can construct
	log-logarithm codes whose $ρ$ is arbitrarily close to $1/2$, the optimal exponent.

\section{Log-logarithmic Codes}

	The first case I want to discuss here is when a fixed kernel $G$ is selected prior.
	In this case, the best $ρ$ we can hope for
	is the $ϱ$ in the eigen/en23/een13 behavior of $G$.
	(After all, pruning should not improve the extent of polarization.)
	
	\begin{thm}[Log-log for asymmetric $q$-ary]\label{thm:ll-R-P}
		Fix a $q$-ary channel $W$.
		Fix a kernel $G∈𝔽_q^{ℓ×ℓ}$ and a pair $(π,ρ)$
		that satisfy the two-sided elpin behavior
		\begin{gather*}
			𝘗｛𝘡_m<e^{-ℓ^{πm}}｝>1-H(W)-ℓ^{-ρm+o(m)},	\\
			𝘗｛𝘚_m<e^{-ℓ^{πm}}｝>H(W)-ℓ^{-ρm+o(m)}.	
		\end{gather*}
		Then pruning the channel tree with threshold $θ≔1/3qN²$ yields
		\begin{gather*}
			𝘌[𝘴]=O(㏒(㏒N)),							\\
			𝘗(𝒥)=I(W)-N^{-ρ+o(1)},					\\
			N𝘌[Ｐ(𝘞_𝘴)·𝘐(𝒥)]+N𝘌[T(𝘘_𝘴)·𝘐(𝒥)]≤1/N.	
		\end{gather*}
	\end{thm}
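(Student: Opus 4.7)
The plan is to mirror the three-part analysis carried out in the proofs of \cref{thm:actual-C,thm:actual-R}, now lifted to the asymmetric $q$-ary setting with a general kernel $G$ and driven solely by the elpin hypothesis. The error-probability bound is immediate from the stopping rule combined with the Hölder tolls of \cref{lem:PvsZ,lem:PvsT,lem:PvsS}: by construction, any index in $\mathcal{J}$ contributes $Z(\mathsf{W}_{\mathsf{s}}) < \theta$ and $S(\mathsf{Q}_{\mathsf{s}}) < \theta$, which translate to $P(\mathsf{W}_{\mathsf{s}}) \leq q\theta/2$ and $T(\mathsf{Q}_{\mathsf{s}}) \leq 2q\theta$, so the weighted sum over the at most $N$ selected channels amounts to $3qN\theta = 1/N$.

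For the complexity bound I would follow the template of \cref{thm:actual-C} and write $\mathsf{E}[\mathsf{s}] = \sum_{m=0}^{n-1} \mathsf{P}\{\mathsf{s} > m\}$, splitting the summation at the threshold $m^{\star} \asymp \log_{\ell}\log N$ where $\exp(-\ell^{\pi m})$ first drops below $\theta = 1/(3qN^2)$. For $m \leq m^{\star}$ the summand is bounded trivially by $1$, producing the desired $O(\log\log N)$ contribution; for $m > m^{\star}$, the event $\{\mathsf{s} > m\}$ forces all four of $Z(\mathsf{W}_m), S(\mathsf{W}_m), Z(\mathsf{Q}_m), S(\mathsf{Q}_m)$ to exceed $\theta$, and the two-sided elpin hypothesis applied to each process caps the joint failing probability by $2\ell^{-\rho m + o(m)}$, whose geometric tail contributes only $O(1)$.

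For the code rate I would adapt the bad-event analysis of \cref{thm:actual-R} but run it in parallel on both processes. Let $\mathsf{B} \coloneqq (\{Z(\mathsf{W}_m) \to 0\} \cap \{S(\mathsf{Q}_m) \to 0\}) \setminus \mathcal{J}$ and decompose by the reason $\mathcal{J}$ was not triggered on a given trajectory: either $\mathsf{s}$ was set prematurely by the dual parameter $S(\mathsf{W}_{\mathsf{s}}) < \theta$ (crossing the common-fate barrier, so that Ville's inequality on the supermartingale $\{Z(\mathsf{W}_n)\}$ caps the contribution by $O(\theta) = O(1/N^2)$), or dually for the $\mathsf{Q}$-process, or the trajectory ran out the clock at $\mathsf{s} = n$, a possibility already bounded in the complexity analysis by $\ell^{-\rho n + o(n)} = N^{-\rho + o(1)}$. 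Combined with the inclusion--exclusion identity $\mathsf{P}\{Z(\mathsf{W}_m) \to 0 \text{ and } S(\mathsf{Q}_m) \to 0\} = I(W) - O(\ell^{-\rho n + o(n)})$ harvested from the end of \cref{cha:dual}, this yields $\mathsf{P}(\mathcal{J}) \geq I(W) - N^{-\rho + o(1)}$.

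The main obstacle I anticipate sits in this code-rate step: one must jointly track $\{\mathsf{W}_m\}$ and $\{\mathsf{Q}_m\}$ through the stopping time $\mathsf{s}$ and argue that their common-fate interlock survives the $\theta$-level pruning without leaking more than $N^{-\rho + o(1)}$ of probability mass. In the symmetric version of \cref{cha:prune} a single-process inequality of the form $\mathsf{P}\{Z(\mathsf{W}_{\mathsf{s}}) < \theta\} \geq \mathsf{P}\{Z(\mathsf{W}_m) \to 0\} - \mathsf{P}(\mathsf{B})$ was enough, but here one must verify that the three disjointness facts recorded at the end of \cref{cha:dual} are preserved under the jointly defined stopping rule, so that inclusion--exclusion recovers the capacity $I(W) = H(Q) - H(W)$ rather than merely $1 - H(W)$ or $H(Q)$ alone.
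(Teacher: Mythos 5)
Your architecture matches the paper's: the error bound is exactly the toolbox lemma, the complexity bound splits $\mathsf{E}[\mathsf{s}]=\sum_m\mathsf{P}\{\mathsf{s}>m\}$ at the depth where $\exp(-\ell^{\pi m})$ crosses $\theta$, and the rate bound decomposes the bad event according to the reason the stopping rule fired, just as in \cref{thm:actual-C,thm:actual-R}. One cosmetic slip first: $\{\mathsf{s}>m\}$ does \emph{not} force all four of $Z(\mathsf{W}_m),S(\mathsf{W}_m),Z(\mathsf{Q}_m),S(\mathsf{Q}_m)$ to exceed $\theta$; since the stopping rule demands that \emph{both} processes have a small parameter, its negation only forces both parameters of at least one of the two processes to exceed $\theta$. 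The union bound over the two processes (which is what the paper writes) gives the same $\ell^{-\rho m+o(m)}$ tail, so the conclusion survives, but the containment you state is false as written.

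The substantive gap is in the rate step: you bound the barrier-crossing events, e.g.\ $\mathsf{P}\{Z(\mathsf{W}_m)\to 0$ but $S(\mathsf{W}_{\mathsf{s}})<\theta\}$, by ``Ville's inequality on the supermartingale $\{Z(\mathsf{W}_n)\}$,'' and this step fails twice. First, for a general kernel $G$ the process of (maximal) Bhattacharyya parameters is \emph{not} known to be a supermartingale---the paper says so explicitly in \cref{sec:een13} and has to manufacture the artificial supermartingale of \cref{lem:artifact}---and no supermartingale property is available for the dual parameter $S$ either. Second, even granting such a property, Ville's inequality controls the probability that a process which is currently \emph{small} ever becomes large; on your event the channel looks noisy at time $\mathsf{s}$, so $Z(\mathsf{W}_{\mathsf{s}})$ is large, and what must be excluded is that $Z$ later falls to $0$, the opposite crossing. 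The paper's proof avoids both problems: it pays the explicit H\"older toll of \cref{pro:ex-toll}, namely $1-H(\mathsf{W}_{\mathsf{s}})\le q^3\sqrt{S(\mathsf{W}_{\mathsf{s}})}<N^{-1+o(1)}$ when $S(\mathsf{W}_{\mathsf{s}})<\theta=1/3qN^2$, and then uses that $\{H(\mathsf{W}_n)\}$ is a bounded martingale (optional stopping) to bound the probability of the reliable destiny after time $\mathsf{s}$ by $1-H(\mathsf{W}_{\mathsf{s}})\le N^{-1+o(1)}$, and dually for the $\mathsf{Q}$-process; note also that the resulting bound is $N^{-1+o(1)}$, not your $O(\theta)=O(N^{-2})$, though either suffices against $N^{-\rho+o(1)}$. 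Your remaining pieces---$\mathsf{P}\{\mathsf{s}=n\}\le N^{-\rho+o(1)}$ from the complexity tail and the inclusion--exclusion giving $I(W)=H(Q)-H(W)$---are fine and coincide with the paper.
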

	
	\begin{proof}
		To estimate $𝘌[𝘴]=∑_{m=1}^{n-1}𝘗\{𝘴>m\}$, we mimic \cref{thm:actual-C}.
		Consider small $m$ and large $m$.
		Those with $\exp(-ℓ^{πm})≥θ$ are called small $m$.
		Those with $\exp(-ℓ^{πm})<θ$ are called large $m$.
		For small $m$, we do not expect decent polarization and assume $𝘴≥m$.
		That is, we upper bound $𝘗\{s>m\}≤1$ by a pessimistic value.
		For large $m$, we argue that
		\begin{align*}
			\quad&\kern-1em
			𝘗\{s>m\}
				≤𝘗\{Ｚ(𝘞_m)∨Ｓ(𝘞_m))≥θ† or †Ｚ(𝘘_m)∨Ｓ(𝘘_m)≥θ\}	\\
			&	≤𝘗\{\max(Ｚ(𝘞_m),Ｓ(𝘞_m))≥θ\}+𝘗\{\max(Ｚ(𝘘_m),Ｓ(𝘘_m))≥θ\}	\\
			&	≤𝘗｛Ｚ(𝘞_m)∨Ｚ(𝘞_m)≥e^{-ℓ^{πm}}｝
				+𝘗｛Ｓ(𝘘_m)∨Ｓ(𝘘_m)≥e^{-ℓ^{πm}}｝	\\
			&	≤2ℓ^{-ρm+o(m)}+2ℓ^{-ρm+o(m)}=ℓ^{-ρm+o(m)}.
		\end{align*}
		As a result, the complexity is
		\begin{align*}
			𝘌[𝘴]
			&	=∑_{m=0}^{n-1}𝘗\{𝘴>m\}=\#\{†small †m\}+∑_{†large m†}ℓ^{-ρm+o(m)}	\\
			&	=O(㏒n)+O(1)=O(㏒(㏒N)).
		\end{align*}
		Here we use the fact that the number of small $m$'s is the root of the equation
		$\exp(-ℓ^{πm})=θ=1/3qN²=1/3qℓ^{2n}$, which is $㏒(n)$ (note that $N=ℓ^n$).
		
		To estimate $R=𝘗(𝒥)$, we mimic \cref{thm:actual-R}.
		It is the same as estimating the frequency that
		$𝘴$ is set to $m$ due to $Ｚ(𝘞_m)<θ$ and $Ｓ(𝘘_m)<θ$.
		This frequency is
		\begin{align*}
			R
			&	=𝘗(𝒥)=𝘗\{Ｚ(𝘞_𝘴)≤θ† and †Ｓ(𝘘_𝘴)≤θ\}	\\
			&	≥𝘗\{Ｚ(𝘞_m)→0† and †Ｓ(𝘘_m)→0\}	\\*
			&\kern4em	-𝘗\{Ｚ(𝘞_m)→0† but †Ｚ(𝘞_𝘴)≥θ\}	\\*
			&\kern4em	-𝘗\{Ｓ(𝘘_m)→0† but †Ｓ(𝘘_𝘴)≥θ\}	\\
			&	≥I(W)-𝘗\{Ｚ(𝘞_m)→0† but †Ｓ(𝘞_𝘴)<θ\}	\\*
			&\kern4em	-𝘗\{Ｓ(𝘘_m)→0† but †Ｚ(𝘘_𝘴)<θ\}-𝘗\{𝘴=n\}.
		\end{align*}
		Here we use the fact that if $Ｚ(𝘞_𝘴)≥θ$, then $𝘴$ is set to its current value
		due to $Ｓ(𝘞_𝘴)<θ$ or, otherwise, due to hitting $n$.
		It remains to estimate the three minus terms on the right-hand side.
		
		To bound $𝘗\{Ｚ(𝘞_m)→0† but †Ｓ(𝘞_𝘴)<θ\}$, notice that
		$1-H(𝘞_𝘴)≤q³√{Ｓ(𝘞_𝘴)}<q³/N√{3q}≤N^{-1+o(1)}$.
		Now the probability that $H(𝘞_m)→0$ given $H(𝘞_𝘴)≥1-N^{-1+o(1)}$
		is $N^{-1+o(1)}$ by the martingale property.
		For a similar reason, $𝘗\{Ｓ(𝘘_m)→0† but †Ｚ(𝘘_𝘴)<θ\}$ is no greater than
		the probability that $H(𝘘_m)→1$ given $H(𝘘_𝘴)<N^{-1+o(1)}$,
		which is $N^{-1+o(1)}$ by the martingale property again.
		Last is to bound $𝘗\{𝘴=n\}$, but that is just $𝘗\{𝘴>n-1\}-𝘗\{𝘴>n\}$,
		and is thus $ℓ^{-ρ(n-1)-o(n-1)}=N^{-ρ+o(1)}$.
		We conclude
		\[R≥I(W)-N^{-1+o(1)}-N^{-1+o(1)}-N^{-ρ+o(1)}=I(W)-N^{-ρ+o(1)}.\]
		
		The bound $N𝘌[Ｐ(𝘞_𝘴)·𝘐(𝒥)]+N𝘌[T(𝘘_𝘴)·𝘐(𝒥)]≤1/N$ is tautological.
		And the proof ends here.
	\end{proof}
	
	When allowing dynamic kerneling, random or not, the proof of the last theorem
	does not change---we still set a threshold $θ$ and prune the channel tree by $θ$.
	The only difference is that $ρ$ can now be arbitrarily close to $1/2$.
	
	To put it another way, like I once commented under \cref{lem:tensor}, there is
	no point to use a large kernel without knowing that it has a good $(π,ρ)$ pair.
	So \cref{thm:ll-R-P} mainly has three use cases:
	(a)	To estimate the behavior of a minimalist kernel such as $\locl$,
	(b)	to estimate the behavior of a larger kernel with a bound on $ρ$, and
	(c)	to estimate the behavior of random dynamic kerenling.
	That leads to the following corollary.
	
	\begin{cor}[Log-log code for DMC]
		Given any $q$-ary DMC (presumably with virtual symbols) and any ergodic kernel
		$G∈𝔽_q^{ℓ×ℓ}$, pruned polar coding achieves
		encoding and decoding complexity $O(㏒(㏒N))$ per channel usage,
		block error probability $1/N$,
		and code rate $1/N^ρ$ less than the channel .
		Here, $ρ$ is a number guaranteed to be positive, lower bounded
		if you know more about the eigen/\AB en23/\AB een13/\AB eplin behavior of $G$,
		and very close to $1/2$ as $ℓ→∞$ if you allow dynamic kerneling.
	\end{cor}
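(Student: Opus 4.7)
The plan is to derive the corollary as an application of \cref{thm:ll-R-P} (the Log-log for asymmetric $q$-ary theorem just proved), feeding it an appropriate two-sided elpin behavior for the channel process $\{𝘞_n\}$ generated from the given kernel $G$. The first step is a preprocessing reduction: if the given DMC has input alphabet of arbitrary size, we invoke the ``pay asymmetry to buy non-field'' construction from \cref{sec:six} to embed $𝒳$ into some $𝔽_q$ via a preprocessor channel $♮$, so that from this point on we may assume a $q$-ary DMC with matching ground field. The price is asymmetry, but \cref{thm:ll-R-P} already handles the asymmetric case through the joint analysis of $\{𝘞_m\}$ and $\{𝘘_m\}$.

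The second step is to produce a positive $ρ$ witnessing the two-sided elpin behavior. Since $G$ is ergodic, \cref{thm:dmc-eigen} (passing to $K^{⊗3}$ if necessary to realize the cubic-power trick of \cref{lem:cube}) supplies a concave eigenfunction with eigenvalue strictly less than $1$, i.e.\ some $ϱ>0$. Chaining \cref{lem:dmc-en23,lem:dmc-een13,thm:dmc-elpin} then lifts this $ϱ$ to the elpin regime for $𝘡_n$, producing a genuine pair $(π,ρ)∈𝒪$ with $ρ>0$ for which $𝘗\{𝘡_n<e^{-ℓ^{πn}}\}>1-H(W)-ℓ^{-ρn+o(n)}$. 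The dual statement for $𝘚_n$ follows by the process-nonsense duality used to derive the $S$-version of \cref{thm:dmc-elpin}. A parallel argument applied to the input-distribution channel $Q$ gives the corresponding inequalities for $\{𝘘_m\}$. Plugging these four inequalities into \cref{thm:ll-R-P} with threshold $θ≔1/3qN²$ immediately yields $𝘌[𝘴]=O(㏒(㏒N))$, code rate at least $I(W)-N^{-ρ+o(1)}$, and block error probability at most $1/N$ — which are precisely the three advertised performance metrics (after absorbing $o(1)$ by a topological argument on $ρ$, as in the remarks of \cref{cha:prune}).

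For the sharpening claims, I would argue in two further passes. Pass one: if the user supplies a tighter $ϱ$ from the eigen/en23/een13/elpin chain of \cref{cha:general} (for instance, via a better eigenfunction or a direct CLT bound for the specific $G$), that stronger $ρ$ propagates through \cref{thm:ll-R-P} unchanged, because the proof of \cref{thm:ll-R-P} only uses the elpin behavior as a black box. Pass two, the $ρ→1/2$ claim: instead of a fixed $G$, we allow the dynamic kernel $𝔾$ of \cref{cha:random}. \Cref{thm:typical-LDP,thm:typical-CLT} guarantee, with overwhelming probability over the random draw of $𝔾$, both a coset-distance profile with typical LDP exponent approaching $1$ and an eigenvalue $ℓ^{-1/2+O(α)}$ where $α=㏑(㏑ℓ)/㏑ℓ$. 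The Hypotenuse theorem of \cref{sec:signify} then realizes any $(π,ρ)$ under $π+2ρ<1$; picking $π$ small and $ρ$ close to $1/2$ and again feeding the pair into \cref{thm:ll-R-P} gives the $ρ→1/2$ conclusion while preserving the log-logarithmic complexity.

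The main obstacle will be verifying that the random-dynamic-kernel construction of \cref{cha:random} remains compatible with pruning at threshold $θ≔1/3qN²$. In \cref{cha:random} the kernel selection is adaptive to the current channel; one must check that the stopping-time analysis of \cref{thm:ll-R-P} still applies, i.e.\ that $𝘴$ is a genuine stopping time with respect to the filtration generated by the kernel choices and indices, and that the complexity accounting $O(N𝘌[𝘴])$ remains valid when EU--DU pairs of different sizes $ℓ×ℓ$ are mixed along the tree. This is essentially bookkeeping rather than a new idea — amortizing the per-node cost against $ℓ$ and using that $ℓ$ is a constant once fixed — but it is the only place where the two techniques genuinely interact, and hence where the combination could fail if not executed carefully.
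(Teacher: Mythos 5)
Your proposal is correct and matches the paper's own route: the corollary is obtained by feeding the two-sided elpin behavior (from the ergodic--eigen--en23--een13--elpin chain of \cref{cha:general}, or from the random dynamic kerneling results of \cref{cha:random} for the $ρ→1/2$ claim) into \cref{thm:ll-R-P} with threshold $θ≔1/3qN²$, exactly as you describe. The compatibility concern you raise about mixing pruning with dynamic kernels is handled in the paper only by the remark that ``the proof of the last theorem does not change,'' so your bookkeeping observation is consistent with, and slightly more careful than, the paper's treatment.
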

	
	The same can be stated regarding lossless and lossy compression, and is omitted.
	
	The next section discusses a continuous trade-off between $Ｐ$ and complexity.

\section{Error--Complexity Trade-off}

	In the proof of \cref{thm:ll-R-P}, the complexity $𝘌[𝘴]$ and $θ=θ(n)$ are linked by
	the equation $\exp(-ℓ^{πm})=θ$ that determines how many $m$'s are small.
	The root thereof $m=m(θ)=m(θ(n))$ will be the complexity.
	We may as well alter $θ(n)$ and obtain a different error--complexity pair.
	The only restriction is $ℓ^{-2n}≪θ(n)≪\exp(-ℓ^{πn})$
	to avoid the necessity to deal with special/edge cases.
	
	\begin{cor}[Continuous error--complexity trade-off]\label{thm:tradeoff}
		Let $θ(n)$ be an asymptote lying between $ℓ^{-2n}$ and $\exp(-ℓ^{πn})$.
		Then pruned polar coding achieves complexity $O(㏒\abs{㏒θ(n)})$
		per channel usage and block error probability $ℓ^nθ(n).$
	\end{cor}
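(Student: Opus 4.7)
The plan is to follow the template of the proof of \cref{thm:ll-R-P} almost verbatim, substituting the concrete threshold $1/3qN^2$ by the general asymptote $\theta(n)$, and then recomputing the two quantities affected by the choice of threshold, namely $\mathsf{E}[\mathsf{s}]$ and the sum-of-$Z$ bound on $\mathsf{P}_{\mathrm e}$. Concretely, I will use the same stopping rule
\[
\mathsf{s}\;\coloneqq\;n\wedge\min\{m:\min(\mathsf Z(\mathsf W_m),\mathsf S(\mathsf W_m))<\theta(n)\text{ and }\min(\mathsf Z(\mathsf Q_m),\mathsf S(\mathsf Q_m))<\theta(n)\}
\]
and the same selection rule for $\mathcal J$ as in \cref{thm:ll-R-P}, so that the three lemmas of the toolbox checklist apply unchanged.

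For the complexity side, I would write $\mathsf{E}[\mathsf{s}]=\sum_{m=0}^{n-1}\mathsf{P}\{\mathsf{s}>m\}$ and partition $m$ into ``small'' and ``large'' by the inequality $\exp(-\ell^{\pi m})\gtrless\theta(n)$. For small $m$ bound $\mathsf{P}\{\mathsf{s}>m\}\leq 1$; for large $m$ reuse the same four-line chain from \cref{thm:ll-R-P} that combines the two-sided elpin behavior into $\mathsf{P}\{\mathsf{s}>m\}\leq\ell^{-\rho m+o(m)}$. The large-$m$ contribution is a convergent geometric tail of size $O(1)$, so the whole sum is dominated by the count of small $m$, which is the root of $\exp(-\ell^{\pi m})=\theta(n)$, namely $m=(\log|\log\theta(n)|-\log\pi)/\log\ell=\Theta(\log|\log\theta(n)|)$. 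Multiplying by the per-step cost gives the claimed complexity $O(\log|\log\theta(n)|)$ per channel usage. The bracketing assumption $\ell^{-2n}\ll\theta(n)\ll\exp(-\ell^{\pi n})$ is what guarantees both that there are some small $m$ (so the estimate is meaningful) and that there are still many large $m$ (so the geometric tail is $O(1)$ and does not swallow the main term).

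For the block error probability, I would apply the third toolbox lemma, which gives $\mathsf{P}_{\mathrm e}\leq N\mathsf{E}[\mathsf P(\mathsf W_{\mathsf s})\cdot\mathsf I(\mathcal J)]+N\mathsf{E}[T(\mathsf Q_{\mathsf s})\cdot\mathsf I(\mathcal J)]\leq 3qN\theta(n)$ by the same Hölder-toll computation used there (via \cref{lem:PvsZ,lem:PvsT,lem:PvsS}). Absorbing the constant $3q$ into the asymptote and recalling $N=\ell^n$ yields the stated $\ell^n\theta(n)$ bound.

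The only non-routine step, and therefore the main obstacle, is checking that the elpin tail estimates from \cref{cha:random} really remain strong enough throughout the ``large $m$'' range defined by the new threshold $\theta(n)$: the bound $\mathsf{P}\{\mathsf s>m\}\leq\ell^{-\rho m+o(m)}$ was derived assuming $m$ large enough that $\exp(-\ell^{\pi m})<\theta$, and when $\theta$ is driven down towards $\exp(-\ell^{\pi n})$ the transition point $m^\star$ moves very close to $n$, which shortens the geometric tail. I would address this by verifying that the upper-bracket restriction $\theta(n)\gg\exp(-\ell^{\pi n})$ keeps $n-m^\star\to\infty$, so that the geometric series $\sum_{m>m^\star}\ell^{-\rho m+o(m)}$ still dies faster than $1/N$ and does not pollute either the complexity bound or the code-rate gap inherited from the proof of \cref{thm:ll-R-P}.
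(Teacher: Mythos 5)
Your proposal is correct and follows essentially the same route as the paper: the corollary is obtained precisely by rerunning the proof of \cref{thm:ll-R-P} with the generic threshold $\theta(n)$ in place of $1/3qN^2$, so the complexity is governed by the root of $\exp(-\ell^{\pi m})=\theta(n)$, i.e.\ $O(\log\lvert\log\theta(n)\rvert)$ per channel usage, and the error bound is $N\theta(n)$ up to the absorbed constant $3q$. One cosmetic remark: the large-$m$ geometric tail only needs to remain $O(1)$ for the stated complexity and error claims, so your stronger requirement that it die faster than $1/N$ is needed only if one also wants to carry over the $N^{-\rho+o(1)}$ rate gap, which this corollary does not assert.
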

	
	In particular, if $θ(n)≔\exp(-ℓ^{πn})$, then the complexity is
	$O(㏒\abs{㏒θ(n)})=O(㏒\abs{ℓ^{πm}})=O(n)=O(㏒N)$.
	This restores the case where one insists on retaining the $\exp(-N^π)$ error
	while pruning, which only gives you constant-scalar improvement in complexity.
	In fact, we can almost prove this tight:
	The complexity, $𝘌[𝘴]$, is the (average) number
	of transformations a trajectory $𝘞_n$ undergoes.
	Since each transformation, at best, raises $Ｚ(𝘞_n)$ to the power of $ℓ$,
	you need $㏒_ℓ(㏒_{Ｚ(W)}θ)$ transformations to lower $Ｚ(𝘞_n)$ to $θ$.
	That implies that $𝘌[𝘴]≥㏒_ℓ(㏒_{Ｚ(W)}θ)$.
	
	On the other hand, if $θ(n)≔\exp(-n^τ)$ for some very large $τ>0$,
	then $𝘌[𝘴]=O(㏒\abs{㏒θ(n)})=O(㏒\abs{-n^τ})=O(τ㏒(㏒N))$.
	If $τ$ is a constant despite of being very large,
	then the complexity is still $O(㏒(㏒N))$.
	This is the complexity paradigm code that was promised in the abstract.
	\Cref{tab:complex} compare this result and \cref{cor:holy} to past works.

\begin{table}
	\caption{
		A comparison concerning the error--rate--complexity asymptotes
		of some well-known capacity-achieving codes.
	}\label{tab:complex}
	\pgfplotstableread{
		Code			Error		Gap		Complexity	Channel	
		random			e^{-N^π}	N^{-ρ}	\exp(N)		DMC		
		concatenation	e^{-N^π}	→0		\poly(N)	DMC		
		RM				→0			→0		O(N^2)		BEC		
		LDPC			→0			→0		†unclear†	SBDMC	
		RA~family		→0			→0		O(1)		BEC		
		MDP-polar		e^{-N^π}	N^{-ρ}	O(㏒N)		DMC		
		loglog-polar	e^{-n^τ}	N^{-ρ}	O(㏒(㏒N))	DMC		
	}\tableComplex
	\def\unclear/{\text{unclear}}
	\pgfplotstablemodifyeachcolumnelement{Error}\of\tableComplex\as\cell
		{\edef\cell{$\unexpanded\expandafter{\cell}$}}
	\pgfplotstablemodifyeachcolumnelement{Gap}\of\tableComplex\as\cell
		{\edef\cell{$\unexpanded\expandafter{\cell}$}}
	\pgfplotstablemodifyeachcolumnelement{Complexity}\of\tableComplex\as\cell
		{\edef\cell{$\unexpanded\expandafter{\cell}$}}
	\def\arraystretch{1.44}
	\centering\pgfplotstabletypeset[
		every head row/.style={before row=\toprule,after row=\midrule},
		every last row/.style={after row=\bottomrule},string type,
	]\tableComplex
\end{table}

	The same can be stated concerning lossless and lossy compression, and is omitted.
	
	The next chapter generalizes the second-moment paradigm
	and complexity paradigm to some network coding scenarios.

\chapter{Distributed Lossless Compression}\label{cha:dislession}

	\dropcap
	Network coding emerges as a generalization of one-to-one communication
	as there are, naturally, more than one party willing to participate.
	One of the easiest scenarios (in comparison with other network scenarios,
	not necessarily easy compared to one-to-one)
	is when there are more than two random sources to be compressed,
	each by a compressor that does not talk the other compressors, and
	a decompressor will gather all messages and reconstruct the original random sources.
	This scenario is referred to as \emph{distributed compression}.
	
	Distributed compression can be further divided into
	several sub-scenarios that are treated differently.
	Depending on whether a reconstruction needs to be faithful or can be fuzzy,
	there are distributed \emph{lossless} compression and its \emph{lossy} variants.
	Depending on whether a random source needs to be reconstructed
	or it provides side information to the other sources,
	the responsible compressor is called a \emph{sender} or a \emph{helper}.
	To summarize, there are three types of sources---%
	those that need to be reconstructed as is,
	those whose reconstruction can be less accurate,
	and those that need no reconstruction at all.
	And a distributed compression problem consists of
	any combination of theses three sources.
	
	Thanks to the infrastructures built in the past three chapters,
	if we manage to reduce a network coding problem to several one-to-one problems,
	then each of the one-to-one problems can be solved by polar coding
	that achieves capacity at a good pace and with low complexity.
	In this chapter, I will overview distributed lossless compression problems
	with two senders, one sender plus one helper, three senders,
	and finally many senders plus one helper.
	These are the problems whose rate region is known \cite{EK11},
	thus it makes sense to pursue the second-order behavior of the rate tuples.

\section{Slepian--Wolf: The Two Sender Problem}

	A \emph{Slepian--Wolf problem} is a distributed lossless compression problem
	with two senders, which is the first of the several cases we will consider.
	See \cref{fig:2sender} for the specification.
	In this and the other scenarios, the “array access” operator $[•]$
	will be used to distinguish random sources.
	The pair $(R[1],R[2])$ will be called the \emph{rate pair}.
	The $N$ is still the block length.
	And $Ｐ$ is the block error probability, the probability that
	either $ˆX[1]₁^N≠X[1]₁^N$ or $ˆX[2]₁^N≠X[2]₁^N$.
	
	\begin{figure}
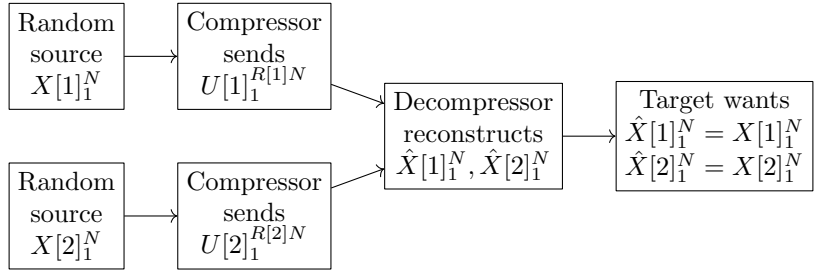

		\tikz{
			\draw[nodes={right,draw,align=center}]
				(.7,0)coordinate(X)(0,-.7)coordinate(Y)
				             node(A)[below]{Random\\source\\$X[1]₁^N$}
				(A.south)+(Y)node(B)[below]{Random\\source\\$X[2]₁^N$}
				(A.east) +(X)node(C){Compressor\\sends\\$U[1]₁^{R[1]N}$}
				(B.east) +(X)node(D){Compressor\\sends\\$U[2]₁^{R[2]N}$}
				($(C.east)!.5!(D.east)$)+(X)
				             node(T){Decompressor\\reconstructs\\$ˆX[1]₁^N,ˆX[2]₁^N$}
				(T.east) +(X)node(U){Target wants\\$ˆX[1]₁^N=X[1]₁^N$\\$ˆX[2]₁^N=X[2]₁^N$};
			\graph[use existing nodes]{
				A -> C -> T -> U,
				B -> D -> T
			};
		}
		\caption{
			A Slepian--Wolf problem.
		}\label{fig:2sender}
	\end{figure}
	
	The \emph{rate region} of a Slepian--Wolf problem
	is a region in $ℝ²$ of all achievable rate pairs $(R[1],R[2])$.
	There are clearly three pessimistic criteria:
	\begin{itemize}
		\item	About $H(X[1]｜X[2])$ bits of information is only available at
				source$[1]$, hence compressor$[1]$ should at least
				send out this much information, $R[1]≥H(X[1]｜X[2])$.
		\item	About $H(X[2]｜X[1])$ bits of information is only available at
				source$[2]$, hence compressor$[2]$ should at least
				send out this much information, $R[2]≥H(X[2]｜X[1])$.
		\item	About $H(X[1]X[2])$ bits of information are generated in total,
				hence the two compressors should at least
				send out this many bits in total, $R[1]+R[2]≥H(X[1]X[2])$.
	\end{itemize}
	As it turns out, these necessary criteria are sufficient.
	
	\begin{thm}[Slepian–Wolf Theorem]
		\cite{SW73}
		The rate region of a Slepian--Wolf problem
		consists of pairs $(R[1],R[2])∈ℝ²$ such that
		\begin{gather*}
			R[1]≥H(X[1]｜X[2]),		\\
			R[2]≥H(X[2]｜X[1]),		\\
			R[1]+R[2]≥H(X[1]X[2]),	
		\end{gather*}
		and is supported by the vertices
		\[\(\,H(X[1]｜X[2]),\,H(X[2])\,\)\quad†and†\quad
			\(\,H(X[1]),\,H(X[2]｜X[1])\,\).\]
		See also \cref{fig:pentagon}.
	\end{thm}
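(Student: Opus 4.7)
The plan is to prove the two inclusions separately. For the converse, I would fix any sequence of block codes achieving $(R[1], R[2])$ with vanishing block error probability, and invoke Fano's inequality to conclude that $H(X[1]_1^N, X[2]_1^N \mid U[1]_1^{R[1]N}, U[2]_1^{R[2]N}) = o(N)$. Standard source-coding chain-rule manipulations then give $R[1] N \geq H(U[1]_1^{R[1]N}) \geq H(X[1]_1^N \mid X[2]_1^N) - o(N) = N H(X[1] \mid X[2]) - o(N)$, and symmetrically for $R[2]$, and jointly $R[1] + R[2] \geq H(X[1] X[2]) - o(N)/N$. This pins the three inequalities from below and identifies the pentagon as the outer bound.

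For achievability, I would target the two vertices first and then interpolate. At the vertex $(H(X[1]), H(X[2] \mid X[1]))$, compressor$[1]$ ignores source$[2]$ and runs the single-source lossless polar compression scheme of \cref{cha:prune} on $X[1]_1^N$ alone, paying rate $H(X[1]) + N^{-\rho}$ with block error probability $\exp(-N^\pi)$ and complexity $O(N \log N)$ (or $O(N \log \log N)$ after pruning). The decompressor first reconstructs $\hat X[1]_1^N$; then compressor$[2]$ runs the polar lossless compression with side information variant---the same scheme, but synthesized from the test channel $X[2] \to X[1]$---paying rate $H(X[2] \mid X[1]) + N^{-\rho}$ with the same error and complexity guarantees. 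The symmetric vertex is obtained by swapping the roles of $[1]$ and $[2]$, and both vertices lie on the dominant face $R[1] + R[2] = H(X[1] X[2])$.

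To reach arbitrary points on the dominant face between the two vertices, the naive approach is time-sharing: use a mixture of the two vertex schemes, applied to disjoint sub-blocks. This is easy to implement but is wasteful in practice. The approach suggested by the introduction is instead to \emph{split} each source $X[j]$ into two random fragments $X[j]\langle 1 \rangle$ and $X[j]\langle 2 \rangle$ via a measurable bijection, and then apply vertex-style polar coding to the fragments after interleaving them in a Gray-code-like pattern. Once rates are achievable at both vertices of the pentagon, the full pentagon is achievable by convex combination.

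The main obstacle, and the one I would expect to absorb most of the work, is not the vertex construction (which is a direct repackaging of \cref{cha:prune,cha:random}) but the interpolation step with sharp second-moment behavior. If one is willing to accept a constant-factor loss and use time-sharing, achievability is immediate; but to transport the $(\exp(-N^\pi), N^{-\rho})$ trade-off all the way to the boundary one needs the splitting and Gray-coding argument, whose correctness relies on a topological degree-theoretic lemma certifying that the resulting rate map is onto the dominant face of the contra-polymatroid defined by the three inequalities. Establishing that surjectivity and verifying that the split sources polarize jointly in a manner compatible with the machinery of \cref{cha:general} is where the real content lies, and is what I would develop carefully after disposing of the converse and the two corner points.
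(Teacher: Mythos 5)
The paper does not prove this statement at all: it is the classical Slepian--Wolf theorem, quoted with the citation \cite{SW73}, and the surrounding chapter takes it as the known first-order limit whose \emph{pace} of approach is then studied. Your outline is a correct proof of the classical statement by a standard route: the Fano-plus-chain-rule converse pins the three inequalities, and achievability of the two corner points followed by time-sharing fills the whole pentagon; substituting the paper's polar compression with decoder-side information for the classical random-binning argument at the corners is legitimate and not circular, since those results are established in earlier chapters independently of Slepian--Wolf. Where your emphasis goes astray is in the last two paragraphs: for the theorem as stated (a first-order rate-region characterization) time-sharing already closes the argument completely, so the source-splitting, Gray-code interleaving, and degree-theoretic surjectivity are not part of any proof of this statement. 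In the paper that machinery belongs to the subsequent refinements (the ``Timed'' and ``$Q$ed'' Slepian--Wolf theorems, which transport the $(\exp(-N^\pi),N^{-\rho})$ trade-off to the boundary), and even there the two-sender case needs only a single binary knob $Q$ and a continuity/intermediate-value argument, with degree theory reserved for three or more senders. So treat the splitting argument as a separate, later result rather than as the ``real content'' of this theorem.
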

	
	\begin{figure}
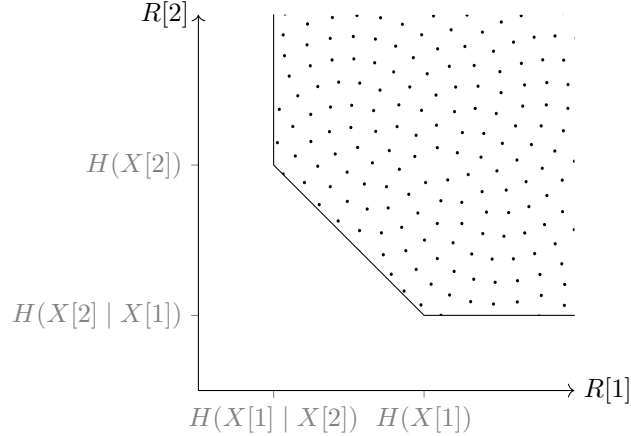

		\tikz{
			\draw
				(0,3)pic{y-tick=$H(X[2])$}
				(0,1)pic{y-tick=$H(X[2]｜X[1])$}
				(1,0)pic{x-tick=$H(X[1]｜X[2])$}
				(3,0)pic{x-tick=$H(X[1])$};
			\draw[->](0,0)--(0,5)node[left]{$R[2]$};
			\draw[->](0,0)--(5,0)node[right]{$R[1]$};
			\draw(1,5)--(1,3)--(3,1)--(5,1);
			\clip(1,5)--(1,3)--(3,1)--(5,1)|-cycle;
			\draw[shift={(3,3)},very thick,dash pattern=on0off99]
				foreach\i in{1,...,300}{
					\pgfextra{
						\PMS\t{mod(\i,2.618)*2.4}
						\PMS\r{sqrt(\i)/6}
						\PMS\x{\r*cos(\t r)}
						\PMS\y{\r*sin(\t r)}
					}
					(\x,\y)--+(0,1)
				};
		}
		\caption{
			Slepian--Wolf problem's rate region.
		}\label{fig:pentagon}
	\end{figure}
	
	Notation:
	When the context is clear, $H(1),H(2),H(12),H(1|2),H(2|1)$ denote the corresponding 
	(conditional) entropies with “$X[•]$” wrapping around every Arabic number.
	
	The problem here is, How fast can $R≔(R[1],R[2])$ approach the boundary
	that passes $(H(1|2),+∞)$--$(H(1|2),H(2))$--$(H(1),H(2|1))$--$(+∞,H(2|1))$?
	Random coding allowed, this was discussed in \cite{TK12} for the CLT regime.
	More elaborately, fixing a $Ｐ$, the (Euclidean) distance
	from $R$ to the boundary scales as $O(1/√N)$.
	Although there seems to be no references for the LDP and MDP regimes,
	we may presume that it obeys the same law as in the one-to-one case---namely,
	\[÷{-㏑Ｐ}{\dist(R,†boundary†)²}≈N.\]
	And we thus pose to ourselves a challenge about constructing
	polar codes (or any low-complexity codes)
	with $Ｐ≈\exp(-N^π)$ and $\dist≈N^{-ρ}$ whenever $π+2ρ<1$.
	
	An obvious strategy is to execute time-sharing, which is based on this simple idea:
	If we know how to achieve the point $(H(1),H(2|1))$ using a coding scheme
	and the point $(H(1|2),H(2))$ using another coding scheme,
	then we can alternate between two said schemes to approach any point lying on
	the \emph{sum-rate segment} $(H(1),H(2|1))$--$(H(1|2),H(2))$.
	And this time-sharing scheme, indeed, reaches the second moment goal in part.
	
	\begin{thm}[Timed Slepian--Wolf]
		Let $B$ be any point on the boundary of the rate region.
		Fix exponents $π+2ρ<1$.
		Then combining polar coding and time-sharing yields
		$Ｐ<\exp(-N^{π/(1+ρ/2)})$ and $\dist(R,B)<N^{-{ρ/(1+ρ/2)}}$
		at the cost of $O(㏒N)$ complexity per source observation.
		(Notice the penalty $(1+ρ/2)$.)
	\end{thm}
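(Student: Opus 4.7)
The plan is to time-share between two polar codes, one at each vertex of the sum-rate facet, invoking the lossless-compression analogues of \cref{cor:holy} for each component code.

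First I would decompose the boundary of the rate pentagon into three pieces: the vertical ray through $(H(X[1]|X[2]), R_2)$ for $R_2 \ge H(X[2])$, the horizontal ray through $(R_1, H(X[2]|X[1]))$ for $R_1 \ge H(X[1])$, and the sum-rate segment joining the two vertices $A \coloneqq (H(X[1]), H(X[2]|X[1]))$ and $C \coloneqq (H(X[1]|X[2]), H(X[2]))$. Points on either ray are achievable by a single invocation of polar coding---one sender compresses trivially and the other uses the side-information analogue of \cref{cor:holy}---and actually yield the sharper pair $(\exp(-N^\pi), N^{-\rho})$ with no penalty. The time-sharing penalty only arises on the sum-rate segment, so the remainder of the argument focuses on that case.

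Next I would construct the two vertex codes and glue them. Writing $B = \lambda A + (1-\lambda) C$ with $\lambda \in [0,1]$, I partition the $N$ source symbols into sub-blocks of a common length $M$, designate a $\lambda$-fraction of them as type~$A$ (compressor~$[1]$ runs plain polar lossless compression of $X[1]$ while compressor~$[2]$ runs polar lossless compression of $X[2]$ with $X[1]$ as decoder side information) and the rest as type~$C$ (roles swapped), and apply the corresponding polar code independently on each sub-block. By the side-information analogue of \cref{cor:holy}, each sub-code incurs per-sub-block error $\exp(-M^\pi)$, per-sub-block rate gap $M^{-\rho}$ to the assigned vertex, and complexity $O(M \log M)$, i.e.\ $O(\log M)$ per source observation. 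A union bound over the $N/M$ sub-blocks yields overall block error at most $(N/M) \exp(-M^\pi)$; the overall rate is the corresponding convex combination of per-sub-block rates and hence satisfies $\dist(R, B) \le \sqrt{2}\, M^{-\rho}$; the overall complexity per source observation remains $O(\log N)$.

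Finally I would optimize over $M$. Setting $M = N^{1/(1+\rho/2)}$ simultaneously drives the exponential $\exp(-M^\pi)$ down to $\exp(-N^{\pi/(1+\rho/2)})$---a decay fast enough to absorb the polynomial union-bound factor $N/M$---and forces $M^{-\rho}$ down to $N^{-\rho/(1+\rho/2)}$, producing exactly the exponents claimed. The main obstacle is making this single optimization rigorous: I need to verify that $M = N^{1/(1+\rho/2)}$ is still large enough for \cref{cor:holy} to apply at the sub-block scale (which follows because the hypothesis $\pi + 2\rho < 1$ survives the rescaling unchanged) and that the polynomial union-bound overhead is comfortably absorbed into the sub-block exponential. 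The visible $(1+\rho/2)$ penalty is precisely the price of this balancing: using a single sub-block of length $N$ would yield the sharper pair $(\pi, \rho)$ but cannot land on an interior point of the sum-rate segment, so the sub-block length must shrink in order to accommodate fine-grained time-sharing.
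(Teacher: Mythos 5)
Your overall architecture is the same as the paper's: split the boundary into the two rays plus the sum-rate segment, handle the rays by a single side-information polar code, and on the segment time-share between the two vertex schemes over sub-blocks of length $M\approx N^{1/(1+\rho/2)}$, balancing $\exp(-M^\pi)$ against $M^{-\rho}$. But there is a genuine gap in the step where you claim $\dist(R,B)\le\sqrt2\,M^{-\rho}$. You cannot "designate a $\lambda$-fraction" of the sub-blocks as type~$A$: with $N/M\approx M^{\rho/2}$ sub-blocks the achievable mixing coefficients are multiples of $M/N\approx M^{-\rho/2}$, so for irrational (indeed generic) $\lambda$ the point you actually land on is displaced along the sum-rate segment by $\Theta(M^{-\rho/2})$, which dominates the coding gap $M^{-\rho}$ by a polynomial factor. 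As written, your argument only gives $\dist(R,B)\lesssim N^{-(\rho/2)/(1+\rho/2)}$; alternatively, forcing the quantization error down to $M^{-\rho}$ by brute force requires $N/M\approx M^{\rho}$ sub-blocks, i.e.\ $M\approx N^{1/(1+\rho)}$ and a worse $(1+\rho)$ penalty.

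The missing ingredient is a Diophantine-approximation step, which is exactly how the paper earns the $(1+\rho/2)$ exponent: instead of fixing the number of sub-blocks and rounding $\lambda$, one chooses integers $s,t$ with $s+t\approx M^{\rho/2}$ such that $s/(s+t)$ approximates $\lambda$ to within roughly $(s+t)^{-2}\approx M^{-\rho}$ (Dirichlet/continued fractions; the paper invokes the Thue--Siegel--Roth theorem and its converse to argue this inverse-square accuracy is both attainable and essentially optimal for almost every $\lambda$), and then runs $s$ case-one blocks and $t$ case-two blocks per group. With that choice the time-sharing quantization error is comparable to the per-block coding gap, and the de facto block length $N=(s+t)M\approx M^{1+\rho/2}$ yields the stated exponents. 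So either import that rational-approximation argument (and note it is where the "almost every $\lambda$"/measure-zero caveat enters), or your optimization over $M$ does not deliver the claimed $N^{-\rho/(1+\rho/2)}$ gap. The rest of your accounting (union bound over $N/M$ blocks absorbed into the stretched exponential, $O(\log N)$ complexity per observation, the ray and rational-$\lambda$ cases) matches the paper and is fine.
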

	
	\begin{proof}
		Let us first eliminate some trivial cases.
		Case one:
		If $B$ is on the vertical ray $(H(1|2),+∞)$--$(H(1|2),H(2))$,
		then the problem reduces to achieving the nontrivial vertex $(H(1|2),H(2))$.
		For this vertex, ask compressor$[2]$ to compress $X[2]$ and ask compressor$[1]$
		to compress $X[1]$ with side information $X[2]$, both using polar coding.
		Note that compressor$[1]$ need not, and cannot,
		access the side information $X[2]$.
		
		Case two:
		If $B$ is on the horizontal ray $(H(1),H(2|1))$--$(+∞,H(2|1))$,
		then the problem reduces to achieving $(H(1),H(2|1))$.
		For this vertex, ask compressor$[1]$ to compress $X[1]$
		and ask compressor$[2]$ to compress $X[2]$ with side information $X[1]$.
		For this case and case one, the problem reduces to
		one-to-one lossless compression and can be solved by polar coding
		within the specified gap to boundary/block error probability/complexity.
		
		Case three:
		Assume that $B$ is lying on the sum-rate segment
		$(H(1|2),H(2))$--$(H(1),H(2|1))$
		and is a rational combination of the two end points.
		That is, there exist positive integers $s,t$
		such that $(s+t)B=s(H(1|2),H(2))+t(H(1),H(2|1))$.
		Then this is what we do:
		For every $s+t$ code blocks, apply the coding scheme in case one for $s$ blocks
		and then apply the coding scheme in case two for $t$ blocks.
		Note that $s$ and $t$ are fixed constants, so the penalties imposed
		on $N$, $Ｐ$, $\dist(R,B)$, and the complexity are all constant scalars.
		
		Now we deal with the nontrivial case.
		Case four:
		Assume $B$ is on the sum-rate segment $(H(1|2),H(2))$--$(H(1),H(2|1))$
		and is an irrational combination of the two ends.
		Then this is what we do:
		Prepare the case-one scheme and case-two scheme with a large block length $M$.
		By the polar coding infrastructure in the past few chapters,
		the gap to entropy is $M^{-ρ}$ and the error is $\exp(-M^π)$.
		
		Here comes the punchline for case four:
		Pick large integers $s,t>0$ such that $s+t$ is about the size of $M^{ρ/2}$
		and $B$ is very close to (measured in the Euclidean distance)
		\[÷s{s+t}(H(1|2),H(2))+÷t{s+t}(H(1),H(2|1)).\label{app:rational}\]
		In other words, we use the denominator $s+t≈M^{ρ/2}$
		to approximate the irrational coefficients in the combination.
		Now, for every $s+t$ code blocks, we apply the case-one scheme for $s$ blocks
		and then apply the case-two scheme for $t$ blocks.
		Unlike the rational case, where the penalty is constant scalars,
		the penalty here scales as $M$ grows.
		Hence, in particular, the de facto block length is $N=(s+t)M≈M^{1+ρ/2}$.
		The error in terms of the de facto block length is
		$\exp(-M^π)=\exp(-N^{π/(1+ρ/2)})$, which
		suggests that the “de facto pi” is $π/(1+ρ/2)$.
		And the gap to boundary caused by the imperfect coding is
		$M^{-ρ}=N^{-ρ/(1+ρ/2)}$, which suggests a “de facto rho” of $ρ/(1+ρ/2)$.
		
		But coding is not the only cause of the gap.
		\Cref{app:rational} is very far away from $B$.
		In fact, by the Thue--Siegel--Roth theorem and its converse,
		the difference between an irrational number and its rational approximation
		is roughly the inverse square of the denominator,
		unless the irrational number lies in a measure-zero set.
		As a consequence, we almost alwayse have
		\[\dist(B,†\cref{app:rational}†)=Θ((s+t)^{-2})=Θ(M^{-ρ})=Θ(N^{-ρ/(1+ρ/2)}).\]
		This gap is comparable to the coding gap,
		so the overall gap is still $O(N^{-ρ/(1+ρ/2)})$.
		That finishes the proof.
	\end{proof}
	
	Remark: the de facto pi and rho satisfy
	$2·π/(1+ρ/2)+5·ρ/(1+ρ/2)=(2π+4ρ+ρ)/(1+ρ/2)<(2+ρ)/(1+ρ/2)=2$.
	Thus the region of de facto pi--rho pairs is strictly smaller than $π+2ρ<1$.
	
	Bibliographical remark:
	It is once suggested that a Slepian--Wolf problem can be solved by \emph{one}
	polar code via a technique called \emph{monotonic chain rule} \cite{Arikan12}.
	However, the CLT aspect of the monotonic chain rule is as capable as time-sharing;
	in fact, its CLT behavior is worse than my estimate here
	because the denominator therein can only be a power of $ℓ$.
	It would not help us cancel the $(1+ρ/2)$ penalty.
	
	In the next section, I borrow a technique that avoids
	approximating an irrational number using rational numbers.
	Intuitively speaking, this technique tunes the distribution of random variables
	(note that probabilities are real numbers, mostly irrational)
	to attain any necessary irrational number.

\section{Slepian--Wolf via Source-Splitting}

	\emph{Source-splitting}, in one sentence, divides the randomness
	carried by $X[2]$ into two random variables $X[2]⟨1⟩$ and $X[2]⟨2⟩$
	and then uses them to sandwich $X[1]$.
	By choosing a proper configuration of $X[2]⟨1⟩$ and $X[2]⟨2⟩$,
	we can attain any irrational combination on the sum-rate segment
	without referring to the time-sharing technique.
	
	In more detail, there will be $X[2]⟨1⟩$ and $X[2]⟨2⟩$ and
	a global “knob variable” $Q$ satisfying the axioms:
	\begin{itemize}
		\item	$Q$ is independent of $X[1]X[2]$, i.e.,
				the knob is a purely artificial variable;
		\item	$H(X[2]⟨1⟩X[2]⟨2⟩｜X[2]Q)=0$, i.e.,
				the knob fully controls how $X[2]$ is split;
		\item	$H(X[2]｜X[2]⟨1⟩X[2]⟨2⟩)=0$, i.e., piecing together
				the fragments of $X[2]$ yields the complete $X[2]$; and
		\item	$H(X[2]⟨1⟩｜Q)$, when $Q$ is tuned properly,
				varies from $0$ to $H(X[2])$, continuously and inclusively.
		\item	(Alternative to the fourth) $H(X[2]⟨2⟩｜Q)$
				varies from $0$ to $H(X[2])$, continuously and inclusively.
	\end{itemize}
	By the axioms, especially the fourth one,
	\begin{itemize}
		\item	$H(X[2]⟨1⟩｜X[1]X[2]⟨2⟩Q)+H(X[1]｜X[2]⟨2⟩Q)+H(X[2]⟨2⟩｜Q)
				=H(X[1]X[2]｜Q)$; and
		\item	$H(X[1]｜X[2]⟨2⟩Q)$ varies from $H(X[1])$ to $H(X[1]｜X[2])$,
				continuously and inclusively.
	\end{itemize}
	Now we let compressor$[1]$ compress $X[1]｜X[2]⟨2⟩Q$.
	Or equivalently, let it compress $X[1]$ given side information $X[2]⟨2⟩Q$.
	We also let compressor$[2]$ compress $X[2]⟨1⟩｜X[1]X[2]⟨2⟩Q$ and $X[2]⟨2⟩｜Q$.
	Or equivalently, let it compress $X[2]⟨1⟩$ given side information $X[1]X[2]⟨2⟩Q$,
	and then compress $X[2]⟨2⟩$ given side information $Q$.
	By that $B[1]≔H(X[1]｜X[2]⟨2⟩Q)$ varies from $H(X[1])$ to $H(X[1]｜X[2])$
	and that $B[2]≔($the sum of the other two conditional entropies$)$
	is $H(X[1]X[2])-B[1]$, we conclude that $B≔(B[1],B[2])$, as a function
	in the distribution of $Q$, can exhaust all points on the sum-rate segment.
	
	The formal statements are as follows.
	
	\begin{dfn}
		Let random variable $Q∈\{1,2\}$ be independent of $X[1]X[2]$.
		Let
		\[X[2]⟨1⟩≔\cas{
			X[2]	&	if $Q=1$,	\\
			♠		&	if $Q=2$,	
		}\]
		where $♠$ is a placeholder symbol that replaces $X[2]$
		when $Q$ decides that it should hide $X[2]$.
		Similarly,
		\[X[2]⟨2⟩≔\cas{
			♠		&	if $Q=1$,	\\
			X[2]	&	if $Q=2$.	
		}\]
		To abuse symbols, we can also say $X[2]⟨Q⟩=X[2]$ and $X[2]⟨3-Q⟩=♠$.
		This is called source-splitting or \emph{coded time-sharing}.
	\end{dfn}
	
	\begin{thm}[$Q$ed Slepian--Wolf]
		Let $B$ be any point on the boundary of the rate region.
		Fix exponents $π+2ρ<1$.
		Then combining polar coding and source-splitting (coded time-sharing)
		yields $Ｐ<\exp(-N^π)$ and $\dist(R,B)<N^{-ρ}$
		at the cost of $O(㏒N)$ complexity per source observation.
		(Notice the absence of penalty.)
	\end{thm}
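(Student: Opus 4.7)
The plan is to reduce the two-sender Slepian--Wolf task to three independent one-sender lossless compression problems with side information, each of which can be solved by the polar-coding machinery from \cref{cha:random}. Once this reduction is set up, the source-splitting knob $Q$ provides a real-valued tuning parameter that traces the sum-rate segment exactly, bypassing the rational-approximation bottleneck that forced the $(1+\rho/2)$ penalty in the previous theorem.

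First I would handle the boundary points on the two infinite rays exactly as in the proof of the previous theorem: they reduce to a single vertex of the rate region and to two independent one-sender compressions that are each covered by the corollary ``very good code for lossless compression'' from \cref{sec:signify}. The substantive case is when $B=(B[1],B[2])$ lies on the sum-rate segment. Fix $p\in[0,1]$ and let $P\{Q=1\}=p$, $P\{Q=2\}=1-p$. By the four axioms of source-splitting, the target rates $B[1]=H(X[1]\mid X[2]\langle 2\rangle Q)$ and $B[2]=H(X[2]\langle 1\rangle\mid X[1]X[2]\langle 2\rangle Q)+H(X[2]\langle 2\rangle\mid Q)$ are continuous (even real-analytic) functions of $p$ that satisfy $B[1]+B[2]=H(X[1]X[2])$ and sweep the segment from $(H(1),H(2|1))$ at $p=0$ to $(H(1|2),H(2))$ at $p=1$ as $p$ varies. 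Since $p$ is a genuine real parameter rather than a rational frequency of time slots, every $B$ on the segment is hit exactly; this is where the Thue--Siegel--Roth penalty disappears.

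Next I would specify the code. Both compressors are assumed to share the realization $Q_1^N$, which I would justify either by common randomness (a shared seed in the codebook), or by derandomization: the empirical distribution of an i.i.d.\ $\mathrm{Bernoulli}(p)$ sequence deviates from $p$ by at most $O(N^{-1/2})$, and since $\rho<1/2$ this is absorbed into the $N^{-\rho}$ gap budget, so a deterministic $Q_1^N$ with the right empirical mass can be baked into the scheme. Compressor$[1]$ then runs a polar code for the single-user source $X[1]$ with side information $(X[2]\langle 2\rangle,Q)$, achieving rate $B[1]+N^{-\rho}$ with error $\exp(-N^\pi)$ and $O(\log N)$ per-symbol complexity. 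Compressor$[2]$ runs two polar codes in parallel, one for $X[2]\langle 1\rangle$ with side information $(X[1],X[2]\langle 2\rangle,Q)$ and one for $X[2]\langle 2\rangle$ with side information $Q$, summing to rate $B[2]+N^{-\rho}$ with the same error and complexity guarantees. The decompressor first recovers $X[2]\langle 2\rangle$, then $X[1]$, then $X[2]\langle 1\rangle$ in this successive-decoding order, and finally re-assembles $X[2]$ from its two fragments using the bijection $X[2]\leftrightarrow(X[2]\langle 1\rangle,X[2]\langle 2\rangle)$.

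Finally, a union bound over the three sub-codes yields a total block error probability of at most $3\exp(-N^\pi)$, still $\exp(-N^\pi)$ up to constants, and the complexities add to $O(\log N)$ per source observation. The gap to $B$ is at most $N^{-\rho}$ from each sub-code plus the $O(N^{-1/2})$ empirical-$Q$ correction, which is dominated by $N^{-\rho}$ for $\rho<1/2$. The main obstacle I anticipate is the careful bookkeeping of the side-information vector $Q$: one must argue that running three polar codes conditioned on $Q_1^N$ genuinely achieves the promised MDP behavior for the joint source-$Q$ pair, and in particular that the per-symbol entropies used to set the rates agree with the empirical conditional entropies up to $N^{-\rho}$. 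This follows from the standard robustness of polar coding under i.i.d.\ side information once $Q_1^N$ is fixed, but the derandomization step requires an argument that the polar code designed for the true product distribution still performs up to spec on the slightly perturbed empirical one; I would control this via the Lipschitz dependence of the conditional entropies on $p$ combined with the $O(N^{-1/2})$ empirical fluctuation.
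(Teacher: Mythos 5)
Your proposal follows essentially the same route as the paper: split $X[2]$ via the knob $Q$, hand the three tasks $X[1]\mid X[2]\langle2\rangle Q$, $X[2]\langle1\rangle\mid X[1]X[2]\langle2\rangle Q$, and $X[2]\langle2\rangle\mid Q$ to the two compressors, invoke the one-sender polar-coding infrastructure plus a union bound, and use continuity of $(B[1],B[2])$ in the distribution of $Q$ to hit any point of the sum-rate segment exactly (your extra remarks on sharing or derandomizing $Q_1^N$ and on the successive-decoding order only flesh out what the paper leaves implicit). The only slip is cosmetic: with $P\{Q=1\}=p$, the endpoint $p=1$ gives $(H(1),H(2|1))$ and $p=0$ gives $(H(1|2),H(2))$, the reverse of your labeling, which does not affect the intermediate-value argument.
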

	
	\begin{proof}
		Given the definition of $X[2]⟨1⟩$ and $X[2]⟨2⟩$, we give
		\begin{itemize}
			\item	compressor$[2]$ the task of compressing
					$X[2]⟨1⟩$ given $X[1]X[2]⟨2⟩Q$,
			\item	compressor$[1]$ the task of compressing
					$X[1]$ given $X[2]⟨2⟩Q$, and
			\item	compressor$[2]$ the task of compressing
					$X[2]⟨2⟩$ given $Q$.
		\end{itemize}
		By the polar coding infrastructure developed before,
		these tasks can be done with the specified error and gap to entropy.
		The only problem is, Does this coding scheme approach the correct entropy pair?
		
		To find out, let
		\begin{gather*}
			B[1]≔H(X[1]｜X[2]⟨2⟩Q),	\\
			B[2]≔H(X[2]⟨1⟩｜X[1]X[2]⟨2⟩Q)+H(X[2]⟨2⟩｜Q).
		\end{gather*}
		Then $B[1]+B[2]=H(X[2]⟨1⟩X[1]X[2]⟨2⟩｜Q)=H(12)$,
		so $(B[1],B[2])$ is always on the sum-rate segment.
		When $E[Q]=1$, that is, when $Q$ is always $1$,
		we see that $X[2]⟨2⟩$ is just a useless constant.
		In this case, $B[1]=H(X[1]｜Q)=H(1)$ and thus $B[2]=H(12)-H(1)=H(2|1)$.
		When $E[Q]=2$, that is, when $Q$ is always $2$,
		we see that $X[2]⟨2⟩=X[2]$.
		In this case, $B[1]=H(X[1]｜X[2]Q)=H(1|2)$ and thus $B[2]=H(12)-B[2]=H(2)$.
		
		All in all, when $E[Q]$ varies continuously from $1$ to $2$,
		the pair $(B[1],B[2])$ varies continuously from $(H(1),H(2|1))$ to
		$(H(1|2),H(2))$, and there must be a moment where $B=(B[1],B[2])$.
		Unless $B$ is on the vertical or horizontal ray,
		in which case the theorem is trivial.
		That ends the proof.
	\end{proof}
	
	That is how to generalize the second-moment paradigm to Slepian--Wolf problems.
	Similar statements can be made for $\exp(-n^τ)$ error and $㏒(㏒N)$ complexity.
	The proof will be essentially the same and is omitted.
	
	In the next section, we talk about a variant of Slepian--Wolf
	where the second source is not of interest, but
	compressing it helps the decompressor reconstruct the first.

\section{Compression with Helper}

	A \emph{lossless compression problem with a helper} is
	a distributed lossless compression problem with one sender and one helper.
	See \cref{fig:1helper} for the specification.
	The pair $R≔(R[1],R[∞])$ is still called the rate pair
	and the region of possible rate pairs is still called the rate region.
	The only difference is, this time, the block error probability $Ｐ$ 
	is the probability that $ˆX[1]₁^N≠X[1]₁^N$.
	
	\begin{figure}
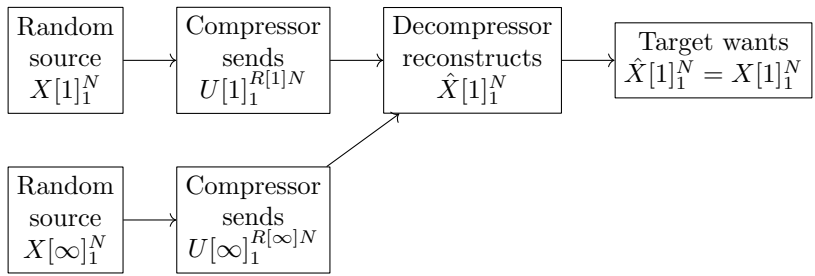

		\tikz{
			\draw[nodes={right,draw,align=center}]
				(.7,0)coordinate(X)(0,-.7)coordinate(Y)
				             node(A)[below]{Random\\source\\$X[1]₁^N$}
				(A.south)+(Y)node(B)[below]{Random\\source\\$X[∞]₁^N$}
				(A.east) +(X)node(C){Compressor\\sends\\$U[1]₁^{R[1]N}$}
				(B.east) +(X)node(D){Compressor\\sends\\$U[∞]₁^{R[∞]N}$}
				(C.east) +(X)node(T){Decompressor\\reconstructs\\$ˆX[1]₁^N$}
				(T.east) +(X)node(U){Target wants\\$ˆX[1]₁^N=X[1]₁^N$};
			\graph[use existing nodes]{
				A -> C -> T -> U,
				B -> D -> T
			};
		}
		\caption{
			Lossless compression with one helper.
		}\label{fig:1helper}
	\end{figure}
	
	Similar to the Slepian--Wolf case, the rate region of
	the one-helper problem can be characterized by an easy observation that
	“it should satisfy this” and a proof that confirms the observation.
	
	The easy observation is that, if we manage to find a random variable $U[∞]$
	that represents $X[∞]$ pretty well, then compressor$[1]$
	needs only to compress $X[1]$ given $U[∞]$ while compressor$[∞]$
	needs to assure that the decompressor receives $U[∞]$.
	For the latter, compressor$[∞]$ sends $I(X[∞]；U[∞])$ bits per source observation.
	
	\begin{thm}[Lossless compression with one helper]
		\cite{AK75}
		The rate region for lossless source coding of $X[1]$ with a helper source $X[∞]$
		consists of all pairs $(R[1],R[∞])∈ℝ²$ such that
		\begin{gather*}
			R[1]≥H(X[1]｜U[∞]),	\\
			R[∞]≥I(X[∞]；U[∞])
		\end{gather*}
		unioned over all random variables $U[∞]$
		that depend on $X[∞]$ but not on $X[1]$.
		Moreover, it is sufficient to consider the size of the alphabet of $U[∞]$
		that is one plus the size of the alphabet of $X[∞]$.
		See also \cref{fig:curvegon}.
	\end{thm}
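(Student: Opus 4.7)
The plan is to prove the theorem in two halves (achievability and converse), then handle the cardinality bound separately. The achievability is essentially a corollary of single-user results already proved in Chapters~2--7; the converse is the substantive part.

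For achievability, I would fix any admissible $U[\infty]$ and employ a two-stage random coding scheme. The helper performs a lossy compression of $X[\infty]_1^N$ into a jointly typical $U[\infty]_1^N$, which costs approximately $I(X[\infty];U[\infty])$ bits per source symbol by the covering lemma. The sender then performs lossless compression of $X[1]_1^N$ treating $U[\infty]_1^N$ as decoder-side information, which costs approximately $H(X[1]\mid U[\infty])$ bits per symbol by Slepian--Wolf. The decoder first reconstructs $U[\infty]_1^N$ from the helper's message, then reconstructs $X[1]_1^N$ from the sender's message aided by $U[\infty]_1^N$. Since the test-channel interpretation of Chapter~4 realizes the lossy step by polar coding and the earlier chapters realize the lossless-with-side-information step by polar coding, swapping random codes for polar codes lets the overall scheme inherit the second-moment and complexity guarantees, at the price of a union bound on the two stages' errors.

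For the converse, I would apply Fano's inequality to bound $H(X[1]_1^N\mid M[1],M[\infty]) \leq N\varepsilon_N$ with $\varepsilon_N\to 0$, so that $NR[\infty] \geq I(M[\infty];X[\infty]_1^N)$ and $NR[1] \geq H(X[1]_1^N\mid M[\infty]) - N\varepsilon_N$. The heart of the converse is single-letterization. I would define the auxiliary $U_i \coloneqq (M[\infty], X[\infty]_1^{i-1})$; the i.i.d.\ structure of $\{X[\infty]_i\}$ converts $I(M[\infty];X[\infty]_1^N)$ into $\sum_i I(U_i;X[\infty]_i)$ via the chain rule, while a Csisz\'ar-sum rearrangement converts $H(X[1]_1^N\mid M[\infty])$ into $\sum_i H(X[1]_i\mid U_i)$. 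Introducing a time-sharing variable $T$ uniform on $\{1,\dotsc,N\}$ and setting $U[\infty] \coloneqq (U_T, T)$, the Markov chain $U[\infty] - X[\infty]_T - X[1]_T$ is verifiable by direct substitution, and the resulting single-letter inequalities are exactly those in the theorem.

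For the cardinality bound, I would apply the standard Carath\'eodory-style argument: a distribution of $U[\infty]$ attaining a given rate pair is constrained only to preserve $|\mathcal{X}_\infty|-1$ marginal probabilities of $X[\infty]$ plus the two rate quantities $H(X[1]\mid U[\infty])$ and $I(X[\infty];U[\infty])$, so any feasible $U[\infty]$ can be replaced by a convex combination of at most $|\mathcal{X}_\infty|+1$ extreme conditional distributions. The main obstacle I anticipate is pinning down the auxiliary $U_i$ so that (i) it admits the chain-rule decomposition of the $R[\infty]$ bound, (ii) it preserves the Markov relation $U[\infty] - X[\infty] - X[1]$ after time-sharing, and (iii) it is compatible with the Csisz\'ar-sum manipulation needed for the $R[1]$ bound. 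Reconciling these three constraints is the known subtlety of the Ahlswede--K\"orner--Wyner theorem.
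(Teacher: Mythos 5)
The paper does not prove this theorem at all: it is quoted as a classical result of Ahlswede--K\"orner \cite{AK75}, preceded only by the heuristic remark that the helper should deliver a representative $U[\infty]$ at rate $I(X[\infty];U[\infty])$ so that the sender need only pay $H(X[1]\mid U[\infty])$. Your proposal supplies the standard textbook proof of that cited result, and its skeleton is sound: covering plus source coding with side information for achievability, Fano plus single-letterization with $U_i=(M[\infty],X[\infty]_1^{i-1})$ and a time-sharing variable for the converse, and a Carath\'eodory-type support argument giving cardinality $\lvert\mathcal{X}_\infty\rvert+1$. Two technical notes. First, no Csisz\'ar-sum identity is needed (nor does it naturally apply here): after the chain rule one writes $H(X[1]_1^N\mid M[\infty])=\sum_i H(X[1]_i\mid M[\infty],X[1]_1^{i-1})\ge\sum_i H(X[1]_i\mid M[\infty],X[1]_1^{i-1},X[\infty]_1^{i-1})$, and the extra conditioning on $X[1]_1^{i-1}$ is removed by the Markov chain $X[1]_1^{i-1}-(M[\infty],X[\infty]_1^{i-1})-X[1]_i$, which follows from the i.i.d.\ structure because $M[\infty]$ is a function of $X[\infty]_1^N$; the same kind of argument verifies $U_i-X[\infty]_i-X[1]_i$. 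Second, in the achievability the sender's binning stage is analyzed against the helper's \emph{chosen} codeword $U[\infty]_1^N$, which is selected as a function of $X[\infty]_1^N$ and is not i.i.d.\ with $X[1]_1^N$; you need the Markov lemma (via $U[\infty]-X[\infty]-X[1]$) to conclude that $(U[\infty]_1^N,X[1]_1^N)$ is jointly typical before Slepian--Wolf-style decoding applies. With those two repairs your argument is the standard and correct proof; note also that what the paper itself actually uses downstream is only the achievability direction, instantiated with its polar-coding infrastructure (lossy compression for the helper, lossless compression with side information for the sender), exactly as in the first half of your sketch.
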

	
	\begin{figure}
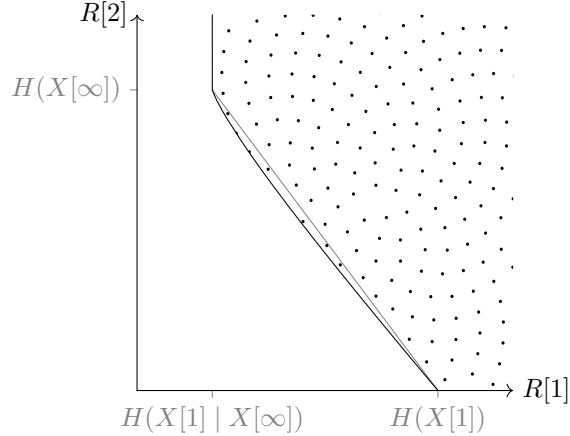

		\tikz{
			\draw[gray,very thin]
				(0,4)pic{y-tick=$H(X[∞])$}
				(1,0)pic{x-tick=$H(X[1]｜X[∞])$}
				(4,0)pic{x-tick=$H(X[1])$}
				(1,4)--(4,0);
			\draw[->](0,0)--(0,5)node[left]{$R[2]$};
			\draw[->](0,0)--(5,0)node[right]{$R[1]$};
			\draw(1,5)--
				plot[domain=0:45,samples=180]({4*h256(128+117.3267*cos(2*\x))},{4-4*h2o(\x)});
			\clip(1,5)--
				plot[domain=0:45,samples=180]({4*h256(128+117.3267*cos(2*\x))},{4-4*h2o(\x)})
				--(5,0)|-cycle;
			\draw[shift={(3,2.5)},very thick,dash pattern=on0off99]
				foreach\i in{1,...,360}{
					\pgfextra{
						\PMS\t{mod(\i,2.618)*2.4}
						\PMS\r{sqrt(\i)/6}
						\PMS\x{\r*cos(\t r)}
						\PMS\y{\r*sin(\t r)}
					}
					(\x,\y)--+(0,1)
				};
		}
		\caption{
			An example rate region of lossless compression problem with one helper.
			The gray segment is to contract the curved (unioned) boundary.
			Here, both $X[1]$ and $X[∞]$ are uniform binary sources
			with correlation $I(X[1]；X[∞])=3/4$.
		}\label{fig:curvegon}
	\end{figure}
	
	Remark:
	A subtle detail in the theorem statement is that the rate region is the union of
	all $(R[1],R[∞])$, not the \emph{convex hull} of the union of all $(R[1],R[∞])$.
	In other words, the rate region for this one-sender one-helper scenario
	does not need time-sharing to become convex---every point
	on the boundary is achievable by some clever choice of $U[∞]$.
	In fact, the variable $U[∞]$ can itself be the knob that controls, continuously,
	the time-sharing coefficient if there are really two schemes to be combined.
	
	Per the remark, we now have a very straightforward scheme to achieve
	the second moment behavior for this problem---for any $B$ on the boundary,
	pick a $U[∞]$ that achieves this $B$, use lossless compression to approach
	$H(X[1]｜U[∞])$, and use lossy compression to approach $I(X[∞]；U[∞])$.
	
	\begin{thm}[Polar coding with one helper]
		Let $B$ be any point on the boundary of the rate region.
		Fix exponents $π+2ρ<1$.
		Then polar coding alone yields $Ｐ<\exp(-N^π)$ and $\dist(R,B)<N^{-ρ}$
		at the cost of $O(㏒N)$ complexity per source observation.
	\end{thm}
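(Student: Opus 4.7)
The plan is to parameterize the boundary by an auxiliary random variable $U[\infty]$ and reduce the one-helper scenario to two independent one-to-one polar coding tasks, each handled by the random dynamic kerneling scheme of \cref{cha:random}.

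First I would fix the target boundary point $B=(B[1],B[\infty])$ and invoke the structural theorem just recalled: there exists a random variable $U[\infty]$, depending only on $X[\infty]$, such that $B[1]=H(X[1]\mid U[\infty])$ and $B[\infty]=I(X[\infty];U[\infty])$. (If $B$ lies on the trivial vertical ray at $R[\infty]>H(X[\infty])$, the problem collapses to a single-user lossless compression with free side information and the theorem is immediate.) This decouples the problem into two subtasks: (a) compressor$[\infty]$ performs lossy compression of $X[\infty]$, using the transition law $U[\infty]\mid X[\infty]$ as test channel, so that the decompressor can simulate a length-$N$ sequence $\tilde U[\infty]_1^N$; and (b) compressor$[1]$ performs Slepian--Wolf lossless compression of $X[1]$ with $U[\infty]$ serving as side information available at the decoder only.

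Next I would invoke the ``Hypotenuse'' corollaries from \cref{cha:random} on each subproblem. For any $\pi+2\rho<1$, there exist polar codes of block length $N$ and complexity $O(N\log N)$ that, for (a), achieve excess distortion---measured by the total variation between the realized joint law $(X[\infty]_1^N,\tilde U[\infty]_1^N)$ and the designed i.i.d.\ law---at most $\exp(-N^\pi)$ and rate $R[\infty]$ within $N^{-\rho}$ of $I(X[\infty];U[\infty])$; and, for (b), achieve block error probability at most $\exp(-N^\pi)$ and rate $R[1]$ within $N^{-\rho}$ of $H(X[1]\mid U[\infty])$, under the idealized assumption that the decoder observes $U[\infty]_1^N$ exactly.

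The main obstacle is coupling the two subsystems: the Slepian--Wolf decoder is designed under the idealized law $(X[1]_1^N,U[\infty]_1^N)$, but at runtime it sees $(X[1]_1^N,\tilde U[\infty]_1^N)$, whose total-variation distance from the designed law is bounded by the excess distortion produced in (a), namely $\exp(-N^\pi)$. By a standard coupling argument, the realized block error probability incurred by (b) exceeds its nominal value by at most this total variation. Summing the two exponentially small terms preserves the $\exp(-N^\pi)$ scale of the overall error, while the joint Euclidean gap satisfies $\dist(R,B)\le\sqrt{2}\,N^{-\rho}$, which is absorbed into the $N^{-\rho}$ bound after an infinitesimal topological fluctuation of $\rho$ (as at the end of \cref{cha:origin}). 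The encoder/decoder is a concatenation of two $O(N\log N)$ polar pipelines, giving $O(\log N)$ complexity per source observation, as claimed.
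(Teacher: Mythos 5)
Your proposal follows essentially the same route as the paper: fix the auxiliary $U[\infty]$ attaining $B$, let compressor$[1]$ do lossless polar compression of $X[1]$ with $U[\infty]$ as decoder-side information, let compressor$[\infty]$ do lossy polar compression with test channel $U[\infty]\to X[\infty]$, and invoke the second-moment infrastructure for both. The coupling step you add (charging the mismatch between the simulated $\tilde U[\infty]_1^N$ and the designed i.i.d.\ law, controlled by the total-variation/$T$-parameter bound, to the block error) is exactly the detail the paper leaves implicit under "covered by the infrastructure," so your argument is correct and, if anything, slightly more careful.
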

	
	\begin{proof}
		Let $U[∞]$ be the auxiliary variable such that
		\[B=\(\,H(X[1]｜U[∞]),\,I(X[∞]；[∞])\,\).\]
		Tell compressor$[1]$ to compress $X[1]$
		losslessly with $U[∞]$ as the side information.
		Tell compressor$[∞]$ to do lossy compression
		with $U[∞]→X[∞]$ being the test channel.
		Both of them are covered by the infrastructure and
		can be done with the specified error and gap to entropy.
	\end{proof}
	
	Remark:
	The same can be stated with error $\exp(n^τ)$ and complexity $㏒(㏒N)$.
	The proof is exactly the same and thus omitted.
	
	It is pointless to have two helpers and no senders.
	So two senders and one-sender--one-helper are
	all we need to consider for two random sources.
	In the upcoming sections, we will see scenarios with more than two sources.
	The very next scenario we will go over is when there are three senders.

\section{Three-Sender Slepian--Wolf}

	Starting from three senders, a rate region of a distributed compression problem
	will be a subset in a higher-dimensional Euclidean space.
	Most importantly, the sum-rate segment will become
	a sum-rate polygon or even a sum-rate polyhedron.
	They are also called the \emph{dominant face}, where dominance refers to
	the fact that it is the set of minimal points under coordinate-wise comparison,
	and face refers to that it has co-dimension $1$ in the ambient space.
	
	To achieve the sum-rate polyhedron, we can always
	apply time-sharing and accept the penalty $(1+2ρ/3)$.
	(Note that it is even harder to approximate
	multiple irrational numbers using a common denominator,
	so $ρ/2$ will become $2ρ/3$, $3ρ/4$, etc.\ as the dimension increases.)
	We can also generalize source-splitting to multiple senders.
	In the latter case, the problem boils down to
	why source-splitting exhausts all points in the sum-rate polyhedron;
	and this is nontrivial.
	
	To demonstrate the non-triviality of source-splitting, consider three senders.
	See \cref{fig:3sender} for the specification.
	An easy observation is that, for any sender, its corresponding rate
	is at least the entropy of its source conditioned on the other two sources.
	That is,
	\[R[1]≥H(1|23)\quad R[2]≥H(2|13)\quad R[3]≥H(3|12).\label{ins:face}\]
	Similarly, any two senders should send out the entropy of
	their sources conditioned on the remaining one source.
	That is,
	\[R[1]+R[2]≥H(12|3)\quad R[1]+R[3]≥H(13|2)\quad R[2]+R[3]≥H(23|1).\label{ins:edge}\]
	And lastly, the sum-rate is no less than the overall entropy:
	\[R[1]+R[2]+R[3]≥H(123).\label{ine:vertex}\]
	Theses inequalities turn out the be the only inequalities
	that a feasible rate tuple needs to satisfy.
	
	\begin{figure}
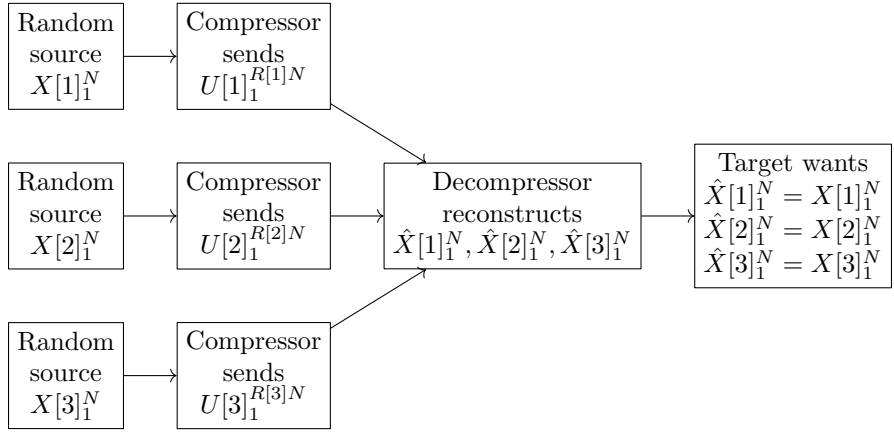

		\tikz{
			\draw[nodes={right,draw,align=center}]
				(.7,0)coordinate(X)(0,-.7)coordinate(Y)
				             node(A)[below]{Random\\source\\$X[1]₁^N$}
				(A.south)+(Y)node(B)[below]{Random\\source\\$X[2]₁^N$}
				(B.south)+(Y)node(C)[below]{Random\\source\\$X[3]₁^N$}
				(A.east) +(X)node(D){Compressor\\sends\\$U[1]₁^{R[1]N}$}
				(B.east) +(X)node(E){Compressor\\sends\\$U[2]₁^{R[2]N}$}
				(C.east) +(X)node(F){Compressor\\sends\\$U[3]₁^{R[3]N}$}
				(E.east) +(X)node(T){Decompressor\\reconstructs\\$ˆX[1]₁^N,ˆX[2]₁^N,ˆX[3]₁^N$}
				(T.east) +(X)node(U){Target wants\\$ˆX[1]₁^N=X[1]₁^N$\\
				                     $ˆX[2]₁^N=X[2]₁^N$\\$ˆX[3]₁^N=X[3]₁^N$};
			\graph[use existing nodes]{
				A -> D -> T -> U,
				B -> E -> T,
				C -> F -> T
			};
		}
		\caption{
			Distributed lossless compression with three senders.
		}\label{fig:3sender}
	\end{figure}
	
	With only \cref{ins:face}, the rate region looks like a cube
	(which actually extends to infinity).
	Now \cref{ins:edge} will chamfer the three edges that are closest to the axes.
	And finally, \cref{ine:vertex} will truncate
	the corner that is closest to the origin.
	See \cref{fig:truncate} for an illustration of the result.
	The \emph{rate hexagon} is the intersection of
	the rate region with \cref{ine:vertex} replaced by equality.
	
	\tdplotsetmaincoords{90-23.5}{120} 
	\def\cameradistance{10}
	\let\oldpointxyz\pgfpointxyz
	\def\pgfpointxyz#1#2#3{%
		\oldpointxyz{#1-2}{#2-2}{#3-2}
		\PMS\depth{\rcarot*\pgftemp@x+\rcbrot*\pgftemp@y+\rccrot*\pgftemp@z}%
		\PMS\depthrescale{\cameradistance/(\cameradistance-\depth)*2}%
		\pgf@x=\depthrescale\pgf@x%
		\pgf@y=\depthrescale\pgf@y%
	}
	\makeatletter
	\def\tikz@install@auto@anchor@perptip{\def\tikz@do@auto@anchor{%
		\pgfmathsetmacro\tikz@anchor{atan2(-\pgf@x,\pgf@y)}}}
	\tikzset{
		face/.style={preaction={fill=white,fill opacity=2/3}},
		mesh/.style={opacity=2/3,very thin}
	}
	\def\face#1:#2;{
		\fill[white,fill opacity=2/3]#1#2;
		\draw#1;
		\clip#1#2;
	}
	\def\mesh{\draw[opacity=2/3,very thin]}
	\begin{figure}
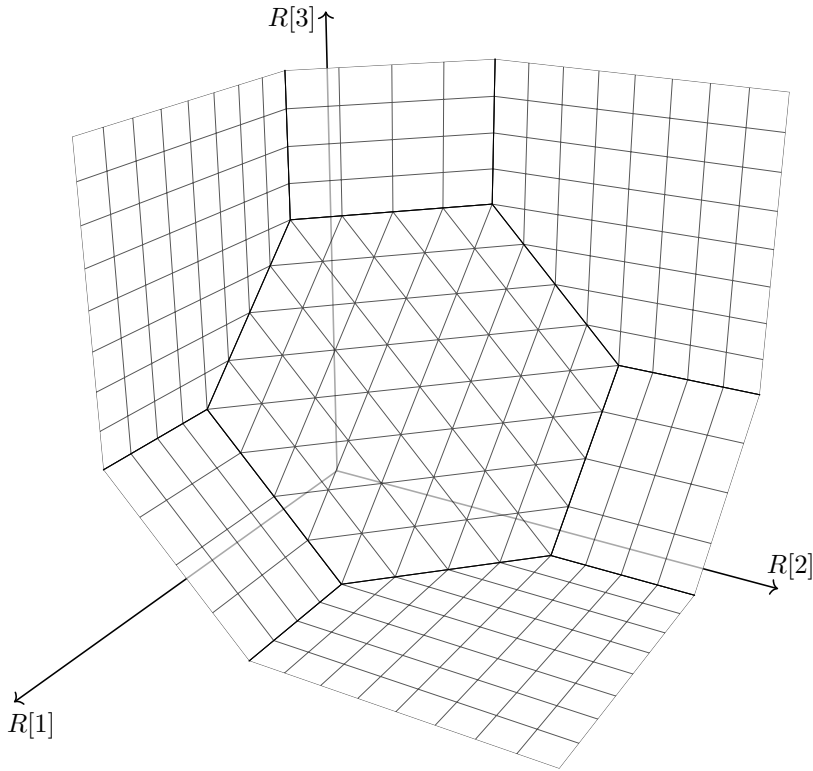

		\tikz[tdplot_main_coords]{
			\draw[semithick,nodes={pos=1,auto=perptip}](0,0,0)
				edge[->]node{$R[1]$}(4,0,0)
				edge[->]node{$R[2]$}(0,4,0)
				edge[->]node{$R[3]$}(0,0,4);
			\begin{scope}
				\face	(4,2,1)--(3,2,1)--(2,3,1)--(2,4,1):--(4,4,1)--cycle;
				\mesh	foreach\x in{0/4,.../4,8/4}{(2+\x,2,1)--(2+\x,4,1)}
						foreach\y in{0/4,.../4,8/4}{(2,2+\y,1)--(4,2+\y,1)};
			\end{scope}
			\begin{scope}
				\face	(2,1,4)--(2,1,3)--(3,1,2)--(4,1,2):--(4,1,4)--cycle;
				\mesh	foreach\z in{0/4,.../4,8/4}{(2,1,2+\z)--(4,1,2+\z)}
						foreach\x in{0/4,.../4,8/4}{(2+\x,1,2)--(2+\x,1,4)};
			\end{scope}
			\begin{scope}
				\face	(1,4,2)--(1,3,2)--(1,2,3)--(1,2,4):--(1,4,4)--cycle;
				\mesh	foreach\y in{0/4,.../4,8/4}{(1,2+\y,2)--(1,2+\y,4)}
						foreach\z in{0/4,.../4,8/4}{(1,2,2+\z)--(1,4,2+\z)};
			\end{scope}
			\begin{scope}
				\face	(2,1,4)--(2,1,3)--(1,2,3)--(1,2,4):--cycle;
				\mesh	foreach\z in{0/4,.../4,4/4}{(1+\z,2-\z,3)--(1+\z,2-\z,4)}
						foreach\z in{0/4,.../4,4/4}{(1,2,3+\z)--(2,1,3+\z)};
			\end{scope}
			\begin{scope}
				\face	(1,4,2)--(1,3,2)--(2,3,1)--(2,4,1):--cycle;
				\mesh	foreach\y in{0/4,.../4,4/4}{(2-\y,3,1+\y)--(2-\y,4,1+\y)}
						foreach\y in{0/4,.../4,4/4}{(2,3+\y,1)--(1,3+\y,2)};
			\end{scope}
			\begin{scope}
				\face	(4,2,1)--(3,2,1)--(3,1,2)--(4,1,2):--cycle;
				\mesh	foreach\x in{0/4,.../4,4/4}{(3,1+\x,2-\x)--(4,1+\x,2-\x)}
						foreach\x in{0/4,.../4,4/4}{(3+\x,1,2)--(3+\x,2,1)};
			\end{scope}
			\begin{scope}
				\face	(1,2,3)--(1,3,2)--(2,3,1)--(3,2,1)--(3,1,2)--(2,1,3)--cycle:;
				\mesh	foreach\z in{0/4,.../4,8/4}{(4-\z,1,1+\z)--(1,4-\z,1+\z)}
						foreach\y in{0/4,.../4,8/4}{(1,1+\y,4-\y)--(4-\y,1+\y,1)}
						foreach\x in{0/4,.../4,8/4}{(1+\x,4-\x,1)--(1+\x,1,4-\x)};
			\end{scope}
		}
		\caption{
			The rate region for if there are three senders.
			The camera is inside the rate region, looking at the origin.
			A pentagon with square mesh is when
			one of \cref{ins:face} is forced to be an equality.
			A rectangle with rectangular mesh is when
			one of \cref{ins:edge} is forced to be an equality.
			The hexagon is when the sum-rate equals the total entropy.
		}\label{fig:truncate}
	\end{figure}
	
	Now consider the following source-splitting setup.
	\begin{itemize}
		\item	$X[1]$ will not be split, but will be denoted by
				$X[1]⟨1⟩$ for notational compatibility;
		\item	$X[2]$ will be split into $X[2]⟨1⟩$ and $X[2]⟨2⟩$; and
		\item	$X[3]$ will be split into
				$X[3]⟨1⟩$, $X[3]⟨2⟩$, $X[3]⟨3⟩$, and $X[3]⟨4⟩$.
	\end{itemize}
	And now we use the fragments of $X[3]$ to sandwich
	“the sandwich made out of $X[1]$ and $X[2]$”.
	More precisely, we order them as
	\[X[3]⟨1⟩,X[2]⟨1⟩,X[3]⟨2⟩,X[1]⟨1⟩,X[3]⟨3⟩,X[2]⟨2⟩,X[3]⟨4⟩.\label{seq:hanoi}\]
	In general, the fragment $X[m]⟨l⟩$ will be placed at $(2l-1)/2^m$
	on the number line, and then we read off the fragments from left to right.
	
	Let $Q$ be a random variable that outputs a permutation of $\{1,2,3\}$.
	That is to say, $Q∈S₃≔\{123,132,312,321,231,213\}$.
	Depending on $Q$, we want to assign each fragment
	a true value or a placeholder symbol.
	For a fixed $m$, all $X[m]⟨l⟩$ will be $♠$ except that one will be $X[m]$.
	The fragments that get the true values are such that $X[Q(1)]$ will appear first
	on the number line, followed by $X[Q(2)]$, and finishing with $X[Q(3)]$.
	For example, if $Q=231$, then $X[2]⟨1⟩$ gets the true value of $X[2]$
	and $X[3][2]$ gets the true value of $X[3]$.
	Now \cref{seq:hanoi} becomes
	\[♠,\,X[2]⟨1⟩=X[2],\,X[3]⟨2⟩=X[3],\,X[1]⟨1⟩=X[1],\,♠,\,♠,\,♠.\]
	See \cref{tab:permute} for the other $Q$'s.
	The assignment is not necessarily unique (e.g., when $Q=213$ or $Q=312$);
	we will get back to this soon.
	
	\begin{table}
		\caption{
			Splitting three sources per the permutation $Q$.
			Note that there are eight rows because
			two permutations ($312$ and $213$) assume two solutions.
		}\label{tab:permute}
		\pgfplotstableread[header=false]{
			Q		X[3]⟨1⟩	X[2]⟨1⟩	X[3]⟨2⟩	X[1]⟨1⟩	X[3]⟨3⟩	X[2]⟨2⟩	X[3]⟨4⟩	
			123		♠		♠		♠		X[1]	♠		X[2]	X[3]	
			132		♠		♠		♠		X[1]	X[3]	X[2]	♠		
			312		♠		♠		X[3]	X[1]	♠		X[2]	♠		
			312		X[3]	♠		♠		X[1]	♠		X[2]	♠		
			321		X[3]	X[2]	♠		X[1]	♠		♠		♠		
			231		♠		X[2]	X[3]	X[1]	♠		♠		♠		
			213		♠		X[2]	♠		X[1]	X[3]	♠		♠		
			213		♠		X[2]	♠		X[1]	♠		♠		X[3]	
	}\tabelPermute
		\def\arraystretch{1.44}
		\centering\pgfplotstabletypeset[
			every head row/.style={output empty row,after row=\toprule},
			every row no 0/.style={after row=\midrule},
			every row no 1/.style={before row=\rowcolor{lightgray}},
			every row no 3/.style={before row=\rowcolor{lightgray}},
			every row no 4/.style={before row=\rowcolor{lightgray}},
			every row no 6/.style={before row=\rowcolor{lightgray}},
			every last row/.style={after row=\bottomrule},
			assign cell content/.style={@cell content/.initial=$#1$}
		]\tabelPermute
	\end{table}
	
	With the fragments defined, I will specify the coding scheme:
	For each $X[m]⟨l⟩$, it will be compressed by compressor$[m]$
	given all fragments to the right and $Q$.
	More precisely,
	\begin{itemize}
		\item	compressor$[3]$ will compress $X[3]⟨1⟩$ given
				$X[2]⟨1⟩\AB X[3]⟨2⟩\AB X[1]⟨1⟩\AB X[3]⟨3⟩\AB X[2]⟨2⟩\AB X[3]⟨4⟩\AB Q$,
		\item	compressor$[2]$ will compress $X[2]⟨1⟩$ given
				$X[3]⟨2⟩\AB X[1]⟨1⟩\AB X[3]⟨3⟩\AB X[2]⟨2⟩\AB X[3]⟨4⟩\AB Q$,
		\item	compressor$[3]$ will compress $X[3]⟨2⟩$ given
				$X[1]⟨1⟩\AB X[3]⟨3⟩\AB X[2]⟨2⟩\AB X[3]⟨4⟩\AB Q$,
		\item	compressor$[1]$ will compress $X[1]⟨1⟩$ given
				$X[3]⟨3⟩\AB X[2]⟨2⟩\AB X[3]⟨4⟩\AB Q$,
		\item	compressor$[3]$ will compress $X[3]⟨3⟩$ given
				$X[2]⟨2⟩\AB X[3]⟨4⟩\AB Q$,
		\item	compressor$[2]$ will compress $X[2]⟨2⟩$ given
				$X[3]⟨4⟩\AB Q$, and finally
		\item	compressor$[3]$ will compress $X[3]⟨4⟩$ given
				$\AB Q$.
	\end{itemize}
	Let $B[m]$ be the sum of the conditional entropies of the duties of compressor$[m]$.
	It is clear that the sum of these conditional entropies
	is $H(X[1]X[2]X[3])$ by the chain rule;
	so $B≔(B[1],B[2],B[3)$ lies on the sum-rate hexagon.
	The claim is that, by equipping $Q$ with the correct distribution,
	$B$ achieves any point on the sum-rate hexagon.
	
	\begin{thm}[Onto the hexagon]
		$B≔(B[1],B[2],B[3])$ exhausts all points on the sum-rate hexagon
		as $Q$ varies over all distributions on $S₃$.
	\end{thm}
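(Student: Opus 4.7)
The plan is to reduce the theorem to a linearity-plus-vertex-enumeration argument: I would first show that $B$ is an affine function of the distribution $p$ of $Q$ on $S_3$, so that its image in $\mathbb R^3$ is the convex hull of the six extreme images $B(\delta_\pi)$, and then verify that those six images are exactly the six vertices of the sum-rate hexagon.

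Affineness is the easy step. Every summand in the definition of $B[m]$ has the form $H(X[m]\langle l\rangle \mid \text{fragments right of position } l,\; Q)$, and the identity $H(A\mid B,Q)=\sum_{\pi}\Pr\{Q=\pi\}\,H(A\mid B, Q=\pi)$ — valid even when $A$ and $B$ are functions of $Q$ — expands to $B[m](p)=\sum_{\pi\in S_3}p_\pi\,B[m](\delta_\pi)$. So $B:\Delta(S_3)\to\mathbb R^3$ is affine on the $5$-simplex. To evaluate $B(\delta_\pi)$, note that under a deterministic $Q=\pi$ every fragment is either $\spadesuit$ (a constant, contributing nothing to any conditional entropy) or a true copy of some $X[m]$, and by construction the non-placeholder fragments appear on the number line in the order $X[\pi(1)],X[\pi(2)],X[\pi(3)]$. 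Conditioning on the fragments to the right of the non-$\spadesuit$ occurrence of $X[m]$ therefore reduces to conditioning on $\{X[\pi(j)]:j>\pi^{-1}(m)\}$, giving
\[
 B[m](\delta_\pi)=H\bigl(X[m]\mid\{X[\pi(j)]:j>\pi^{-1}(m)\}\bigr).
\]
The ambiguity in \Cref{tab:permute} for $\pi\in\{312,213\}$ is harmless: both admissible placements preserve the source ordering and differ only by the positions of $\spadesuit$'s, which carry no information.

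The main substantive step is then to recognize the six resulting triples as the six vertices of the hexagon. Each sums to $H(X[1]X[2]X[3])$ by the chain rule applied along $\pi$, and they are precisely the six chain-rule decompositions that Edmonds' greedy algorithm identifies as the vertices of the base polytope of the polymatroid $S\mapsto H(X[S])$; that base polytope is exactly the sum-rate hexagon here. Since $B$ is affine on $\Delta(S_3)$ and sends its six extreme points $\{\delta_\pi\}$ onto those hexagon vertices, the image $B(\Delta(S_3))$ is their convex hull, which is the hexagon itself. The only genuinely nontrivial ingredient is this Edmonds identification; everything else is bookkeeping. In particular, because $B$ is literally affine and the target face is two-dimensional, no degree-theoretic argument of the kind previewed in the chapter overview for the general $m$-source problem is needed when $m=3$.
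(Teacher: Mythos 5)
Your proposal is correct, but it takes a genuinely different route from the paper's. The paper argues in three stages: deterministic $Q$ hits the six vertices, two-point mixtures of adjacent permutations sweep out the edges, and then a degree-theoretic step (the ``Tour de France'' loop, bounding a disk in the contractible space of distributions on $S_3$ and mapped with winding number one onto the hexagon's boundary) covers the interior. You instead observe that, because the fragments are defined by the placeholder construction with $Q$ independent of the sources and the conditioning lists fixed, every term of $B[m]$ decomposes as $\sum_{\pi}\Pr\{Q=\pi\}\,H(\cdot\mid\cdot,Q=\pi)$, so $B$ is affine on the simplex $\Delta(S_3)$; its image is then the convex hull of the six values $B(\delta_\pi)$, which you correctly identify (consistently with the vertex list in the paper's own proof, and with your check that the two admissible placements for $312$ and $213$ give identical entropies) with the chain-rule/greedy vertices of the base polytope, i.e.\ the hexagon's vertices. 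This is sound and more elementary: no topology is needed, and in fact the same affine argument extends verbatim to $M$ senders under this splitting scheme, so your remark that degree theory is avoidable ``when $m=3$'' is conservative. What the paper's topological argument buys is generality beyond this particular construction: it needs only continuity of the knob-to-rate map, so it transfers to settings where affinity fails—notably rate-splitting for multiple access channels treated later via \cite{GRUW01}, where splitting acts on the input alphabet and the dependence on the split is nonlinear—and the three-sender proof is written as a template for that machinery. The one ingredient you invoke that the paper states rather than proves is the identification of the hexagon's vertex set with the six chain-rule points (Edmonds' greedy); since the theorem in the paper lists exactly those six vertices, this is a citation, not a gap.
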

	
	\begin{proof}
		The sum-rate hexagon has vertices
		\begin{gather*}
					(H(1|3),H(2|13),H(3)),\quad(H(1|23),H(2|3),H(3)),			\\
			(H(1),H(2|31),H(3|1)),		\kern12em		(H(1|32),H(2),H(3|2)),	\\
					(H(1),H(2|1),H(3|21)),\quad(H(1|2),H(2),H(3|12)).			
		\end{gather*}
		Each permutation on $\{1,2,3\}$ corresponds to a vertex by specifying
		which sources are compressed in full and which are conditioned on the others.
		For instance, the permutation $123$ corresponds to the vertex on the top right,
		and the permutation $321$ corresponds to the vertex on the bottom left.
		See \cref{fig:hexagon} for more on this correspondence.
		(Note that $(H(3|21),H(2|1),H(1))$ does not make sense,
		because compressor$[1]$ cannot access $X[3]$.)
		
		Our strategy is as follows:
		We first show that every vertex is achievable.
		We then show that every edge is achievable.
		We lastly show that the entire hexagon is achievable.
		
		The first goal is straightforward.
		If we want to achieve, for instance, the vertex $(H(1),H(2|31),H(3|1))$
		corresponding to the permutation $231$, then let $Q=231$ with probability $1$.
		This means that $X[2]⟨1⟩=X[2]$ and $X[3]⟨2⟩=X[3]$ constantly,
		and the other fragments are all $♠$ constantly.
		As a result, compressor$[2]$ will have to compress $X[2]｜X[3]X[1]$,
		compressor$[3]$ will have to compress $X[3]｜X[1]$,
		and compressor$[1]$ will have to compress $X[1]$.
		Now $B≔(B[1],B[2],B[3])$ becomes $(H(1),H(2|31),H(3|1))$, as desired.
		For any other vertex, the argument is similar and thus omitted.
		
		To achieve the second goal, take the edge $312$--$321$ as an example.
		Now we let $Q=312$ with probability $1-t$ and let $Q=321$ with probability $t$.
		As $t$ goes from $0$ to $1$, the knob $Q$ varies from
		constantly $312$ to constantly $321$.
		This means that $B$ moves from the vertex $312$ to the vertex $321$.
		Along this process, $X[3]$ is always compressed conditioned on the other two,
		so $B[3]$ is always $H(3|12)=H(3|21)$.
		This means that $B(t)$, as a function in $t$,
		maps surjectively onto the edge $312$--$321$ of the hexagon.
		For any other edge, the argument is similar and thus omitted.
		
		It remains to show that $B$, as a function in the distribution of $Q$,
		maps surjectively onto the hexagon.
		To this end, consider the following “Tour de France” definition of $Q_t$
		\[Q_t=\cas{
			†constantly †123	&	when $t=0$,	\\
			†constantly †132	&	when $t=1$,	\\
			†constantly †312	&	when $t=2$,	\\
			†constantly †321	&	when $t=3$,	\\
			†constantly †231	&	when $t=4$,	\\
			†constantly †213	&	when $t=5$,	\\
			†constantly †123	&	when $t=6$,
		}\]
		and filling in the non-integer $t$ by interpolation
		\[Q_t=\cas{
			Q_{⌈t⌉}	&	 with probability $⌈t⌉-t$,	\\
			Q_{⌊t⌋}	&	 with probability $t-⌊t⌋$.
		}\]
		By the previous paragraph, $B(t)$ will travel through each edge of the hexagon
		(although we have no idea the velocity it travels)
		in the order given in \cref{tab:permute}.
		
		Now let me borrow some algebraic topology nonsense:
		The space of the distributions on $S₃$
		is contractible to the uniform distribution.
		Thus the Tour de France $Q_t$ is a cycle
		(mapped to $0$ by the co-differential operator $∂$)
		that happens to be the boundary (the image under $∂$) of some disk $D$.
		This disk $D$ has its boundary mapped to the boundary of the hexagon with
		winding number $1$ (degree $1$), so $D$ will map surjectively onto the hexagon.
		That completes the proof.
	\end{proof}
	
	\begin{figure}
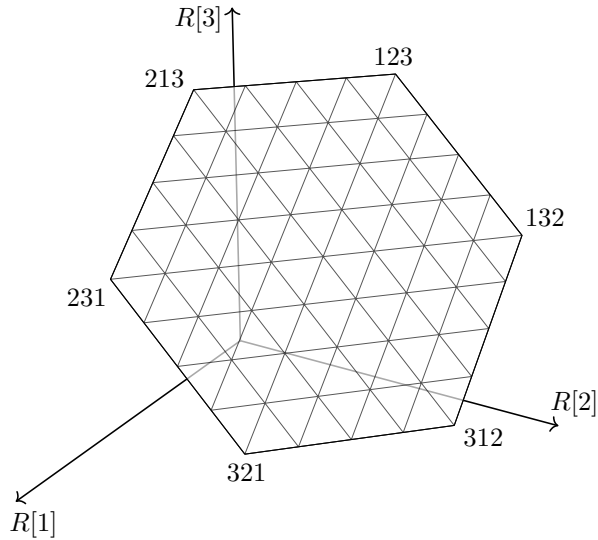

		\tikz[tdplot_main_coords]{
			\draw[semithick,nodes={pos=1,auto=perptip}](0,0,0)
				edge[->]node{$R[1]$}(3,0,0)
				edge[->]node{$R[2]$}(0,3,0)
				edge[->]node{$R[3]$}(0,0,3);
			\draw[nodes={pos=0,auto=perptip}]
				(1,2,3)--node{$123$}
				(1,3,2)--node{$132$}
				(2,3,1)--node{$312$}
				(3,2,1)--node{$321$}
				(3,1,2)--node{$231$}
				(2,1,3)--node{$213$}
				cycle;
			\face	(1,2,3)--(1,3,2)--(2,3,1)--(3,2,1)--(3,1,2)--(2,1,3)--cycle:;
			\mesh	foreach\z in{0/4,.../4,8/4}{(4-\z,1,1+\z)--(1,4-\z,1+\z)}
					foreach\y in{0/4,.../4,8/4}{(1,1+\y,4-\y)--(4-\y,1+\y,1)}
					foreach\x in{0/4,.../4,8/4}{(1+\x,4-\x,1)--(1+\x,1,4-\x)};
		}
		\caption{
			The sum-rate hexagon with vertices indexed by permutations.
		}\label{fig:hexagon}
	\end{figure}
	
	The proof is inspired by \cite{GRUW01},
	where multiple access channels are considered.
	Once we know how to use $Q$ to attain any point on the sum-rate hexagon,
	use the infrastructure to code.
	
	\begin{cor}[Polar coding for three senders]
		For three-sender distributed lossless compression,
		polar coding coupled with source-splitting attains every point
		on the boundary of the rate region with $N^{-ρ}$ gap.
	\end{cor}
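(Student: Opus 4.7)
The plan is to reduce the three-sender problem to seven independent one-to-one lossless compression problems (with side information), each of which can be handled by the polar coding infrastructure from \cref{cha:prune,cha:random}. First I would dispose of the boundary pieces not lying on the sum-rate hexagon: a point on one of the three pentagonal faces forces $R[m]=H(m|m'm'')$ for some $m$, in which case compressor$[m]$ is ``silent'' and the residual two-sender problem is a Slepian--Wolf problem, to which the $Q$-ed Slepian--Wolf theorem already applies; a point on one of the three rectangular chamfer faces forces a two-sum constraint to be tight and similarly reduces to a Slepian--Wolf problem with one extra independent coordinate. So the only substantive case is when $B$ lies on the open sum-rate hexagon.

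For such a $B$, invoke the preceding ``onto the hexagon'' theorem to select a distribution of $Q\in S_3$ such that $B=(B[1],B[2],B[3])$. Treating $Q$ as a public piece of randomness shared by all compressors and the decompressor (this costs $O(1)$ auxiliary bits per block, hence nothing asymptotically), the seven compression sub-tasks listed before the corollary become standard instances of lossless compression with side information, since in each one the side information consists of random variables that will have already been reconstructed by the decompressor in the correct order (rightmost fragment first). For each sub-task, \cref{cha:random}'s infrastructure produces a polar code of block length $N$ whose rate exceeds the relevant conditional entropy by at most $N^{-\rho}$ and whose per-task block error probability is $\exp(-N^\pi)$, at complexity $O(N\log N)$.

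Finally I would stitch the seven codes together. The decompressor recovers the fragments from right to left in the sequence $X[3]\langle4\rangle,X[2]\langle2\rangle,\dotsc,X[3]\langle1\rangle$, feeding each recovered fragment as side information into the next decoding step; a block error occurs only if at least one of the seven sub-tasks errs, which by the union bound happens with probability at most $7\exp(-N^\pi)=\exp(-N^{\pi-o(1)})$. The total rate assigned to compressor$[m]$ is the sum of the rates of its assigned sub-tasks, each of which overshoots its conditional entropy by at most $N^{-\rho}$, so $R[m]\leq B[m]+3N^{-\rho}$ and hence $\dist(R,B)<7N^{-\rho}=N^{-\rho+o(1)}$; absorbing the constants by a small perturbation of $(\pi,\rho)$ within the region $\pi+2\rho<1$ yields the claimed bounds. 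Complexity is the sum of seven $O(N\log N)$ terms, which remains $O(N\log N)$ per source observation.

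The main obstacle I anticipate is a subtle one hidden inside ``each sub-task is a standard instance'': the joint distribution of a fragment together with its side information fragments is induced by the knob $Q$, and in particular the conditioning variables take values in enlarged alphabets containing the placeholder symbol $\spadesuit$. One must check that the resulting $q$-ary (after the alphabet-padding reduction of \cref{sec:six}) source--side-information pair still satisfies the ergodicity/regularity hypotheses under which the polar-coding MDP theorems of \cref{cha:random} were stated; since $Q$ is discrete with finitely many atoms, this amounts to verifying the bounds separately for each atom of $Q$ and averaging, which is routine but needs to be mentioned explicitly to keep the conditional entropies and their polar-code analyses well defined.
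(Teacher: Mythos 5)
Your proposal is correct and follows essentially the same route as the paper: the corollary is meant to follow directly from the ``onto the hexagon'' theorem (choosing the distribution of $Q$ to hit the target point) together with applying the polar-coding infrastructure to each of the seven fragment-compression sub-tasks with side information, exactly as you describe. Your additional remarks---handling the non-hexagon boundary faces by reduction to the two-sender case, the union bound over the seven sub-tasks, and the alphabet-padding check for the placeholder symbol $\spadesuit$---are sensible fillings-in of details the paper leaves implicit rather than a different argument.
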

	
	The next section combines everything we learned so far in this chapter
	to attack the problem with more than three senders and one helper.

\section{Many Senders with One Helper}

	Below \cite[Theorem~10.4]{EK11}, the authors commented that
	the optimal rate region is unknown when there are two helpers.
	That is to say, the most general case with known rate region
	is when there are multiple senders and one helper.
	We stick to the cases with known rate region because
	we want to state theorems about the pace of convergence;
	only when the aimed limit is optimal is this pace meaningful.
	
	For a many-sender one helper problem, the description of the rate region
	is a combination of inequalities of the form $R[1]+R[3]+R[5]≥H(135|246)$
	and $R[∞]≥$ the capacity of a proper test channel.
	
	\begin{thm}[Distributed lossless compression with a helper]
		The optimal rate region for lossless source coding of
		$X[1]…X[M]$ with helper source $X[∞]$ is described by
		\[∑_{m∈𝒮}R[m]≤H(X[𝒮]｜U[∞],X[𝒮^∁])\label{ins:marriage}\]
		for all subsets $𝒮⊆\{1…M\}$ and
		\[R[∞]≥I(X[∞]；U[∞])\]
		unioned over all random variables $U[∞]$.
		Here, $X[𝒮]$ is the tuple $(X[m]:m∈𝒮)$ and
		$X[𝒮^∁]$ is what is left $(X[m]:m∉𝒮)$.
	\end{thm}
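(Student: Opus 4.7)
The plan is to combine the helper reduction already used in the one-sender one-helper case with a generalization of the source-splitting construction that worked for three senders, and then to close the argument with a degree-theoretic surjection onto the dominant face of the rate region (precisely the contra-polymatroid machinery flagged in the introduction). I would fix any target point on the boundary and a corresponding $U[\infty]$ realizing it. The helper coordinate is handled exactly as before: apply polar lossy compression with $U[\infty]\to X[\infty]$ as the test channel, achieving $R[\infty]\approx I(X[\infty];U[\infty])$ with $\exp(-N^\pi)$ excess and $N^{-\rho}$ gap. What remains is an $M$-sender distributed lossless compression problem in which $U[\infty]$ is side information available at the decoder, and the target lies on the dominant face cut out by \cref{ins:marriage}.

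For the $M$-sender part I would generalize the dyadic placement used for three senders. Split $X[m]$ into $2^{m-1}$ fragments $X[m]\langle 1\rangle,\dotsc,X[m]\langle 2^{m-1}\rangle$, position the $l$th fragment at $(2l-1)/2^m$ in the unit interval, and read the resulting length-$(2^M-1)$ sequence from left to right. Introduce a knob $Q$ valued in the symmetric group $S_M$. Given $Q$, assign true values to the fragments so that reading the non-placeholder entries from left to right produces $X[Q(1)],X[Q(2)],\dotsc,X[Q(M)]$; all other fragments become the placeholder $\spadesuit$. Compressor $[m]$ then compresses each of its own fragments conditioned on all fragments to the right together with $Q$ and $U[\infty]$, and each of these subtasks is a standard lossless compression with side information, so it is solved by polar coding to the precision of \cref{cha:random,cha:joint}. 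The chain rule of conditional entropy guarantees that the resulting rate tuple $B(Q)=(B[1],\dotsc,B[M])$ satisfies the saturating sum-rate equality $\sum_m B[m]=H(X[1]\dotsm X[M]\mid U[\infty])$ for every distribution of $Q$, hence always lands on the dominant face.

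The next step is to show that $B$ exhausts the entire face as the distribution of $Q$ varies. A deterministic $Q$ sends $B$ to the vertex of the dominant face indexed by that permutation; the $M!$ permutations cover all $M!$ vertices of the contra-polymatroid face. Linear interpolation between two permutations that differ by an adjacent transposition keeps the coordinates of the non-swapped sources frozen, hence sends $B$ along the connecting edge of the face. Together, the simplex of distributions on $S_M$ maps continuously into the $(M-1)$-dimensional dominant face, and the argument used for the hexagon shows that the boundary of an appropriate $(M-1)$-cell in the simplex wraps around the boundary of the face with winding number one, forcing surjectivity by degree theory. Unioning over all feasible $U[\infty]$ then exhausts the entire claimed rate region.

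The main obstacle will be the degree calculation in high dimensions, since the three-sender hexagon picture does not directly suggest the inductive template. I would proceed by induction on $M$: each facet of the dominant face is obtained either by forcing some sum-rate inequality to equality (which matches an $(M{-}1)$-sender sub-problem by the marriage/contra-polymatroid decomposition) or by projecting out one coordinate, and in either case the restriction of the splitting map to the corresponding sub-simplex of $S_M$-distributions already surjects onto that facet with degree one by the inductive hypothesis. Patching these facet-wise degree-one maps into a global degree-one map onto the face is then a Mayer--Vietoris bookkeeping exercise, essentially the dual of the polymatroid argument carried out for multiple-access channels in \cite{GRUW01}. Once this surjection is in hand, the coding estimates from \cref{cha:random,cha:joint} apply sub-task by sub-task and yield the stated error--rate--complexity trade-off.
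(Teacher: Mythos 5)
Your proposal does not prove the stated theorem. The theorem is a characterization of the \emph{optimal} rate region: it consists of an achievability part and, crucially, a converse showing that no rate tuple outside the described union is attainable, including the identification of the auxiliary random variable $U[\infty]$ (and its cardinality bound) via standard Fano-plus-auxiliary-variable arguments. Your plan contains no converse at all, and its achievability half is welded to a specific low-complexity code family with second-order guarantees, which is neither needed for nor sufficient to establish the region itself: to prove achievability of the region one only needs a first-order scheme (random binning for the senders, covering for the helper) reaching every point of the region, whereas you only argue surjectivity onto the dominant face for a fixed $U[\infty]$, leaving interior points aside and the optimality claim untouched.

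In fact the paper does not prove this theorem either; it is quoted as a known result (the surrounding text refers to Theorem~10.4 of \cite{EK11}). The machinery you describe---supermodularity of $\mathcal{S}\mapsto H(X[\mathcal{S}]\mid U[\infty],X[\mathcal{S}^{\complement}])$, the contra-polymatroid structure of the dominant face, source-splitting with the knob $Q$, and the degree-theoretic surjection in the style of \cite{GRUW01}---is exactly what the paper deploys for the \emph{subsequent corollary} that polar coding coupled with source-splitting attains every boundary point with gap $N^{-\rho}$. So your outline is a reasonable (if sketchy in the inductive degree step) plan for that corollary, but as a proof of the theorem in question it is misdirected: it would need to be replaced by, or supplemented with, a converse, and the achievability direction is obtained far more simply without any of the polar or splitting apparatus.
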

	
	Note that the theorem implicitly uses $U[∞]$ as a coded time-sharing knob
	so there is not need to take the convex hull of the union.
	Now fix a point $B$ on the boundary of the rate region.
	Fix a $U[∞]$ that achieve this point in the rate region.
	Then \cref{ins:marriage} is a family of inequalities
	parametrized by the subset $𝒮$ of $\{1…m\}$.
	The right-hand side of the inequalities,
	\[H(X[𝒮]｜U[∞],X[𝒮^∁])=H\((X[m]:m∈𝒮)｜U[∞],(X[m]:m∉𝒮)\),\]
	is a supermodular function in $𝒮$.
	To verify this, it suffices to check the three variable case.
	
	\begin{lem}[Supermodularity]
		For any random variables $X,Y,Z$,
		\[H(XY｜Z)+H(YZ｜X)≤H(Y｜XZ)+H(XYZ).\]
	\end{lem}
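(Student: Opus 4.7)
The plan is to expand each term by the chain rule of entropy and watch the cancellations collapse the inequality to a single instance of ``conditioning reduces entropy''.

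First I would write out
\[
H(XY｜Z)=H(X｜Z)+H(Y｜XZ),\qquad H(YZ｜X)=H(Z｜X)+H(Y｜XZ),
\]
and
\[
H(XYZ)=H(XZ)+H(Y｜XZ).
\]
Substituting these three identities into the claim, the two copies of $H(Y｜XZ)$ on each side cancel pairwise: the left-hand side contributes $H(X｜Z)+H(Z｜X)+2H(Y｜XZ)$ while the right-hand side contributes $H(Y｜XZ)+H(XZ)+H(Y｜XZ)$. So the inequality reduces to
\[
H(X｜Z)+H(Z｜X)\le H(XZ).
\]

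Next I would close the loop by another use of the chain rule, $H(XZ)=H(Z)+H(X｜Z)$, which rewrites the target as $H(Z｜X)\le H(Z)$. This is precisely the non-negativity of mutual information $I(X;Z)\ge 0$, which is a basic Shannon-type inequality. Hence the lemma follows. Equality holds exactly when $X$ and $Z$ are independent, which fits the intuition that the supermodular gap measures the ``interaction'' between $X$ and $Z$ when guessing $Y$.

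There is no real obstacle here; the content of the lemma is entirely in arranging the chain-rule bookkeeping so that all five of the ``$H(Y\mid XZ)$''-type terms cancel and leave behind only the classical pair $H(X|Z)+H(Z|X)$ versus $H(XZ)$. I would also remark, once the three-variable case is established, on why it really does suffice for the general supermodularity claim $f(\mathcal S)=H(X[\mathcal S]\mid U[\infty],X[\mathcal S^\complement])$: submodularity/supermodularity is a pairwise condition on sets differing by single elements, and for any two sets $\mathcal A,\mathcal B$ with $|\mathcal A\triangle\mathcal B|=2$, the comparison between $f(\mathcal A)+f(\mathcal B)$ and $f(\mathcal A\cap\mathcal B)+f(\mathcal A\cup\mathcal B)$ only involves three effective ``super-variables'' (the symmetric difference, split into two singletons, plus everything that stays put, which can be absorbed into the conditioning together with $U[\infty]$). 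So the three-variable inequality, applied with the common conditioning attached, is exactly what is needed.
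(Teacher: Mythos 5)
Your proof is correct and takes essentially the same route as the paper: a chain-rule bookkeeping that cancels the common terms and reduces the claim to $I(X;Z)\ge 0$ (the paper subtracts $2H(XYZ)$ to land on $H(X)+H(Z)\ge H(XZ)$, you cancel the $H(Y\mid XZ)$ terms to land on $H(Z\mid X)\le H(Z)$ --- the same basic inequality). Your closing remark on why the three-variable case, applied with the untouched variables and $U[\infty]$ absorbed into the conditioning, yields the general supermodularity is also consistent with the paper's (unproved) reduction.
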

	
	\begin{proof}
		Subtract $2H(XYZ)$ from both sides;
		the desired inequality is equivalent to $H(X)+H(Z)≥H(XZ)$.
	\end{proof}
	
	A supermodular function comes with a contra-polymatroid
	defined by the sum of coordinates over a subset $𝒮$ being
	greater than or equal to the function evaluation at $𝒮$.
	In other words, the rate region with a fixed $U[∞]$ is a contra-polymatroid.
	This is the dual case of a polymatroid defined by a submodular function.
	The latter is seen when one considers
	the capacity region of a multiple access channel \cite{GRUW01}.
	
	By the duality between (submodular function, polymatroid)
	and (supermodular function, contra-polymatroid),
	the proof given in \cite{GRUW01} applies here.
	The proof therein says that the \emph{rate}-splitting technique exhausts
	all points on the sum-rate polyhedron of a multiple access channel.
	And here, we conclude that the source-splitting technique exhausts
	all points on the sum-rate polyhedron of a distributed lossless compression.
	
	\begin{cor}[Polar coding for many-sender one-helper]
		Polar coding coupled with source-splitting attains every point
		on the boundary of the rate region with $N^{-ρ}$ gap.
	\end{cor}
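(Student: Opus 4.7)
The plan is to mirror the three-sender proof, reducing the $M$-sender one-helper problem to a collection of single-source lossless-compression problems (with side information) plus a lossy-compression problem for the helper, and then invoking the polar-coding infrastructure of the preceding chapters for each piece. First I would fix a boundary point $B=(B[1],\dotsc,B[M],B[\infty])$ of the rate region and, using the definition of the region, select an auxiliary $U[\infty]$ realising $B[\infty]=I(X[\infty];U[\infty])$ together with a point on the dominant face of the contra-polymatroid
\[\bigl\{(R[1],\dotsc,R[M]):∑_{m∈𝒮}R[m]≥H(X[𝒮]｜U[\infty],X[𝒮^∁])†for all †𝒮⊆[M]\bigr\}\]
equal to $(B[1],\dotsc,B[M])$. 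The helper's job is then a lossy compression with test channel $U[\infty]\to X[\infty]$, for which the corollary of \cref{cha:random} yields excess distortion $\exp(-N^π)$ and code rate at most $B[\infty]+N^{-ρ}$ at complexity $O(N\log N)$.

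Next I would generalise the source-splitting construction of the three-sender section. For each sender~$m$, split $X[m]$ into $2^{m-1}$ fragments $X[m]\langle l\rangle$ placed at positions $(2l-1)/2^m$ on $[0,1]$, read off the fragments in left-to-right order, and let a knob variable $Q$, distributed on the symmetric group $S_M$, decide for each $m$ which single fragment carries the true value $X[m]$ (all others being~$\spadesuit$) in such a way that the true fragments appear in the order $X[Q(1)],X[Q(2)],\dotsc,X[Q(M)]$. Each fragment $X[m]\langle l\rangle$ is then losslessly compressed by compressor~$m$ given the helper codeword $U[\infty]$, the knob $Q$, and all fragments that lie to its right; by the chain rule the sum of the conditional entropies of these $2^M-1$ jobs equals $H(X[1]\dotsm X[M]\mid U[\infty])$, so the induced rate tuple always lies on the dominant face, and each individual job is a one-sender lossless-compression problem with side information, already handled by the infrastructure of \cref{cha:dual,cha:random} within the second-moment paradigm.

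It remains to show that, as the distribution of $Q$ ranges over the simplex on $S_M$, the induced map $B(\cdot)$ is \emph{onto} the dominant face; this is the main obstacle. For the three-sender case the proof used an explicit ``Tour de France'' around the six vertices of the hexagon together with a degree-one argument on a contractible filling disk. For general $M$ I would invoke the polymatroid/contra-polymatroid duality stated immediately before the corollary: the dominant face is a permutohedron whose vertices are indexed by $S_M$ and which coincide with the images of the deterministic $Q$'s, its edges correspond to adjacent transpositions in~$S_M$ and are each swept by the one-parameter family interpolating two deterministic $Q$'s (the argument being identical to the three-sender edge argument, since only the two permuted sources contribute a variable rate), and the whole face is reached by a degree-theory / Brouwer-type surjectivity statement applied to a map from the simplex on $S_M$ (equivalently, a ball of the appropriate dimension) to the permutohedron whose restriction to the boundary has degree one. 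This is exactly the argument of \cite{GRUW01} for submodular functions and polymatroids, which transfers verbatim by the supermodularity lemma proved just above the corollary.

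Finally, I would combine the pieces: pick $Q$ realising the desired face point, run the $2^M-1$ lossless-compression-with-side-information polar codes for the fragments and the lossy-compression polar code for the helper in parallel, and add up the errors by the union bound. Each sub-code contributes error $\exp(-N^π)$ and gap $N^{-ρ}$ at complexity $O(N\log N)$ (or $O(N\log\log N)$ with pruning per \cref{cha:prune,cha:joint}), and the constant number $2^M$ of sub-codes is absorbed into $ρ,π$ by a topological fluctuation argument as in previous chapters, giving the stated $N^{-ρ}$ gap to the boundary of the rate region.
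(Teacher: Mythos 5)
Your proposal is correct and follows essentially the same route as the paper: the same splitting of $X[m]$ into $2^{m-1}$ fragments placed at $(2l-1)/2^m$, compression of each fragment given everything to its right, the chain-rule argument placing the duty tuple on the dominant face, and the vertex--edge--degree-theory surjectivity argument transferred from \cite{GRUW01} via the supermodularity/contra-polymatroid duality. The only difference is that you spell out the helper's lossy-compression job and the union-bound bookkeeping explicitly, which the paper leaves implicit in its sketch.
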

	
	\begin{proof}[Sketch of the proof]
		Given the polar coding infrastructure, it suffices to
		construct a scheme to split sources and show that it exhausts
		all point on the sum-rate polyhedron (aka.\ the dominant face).
		
		The splitting scheme will look like the following:
		For each $m$, the $m$th source $X[m]$ will be split into $2^{m-1}$ fragments
		and the latter are named $X[m]⟨1⟩…X[m]⟨2^{m-1}⟩$.
		Each fragment $X[m]⟨l⟩$ will be placed at $(2l-1)/2^m$ on the number line.
		The $m$th compressor will compress $X[m]⟨l⟩$ given everything to its right.
		The $m$th duty entropy, $B[m]$, will be the sum of
		$H(X[m]⟨l⟩｜†fragments to its right†)$ over all $l$,
		which will also be the limit of $R[m]$ as the block length goes to infinity.
		
		Now apply induction to show that every $d$-dimensional facet
		of the sum-rate polyhedron is achievable by source-splitting.
		\begin{itemize}
			\item	Show that every vertex of the sum-rate polyhedron
					corresponds to a permutation of $\{1…M\}$;
					and show that by reordering the sources in all possible ways,
					the duty tuple $(B[1]…B[M])$ attains all vertices.
			\item	Show that every edge of the sum-rate polyhedron corresponds to
					a smooth transition between two permutations that differ by a swap.
			\item	Show that every face ($2$-dimensional facet) is mapped surjectively
					because there is a Tour de France $Q_t$ whose image
					goes around the boundary once while the domain is contractible.
			\item	Show the similar argument that if a map from a topological ball
					maps the boundary to the boundary of a facet of the polygon,
					plus the induced map on the top homology groups is $1$
					(multiplying by one), then the map is surjective.
		\end{itemize}
		For more details, see \cite{GRUW01}.
	\end{proof}
	
	A similar statement can be made with $㏒(㏒N)$ complexity.
	The proof is essentially the same except that when we invoke the infrastructure,
	the log-log code is used instead of the second-moment code.
	
	I will end the dissertation with a remark on the dual of distributed compression.

\section{On Multiple Access Channels}

	Multiple access channels are noisy channels that take multiple inputs---%
	each from a different encoder---and output to a unified decoder.
	The assumption that the encoders cannot talk to each other
	and that the decoder has all information in hand to make decisions
	make this problem a proper dual of distributed compression.
	See \cref{fig:mac} for an example specification of a multiple access channel
	with three senders and compare it with \cref{fig:3sender}.
	
	\begin{figure}
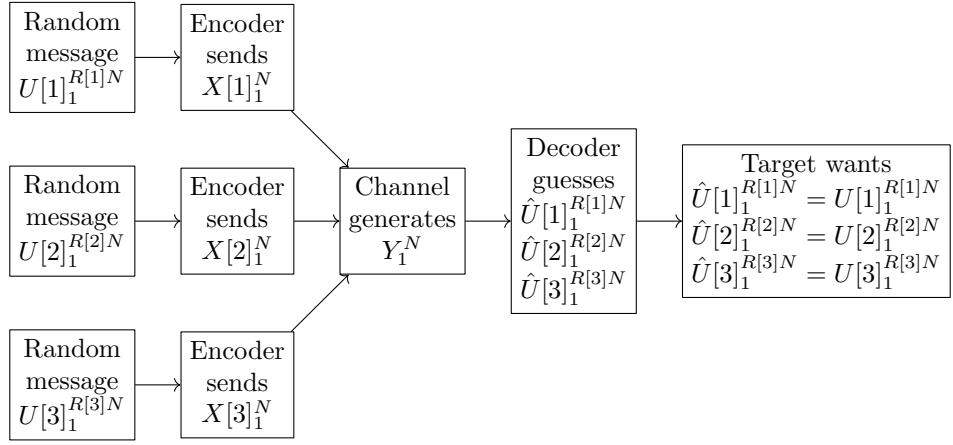

		\tikz{
			\draw[nodes={right,draw,align=center}]
				(.6,0)coordinate(X)(0,-.7)coordinate(Y)
				             node(A)[below]{Random\\message\\$U[1]₁^{R[1]N}$}
				(A.south)+(Y)node(B)[below]{Random\\message\\$U[2]₁^{R[2]N}$}
				(B.south)+(Y)node(C)[below]{Random\\message\\$U[3]₁^{R[3]N}$}
				(A.east) +(X)node(D){Encoder\\sends\\$X[1]₁^N$}
				(B.east) +(X)node(E){Encoder\\sends\\$X[2]₁^N$}
				(C.east) +(X)node(F){Encoder\\sends\\$X[3]₁^N$}
				(E.east) +(X)node(S){Channel\\generates\\$Y₁^N$}
				(S.east) +(X)node(T){Decoder\\guesses\\$ˆU[1]₁^{R[1]N}$\\
				                     $ˆU[2]₁^{R[2]N}$\\$ˆU[3]₁^{R[3]N}$}
				(T.east) +(X)node(U){Target wants\\
				                     $ˆU[1]₁^{R[1]N}=U[1]₁^{R[1]N}$\\
				                     $ˆU[2]₁^{R[2]N}=U[2]₁^{R[2]N}$\\
				                     $ˆU[3]₁^{R[3]N}=U[3]₁^{R[3]N}$};
			\graph[use existing nodes]{
				A -> D -> S -> T -> U,
				B -> E -> S,
				C -> F -> S
			};
		}
		\caption{
			Multiple access channel with three senders.
		}\label{fig:mac}
	\end{figure}
	
	The capacity region of a multiple access channel is defined
	similarly to the rate region of distributed compression.
	For instance for three senders, the capacity region is the set of
	$(R[1],R[2],R[3])$ such that a reliable communication can be carried.
	Then the pessimistic criteria are, for instance
	\begin{itemize}
		\item	sender$[1]$ can send out at most $I(X[1]；Y｜X[2]X[3])$ bits reliably,
		\item	senders $[1]$ and $[2]$ together can send out
				at most $I(X[1]X[2]；Y｜X[3])$ bits reliably, and
		\item	all three senders together can send out
				at most $I(X[1]X[2]X[3]；Y)$ bits reliably.
	\end{itemize}
	Hence the following characterization of the capacity region.
	See also \cref{fig:polymatroid}.
	
	\begin{figure}
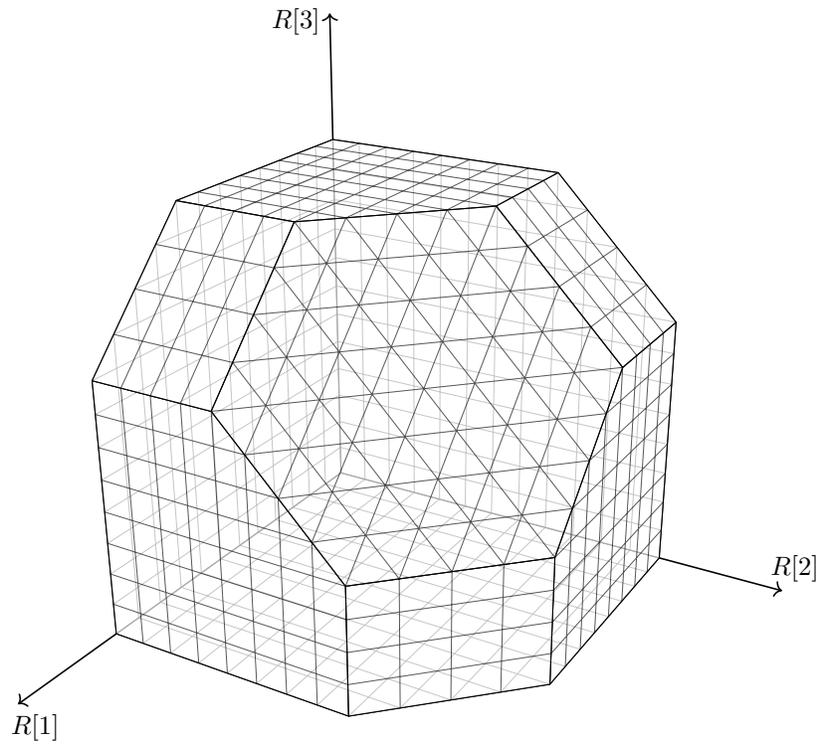

		\tikz[tdplot_main_coords]{
			\draw[semithick,nodes={pos=1,auto=perptip}](0,0,0)
				edge[->]node{$R[1]$}(4,0,0)
				edge[->]node{$R[2]$}(0,4,0)
				edge[->]node{$R[3]$}(0,0,4);
			\begin{scope}
				\face	(3,0,0)--(3,2,0)--(2,3,0)--(0,3,0)--(0,0,0)--cycle:;
				\mesh	foreach\x in{1/4,.../4,12/4}{(\x,0,0)--(\x,3,0)}
						foreach\y in{1/4,.../4,12/4}{(0,\y,0)--(3,\y,0)};
			\end{scope}
			\begin{scope}
				\face	(0,0,3)--(2,0,3)--(3,0,2)--(3,0,0)--(0,0,0)--cycle:;
				\mesh	foreach\z in{1/4,.../4,12/4}{(0,0,\z)--(3,0,\z)}
						foreach\x in{1/4,.../4,12/4}{(\x,0,0)--(\x,0,3)};
			\end{scope}
			\begin{scope}
				\face	(0,3,0)--(0,3,2)--(0,2,3)--(0,0,3)--(0,0,0)--cycle:;
				\mesh	foreach\y in{1/4,.../4,12/4}{(0,\y,0)--(0,\y,3)}
						foreach\z in{1/4,.../4,12/4}{(0,0,\z)--(0,3,\z)};
			\end{scope}
			\begin{scope}
				\face	(0,2,3)--(1,2,3)--(2,1,3)--(2,0,3)--(0,0,3)--cycle:;
				\mesh	foreach\x in{1/4,.../4,8/4}{(\x,0,3)--(\x,2,3)}
						foreach\y in{1/4,.../4,8/4}{(0,\y,3)--(2,\y,3)};
			\end{scope}
			\begin{scope}
				\face	(2,3,0)--(2,3,1)--(1,3,2)--(0,3,2)--(0,3,0)--cycle:;
				\mesh	foreach\z in{1/4,.../4,8/4}{(0,3,\z)--(2,3,\z)}
						foreach\x in{1/4,.../4,8/4}{(\x,3,0)--(\x,3,2)};
			\end{scope}
			\begin{scope}
				\face	(3,0,2)--(3,1,2)--(3,2,1)--(3,2,0)--(3,0,0)--cycle:;
				\mesh	foreach\y in{1/4,.../4,8/4}{(3,\y,0)--(3,\y,2)}
						foreach\z in{1/4,.../4,8/4}{(3,0,\z)--(3,2,\z)};
			\end{scope}
			\begin{scope}
				\face	(3,2,0)--(3,2,1)--(2,3,1)--(2,3,0)--cycle:;
				\mesh	foreach\z in{1/4,.../4,8/4}{(2+\z,3-\z,0)--(2+\z,3-\z,1)}
						foreach\z in{1/4,.../4,8/4}{(3,2,\z)--(2,3,\z)};
			\end{scope}
			\begin{scope}
				\face	(2,0,3)--(2,1,3)--(3,1,2)--(3,0,2)--cycle:;
				\mesh	foreach\y in{1/4,.../4,8/4}{(3-\y,0,2+\y)--(3-\y,1,2+\y)}
						foreach\y in{1/4,.../4,8/4}{(2,\y,3)--(3,\y,2)};
			\end{scope}
			\begin{scope}
				\face	(0,3,2)--(1,3,2)--(1,2,3)--(0,2,3)--cycle:;
				\mesh	foreach\x in{1/4,.../4,8/4}{(0,2+\x,3-\x)--(1,2+\x,3-\x)}
						foreach\x in{1/4,.../4,8/4}{(\x,3,2)--(\x,2,3)};
			\end{scope}
			\begin{scope}
				\face	(1,2,3)--(1,3,2)--(2,3,1)--(3,2,1)--(3,1,2)--(2,1,3)--cycle:;
				\mesh	foreach\z in{0/4,.../4,8/4}{(4-\z,1,1+\z)--(1,4-\z,1+\z)}
						foreach\y in{0/4,.../4,8/4}{(1,1+\y,4-\y)--(4-\y,1+\y,1)}
						foreach\x in{0/4,.../4,8/4}{(1+\x,4-\x,1)--(1+\x,1,4-\x)};
			\end{scope}
		}
		\caption{
			The capacity region of a three-sender multiple access channel.
		}\label{fig:polymatroid}
	\end{figure}
	
	\begin{thm}[Rate region of multiple access channel]
		For any multiple access channel with $M$ senders, the capacity region is
		the set of points $(R[1]…R[M])$ such that, for all subsets $𝒮⊆\{1…M\}$,
		\[∑_{m∈𝒮}R[m]≤I(X[𝒮]；Y｜S[𝒮^∁],Q),\]
		unioned over all possible distributions
		of the inputs $X[1]…X[M]$ and the knob variable $Q$.
	\end{thm}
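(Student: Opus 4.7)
The plan is to establish this as the exact dual of the distributed lossless compression rate region treated in the previous section, and to prove both directions—achievability and converse—by leveraging the infrastructure built earlier in the dissertation. The key observation is that, for each fixed joint distribution of $(X[1]…X[M],Q)$, the set-function $𝒮↦I(X[𝒮]；Y｜X[𝒮^∁],Q)$ is \emph{submodular} in $𝒮$, dual to the supermodular function governing the distributed compression contra-polymatroid. Consequently the constraint set is a polymatroid, and the claimed capacity region is the union of these polymatroids over all input distributions. Because $Q$ already acts as an internal time-sharing knob, this union is convex and no outer convexification is needed.

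For achievability, I would apply rate-splitting, the dual of the source-splitting employed in \cref{cha:dislession}. Each message $U[m]$ is split into $2^{m-1}$ virtual messages $U[m]⟨1⟩…U[m]⟨2^{m-1}⟩$ placed at positions $(2l-1)/2^m$ on the number line, and the joint scheme performs $M\cdot2^{M-1}$ one-to-one polar coding tasks in right-to-left order: the decoder successively cancels virtual messages, each time using the already-recovered fragments as side information for a single-user polar code with input $X[m]⟨l⟩$ and effective output $(Y,\text{fragments to the right},Q)$. The chain rule of mutual information forces the per-encoder sum of fragment rates to equal $I(X[𝒮]；Y｜X[𝒮^∁],Q)$ for the appropriate $𝒮$, placing the resulting rate tuple on the dominant face; the MDP/LDP/CLT behavior of each subtask is controlled by \cref{thm:dmc-elpin} and its dual, so each polar sub-block attains $N^{-ρ}$ gap with block error $\exp(-N^π)$ whenever $π+2ρ<1$. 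That the rate-splitting map surjects onto the whole dominant face follows from the same Tour-de-France degree-theory argument used for three-sender distributed compression: vertices correspond to permutations of $\{1…M\}$, edges to adjacent-transposition transitions, and a winding-number-one map from a contractible simplex of $Q$-distributions to the boundary of the dominant face forces surjectivity onto the interior.

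For the converse, I would invoke the standard multi-user Fano inequality: for any sequence of codes with $Ｐ→0$ and any $𝒮⊆\{1…M\}$,
\[∑_{m∈𝒮}NR[m]=H(U[𝒮])≤I(U[𝒮]；Y₁^N｜U[𝒮^∁])+Nε_N,\]
with $ε_N→0$. Single-letterization proceeds by expanding via the chain rule, introducing a uniform time-sharing index $Q∈\{1…N\}$, and invoking the memoryless property of the channel to obtain $∑_{m∈𝒮}R[m]≤I(X[𝒮]；Y｜X[𝒮^∁],Q)+ε_N$ for the induced $(X[1]…X[M],Q,Y)$, yielding the claim upon $N→∞$.

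The main obstacle will be the topological step certifying that rate-splitting exhausts the dominant face; adapting the degree-theoretic argument of \cite{GRUW01} from the submodular (polymatroid) to the exactly-symmetric situation here is conceptually clean but demands care in verifying that the boundary cycle $Q_t$ has winding number one in the precise parameter space. A secondary but quantitatively important point is ensuring that the $N^{-ρ}$ gap survives aggregation across the $M\cdot2^{M-1}$ sub-codes: since $M$ is fixed, the penalty is a constant scalar and the $(π,ρ)$ achievability region inherited from the one-to-one building block is preserved, and analogous statements with $\exp(-(㏒N)^τ)$ error and $O(N㏒(㏒N))$ complexity follow by substituting the log-log codes of \cref{cha:prune,cha:joint} in each slot.
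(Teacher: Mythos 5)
The paper never proves this statement: it is the classical Ahlswede--Liao characterization of the MAC capacity region, quoted as known background (the chapter's own contribution is the corollary that follows, whose proof simply cites \cite{GRUW01} to reduce the MAC to at most $2M$ single-user channels and then invokes the polar-coding infrastructure). Judged on its own, your outline is broadly the standard argument: the converse is the usual Fano inequality plus chain rule plus a uniform time-sharing index, and the achievability---reduce the dominant face to single-user problems and plug in any capacity-achieving point-to-point code---is legitimate, if heavier than the textbook random-coding proof.

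Two concrete points need repair, and the first is the crux. The theorem is only true when the union is taken over inputs that are \emph{conditionally independent given} $Q$, i.e.\ distributions of the form $p(q)\,p(x[1]\mid q)\cdots p(x[M]\mid q)$; taken literally over ``all possible distributions of the inputs,'' the right-hand sides describe a generally strictly larger region that distributed encoders cannot achieve, so no proof could close the gap between your inner and outer bounds. Your converse does produce the product structure automatically---at each time index the inputs are deterministic functions of independent messages, hence independent given the time-sharing index---but you must state this and carry it into the single-letter distribution, because it is exactly what makes the Fano outer bound coincide with the rate-splitting inner bound, which only ever uses product inputs. Second, your achievability transplants the source-splitting construction verbatim: for a channel, the splitting of \cite{GRUW01} acts on the channel \emph{inputs} through deterministic combining maps (each $X[m]$ is a function of its virtual inputs, and the effective single-user channels are from a virtual input to $Y$ together with the previously cancelled virtual inputs), not on the messages interleaved at positions $(2l-1)/2^m$; without specifying those maps, the chain-rule identity you invoke for the per-encoder rate sums does not follow, and the count is on the order of $2M$ virtual users, not $M\cdot 2^{M-1}$. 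With the conditional-independence hypothesis made explicit and the splitting stated at the input level (or simply delegated to \cite{GRUW01}, as the paper does), your argument becomes the standard proof of the theorem.
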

	
	As commented before, \cite{GRUW01} showed that one can
	split a multiple access channel into several one-to-one DMCs.
	Once that is done, we can apply the infrastructure.
	
	\begin{cor}[Polar coding for multiple access channel]
		Polar coding coupled with rate-splitting attains every point
		on the boundary of the capacity region with $N^{-ρ}$ gap.
	\end{cor}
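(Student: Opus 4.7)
The plan is to mirror the strategy developed for distributed lossless compression, exploiting the duality between the capacity region of a multiple access channel---a polymatroid cut out by the submodular function $𝒮↦I(X[𝒮]；Y｜X[𝒮^∁],Q)$---and the rate region of distributed lossless compression, which was a contra-polymatroid cut out by a supermodular entropy function. Fix a point $B$ on the dominant face of the capacity region together with an input distribution on $X[1]…X[M]$ and a knob variable $Q$ that achieves $B$. The goal is to reduce the MAC problem with rate tuple approaching $B$ to a bounded number of one-to-one channel coding problems, to each of which the polar coding infrastructure of \cref{cha:random} (or \cref{cha:joint} in the log-log regime) applies directly.

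First I would set up rate-splitting on the sender side, dual to the source-splitting construction of the previous sections. For each $m$, split the message of sender $[m]$ into $2^{m-1}$ virtual sub-messages $U[m]⟨1⟩…U[m]⟨2^{m-1}⟩$, place all fragments on $[0,1]$ by the same interleaving rule $(2l-1)/2^m$, and have sender $[m]$ encode each fragment as if it were the input of an ordinary DMC whose output is the channel output $Y$ together with every fragment to its right and $Q$. At the decoder we perform successive cancellation across fragments from right to left, treating each freshly decoded fragment as noise-free side information for later ones. Each sub-decoding task is then a one-to-one noisy-channel coding problem whose capacity is the corresponding conditional mutual information $I(U[m]⟨l⟩;Y｜\text{stuff to the right},Q)$, and \cref{cha:random} delivers block error probability $\exp(-N^π)$ and gap $N^{-ρ}$ whenever $π+2ρ<1$, with complexity $O(N\log N)$ per fragment.

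The key reduction is to verify that, as the distribution of $Q$ varies over its simplex, the vector of sender-aggregated sub-capacities $(B[1]…B[M])$ sweeps out the \emph{entire} dominant face of the capacity region, so that every boundary point $B$ is in fact realized. This is the polymatroid counterpart of the surjectivity argument sketched for the contra-polymatroid case: vertices correspond to permutations of $\{1…M\}$ (who is conditioned on whom), edges to swaps of adjacent entries, and higher-dimensional facets are filled in by degree-one homotopies of Tour-de-France cycles as in the three-sender proof---exactly the construction used by \cite{GRUW01} for MAC rate-splitting. The main obstacle I expect is bookkeeping: since there are $\sum_{m=1}^M 2^{m-1}=2^M-1$ sequential sub-decoders, one must union-bound their error events into a single $\exp(-N^π)$ tail and show that the gaps to sub-capacities accumulate additively to $N^{-ρ}$ rather than multiplicatively. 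Because $M$ is a fixed constant independent of $N$, both penalties are absorbed into constants and the MDP exponents pass through intact, giving the claimed $N^{-ρ}$ gap to the MAC boundary at complexity $O(N\log N)$.
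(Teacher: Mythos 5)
Your proposal is correct and takes essentially the same route as the paper, whose proof simply cites \cite{GRUW01} for the rate-splitting reduction to at most $2M$ one-to-one DMCs and then applies the polar coding infrastructure to each of them. The only cosmetic difference is that you split sender $[m]$ into $2^{m-1}$ fragments (mirroring the source-splitting scheme), giving $2^{M}-1$ sub-channels instead of the roughly $2M$ virtual users that \cite{GRUW01} uses, but since $M$ is a fixed constant this affects neither the $N^{-ρ}$ gap nor the complexity.
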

	
	\begin{proof}
		\cite{GRUW01} reduces this problem into at most $2M$
		(twice the number of senders) DMCs.
		For each DMC, apply the polar coding infrastructure.
	\end{proof}
	
	A similar statement can be made with $㏒(㏒N)$ complexity.
	The proof is essentially the same except that when we invoke the infrastructure,
	the log-log code is used instead of the second-moment code.

\hbadness999\g@addto@macro\sloppy{\advance\baselineskip0ptplus1ptminus1pt}
\bibliographystyle{alphaurl}
\bibliography{Chilly-4}

\end{document}